\pdfoutput=1


\PassOptionsToPackage{prologue,dvipsnames}{xcolor}

\documentclass[acmsmall]{acmart}\settopmatter{printfolios=true,printccs=false,printacmref=false}

\setcopyright{rightsretained}
\acmDOI{10.1145/1111111}
\acmYear{2025}
\copyrightyear{2025}
\acmJournal{PACMPL}
\acmVolume{1}
\acmNumber{POPL}
\acmArticle{1}
\acmMonth{1}

\bibliographystyle{ACM-Reference-Format}
\citestyle{acmauthoryear}   


\usepackage[appendix=append,bibliography=common]{apxproof}
\usepackage{enumitem} 
\usepackage{ifdraft}
\usepackage[nameinlink]{cleveref}
\usepackage{xspace}
\usepackage{url}
\usepackage{varwidth}
\usepackage{galois}
\usepackage{relsize} 
\usepackage{xfp} 
\usepackage{mathtools} 
\usepackage{trimclip} 
\usepackage{mathpartir} 
\usepackage{subcaption}
\usepackage{witharrows}
\usepackage{mathbbol} 
\usepackage[T1]{fontenc} 

\usepackage{utf8-symbols}
\theoremstyle{plain} 
\newtheoremrep{theorem}{Theorem}
\newtheoremrep{lemma}[theorem]{Lemma}
\newtheoremrep{corollary}[theorem]{Corollary}
\newtheoremrep{definition}[theorem]{Definition}

\newtheorem{example}[theorem]{Example}

\newtheorem{abbreviation}[theorem]{Abbreviation}

\counterwithin*{equation}{section}

\newcommand\ie{i.e.\xspace}
\newcommand\eg{e.g.\xspace}
\newcommand\cf{cf.\xspace}

\newcommand\vs{vs.\xspace}
\newcommand\wrt{wrt.\xspace}

\newsavebox{\foobox}
\newcommand{\slantbox}[2][.5]{%
  \mbox{%
    \sbox{\foobox}{#2}%
    \hskip\wd\foobox
    \pdfsave
    \pdfsetmatrix{1 0 #1 1}%
    \llap{\usebox{\foobox}}%
    \pdfrestore
  }%
}

\providetoggle{hidetodos}
\settoggle{hidetodos}{false}

\newcommand{\many}[1]{\overline{#1}}
\newcommand{\wild}{\ensuremath{\mathunderscore}}
\newcommand{\pfun}{\rightharpoonup}
\newcommand{\ruleform}[1]{\fbox{$#1$}}
\newcommand{\highlight}[1]{\setlength{\fboxsep}{2pt}\colorbox[gray]{0.8}{\ensuremath{#1}}}

\newcommand{\fn}[2]{\ensuremath{λ#1.\ #2}}

\makeatletter
\newcommand{\superimpose}[3][\mathord]{#1{\mathpalette\superimpose@{{#2}{#3}}}}
\newcommand{\superimpose@}[2]{\superimpose@@{#1}#2}
\newcommand{\superimpose@@}[3]{%
  \ooalign{%
    \hfil$\m@th#1#2$\hfil\cr
    \hfil$\m@th#1#3$\hfil\cr
  }%
}
\makeatother
\makeatletter

\newcommand{\lessequiv}{\mathrel{\raisebox{-1pt}{$\gl@over[0.25pt]{\vartriangleleft}{\sim}$}}}
\newcommand{\impequiv}{\mathrel{\raisebox{-1.25pt}{$\gl@over[0.25pt]{\superimpose{\raisebox{1.3pt}{\clipbox{0pt 3pt 0pt 3pt}{$\mid$}}}{\rotatebox[origin=c]{90}{$\lozenge$}}}{\sim}$}}}

\newcommand{\gl@over}[3][1pt]{%
  \vcenter{\m@th\offinterlineskip\ialign{%
    \hfil$##$\hfil\cr #2\cr \noalign{\vskip#1} #3\cr
  }}%
}
\makeatother

\newcommand{\later}{{\mathop{\vcenter{\hbox{$\scriptscriptstyle\blacktriangleright$}}}}}
\newcommand{\purelater}{\mathop{\mathsf{next}}}
\newcommand{\aplater}{\mathbin{\circledast}}
\newcommand{\lBrace}{{\{\hspace{-0.22em}\mid}}
\newcommand{\rBrace}{{\mid\hspace{-0.22em}\}}}
\DeclarePairedDelimiter\idiom{\lBrace}{\rBrace}

\makeatletter
\def\arcr{\@arraycr}
\makeatother

\newcommand{\Con}{\mathsf{Con}}
\newcommand{\Var}{\mathsf{Var}}
\newcommand{\Exp}{\mathsf{Exp}}

\newcommand{\Val}{\mathsf{Val}}

\newcommand{\px}{\mathsf{x}}
\newcommand{\py}{\mathsf{y}}
\newcommand{\pz}{\mathsf{z}}
\newcommand{\pv}{\mathsf{v}}
\newcommand{\pe}{\mathsf{e}}

\newcommand{\pE}{\mathsf{E}}

\newcommand{\Lam}[2]{\mathbf{\bar{\lambdaup}} #1. #2}
\newcommand{\Let}[3]{\mathbf{let}~{#1}\nobreak=\nobreak{#2}~\mathbf{in}~{#3}}
\newcommand{\Letmany}[3]{\many{\mathbf{let}~{#1}\nobreak=\nobreak{#2}~\mathbf{in}}~{#3}}

\newcommand{\Case}[2]{\mathbf{case}~{#1}~\mathbf{of}~{#2}}
\newcommand{\Sel}[1][]{\many{K~\many{\px} \rightarrow \pe_{#1}}}
\newcommand{\SelArity}{\many{K~\many{\px}^{α_K} \rightarrow \pe}}
\newcommand{\hole}{\square}
\newcommand{\dom}{\mathop{\mathsf{dom}}}

\newcommand{\ttrue}{\mathtt{tt}}

\newcommand{\Events}{\mathbb{Ev}}
\newcommand{\Traces}{\mathbb{T}}

\newcommand{\ctrl}{\mathit{ctrl}}
\newcommand{\cont}{\mathit{cont}}
\newcommand{\init}{\mathit{init}}

\newcommand{\Values}{\mathbb{V}}
\newcommand{\Environments}{\mathbb{E}}
\newcommand{\Heaps}{\mathbb{H}}
\newcommand{\Continuations}{\mathbb{K}}
\newcommand{\EContexts}{\mathbb{EC}}

\newcommand{\Domain}{{\mathbb{D}}}
\newcommand{\tr}{{\ensuremath{\tilde{ρ}}}}
\newcommand{\tm}{{\ensuremath{\tilde{μ}}}}

\newcommand{\progressto}{\rightsquigarrow}

\newcommand{\progresstorefl} {$\progressto$\textsc{-Refl}\xspace}
\newcommand{\progresstotrans}{$\progressto$\textsc{-Trans}\xspace}
\newcommand{\progresstoext}  {$\progressto$\textsc{-Ext}\xspace}
\newcommand{\progresstomemo} {$\progressto$\textsc{-Memo}\xspace}

\newcommand{\pow}[1]{\wp(#1)}

\DeclareMathSymbol{\bbcolon}{\mathpunct}{bbold}{"3A}
\DeclareMathSymbol{\bbquestionmark}{\mathpunct}{bbold}{"3F}
\DeclareMathSymbol{\bblparen}{\mathpunct}{bbold}{"28}
\DeclareMathSymbol{\bbrparen}{\mathpunct}{bbold}{"29}


\DeclareMathOperator*{\Lub}{\bigsqcup}

\newcommand{\fix}{\mathop\mathsf{fix}}
\newcommand{\lfp}{\mathop\mathsf{lfp}}

\newcommand{\Absence}{\mathsf{Absence}}
\newcommand{\Uses}{\mathsf{Uses}}
\newcommand{\Args}{\mathsf{Args}}
\newcommand{\argcons}{\mathbin{\bbcolon}}
\newcommand{\rep}[1]{\mathsf{Rep}~#1}

\newcommand{\repU}{\mathsf{Rep}~\aU}
\newcommand{\both}{\mathbin{\&}}

\newcommand{\aA}{\mathsf{A}}
\newcommand{\aU}{\mathrlap{\mathsf{U}}\hphantom{\aA}}
\newcommand{\AbsTy}{\mathsf{AbsTy}}
\newcommand{\denot}[1]{\llbracket {#1} \rrbracket}
\newcommand{\semabs}[1]{\mathcal{A}\denot{#1}}

\newcommand{\semabsS}[1]{\mathcal{C}\denot{#1}}

\newcommand{\Addresses}{\mathsf{Addr}}
\newcommand{\pa}{\mathsf{a}}
\newcommand{\smallstep}[1][\hspace{1.5ex}]{\xhookrightarrow{#1}}
\newcommand{\States}{\mathbb{S}}
\newcommand{\STraces}{\mathbb{S}^{\infty}}
\newcommand{\pushF}{\cdot}
\newcommand{\StopF}{\mathbf{stop}}
\newcommand{\ApplyF}{\mathbf{ap}}
\newcommand{\SelF}{\mathbf{sel}}
\newcommand{\UpdateF}{\mathbf{upd}}
\newcommand{\AppIT}{\textsc{App}_1}
\newcommand{\AppET}{\textsc{App}_2}

\newcommand{\CaseIT}{\textsc{Case}_1}
\newcommand{\CaseET}{\textsc{Case}_2}
\newcommand{\LookupT}{\textsc{Look}}
\newcommand{\UpdateT}{\textsc{Upd}}

\newcommand{\LetOT}{\textsc{Let}_0}
\newcommand{\LetIT}{\textsc{Let}_1}

\newcommand{\interior}[1]{#1\,\mathsf{inter}}
\newcommand{\maxtrace}[1]{#1\,\mathsf{max}}

\settoggle{hidetodos}{true}


%
%
\makeatletter
\@ifundefined{lhs2tex.lhs2tex.sty.read}%
  {\@namedef{lhs2tex.lhs2tex.sty.read}{}%
   \newcommand\SkipToFmtEnd{}%
   \newcommand\EndFmtInput{}%
   \long\def\SkipToFmtEnd#1\EndFmtInput{}%
  }\SkipToFmtEnd

\newcommand\ReadOnlyOnce[1]{\@ifundefined{#1}{\@namedef{#1}{}}\SkipToFmtEnd}
\usepackage{amstext}
\usepackage{amssymb}
\usepackage{stmaryrd}
\DeclareFontFamily{OT1}{cmtex}{}
\DeclareFontShape{OT1}{cmtex}{m}{n}
  {<5><6><7><8>cmtex8
   <9>cmtex9
   <10><10.95><12><14.4><17.28><20.74><24.88>cmtex10}{}
\DeclareFontShape{OT1}{cmtex}{m}{it}
  {<-> ssub * cmtt/m/it}{}

\DeclareFontShape{OT1}{cmtt}{bx}{n}
  {<5><6><7><8>cmtt8
   <9>cmbtt9
   <10><10.95><12><14.4><17.28><20.74><24.88>cmbtt10}{}
\DeclareFontShape{OT1}{cmtex}{bx}{n}
  {<-> ssub * cmtt/bx/n}{}

\newcommand{\anonymous}{\kern0.06em \vbox{\hrule\@width.5em}}
\newcommand{\plus}{\mathbin{+\!\!\!+}}
\newcommand{\bind}{\mathbin{>\!\!\!>\mkern-6.7mu=}}
\newcommand{\sequ}{\mathbin{>\!\!\!>}}
\renewcommand{\leq}{\leqslant}

\usepackage{polytable}

\@ifundefined{mathindent}%
  {\newdimen\mathindent\mathindent\leftmargini}%
  {}%

\def\resethooks{%
  \global\let\SaveRestoreHook\empty
  \global\let\ColumnHook\empty}
\newcommand*{\savecolumns}[1][default]%
  {\g@addto@macro\SaveRestoreHook{\savecolumns[#1]}}
\newcommand*{\restorecolumns}[1][default]%
  {\g@addto@macro\SaveRestoreHook{\restorecolumns[#1]}}
\newcommand*{\aligncolumn}[2]%
  {\g@addto@macro\ColumnHook{\column{#1}{#2}}}

\resethooks

\newcommand{\onelinecommentchars}{\quad-{}- }
\newcommand{\commentbeginchars}{\enskip\{-}
\newcommand{\commentendchars}{-\}\enskip}

\newcommand{\visiblecomments}{%
  \let\onelinecomment=\onelinecommentchars
  \let\commentbegin=\commentbeginchars
  \let\commentend=\commentendchars}

\newcommand{\invisiblecomments}{%
  \let\onelinecomment=\empty
  \let\commentbegin=\empty
  \let\commentend=\empty}

\visiblecomments

\newlength{\blanklineskip}
\setlength{\blanklineskip}{0.66084ex}

\newcommand{\hsindent}[1]{\quad}
\let\hspre\empty
\let\hspost\empty

\EndFmtInput
\makeatother
%
%
%
%
%
%
%
%
%
\ReadOnlyOnce{polycode.fmt}%
\makeatletter

\newcommand{\hsnewpar}[1]%
  {{\parskip=0pt\parindent=0pt\par\vskip #1\noindent}}

\newcommand{\hscodestyle}{}


\newcommand{\sethscode}[1]%
  {\expandafter\let\expandafter\hscode\csname #1\endcsname
   \expandafter\let\expandafter\endhscode\csname end#1\endcsname}


%
  {\par\noindent
   \advance\leftskip\mathindent
   \hscodestyle
   \let\\=\@normalcr
   \let\hspre\(\let\hspost\)%
   \pboxed}%
  {\endpboxed\)%
   \par\noindent
   \ignorespacesafterend}


%
  {\hsnewpar\abovedisplayskip
   \advance\leftskip\mathindent
   \hscodestyle
   \let\hspre\(\let\hspost\)%
   \pboxed}%
  {\endpboxed%
   \hsnewpar\belowdisplayskip
   \ignorespacesafterend}

  {\hsnewpar\abovedisplayskip
   \advance\leftskip\mathindent
   \hscodestyle
   \let\\=\@normalcr
   \(\pboxed}%
  {\endpboxed\)%
   \hsnewpar\belowdisplayskip
   \ignorespacesafterend}


\newcommand{\plainhs}{\sethscode{plainhscode}}

\plainhs


%
  {\hsnewpar\abovedisplayskip
   \advance\leftskip\mathindent
   \hscodestyle
   \let\\=\@normalcr
   \(\parray}%
  {\endparray\)%
   \hsnewpar\belowdisplayskip
   \ignorespacesafterend}


%
  {\parray}{\endparray}


%
  {\(\parray}{\endparray\)}


\def\codeframewidth{\arrayrulewidth}
\RequirePackage{calc}

  {\parskip=\abovedisplayskip\par\noindent
   \hscodestyle
   \arrayrulewidth=\codeframewidth
   \tabular{@{}|p{\linewidth-2\arraycolsep-2\arrayrulewidth-2pt}|@{}}%
   \hline\framedhslinecorrect\\{-1.5ex}%
   \let\endoflinesave=\\
   \let\\=\@normalcr
   \(\pboxed}%
  {\endpboxed\)%
   \framedhslinecorrect\endoflinesave{.5ex}\hline
   \endtabular
   \parskip=\belowdisplayskip\par\noindent
   \ignorespacesafterend}

\newcommand{\framedhslinecorrect}[2]%
  {#1[#2]}


%
  {\(\def\column##1##2{}%
   \let\>\undefined\let\<\undefined\let\\\undefined
   \newcommand\>[1][]{}\newcommand\<[1][]{}\newcommand\\[1][]{}%
   \def\fromto##1##2##3{##3}%
   }{\) }%


%
  {\let\orighscode=\hscode
   \let\origendhscode=\endhscode
   \def\endhscode{\def\hscode{\endgroup\def\@currenvir{hscode}\\}\begingroup}
   \orighscode\def\hscode{\endgroup\def\@currenvir{hscode}}}%
  {\origendhscode
   \global\let\hscode=\orighscode
   \global\let\endhscode=\origendhscode}%

\makeatother
\EndFmtInput
%




%
%
%
%
%
\ReadOnlyOnce{forall.fmt}%
\makeatletter


\let\HaskellResetHook\empty
\newcommand*{\AtHaskellReset}[1]{%
  \g@addto@macro\HaskellResetHook{#1}}
\newcommand*{\HaskellReset}{\HaskellResetHook}

\newcommand\hsforall{\global\let\hsdot=\hsperiodonce}
\newcommand\hsexists{\global\let\hsdot=\hsperiodonce}
\newcommand*\hsperiodonce[2]{#2\global\let\hsdot=\hscompose}
\newcommand*\hscompose[2]{#1}

\AtHaskellReset{\global\let\hsdot=\hscompose}

\HaskellReset

\makeatother
\EndFmtInput
\newcommand{\kwcolor}[1]{{\color{BlueViolet} #1}}
\newcommand{\varcolor}[1]{{\color{Sepia} #1}}
\newcommand{\concolor}[1]{{\color{OliveGreen} #1}}
\newcommand{\keyword}[1]{\kwcolor{\mathbf{#1}}}
\newcommand{\varid}[1]{\varcolor{\mathit{#1}}}
\newcommand{\conid}[1]{\concolor{\mathsf{#1}}}

\renewcommand{\commentbegin}{\ensuremath{\;\Lbag\ }}
\renewcommand{\commentend}{\ensuremath{\Rbag\;}}


\floatstyle{plaintop}
\restylefloat{table}

\begin{document}

\setlength{\pdfpageheight}{\paperheight}
\setlength{\pdfpagewidth}{\paperwidth}

\title{Abstracting Denotational Interpreters}
\subtitle{A Pattern for Sound, Compositional and Higher-order Static Program Analysis}

\author{Sebastian Graf}
\affiliation{%
  \institution{Karlsruhe Institute of Technology}
  \city{Karlsruhe}
  \country{Germany}
}
\email{sgraf1337@gmail.com}

\author{Simon Peyton Jones}
\affiliation{%
  \institution{Epic Games}
  \city{Cambridge}
  \country{UK}
}
\email{simon.peytonjones@gmail.com}

\author{Sven Keidel}
\affiliation{%
  \institution{TU Darmstadt}
  \city{Darmstadt}
  \country{Germany}
}
\email{sven.keidel@tu-darmstadt.de}

\begin{abstract}
  We explore \emph{denotational interpreters}:
  denotational semantics that produce coinductive traces of a corresponding
  small-step operational semantics.
  By parameterising our denotational interpreter over the semantic domain
  and then varying it, we recover \emph{dynamic semantics} with different
  evaluation strategies as well as \emph{summary-based static analyses} such as type
  analysis, all from the same generic interpreter.
  Among our contributions is the first denotational semantics for call-by-need
  that is provably adequate in a strong, compositional sense.
  The generated traces lend themselves well to describe \emph{operational properties}
  such as how often a variable is evaluated, and hence enable static analyses
  abstracting these operational properties.
  Since static analysis and dynamic semantics share the same generic interpreter
  definition, soundness proofs via abstract interpretation decompose into
  showing small abstraction laws about the abstract domain, thus obviating
  complicated ad-hoc preservation-style proof frameworks.
\end{abstract}

\begin{CCSXML}
\begin{hscode}\SaveRestoreHook
\column{B}{@{}>{\hspre}l<{\hspost}@{}}%
\column{E}{@{}>{\hspre}l<{\hspost}@{}}%
\>[B]{}\varid{ccs2012}\mathbin{>}{}\<[E]%
\ColumnHook
\end{hscode}\resethooks
   <concept>
       <concept_id>10011007.10011006.10011039.10011311</concept_id>
       <concept_desc>Software and its engineering~Semantics</concept_desc>
       <concept_significance>500</concept_significance>
       </concept>
   <concept>
       <concept_id>10011007.10010940.10010992.10010998.10011000</concept_id>
       <concept_desc>Software and its engineering~Automated static analysis</concept_desc>
       <concept_significance>500</concept_significance>
       </concept>
   <concept>
       <concept_id>10011007.10011006.10011041</concept_id>
       <concept_desc>Software and its engineering~Compilers</concept_desc>
       <concept_significance>300</concept_significance>
       </concept>
   <concept>
       <concept_id>10011007.10011006.10011008.10011024.10011035</concept_id>
       <concept_desc>Software and its engineering~Procedures, functions and subroutines</concept_desc>
       <concept_significance>300</concept_significance>
       </concept>
   <concept>
       <concept_id>10011007.10011006.10011008.10011009.10011012</concept_id>
       <concept_desc>Software and its engineering~Functional languages</concept_desc>
       <concept_significance>100</concept_significance>
       </concept>
   <concept>
       <concept_id>10011007.10011006.10011073</concept_id>
       <concept_desc>Software and its engineering~Software maintenance tools</concept_desc>
       <concept_significance>100</concept_significance>
       </concept>
 </ccs2012>
\end{CCSXML}

\ccsdesc[500]{Software and its engineering~Semantics}
\ccsdesc[500]{Software and its engineering~Automated static analysis}
\ccsdesc[300]{Software and its engineering~Compilers}
\ccsdesc[300]{Software and its engineering~Procedures, functions and subroutines}
\ccsdesc[100]{Software and its engineering~Functional languages}
\ccsdesc[100]{Software and its engineering~Software maintenance tools}

\keywords{Programming language semantics, Abstract Interpretation, Static Program Analysis}  

\maketitle

\section{Introduction}
\label{sec:introduction}

A \emph{static program analysis} infers facts about a program, such
as ``this program is well-typed'', ``this higher-order function is always called
with argument $\Lam{x}{x+1}$'' or ``this program never evaluates $x$''.
In a functional-language setting, such static analyses are
often defined \emph{compositionally} on the input term: the result of analysing
a term is obtained by analysing its subterms separately, and combining the results.
For example, consider the claim ``\ensuremath{(\varid{even}\;\mathrm{42})} has type \ensuremath{\conid{Bool}}''.
Type analysis separately computes \ensuremath{\varid{even}\mathbin{::}\conid{Int}\to \conid{Bool}} and \ensuremath{\mathrm{42}\mathbin{::}\conid{Int}}, and then
combines these results to to produce the result type \ensuremath{\varid{even}\;\mathrm{42}\mathbin{::}\conid{Bool}},
without looking at the definition of \ensuremath{\varid{even}} again.

In order to prove the analysis sound, it is helpful to pick a language
semantics that is also compositional, such as a \emph{denotational
semantics}~\citep{ScottStrachey:71}; then the semantics and the analysis ``line
up'' and the soundness proof is relatively straightforward.
Indeed, one can often break up the proof into manageable subgoals by regarding
the analysis as an \emph{abstract interpretation} of the compositional
semantics~\citep{Cousot:21}.

Alas, traditional denotational semantics does not model operational details --
and yet those details might be the whole point of the analysis.
For example, we might want to ask ``How often does $\pe$ evaluate its free
variable $x$?'', but a standard denotational semantics simply does not express
the concept of ``evaluating a variable''.
So we are typically driven to use an \emph{operational
semantics}~\citep{Plotkin:81}, which directly models operational details like
the stack and heap, and sees program execution as a sequence of machine states.
Now we have two unappealing alternatives:
\begin{itemize}
\item Develop a difficult, ad-hoc soundness proof, one that links a
  non-compositional operational semantics with a compositional analysis.
\item Reimagine and reimplement the analysis as an abstraction of the
  reachable states of an operational semantics.
  This is the essence of the \emph{Abstracting Abstract Machines} (AAM)
  \cite{aam} recipe.
  A very fruitful framework, but one that follows the \emph{call strings}
  approach~\citep{SharirPnueli:78}, reanalysing function bodies at call sites.
  Hence the new analysis becomes non-modular, leading to scalability problems
  for a compiler.
\end{itemize}

In this paper, we resolve the tension by exploring \emph{denotational
interpreters}: total, mathematical objects that live at the intersection of
structurally-defined \emph{definitional interpreters}~\citep{Reynolds:72} and
denotational semantics.
Our denotational interpreters generate small-step traces embellished with
arbitrary operational detail and enjoy a straightforward encoding in typical
higher-order programming languages.
Static analyses arise as instantiations of the same generic interpreter,
enabling succinct, shared and modular soundness proofs just like for AAM or
big-step definitional interpreters~\citep{adi,Keidel:18}.
However, the shared, compositional structure enables a wide range of summary
mechanisms in static analyses that we think are beyond the reach of
non-compositional reachable-states abstractions like AAM.

We make the following contributions:
\begin{itemize}
\item
  We use a concrete example (absence analysis) to argue for
  the usefulness of compositional, summary-based analysis in \Cref{sec:problem}
  and we demonstrate the difficulty of conducting an ad-hoc soundness proof
  \wrt a non-compositional small-step operational semantics.
\item \Cref{sec:interp} walks through the definition of our generic
  denotational interpreter and its type class algebra in Haskell.
  We demonstrate the ease with which different instances of our interpreter
  endow our object language with call-by-name, call-by-need and call-by-value
  evaluation strategies, each producing (abstractions of) small-step
  abstract machine traces.
\item A concrete instantiation of a denotational interpreter is \emph{total}
  if it coinductively yields a (possibly infinite) trace for every input
  program, including ones that diverge.
  \Cref{sec:totality} proves that the by-name and by-need instantiations are
  total by embedding the generic interpreter and its instances in Guarded Cubical
  Agda.
\item \Cref{sec:adequacy} proves that the by-need instantiation of our
  denotational interpreter adequately generates an abstraction of a trace
  in the lazy Krivine machine~\citep{Sestoft:97}, preserving its length as well
  as arbitrary operational information about each transition taken.
\item In \Cref{sec:abstraction} we instantiate the generic interpreter with
  finite, abstract semantic domains.
  In doing so, we recover summary-based usage analysis, a generalisation
  of absence analysis in \Cref{sec:problem}, as well as \citeauthor{Milner:78}'s
  type analysis.
  The Appendix contains further examples, such as 0CFA control-flow analysis and
  Demand Analysis of the Glasgow Haskell Compiler.
\item In \Cref{sec:soundness}, we apply abstract interpretation to characterise
  a set of abstraction laws that the type class instances of an abstract
  domain must satisfy in order to soundly approximate by-name and by-need
  interpretation.
  None of the proof obligations mention the generic interpreter, and, more
  remarkably, none of the laws mention the concrete semantics or the Galois
  connection either!
  This enables us to prove usage analysis sound \wrt the by-name
  and by-need semantics in half a page, building on reusable
  semantics-specific theorems.
\item
  We compare to the enormous body of related approaches in \Cref{sec:related-work}.
\end{itemize}

\section{The Problem We Solve}
\label{sec:problem}

What is so difficult about proving a compositional analysis sound
\wrt a non-compositional small-step operational semantics?
We will demonstrate the challenges in this section, by way of a simplified \emph{absence
analysis}~\citep{SPJ:94}, a higher-order form of neededness analysis to inform
removal of dead code in a compiler.

\subsection{Object Language}
\label{sec:lang}

To set the stage, we start by defining the object language of this work, an
untyped lambda calculus with \emph{\textbf{recursive}} let bindings and
algebraic data types:
\[
\arraycolsep=3pt
\begin{array}{rrclcrrclcl}
  \text{Variables}    & \px, \py & ∈ & \Var        &     & \quad \text{Constructors} &        K & ∈ & \Con        &     & \text{with arity $α_K ∈ ℕ$} \\
  \text{Values}       &      \pv & ∈ & \Val        & ::= & \highlight{\Lam{\px}{\pe}} \mid K~\many{\px}^{α_K} \\
  \text{Expressions}  &      \pe & ∈ & \Exp        & ::= & \multicolumn{6}{l}{\highlight{\px \mid \pv \mid \pe~\px \mid \Let{\px}{\pe_1}{\pe_2}} \mid \Case{\pe}{\SelArity}}
\end{array}
\]
This language is very similar to that of \citet{Launchbury:93} and \citet{Sestoft:97}.
It is factored into \emph{A-normal form}, that is, the arguments of applications
are restricted to be variables, so the difference between lazy and eager
semantics is manifest in the semantics of $\mathbf{let}$.
Note that $\Lam{x}{x}$ (with an overline) denotes syntax, whereas $\fn{x}{x+1}$
denotes an anonymous mathematical function.
In this section, only the highlighted parts are relevant and $\mathbf{let}$ is
considered non-recursive, but the interpreter definition in \Cref{sec:interp}
supports data types and recursive $\mathbf{let}$ as well.
Throughout the paper we assume that all bound program variables are distinct.

\subsection{Absence Analysis}
\label{sec:absence}

\begin{figure}
  \[\ruleform{ \semabs{\wild}_{\wild} \colon \Exp → (\Var \pfun \AbsTy) → \AbsTy }\]
  \\[-0.5em]
  \begin{minipage}[t]{0.47\textwidth}
  \arraycolsep=0pt
  \abovedisplayskip=0pt
  \[\begin{array}{rcl}
    \semabs{\px}_ρ & {}={} & ρ(\px) \\
    \semabs{\Lam{\px}{\pe}}_ρ & {}={} & \mathit{fun}_{\px}( \fn{θ}{\semabs{\pe}_{ρ[\px ↦ θ]}}) \\
    \semabs{\pe~\px}_ρ & {}={} & \mathit{app}(\semabs{\pe}_{ρ})(ρ(\px)) \\
    \semabs{\Let{\px}{\pe_1}{\pe_2}}_ρ & {}={} & \semabs{\pe_2}_{ρ[\px ↦ \px \both \semabs{\pe_1}_ρ]} \\
    \\[-0.8em]
    \multicolumn{3}{c}{\mathit{fun}_{\px}( f) {}={} \langle φ[\px↦\aA], φ(\px) \argcons π \rangle} \\
    \multicolumn{3}{c}{\qquad\qquad\text{where } \langle φ, π \rangle = f(\langle [\px↦\aU], \repU \rangle)} \\
    \multicolumn{3}{c}{\mathit{app}(\langle φ_f, a \argcons π \rangle)(\langle φ_a, \wild \rangle) = \langle φ_f ⊔ (a * φ_a), π \rangle} \\
  \end{array}\]
  \end{minipage}%
  \hfill
  \begin{minipage}[t]{0.50\textwidth}
  \arraycolsep=0pt
  \abovedisplayskip=0pt
  \[\begin{array}{c}
  \begin{array}{rclcl}
    a & {}∈{} & \Absence & {}::={} & \aA \mid \aU \\
    φ & {}∈{} & \Uses    & {}={} & \Var \to \Absence \\
    π & {}∈{} & \Args    & {}::={} & a \argcons π \mid \rep{a} \\
    θ & {}∈{} & \AbsTy   & {}::={} & \langle φ, π \rangle \\
    \\[-0.9em]
    \multicolumn{5}{c}{\rep{a} \equiv a \argcons \rep{a}} \\
  \end{array} \\
  \\[-0.9em]
  \begin{array}{l}
    \aA * φ = [] \quad
    \aU * φ = φ  \\
    \px \both \langle φ, π \rangle = \langle φ[\px↦\aU], π \rangle
  \end{array}
  \\[-0.5em]
  \end{array}\]
  \end{minipage}%
  \caption{Absence analysis}
  \label{fig:absence}
\end{figure}

In order to define and explore absence analysis in this subsection, we must
clarify what absence means, semantically.
A variable $\px$ is \emph{absent} in an expression $\pe$ when
$\pe$ never evaluates $\px$, regardless of the context in which $\pe$
appears.
Otherwise, the variable $\px$ is \emph{used} in $\pe$.

\Cref{fig:absence} defines an absence analysis $\semabs{\pe}_ρ$ for lazy
program semantics that conservatively approximates semantic absence.
For illustrative purposes, our analysis definition only works for
the special case of non-recursive $\mathbf{let}$, but later sections will assume
recursive let semantics.%
\footnote{Given an order that we will define in due course, the
generalised definition for recursive as well as non-recursive let is
$\semabs{\Let{\px}{\pe_1}{\pe_2}}_ρ = \semabs{\pe_2}_{ρ[\px ↦
\lfp(\fn{θ}{\px \both \semabs{\pe_1}_{ρ[\px↦θ]}})]}$.}
It takes an environment $ρ \in \Var \pfun \AbsTy$ containing absence
information about the free variables of $\pe$ and returns
an \emph{absence type} $\langle φ, π \rangle \in \AbsTy$; an abstract
representation of $\pe$.
The first component $φ \in \Uses$ of the absence type captures how $\pe$ uses its free
variables by associating an $\Absence$ flag with each variable.
When $φ(\px) = \aA$, then $\px$ is absent in $\pe$; otherwise, $φ(\px) = \aU$
and $\px$ might be used in $\pe$.
The second component $π \in \Args$ of the absence type describes how $\pe$ uses
actual arguments supplied at application sites.
For example, function $f \triangleq \Lam{x}{y}$ has absence type $\langle [y ↦ \aU], \aA \argcons \repU \rangle$.
Mapping $[y ↦ \aU]$ indicates that $f$ may use its free variable $y$.
The literal notation $[y ↦ \aU]$ maps any variable other than $y$ to $\aA$.
Furthermore, $\aA \argcons \repU$ indicates that $f$'s first argument is absent and all further arguments are potentially used.
The element $\repU$ denotes an infinite repetition of $\aU$, as expressed by the
non-syntactic equality $\repU \equiv \aU \argcons \repU$.

We illustrate the analysis at the example expression
$\pe \triangleq \Let{k}{\Lam{y}{\Lam{z}{y}}}{k~x_1~x_2}$, where the initial
environment for $\pe$, $ρ_\pe(\px) \triangleq \langle [\px ↦ \aU], \repU \rangle$,
declares the free variables of $\pe$ with a pessimistic argument description $\repU$.
\begin{DispWithArrows}[fleqn,mathindent=0em]
      & \semabs{\Let{k}{\Lam{y}{\Lam{z}{y}}}{k~x_1~x_2}}_{ρ_{\pe}} \label{eq:abs-ex-let}
        \Arrow{Unfold $\semabs{\Let{\px}{\pe_1}{\pe_2}}$. NB: Lazy Let!} \\
  ={} & \semabs{k~x_1~x_2}_{ρ_\pe[k↦k \both \semabs{\Lam{y}{\Lam{z}{y}}}_{ρ_\pe}]}
        \Arrow{Unf. $\semabs{\wild}$, $ρ_1 \triangleq ρ_\pe[k↦k \! \both \! \semabs{\Lam{y}{\Lam{z}{y}}}_{ρ_\pe}]$} \\
  ={} & \mathit{app}(\mathit{app}(ρ_1(k))(ρ_1(x_1)))(ρ_1(x_2))
        \Arrow{Unfold $ρ_1(k)$} \\
  ={} & \mathit{app}(\mathit{app}(k \both \semabs{\Lam{y}{\Lam{z}{y}}}_{ρ_1})(ρ_1(x_1)))(ρ_1(x_2))
        \Arrow{Unfold $\semabs{\Lam{\px}{\pe}}$ twice, $\semabs{\px}$} \\
  ={} & \mathit{app}(\mathit{app}(k \both \mathit{fun}_{y}(\fn{θ_y}{\mathit{fun}_{z}(\fn{θ_z}{θ_y})}))(...))(...) \label{eq:abs-ex-summarise}
        \Arrow{Unfold $\mathit{fun}$ twice, simplify} \\
  ={} & \mathit{app}(\mathit{app}(\langle [k ↦ \aU], \highlight{\aU} \argcons \aA \argcons \repU \rangle)(\highlight{ρ_1(x_1)}))(...) \label{eq:abs-apply1}
        \Arrow{Unfold $\mathit{app}$, $ρ_1(x_1)=ρ_{\pe}(x_1)$, simplify} \\
  ={} & \mathit{app}(\langle [k ↦ \aU,x_1↦\aU], \highlight{\aA} \argcons \repU \rangle)(\highlight{ρ_1(x_2)}) \label{eq:abs-apply2}
        \Arrow{Unfold $\mathit{app}$, simplify} \\
  ={} & \langle [k ↦ \aU,x_1↦\aU], \repU \rangle
\end{DispWithArrows}
Let us look at the steps in a bit more detail.
Step \labelcref{eq:abs-ex-let} extends the environment with
an absence type for the let right-hand side of $k$.
The steps up until \labelcref{eq:abs-ex-summarise} successively expose
applications of the $\mathit{app}$ and $\mathit{fun}$ helper functions applied
to environment entries for the involved variables.
Step \labelcref{eq:abs-ex-summarise} then computes the absence type
$\mathit{fun}_y(\fn{θ_y}{\mathit{fun}_z(\fn{θ_z}{θ_y})}) = \langle [], \aU \argcons \aA \argcons \repU \rangle$.
The $\Uses$ component is empty because $\Lam{y}{\Lam{z}{y}}$ has no free variables,
and $k \both ...$ will add $[k↦\aU]$ as the single use.
The $\mathit{app}$ steps \labelcref{eq:abs-apply1,eq:abs-apply2} simply zip up
the uses of arguments $ρ_1(x_1)$ and $ρ_1(x_2)$ with the $\Absence$ flags
in $\aU \argcons \aA \argcons \repU$ as highlighted, adding the
$\Uses$ from $ρ_1(x_1) = \langle [x_1 ↦ \aU], \repU \rangle$ but \emph{not}
from $ρ_1(x_2)$, because the first actual argument ($x_1$) is used whereas the
second ($x_2$) is absent.
The join on $\Uses$ follows pointwise from the order $\aA ⊏ \aU$, \ie $(φ_1
⊔ φ_2)(\px) \triangleq φ_1(\px) ⊔ φ_2(\px)$.

The analysis result $[k ↦ \aU,x_1↦\aU]$ infers $k$ and $x_1$ as
potentially used and $x_2$ as absent, despite $x_2$ occurring in argument position,
thanks to the bookkeeping of $\Args$.



\subsection{Compositionality, Summaries and Modularity}
\label{sec:summaries}

The absence type $\langle [], \aU \argcons \aA \argcons \repU \rangle$
above is a finite \emph{summary} of the lambda term $\Lam{y}{\Lam{z}{y}}$.
Let us define what we mean by ``summary'' in order to understand what is so
appealing about a summary-based analysis such as $\semabs{\wild}$.

Just as a denotational semantics, the interpreter $\semabs{\wild}$
\emph{denotates} a term in a \emph{semantic domain} ($\AbsTy$).
This interpretation is \emph{compositional}:
the denotation of a term is a recombination of the denotations of its subterms.

In order for a denotational semantics to faithfully and compositionally
denotate lambda terms, its semantic domain must contain infinite elements.
However, every element of the semantic domain $\AbsTy$ of absence analysis is
\emph{finitely representable} (data, even!), so the denotation of lambda terms
must be approximate in some sense.
We call such a finitely representable and thus approximate denotation
a \emph{summary}.

The approximate nature of summaries is best appreciated when analysing
beta redexes such as $\semabs{(\Lam{\px}{\pe})~\py}$, which invokes the
\emph{summary mechanism}.
In the definition of $\semabs{\wild}$, we took care to explicate this mechanism
via the adjoint functions $\mathit{fun}$ and $\mathit{app}$.
For the summary mechanism to be sound, we must have
$\semabs{\pe[\py / \px]} ⊑ \semabs{(\Lam{\px}{\pe})~\py}$
(where $\pe[\py / \px]$ substitutes $\py$ for $\px$ in $\pe$).

To support efficient separate compilation, a compiler analysis must be
\emph{modular}, and summaries are \emph{indispensable} in achieving that.
Let us say that our example function $k = (\Lam{y}{\Lam{z}{y}})$ is defined in
module A and there is a use site $(k~x_1~x_2)$ in module B.
Then a \emph{modular analysis} must not reanalyse A.$k$ at its use site in B.
Our analysis $\semabs{\wild}$ facilitates that easily, because it can
serialise the $\AbsTy$ summary for $k$ into module A's signature file.

The same way summaries enable efficient \emph{inter}-module compilation,
they enable efficient \emph{intra}-module compilation:
Compositionality implies that $\semabs{\Let{f}{\Lam{x}{\pe_{\mathit{big}}}}{f~f~f}}$
is a function of $\semabs{\Lam{x}{\pe_{\mathit{big}}}}$, and finite summaries
prevent having to reanalyse $\pe_{\mathit{big}}$ repeatedly for each call of $f$.

This is why summary-based analyses are great: they scale.

\subsection{Summaries \vs Abstracting Abstract Machines}

Now, instead of coming up with a summary mechanism, we could simply have
``inlined'' $k$ during analysis of the example above to see that $x_2$ is absent
in a simple first-order sense.
The \emph{call strings} approach to interprocedural program
analysis~\citep{SharirPnueli:78} turns this idea into a static analysis,
and the AAM recipe could be used to derive an absence analysis based on call
strings that is sound by construction.
Alas, following this paths gives up on modularity, because a call-strings-based
analysis would need to invoke the function
$(\fn{θ_y}{\fn{θ_z}{θ_y}}) : \AbsTy \to \AbsTy \to \AbsTy$ that describes
$k$'s inline expansion \emph{at every use site}, leading to scalability problems in a compiler.

\subsection{Problem: Proving Soundness of Summary-Based Analyses}

In this subsection, we demonstrate the difficulty of proving soundness for summary-based
analyses. For absence analysis, we have proved (in the Appendix) the following
correctness statement:

\begin{theoremrep}[$\semabs{\wild}$ infers absence]
  \label{thm:absence-correct}
  If $\semabs{\pe}_{ρ_\pe} = \langle φ, π \rangle$ and $φ(\px) = \aA$,
  then $\px$ is absent in $\pe$.
\end{theoremrep}
\begin{proof}
  See \hyperlink{proof:absence-correct}{the proof at the end of this section}.
\end{proof}

As the first step, we must define precisely what absence (used in the theorem statement)
\emph{means}.
One plausible definition is in terms of the standard operational semantics in
\Cref{sec:op-sem}:

\begin{definitionrep}[Absence]
  \label{defn:absence}
  A variable $\px$ is \emph{used} in an expression $\pe$
  if and only if there exists a trace
  $(\Let{\px}{\pe'}{\pe},ρ,μ,κ) \smallstep^* ... \smallstep[\LookupT(\px)] ...$
  that looks up the heap entry of $\px$, \ie it evaluates $\px$.
  Otherwise, $\px$ is \emph{absent} in $\pe$.
\end{definitionrep}
Absence of $\px$ means that $\px$ is not looked up \emph{regardless of
the context in which $\pe$ is used}, to justify rewrites via contextual
improvement~\citep{MoranSands:99}.
Furthermore, we must prove that the summary mechanism approximates beta
reduction, captured syntactically in the following \emph{substitution
lemma}~\citep{tapl}:%

\begin{toappendix}
Note that for the proofs we assume the recursive let definition
\[
  \semabs{\Let{\px}{\pe_1}{\pe_2}}_ρ = \semabs{\pe_2}_{ρ[\px ↦ \lfp(\fn{θ}{\px \both \semabs{\pe_1}_{ρ[\px↦θ]}})]}.
\]
The partial order on $\AbsTy$ necessary for computing the least fixpoint $\lfp$
follows structurally from $\aA ⊏ \aU$ (\ie product order, pointwise order).

\begin{abbreviation}
  The syntax $θ.φ$ for an $\AbsTy$ $θ = \langle φ, π \rangle$
  returns the $φ$ component of $θ$.
  The syntax $θ.π$
  returns the $π$ component of $θ$.
\end{abbreviation}

\begin{definition}[Abstract substitution]
  \label{defn:abs-subst}
  We call $φ[\px \Mapsto φ'] \triangleq φ[\px↦\aA] ⊔ (φ(\px) * φ')$ the
  \emph{abstract substitution} operation on $\Uses$
  and overload this notation for $\AbsTy$, so that
  $(\langle φ, π \rangle)[\px \Mapsto φ_\py] \triangleq \langle φ[\px \Mapsto φ_\py], π \rangle$.
\end{definition}

Abstract substitution is useful to give a concise description of the effect of
syntactic substitution:
\begin{lemma}
  \label{thm:abs-syn-subst}
  $\semabs{(\Lam{\px}{\pe})~\py}_ρ = (\semabs{\pe}_{ρ[\px↦\langle [\px↦\aU], \repU \rangle]})[\px \Mapsto ρ(\py).φ]$.
\end{lemma}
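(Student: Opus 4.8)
The plan is to prove this identity by pure unfolding of the defining equations of $\semabs{\wild}$ on both sides; no induction on $\pe$ is needed, since the claim merely relates the value $\semabs{\wild}$ assigns to the fixed beta redex $(\Lam{\px}{\pe})~\py$ to the value it assigns to the body $\pe$ under a canonical environment.

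First I would fix two abbreviations: $θ_\px \triangleq \langle [\px↦\aU], \repU \rangle$ for the ``self'' entry that $\mathit{fun}_\px$ passes to its functional argument, and $\langle φ_0, π_0 \rangle \triangleq \semabs{\pe}_{ρ[\px↦θ_\px]}$ for the analysis result of the body under that entry. Then I would unfold the left-hand side: the equations for $\semabs{\pe~\px}$ and $\semabs{\Lam{\px}{\pe}}$ give $\semabs{(\Lam{\px}{\pe})~\py}_ρ = \mathit{app}(\mathit{fun}_\px(\fn{θ}{\semabs{\pe}_{ρ[\px↦θ]}}))(ρ(\py))$; instantiating the functional argument of $\mathit{fun}_\px$ at $θ_\px$ produces exactly $\langle φ_0, π_0 \rangle$, so by the definition of $\mathit{fun}$ we obtain $\mathit{fun}_\px(\fn{θ}{\semabs{\pe}_{ρ[\px↦θ]}}) = \langle φ_0[\px↦\aA],\ φ_0(\px) \argcons π_0 \rangle$. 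Writing $ρ(\py) = \langle φ_\py, \wild \rangle$ with $φ_\py = ρ(\py).φ$ and applying the definition of $\mathit{app}$ then yields $\semabs{(\Lam{\px}{\pe})~\py}_ρ = \langle\, φ_0[\px↦\aA] ⊔ (φ_0(\px) * φ_\py),\ π_0\, \rangle$.

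Next I would unfold the right-hand side using \Cref{defn:abs-subst}: the overloaded $\AbsTy$-clause gives $(\langle φ_0, π_0 \rangle)[\px \Mapsto φ_\py] = \langle φ_0[\px \Mapsto φ_\py], π_0 \rangle$, and the $\Uses$-clause gives $φ_0[\px \Mapsto φ_\py] = φ_0[\px↦\aA] ⊔ (φ_0(\px) * φ_\py)$. Hence the right-hand side is the same $\AbsTy$ value $\langle\, φ_0[\px↦\aA] ⊔ (φ_0(\px) * φ_\py),\ π_0\, \rangle$ as the left-hand side, and the equality follows.

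There is no deep obstacle here; the only things to watch are that the two independent ``clearings'' of $\px$ coincide — the $[\px↦\aA]$ that $\mathit{fun}_\px$ applies on the left and the $[\px↦\aA]$ baked into abstract substitution on the right — and that the $\Absence$ flag $\mathit{fun}_\px$ stores in the head position of the $\Args$ component, namely $φ_0(\px)$, is precisely the multiplier $\mathit{app}$ later applies to $φ_\py$; both facts are immediate once the abbreviations above are in place. I would also note in passing that the global hygiene assumption (all bound program variables distinct) guarantees $\py \neq \px$ and that rebinding $\px$ in $ρ$ faithfully models the binding of $\px$, though since the statement concerns only $\AbsTy$ values and not the syntactic substitution $\pe[\py/\px]$, the equality in fact holds regardless of that assumption.
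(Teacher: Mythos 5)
Your proof is correct and matches the paper's argument, which is stated there in one line as ``Follows by unfolding the application and lambda case and then refolding abstract substitution''; you have simply carried out that unfolding explicitly, and the key observation you flag — that the $[\px↦\aA]$ clearing and the head flag $φ_0(\px)$ produced by $\mathit{fun}_\px$ line up exactly with the two components of $φ_0[\px \Mapsto φ_\py]$ — is precisely why the refolding works.
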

\begin{proof}
Follows by unfolding the application and lambda case and then refolding abstract substitution.
\end{proof}

\begin{lemma}
\label{thm:lambda-bound-absent}
Lambda-bound uses do not escape their scope. That is, when $\px$ is lambda-bound in $\pe$, it is
\[
  (\semabs{\pe}_ρ).φ(\px) = \aA.
\]
\end{lemma}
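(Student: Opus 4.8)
The plan is to prove this via a strengthening, by structural induction on $\pe$, that additionally records that the ambient environment does not mention $\px$ — a hypothesis automatically met when the Lemma is invoked (since $\px$, being lambda-bound in $\pe$, is internal to $\pe$), but which the induction needs as an invariant. Concretely I would prove: for all $\pe$, $ρ$, $\px$ with (i) $\px$ is lambda-bound in $\pe$ or $\px$ does not occur in $\pe$ at all (so in particular $\px \notin \fv(\pe)$ and $\px$ is not $\mathbf{let}$-bound in $\pe$), and (ii) $ρ(\py).φ(\px) = \aA$ for every $\py \in \fv(\pe)$, one has $(\semabs{\pe}_ρ).φ(\px) = \aA$. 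The Lemma is then the special case where $\px$ is lambda-bound in $\pe$: condition (i) holds by the global distinct-binders convention, which moreover forces $\px$ to differ from every free variable of $\pe$, so condition (ii) holds for the environments that actually arise (e.g. $ρ_\pe$, which maps each free $\py$ to $\langle [\py↦\aU], \repU \rangle$, whose $\Uses$ component is $\aA$ at every variable other than $\py$).

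For the induction, the variable, lambda and application cases are routine once this invariant is in place. For $\pe = \py$: by (i) $\px \neq \py$, and $\semabs{\py}_ρ = ρ(\py)$, so (ii) finishes. For $\pe = \Lam{\py}{\pe_0}$: if $\py = \px$, the $\Uses$ component of $\mathit{fun}_\px(\dots)$ is by definition of the form $φ'[\px↦\aA]$, so we conclude without the IH; if $\py \neq \px$, then $(\semabs{\pe}_ρ).φ(\px) = (\semabs{\pe_0}_{ρ[\py↦\langle[\py↦\aU],\repU\rangle]}).φ(\px)$ and the IH applies to $\pe_0$ — (i) is inherited, and (ii) holds because the new entry for $\py$ has $\Uses$ value $\aA$ at $\px$ and the remaining free variables of $\pe_0$ are free in $\pe$. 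For $\pe = \pe_1~\py$: the IH gives $(\semabs{\pe_1}_ρ).φ(\px) = \aA$; writing $\semabs{\pe_1}_ρ = \langle φ_f, a \argcons π \rangle$, the $\Uses$ component produced by $\mathit{app}$ is $φ_f ⊔ (a * ρ(\py).φ)$, and $(a * ρ(\py).φ)(\px) = \aA$ whatever $a$ is — it is $[]$ when $a = \aA$, and $ρ(\py).φ(\px) = \aA$ by (ii) (as $\py \in \fv(\pe)$) when $a = \aU$ — so the join is $\aA$.

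The main obstacle is the $\mathbf{let}$ case, $\pe = \Let{\py}{\pe_1}{\pe_2}$ with the recursive semantics $\semabs{\pe}_ρ = \semabs{\pe_2}_{ρ[\py↦θ^*]}$ where $θ^* = \lfp(\fn{θ}{\py \both \semabs{\pe_1}_{ρ[\py↦θ]}})$. To apply the IH to $\pe_2$ we must know $θ^*.φ(\px) = \aA$, and establishing this for the fixpoint directly is circular. I would instead use Park/fixpoint induction on the predicate $Q(θ) \triangleq (θ.φ(\px) = \aA)$: it holds at $\bot = \langle \constfn{\aA}, \rep{\aA} \rangle$; it is admissible, being the preimage of the Scott-closed set $\{\aA\} = {\downarrow}\aA$ under the continuous map $θ \mapsto θ.φ(\px)$; and it is preserved by the iteratee $F(θ) = \py \both \semabs{\pe_1}_{ρ[\py↦θ]}$, because assuming $Q(θ)$ the IH applied to $\pe_1$ at environment $ρ[\py↦θ]$ — whose side conditions are (i) for $\pe_1$ and (ii) together with the assumption $θ.φ(\px)=\aA$ at the single new binding $\py$ — yields $(\semabs{\pe_1}_{ρ[\py↦θ]}).φ(\px) = \aA$, and $\py \both (\wild)$ only alters position $\py \neq \px$, so $Q(F(θ))$ holds. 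Hence $Q(θ^*)$, and then the IH for $\pe_2$ (with environment $ρ[\py↦θ^*]$, whose side condition at $\py$ is exactly $Q(θ^*)$) closes the case. Routine prerequisites I would lean on are the monotonicity of $\semabs{\wild}$ in the environment (so $\lfp$ and Park induction are available), and — if the full generic interpreter of \Cref{sec:interp} with constructor applications and $\mathbf{case}$ is in scope — two further cases entirely analogous to the lambda and application cases, where the $\Uses$ component is a finite join of environment entries for free variables.
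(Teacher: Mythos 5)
Your proof is correct and follows essentially the same route as the paper's, which is a one-line structural induction on $\pe$ whose only stated content is that the lambda case clears $\px$ to $\aA$ on return. Your version merely makes explicit what that terse proof leaves implicit — the strengthened invariant that the ambient environment is $\aA$ at $\px$, and the fixpoint induction needed to push that invariant through the recursive $\mathbf{let}$ — so it is a faithful, more careful rendering of the intended argument rather than a different one.
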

\begin{proof}
By induction on $\pe$. In the lambda case, any use of $\px$ is cleared to $\aA$
when returning.
\end{proof}

\begin{lemma}
\label{thm:lambda-commutes-absence}
$\semabs{(\Lam{\px}{\Lam{\py}{\pe}})~\pz}_ρ = \semabs{\Lam{\py}{((\Lam{\px}{\pe})~\pz)}}_ρ$.
\end{lemma}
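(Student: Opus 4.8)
The plan is to reduce both sides to a common normal form using the abstract substitution characterisation of beta redexes from \Cref{thm:abs-syn-subst}, and then argue that the two resulting expressions are syntactically equal as elements of $\AbsTy$. First I would apply \Cref{thm:abs-syn-subst} to the left-hand side to obtain
\[
  \semabs{(\Lam{\px}{\Lam{\py}{\pe}})~\pz}_ρ = (\semabs{\Lam{\py}{\pe}}_{ρ[\px↦\langle [\px↦\aU], \repU \rangle]})[\px \Mapsto ρ(\pz).φ].
\]
For the right-hand side, I would unfold $\semabs{\Lam{\py}{\wild}}$ via its definition in terms of $\mathit{fun}_\py$, push the beta redex $(\Lam{\px}{\pe})~\pz$ inside (it appears as the body $f(\langle[\py↦\aU],\repU\rangle)$ inside $\mathit{fun}_\py$), and again apply \Cref{thm:abs-syn-subst} to that inner redex. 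This yields an expression of the shape $\mathit{fun}_\py(\fn{θ_\py}{(\semabs{\pe}_{ρ[\py↦θ_\py][\px↦\langle[\px↦\aU],\repU\rangle]})[\px\Mapsto ρ(\pz).φ]})$.

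The two sides now differ only in the relative order of (i) taking $\mathit{fun}_\py$, which on the $\Uses$ component sets $φ(\py) := \aA$ and prepends $φ(\py)$ to the $\Args$ component, and (ii) performing the abstract substitution $[\px \Mapsto ρ(\pz).φ]$, which on the $\Uses$ component computes $φ[\px↦\aA] ⊔ (φ(\px) * ρ(\pz).φ)$ and leaves $\Args$ untouched. So the crux is a commutation lemma: $\mathit{fun}_\py$ and $[\px \Mapsto φ_\pz]$ commute, provided $\py \neq \px$ and $\py \notin \dom(φ_\pz)$ — both of which hold because all bound variables are distinct and $ρ(\pz).φ$ only mentions free variables of the ambient term. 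Concretely I must check that clearing $\py$ to $\aA$ commutes with the pointwise join-and-substitute on the $\Uses$ component (true since $φ_\pz$ does not mention $\py$, so $(φ[\py↦\aA])[\px\Mapsto φ_\pz] = (φ[\px\Mapsto φ_\pz])[\py↦\aA]$), and that the $\Args$-manipulation of $\mathit{fun}_\py$ is unaffected by $[\px\Mapsto φ_\pz]$ (immediate, since abstract substitution is the identity on $\Args$). One subtlety: $\mathit{fun}_\py$ reads off $φ(\py)$ \emph{before} clearing it and prepends it to $\Args$; I need that this read-off value is the same on both sides, which follows because $θ_\py$ is instantiated identically ($\langle[\py↦\aU],\repU\rangle$) in both unfoldings and \Cref{thm:lambda-bound-absent}-style reasoning controls any spurious $\py$-uses.

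I expect the main obstacle to be purely bookkeeping: carefully tracking which variable each environment extension and each substitution touches, and invoking the global distinct-bound-variables assumption at exactly the right points to license the commutations. There is no deep mathematical content — it is a confluence-of-rewrites argument on a two-component lattice — but it is easy to make an off-by-one error in the $\Args$ component (the order in which $\aA$-flags for $\py$ versus the already-present flags for $\px$'s argument position get prepended), so I would lay out the $\Uses$ and $\Args$ components side by side for both expansions and match them componentwise.
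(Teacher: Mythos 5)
Your proposal is correct and follows essentially the same route as the paper's proof: both sides are reduced via \Cref{thm:abs-syn-subst}, and the crux is commuting $\mathit{fun}_\py$ with the abstract substitution $[\px \Mapsto ρ(\pz).φ]$, justified by $\px \neq \py$ and $ρ(\pz).φ(\py) = \aA$ (the paper cites \Cref{thm:lambda-bound-absent} for the latter). Your componentwise check of the $\Uses$ and $\Args$ parts is exactly the calculation the paper leaves implicit.
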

\begin{proof}
\begin{DispWithArrows*}[fleqn,mathindent=0em]
      & \semabs{(\Lam{\px}{\Lam{\py}{\pe}})~\pz}_ρ
      \Arrow{Unfold $\semabs{\wild}$, \Cref{thm:abs-syn-subst}} \\
  ={} & (\mathit{fun}_\py(\fn{θ_\py}{\semabs{\pe}_{ρ[\px↦\langle [\px↦\aU], \repU \rangle,\py↦θ_\py]}}))[\px \Mapsto ρ(\pz).φ]
      \Arrow{$ρ(\pz)(\py) = \aA$ by \Cref{thm:lambda-bound-absent}, $\px \not= \py \not= \pz$} \\
  ={} & \mathit{fun}_\py(\fn{θ_\py}{(\semabs{\pe}_{ρ[\px↦\langle [\px↦\aU], \repU \rangle,\py↦θ_\py]})[\px \Mapsto ρ(\pz).φ]})
      \Arrow{Refold $\semabs{\wild}$} \\
  ={} & \semabs{\Lam{\py}{((\Lam{\px}{\pe})~\pz)}}_ρ
\end{DispWithArrows*}
\end{proof}

\begin{lemma}
\label{thm:push-app-absence}
$\semabs{(\Lam{\px}{\pe})~\py~\pz}_ρ = \semabs{(\Lam{\px}{\pe~\pz})~\py}_ρ$.
\end{lemma}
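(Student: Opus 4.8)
The plan is to mirror the proof of \Cref{thm:lambda-commutes-absence}: rewrite the beta-redexes on both sides via the abstract-substitution lemma (\Cref{thm:abs-syn-subst}) and then check that abstract substitution commutes with the one application node that is left over. On the left, $\semabs{(\Lam{\px}{\pe})~\py~\pz}_ρ$ parses as $\mathit{app}(\semabs{(\Lam{\px}{\pe})~\py}_ρ)(ρ(\pz))$, and \Cref{thm:abs-syn-subst} turns the inner redex into $(\semabs{\pe}_{ρ'})[\px \Mapsto ρ(\py).φ]$ with $ρ' \triangleq ρ[\px↦\langle [\px↦\aU], \repU \rangle]$. On the right, \Cref{thm:abs-syn-subst} (with the body $\pe~\pz$ in the role of the lemma's term) gives $\semabs{(\Lam{\px}{\pe~\pz})~\py}_ρ = (\semabs{\pe~\pz}_{ρ'})[\px \Mapsto ρ(\py).φ]$, and unfolding $\semabs{\pe~\pz}_{ρ'} = \mathit{app}(\semabs{\pe}_{ρ'})(ρ'(\pz))$ together with $ρ'(\pz) = ρ(\pz)$ (since $\px \neq \pz$, bound variables being distinct) rewrites this to $(\mathit{app}(\semabs{\pe}_{ρ'})(ρ(\pz)))[\px \Mapsto ρ(\py).φ]$. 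Writing $θ \triangleq \semabs{\pe}_{ρ'}$, the lemma therefore reduces to the commutation $\mathit{app}(θ[\px \Mapsto ρ(\py).φ])(ρ(\pz)) = (\mathit{app}(θ)(ρ(\pz)))[\px \Mapsto ρ(\py).φ]$.

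To discharge the commutation, write $θ = \langle φ_f, a_1 \argcons π_f \rangle$ — which is legitimate even when $θ$'s $\Args$ component is $\rep{a_1}$, using $\rep{a} \equiv a \argcons \rep{a}$ — and $ρ(\pz) = \langle φ_a, \wild \rangle$. Since \Cref{defn:abs-subst} leaves the $\Args$ component untouched, both sides of the commutation have $\Args$ component $π_f$, so it collapses to the $\Uses$-level identity $φ_f[\px \Mapsto ρ(\py).φ] ⊔ (a_1 * φ_a) = (φ_f ⊔ (a_1 * φ_a))[\px \Mapsto ρ(\py).φ]$. The crucial observation is $(a_1 * φ_a)(\px) = \aA$: if $a_1 = \aA$ then $a_1 * φ_a = []$ and $[](\px) = \aA$; if $a_1 = \aU$ then $a_1 * φ_a = φ_a = ρ(\pz).φ$, and $ρ(\pz).φ(\px) = \aA$ by \Cref{thm:lambda-bound-absent}, because $\px$ is lambda-bound — this appeals to exactly the convention on $ρ$ (environments never record uses of lambda-bound variables) already used in the proof of \Cref{thm:lambda-commutes-absence}. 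Granting $ψ \triangleq a_1 * φ_a$ with $ψ(\px) = \aA$, unfolding \Cref{defn:abs-subst} and using $ψ[\px↦\aA] = ψ$, $(φ_f ⊔ ψ)(\px) = φ_f(\px)$, pointwise distribution of $[\px↦\aA]$ over $⊔$, and associativity and commutativity of $⊔$ yields $(φ_f ⊔ ψ)[\px \Mapsto ρ(\py).φ] = (φ_f[\px↦\aA] ⊔ ψ) ⊔ (φ_f(\px) * ρ(\py).φ) = φ_f[\px \Mapsto ρ(\py).φ] ⊔ ψ$, as required.

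The only subtle point is the freshness step — justifying $ρ(\pz).φ(\px) = \aA$ — which rests on the standing convention that, in the analysis of the enclosing program, $ρ$ binds $\pz$ to an absence type arising as $\semabs{\pe''}_{ρ''}$ for a subterm $\pe''$ in which the lambda-bound $\px$ is itself lambda-bound, so that \Cref{thm:lambda-bound-absent} applies. Since the proof of \Cref{thm:lambda-commutes-absence} already relies on precisely this convention, I would not re-establish it but simply cite it; the remainder is routine unfolding of $\mathit{app}$ and \Cref{defn:abs-subst} and elementary manipulation in the two-point lattice $\aA ⊏ \aU$.
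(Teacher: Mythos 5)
Your proposal is correct and takes essentially the same route as the paper's proof: unfold the outer application and the beta-redex via \Cref{thm:abs-syn-subst}, then commute the abstract substitution past $\mathit{app}$ using $ρ(\pz).φ(\px) = \aA$ from \Cref{thm:lambda-bound-absent}, and refold. The paper merely asserts that commutation step with a one-line justification; your explicit lattice calculation for $φ_f[\px \Mapsto ρ(\py).φ] ⊔ (a_1 * φ_a) = (φ_f ⊔ (a_1 * φ_a))[\px \Mapsto ρ(\py).φ]$ is a correct elaboration of the same idea.
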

\begin{proof}
\begin{DispWithArrows*}[fleqn,mathindent=0em]
      & \semabs{(\Lam{\px}{\pe})~\py~\pz}_ρ
      \Arrow{Unfold $\semabs{\wild}$, \Cref{thm:abs-syn-subst}} \\
  ={} & \mathit{app}((\semabs{\pe}_{ρ[\langle [\px↦\aU], \repU \rangle]})[\px \Mapsto ρ(\py).φ])(ρ(\pz))
      \Arrow{$ρ(\pz)(\px) = \aA$ by \Cref{thm:lambda-bound-absent}, $\py \not= \px \not= \pz$} \\
  ={} & \mathit{app}(\semabs{\pe}_{ρ[\langle [\px↦\aU], \repU \rangle]})(ρ(\pz))[\px \Mapsto ρ(\py).φ]
      \Arrow{Refold $\semabs{\wild}$} \\
  ={} & \semabs{(\Lam{\px}{\pe~\pz})~\py}_ρ
\end{DispWithArrows*}
\end{proof}

\begin{lemma}
\label{thm:push-let-absence}
$\semabs{\Let{\pz}{(\Lam{\px}{\pe_1})~\py}{(\Lam{\px}{\pe_2})~\py}}_ρ = \semabs{(\Lam{\px}{\Let{\pz}{\pe_1}{\pe_2}})~\py}_ρ$.
\end{lemma}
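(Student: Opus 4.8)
The plan is to reuse the recipe behind the neighbouring commuting conversions \Cref{thm:lambda-commutes-absence} and \Cref{thm:push-app-absence}: unfold $\semabs{\wild}$ on both sides, turn every $\beta$-redex $(\Lam{\px}{\wild})~\py$ into an abstract substitution via \Cref{thm:abs-syn-subst}, push that substitution inwards, and refold. Write $φ' \triangleq ρ(\py).φ$, $ρ' \triangleq ρ[\px ↦ \langle [\px↦\aU], \repU \rangle]$ and, extending \Cref{defn:abs-subst} to $\AbsTy$, $g(θ) \triangleq θ[\px \Mapsto φ']$. Since all bound variables are distinct we have $\px \neq \pz$ and $\py \notin \{\px, \pz\}$, so $ρ[\pz↦θ][\px↦\langle [\px↦\aU], \repU \rangle] = ρ'[\pz↦θ]$ and $(ρ[\pz↦θ])(\py) = ρ(\py)$; hence the two uses of \Cref{thm:abs-syn-subst} on the left — one under the recursive-$\mathbf{let}$ fixpoint, for $(\Lam{\px}{\pe_1})~\py$, and one for the body $(\Lam{\px}{\pe_2})~\py$ — factor through the \emph{same} $g$. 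Using the recursive-$\mathbf{let}$ rule recalled at the start of this appendix and the fact that $g$ commutes with $\pz \both (\wild)$ because $\pz \neq \px$ (the same one-line $\Uses$ calculation as in the proof of \Cref{thm:push-app-absence}), the left-hand side reduces to $g(\semabs{\pe_2}_{ρ'[\pz↦θ^{**}]})$ with $θ^{**} = \lfp(g \circ F)$, and the right-hand side to $g(\semabs{\pe_2}_{ρ'[\pz↦θ^{*}]})$ with $θ^{*} = \lfp F$, where $F(θ) = \pz \both \semabs{\pe_1}_{ρ'[\pz↦θ]}$.

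So the lemma boils down to $g(\semabs{\pe_2}_{ρ'[\pz↦\lfp(g\circ F)]}) = g(\semabs{\pe_2}_{ρ'[\pz↦\lfp F]})$ — note $\lfp(g\circ F) \neq \lfp F$ in general, since $\lfp F$ records a spurious use of the fresh variable $\px$ that $g$ removes, so this is not merely equality of the two fixpoints. The key auxiliary result I would establish is an \emph{absorption lemma}: for every expression $\pe$ and every $θ$,
\[
  g\bigl(\semabs{\pe}_{ρ'[\pz↦θ]}\bigr) = g\bigl(\semabs{\pe}_{ρ'[\pz↦g(θ)]}\bigr),
\]
proved by induction on $\pe$. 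The heart is the case $\pe = \pz$, where it amounts to idempotence of $g$; this holds because $φ'(\px) = ρ(\py).φ(\px) = \aA$ by \Cref{thm:lambda-bound-absent} (exactly as \Cref{thm:push-app-absence} uses it), so $(g(θ)).φ(\px) = \aA$ and a second application of $g$ adds only $\aA * φ' = []$. The remaining cases push the induction hypothesis through $\mathit{fun}$, $\mathit{app}$ and $\both$, using at each binder of $\pe$ that $φ'$ is $\aA$ there (same argument); the $\mathbf{let}$-case reuses the fixpoint-fusion argument described next for its own inner fixpoint, now with the induction hypothesis for the let's subterms in hand, so there is no circularity.

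Given the absorption lemma, the fixpoint identity follows by Kleene iteration. The map $g$ is Scott-continuous (it is assembled from $⊔$, constant maps, variable update and the monotone scaling $a * (\wild)$, all continuous) and $g(\bot) = \bot$. Combining the absorption lemma with the commutation of $g$ and $\pz \both (\wild)$ gives $g \circ F \circ g = g \circ F$, whence a routine induction shows $(g \circ F)^n(\bot) = g(F^n(\bot))$ for all $n \geq 1$; taking suprema and using continuity of $g$ together with $g(\bot) = \bot$ yields $\lfp(g \circ F) = g(\lfp F)$, i.e.\ $θ^{**} = g(θ^{*})$. Substituting this back and applying the absorption lemma once more, now at $\pe_2$ with $θ \triangleq θ^{*}$, identifies $g(\semabs{\pe_2}_{ρ'[\pz↦g(θ^{*})]})$ with $g(\semabs{\pe_2}_{ρ'[\pz↦θ^{*}]})$, closing the gap; refolding $\semabs{\wild}$ and the recursive-$\mathbf{let}$ rule on the right finishes the proof.

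I expect the absorption lemma, and its use to commute $g$ past $\lfp$, to be the main obstacle; everything else is the distinct-binder bookkeeping and two-point $\Absence$ algebra already exercised in \Cref{thm:push-app-absence}. The two delicate points are (i) carrying the side condition that $φ'$ is $\aA$ on every binder of $\pe$ through the induction — this is where \Cref{thm:lambda-bound-absent} enters, exactly as in \Cref{thm:push-app-absence} — and (ii) the Kleene step, which works precisely because $g(\bot) = \bot$ and the absorption identity has the shape $g \circ F \circ g = g \circ F$; should that shape prove awkward to establish directly, the fallback is Scott induction showing that $g(\lfp F)$ is the least pre-fixpoint of $g \circ F$. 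I would present the final calculation in the two-column \texttt{DispWithArrows} style of the surrounding proofs.
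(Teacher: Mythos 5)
Your proof is correct and follows essentially the same route as the paper's: unfold the recursive let, apply \Cref{thm:abs-syn-subst} to both $\beta$-redexes, and reduce the claim to postponing the abstract substitution from the right-hand-side fixpoint to the let body. The paper explicitly hand-waves that last step (``this can easily be proved by induction on $\pe_2$, which we omit here''), so your absorption lemma together with the Kleene fusion $\lfp(g\circ F) = g(\lfp F)$ (resting on $g(\bot)=\bot$, idempotence of $g$ via \Cref{thm:lambda-bound-absent}, and $g\circ F\circ g = g\circ F$) is a sound and more rigorous discharge of exactly the obligation the paper leaves open.
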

\begin{proof}
The key of this lemma is that it is equivalent to postpone the abstract
substitution from the let RHS $\pe_1$ to the let body $\pe_2$.
This can easily be proved by induction on $\pe_2$, which we omit here, but
indicate the respective step below as ``hand-waving''.
Note that we assume the (more general) recursive let semantics as defined at the
begin of this section.

\begin{DispWithArrows*}[fleqn,mathindent=1em]
      & \semabs{\Let{\pz}{(\Lam{\px}{\pe_1})~\py}{(\Lam{\px}{\pe_2})~\py}}_ρ
      \Arrow{Unfold $\semabs{\wild}$} \\
  ={} & \semabs{(\Lam{\px}{\pe_2})~\py}_{ρ[\pz↦\lfp(\fn{θ}{\pz \both \semabs{(\Lam{\px}{\pe_1})~\py}_{ρ[\pz ↦ θ]}})]}
      \Arrow{\Cref{thm:abs-syn-subst}} \\
  ={} & (\semabs{\pe_2}_{ρ[\px↦\langle [\px ↦ \aU], \repU \rangle,\pz↦\lfp(\fn{θ}{\pz \both (\semabs{\pe_1}_{ρ[\px↦\langle [\px ↦ \aU], \repU \rangle, \pz ↦ θ]})[\px \Mapsto ρ(\py).φ]})]})[\px \Mapsto ρ(\py).φ]
      \Arrow{Hand-waving above} \\
  ={} & (\semabs{\pe_2}_{ρ[\px↦\langle [\px ↦ \aU], \repU \rangle,\pz↦\lfp(\fn{θ}{\pz \both \semabs{\pe_1}_{ρ[\px↦\langle [\px ↦ \aU], \repU \rangle, \pz ↦ θ]}})]})[\px \Mapsto ρ(\py).φ]
      \Arrow{Refold $\semabs{\wild}$} \\
  ={} & (\semabs{\Let{\pz}{\pe_1}{\pe_2}}_{ρ[\px↦\langle [\px ↦ \aU], \repU \rangle]})[\px \Mapsto ρ(\py).φ]
      \Arrow{\Cref{thm:abs-syn-subst}} \\
  ={} & \semabs{(\Lam{\px}{\Let{\pz}{\pe_1}{\pe_2}})~\py}_ρ
\end{DispWithArrows*}
\end{proof}
\end{toappendix}

\begin{lemmarep}[Substitution]
\label{thm:absence-subst}
$\semabs{\pe}_{ρ[\px↦ρ(\py)]} ⊑ \semabs{(\Lam{\px}{\pe})~\py}_ρ$.
\end{lemmarep}
\begin{proof}
By induction on $\pe$.
\begin{itemize}
  \item \textbf{Case }$\pz$:
    When $\px \not= \pz$, then $\pz$ is bound outside the lambda and can't
    possibly use $\px$, so $ρ(\pz).φ(\px) = \aA$.
    We have
    \begin{DispWithArrows*}[fleqn,mathindent=4em]
        & \semabs{\pz}_{ρ[\px↦ρ(\py)]}
        \Arrow{$\px \not= \pz$} \\
    ={} & ρ(\pz)
        \Arrow{Refold $\semabs{\wild}$} \\
    ={} & \semabs{\pz}_{ρ[\px↦\langle [\px ↦ \aU], \repU \rangle]}
        \Arrow{$ρ(\pz).φ(\px) = \aA$} \\
    ={} & (\semabs{\pz}_{ρ[\px↦\langle [\px ↦ \aU], \repU \rangle]})[\px \Mapsto ρ(\py).φ]
        \Arrow{\Cref{thm:abs-syn-subst}} \\
    ={} & \semabs{(\Lam{\px}{\pz})~\py}_ρ
    \end{DispWithArrows*}
    Otherwise, we have $\px = \pz$,
    thus $ρ(\px) = \langle [\px ↦ \aU], π = \repU \rangle$,
    and thus
    \begin{DispWithArrows*}[fleqn,mathindent=4em]
        & \semabs{\pz}_{ρ[\px↦ρ(\py)]}
        \Arrow{$\px = \pz$} \\
    ={} & ρ(\py)
        \Arrow{$π ⊑ \repU$} \\
    ⊑{} & \langle ρ(\py).φ, \repU \rangle
        \Arrow{Definition of $\wild[\wild\Mapsto\wild]$} \\
    ={} & (\langle [\px ↦ \aU], \repU \rangle)[\px ↦ ρ(\py).φ]
        \Arrow{Refold $\semabs{\wild}$} \\
    ={} & (\semabs{\pz}_{ρ[\px↦\langle [\px ↦ \aU], \repU \rangle]})[\px \Mapsto ρ(\py).φ]
        \Arrow{\Cref{thm:abs-syn-subst}} \\
    ={} & \semabs{(\Lam{\px}{\pz})~\py}_ρ
    \end{DispWithArrows*}

  \item \textbf{Case }$\Lam{\pz}{\pe'}$:
    \begin{DispWithArrows*}[fleqn,mathindent=4em]
        & \semabs{\Lam{\pz}{\pe'}}_{ρ[\px↦ρ(\py)]}
        \Arrow{Unfold $\semabs{\wild}$} \\
    ={} & \mathit{fun}_\pz(\fn{θ_\pz}{\semabs{\pe'}_{ρ[\pz↦θ_\pz, \px↦ρ(\py)]}})
        \Arrow{Induction hypothesis, monotonicity} \\
    ⊑{} & \mathit{fun}_\pz(\fn{θ_\pz}{\semabs{(\Lam{\px}{\pe'})~\py}_{ρ[\pz↦θ_\pz]}})
        \Arrow{Refold $\semabs{\wild}$} \\
    ={} & \semabs{\Lam{\pz}{((\Lam{\px}{\pe'})~\py)}}_ρ
        \Arrow{\Cref{thm:lambda-commutes-absence}} \\
    ={} & \semabs{(\Lam{\px}{\Lam{\pz}{\pe'}})~\py}_ρ
    \end{DispWithArrows*}

  \item \textbf{Case }$\pe'~\pz$:
    When $\px = \pz$:
    \begin{DispWithArrows*}[fleqn,mathindent=4em]
        & \semabs{\pe'~\pz}_{ρ[\px↦ρ(\py)]}
        \Arrow{Unfold $\semabs{\wild}$} \\
    ={} & \mathit{app}(\semabs{\pe'}_{ρ[\px↦ρ(\py)]})(ρ(\py))
        \Arrow{Induction hypothesis, monotonicity} \\
    ⊑{} & \mathit{app}(\semabs{(\Lam{\px}{\pe'})~\py}_ρ)(ρ(\py))
        \Arrow{Refold $\semabs{\wild}$} \\
    ={} & \semabs{(\Lam{\px}{\pe'})~\py~\py}_ρ
        \Arrow{\Cref{thm:push-app-absence}} \\
    ={} & \semabs{(\Lam{\px}{\pe'~\py})~\py}_ρ
        \Arrow{Compositionality in $(\Lam{\px}{\pe'~\hole})~\py$} \\
    ={} & \semabs{(\Lam{\px}{\pe'~\px})~\py}_ρ
        \Arrow{$\px = \pz$} \\
    ={} & \semabs{(\Lam{\px}{\pe'~\pz})~\py}_ρ
    \end{DispWithArrows*}
    When $\px \not= \pz$:
    \begin{DispWithArrows*}[fleqn,mathindent=4em]
        & \semabs{\pe'~\pz}_{ρ[\px↦ρ(\py)]}
        \Arrow{Unfold $\semabs{\wild}$} \\
    ={} & \mathit{app}(\semabs{\pe'}_{ρ[\px↦ρ(\py)]})(ρ(\pz))
        \Arrow{Induction hypothesis, monotonicity} \\
    ⊑{} & \mathit{app}(\semabs{(\Lam{\px}{\pe'})~\py}_ρ)(ρ(\pz))
        \Arrow{Refold $\semabs{\wild}$} \\
    ={} & \semabs{(\Lam{\px}{\pe'})~\py~\pz}_ρ
        \Arrow{\Cref{thm:push-app-absence}} \\
    ={} & \semabs{(\Lam{\px}{\pe'~\pz})~\py}_ρ
    \end{DispWithArrows*}

  \item \textbf{Case }$\Let{\pz}{\pe_1}{\pe_2}$:
    \begin{DispWithArrows*}[fleqn,mathindent=4em]
        & \semabs{\Let{\pz}{\pe_1}{\pe_2}}_{ρ[\px↦ρ(\py)]}
        \Arrow{Unfold $\semabs{\wild}$} \\
    ={} & \semabs{\pe_2}_{ρ[\px↦ρ(\py),\pz↦\lfp(\fn{θ}{\pz \both \semabs{\pe_1}_{ρ[\px↦ρ(\py),\pz ↦ θ]}})]}
        \Arrow{Induction hypothesis, monotonicity} \\
    ⊑{} & \semabs{(\Lam{\px}{\pe_2})~\py}_{ρ[\pz↦\lfp(\fn{θ}{\pz \both \semabs{(\Lam{\px}{\pe_1})~\py}_{ρ[\pz ↦ θ]}})]}
        \Arrow{Refold $\semabs{\wild}$} \\
    ={} & \semabs{\Let{\pz}{(\Lam{\px}{\pe_1})~\py}{(\Lam{\px}{\pe_2})~\py}}_ρ
        \Arrow{\Cref{thm:push-let-absence}} \\
    ={} & \semabs{(\Lam{\px}{\Let{\pz}{\pe_1}{\pe_2}})~\py}_ρ
    \end{DispWithArrows*}
\end{itemize}
\end{proof}

\begin{toappendix}
Whenever there exists $ρ$ such that $ρ(\px).φ \not⊑ (\semabs{\pe}_ρ).φ$
(recall that $θ.φ$ selects the $\Uses$ in the first field of the pair $θ$),
then also $ρ_\pe(\px).φ \not⊑ \semabs{\pe}_{ρ_\pe}$.
The following lemma captures this intuition:

\begin{lemma}[Diagonal factoring]
\label{thm:diag-fact}
Let $ρ$ and $ρ_Δ$ be two environments such that
$\forall \px.\ ρ(\px).π = ρ_Δ(\px).π$.

If $ρ_Δ.φ(\px) ⊑ ρ_Δ.φ(\py)$ if and only if $\px = \py$,
then every instantiation of $\semabs{\pe}$ factors through $\semabs{\pe}_{ρ_Δ}$,
that is,
\[
  \semabs{\pe}_ρ = (\semabs{\pe}_{ρ_Δ})[\many{\px \Mapsto ρ(\px).φ}]
\]
\end{lemma}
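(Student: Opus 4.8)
The plan is to prove the identity by structural induction on $\pe$, exploiting that $\semabs{\wild}$ is \emph{linear} in the $\Uses$-components of its environment: the $\Uses$-component of $\semabs{\pe}_ρ$ is a join of $\Absence$-scaled copies of the various $ρ(\px).φ$, where the scalars — and the $\Args$-component of the result — depend only on $\pe$ and the $\Args$-components of $ρ$. The environment $ρ_Δ$ then plays the role of a \emph{basis}: the incomparability hypothesis forces each $ρ_Δ(\px)$ into the diagonal shape $\langle [\px ↦ \aU], ρ(\px).π \rangle$ already used implicitly in \Cref{thm:abs-syn-subst}, so $\semabs{\pe}_{ρ_Δ}$ records exactly those scalars in its $\Uses$-component, and re-substituting the actual $ρ(\px).φ$ via $[\many{\px \Mapsto ρ(\px).φ}]$ reconstitutes $\semabs{\pe}_ρ$ on the nose.

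Before the induction I would record two algebraic facts about abstract substitution (\Cref{defn:abs-subst}): (i) for fixed $\px$ and $φ'$, the map $φ \mapsto φ[\px \Mapsto φ']$ is a monotone, continuous $⊔$-homomorphism on $\Uses$, which is immediate from $\aA * φ' = []$, $\aU * φ' = φ'$ and $(\aA ⊔ \aU) * φ' = \aU * φ'$; and (ii) abstract substitution commutes with $\mathit{fun}$ and $\mathit{app}$ when the substituted variable is distinct from, and not captured by, the helper — these are exactly the commutation lemmas \Cref{thm:lambda-commutes-absence} and \Cref{thm:push-app-absence}, read through \Cref{thm:abs-syn-subst}. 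Given (i) and (ii) the cases are routine. For a variable $\pz$ one computes both sides directly: the $\Args$-components agree because $ρ_Δ(\pz).π = ρ(\pz).π$, and for the $\Uses$-component the diagonal shape of $ρ_Δ(\pz)$ makes clearing the substituted variables leave $[]$ and the single surviving scaled term is $\aU * ρ(\pz).φ = ρ(\pz).φ$. For an application $\pe'~\pz$ one applies the induction hypothesis to $\pe'$ and pulls $[\many{\px \Mapsto ρ(\px).φ}]$ out of $\mathit{app}$ using (i), (ii) and the fact that $\mathit{app}$ ignores the argument's $\Args$-component. For a lambda $\Lam{\pz}{\pe'}$, unfolding $\mathit{fun}_\pz$ probes $f$ at the diagonal argument $\langle [\pz ↦ \aU], \repU \rangle$, so one extends \emph{both} $ρ$ and $ρ_Δ$ by $[\pz ↦ \langle [\pz ↦ \aU], \repU \rangle]$, checks the hypotheses are preserved, applies the induction hypothesis, and pushes the outer substitution past $\mathit{fun}_\pz$. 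For a (recursive) $\mathbf{let}$, unfold $\semabs{\wild}$, apply the induction hypothesis to $\pe_1$ and $\pe_2$, and move $[\many{\px\Mapsto ρ(\px).φ}]$ inside the binding $\lfp(\fn{θ}{\pz \both \semabs{\pe_1}_{ρ[\pz↦θ]}})$, i.e. past $\both$, the environment update and $\lfp$. Data constructors and $\mathbf{case}$, if in scope, go like application.

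The main obstacle is the $\mathbf{let}$ case: interchanging the outer abstract substitution with the least fixpoint. This is the same ``hand-waving'' step flagged in the proof of \Cref{thm:push-let-absence} — that it is sound to postpone abstract substitution from a let right-hand side to the body — and it must be made precise by observing that $φ \mapsto φ[\many{\px\Mapsto ρ(\px).φ}]$ is a continuous $⊔$-homomorphism that commutes with $\pz \both \wild$ and with environment update, hence commutes with the fixpoint iteration; a small sub-induction on $\pe_2$ is needed exactly as in \Cref{thm:push-let-absence}. A secondary, bookkeeping obstacle is checking in the $λ$- and $\mathbf{let}$-cases that extending $ρ_Δ$ with a diagonal entry for a freshly bound variable preserves the incomparability hypothesis; this relies on the global distinct-variables convention, so that no pre-existing entry of $ρ_Δ$ mentions the bound variable (and \Cref{thm:lambda-bound-absent} handles the dual point that bound variables do not leak into the result).
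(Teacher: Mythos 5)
Your proposal matches the paper's own proof in essentially every respect: the same structural induction, the same reliance on abstract substitution being a $⊔$/$*$-homomorphism, the same commutation with $\mathit{fun}_\px$ and $\mathit{app}$, extending both environments with the diagonal proxy in the binder cases, and the same honestly-flagged fixpoint/substitution interchange in the $\mathbf{let}$ case. The only (shared) looseness is that the incomparability hypothesis does not literally force $ρ_Δ(\px).φ = [\px ↦ \aU]$ — the variable case of both proofs really uses that diagonal shape, which holds for the $ρ_Δ$ the lemma is actually applied to — so your proof is as correct as the paper's and takes the same route.
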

\begin{proof}
By induction on $\pe$.
\begin{itemize}
  \item \textbf{Case $\pe = \py$}:
    We assert $\semabs{\py}_ρ = ρ(\py) = ρ_Δ(\py)[\py \Mapsto ρ(\py).φ]$ by simple unfolding.

  \item \textbf{Case $\pe = \pe'~\py$}:
    \begin{DispWithArrows*}[fleqn,mathindent=3em]
          & \semabs{\pe'~\py}_ρ
          \Arrow{Unfold $\semabs{\wild}$} \\
      ={} & \mathit{app}(\semabs{\pe'}_ρ,ρ(\py))
          \Arrow{Induction hypothesis, variable case} \\
      ={} & \mathit{app}((\semabs{\pe'}_{ρ_Δ})[\many{\px \Mapsto ρ(\px).φ}], ρ_Δ(\py)[\many{\px \Mapsto ρ(\px).φ}]).
          \Arrow{$⊔$ and $*$ commute with $\wild[\wild\Mapsto\wild]$} \\
      ={} & \mathit{app}(\semabs{\pe'}_{ρ_Δ},ρ_Δ(\py))[\many{\px \Mapsto ρ(\px).φ}]
          \Arrow{Refold $\semabs{\wild}$} \\
      ={} & (\semabs{\pe'~\py}_{ρ_Δ})[\many{\px \Mapsto ρ(\px).φ}]
    \end{DispWithArrows*}

  \item \textbf{Case $\pe = \Lam{\py}{\pe'}$}:
    Note that $\px \not= \py$ because $\py$ is not free in $\pe$.
    \begin{DispWithArrows*}[fleqn,mathindent=3em]
          & \semabs{\Lam{\py}{\pe'}}_ρ
          \Arrow{Unfold $\semabs{\wild}$} \\
      ={} & \mathit{lam}_\py(\fn{θ}{\semabs{\pe'}_{ρ[\py↦θ]}})
          \Arrow{Property of $\mathit{lam}_\py$} \\
      ={} & \mathit{lam}_\py(\fn{θ}{(\semabs{\pe'}_{{ρ}[\py↦\langle [\py ↦ \aU], \repU \rangle]})})
        \Arrow{Induction hypothesis} \\
      ={} & \mathit{lam}_\py(\fn{θ}{(\semabs{\pe'}_{{ρ_Δ}[\py↦\langle [\py ↦ \aU], \repU \rangle]})[\many{\px \Mapsto ρ(\px).φ}, \py \Mapsto [\py ↦ \aU]]})
          \Arrow{$θ[\py \Mapsto [\py ↦ \aU]] = θ$} \\
      ={} & \mathit{lam}_\py(\fn{θ}{(\semabs{\pe'}_{{ρ_Δ}[\py↦\langle [\py ↦ \aU], \repU \rangle]})[\many{\px \Mapsto ρ(\px).φ}]})
          \Arrow{$θ[\py \Mapsto [\py ↦ \aU]] = θ$} \\
      ={} & \mathit{lam}_\py(\fn{θ}{(\semabs{\pe'}_{{ρ_Δ}[\py↦θ]})[\many{\px \Mapsto ρ(\px).φ}]})
          \Arrow{Property of $\mathit{lam}_\py$} \\
      ={} & \mathit{lam}_\py(\fn{θ}{\semabs{\pe'}_{{ρ_Δ}[\py↦θ]}})[\many{\px \Mapsto ρ(\px).φ}]
          \Arrow{Refold $\semabs{\wild}$} \\
      ={} & (\semabs{\Lam{\py}{\pe'}}_{ρ_Δ})[\many{\px \Mapsto ρ(\px).φ}]
    \end{DispWithArrows*}

  \item \textbf{Case }$\Let{\py}{\pe_1}{\pe_2}$:
    Note that $\px \not= \py$ because $\py$ is not free in $\pe$.
    \begin{DispWithArrows*}[fleqn,mathindent=4em]
        & \semabs{\Let{\py}{\pe_1}{\pe_2}}_ρ
        \Arrow{Unfold $\semabs{\wild}$} \\
    ={} & \semabs{\pe_2}_{ρ[\py↦\lfp(\fn{θ}{\py \both \semabs{\pe_1}_{ρ[\py ↦ θ]}})]}
        \Arrow{Induction hypothesis} \\
    ={} & \semabs{\pe_2}_{ρ[\py↦\lfp(\fn{θ}{\py \both (\semabs{\pe_1}_{{ρ_Δ}[\py ↦ \langle [\py ↦ \aU], θ.π \rangle]})[\many{\px \Mapsto ρ(\px).φ}, \py \Mapsto θ.φ]})]}
        \Arrow{Again, backwards} \\
    ={} & \semabs{\pe_2}_{ρ[\py↦\lfp(\fn{θ}{\py \both (\semabs{\pe_1}_{{ρ_Δ}[\py ↦ θ]})[\many{\px \Mapsto ρ(\px).φ}]})]} \\
        & \text{\emph{Similarly for $\pe_2$, hand-waving to push out the subst as in \Cref{thm:push-let-absence}}} \\
    ={} & (\semabs{\pe_2}_{ρ_Δ[\py↦\lfp(\fn{θ}{\py \both \semabs{\pe_1}_{{ρ_Δ}[\py ↦ θ]}})]})[\many{\px \Mapsto ρ(\px).φ}]
        \Arrow{Refold $\semabs{\wild}$} \\
    ={} & (\semabs{\Let{\py}{\pe_1}{\pe_2}}_{ρ_Δ})[\many{\px \Mapsto ρ(\px).φ}]
    \end{DispWithArrows*}
\end{itemize}
\end{proof}

For the purposes of the preservation proof, we will write $\tr$ with a tilde to
denote that abstract environment of type $\Var \to \AbsTy$, to disambiguate it
from a concrete environment $ρ$ from the LK machine.

\begin{figure}
\arraycolsep=0pt
\[\begin{array}{rcl}
  \multicolumn{3}{c}{ \ruleform{ \semabsS{\wild} \colon \States → \AbsTy } } \\
  \\[-0.5em]
  \semabsS{(\pe,ρ,μ,κ)} & {}={} & \mathit{apps}_μ(κ,\semabs{\pe}_{α(μ) \circ ρ}) \\
                   α(μ) & {}={} & \lfp(\fn{\tm}{[ \pa ↦ \px \both \semabs{\pe'}_{\tm \circ ρ'} \mid μ(\pa) = (\px,ρ',\pe') ]}) \\
             \mathit{apps}_μ(\StopF,θ) & {}={} & θ \\
             \mathit{apps}_μ(\ApplyF(\pa) \pushF κ,θ) & {}={} & \mathit{apps}_μ(κ,\mathit{app}(θ,α(μ)(\pa))) \\
             \mathit{apps}_μ(\UpdateF(\pa) \pushF κ,θ) & {}={} & \mathit{apps}_μ(κ,θ)
  \\[-0.5em]
\end{array}\]
\caption{Absence analysis extended to small-step configurations}
  \label{fig:absence-ext}
\end{figure}

In \Cref{fig:absence-ext}, we give the extension of $\semabsS{\wild}$ to whole
machine configurations $σ$.
Although $\semabsS{\wild}$ looks like an entirely new definition, it is
actually derivative of $\semabs{\wild}$ via a context lemma à la
\citet[Lemma 3.2]{MoranSands:99}:
The environments $ρ$ simply govern the transition from syntax to
operational representation in the heap.
The bindings in the heap are to be treated as mutually recursive let bindings,
hence a fixpoint is needed.
For safety properties such as absence, a least fixpoint is appropriate.
Apply frames on the stack correspond to the application case of $\semabs{\wild}$
and invoke the summary mechanism.
Update frames are ignored because our analysis is not heap-sensitive.

Now we can prove that $\semabsS{\wild}$ is preserved/improves during reduction:

\begin{lemma}[Preservation of $\semabsS{\wild}$]
\label{thm:preserve-absent}
If $σ_1 \smallstep σ_2$, then $\semabsS{σ_1} ⊒ \semabsS{σ_2}$.
\end{lemma}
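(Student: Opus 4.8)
The plan is to do a case analysis on the transition $\sigma_1 \smallstep \sigma_2$, in each case unfolding $\semabsS{(\pe,\rho,\mu,\kappa)} = \mathit{apps}_\mu(\kappa,\semabs{\pe}_{\alpha(\mu)\circ\rho})$ on both sides and closing the gap with two routine monotonicity facts established up front: $\mathit{apps}_\mu(\kappa,\wild)$ is monotone in its argument, and $\semabs{\pe}_{\wild}$ is monotone in the environment. Since this section lives in the highlighted fragment (no $\mathbf{case}$), the transitions to treat are variable lookup ($\LookupT$), the two application rules ($\AppIT$, which pushes an $\ApplyF$ frame, and $\AppET$, which consumes it), heap allocation for a $\mathbf{let}$, and heap update ($\UpdateT$). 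Throughout we use the recursive-let reading of $\semabs{\Let{\wild}{\wild}{\wild}}$ fixed at the start of this appendix, which is exactly what makes the $\mathbf{let}$ and heap cases line up with the least fixpoint defining $\alpha$.

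Most cases are short. For $\AppIT$, unfolding the application equation of $\semabs{\wild}$ on the left and the $\ApplyF$ clause of $\mathit{apps}_\mu$ on the right yields syntactically the same term, so equality holds; the heap-allocating $\mathbf{let}$ rule is analogous once one observes that the fresh address contributes exactly the $\lfp$ appearing in the recursive-let equation and, being fresh, leaves the rest of $\alpha$ and the $\mathit{apps}_\mu$ bookkeeping for the existing stack untouched. For $\LookupT$, the left-hand side is $\mathit{apps}_\mu(\kappa,\alpha(\mu)(\pa))$, which by the fixpoint equation for $\alpha$ equals $\mathit{apps}_\mu(\kappa,\pz\both\semabs{\pe'}_{\alpha(\mu)\circ\rho'})$ for the heap entry $\mu(\pa) = (\pz,\rho',\pe')$; since $\pz\both\langle\varphi,\pi\rangle = \langle\varphi[\pz\mapsto\aU],\pi\rangle \sqsupseteq \langle\varphi,\pi\rangle$ and update frames are ignored by $\mathit{apps}_\mu$, this dominates $\mathit{apps}_\mu(\kappa,\semabs{\pe'}_{\alpha(\mu)\circ\rho'}) = \semabsS{\sigma_2}$. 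The beta rule $\AppET$ is where the summary mechanism is exercised: after unfolding, it suffices to show that for any abstract environment $\tr$ and any $\theta \in \AbsTy$ one has $\semabs{\pe}_{\tr[\px\mapsto\theta]} \sqsubseteq \mathit{app}(\mathit{fun}_{\px}(\fn{\theta'}{\semabs{\pe}_{\tr[\px\mapsto\theta']}}))(\theta)$ --- a version of \Cref{thm:absence-subst} with an arbitrary semantic argument in place of a program variable. Writing $\theta = \langle\varphi_a,\pi_a\rangle$ and using $\pi_a \sqsubseteq \repU$ with monotonicity, this reduces to the $\langle\varphi_a,\repU\rangle$ case, which \Cref{thm:abs-syn-subst} together with \Cref{thm:diag-fact} identifies with $(\semabs{\pe}_{\tr[\px\mapsto\langle[\px\mapsto\aU],\repU\rangle]})[\px \Mapsto \varphi_a]$ --- exactly what $\mathit{fun}_{\px}$ followed by $\mathit{app}$ computes from the definition of abstract substitution.

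The real obstacle is heap update, $\UpdateT$: here $(\pv,\rho,\mu,\UpdateF(\pa)\pushF\kappa) \smallstep (\pv,\rho,\mu',\kappa)$ with $\mu' = \mu[\pa\mapsto(\pz,\rho,\pv)]$ overwriting the thunk $\mu(\pa) = (\pz,\rho_0,\pe_0)$ by its computed value. Because update frames contribute nothing to $\mathit{apps}_\mu$, the goal reduces to $\mathit{apps}_{\mu'}(\kappa,\semabs{\pv}_{\alpha(\mu')\circ\rho}) \sqsubseteq \mathit{apps}_\mu(\kappa,\semabs{\pv}_{\alpha(\mu)\circ\rho})$, and the natural route is to prove $\alpha(\mu') \sqsubseteq \alpha(\mu)$ pointwise and conclude by monotonicity of $\mathit{apps}$ in both the heap (through its $\ApplyF$ clause) and the value argument. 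Since the two heaps differ only at $\pa$, showing $\alpha(\mu') \sqsubseteq \alpha(\mu)$ comes down --- via a pre-fixpoint argument on the operator whose least fixpoint is $\alpha(\mu')$ --- to the single inequality $\semabs{\pv}_{\alpha(\mu)\circ\rho} \sqsubseteq \semabs{\pe_0}_{\alpha(\mu)\circ\rho_0}$, \ie \emph{the value uses no more of its free variables than the thunk it replaces}. This is not structural in $\pv$ and $\pe_0$; it is precisely the statement of \Cref{thm:preserve-absent} itself, applied to the balanced sub-derivation $(\pe_0,\rho_0,\mu,\UpdateF(\pa)\pushF\kappa) \smallstep^{*} (\pv,\rho,\mu,\UpdateF(\pa)\pushF\kappa)$ that produced $\pv$ (using that the binding at $\pa$ and the $\alpha$-values of pre-existing addresses stay frozen while that update frame sits on the stack). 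So the lemma must be set up to make this available: either strengthen the statement to the reflexive-transitive closure $\smallstep^{*}$ and exploit the balanced shape of the intervening trace, or carry an invariant over reachable LK configurations asserting that every heap binding is $\semabs{\wild}$-approximated by the expression it was allocated from. I expect this entanglement of the update rule with the fixpoint defining $\alpha$ --- not any individual calculation --- to be the crux, and it is exactly the kind of bespoke preservation bookkeeping the body of the paper is designed to eliminate; every other case is just unfolding plus the two monotonicity facts.
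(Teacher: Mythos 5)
Your proposal is correct and follows essentially the same route as the paper's own proof: the identical case analysis, equalities in the $\AppIT$ and $\LetIT$ cases, dropping the $\pz \both \wild$ contribution in $\LookupT$, the substitution lemma for $\AppET$, and the same diagnosis that $\UpdateT$ cannot be closed by single-step induction but needs either a strengthening to $\smallstep^*$ with balanced traces (step-indexing) or a reachability invariant. The paper concedes exactly this in its $\UpdateT$ case and defers the rigorous argument to \Cref{thm:abstract-by-need}; your additional care in generalising \Cref{thm:absence-subst} from a syntactic argument $ρ(\py)$ to an arbitrary abstract value in the $\AppET$ case is a detail the paper's proof glosses over.
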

\begin{proof}
By cases on the transition.
\begin{itemize}
  \item \textbf{Case }$\LetIT$: Then $\pe = \Let{\py}{\pe_1}{\pe_2}$ and
    \[
      (\Let{\py}{\pe_1}{\pe_2},ρ,μ,κ) \smallstep (\pe_2,ρ[\py↦\pa],μ[\pa↦(\py,ρ[\py↦\pa],\pe_1)],κ).
    \]
    Abbreviating $ρ_1 \triangleq ρ[\py↦\pa], μ_1 \triangleq μ[\pa ↦ (\py,ρ_1,\pe_1)$, we have
    \begin{DispWithArrows*}[fleqn,mathindent=3em]
           & \semabsS{σ_1} \Arrow{Unfold $\semabsS{σ_1}$} \\
      {}={}& \mathit{apps}_μ(κ)(\semabs{\Let{\py}{\pe_1}{\pe_2}}_{α(μ) \circ ρ}) \Arrow{Unfold $\semabs{\Let{\py}{\pe_1}{\pe_2}}$} \\
      {}={}& \mathit{apps}_μ(κ)(\semabs{\pe_2}_{(α(μ) \circ ρ)[\py↦\py \both \lfp(\fn{θ}{\semabs{\pe_1}_{(α(μ) \circ ρ)[\py↦θ]}})]}) \Arrow{Move fixpoint outwards, refold $α$} \\
      {}={}& \mathit{apps}_{μ_1}(κ)(\semabs{\pe_2}_{α(μ_1) \circ ρ_1}) \Arrow{Refold $\semabsS{σ_2}$} \\
      {}={}& \semabsS{σ_2}
    \end{DispWithArrows*}

  \item \textbf{Case }$\AppIT$:
    Then $(\pe'~\py,ρ,μ,κ) \smallstep (\pe',ρ,μ,\ApplyF(ρ(\py)) \pushF κ)$.
    \begin{DispWithArrows*}[fleqn,mathindent=3em]
           & \semabsS{σ_1} \Arrow{Unfold $\semabsS{σ_1}$} \\
      {}={}& \mathit{apps}_μ(κ)(\semabs{\pe'~\py}_{α(μ) \circ ρ}) \Arrow{Unfold $\semabs{\pe'~\py}_{(α(μ) \circ ρ)}$} \\
      {}={}& \mathit{apps}_μ(κ)(\mathit{app}(\semabs{\pe'}_{α(μ) \circ ρ}, α(μ)(ρ(\py)))) \Arrow{Rearrange} \\
      {}={}& \mathit{apps}_μ(\ApplyF(ρ(\py)) \pushF κ)(\semabs{\pe'}_{α(μ) \circ ρ}) \Arrow{Refold $\semabsS{σ_2}$} \\
      {}={}& \semabsS{σ_2}
    \end{DispWithArrows*}

  \item \textbf{Case }$\AppET$:
    Then $(\Lam{\py}{\pe'},ρ,μ,\ApplyF(\pa) \pushF κ) \smallstep (\pe',ρ[\py↦\pa],μ,κ)$.
    \begin{DispWithArrows*}[fleqn,mathindent=3em]
           & \semabsS{σ_1} \Arrow{Unfold $\semabsS{σ_1}$} \\
      {}={}& \mathit{apps}_μ(\ApplyF(\pa) \pushF κ)(\semabs{\Lam{\py}{\pe'}}_{α(μ) \circ ρ}) \Arrow{Unfold $\mathit{apps}$} \\
      {}={}& \mathit{apps}_μ(κ)(\mathit{app}(\semabs{\Lam{\py}{\pe'}}_{α(μ) \circ ρ}, α(μ)(\pa))) \Arrow{Unfold RHS of \Cref{thm:absence-subst}} \\
      {}⊒{}& \mathit{apps}_μ(κ)(\semabs{\pe'}_{(α(μ) \circ ρ)[\py↦α(μ)(\pa)]}) \Arrow{Rearrange} \\
      {}={}& \mathit{apps}_μ(κ)(\semabs{\pe'}_{(α(μ) \circ ρ[\py↦\pa])}) \Arrow{Refold $\semabsS{σ_2}$} \\
      {}={}& \semabsS{σ_2}
    \end{DispWithArrows*}

  \item \textbf{Case }$\LookupT$:
    Then $\pe = \py$, $\pa \triangleq ρ(\py)$, $(\pz,ρ',\pe') \triangleq μ(\pa)$ and
    $(\py,ρ,μ,κ) \smallstep (\pe',ρ',μ,\UpdateF(\pa) \pushF κ)$.
    \begin{DispWithArrows*}[fleqn,mathindent=3em]
           & \semabsS{σ_1} \Arrow{Unfold $\semabsS{σ_1}$} \\
      {}={}& \mathit{apps}_μ(κ)(\semabs{\py}_{α(μ) \circ ρ}) \Arrow{Unfold $\semabs{\py}$} \\
      {}={}& \mathit{apps}_μ(κ)((α(μ) \circ ρ)(\py)) \Arrow{Unfold $α$} \\
      {}={}& \mathit{apps}_μ(κ)(\pz \both \semabs{\pe'}_{α(μ) \circ ρ'}) \Arrow{Drop $[\pz↦\aU]$} \\
      {}⊒{}& \mathit{apps}_μ(κ)(\semabs{\pe'}_{α(μ) \circ ρ'}) \Arrow{Definition of $\mathit{apps}_μ$} \\
      {}={}& \mathit{apps}_μ(\UpdateF(\pa) \pushF κ)(\semabs{\pe'}_{α(μ) \circ ρ'}) \Arrow{Refold $\semabsS{σ_2}$} \\
      {}={}& \semabsS{σ_2}
    \end{DispWithArrows*}

  \item \textbf{Case }$\UpdateT$:
    Then $(\pv, ρ, μ[\pa↦(\py,ρ',\pe')], \UpdateF(\pa) \pushF κ) \smallstep (\pv,ρ,μ[\pa↦(\py,ρ,\pv)],κ)$.

    This case is a bit hand-wavy and shows how heap update during by-need
    evaluation is dreadfully complicated to handle, even though
    $\semabs{\wild}$ is heap-less and otherwise correct \wrt by-name
    evaluation.
    The culprit is that in order to show $\semabsS{σ_2} ⊑ \semabsS{σ_1}$, we
    have to show
    \begin{equation}
      \semabs{\pv}_{α(μ) \circ ρ} ⊑ \semabs{\pe'}_{α(μ') \circ ρ'}. \label{eqn:absent-upd}
    \end{equation}

    Intuitively, this is somewhat clear, because $μ$ ``evaluates to'' $μ'$ and
    $\pv$ is the value of $\pe'$, in the sense that there exists
    $σ'=(\pe',ρ',μ',κ)$ such that $σ' \smallstep^* σ_1 \smallstep σ_2$.

    Alas, who guarantees that such a $σ'$ actually exists?
    We would need to rearrange the lemma for that and argue by step indexing
    (a.k.a. coinduction) over prefixes of \emph{maximal traces} (to be
    rigorously defined later).
    That is, we presume that the statement
    \[
      \forall n.\ σ_0 \smallstep^n σ_2 \Longrightarrow \semabsS{σ_2} ⊑ \semabsS{σ_0}
    \]
    has been proved for all $n < k$ and proceed to prove it for $n = k$.
    So we presume $σ_0 \smallstep^{k-1} σ_1 \smallstep σ_2$ and $\semabsS{σ_1} ⊑ \semabsS{σ_0}$
    to arrive at a similar setup as before, only with a stronger assumption
    about $σ_1$.
    Specifically, due to the balanced stack discipline we know that
    $σ_0 \smallstep^{k-1} σ_1$ factors over $σ'$ above.
    We may proceed by induction over the balanced stack discipline (we will see
    in \Cref{sec:adequacy} that this amounts to induction over the big-step
    derivation) of the trace $σ' \smallstep^* σ_1$ to show \Cref{eqn:absent-upd}.

    This reasoning was not specific to $\semabs{\wild}$ at all.
    We will show a more general result in \Cref{thm:abstract-by-need}
    that can be reused across many more analyses.

    Assuming \Cref{eqn:absent-upd} has been proved, we proceed
    \begin{DispWithArrows*}[fleqn,mathindent=3em]
           & \semabsS{σ_1} \Arrow{Unfold $\semabsS{σ_1}$} \\
      {}={}& \mathit{apps}_μ(\UpdateF(\pa) \pushF κ)(\semabs{\pv}_{α(μ) \circ ρ}) \Arrow{Definition of $\mathit{apps}_μ$} \\
      {}={}& \mathit{apps}_μ(κ)(\semabs{\pv}_{α(μ) \circ ρ}) \Arrow{Above argument that $\semabs{\pv}_{α(μ) \circ ρ} ⊑ \semabs{\pe'}_{α(μ') \circ ρ'}$} \\
      {}⊒{}& \mathit{apps}_{μ[\pa↦(\py,ρ,\pv)]}(κ)(\semabs{\pv}_{α(μ[\pa↦(\py,ρ,\pv)]) \circ ρ}) \Arrow{Refold $\semabsS{σ_2}$} \\
      {}={}& \semabsS{σ_2}
    \end{DispWithArrows*}
\end{itemize}
\end{proof}

\noindent
We conclude with the \hypertarget{proof:absence-correct}{proof} for \Cref{thm:absence-correct}:
\begin{proof}
We show the contraposition, that is,
if $\px$ is used in $\pe$, then $φ(\px) = \aU$.

Since $\px$ is used in $\pe$, there exists a trace
\[
  (\Let{\px}{\pe'}{\pe},ρ,μ,κ) \smallstep (\pe,ρ_1,μ_1,κ) \smallstep^* (\py,ρ'[\py↦\pa],μ',κ') \smallstep[\LookupT(\px)] ...,
\]
where $ρ_1 \triangleq ρ[\px↦\pa]$, $μ_1 \triangleq μ[\pa↦(\px,ρ[\px↦\pa],\pe')]$.
Without loss of generality, we assume the trace prefix ends at the first lookup
at $\pa$, so $μ'(\pa) = μ_1(\pa) = (\px, ρ_1,\pe')$.
If that was not the case, we could just find a smaller prefix with this property.

Let us abbreviate $\tr \triangleq (α(μ_1) \circ ρ_1)$.
Under the above assumptions, $\tr(\py).φ(\px) = \aU$ implies $\px = \py$ for all
$\py$, because $μ_1(\pa)$ is the only heap entry in which $\px$ occurs by our
shadowing assumptions on syntax.
By unfolding $\semabsS{\wild}$ and $\semabs{\py}$ we can see that
\[
  [\px ↦ \aU] ⊑ α(μ_1)(\pa).φ = α(μ')(\pa).φ = \semabs{\py}_{α(μ') \circ ρ'[\py↦\pa]}.φ ⊑ (\semabsS{(\py,ρ'[\py↦\pa],μ',κ')}).φ.
\]
By \Cref{thm:preserve-absent}, we also have
\[
  (\semabsS{(\py,ρ'[\py↦\pa],μ',κ')}).φ ⊑ (\semabsS{(\pe,ρ_1,μ_1,κ)}).φ.
\]
And with transitivity, we get $[\px ↦ \aU] ⊑ (\semabsS{(\pe,ρ_1,μ_1,κ)}).φ$.
Since there was no other heap entry for $\px$ and $\pa$ cannot occur in $κ$ or
$ρ$ due to well-addressedness, we have $[\px ↦ \aU] ⊑ (\semabsS{(\pe,ρ_1,μ_1,κ)}).φ$ if
and only if $[\px ↦ \aU] ⊑ (\semabs{\pe}_{\tr}).φ$.
With \Cref{thm:diag-fact}, we can decompose
\begin{DispWithArrows*}[fleqn,mathindent=1em]
       & [\px ↦ \aU] \Arrow{Above result} \\
  {}⊑{}& (\semabs{\pe}_{\tr}).φ \Arrow{$\tr_Δ(\px) \triangleq \langle [\px ↦ \aU], \tr(\px).π \rangle$, \Cref{thm:diag-fact}} \\
  {}={}& ((\semabs{\pe}_{\tr_Δ})[\many{\py \Mapsto \tr(\py).φ}]).φ \Arrow{$π ⊑ \repU$, hence $\tr_Δ ⊑ \tr_\pe$} \\
  {}⊑{}& ((\semabs{\pe}_{\tr_\pe})[\many{\py \Mapsto \tr(\py).φ}]).φ \Arrow{Definition of $\wild[\wild \Mapsto \wild]$} \\
  {}={}& \Lub \{ \tr(\py).φ \mid \semabs{\pe}_{\tr_\pe}.φ(\py) = \aU \}
\end{DispWithArrows*}
But since $\tr(\py).φ(\px) = \aU$ implies $\px = \py$ (refer to definition of
$\tr$), we must have $(\semabs{\pe}_{\tr_\pe}).φ(\px) = \aU$, as required.
\end{proof}
\end{toappendix}

\Cref{defn:absence} and the substitution \Cref{thm:absence-subst} will make
a reappearance in \Cref{sec:soundness}.
They are necessary components of a soundness proof.
Building on these definitions, we may finally attempt the proof for
\Cref{thm:absence-correct}.
We suggest for the reader to have a cursory look at the proof in the Appendix.
The proof is exemplary of far more ambitious proofs such as in
\citet{Sergey:14} and \citet[Section 4]{Breitner:16}.
Though seemingly disparate, these proofs all follow an established
preservation-style proof technique at heart.
The proof of \citet{Sergey:14} for a generalisation of $\semabs{\wild}$
is roughly structured as follows (starred* references of Figures and Lemmas
refer to \citet{Sergey:14}):

\begin{enumerate}
  \item Instrument a standard call-by-need semantics (a variant of our reference
    in \Cref{sec:op-sem}) such that heap lookups decrement a per-address
    counter; when heap lookup is attempted and the counter is 0, the machine is stuck.
    For absence, the instrumentation is simpler: the $\LookupT$
    transition in \Cref{fig:lk-semantics} carries the let-bound variable that is
    looked up.
  \item Give a declarative type system that characterises the results of the
    analysis (\ie $\semabs{\wild}$) in a lenient (upwards closed) way.
    In case of \Cref{thm:absence-correct}, we define an analysis function on
    machine configurations for the proof (\Cref{fig:absence-ext}).
  \item Prove that evaluation of well-typed terms in the instrumented
    semantics is bisimilar to evaluation of the term in the standard semantics,
    \ie does not get stuck when the standard semantics would not.
    A classic \emph{logical relation}~\citep{Nielson:99}.
\end{enumerate}
Alas, the effort in comprehending such a proof in detail, let alone formulating
it, is enormous.
\begin{itemize}
  \item
    The instrumentation (1) can be semantically non-trivial; for example the
    semantics in \citet{Sergey:14} becomes non-deterministic.
    Does this instrumentation still express the desired semantic property?
  \item Step (2) all but duplicates a complicated analysis
    definition (\ie $\semabs{\wild}$) into a type system (in Figure 7*) with
    subtle adjustments expressing invariants for the preservation proof.
  \item
    Furthermore, step (2) extends this type system to small-step machine
    configurations (in Figure 13*), \ie stacks and heaps, the scoping of which
    is mutually recursive.%
    \footnote{We believe that this extension can always be derived systematically from a
    context lemma~\citep[Lemma 3.2]{MoranSands:99} and imitating what the type
    system does on the closed expression derivable from a configuration via the
    context lemma.}
    Another page worth of Figures; the amount of duplicated proof artifacts is
    staggering.
    In our case, the analysis function on machine configurations is about as
    long as on expressions.
  \item
    This is all setup before step (3) proves interesting properties about the
    semantic domain of the analysis.
    Among the more interesting properties is the \emph{substitution lemma} A.8*
    to be applied during beta reduction; exactly as in our proof.
  \item
    While proving that a single step $σ_1 \smallstep σ_2$ preserves analysis
    information in step (3), we noticed that we actually got stuck in the $\UpdateT$
    case, and would need to redo the proof using step-indexing~\citep{AppelMcAllester:01}.
    This case mutates the heap and thus is notoriously difficult; we give a
    proper account in \Cref{thm:abstract-by-need}.

    Although the proof in \citet{Sergey:14} is perceived as detailed and
    rigorous, it is quite terse in the corresponding \textsc{EUpd} case of the
    single-step safety proof in Lemma A.6*.
\end{itemize}
\noindent
There are two main problems to address, and we believe the first causes the second.
\begin{enumerate}
  \item
    Although analysis and semantics are individually simple, it is conceptually
    difficult to connect them, causing an explosion of formal artefacts.
    This is because one is compositional while the other is not.
  \item
    Compared to analysis and semantics, the soundness proof is rather
    complicated because it is \emph{entangled}:
    The parts of the proof that concern the domain of the analysis are drowned in
    coping with semantic subtleties that ultimately could be shared with similar
    analyses.
\end{enumerate}

Abstract interpretation~\citep{Cousot:77} provides a framework to tackle problem
(2), but its systematic applications seem to require a structurally matching
semantics.
For example, the book of \citet{Cousot:21} starts from a \emph{compositional},
trace-generating semantics for an imperative first-order language to derive
compositional analyses.

In this work we present the \textbf{\emph{denotational interpreter}} design
pattern to solve both problems above.
Inspired by \citeauthor{Cousot:21}, we define a \textbf{\emph{compositional
semantics that exhibits operational detail}} for higher-order languages;
one in which it is possible to express \emph{operational properties} such as
\emph{usage cardinality}, \ie ``$\pe$ evaluates $\px$ at most $u$ times'', as
required in \citet{Sergey:14}.%
\footnote{Useful applications of the ``at most once'' cardinality are given in
\citet{Turner:95,Sergey:14}, motivating inlining into function bodies that are
called at most once, for example.}

The example of usage analysis in \Cref{sec:abstraction} (generalising
$\semabs{\wild}$) demonstrates that we can \textbf{\emph{derive summary-based
analyses as an abstract interpretation}} from our semantics.

Since both semantics and analysis are \textbf{\emph{derived from the same
generic interpreter}}, solving problem (1), we can prove usage analysis to be an
\emph{abstract interpretation} of call-by-need semantics.
Doing so disentangles the preservation proof such that the proof
for usage analysis in \Cref{thm:usage-abstracts-need} takes no more than a
semantic substitution lemma and a bit of plumbing, solving problem (2).

\section{Reference Semantics: Lazy Krivine Machine}
\label{sec:op-sem}

\begin{figure}
\[\begin{array}{c}
 \arraycolsep3pt
 \begin{array}{rrclcl@{\hspace{1.5em}}rrcrclcl}
  \text{Addresses}     & \pa & ∈ & \Addresses     & \simeq & ℕ
  &
  \text{States}        & σ   & ∈ & \States        & =      & \Exp \times \Environments \times \Heaps \times \Continuations
  \\
  \text{Environments}  & ρ   & ∈ & \Environments  & =      & \Var \pfun \Addresses
  &
  \text{Heaps}         & μ   & ∈ & \Heaps         & =      & \Addresses \pfun \Var \times \Environments \times \Exp
  \\
  \text{Continuations} & κ   & ∈ & \Continuations & ::=    & \multicolumn{7}{l}{\StopF \mid \ApplyF(\pa) \pushF κ \mid \SelF(ρ,\SelArity) \pushF κ \mid \UpdateF(\pa) \pushF κ} \\
 \end{array} \\
 \\[-0.5em]
\end{array}\]

\newcolumntype{L}{>{$}l<{$}} 
\newcolumntype{R}{>{$}r<{$}} 
\newcolumntype{C}{>{$}c<{$}} 
\resizebox{\textwidth}{!}{%
\begin{tabular}{LR@{\hspace{0.4em}}C@{\hspace{0.4em}}LL}
\toprule
\text{Rule} & σ_1 & \smallstep & σ_2 & \text{where} \\
\midrule
\LetIT & (\Let{\px}{\pe_1}{\pe_2},ρ,μ,κ) & \smallstep & (\pe_2,ρ',μ[\pa↦(\px,ρ',\pe_1)], κ) & \pa \not∈ \dom(μ),\ ρ'\! = ρ[\px↦\pa] \\
\AppIT & (\pe~\px,ρ,μ,κ) & \smallstep & (\pe,ρ,μ,\ApplyF(\pa) \pushF κ) & \pa = ρ(\px) \\
\CaseIT & (\Case{\pe_s}{\Sel[r]},ρ,μ,κ) & \smallstep & (\pe_s,ρ,μ,\SelF(ρ,\Sel[r]) \pushF κ) & \\
\LookupT(\py) & (\px, ρ, μ, κ) & \smallstep & (\pe, ρ', μ, \UpdateF(\pa) \pushF κ) & \pa = ρ(\px),\ (\py,ρ',\pe) = μ(\pa) \\
\AppET & (\Lam{\px}{\pe},ρ,μ, \ApplyF(\pa) \pushF κ) & \smallstep & (\pe,ρ[\px ↦ \pa],μ,κ) &  \\
\CaseET & (K'~\many{y},ρ,μ, \SelF(ρ',\Sel) \pushF κ) & \smallstep & (\pe_i,ρ'[\many{\px_i ↦ \pa}],μ,κ) & K_i = K',\ \many{\pa = ρ(\py)} \\
\UpdateT & (\pv, ρ, μ, \UpdateF(\pa) \pushF κ) & \smallstep & (\pv, ρ, μ[\pa ↦ (\px,ρ,\pv)], κ) & μ(\pa) = (\px,\wild,\wild) \\
\bottomrule
\end{tabular}
} 
\caption{Lazy Krivine transition semantics $\smallstep$}
  \label{fig:lk-semantics}
\end{figure}

Before we get to introduce our novel denotational interpreters, let us
recall the semantic ground truth of this work and others \citep{Sergey:14,
Breitner:16}: The Mark II machine of \citet{Sestoft:97} given in
\Cref{fig:lk-semantics}, a small-step operational semantics.
It is a Lazy Krivine (LK) machine implementing call-by-need.
(A close sibling for call-by-value would be a CESK machine \citep{Felleisen:87}.)
A reasonable call-by-name semantics can be recovered by removing the $\UpdateT$
rule and the pushing of update frames in $\LookupT$.

The configurations $σ$ in this transition system resemble abstract machine
states, consisting of a control expression $\pe$, an environment $ρ$ mapping
lexically-scoped variables to their current heap address, a heap $μ$ listing a
closure for each address, and a stack of continuation frames $κ$.
There is one harmless non-standard extension: For $\LookupT$
transitions, we take note of the let-bound variable $\py$ which allocated the
heap binding that the machine is about to look up. The association from address
to let-bound variable is maintained in the first component of a heap entry
triple and requires slight adjustments of the $\LetIT$, $\LookupT$ and
$\UpdateT$ rules.

The notation $f ∈ A \pfun B$ used in the definition of $ρ$ and $μ$ denotes a
finite map from $A$ to $B$, a partial function where the domain $\dom(f)$ is
finite.
The literal notation $[a_1↦b_1,...,a_n↦b_n]$ denotes a finite map with domain
$\{a_1,...,a_n\}$ that maps $a_i$ to $b_i$. Function update $f[a ↦ b]$
maps $a$ to $b$ and is otherwise equal to $f$.

The initial machine state for a closed expression $\pe$ is given by the
injection function $\init(\pe) = (\pe,[],[],\StopF)$ and
the final machine states are of the form $(\pv,\wild,\wild,\StopF)$.
We bake into $σ∈\States$ the simplifying invariant of \emph{well-addressedness}:
Any address $\pa$ occurring in $ρ$, $κ$ or the range of $μ$ must be an element of
$\dom(μ)$.
It is easy to see that the transition system maintains this invariant and that
it is still possible to observe scoping errors which are thus confined to lookup
in $ρ$.

We conclude with two example traces. The first one evaluates $\Let{i}{\Lam{x}{x}}{i~i}$:
\begin{align} \label{ex:trace}
  &\arraycolsep3pt
  \begin{array}{lclclclclc}
  (\Let{i}{\Lam{x}{x}}{i~i}, [], [], \StopF) & \smallstep[\LetIT] & (i~i, ρ_1, μ, \StopF) & \smallstep[\AppIT] & (i, ρ_1, μ, κ) & \smallstep[\LookupT(i)] \\
  \highlight{(\Lam{x}{x}, ρ_1, μ, \UpdateF(\pa_1) \pushF κ)} & \highlight{\smallstep[\UpdateT]} & (\Lam{x}{x}, ρ_1, μ, κ) & \smallstep[\AppET] & (x, ρ_2, μ, \StopF) & \smallstep[\LookupT(i)] \\
  \highlight{(\Lam{x}{x}, ρ_1, μ, \UpdateF(\pa_1) \pushF \StopF)} & \highlight{\smallstep[\UpdateT]} & (\Lam{x}{x}, ρ_1, μ, \StopF)
  \end{array} \\ \notag
  &\qquad \text{where} \begin{array}{llll}
    κ = \ApplyF(\pa_1) \pushF \StopF, & ρ_1 = [i ↦ \pa_1], & ρ_2 = [i ↦ \pa_1, x ↦ \pa_1], & μ = [\pa_1 ↦ (i, ρ_1,\Lam{x}{x})] \\
  \end{array} \notag
\end{align}
The corresponding by-name trace simply omits the highlighted update steps.
The last $\LookupT(i)$ transition exemplifies that the lambda-bound variable in
control differs from the let-bound variable used to identify the heap entry.

The second example evaluates $\pe \triangleq \Let{i}{(\Lam{y}{\Lam{x}{x}})~i}{i~i}$,
demonstrating memoisation of $i$:
\begin{align} \label{ex:trace2}
  &\begin{array}{l}
  (\pe, [], [], \StopF)
  \smallstep[\LetIT]
  (i~i, ρ_1, μ_1, \StopF)
  \smallstep[\AppIT]
  (i, ρ_1, μ_1, κ_1)
  \smallstep[\LookupT(i)]
  ((\Lam{y}{\Lam{x}{x}})~i, ρ_1, μ_1, κ_2)
  \\
  \smallstep[\AppIT]
  (\Lam{y}{\Lam{x}{x}}, ρ_1, μ_1, \ApplyF(\pa_1) \pushF κ_2)
  \smallstep[\AppET]
  (\Lam{x}{x}, ρ_2, μ_1, κ_2)
  \smallstep[\UpdateT]
  (\Lam{x}{x}, ρ_2, μ_2, κ_1)
  \\
  \smallstep[\AppET]
  (x, ρ_3, μ_2, \StopF)
  \smallstep[\LookupT(i)]
  (\Lam{x}{x}, ρ_2, μ_2, \UpdateF(\pa_1) \pushF \StopF)
  \smallstep[\UpdateT]
  (\Lam{x}{x}, ρ_2, μ_2, \StopF)
  \end{array} \\ \notag
  &\qquad\text{where } \arraycolsep1pt \begin{array}{ll}
    ρ_1 = [i ↦ \pa_1], & μ_1 = [\pa_1 ↦ (i, ρ_1, (\Lam{y}{\Lam{x}{x}})~i)], ρ_3 = [i ↦ \pa_1, y ↦ \pa_1, x ↦ \pa_1] \\
    ρ_2 = [i ↦ \pa_1, y ↦ \pa_1], & μ_2 = [\pa_1 ↦ (i,ρ_2,\Lam{x}{x})], κ_1 = \ApplyF(\pa_1) \pushF \StopF, κ_2 = \UpdateF (\pa_1) \pushF κ_1
  \end{array} \notag
\end{align}

\section{A Denotational Interpreter}
\label{sec:interp}

In this section, we present a generic \emph{denotational interpreter}%
\footnote{This term was coined by \citet{Might:10}. We find it fitting,
because a denotational interpreter is both a \emph{denotational
semantics}~\citep{ScottStrachey:71} as well as a total \emph{definitional
interpreter}~\citep{Reynolds:72}.}
for a functional language which we instantiate with different semantic domains.
The choice of semantic domain determines the \emph{evaluation strategy}
(call-by-name, call-by-value, call-by-need) and the degree to which
\emph{operational detail} can be observed.
Other semantic domains give rise to useful \emph{summary-based} static
analyses such as usage analysis in \Cref{sec:abstraction}.
The major contribution of our framework is that the derived summary-based
analyses may observe operational detail in an intuitive and semantically
meaningful way.
Adhering to our design pattern pays off in that it enables sharing of soundness
proofs, thus drastically simplifying the soundness proof obligation per derived
analysis (\Cref{sec:soundness}).

Denotational interpreters can be implemented in any higher-order language such as OCaml, Scheme or Java with explicit thunks, but we picked Haskell for convenience.%
\footnote{We extract from this document runnable Haskell files which we add as a Supplement, containing the complete definitions. Furthermore, the (terminating) interpreter outputs are directly generated from this extract.}

\begin{figure}
\begin{minipage}{0.49\textwidth}
\begin{hscode}\SaveRestoreHook
\column{B}{@{}>{\hspre}l<{\hspost}@{}}%
\column{3}{@{}>{\hspre}c<{\hspost}@{}}%
\column{3E}{@{}l@{}}%
\column{6}{@{}>{\hspre}l<{\hspost}@{}}%
\column{E}{@{}>{\hspre}l<{\hspost}@{}}%
\>[B]{}\keyword{data}\;\conid{Exp}{}\<[E]%
\\
\>[B]{}\hsindent{3}{}\<[3]%
\>[3]{}\mathrel{=}{}\<[3E]%
\>[6]{}\conid{Var}\;\conid{Name}\mid \conid{Let}\;\conid{Name}\;\conid{Exp}\;\conid{Exp}{}\<[E]%
\\
\>[B]{}\hsindent{3}{}\<[3]%
\>[3]{}\mid {}\<[3E]%
\>[6]{}\conid{Lam}\;\conid{Name}\;\conid{Exp}\mid \conid{App}\;\conid{Exp}\;\conid{Name}{}\<[E]%
\\
\>[B]{}\hsindent{3}{}\<[3]%
\>[3]{}\mid {}\<[3E]%
\>[6]{}\conid{ConApp}\;\conid{Tag}\;[\mskip1.5mu \conid{Name}\mskip1.5mu]\mid \conid{Case}\;\conid{Exp}\;\conid{Alts}{}\<[E]%
\\
\>[B]{}\keyword{type}\;\conid{Name}\mathrel{=}\conid{String}{}\<[E]%
\\
\>[B]{}\keyword{type}\;\conid{Alts}\mathrel{=}\conid{Tag}\mathbin{:\rightharpoonup}([\mskip1.5mu \conid{Name}\mskip1.5mu],\conid{Exp}){}\<[E]%
\\
\>[B]{}\keyword{data}\;\conid{Tag}\mathrel{=}\mathbin{...};\varid{conArity}\mathbin{::}\conid{Tag}\to \conid{Int}{}\<[E]%
\ColumnHook
\end{hscode}\resethooks
\caption{Syntax}
\label{fig:syntax}
\end{minipage}%
\begin{minipage}{0.51\textwidth}
\begin{hscode}\SaveRestoreHook
\column{B}{@{}>{\hspre}l<{\hspost}@{}}%
\column{16}{@{}>{\hspre}l<{\hspost}@{}}%
\column{E}{@{}>{\hspre}l<{\hspost}@{}}%
\>[B]{}\keyword{type}\;(\mathbin{:\rightharpoonup})\mathrel{=}\conid{Map};\varcolor{\varepsilon}\mathbin{::}\conid{Ord}\;\varid{k}\Rightarrow \varid{k}\mathbin{:\rightharpoonup}\varid{v}{}\<[E]%
\\
\>[B]{}\wild[\wild\mapsto\wild]\mathbin{::}\conid{Ord}\;\varid{k}\Rightarrow (\varid{k}\mathbin{:\rightharpoonup}\varid{v})\to \varid{k}\to \varid{v}\to (\varid{k}\mathbin{:\rightharpoonup}\varid{v}){}\<[E]%
\\
\>[B]{}\wild[\many{\wild\mapsto\wild}]\mathbin{::}\conid{Ord}\;\varid{k}{}\<[16]%
\>[16]{}\Rightarrow (\varid{k}\mathbin{:\rightharpoonup}\varid{v})\to [\mskip1.5mu \varid{k}\mskip1.5mu]\to [\mskip1.5mu \varid{v}\mskip1.5mu]{}\<[E]%
\\
\>[16]{}\to (\varid{k}\mathbin{:\rightharpoonup}\varid{v}){}\<[E]%
\\
\>[B]{}(\mathop{!})\mathbin{::}\conid{Ord}\;\varid{k}\Rightarrow (\varid{k}\mathbin{:\rightharpoonup}\varid{v})\to \varid{k}\to \varid{v}{}\<[E]%
\\
\>[B]{}\varid{dom}\mathbin{::}\conid{Ord}\;\varid{k}\Rightarrow (\varid{k}\mathbin{:\rightharpoonup}\varid{v})\to \conid{Set}\;\varid{k}{}\<[E]%
\\
\>[B]{}(\in )\mathbin{::}\conid{Ord}\;\varid{k}\Rightarrow \varid{k}\to \conid{Set}\;\varid{k}\to \conid{Bool}{}\<[E]%
\\
\>[B]{}(\mathbin{\lhd})\mathbin{::}(\varid{b}\to \varid{c})\to (\varid{a}\mathbin{:\rightharpoonup}\varid{b})\to (\varid{a}\mathbin{:\rightharpoonup}\varid{c}){}\<[E]%
\\
\>[B]{}\varid{assocs}\mathbin{::}(\varid{k}\mathbin{:\rightharpoonup}\varid{v})\to [\mskip1.5mu (\varid{k},\varid{v})\mskip1.5mu]{}\<[E]%
\ColumnHook
\end{hscode}\resethooks
\caption{Environments}
\label{fig:map}
\end{minipage}
\end{figure}

\subsection{Semantic Domain} \label{sec:dna}

Just as traditional denotational semantics, denotational interpreters
assign meaning to programs in some \emph{semantic domain}.
Traditionally, the semantic domain \ensuremath{\conid{D}} comprises \emph{semantic values} such as
base values (integers, strings, etc.) and functions \ensuremath{\conid{D}\to \conid{D}}.
One of the main features of these semantic domains is that they lack
\emph{operational}, or, \emph{intensional detail} that is unnecessary to
assigning each observationally distinct expression a distinct meaning.
For example, it is not possible to observe evaluation cardinality, which
is the whole point of analyses such as usage analysis (\Cref{sec:abstraction}).

A distinctive feature of our work is that our semantic domains are instead
\emph{traces} that describe the \emph{steps} taken by an abstract machine, and
that \emph{end} in semantic values.
It is possible to describe usage cardinality as a property of the traces
thus generated, as required for a soundness proof of usage analysis.
We choose \ensuremath{\conid{D}_{\mathbf{na}}}, defined below, as the first example of such a semantic domain,
because it is simple and illustrative of the approach.
Instantiated at \ensuremath{\conid{D}_{\mathbf{na}}}, our generic interpreter will produce precisely the
traces of the by-\textbf{\textrm{na}}me variant of the Krivine machine in
\Cref{fig:lk-semantics}.%
\footnote{For a realistic implementation, we would define \ensuremath{\conid{D}} as a \ensuremath{\keyword{newtype}} to
keep type class resolution decidable and non-overlapping. We will however stick
to a \ensuremath{\keyword{type}} synonym in this presentation in order to elide noisy wrapping and
unwrapping of constructors.}

\begin{minipage}{0.62\textwidth}
\begin{hscode}\SaveRestoreHook
\column{B}{@{}>{\hspre}l<{\hspost}@{}}%
\column{20}{@{}>{\hspre}c<{\hspost}@{}}%
\column{20E}{@{}l@{}}%
\column{23}{@{}>{\hspre}l<{\hspost}@{}}%
\column{27}{@{}>{\hspre}l<{\hspost}@{}}%
\column{E}{@{}>{\hspre}l<{\hspost}@{}}%
\>[B]{}\keyword{type}\;\conid{D}\;\varcolor{\tau}\mathrel{=}\varcolor{\tau}\;(\conid{Value}\;\varcolor{\tau});{}\<[27]%
\>[27]{}\keyword{type}\;\conid{D}_{\mathbf{na}}\mathrel{=}\conid{D}\;\conid{T}{}\<[E]%
\\
\>[B]{}\keyword{data}\;\conid{T}\;\varid{v}\mathrel{=}\conid{Step}\;\conid{Event}\;(\conid{T}\;\varid{v})\mid \conid{Ret}\;\varid{v}{}\<[E]%
\\
\>[B]{}\keyword{data}\;\conid{Event}{}\<[20]%
\>[20]{}\mathrel{=}{}\<[20E]%
\>[23]{}\conid{Look}\;\conid{Name}\mid \conid{Upd}\mid \conid{App}_{1}\mid \conid{App}_{2}{}\<[E]%
\\
\>[20]{}\mid {}\<[20E]%
\>[23]{}\conid{Let}_{1}\mid \conid{Case}_{1}\mid \conid{Case}_{2}{}\<[E]%
\\
\>[B]{}\keyword{data}\;\conid{Value}\;\varcolor{\tau}\mathrel{=}\conid{Stuck}\mid \conid{Fun}\;(\conid{D}\;\varcolor{\tau}\to \conid{D}\;\varcolor{\tau})\mid \conid{Con}\;\conid{Tag}\;[\mskip1.5mu \conid{D}\;\varcolor{\tau}\mskip1.5mu]{}\<[E]%
\ColumnHook
\end{hscode}\resethooks
\end{minipage}
\begin{minipage}{0.38\textwidth}
\begin{hscode}\SaveRestoreHook
\column{B}{@{}>{\hspre}l<{\hspost}@{}}%
\column{3}{@{}>{\hspre}l<{\hspost}@{}}%
\column{E}{@{}>{\hspre}l<{\hspost}@{}}%
\>[B]{}\keyword{instance}\;\conid{Monad}\;\conid{T}\;\keyword{where}{}\<[E]%
\\
\>[B]{}\hsindent{3}{}\<[3]%
\>[3]{}\varid{return}\;\varid{v}\mathrel{=}\conid{Ret}\;\varid{v}{}\<[E]%
\\
\>[B]{}\hsindent{3}{}\<[3]%
\>[3]{}\conid{Ret}\;\varid{v}\bind \varid{k}\mathrel{=}\varid{k}\;\varid{v}{}\<[E]%
\\
\>[B]{}\hsindent{3}{}\<[3]%
\>[3]{}\conid{Step}\;\varid{e}\;\varcolor{\tau}\bind \varid{k}\mathrel{=}\conid{Step}\;\varid{e}\;(\varcolor{\tau}\bind \varid{k}){}\<[E]%
\ColumnHook
\end{hscode}\resethooks
\end{minipage}
\noindent
A trace \ensuremath{\conid{T}} either returns a value (\ensuremath{\conid{Ret}}) or makes a small-step transition (\ensuremath{\conid{Step}}).
Each step \ensuremath{\conid{Step}\;\varid{ev}\;\varid{rest}} is decorated with an event \ensuremath{\varid{ev}}, which describes what happens in that step.
For example, event \ensuremath{\conid{Look}\;\varid{x}} describes the lookup of variable \ensuremath{\varid{x}\mathbin{::}\conid{Name}} in the environment.
Note that the choice of \ensuremath{\conid{Event}} is use-case (\ie analysis) specific and suggests
a spectrum of intensionality, with \ensuremath{\keyword{data}\;\conid{Event}\mathrel{=}\conid{Unit}} on the more abstract end
of the spectrum and arbitrary syntactic detail attached to each of \ensuremath{\conid{Event}}'s
constructors at the intensional end of the spectrum.%
\footnote{If our language had facilities for input/output and more general
side-effects, we could have started from a more elaborate trace construction
such as (guarded) interaction trees~\citep{interaction-trees,gitrees}.}

A trace in \ensuremath{\conid{D}_{\mathbf{na}}\mathrel{=}\conid{T}\;(\conid{Value}\;\conid{T})} eventually terminates with a \ensuremath{\conid{Value}} that is
either stuck (\ensuremath{\conid{Stuck}}), a function waiting to be applied to a domain value
(\ensuremath{\conid{Fun}}), or a constructor application giving the denotations of its
fields (\ensuremath{\conid{Con}}).
We postpone worries about well-definedness and totality of this encoding to
\Cref{sec:totality}.

\begin{figure}
\begin{minipage}{0.55\textwidth}
\begin{hscode}\SaveRestoreHook
\column{B}{@{}>{\hspre}l<{\hspost}@{}}%
\column{3}{@{}>{\hspre}l<{\hspost}@{}}%
\column{5}{@{}>{\hspre}l<{\hspost}@{}}%
\column{7}{@{}>{\hspre}c<{\hspost}@{}}%
\column{7E}{@{}l@{}}%
\column{8}{@{}>{\hspre}l<{\hspost}@{}}%
\column{10}{@{}>{\hspre}l<{\hspost}@{}}%
\column{11}{@{}>{\hspre}l<{\hspost}@{}}%
\column{12}{@{}>{\hspre}l<{\hspost}@{}}%
\column{16}{@{}>{\hspre}l<{\hspost}@{}}%
\column{23}{@{}>{\hspre}l<{\hspost}@{}}%
\column{25}{@{}>{\hspre}l<{\hspost}@{}}%
\column{26}{@{}>{\hspre}c<{\hspost}@{}}%
\column{26E}{@{}l@{}}%
\column{29}{@{}>{\hspre}l<{\hspost}@{}}%
\column{E}{@{}>{\hspre}l<{\hspost}@{}}%
\>[B]{}\mathcal{S}\denot{\wild}_{\wild}{}\<[7]%
\>[7]{}\mathbin{::}{}\<[7E]%
\>[11]{}(\conid{Trace}\;\varid{d},\conid{Domain}\;\varid{d},\conid{HasBind}\;\varid{d}){}\<[E]%
\\
\>[7]{}\Rightarrow {}\<[7E]%
\>[11]{}\conid{Exp}\to (\conid{Name}\mathbin{:\rightharpoonup}\varid{d})\to \varid{d}{}\<[E]%
\\
\>[B]{}\mathcal{S}\denot{\varid{e}}_{\varcolor{\rho}}\mathrel{=}\keyword{case}\;\varid{e}\;\keyword{of}{}\<[E]%
\\
\>[B]{}\hsindent{3}{}\<[3]%
\>[3]{}\conid{Var}\;\varid{x}{}\<[10]%
\>[10]{}\mid \varid{x}\in \varid{dom}\;\varcolor{\rho}{}\<[23]%
\>[23]{}\to \varcolor{\rho}\mathop{!}\varid{x}{}\<[E]%
\\
\>[10]{}\mid \varid{otherwise}{}\<[23]%
\>[23]{}\to \varid{stuck}{}\<[E]%
\\
\>[B]{}\hsindent{3}{}\<[3]%
\>[3]{}\conid{Lam}\;\varid{x}\;\varid{body}\to \varid{fun}\;\varid{x}\; \mathbin{\$}\lambda \varid{d}\to {}\<[E]%
\\
\>[3]{}\hsindent{2}{}\<[5]%
\>[5]{}\varid{step}\;\conid{App}_{2}\;(\mathcal{S}\denot{\varid{body}}_{(\varcolor{\rho}[\varid{x}\mapsto\varid{d}])}){}\<[E]%
\\
\>[B]{}\hsindent{3}{}\<[3]%
\>[3]{}\conid{App}\;\varid{e}\;\varid{x}{}\<[12]%
\>[12]{}\mid \varid{x}\in \varid{dom}\;\varcolor{\rho}{}\<[25]%
\>[25]{}\to \varid{step}\;\conid{App}_{1}\mathbin{\$}{}\<[E]%
\\
\>[12]{}\hsindent{4}{}\<[16]%
\>[16]{}\varid{apply}\;(\mathcal{S}\denot{\varid{e}}_{\varcolor{\rho}})\;(\varcolor{\rho}\mathop{!}\varid{x}){}\<[E]%
\\
\>[12]{}\mid \varid{otherwise}{}\<[25]%
\>[25]{}\to \varid{stuck}{}\<[E]%
\\
\>[B]{}\hsindent{3}{}\<[3]%
\>[3]{}\conid{Let}\;\varid{x}\;\varid{e}_{1}\;\varid{e}_{2}\to \varid{bind}\; {}\<[E]%
\\
\>[3]{}\hsindent{2}{}\<[5]%
\>[5]{}(\lambda \varid{d}_{1}\to \mathcal{S}\denot{\varid{e}_{1}}_{\varcolor{\rho}[\varid{x}\mapsto\varid{step}\;(\conid{Look}\;\varid{x})\;\varid{d}_{1}]}){}\<[E]%
\\
\>[3]{}\hsindent{2}{}\<[5]%
\>[5]{}(\lambda \varid{d}_{1}\to \varid{step}\;\conid{Let}_{1}\;(\mathcal{S}\denot{\varid{e}_{2}}_{\varcolor{\rho}[\varid{x}\mapsto\varid{step}\;(\conid{Look}\;\varid{x})\;\varid{d}_{1}]})){}\<[E]%
\\
\>[B]{}\hsindent{3}{}\<[3]%
\>[3]{}\conid{ConApp}\;\varid{k}\;\varid{xs}{}\<[E]%
\\
\>[3]{}\hsindent{2}{}\<[5]%
\>[5]{}\mid \varid{all}\;(\in \varid{dom}\;\varcolor{\rho})\;\varid{xs},\varid{length}\;\varid{xs}\mathrel{\clipbox{2.5pt 0pt}{$==$}}\varid{conArity}\;\varid{k}{}\<[E]%
\\
\>[3]{}\hsindent{2}{}\<[5]%
\>[5]{}\to \varid{con}\; \varid{k}\;(\varid{map}\;(\varcolor{\rho}\mathop{!})\;\varid{xs}){}\<[E]%
\\
\>[3]{}\hsindent{2}{}\<[5]%
\>[5]{}\mid \varid{otherwise}{}\<[E]%
\\
\>[3]{}\hsindent{2}{}\<[5]%
\>[5]{}\to \varid{stuck}{}\<[E]%
\\
\>[B]{}\hsindent{3}{}\<[3]%
\>[3]{}\conid{Case}\;\varid{e}\;\varid{alts}\to \varid{step}\;\conid{Case}_{1}\mathbin{\$}{}\<[E]%
\\
\>[3]{}\hsindent{2}{}\<[5]%
\>[5]{}\varid{select}\;(\mathcal{S}\denot{\varid{e}}_{\varcolor{\rho}})\;(\varid{cont}\mathbin{\lhd}\varid{alts}){}\<[E]%
\\
\>[3]{}\hsindent{2}{}\<[5]%
\>[5]{}\keyword{where}{}\<[E]%
\\
\>[5]{}\hsindent{3}{}\<[8]%
\>[8]{}\varid{cont}\;(\varid{xs},\varcolor{e}_r)\;\varid{ds}{}\<[26]%
\>[26]{}\mid {}\<[26E]%
\>[29]{}\varid{length}\;\varid{xs}\mathrel{\clipbox{2.5pt 0pt}{$==$}}\varid{length}\;\varid{ds}{}\<[E]%
\\
\>[26]{}\mathrel{=}{}\<[26E]%
\>[29]{}\varid{step}\;\conid{Case}_{2}\;(\mathcal{S}\denot{\varcolor{e}_r}_{\varcolor{\rho}[\many{\varid{xs}\mapsto\varid{ds}}]}){}\<[E]%
\\
\>[26]{}\mid {}\<[26E]%
\>[29]{}\varid{otherwise}{}\<[E]%
\\
\>[26]{}\mathrel{=}{}\<[26E]%
\>[29]{}\varid{stuck}{}\<[E]%
\ColumnHook
\end{hscode}\resethooks
\end{minipage}%
\begin{minipage}{0.44\textwidth}
\begin{hscode}\SaveRestoreHook
\column{B}{@{}>{\hspre}l<{\hspost}@{}}%
\column{3}{@{}>{\hspre}l<{\hspost}@{}}%
\column{43}{@{}>{\hspre}l<{\hspost}@{}}%
\column{E}{@{}>{\hspre}l<{\hspost}@{}}%
\>[B]{}\keyword{class}\;\conid{Trace}\;\varid{d}\;\keyword{where}{}\<[E]%
\\
\>[B]{}\hsindent{3}{}\<[3]%
\>[3]{}\varid{step}\mathbin{::}\conid{Event}\to \varid{d}\to \varid{d}{}\<[E]%
\\[\blanklineskip]%
\>[B]{}\keyword{class}\;\conid{Domain}\;\varid{d}\;\keyword{where}{}\<[E]%
\\
\>[B]{}\hsindent{3}{}\<[3]%
\>[3]{}\varid{stuck}\mathbin{::}\varid{d}{}\<[E]%
\\
\>[B]{}\hsindent{3}{}\<[3]%
\>[3]{}\varid{fun}\mathbin{::}\conid{Name}\to  (\varid{d}\to \varid{d})\to \varid{d}{}\<[E]%
\\
\>[B]{}\hsindent{3}{}\<[3]%
\>[3]{}\varid{apply}\mathbin{::}\varid{d}\to \varid{d}\to \varid{d}{}\<[E]%
\\
\>[B]{}\hsindent{3}{}\<[3]%
\>[3]{}\varid{con}\mathbin{::} \conid{Tag}\to [\mskip1.5mu \varid{d}\mskip1.5mu]\to \varid{d}{}\<[E]%
\\
\>[B]{}\hsindent{3}{}\<[3]%
\>[3]{}\varid{select}\mathbin{::}\varid{d}\to (\conid{Tag}\mathbin{:\rightharpoonup}([\mskip1.5mu \varid{d}\mskip1.5mu]\to \varid{d}))\to {}\<[43]%
\>[43]{}\varid{d}{}\<[E]%
\\[\blanklineskip]%
\>[B]{}\keyword{class}\;\conid{HasBind}\;\varid{d}\;\keyword{where}{}\<[E]%
\\
\>[B]{}\hsindent{3}{}\<[3]%
\>[3]{}\varid{bind}\mathbin{::} (\varid{d}\to \varid{d})\to (\varid{d}\to \varid{d})\to \varid{d}{}\<[E]%
\ColumnHook
\end{hscode}\resethooks
\\[-2.5em]
\subcaption{Interface of traces and values}
  \label{fig:trace-classes}
\begin{hscode}\SaveRestoreHook
\column{B}{@{}>{\hspre}l<{\hspost}@{}}%
\column{3}{@{}>{\hspre}l<{\hspost}@{}}%
\column{5}{@{}>{\hspre}l<{\hspost}@{}}%
\column{10}{@{}>{\hspre}l<{\hspost}@{}}%
\column{30}{@{}>{\hspre}l<{\hspost}@{}}%
\column{E}{@{}>{\hspre}l<{\hspost}@{}}%
\>[B]{}\keyword{instance}\;\conid{Trace}\;(\conid{T}\;\varid{v})\;\keyword{where}{}\<[E]%
\\
\>[B]{}\hsindent{3}{}\<[3]%
\>[3]{}\varid{step}\mathrel{=}\conid{Step}{}\<[E]%
\\[\blanklineskip]%
\>[B]{}\keyword{instance}\;\conid{Monad}\;\varcolor{\tau}\Rightarrow \conid{Domain}\;(\conid{D}\;\varcolor{\tau})\;\keyword{where}{}\<[E]%
\\
\>[B]{}\hsindent{3}{}\<[3]%
\>[3]{}\varid{stuck}\mathrel{=}\varid{return}\;\conid{Stuck}{}\<[E]%
\\
\>[B]{}\hsindent{3}{}\<[3]%
\>[3]{}\varid{fun}\;\anonymous \; \varid{f}\mathrel{=}\varid{return}\;(\conid{Fun}\;\varid{f}){}\<[E]%
\\
\>[B]{}\hsindent{3}{}\<[3]%
\>[3]{}\varid{apply}\;{}\<[10]%
\>[10]{}\varid{d}\;\varid{a}\mathrel{=}\varid{d}\bind \lambda \varid{v}\to \keyword{case}\;\varid{v}\;\keyword{of}{}\<[E]%
\\
\>[3]{}\hsindent{2}{}\<[5]%
\>[5]{}\conid{Fun}\;\varid{f}\to \varid{f}\;\varid{a};\anonymous \to \varid{stuck}{}\<[E]%
\\
\>[B]{}\hsindent{3}{}\<[3]%
\>[3]{}\varid{con}\; \varid{k}\;\varid{ds}\mathrel{=}\varid{return}\;(\conid{Con}\;\varid{k}\;\varid{ds}){}\<[E]%
\\
\>[B]{}\hsindent{3}{}\<[3]%
\>[3]{}\varid{select}\;\varid{dv}\;\varid{alts}\mathrel{=}\varid{dv}\bind \lambda \varid{v}\to \keyword{case}\;\varid{v}\;\keyword{of}{}\<[E]%
\\
\>[3]{}\hsindent{2}{}\<[5]%
\>[5]{}\conid{Con}\;\varid{k}\;\varid{ds}\mid \varid{k}\in \varid{dom}\;\varid{alts}{}\<[30]%
\>[30]{}\to (\varid{alts}\mathop{!}\varid{k})\;\varid{ds}{}\<[E]%
\\
\>[3]{}\hsindent{2}{}\<[5]%
\>[5]{}\anonymous {}\<[30]%
\>[30]{}\to \varid{stuck}{}\<[E]%
\\[\blanklineskip]%
\>[B]{}\keyword{instance}\;\conid{HasBind}\;\conid{D}_{\mathbf{na}}\;\keyword{where}{}\<[E]%
\\
\>[B]{}\hsindent{3}{}\<[3]%
\>[3]{}\varid{bind}\;\varid{rhs}\;\varid{body}\mathrel{=}\keyword{let}\;\varid{d}\mathrel{=}\varid{rhs}\;\varid{d}\;\keyword{in}\;\varid{body}\;\varid{d}{}\<[E]%
\ColumnHook
\end{hscode}\resethooks
\\[-2.5em]
\subcaption{Concrete by-name semantics for \ensuremath{\conid{D}_{\mathbf{na}}}}
  \label{fig:trace-instances}
\end{minipage}%
\\[-0.5em]
\caption{Denotational Interpreter}
  \label{fig:eval}
\end{figure}

\subsection{The Interpreter}

Traditionally, a denotational semantics is expressed as a mathematical function,
often written \ensuremath{\denot{\varid{e}}_{\varcolor{\rho}}}, to give an expression \ensuremath{\varid{e}\mathbin{::}\conid{Exp}} a meaning, or
\emph{denotation}, in terms of some semantic domain \ensuremath{\conid{D}}.
The environment \ensuremath{\varcolor{\rho}\mathbin{::}\conid{Name}\mathbin{:\rightharpoonup}\conid{D}} gives meaning to the free variables of \ensuremath{\varid{e}},
by mapping each free variable to its denotation in \ensuremath{\conid{D}}.
We sketch the Haskell encoding of \ensuremath{\conid{Exp}} in \Cref{fig:syntax} and the API of
environments and sets in \Cref{fig:map}.
For concise notation, we will use a small number of infix operators: \ensuremath{(\mathbin{:\rightharpoonup})} as
a synonym for finite \ensuremath{\conid{Map}}s, with \ensuremath{\varid{m}\mathop{!}\varid{x}} for looking up \ensuremath{\varid{x}} in \ensuremath{\varid{m}}, \ensuremath{\varcolor{\varepsilon}} for
the empty map, \ensuremath{\varid{m}[\varid{x}\mapsto\varid{d}]} for updates, \ensuremath{\varid{assocs}\;\varid{m}} for a list of key-value pairs
in \ensuremath{\varid{m}}, \ensuremath{\varid{f}\mathbin{\lhd}\varid{m}} for mapping \ensuremath{\varid{f}} over every value in \ensuremath{\varid{m}}, \ensuremath{\varid{dom}\;\varid{m}} for the set of
keys present in the map, and \ensuremath{(\in )} for membership tests in that set.

Our denotational interpreter \ensuremath{\mathcal{S}\denot{\wild}_{\wild}\mathbin{::}\conid{Exp}\to (\conid{Name}\mathbin{:\rightharpoonup}\conid{D}_{\mathbf{na}})\to \conid{D}_{\mathbf{na}}} can
have a similar type as \ensuremath{\denot{\wild}_{\wild}}.
However, to derive both dynamic semantics and static analysis as instances of the same
generic interpreter \ensuremath{\mathcal{S}\denot{\wild}_{\wild}}, we need to vary the type of its semantic domain,
which is naturally expressed using type class overloading, thus:
\[
\ensuremath{\mathcal{S}\denot{\wild}_{\wild}\mathbin{::}(\conid{Trace}\;\varid{d},\conid{Domain}\;\varid{d},\conid{HasBind}\;\varid{d})\Rightarrow \conid{Exp}\to (\conid{Name}\mathbin{:\rightharpoonup}\varid{d})\to \varid{d}}.
\]
We have parameterised the semantic domain \ensuremath{\varid{d}} over three type classes \ensuremath{\conid{Trace}}, \ensuremath{\conid{Domain}} and \ensuremath{\conid{HasBind}}, whose signatures are given in \Cref{fig:trace-classes}.
Each of the three type classes offer knobs that we will tweak to derive
different evaluation strategies as well as static analyses.

\Cref{fig:eval} gives the complete definition of \ensuremath{\mathcal{S}\denot{\wild}_{\wild}} together with instances for domain \ensuremath{\conid{D}_{\mathbf{na}}} that we introduced in \Cref{sec:dna}.
Together this is enough to actually run the denotational interpreter to produce traces.
We use \ensuremath{\varid{read}\mathbin{::}\conid{String}\to \conid{Exp}} as a parsing function and a \ensuremath{\conid{Show}} instance for
\ensuremath{\conid{D}\;\varcolor{\tau}} that displays traces.
For example, we can evaluate the expression $\Let{i}{\Lam{x}{x}}{i~i}$ like
this:
\begin{hscode}\SaveRestoreHook
\column{B}{@{}>{\hspre}l<{\hspost}@{}}%
\column{3}{@{}>{\hspre}l<{\hspost}@{}}%
\column{E}{@{}>{\hspre}l<{\hspost}@{}}%
\>[3]{}\lambda\!\!\mathbin{>}\mathcal{S}\denot{\varid{read}\;\text{\ttfamily \char34 let~i~=~λx.x~in~i~i\char34}}_{\varcolor{\varepsilon}}\mathbin{::}\conid{D}_{\mathbf{na}}{}\<[E]%
\ColumnHook
\end{hscode}\resethooks
$
\LetIT\xhookrightarrow{\hspace{1.1ex}}\AppIT\xhookrightarrow{\hspace{1.1ex}}\LookupT(i)\xhookrightarrow{\hspace{1.1ex}}\AppET\xhookrightarrow{\hspace{1.1ex}}\LookupT(i)\xhookrightarrow{\hspace{1.1ex}}\langle \lambda\rangle 
$,
\\[\belowdisplayskip]
\noindent
where $\langle\lambda\rangle$
means that the trace ends in a \ensuremath{\conid{Fun}} value.
We cannot generally print \ensuremath{\conid{D}_{\mathbf{na}}} or \ensuremath{\conid{Fun}}ctions thereof, but in this case the result would be the value $\Lam{x}{x}$.
This is in direct correspondence to the earlier call-by-name small-step trace
\labelcref{ex:trace} in \Cref{sec:op-sem}.

The definition of \ensuremath{\mathcal{S}\denot{\wild}_{\wild}}, given in \Cref{fig:eval}, is by structural recursion over the input expression.
For example, to get the denotation of \ensuremath{\conid{Lam}\;\varid{x}\;\varid{body}}, we must recursively invoke \ensuremath{\mathcal{S}\denot{\wild}_{\wild}} on \ensuremath{\varid{body}}, extending the environment to bind \ensuremath{\varid{x}} to its denotation.
We wrap that body denotation in \ensuremath{\varid{step}\;\conid{App}_{2}}, to prefix the trace of \ensuremath{\varid{body}} with an \ensuremath{\conid{App}_{2}} event whenever the function is invoked, where \ensuremath{\varid{step}} is a method of class \ensuremath{\conid{Trace}}.
Finally, we use \ensuremath{\varid{fun}} to build the returned denotation; the details necessarily depend on the \ensuremath{\conid{Domain}}, so \ensuremath{\varid{fun}} is a method of class \ensuremath{\conid{Domain}}.
While the lambda-bound \ensuremath{\varid{x}\mathbin{::}\conid{Name}} passed to \ensuremath{\varid{fun}} is ignored in the
\ensuremath{\conid{Domain}\;\conid{D}_{\mathbf{na}}} instance of the concrete by-name semantics, it is useful for
abstract domains such as that of usage analysis (\Cref{sec:abstraction}).
(We refrain from passing field binders or other syntactic tokens in \ensuremath{\varid{select}}
and let binders in \ensuremath{\varid{bind}} as well, because the analyses considered do not need
them.)
The other cases follow a similar pattern; they each do some work, before handing
off to type class methods to do the domain-specific work.

The \ensuremath{\conid{HasBind}} type class defines a particular \emph{evaluation strategy}, as we shall see in
\Cref{sec:evaluation-strategies}.
The \ensuremath{\varid{bind}} method of \ensuremath{\conid{HasBind}} is used to give meaning to recursive let
bindings:
it takes two functionals
for building the denotation of the right-hand side and that of the let body,
given a denotation for the right-hand side.
The concrete implementation for \ensuremath{\varid{bind}} given in \Cref{fig:trace-instances}
computes a \ensuremath{\varid{d}} such that \ensuremath{\varid{d}\mathrel{=}\varid{rhs}\;\varid{d}} and passes the recursively-defined \ensuremath{\varid{d}} to
\ensuremath{\varid{body}}.%
\footnote{Such a \ensuremath{\varid{d}} corresponds to the \emph{guarded fixpoint} of \ensuremath{\varid{rhs}}.
Strict languages can define this fixpoint as \ensuremath{\varid{d}\;()\mathrel{=}\varid{rhs}\;(\varid{d}\;())}.}
Doing so yields a call-by-name evaluation strategy, because the trace \ensuremath{\varid{d}}
will be unfolded at every occurrence of \ensuremath{\varid{x}} in the right-hand side \ensuremath{\varid{e}_{1}}.
We will shortly see examples of eager evaluation strategies that will yield from
\ensuremath{\varid{d}} inside \ensuremath{\varid{bind}} instead of calling \ensuremath{\varid{body}} immediately.

We conclude this subsection with a few examples.
First we demonstrate that our interpreter is \emph{productive}:
we can observe prefixes of diverging traces without risking a looping
interpreter.
To observe prefixes, we use a function \ensuremath{\varid{takeT}\mathbin{::}\conid{Int}\to \conid{T}\;\varid{v}\to \conid{T}\;(\conid{Maybe}\;\varid{v})}:
\ensuremath{\varid{takeT}\;\varid{n}\;\varcolor{\tau}} returns the first \ensuremath{\varid{n}} steps of \ensuremath{\varcolor{\tau}} and replaces the final value
with \ensuremath{\conid{Nothing}} (printed as $...$) if it goes on for longer.
\begin{hscode}\SaveRestoreHook
\column{B}{@{}>{\hspre}l<{\hspost}@{}}%
\column{3}{@{}>{\hspre}l<{\hspost}@{}}%
\column{E}{@{}>{\hspre}l<{\hspost}@{}}%
\>[3]{}\lambda\!\!\mathbin{>}\varid{takeT}\;\mathrm{5}\mathbin{\$}\mathcal{S}\denot{\varid{read}\;\text{\ttfamily \char34 let~x~=~x~in~x\char34}}_{\varcolor{\varepsilon}}\mathbin{::}\conid{T}\;(\conid{Maybe}\;(\conid{Value}\;\conid{T})){}\<[E]%
\ColumnHook
\end{hscode}\resethooks
$
\LetIT\xhookrightarrow{\hspace{1.1ex}}\LookupT(x)\xhookrightarrow{\hspace{1.1ex}}\LookupT(x)\xhookrightarrow{\hspace{1.1ex}}\LookupT(x)\xhookrightarrow{\hspace{1.1ex}}\LookupT(x)\xhookrightarrow{\hspace{1.1ex}}...
$
\begin{hscode}\SaveRestoreHook
\column{B}{@{}>{\hspre}l<{\hspost}@{}}%
\column{3}{@{}>{\hspre}l<{\hspost}@{}}%
\column{E}{@{}>{\hspre}l<{\hspost}@{}}%
\>[3]{}\lambda\!\!\mathbin{>}\varid{takeT}\;\mathrm{9}\mathbin{\$}\mathcal{S}\denot{\varid{read}\;\text{\ttfamily \char34 let~w~=~λy.~y~y~in~w~w\char34}}_{\varcolor{\varepsilon}}\mathbin{::}\conid{T}\;(\conid{Maybe}\;(\conid{Value}\;\conid{T})){}\<[E]%
\ColumnHook
\end{hscode}\resethooks
$
\LetIT\xhookrightarrow{\hspace{1.1ex}}\AppIT\xhookrightarrow{\hspace{1.1ex}}\LookupT(w)\xhookrightarrow{\hspace{1.1ex}}\AppET\xhookrightarrow{\hspace{1.1ex}}\AppIT\xhookrightarrow{\hspace{1.1ex}}\LookupT(w)\xhookrightarrow{\hspace{1.1ex}}\AppET\xhookrightarrow{\hspace{1.1ex}}\AppIT\xhookrightarrow{\hspace{1.1ex}}\LookupT(w)\xhookrightarrow{\hspace{1.1ex}}...
$
\\[\belowdisplayskip]
\noindent
The reason \ensuremath{\mathcal{S}\denot{\wild}_{\wild}} is productive is due to the coinductive nature of \ensuremath{\conid{T}}'s
definition in Haskell.%
\footnote{In a strict language, we need to introduce a thunk in
the definition of \ensuremath{\conid{Step}}, \eg \text{\ttfamily Step~of~event~\char42{}~\char40{}unit~\char45{}\char62{}~\char39{}a~t\char41{}}.}
Productivity requires that the monadic bind operator \ensuremath{(\bind )} for \ensuremath{\conid{T}}
guards the recursion, as in the delay monad of \citet{Capretta:05}.

Data constructor values are printed as $Con(K)$, where $K$ indicates the
\ensuremath{\conid{Tag}}.
Data types allow for interesting ways (type errors) to get \ensuremath{\conid{Stuck}} (\ie the
\textbf{wrong} value of \citet{Milner:78}), printed as $\lightning$:
\begin{hscode}\SaveRestoreHook
\column{B}{@{}>{\hspre}l<{\hspost}@{}}%
\column{3}{@{}>{\hspre}l<{\hspost}@{}}%
\column{E}{@{}>{\hspre}l<{\hspost}@{}}%
\>[3]{}\lambda\!\!\mathbin{>}\mathcal{S}\denot{\varid{read}\;\text{\ttfamily \char34 let~zro~=~Z()~in~let~one~=~S(zro)~in~case~one~of~\char123 ~S(z)~->~z~\char125 \char34}}_{\varcolor{\varepsilon}}\mathbin{::}\conid{D}_{\mathbf{na}}{}\<[E]%
\ColumnHook
\end{hscode}\resethooks
$
\LetIT\xhookrightarrow{\hspace{1.1ex}}\LetIT\xhookrightarrow{\hspace{1.1ex}}\CaseIT\xhookrightarrow{\hspace{1.1ex}}\LookupT(one)\xhookrightarrow{\hspace{1.1ex}}\CaseET\xhookrightarrow{\hspace{1.1ex}}\LookupT(zro)\xhookrightarrow{\hspace{1.1ex}}\langle \mathit{Con}(Z)\rangle 
$
\begin{hscode}\SaveRestoreHook
\column{B}{@{}>{\hspre}l<{\hspost}@{}}%
\column{3}{@{}>{\hspre}l<{\hspost}@{}}%
\column{E}{@{}>{\hspre}l<{\hspost}@{}}%
\>[3]{}\lambda\!\!\mathbin{>}\mathcal{S}\denot{\varid{read}\;\text{\ttfamily \char34 let~zro~=~Z()~in~zro~zro\char34}}_{\varcolor{\varepsilon}}\mathbin{::}\conid{D}_{\mathbf{na}}{}\<[E]%
\ColumnHook
\end{hscode}\resethooks
$
\LetIT\xhookrightarrow{\hspace{1.1ex}}\AppIT\xhookrightarrow{\hspace{1.1ex}}\LookupT(zro)\xhookrightarrow{\hspace{1.1ex}}\langle \lightning\rangle 
$

\subsection{More Evaluation Strategies}
\label{sec:evaluation-strategies}


By varying the \ensuremath{\conid{HasBind}} instance of our type \ensuremath{\conid{D}}, we can endow our language
\ensuremath{\conid{Exp}} with different evaluation strategies.
The appeal of that is, firstly, that it is possible to do so without changing
the interpreter definition, supporting the claim that the denotational
interpreter design pattern is equally suited to model lazy as well as strict
semantics.
More importantly, in order to prove usage analysis sound \wrt by-need evaluation
in \Cref{sec:soundness}, we need to define a semantic domain for call-by-need!
It turns out that the interpreter thus derived is the --- to our knowledge ---
first provably adequate denotational semantics for call-by-need (\Cref{sec:adequacy}).

Although the main body discusses only by-name and by-need semantics,
we provide instances for call-by-value as well as clairvoyant
semantics~\citep{HackettHutton:19} in \Cref{sec:more-eval-strat} as well.

\subsubsection{Call-by-name}

Following a similar approach as~\citet{adi}, we maximise reuse by instantiating
the same \ensuremath{\conid{D}} at different wrappers of \ensuremath{\conid{T}}, rather than reinventing \ensuremath{\conid{Value}} and \ensuremath{\conid{T}}.
Hence we redefine
by-name semantics via the following \ensuremath{\conid{ByName}} newtype wrapper:
\begin{hscode}\SaveRestoreHook
\column{B}{@{}>{\hspre}l<{\hspost}@{}}%
\column{E}{@{}>{\hspre}l<{\hspost}@{}}%
\>[B]{}\mathcal{S}_{\mathbf{name}}\denot{\varid{e}}_{\varcolor{\rho}}\mathrel{=}\mathcal{S}\denot{\varid{e}}_{\varcolor{\rho}}\mathbin{::}\conid{D}\;(\conid{ByName}\;\conid{T}){}\<[E]%
\\
\>[B]{}\keyword{newtype}\;\conid{ByName}\;\varcolor{\tau}\;\varid{v}\mathrel{=}\conid{ByName}\;\{\mskip1.5mu \varid{unByName}\mathbin{::}\varcolor{\tau}\;\varid{v}\mskip1.5mu\}\;\keyword{deriving}\;(\conid{Monad},\conid{Trace}){}\<[E]%
\\
\>[B]{}\keyword{instance}\;\conid{HasBind}\;(\conid{D}\;(\conid{ByName}\;\varcolor{\tau}))\;\keyword{where}\;\varid{bind}\;\varid{rhs}\;\varid{body}\mathrel{=}\keyword{let}\;\varid{d}\mathrel{=}\varid{rhs}\;\varid{d}\;\keyword{in}\;\varid{body}\;\varid{d}{}\<[E]%
\ColumnHook
\end{hscode}\resethooks
We call \ensuremath{\conid{ByName}\;\varcolor{\tau}} a \emph{trace transformer} because it inherits its \ensuremath{\conid{Monad}}
and \ensuremath{\conid{Trace}} instance from \ensuremath{\varcolor{\tau}}, in reminiscence to Galois transformers~\citep{Darais:15}.
The old \ensuremath{\conid{D}_{\mathbf{na}}} can be recovered as \ensuremath{\conid{D}\;(\conid{ByName}\;\conid{T})} and we refer to its
interpreter instance as \ensuremath{\mathcal{S}_{\mathbf{name}}\denot{\varid{e}}_{\varcolor{\rho}}}.

\subsubsection{Call-by-need}
\label{sec:by-need-instance}

\begin{figure}
\begin{hscode}\SaveRestoreHook
\column{B}{@{}>{\hspre}l<{\hspost}@{}}%
\column{3}{@{}>{\hspre}l<{\hspost}@{}}%
\column{7}{@{}>{\hspre}l<{\hspost}@{}}%
\column{10}{@{}>{\hspre}l<{\hspost}@{}}%
\column{21}{@{}>{\hspre}l<{\hspost}@{}}%
\column{25}{@{}>{\hspre}l<{\hspost}@{}}%
\column{57}{@{}>{\hspre}l<{\hspost}@{}}%
\column{65}{@{}>{\hspre}l<{\hspost}@{}}%
\column{E}{@{}>{\hspre}l<{\hspost}@{}}%
\>[B]{}\keyword{type}\;\conid{Addr}\mathrel{=}\conid{Int};\keyword{type}\;\conid{Heap}\;\varcolor{\tau}\mathrel{=}\conid{Addr}\mathbin{:\rightharpoonup}\conid{D}\;\varcolor{\tau};\varid{nextFree}\mathbin{::}\conid{Heap}\;\varcolor{\tau}\to \conid{Addr}{}\<[E]%
\\
\>[B]{}\keyword{newtype}\;\conid{ByNeed}\;\varcolor{\tau}\;\varid{v}\mathrel{=}\conid{ByNeed}\;\{\mskip1.5mu \varid{unByNeed}\mathbin{::}\conid{Heap}\;(\conid{ByNeed}\;\varcolor{\tau})\to \varcolor{\tau}\;(\varid{v},\conid{Heap}\;(\conid{ByNeed}\;\varcolor{\tau}))\mskip1.5mu\}{}\<[E]%
\\[\blanklineskip]%
\>[B]{}\keyword{type}\;\conid{D}_{\mathbf{ne}}\mathrel{=}\conid{D}\;(\conid{ByNeed}\;\conid{T});\keyword{type}\;\conid{Value}_{\mathbf{ne}}\mathrel{=}\conid{Value}\;(\conid{ByNeed}\;\conid{T});\keyword{type}\;\conid{Heap}_{\mathbf{ne}}\mathrel{=}\conid{Heap}\;(\conid{ByNeed}\;\conid{T}){}\<[E]%
\\
\>[B]{}\mathcal{S}_{\mathbf{need}}\denot{\varid{e}}_{\varcolor{\rho}}(\varcolor{\mu})\mathrel{=}\varid{unByNeed}\;(\mathcal{S}\denot{\varid{e}}_{\varcolor{\rho}}\mathbin{::}\conid{D}_{\mathbf{ne}})\;\varcolor{\mu}{}\<[E]%
\\[\blanklineskip]%
\>[B]{}\varid{get}{}\<[7]%
\>[7]{}\mathbin{::}\conid{Monad}\;\varcolor{\tau}\Rightarrow \conid{ByNeed}\;\varcolor{\tau}\;(\conid{Heap}\;(\conid{ByNeed}\;\varcolor{\tau}));{}\<[57]%
\>[57]{}\varid{get}{}\<[65]%
\>[65]{}\mathrel{=}\conid{ByNeed}\;(\lambda \varcolor{\mu}\to \varid{return}\;(\varcolor{\mu},\varcolor{\mu})){}\<[E]%
\\
\>[B]{}\varid{put}{}\<[7]%
\>[7]{}\mathbin{::}\conid{Monad}\;\varcolor{\tau}\Rightarrow \conid{Heap}\;(\conid{ByNeed}\;\varcolor{\tau})\to \conid{ByNeed}\;\varcolor{\tau}\;();\;{}\<[57]%
\>[57]{}\varid{put}\;\varcolor{\mu}{}\<[65]%
\>[65]{}\mathrel{=}\conid{ByNeed}\;(\lambda \anonymous \to \varid{return}\;((),\varcolor{\mu})){}\<[E]%
\\
\>[B]{}\keyword{instance}\;\conid{Monad}\;\varcolor{\tau}\Rightarrow \conid{Monad}\;(\conid{ByNeed}\;\varcolor{\tau})\;\keyword{where}\mathbin{...}{}\<[E]%
\\[\blanklineskip]%
\>[B]{}\keyword{instance}\;(\keyword{\forall}\!\! \hsforall \;\varid{v}\hsdot{\circ }{.\ }\conid{Trace}\;(\varcolor{\tau}\;\varid{v}))\Rightarrow \conid{Trace}\;(\conid{ByNeed}\;\varcolor{\tau}\;\varid{v})\;\keyword{where}\;\varid{step}\;\varid{e}\;\varid{m}\mathrel{=}\conid{ByNeed}\;(\varid{step}\;\varid{e}\hsdot{\circ }{.\ }\varid{unByNeed}\;\varid{m}){}\<[E]%
\\[\blanklineskip]%
\>[B]{}\varid{fetch}\mathbin{::}\conid{Monad}\;\varcolor{\tau}\Rightarrow \conid{Addr}\to \conid{D}\;(\conid{ByNeed}\;\varcolor{\tau});\varid{fetch}\;\varid{a}\mathrel{=}\varid{get}\bind \lambda \varcolor{\mu}\to \varcolor{\mu}\mathop{!}\varid{a}{}\<[E]%
\\[\blanklineskip]%
\>[B]{}\varid{memo}\mathbin{::}\keyword{\forall}\!\! \hsforall \;\varcolor{\tau}\hsdot{\circ }{.\ }(\conid{Monad}\;\varcolor{\tau},\keyword{\forall}\!\! \hsforall \;\varid{v}\hsdot{\circ }{.\ }\conid{Trace}\;(\varcolor{\tau}\;\varid{v}))\Rightarrow \conid{Addr}\to \conid{D}\;(\conid{ByNeed}\;\varcolor{\tau})\to \conid{D}\;(\conid{ByNeed}\;\varcolor{\tau}){}\<[E]%
\\
\>[B]{}\varid{memo}\;\varid{a}\;\varid{d}\mathrel{=}\varid{d}\bind \lambda \varid{v}\to \conid{ByNeed}\;(\varid{upd}\;\varid{v}){}\<[E]%
\\
\>[B]{}\hsindent{3}{}\<[3]%
\>[3]{}\keyword{where}\;{}\<[10]%
\>[10]{}\varid{upd}\;\conid{Stuck}\;{}\<[21]%
\>[21]{}\varcolor{\mu}\mathrel{=}\varid{return}\;(\conid{Stuck}\mathbin{::}\conid{Value}\;(\conid{ByNeed}\;\varcolor{\tau}),\varcolor{\mu}){}\<[E]%
\\
\>[10]{}\varid{upd}\;\varid{v}\;{}\<[21]%
\>[21]{}\varcolor{\mu}\mathrel{=}\varid{step}\;\conid{Upd}\;(\varid{return}\;(\varid{v},\varcolor{\mu}[\varid{a}\mapsto\varid{memo}\;\varid{a}\;(\varid{return}\;\varid{v})])){}\<[E]%
\\[\blanklineskip]%
\>[B]{}\keyword{instance}\;(\conid{Monad}\;\varcolor{\tau},\keyword{\forall}\!\! \hsforall \;\varid{v}\hsdot{\circ }{.\ }\conid{Trace}\;(\varcolor{\tau}\;\varid{v}))\Rightarrow \conid{HasBind}\;(\conid{D}\;(\conid{ByNeed}\;\varcolor{\tau}))\;\keyword{where}{}\<[E]%
\\
\>[B]{}\hsindent{3}{}\<[3]%
\>[3]{}\varid{bind}\;\varid{rhs}\;\varid{body}\mathrel{=}\keyword{do}\;{}\<[25]%
\>[25]{}\varcolor{\mu}\leftarrow \varid{get}{}\<[E]%
\\
\>[25]{}\keyword{let}\;\varid{a}\mathrel{=}\varid{nextFree}\;\varcolor{\mu}{}\<[E]%
\\
\>[25]{}\varid{put}\;\varcolor{\mu}[\varid{a}\mapsto\varid{memo}\;\varid{a}\;(\varid{rhs}\;(\varid{fetch}\;\varid{a}))]{}\<[E]%
\\
\>[25]{}\varid{body}\;(\varid{fetch}\;\varid{a}){}\<[E]%
\ColumnHook
\end{hscode}\resethooks
\\[-1em]
\caption{Call-by-need}
\label{fig:by-need}
\end{figure}

The use of a stateful heap is essential to the call-by-need evaluation strategy
in order to enable memoisation.
So how do we vary \ensuremath{\varcolor{\theta}} such that \ensuremath{\conid{D}\;\varcolor{\theta}} accommodates state?
We certainly cannot perform the heap update by updating entries in \ensuremath{\varcolor{\rho}},
because those entries are immutable once inserted, and we do not want to change
the generic interpreter.
That rules out $\ensuremath{\varcolor{\theta}} \cong \ensuremath{\conid{T}}$ (as for \ensuremath{\conid{ByName}\;\conid{T}}), because then repeated
occurrences of the variable \ensuremath{\varid{x}} must yield the same trace \ensuremath{\varcolor{\rho}\mathop{!}\varid{x}}.
However, the whole point of memoisation is that every evaluation of \ensuremath{\varid{x}} after
the first one leads to a potentially different, shorter trace.
This implies we have to \emph{paramaterise} every occurrence of \ensuremath{\varid{x}} over the
current heap \ensuremath{\varcolor{\mu}} at the time of evaluation, and every evaluation of \ensuremath{\varid{x}} must
subsequently update this heap with its value, so that the next evaluation
of \ensuremath{\varid{x}} returns the value directly.
In other words, we need a representation $\ensuremath{\conid{D}\;\varcolor{\theta}} \cong \ensuremath{\conid{Heap}\to \conid{T}\;(\conid{Value}\;\varcolor{\theta},\conid{Heap})}$.

Our trace transformer \ensuremath{\conid{ByNeed}} in \Cref{fig:by-need} solves this type equation
via \ensuremath{\varcolor{\theta}\triangleq\conid{ByNeed}\;\conid{T}}.
It embeds a standard state transformer monad,
whose key operations \ensuremath{\varid{get}} and \ensuremath{\varid{put}} are given in \Cref{fig:by-need}.

So the denotation of an expression is no longer a trace, but rather a
\emph{stateful function returning a trace} with state \ensuremath{\conid{Heap}\;(\conid{ByNeed}\;\varcolor{\tau})} in
which to allocate call-by-need thunks.
The \ensuremath{\conid{Trace}} instance of \ensuremath{\conid{ByNeed}\;\varcolor{\tau}} simply forwards to that of \ensuremath{\varcolor{\tau}} (\ie often
\ensuremath{\conid{T}}), pointwise over heaps.
Doing so needs a \ensuremath{\conid{Trace}} instance for \ensuremath{\varcolor{\tau}\;(\conid{Value}\;(\conid{ByNeed}\;\varcolor{\tau}),\conid{Heap}\;(\conid{ByNeed}\;\varcolor{\tau}))}, but we
found it more succinct to use a quantified constraint \ensuremath{(\keyword{\forall}\!\! \hsforall \;\varid{v}\hsdot{\circ }{.\ }\conid{Trace}\;(\varcolor{\tau}\;\varid{v}))}, that is, we require a \ensuremath{\conid{Trace}\;(\varcolor{\tau}\;\varid{v})} instance for every choice of \ensuremath{\varid{v}}.
Given that \ensuremath{\varcolor{\tau}} must also be a \ensuremath{\conid{Monad}}, that is not an onerous requirement.

The key part is again the implementation of \ensuremath{\conid{HasBind}} for \ensuremath{\conid{D}\;(\conid{ByNeed}\;\varcolor{\tau})},
because that is the only place where thunks are allocated.
The implementation of \ensuremath{\varid{bind}} designates a fresh heap address \ensuremath{\varid{a}}
to hold the denotation of the right-hand side.
Both \ensuremath{\varid{rhs}} and \ensuremath{\varid{body}} are called with \ensuremath{\varid{fetch}\;\varid{a}}, a denotation that looks up \ensuremath{\varid{a}}
in the heap and runs it.
If we were to omit the \ensuremath{\varid{memo}\;\varid{a}} action explained next, we would thus have
recovered another form of call-by-name semantics based on mutable state instead
of guarded fixpoints such as in \ensuremath{\conid{ByName}} and \ensuremath{\conid{ByValue}}.
The whole purpose of the \ensuremath{\varid{memo}\;\varid{a}\;\varid{d}} combinator then is to \emph{memoise} the
computation of \ensuremath{\varid{d}} the first time we run the computation, via \ensuremath{\varid{fetch}\;\varid{a}} in the
\ensuremath{\conid{Var}} case of \ensuremath{\mathcal{S}_{\mathbf{need}}\denot{\wild}_{\wild}}.
So \ensuremath{\varid{memo}\;\varid{a}\;\varid{d}} yields from \ensuremath{\varid{d}} until it has reached a value, and then \ensuremath{\varid{upd}}ates
the heap after an additional \ensuremath{\conid{Upd}} step.
Repeated access to the same variable will run the replacement \ensuremath{\varid{memo}\;\varid{a}\;(\varid{return}\;\varid{v})}, which immediately yields \ensuremath{\varid{v}} after performing a \ensuremath{\varid{step}\;\conid{Upd}} that does
nothing.%
\footnote{More serious semantics would omit updates after the first
evaluation as an \emph{optimisation}, \ie update with \ensuremath{\varcolor{\mu}[\varid{a}\mapsto\varid{return}\;\varid{v}]},
but doing so complicates relating the semantics to \Cref{fig:lk-semantics},
where omission of update frames for values behaves differently.
For now, our goal is not to formalise this optimisation, but rather to show
adequacy \wrt an established semantics.}

Although the code is carefully written, it is worth stressing how
compact and expressive it is.
We were able to move from traces to stateful traces just by wrapping traces \ensuremath{\conid{T}}
in a state transformer \ensuremath{\conid{ByNeed}}, without modifying the main \ensuremath{\mathcal{S}\denot{\wild}_{\wild}} function at
all.
In doing so, we provide the simplest encoding of a denotational by-need semantics
that we know of.%

Here is an example evaluating $\Let{i}{(\Lam{y}{\Lam{x}{x}})~i}{i~i}$, starting
in an empty \hypertarget{ex:eval-need-trace2}{heap}:
\begin{hscode}\SaveRestoreHook
\column{B}{@{}>{\hspre}l<{\hspost}@{}}%
\column{3}{@{}>{\hspre}l<{\hspost}@{}}%
\column{E}{@{}>{\hspre}l<{\hspost}@{}}%
\>[3]{}\lambda\!\!\mathbin{>}\mathcal{S}_{\mathbf{need}}\denot{\varid{read}\;\text{\ttfamily \char34 let~i~=~(λy.λx.x)~i~in~i~i\char34}}_{\varcolor{\varepsilon}}(\varcolor{\varepsilon})\mathbin{::}\conid{T}\;(\conid{Value}_{\mathbf{ne}},\conid{Heap}_{\mathbf{ne}}){}\<[E]%
\ColumnHook
\end{hscode}\resethooks
$
\LetIT\xhookrightarrow{\hspace{1.1ex}}\AppIT\xhookrightarrow{\hspace{1.1ex}}\LookupT(i)\xhookrightarrow{\hspace{1.1ex}}\AppIT\xhookrightarrow{\hspace{1.1ex}}\AppET\xhookrightarrow{\hspace{1.1ex}}\UpdateT\xhookrightarrow{\hspace{1.1ex}}\AppET\xhookrightarrow{\hspace{1.1ex}}\LookupT(i)\xhookrightarrow{\hspace{1.1ex}}\UpdateT\xhookrightarrow{\hspace{1.1ex}}\langle (\lambda,[0\!\! \mapsto \!\! \wild])\rangle 
$
\\[\belowdisplayskip]
\noindent
This trace is in clear correspondence to the earlier by-need LK trace
\labelcref{ex:trace2}.
We can observe memoisation at play:
Between the first bracket of $\LookupT$ and $\UpdateT$ events, the heap entry
for $i$ goes through a beta reduction before producing a value.
This work is cached, so that the second $\LookupT$ bracket does not do any beta
reduction.

\begin{toappendix}
\label{sec:more-eval-strat}

To show that our denotational interpreter pattern equally well applies to
by-value evaluation strategies, we introduce three more concrete semantic domain
instances for call-by-value in this section.
The first one is a plain old by-value encoding the representation of which is
isomorphic to \ensuremath{\conid{D}\;\conid{T}}, just like for \ensuremath{\conid{D}_{\mathbf{na}}}.
However, this instance is partial for the original recursive \ensuremath{\conid{Let}} construct.
In order to recover a total by-value semantics, our second instance augments
call-by-value with a lazy initialisation technique~\citep{Nakata:06} involving a
mutable heap, thus sharing its representation with \ensuremath{\conid{D}_{\mathbf{ne}}}.
The third and final by-value domain models clairvoyant
call-by-value~\citep{HackettHutton:19}, which unfortunately proves to be
partial, and more fundamentally so than the partial by-value instance.

\subsection{Call-by-value}

\begin{figure}
\begin{hscode}\SaveRestoreHook
\column{B}{@{}>{\hspre}l<{\hspost}@{}}%
\column{3}{@{}>{\hspre}l<{\hspost}@{}}%
\column{36}{@{}>{\hspre}l<{\hspost}@{}}%
\column{41}{@{}>{\hspre}l<{\hspost}@{}}%
\column{69}{@{}>{\hspre}l<{\hspost}@{}}%
\column{E}{@{}>{\hspre}l<{\hspost}@{}}%
\>[B]{}\mathcal{S}_{\mathbf{value}}\denot{\varid{e}}_{\varcolor{\rho}}\mathrel{=}\mathcal{S}\denot{\varid{e}}_{\varcolor{\rho}}\mathbin{::}\conid{D}\;(\conid{ByValue}\;\conid{T}){}\<[E]%
\\[\blanklineskip]%
\>[B]{}\keyword{newtype}\;\conid{ByValue}\;\varcolor{\tau}\;\varid{v}\mathrel{=}\conid{ByValue}\;\{\mskip1.5mu \varid{unByValue}\mathbin{::}\varcolor{\tau}\;\varid{v}\mskip1.5mu\}{}\<[E]%
\\
\>[B]{}\keyword{instance}\;\conid{Monad}\;\varcolor{\tau}\Rightarrow \conid{Monad}\;(\conid{ByValue}\;\varcolor{\tau})\;\keyword{where}\mathbin{...}{}\<[E]%
\\
\>[B]{}\keyword{instance}\;\conid{Trace}\;(\varcolor{\tau}\;\varid{v})\Rightarrow \conid{Trace}\;(\conid{ByValue}\;\varcolor{\tau}\;\varid{v})\;\keyword{where}\; ... {}\<[E]%
\\[\blanklineskip]%
\>[B]{}\keyword{class}\;\conid{Extract}\;\varcolor{\tau}\;\keyword{where}\;\varid{getValue}\mathbin{::}\varcolor{\tau}\;\varid{v}\to \varid{v}{}\<[E]%
\\
\>[B]{}\keyword{instance}\;\conid{Extract}\;\conid{T}\;\keyword{where}\;\varid{getValue}\;(\conid{Ret}\;\varid{v})\mathrel{=}\varid{v};\varid{getValue}\;(\conid{Step}\;\anonymous \;\varcolor{\tau})\mathrel{=}\varid{getValue}\;\varcolor{\tau}{}\<[E]%
\\[\blanklineskip]%
\>[B]{}\keyword{instance}\;(\conid{Trace}\;(\conid{D}\;(\conid{ByValue}\;\varcolor{\tau})),\conid{Monad}\;\varcolor{\tau},\conid{Extract}\;\varcolor{\tau})\Rightarrow \conid{HasBind}\;(\conid{D}\;(\conid{ByValue}\;\varcolor{\tau}))\;\keyword{where}{}\<[E]%
\\
\>[B]{}\hsindent{3}{}\<[3]%
\>[3]{}\varid{bind}\;\varid{rhs}\;\varid{body}\mathrel{=}\varid{step}\;\conid{Let}_{0}\;(\keyword{do}\;{}\<[36]%
\>[36]{}\keyword{let}\;{}\<[41]%
\>[41]{}\varid{d}\mathrel{=}\varid{rhs}\;(\varid{return}\;\varid{v}){}\<[69]%
\>[69]{}\mathbin{::}\conid{D}\;(\conid{ByValue}\;\varcolor{\tau}){}\<[E]%
\\
\>[41]{}\varid{v}\mathrel{=}\varid{getValue}\;(\varid{unByValue}\;\varid{d}){}\<[69]%
\>[69]{}\mathbin{::}\conid{Value}\;(\conid{ByValue}\;\varcolor{\tau}){}\<[E]%
\\
\>[36]{}\varid{v}_{1}\leftarrow \varid{d};\varid{body}\;(\varid{return}\;\varid{v}_{1})){}\<[E]%
\ColumnHook
\end{hscode}\resethooks
\\[-1em]
\caption{Call-by-value }
\label{fig:by-value}
\end{figure}

Call-by-value eagerly evaluates a let-bound RHS and then substitutes its
\emph{value}, rather than the reduction trace that led to the value, into every
use site.

The call-by-value evaluation strategy is implemented with the \ensuremath{\conid{ByValue}} trace transformer shown in \Cref{fig:by-value}.
Function \ensuremath{\varid{bind}} defines a denotation \ensuremath{\varid{d}\mathbin{::}\conid{D}\;(\conid{ByValue}\;\varcolor{\tau})} of the right-hand
side by mutual recursion with \ensuremath{\varid{v}\mathbin{::}\conid{Value}\;(\conid{ByValue}\;\varcolor{\tau})} that we will discuss
shortly.

As its first action, \ensuremath{\varid{bind}} yields a brand-new \ensuremath{\conid{Let}_{0}} event that we assume was
added to the definition of \ensuremath{\conid{Event}}, announcing in the trace that the right-hand
side of a \ensuremath{\conid{Let}} is to be evaluated.
Then monadic bind \ensuremath{\varid{v}_{1}\leftarrow \varid{d};\varid{body}\;(\varid{return}\;\varid{v}_{1})} yields steps from the right-hand
side \ensuremath{\varid{d}} until its value \ensuremath{\varid{v}_{1}\mathbin{::}\conid{Value}\;(\conid{ByValue}\;\varcolor{\tau})} is reached, which is then
passed \ensuremath{\varid{return}}ed (\ie wrapped in \ensuremath{\conid{Ret}}) to the let \ensuremath{\varid{body}}.
Note that the steps in \ensuremath{\varid{d}} are yielded \emph{eagerly}, and only once, rather
than duplicating the trace at every use site in \ensuremath{\varid{body}}, as the by-name form
\ensuremath{\varid{body}\;\varid{d}} would.

To understand the recursive definition of the denotation of the right-hand side \ensuremath{\varid{d}}
and its value \ensuremath{\varid{v}},
consider the case \ensuremath{\varcolor{\tau}\mathrel{=}\conid{T}}.
Then \ensuremath{\varid{return}\mathrel{=}\conid{Ret}} and we get \ensuremath{\varid{d}\mathrel{=}\varid{rhs}\;(\conid{Ret}\;\varid{v})} for the value \ensuremath{\varid{v}} at the end of
the trace \ensuremath{\varid{d}}, as computed by the type class instance method \ensuremath{\varid{getValue}\mathbin{::}\conid{T}\;\varid{v}\to \varid{v}}.
The effect of \ensuremath{\conid{Ret}\;(\varid{getValue}\;(\varid{unByValue}\;\varid{d}))} is that of stripping all \ensuremath{\conid{Step}}s from \ensuremath{\varid{d}}.%

Since nothing about \ensuremath{\varid{getValue}} is particularly special to \ensuremath{\conid{T}}, it lives in its
own type class \ensuremath{\conid{Extract}} so that we get a \ensuremath{\conid{HasBind}} instance for different
types of \ensuremath{\conid{Trace}}s, such as more abstract ones in \Cref{sec:abstraction}.

Let us trace $\Let{i}{(\Lam{y}{\Lam{x}{x}})~i}{i~i}$ for call-by-value:
\begin{hscode}\SaveRestoreHook
\column{B}{@{}>{\hspre}l<{\hspost}@{}}%
\column{3}{@{}>{\hspre}l<{\hspost}@{}}%
\column{E}{@{}>{\hspre}l<{\hspost}@{}}%
\>[3]{}\lambda\!\!\mathbin{>}\mathcal{S}_{\mathbf{value}}\denot{\varid{read}\;\text{\ttfamily \char34 let~i~=~(λy.λx.x)~i~in~i~i\char34}}_{\varcolor{\varepsilon}}{}\<[E]%
\ColumnHook
\end{hscode}\resethooks
$
\LetOT\xhookrightarrow{\hspace{1.1ex}}\AppIT\xhookrightarrow{\hspace{1.1ex}}\AppET\xhookrightarrow{\hspace{1.1ex}}\LetIT\xhookrightarrow{\hspace{1.1ex}}\AppIT\xhookrightarrow{\hspace{1.1ex}}\LookupT(i)\xhookrightarrow{\hspace{1.1ex}}\AppET\xhookrightarrow{\hspace{1.1ex}}\LookupT(i)\xhookrightarrow{\hspace{1.1ex}}\langle \lambda\rangle 
$
\\[\belowdisplayskip]
\noindent
The beta reduction of $(\Lam{y}{\Lam{x}{x}})~i$ now happens once within the
$\LetOT$/$\LetIT$ bracket; the two subsequent $\LookupT$ events immediately halt
with a value.

Alas, this model of call-by-value does not yield a total interpreter!
Consider the case when the right-hand side accesses its value before yielding
one, \eg
\begin{hscode}\SaveRestoreHook
\column{B}{@{}>{\hspre}l<{\hspost}@{}}%
\column{3}{@{}>{\hspre}l<{\hspost}@{}}%
\column{E}{@{}>{\hspre}l<{\hspost}@{}}%
\>[3]{}\lambda\!\!\mathbin{>}\varid{takeT}\;\mathrm{5}\mathbin{\$}\mathcal{S}_{\mathbf{value}}\denot{\varid{read}\;\text{\ttfamily \char34 let~x~=~x~in~x~x\char34}}_{\varcolor{\varepsilon}}{}\<[E]%
\ColumnHook
\end{hscode}\resethooks
$\LetOT\xhookrightarrow{\hspace{1.1ex}}\LookupT(x)\xhookrightarrow{\hspace{1.1ex}}\LetIT\xhookrightarrow{\hspace{1.1ex}}\AppIT\xhookrightarrow{\hspace{1.1ex}}\LookupT(x)\xhookrightarrow{\hspace{1.1ex}}\texttt{\textasciicircum{}CInterrupted}$
\\[\belowdisplayskip]
\noindent
This loops forever unproductively, rendering the interpreter unfit as a
denotational semantics.

\begin{figure}
\begin{hscode}\SaveRestoreHook
\column{B}{@{}>{\hspre}l<{\hspost}@{}}%
\column{3}{@{}>{\hspre}l<{\hspost}@{}}%
\column{25}{@{}>{\hspre}l<{\hspost}@{}}%
\column{E}{@{}>{\hspre}l<{\hspost}@{}}%
\>[B]{}\mathcal{S}_{\mathbf{vinit}}\denot{\varid{e}}_{\varcolor{\rho}}(\varcolor{\mu})\mathrel{=}\varid{unByVInit}\;(\mathcal{S}\denot{\varid{e}}_{\varcolor{\rho}}\mathbin{::}\conid{D}\;(\conid{ByVInit}\;\conid{T}))\;\varcolor{\mu}{}\<[E]%
\\[\blanklineskip]%
\>[B]{}\keyword{newtype}\;\conid{ByVInit}\;\varcolor{\tau}\;\varid{v}\mathrel{=}\conid{ByVInit}\;\{\mskip1.5mu \varid{unByVInit}\mathbin{::}\conid{Heap}\;(\conid{ByVInit}\;\varcolor{\tau})\to \varcolor{\tau}\;(\varid{v},\conid{Heap}\;(\conid{ByVInit}\;\varcolor{\tau}))\mskip1.5mu\}{}\<[E]%
\\
\>[B]{}\keyword{instance}\;(\conid{Monad}\;\varcolor{\tau},\keyword{\forall}\!\! \hsforall \;\varid{v}\hsdot{\circ }{.\ }\conid{Trace}\;(\varcolor{\tau}\;\varid{v}))\Rightarrow \conid{HasBind}\;(\conid{D}\;(\conid{ByVInit}\;\varcolor{\tau}))\;\keyword{where}{}\<[E]%
\\
\>[B]{}\hsindent{3}{}\<[3]%
\>[3]{}\varid{bind}\;\varid{rhs}\;\varid{body}\mathrel{=}\keyword{do}\;{}\<[25]%
\>[25]{}\varcolor{\mu}\leftarrow \varid{get}{}\<[E]%
\\
\>[25]{}\keyword{let}\;\varid{a}\mathrel{=}\varid{nextFree}\;\varcolor{\mu}{}\<[E]%
\\
\>[25]{}\varid{put}\;\varcolor{\mu}[\varid{a}\mapsto\varid{stuck}]{}\<[E]%
\\
\>[25]{}\varid{step}\;\conid{Let}_{0}\;(\varid{memo}\;\varid{a}\;(\varid{rhs}\;(\varid{fetch}\;\varid{a})))\bind \varid{body}\hsdot{\circ }{.\ }\varid{return}{}\<[E]%
\ColumnHook
\end{hscode}\resethooks
\\[-1em]
\caption{Call-by-value with lazy initialisation}
\label{fig:by-value-init}
\end{figure}

\subsection{Lazy Initialisation and Black-holing}
\label{sec:lazy-init}

Recall that our simple \ensuremath{\conid{ByValue}} transformer above yields a potentially looping
interpreter.
Typical strict languages work around this issue in either of two ways:
They enforce termination of the RHS statically (OCaml, ML), or they use
\emph{lazy initialisation} techniques~\citep{Nakata:10,Nakata:06} (Scheme,
recursive modules in OCaml).
We recover a total interpreter using the semantics in \citet{Nakata:10},
building on the same encoding as \ensuremath{\conid{ByNeed}} and initialising the heap with a
\emph{black hole}~\citep{stg} \ensuremath{\varid{stuck}} in \ensuremath{\varid{bind}} as in \Cref{fig:by-value-init}.
\begin{hscode}\SaveRestoreHook
\column{B}{@{}>{\hspre}l<{\hspost}@{}}%
\column{3}{@{}>{\hspre}l<{\hspost}@{}}%
\column{E}{@{}>{\hspre}l<{\hspost}@{}}%
\>[3]{}\lambda\!\!\mathbin{>}\mathcal{S}_{\mathbf{vinit}}\denot{\varid{read}\;\text{\ttfamily \char34 let~x~=~x~in~x~x\char34}}_{\varcolor{\varepsilon}}(\varcolor{\varepsilon})\mathbin{::}\conid{T}\;(\conid{Value}\;\anonymous ,\conid{Heap}\;\anonymous ){}\<[E]%
\ColumnHook
\end{hscode}\resethooks
$
\LetOT\xhookrightarrow{\hspace{1.1ex}}\LookupT(x)\xhookrightarrow{\hspace{1.1ex}}\LetIT\xhookrightarrow{\hspace{1.1ex}}\AppIT\xhookrightarrow{\hspace{1.1ex}}\LookupT(x)\xhookrightarrow{\hspace{1.1ex}}\langle (\lightning,[0\!\! \mapsto \!\! \wild])\rangle 
$

\begin{figure}
\begin{hscode}\SaveRestoreHook
\column{B}{@{}>{\hspre}l<{\hspost}@{}}%
\column{3}{@{}>{\hspre}l<{\hspost}@{}}%
\column{5}{@{}>{\hspre}l<{\hspost}@{}}%
\column{12}{@{}>{\hspre}l<{\hspost}@{}}%
\column{E}{@{}>{\hspre}l<{\hspost}@{}}%
\>[B]{}\mathcal{S}_{\mathbf{clair}}\denot{\varid{e}}_{\varcolor{\rho}}\mathrel{=}\varid{runClair}\mathbin{\$}\mathcal{S}\denot{\varid{e}}_{\varcolor{\rho}}\mathbin{::}\conid{T}\;(\conid{Value}\;(\conid{Clairvoyant}\;\conid{T})){}\<[E]%
\\[\blanklineskip]%
\>[B]{}\keyword{data}\;\conid{Fork}\;\varid{f}\;\varid{a}\mathrel{=}\conid{Empty}\mid \conid{Single}\;\varid{a}\mid \conid{Fork}\;(\varid{f}\;\varid{a})\;(\varid{f}\;\varid{a});\keyword{data}\;\conid{ParT}\;\varid{m}\;\varid{a}\mathrel{=}\conid{ParT}\;(\varid{m}\;(\conid{Fork}\;(\conid{ParT}\;\varid{m})\;\varid{a})){}\<[E]%
\\
\>[B]{}\keyword{instance}\;\conid{Monad}\;\varcolor{\tau}\Rightarrow \conid{Alternative}\;(\conid{ParT}\;\varcolor{\tau})\;\keyword{where}{}\<[E]%
\\
\>[B]{}\hsindent{3}{}\<[3]%
\>[3]{}\varid{empty}\mathrel{=}\conid{ParT}\;(\varid{pure}\;\conid{Empty});\varid{l}\mathbin{<\mspace{-6mu}|\mspace{-6mu}>}\varid{r}\mathrel{=}\conid{ParT}\;(\varid{pure}\;(\conid{Fork}\;\varid{l}\;\varid{r})){}\<[E]%
\\[\blanklineskip]%
\>[B]{}\keyword{newtype}\;\conid{Clairvoyant}\;\varcolor{\tau}\;\varid{a}\mathrel{=}\conid{Clairvoyant}\;(\conid{ParT}\;\varcolor{\tau}\;\varid{a}){}\<[E]%
\\
\>[B]{}\varid{runClair}\mathbin{::}\conid{D}\;(\conid{Clairvoyant}\;\conid{T})\to \conid{T}\;(\conid{Value}\;(\conid{Clairvoyant}\;\conid{T})){}\<[E]%
\\[\blanklineskip]%
\>[B]{}\keyword{instance}\;(\conid{Extract}\;\varcolor{\tau},\conid{Monad}\;\varcolor{\tau},\keyword{\forall}\!\! \hsforall \;\varid{v}\hsdot{\circ }{.\ }\conid{Trace}\;(\varcolor{\tau}\;\varid{v}))\Rightarrow \conid{HasBind}\;(\conid{D}\;(\conid{Clairvoyant}\;\varcolor{\tau}))\;\keyword{where}{}\<[E]%
\\
\>[B]{}\hsindent{3}{}\<[3]%
\>[3]{}\varid{bind}\;\varid{rhs}\;\varid{body}\mathrel{=}\conid{Clairvoyant}\;(\varid{skip}\mathbin{<\mspace{-6mu}|\mspace{-6mu}>}\varid{let'})\bind \varid{body}{}\<[E]%
\\
\>[3]{}\hsindent{2}{}\<[5]%
\>[5]{}\keyword{where}\;{}\<[12]%
\>[12]{}\varid{skip}\mathrel{=}\varid{return}\;(\conid{Clairvoyant}\;\varid{empty}){}\<[E]%
\\
\>[12]{}\varid{let'}\mathrel{=}\varid{fmap}\;\varid{return}\mathbin{\$}\varid{step}\;\conid{Let}_{0}\mathbin{\$}\mathbin{...}\;\varid{fix}\mathbin{...}\varid{rhs}\mathbin{...}\varid{getValue}\mathbin{...}{}\<[E]%
\ColumnHook
\end{hscode}\resethooks
\\[-1em]
\caption{Clairvoyant Call-by-value}
\label{fig:clairvoyant-by-value}
\end{figure}

\subsection{Clairvoyant Call-by-value}
\label{sec:clair}

Clairvoyant call-by-value~\citep{HackettHutton:19} is an alternative to
call-by-need semantics that exploits non-determinism and a cost model to absolve
of the heap.
We can instantiate our interpreter to generate the shortest clairvoyant
call-by-value trace as well, as sketched out in \Cref{fig:clairvoyant-by-value}.
Doing so yields an evaluation strategy that either skips or speculates let
bindings, depending on whether or not the binding is needed:
\begin{hscode}\SaveRestoreHook
\column{B}{@{}>{\hspre}l<{\hspost}@{}}%
\column{3}{@{}>{\hspre}l<{\hspost}@{}}%
\column{E}{@{}>{\hspre}l<{\hspost}@{}}%
\>[3]{}\lambda\!\!\mathbin{>}\mathcal{S}_{\mathbf{clair}}\denot{\varid{read}\;\text{\ttfamily \char34 let~f~=~λx.x~in~let~g~=~λy.f~in~g\char34}}_{\varcolor{\varepsilon}}\mathbin{::}\conid{T}\;(\conid{Value}\;(\conid{Clairvoyant}\;\conid{T})){}\<[E]%
\ColumnHook
\end{hscode}\resethooks
$
\LetIT\xhookrightarrow{\hspace{1.1ex}}\LetOT\xhookrightarrow{\hspace{1.1ex}}\LetIT\xhookrightarrow{\hspace{1.1ex}}\LookupT(g)\xhookrightarrow{\hspace{1.1ex}}\langle \lambda\rangle 
$
\begin{hscode}\SaveRestoreHook
\column{B}{@{}>{\hspre}l<{\hspost}@{}}%
\column{3}{@{}>{\hspre}l<{\hspost}@{}}%
\column{E}{@{}>{\hspre}l<{\hspost}@{}}%
\>[3]{}\lambda\!\!\mathbin{>}\mathcal{S}_{\mathbf{clair}}\denot{\varid{read}\;\text{\ttfamily \char34 let~f~=~λx.x~in~let~g~=~λy.f~in~g~g\char34}}_{\varcolor{\varepsilon}}\mathbin{::}\conid{T}\;(\conid{Value}\;(\conid{Clairvoyant}\;\conid{T})){}\<[E]%
\ColumnHook
\end{hscode}\resethooks
$
\LetOT\xhookrightarrow{\hspace{1.1ex}}\LetIT\xhookrightarrow{\hspace{1.1ex}}\LetOT\xhookrightarrow{\hspace{1.1ex}}\LetIT\xhookrightarrow{\hspace{1.1ex}}\AppIT\xhookrightarrow{\hspace{1.1ex}}\LookupT(g)\xhookrightarrow{\hspace{1.1ex}}\AppET\xhookrightarrow{\hspace{1.1ex}}\LookupT(f)\xhookrightarrow{\hspace{1.1ex}}\langle \lambda\rangle 
$
\\[\belowdisplayskip]
\noindent
The first example discards $f$, but the second needs it, so the trace starts
with an additional $\LetOT$ event.
Similar to \ensuremath{\conid{ByValue}}, the interpreter is not total so it is unfit as a
denotational semantics without a complicated domain theoretic judgment.
Furthermore, the decision whether or not a $\LetOT$ step is needed can be
delayed for an infinite amount of time, as exemplified by
\begin{hscode}\SaveRestoreHook
\column{B}{@{}>{\hspre}l<{\hspost}@{}}%
\column{3}{@{}>{\hspre}l<{\hspost}@{}}%
\column{E}{@{}>{\hspre}l<{\hspost}@{}}%
\>[3]{}\lambda\!\!\mathbin{>}\mathcal{S}_{\mathbf{clair}}\denot{\varid{read}\;\text{\ttfamily \char34 let~i~=~Z()~in~let~w~=~λy.y~y~in~w~w\char34}}_{\varcolor{\varepsilon}}\mathbin{::}\conid{T}\;(\conid{Value}\;(\conid{Clairvoyant}\;\conid{T})){}\<[E]%
\ColumnHook
\end{hscode}\resethooks
\texttt{\textasciicircum{}CInterrupted}
\\[\belowdisplayskip]
\noindent
The program diverges without producing even a prefix of a trace because the
binding for $i$ might be needed at an unknown point in the future
(a \emph{liveness property} and hence impossible to verify at runtime).
This renders Clairvoyant call-by-value inadequate for verifying properties of
infinite executions.
\end{toappendix}

The examples so far suggest that \ensuremath{\mathcal{S}_{\mathbf{need}}\denot{\wild}_{\wild}} agrees with the LK machine on
\emph{many} programs.
The next section proves that \ensuremath{\mathcal{S}_{\mathbf{need}}\denot{\wild}_{\wild}} agrees with the LK machine on
\emph{all} programs, including ones that diverge.

\section{Totality and Semantic Adequacy}

In this section, we prove that \ensuremath{\mathcal{S}_{\mathbf{need}}\denot{\wild}_{\wild}} produces small-step traces of the
lazy Krivine machine and is indeed a \emph{denotational semantics}.%
\footnote{Similar results for \ensuremath{\mathcal{S}_{\mathbf{name}}\denot{\wild}_{\wild}} should be derivative.}
Excitingly, to our knowledge, \ensuremath{\mathcal{S}_{\mathbf{need}}\denot{\wild}_{\wild}} is the first denotational call-by-need
semantics that was proven so!
Specifically, denotational semantics must be total and adequate.
\emph{Totality} says that the interpreter is well-defined for every input expression and \emph{adequacy} says that the interpreter produces similar traces as the reference semantics.
This is an important result because it allows us to switch between operational reference semantics and denotational interpreter as needed, thus guaranteeing compatibility
of definitions such as absence in \Cref{defn:absence}.
As before, all the (pen-and-paper) proofs can be found in the Appendix.

\begin{toappendix}
\subsection{Guarded Type Theory}
\label{sec:guarded-types}

For our proofs, we will make use of guarded type theory.
Specifically, \Cref{sec:totality-detail} encodes the interpreter in Guarded
Cubical Agda, which implements a variant of guarded type theory called Ticked
Cubical Type Theory~\citep{tctt}.
Our pen and paper proofs in \Cref{sec:adequacy-detail} and
\Cref{sec:soundness-detail} respect the coinductive nature of this encoding.
Thus, this subsection shall serve as a short introduction.

Whereas traditional theories of coinduction require syntactic productivity
checks~\citep{Coquand:94}, imposing tiresome constraints on the form of guarded
recursive functions, the appeal of guarded type theories is that productivity
is instead proven semantically, in the type system.
Compared to the alternative of \emph{sized types}~\citep{Hughes:96}, guarded
types don't require complicated algebraic manipulations of size parameters;
however perhaps sized types would work just as well.
Any fuel-based (or step-indexed) approach is equivalent to our use of guarded
type theory, but we find that the latter is a more direct (and thus preferable)
encoding.

The fundamental innovation of guarded recursive type theory is the integration
of the ``later'' modality $\later$ which allows to define coinductive data
types with negative recursive occurrences such as in the data constructor \ensuremath{\conid{Fun}\mathbin{::}(\highlight{\conid{D}\;\varcolor{\tau}}\to \conid{D}\;\varcolor{\tau})\to \conid{Value}\;\varcolor{\tau}} (recall that \ensuremath{\conid{D}\;\varcolor{\tau}\mathrel{=}\varcolor{\tau}\;(\highlight{\conid{Value}}\;\varcolor{\tau})}), as
first realised by \citet{Nakano:00}.
The way that is achieved is roughly as follows: The type $\later T$
represents data of type $T$ that will become available after a finite amount
of computation, such as unrolling one layer of a fixpoint definition.
It comes with a general fixpoint combinator $\fix : \forall A.\ (\later A \to
A) \to A$ that can be used to define both coinductive \emph{types} (via guarded
recursive functions on the universe of types~\citep{BirkedalMogelbergEjlers:13})
as well as guarded recursive \emph{terms} inhabiting said types.
The classic example is that of infinite streams:
\[
  Str = ℕ \times \later Str \qquad ones = \fix (r : \later Str).\ (1,r),
\]
where $ones : Str$ is the constant stream of $1$.
In particular, $Str$ is the fixpoint of a locally contractive functor $F(X) =
ℕ \times \later X$.
According to \citet{BirkedalMogelbergEjlers:13}, any type expression in simply
typed lambda calculus defines a locally contractive functor as long as any
occurrence of $X$ is under a $\later$.
The most exciting consequence is that changing the \ensuremath{\conid{Fun}} data constructor to
\ensuremath{\conid{Fun}\mathbin{::}(\later\!\;(\conid{D}\;\varcolor{\tau})\to \conid{D}\;\varcolor{\tau})\to \conid{Value}\;\varcolor{\tau}} makes \ensuremath{\conid{Value}\;\varcolor{\tau}} a well-defined
coinductive data type,%
\footnote{The reason why the positive occurrence of \ensuremath{\conid{D}\;\varcolor{\tau}} does not need to be
guarded is that the type of \ensuremath{\conid{Fun}} can more formally be encoded by a mixed
inductive-coinductive type, \eg
$\ensuremath{\conid{Value}\;\varcolor{\tau}} = \fix X.\ \lfp Y.\ ...~|~\ensuremath{\conid{Fun}}~(X \to Y)~|~...$ }
whereas syntactic approaches to coinduction reject any negative recursive
occurrence.

As a type constructor, $\later$ is an applicative
functor~\citep{McBridePaterson:08} via functions
\[
  \purelater : \forall A.\ A \to \later A \qquad \wild \aplater \wild : \forall A,B.\ \later (A \to B) \to \later A \to \later B,
\]
allowing us to apply a familiar framework of reasoning around $\later$.
In order not to obscure our work with pointless symbol pushing, we will often
omit the idiom brackets~\citep{McBridePaterson:08} $\idiom{\wild}$ to indicate
where the $\later$ ``effects'' happen.

\subsection{Total Encoding in Guarded Cubical Agda}
\label{sec:totality-detail}

Building on \Cref{sec:guarded-types}, we will now outline the changes necessary to encode \ensuremath{\mathcal{S}\denot{\wild}_{\wild}} in Guarded Cubical
Agda, a system implementing Ticked Cubical Type Theory~\citep{tctt}, as well
as the concrete instances \ensuremath{\conid{D}\;(\conid{ByName}\;\conid{T})} and \ensuremath{\conid{D}\;(\conid{ByNeed}\;\conid{T})} from
\Cref{fig:trace-instances,fig:by-need}.
The full, type-checked development is available in the Supplement.
\begin{itemize}
  \item We need to delay in \ensuremath{\varid{step}}; thus its definition in \ensuremath{\conid{Trace}} changes to
    \ensuremath{\varid{step}\mathbin{::}\conid{Event}\to \later\!\;\varid{d}\to \varid{d}}.
  \item
    All \ensuremath{\conid{D}}s that will be passed to lambdas, put into the environment or
    stored in fields need to have the form \ensuremath{\varid{step}\;(\conid{Look}\;\varid{x})\;\varid{d}} for some
    \ensuremath{\varid{x}\mathbin{::}\conid{Name}} and a delayed \ensuremath{\varid{d}\mathbin{::}\later\!\;(\conid{D}\;\varcolor{\tau})}.
    This is enforced as follows:
    \begin{enumerate}
      \item
        The \ensuremath{\conid{Domain}} type class gains an additional predicate parameter \ensuremath{\varid{p}\mathbin{::}\conid{D}\to \conid{Set}}
        that will be instantiated by the semantics to a predicate that checks
        that the \ensuremath{\conid{D}} has the required form \ensuremath{\varid{step}\;(\conid{Look}\;\varid{x})\;\varid{d}} for some
        \ensuremath{\varid{x}\mathbin{::}\conid{Name}}, \ensuremath{\varid{d}\mathbin{::}\later\!\;(\conid{D}\;\varcolor{\tau})}.
      \item
        Then the method types of \ensuremath{\conid{Domain}} use a Sigma type to encode conformance
        to \ensuremath{\varid{p}}.
        For example, the type of \ensuremath{\conid{Fun}} changes to \ensuremath{(\conid{\mathrm{\Sigma}}\;\conid{D}\;\varid{p}\to \conid{D})\to \conid{D}}.
      \item
        The reason why we need to encode this fact is that the guarded recursive
        data type \ensuremath{\conid{Value}} has a constructor the type of which amounts to
        \ensuremath{\conid{Fun}\mathbin{::}(\conid{Name}\;\times\;\later\!\;(\conid{D}\;\varcolor{\tau})\to \conid{D}\;\varcolor{\tau})\to \conid{Value}\;\varcolor{\tau}}, breaking the
        previously discussed negative recursive cycle by a $\later$, and
        expecting \ensuremath{\varid{x}\mathbin{::}\conid{Name}}, \ensuremath{\varid{d}\mathbin{::}\later\!\;(\conid{D}\;\varcolor{\tau})} such that the original \ensuremath{\conid{D}\;\varcolor{\tau}} can
        be recovered as \ensuremath{\varid{step}\;(\conid{Look}\;\varid{x})\;\varid{d}}.
        This is in contrast to the original definition \ensuremath{\conid{Fun}\mathbin{::}(\conid{D}\;\varcolor{\tau}\to \conid{D}\;\varcolor{\tau})\to \conid{Value}\;\varcolor{\tau}} which would \emph{not} type-check.
        One can understand \ensuremath{\conid{Fun}} as carrying the ``closure'' resulting from
        \emph{defunctionalising}~\citep{Reynolds:72} a \ensuremath{\conid{\mathrm{\Sigma}}\;\conid{D}\;\varid{p}}, and that this
        defunctionalisation is presently necessary in Agda to eliminate negative
        cycles.
    \end{enumerate}
  \item
    Expectedly, \ensuremath{\conid{HasBind}} becomes more complicated because it encodes the
    fixpoint combinator.
    We settled on \ensuremath{\varid{bind}\mathbin{::}\later\!\;(\later\!\;\conid{D}\mathbin{→}\conid{D})\mathbin{→}(\later\!\;\conid{D}\mathbin{→}\conid{D})\mathbin{→}\conid{D}}.
    We tried rolling up \ensuremath{\varid{step}\;(\conid{Look}\;\varid{x})\;\anonymous } in the definition of \ensuremath{\mathcal{S}\denot{\wild}_{\wild}}
    to get a simpler type \ensuremath{\varid{bind}\mathbin{::}(\conid{\mathrm{\Sigma}}\;\conid{D}\;\varid{p}\mathbin{→}\conid{D})\mathbin{→}(\conid{\mathrm{\Sigma}}\;\conid{D}\;\varid{p}\mathbin{→}\conid{D})\mathbin{→}\conid{D}},
    but then had trouble defining \ensuremath{\conid{ByNeed}} heaps independently of the concrete
    predicate \ensuremath{\varid{p}}.
  \item
    Higher-order mutable state is among the classic motivating examples for
    guarded recursive types.
    As such it is no surprise that the state-passing of the mutable \ensuremath{\conid{Heap}} in
    the implementation of \ensuremath{\conid{ByNeed}} requires breaking of a recursive cycle
    by delaying heap entries, \ensuremath{\conid{Heap}\;\varcolor{\tau}\mathrel{=}\conid{Addr}\mathbin{:\rightharpoonup}\later\!\;(\conid{D}\;\varcolor{\tau})}.
  \item
    We need to pass around \ensuremath{\conid{Tick}} binders in \ensuremath{\mathcal{S}\denot{\wild}_{\wild}} in a way that the type
    checker is satisfied; a simple exercise.
    We find it remarkable how non-invasive these adjustment are!
\end{itemize}

Thus we have proven that \ensuremath{\mathcal{S}\denot{\wild}_{\wild}} is a total, mathematical function, and
fast and loose equational reasoning about \ensuremath{\mathcal{S}\denot{\wild}_{\wild}} is not only \emph{morally}
correct~\citep{Danielsson:06}, but simply \emph{correct}.
Furthermore, since evaluation order doesn't matter in Agda and hence for \ensuremath{\mathcal{S}\denot{\wild}_{\wild}},
we could have defined it in a strict language (lowering \ensuremath{\later\!\;\varid{a}} as \ensuremath{()\to \varid{a}})
just as well.
\end{toappendix}

\subsection{Adequacy of \ensuremath{\mathcal{S}_{\mathbf{need}}\denot{\wild}_{\wild}}}
\label{sec:adequacy}

\begin{figure}
\[\ruleform{\begin{array}{c}
  α_\Events : \States \to \ensuremath{\conid{Event}}
  \qquad
  α_\Environments : \Environments \times \Heaps \to (\ensuremath{\conid{Name}\mathbin{:\rightharpoonup}\conid{D}_{\mathbf{ne}}})
  \qquad
  α_\Heaps : \Heaps \to \ensuremath{\conid{Heap}_{\mathbf{ne}}}
  \\
  α_{\STraces} : \STraces \times \Continuations \to \ensuremath{\conid{T}\;(\conid{Value}_{\mathbf{ne}},\conid{Heap}_{\mathbf{ne}})}
  \qquad
  α_{\Values} : \States \times \Continuations \to \ensuremath{\conid{Value}_{\mathbf{ne}}}
\end{array}}\]
\arraycolsep=2pt
\[\begin{array}{lcl}
  α_\Events(σ) & = & \begin{cases}
    \ensuremath{\conid{Let}_{1}} & \text{if }σ = (\Let{\px}{\wild}{\wild},\wild,\wild,\wild) \\
    \ensuremath{\conid{App}_{1}} & \text{if }σ = (\pe~\px,\wild,\wild,\wild) \\
    \ensuremath{\conid{Case}_{1}} & \text{if }σ = (\Case{\wild}{\wild},\wild,\wild, \wild)\\
    \ensuremath{\conid{Look}\;\varid{y}} & \text{if }σ = (\px,ρ,μ,\wild), μ(ρ(\px)) = (\py,\wild,\wild) \\
    \ensuremath{\conid{App}_{2}} & \text{if }σ = (\Lam{\wild}{\wild},\wild,\wild,\ApplyF(\wild) \pushF \wild) \\
    \ensuremath{\conid{Case}_{2}} & \text{if }σ = (K~\wild, \wild, \wild, \SelF(\wild,\wild) \pushF \wild) \\
    \ensuremath{\conid{Upd}} & \text{if }σ = (\pv,\wild,\wild,\UpdateF(\wild) \pushF \wild) \\
  \end{cases} \\
  \\[-0.75em]
  α_\Environments([\many{\px ↦ \pa}], μ) & = & [\many{\ensuremath{\varid{x}} ↦ \ensuremath{\conid{Step}\;(\conid{Look}\;\varid{y})\;(\varid{fetch}\;\varid{a})} \mid μ(\pa) = (\py,\wild,\wild)}] \\
  \\[-0.75em]
  α_\Heaps([\many{\pa ↦ (\wild,ρ,\pe)}]) & = & [\many{\ensuremath{\varid{a}} ↦ \ensuremath{\varid{memo}\;\varid{a}\;(\mathcal{S}\denot{\varid{e}}_{α_\Environments(\varcolor{\rho},\varcolor{\mu})})}}] \\
  \\[-0.75em]
  α_\Values(σ,κ_0) & = & \begin{cases}
    \ensuremath{\conid{Fun}\;(\lambda \varid{d}\to \conid{Step}\;\conid{App}_{2}\;(\mathcal{S}\denot{\varid{e}}_{(α_\Environments(\varcolor{\rho},\varcolor{\mu}))[\varid{x}\mapsto\varid{d}]}))} & \text{if } σ = (\Lam{\px}{\pe},ρ,μ,κ_0) \\
    \ensuremath{\conid{Con}\;\varid{k}\;(\varid{map}\;(α_\Environments(\varcolor{\rho},\varcolor{\mu})\mathop{!})\;\varid{xs})}                         & \text{if } σ = (K~\overline{\px},ρ,μ,κ_0) \\
    \ensuremath{\conid{Stuck}}                                               & \text{otherwise} \\
  \end{cases} \\
  \\[-0.75em]
  α_{\STraces}((σ_i)_{i∈\overline{n}},κ_0) & = & \begin{cases}
    \ensuremath{\conid{Step}\;( α_\Events(σ_0) )\;\idiom{α_{\STraces}((σ_{i+1})_{i∈\overline{n-1}},\kappa_0)}} & \text{if }n > 0 \\
    \ensuremath{\conid{Ret}\;( α_\Values(σ_0,κ_0) ,α_\Heaps(\varcolor{\mu}))} & \text{where }(\wild,\wild, μ, \wild) = σ_0
  \end{cases}
\end{array}\]
\vspace{-1em}
\caption{Abstraction function $α_{\STraces}$ from LK machine to \ensuremath{\mathcal{S}_{\mathbf{need}}\denot{\wild}_{\wild}}}
  \label{fig:eval-correctness}
\end{figure}

For proving adequacy of \ensuremath{\mathcal{S}_{\mathbf{need}}\denot{\wild}_{\wild}}, we give an abstraction function
$α_{\STraces}$ from small-step traces $\STraces$ in the lazy Krivine machine
(\Cref{fig:lk-semantics}) to denotational traces \ensuremath{\conid{T}}, with \ensuremath{\conid{Events}} and all,
such that
\[
  α_{\STraces}(\init(\pe) \smallstep ..., \StopF) = \ensuremath{\mathcal{S}_{\mathbf{need}}\denot{\varid{e}}_{\varcolor{\varepsilon}}(\varcolor{\varepsilon})},
\]
where $\init(\pe) \smallstep ...$ denotes the \emph{maximal} (\ie longest
possible) LK trace evaluating the closed expression $\pe$.
For example, for the LK trace \labelcref{ex:trace2}, $α_{\STraces}$ produces
the trace at the end of
\hyperlink{ex:eval-need-trace2}{\Cref*{sec:by-need-instance}}.

It turns out that function $α_{\STraces}$, defined in
\Cref{fig:eval-correctness}, preserves a number of important observable
properties, such as termination behavior (\ie stuck, diverging, or balanced
execution~\citep{Sestoft:97}), length of the trace and transition events, as
expressed in the following theorem:

\begin{toappendix}
\subsection{Adequacy}
\label{sec:adequacy-detail}
\end{toappendix}

\begin{theoremrep}[Strong Adequacy]
  \label{thm:need-adequate-strong}
  Let \ensuremath{\varid{e}} be a closed expression, \ensuremath{\varcolor{\tau}\triangleq\mathcal{S}_{\mathbf{need}}\denot{\varid{e}}_{\varcolor{\varepsilon}}(\varcolor{\varepsilon})} the
  denotational by-need trace and $\init(\pe) \smallstep ...$ the maximal lazy
  Krivine trace.
  Then
  \begin{itemize}
    \item \ensuremath{\varcolor{\tau}} preserves the observable termination properties of $\init(\pe) \smallstep ...$
      in the above sense.
    \item \ensuremath{\varcolor{\tau}} preserves the length of $\init(\pe) \smallstep ...$ (\ie number of \ensuremath{\conid{Step}}s equals number of $\smallstep$).
    \item every \ensuremath{\varid{ev}\mathbin{::}\conid{Event}} in \ensuremath{\varcolor{\tau}\mathrel{=}\many{\conid{Step}\;\varid{ev}\;\mathbin{...}}} corresponds to the
      transition rule taken in $\init(\pe) \smallstep ...$.
  \end{itemize}
\end{theoremrep}
\begin{proofsketch}
  Generalise $α_{\STraces}(\init(\pe) \smallstep ..., \StopF) = \ensuremath{\mathcal{S}_{\mathbf{need}}\denot{\varid{e}}_{\varcolor{\varepsilon}}(\varcolor{\varepsilon})}$ to
  open configurations and prove it by Löb induction.
  Then it suffices to prove that $α_{\STraces}$ preserves the observable properties of
  interest.
  The full proof for a rigorous reformulation of the proposition can be found in \Cref*{sec:adequacy-detail}.
\end{proofsketch}
\begin{proof}
  $\ensuremath{\mathcal{S}_{\mathbf{need}}\denot{\varid{e}}_{\varcolor{\varepsilon}}(\varcolor{\varepsilon})} = α(\init(\pe) \smallstep ..., \StopF)$ follows directly
  from \Cref{thm:need-abstracts-lk}.
  The preservation results in are a consequence of \Cref{thm:abs-length} and \Cref{thm:need-adequate};
  function $α_\Events$ in \Cref{fig:eval-correctness} encodes the intuition in
  which LK transitions abstract into \ensuremath{\conid{Event}}s.
\end{proof}

\begin{toappendix}
To formalise the main adequacy result, we must characterise the maximal traces in
the LK transition system and relate them to the trace produced by \ensuremath{\mathcal{S}_{\mathbf{need}}\denot{\wild}_{\wild}}
via the abstraction function in \Cref{fig:eval-correctness} and its associated
correctness relation.

\subsubsection{Maximal Lazy Krivine Traces}

Formally, an LK trace is a trace in $(\smallstep)$ from
\Cref{fig:lk-semantics}, \ie a non-empty and potentially infinite sequence of
LK states $(σ_i)_{i∈\overline{n}}$, such that $σ_i \smallstep σ_{i+1}$ for
$i,(i+1)∈\overline{n}$.
The \emph{length} of $(σ_i)_{i∈\overline{n}}$ is the potentially infinite
number of $\smallstep$ transitions $n$, where infinity is expressed by the first
limit ordinal $ω$.
The notation $\overline{n}$ means $\{ m ∈ ℕ \mid m \leq n \}$ when $n∈ℕ$
is finite (note that $0 ∈ ℕ$), and $ℕ$ when $n = ω$ is infinite.

The \emph{source} state $σ_0$ exists for finite and infinite traces, while the
\emph{target} state $σ_n$ is only defined when $n \not= ω$ is finite.
When the control expression of a state $σ$ (selected via $\ctrl(σ)$) is a value
$\pv$, we call $σ$ a \emph{return} state and say that the continuation (selected
via $\cont(σ)$) drives evaluation.
Otherwise, $σ$ is an \emph{evaluation} state and $\ctrl(σ)$ drives evaluation.

An important kind of trace is an \emph{interior trace}, one that never leaves
the evaluation context of its source state:

\begin{definition}[Deep]
  An LK trace $(σ_i)_{i∈\overline{n}}$ is
  \emph{$κ$-deep} if every intermediate continuation $κ_i \triangleq
  \cont(σ_i)$ extends $κ$ (so $κ_i = κ$ or $κ_i = \wild \pushF κ$,
  abbreviated $κ_i = ...κ$).
\end{definition}

\begin{definition}[Interior]
  A trace $(σ_i)_{i∈\overline{n}}$ is called \emph{interior} (notated as
  $\interior{(σ_i)_{i∈\overline{n}}}$) if it is $\cont(σ_0)$-deep.
\end{definition}

A \emph{balanced trace}~\citep{Sestoft:97} is an interior trace that is about to
return from the initial evaluation context; it corresponds to a big-step
evaluation of the initial focus expression:

\begin{definition}[Balanced]
  An interior trace $(σ_i)_{i∈\overline{n}}$ is
  \emph{balanced} if the target state exists and is a return
  state with continuation $\cont(σ_0)$.
\end{definition}

In the following we give an example for interior and balanced traces.
We will omit the first component of heap entries in the examples because they
bear no semantic significance apart from instrumenting $\LookupT$ transitions,
and it is confusing when the heap-bound expression is a variable $x$,
\eg $(y,ρ,x)$.
Of course, the abstraction function in \Cref{fig:eval-correctness} will need to
look at the first component.
\begin{example}
  Let $ρ=[x↦\pa_1],μ=[\pa_1↦(\wild,[], \Lam{y}{y})]$ and $κ$ an arbitrary
  continuation. The trace
  \[
     (x, ρ, μ, κ) \smallstep (\Lam{y}{y}, ρ, μ, \UpdateF(\pa_1) \pushF κ) \smallstep (\Lam{y}{y}, ρ, μ, κ)
  \]
  is interior and balanced. Its proper prefixes are interior but not balanced.
  The suffix
  \[
     (\Lam{y}{y}, ρ, μ, \UpdateF(\pa_1) \pushF κ) \smallstep (\Lam{y}{y}, ρ, μ, κ)
  \]
  is neither interior nor balanced.
\end{example}

As shown by \citet{Sestoft:97}, a balanced trace starting at a control
expression $\pe$ and ending with $\pv$ corresponds to a derivation of $\pe
\Downarrow \pv$ in a big-step semantics or a non-$⊥$ result in a Scott-style
denotational semantics.
It is when a derivation in a big-step semantics does \emph{not} exist that a
small-step semantics shows finesse.
In this case, a small-step semantics differentiates two different kinds
of \emph{maximally interior} (or, just \emph{maximal}) traces, namely
\emph{diverging} and \emph{stuck} traces:

\begin{definition}[Maximal]
  An LK trace $(σ_i)_{i∈\overline{n}}$ is \emph{maximal} if and only if it is
  interior and there is no $σ_{n+1}$ such that $(σ_i)_{i∈\overline{n+1}}$ is
  interior.
  More formally,
  \[
    \maxtrace{(σ_i)_{i∈\overline{n}}} \triangleq \interior{(σ_i)_{i∈\overline{n}}} \wedge (\not\exists σ_{n+1}.\ σ_n \smallstep σ_{n+1} \wedge \cont(σ_{n+1}) = ...\cont(σ_0)).
  \]
\end{definition}

\begin{definition}[Diverging]
  An infinite and interior trace is called \emph{diverging}.
\end{definition}

\begin{definition}[Stuck]
  A finite, maximal and unbalanced trace is called \emph{stuck}.
\end{definition}

Usually, stuckness is associated with a state of a transition
system rather than a trace.
That is not possible in our framework; the following example clarifies.

\begin{example}[Stuck and diverging traces]
\label{ex:stuck-div}
Consider the interior trace
\[
             (\ttrue~x, [x↦\pa_1], [\pa_1↦...], κ)
  \smallstep (\ttrue, [x↦\pa_1], [\pa_1↦...], \ApplyF(\pa_1) \pushF κ),
\]
where $\ttrue$ is a data constructor.
It is stuck, but its singleton suffix is balanced.
An example for a diverging trace, where $ρ=[x↦\pa_1]$ and $μ=[\pa_1↦(\wild,ρ,x)]$, is
\[
  (\Let{x}{x}{x}, [], [], κ) \smallstep (x, ρ, μ, κ) \smallstep (x, ρ, μ, \UpdateF(\pa_1) \pushF κ) \smallstep ...
\]
\end{example}

Note that balanced traces are maximal traces as well.
In fact, balanced, diverging and stuck traces are the \emph{only} three kinds of
maximal traces, as the following lemma formalises:

\begin{lemmarep}[Characterisation of maximal traces]
  An LK trace $(σ_i)_{i∈\overline{n}}$ is maximal if and only if it is balanced,
  diverging or stuck.
\end{lemmarep}
\begin{proof} $ $
\begin{itemize}
\item[$\Rightarrow$:]
  Let $(σ_i)_{i∈\overline{n}}$ be maximal.
  Then it is interior by definition.
  Thus, if $n=ω$ is infinite, then it is diverging.
  Otherwise, $(σ_i)_{i∈\overline{n}}$ is finite.
  If it is \emph{not} balanced, then it is still maximal and finite, hence stuck.
  Otherwise, it is balanced.

\item[$\Leftarrow$:]
  Let $(σ_i)_{i∈\overline{n}}$ be balanced, diverging or stuck. \\
  If $(σ_i)_{i∈\overline{n}}$ is balanced, then it is interior, and $σ_n$ is
  a return state with continuation $\cont(σ_0)$.
  Now suppose there would exist $σ_{n+1}$ such that
  $(σ_i)_{i∈\overline{n+1}}$ was interior.
  Then the transition $σ_n \smallstep σ_{n+1}$ must be one of the
  \emph{returning transition rules} $\UpdateT$, $\AppET$ and $\CaseET$, which
  are the only ones in which $σ_n$ is a return state (\ie $\ctrl(σ_n)$ is a
  value $\pv$).
  But all return transitions leave $\cont(σ_0)$, which is in contradiction to
  interiority of $(σ_i)_{i∈\overline{n+1}}$.
  Thus, $(σ_i)_{i∈\overline{n}}$ is maximal. \\
  If $(σ_i)_{i∈\overline{n}}$ is diverging, it is interior and infinite,
  hence $n=ω$.
  Indeed $(σ_i)_{i∈\overline{ω}}$ is maximal, because the expression $σ_{ω+1}$
  is undefined and hence does not exist. \\
  If $(σ_i)_{i∈\overline{n}}$ is stuck, then it is maximal by definition.
\end{itemize}
\end{proof}

Interiority guarantees that the particular initial stack $\cont(σ_0)$ of a
maximal trace is irrelevant to execution, so maximal traces that differ only in
the initial stack are bisimilar.
This is a consequence of the fact that the semantics of a called function may
not depend on the contents of the call stack; this fact is encoded implicitly in
big-step derivations.

\subsubsection{Abstraction preserves Termination Observable}

One class of maximal traces is of particular interest:
the maximal trace starting in $\init(\pe)$!
Whether it is infinite, stuck or balanced is the semantically defining
\emph{termination observable} of $\pe$.
If we can show that \ensuremath{\mathcal{S}_{\mathbf{need}}\denot{\varid{e}}_{\varcolor{\varepsilon}}(\varcolor{\varepsilon})} distinguishes these behaviors of
$\init(\pe) \smallstep ...$, we have proven \ensuremath{\mathcal{S}_{\mathbf{need}}\denot{\wild}_{\wild}} \emph{adequate}, and may
use it as a replacement for the LK transition system.

In order to show that \ensuremath{\mathcal{S}_{\mathbf{need}}\denot{\wild}_{\wild}} preserves the termination observable of \ensuremath{\varid{e}},
\begin{itemize}
\setlength{\itemsep}{0pt}
\item
  we define a family of abstraction functions $α$ from LK traces to by-need
  traces, formally in \ensuremath{\conid{T}\;(\conid{Value}_{\mathbf{ne}},\conid{Heap}_{\mathbf{ne}})} (\Cref{fig:eval-correctness}),
\item
  we show that this function $α$ preserves the termination observable of a given
  LK trace $\init(\pe) \smallstep ...$ (\Cref{thm:abs-max-trace}), and
\item
  we show that running \ensuremath{\mathcal{S}_{\mathbf{need}}\denot{\varid{e}}_{\varcolor{\varepsilon}}(\varcolor{\varepsilon})} is the same as mapping $α$ over
  the LK trace $\init(\pe) \smallstep ...$, hence the termination behavior is
  observable (\Cref{thm:need-abstracts-lk}).
\end{itemize}

In the following, we will omit repeated wrapping and unwrapping of the \ensuremath{\conid{ByNeed}}
newtype wrapper when constructing and taking apart elements in \ensuremath{\conid{D}_{\mathbf{ne}}}.
Furthermore, we will sometimes need to disambiguate the clashing definitions from
\Cref{sec:interp} and \Cref{sec:problem} by adorning ``Haskell objects'' with a
tilde, in which case \ensuremath{\varcolor{\tm}\triangleqα_\Heaps(\varcolor{\mu})\mathbin{::}\conid{Heap}_{\mathbf{ne}}} denotes a semantic
by-need heap, defined as an abstraction of a syntactic LK heap $μ ∈ \Heaps$.

Now consider the trace abstraction function $α_{\STraces}$ from
\Cref{fig:eval-correctness}.
It maps syntactic entities in the transition system to the definable entities
in the denotational by-need trace domain \ensuremath{\conid{T}\;(\conid{Value}_{\mathbf{ne}},\conid{Heap}_{\mathbf{ne}}}), henceforth
abbreviated as \ensuremath{\conid{T}} because it is the only use of the type constructor \ensuremath{\conid{T}} in
this subsection.

$α_{\STraces}$ is defined by guarded recursion over the LK trace, in the
following sense:
We regard $(σ_i)_{i∈\overline{n}}$ as a dependent pair type $\STraces
\triangleq ∃n∈ℕ_ω.\ \overline{n} \to \States$, where $ℕ_ω$ is defined
by guarded recursion as \ensuremath{\keyword{data}\;ℕ_ω\mathrel{=}\conid{Z}\mid \conid{S}\;(\later\!\;ℕ_ω)}.
Now $ℕ_ω$ contains all natural numbers (where $n$ is encoded as
\ensuremath{(\conid{S}\hsdot{\circ }{.\ }\varid{next}\;\!)^n~\conid{Z}}) and the transfinite limit ordinal
\ensuremath{\varcolor{\omega}\mathrel{=}\varid{fix}\;(\conid{S}\hsdot{\circ }{.\ }\varid{next})}.
We will assume that addition and subtraction are defined as on Peano numbers,
and \ensuremath{\varcolor{\omega}\mathbin{+}\anonymous \mathrel{=}\anonymous \mathbin{+}\varcolor{\omega}\mathrel{=}\varcolor{\omega}}.
When $(σ_i)_{i∈\overline{n}} ∈ \STraces$ is an LK trace and $n > 1$, then
$(σ_{i+1})_{i∈\overline{n-1}} ∈ \later \STraces$ is the guarded tail of the
trace.

As such, the expression $\idiom{α_{\STraces}((σ_{i+1})_{i∈\overline{n-1}},κ_0)}$
has type \ensuremath{\later\!\;\conid{T}}, where the $\later$ in the type of
$(σ_{i+1})_{i∈\overline{n-1}}$ maps through $α_{\STraces}$ via the idiom
brackets.
Definitional equality $=$ on \ensuremath{\conid{T}} is defined in the obvious structural way by
guarded recursion (as it would be if it was a finite, inductive type).

Function $α_{\STraces}$ expects an LK trace as well as a continuation parameter
$κ_0$ that remains constant; it is initialised to the continuation of the
source state $\cont(σ_0)$ in order to tell apart stuck from balanced traces
when $α_{\Values}$ is ultimately called on the target state.
To that end, the first two equations of $α_{\Values}$ will not match unless the
final continuation is the same as the initial continuation $\cont(σ_0)$,
indicated by the $(κ = κ_0)$ test at the end of the line.

The event abstraction function $α_\Events(σ)$ encodes how intensional
information from small-step transitions is retained as \ensuremath{\conid{Event}}s.
Its implementation does not influence the adequacy result, and we imagine
that this function is tweaked on an as-needed basis to retain more or less
intensional detail, depending on the particular trace property one is interested
in observing.
In our example, we focus on \ensuremath{\conid{Look}\;\varid{y}} events that carry with them the \ensuremath{\varid{y}\mathbin{::}\conid{Name}} of the let binding that allocated the heap entry.
This event corresponds precisely to a $\LookupT(\py)$ transition, so $α_\Events(σ)$
maps $σ$ to \ensuremath{\conid{Look}\;\varid{y}} when $σ$ is about to make a $\LookupT(\py)$ transition.
In that case, the focus expression must be $\px$, and $\py$ is the first
component of the heap entry $μ(ρ(\px))$.
The other cases are similar.

Our first goal is to establish a few auxiliary lemmas showing what kind of
properties of LK traces are preserved by $α_{\STraces}$, and in which way.
Let us warm up by defining a length function on traces:
\begin{hscode}\SaveRestoreHook
\column{B}{@{}>{\hspre}l<{\hspost}@{}}%
\column{3}{@{}>{\hspre}l<{\hspost}@{}}%
\column{19}{@{}>{\hspre}l<{\hspost}@{}}%
\column{E}{@{}>{\hspre}l<{\hspost}@{}}%
\>[3]{}\varid{len}\mathbin{::}\conid{T}\;\varid{a}\to ℕ_ω{}\<[E]%
\\
\>[3]{}\varid{len}\;(\conid{Ret}\;\anonymous ){}\<[19]%
\>[19]{}\mathrel{=}\conid{Z}{}\<[E]%
\\
\>[3]{}\varid{len}\;(\conid{Step}\;\anonymous \;\varcolor{\tau^{\later}})\mathrel{=}\conid{S}\;\idiom{\varid{len}\;\varcolor{\tau^{\later}}}{}\<[E]%
\ColumnHook
\end{hscode}\resethooks
The length is an important property of LK traces that is preserved by $α$.
\begin{lemma}[Abstraction preserves length]
  \label{thm:abs-length}
  Let $(σ_i)_{i∈\overline{n}}$ be a trace. Then
  \[ \ensuremath{\varid{len}\;( α_{\STraces}((σ_i)_{i∈\overline{n}},\cont(σ_0)) )} = n. \]
\end{lemma}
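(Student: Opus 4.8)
I would prove this by a direct guarded (Löb) induction on the LK trace $(σ_i)_{i∈\overline{n}}$, pairing the two defining clauses of $α_{\STraces}$ in \Cref{fig:eval-correctness} against the two defining clauses of \ensuremath{\varid{len}}. The key observation that makes this routine is that the length of $α_{\STraces}(\dots)$ depends only on the \emph{spine} of the produced \ensuremath{\conid{T}}-trace: the decorations $α_\Events(σ_0)$ sitting on each \ensuremath{\conid{Step}} and the final $α_\Values/α_\Heaps$ pair inside \ensuremath{\conid{Ret}} are payload that \ensuremath{\varid{len}} discards. Hence I never need to unfold $α_\Events$, $α_\Values$ or $α_\Heaps$, and the continuation argument $κ_0$ (initialised to $\cont(σ_0)$ and kept fixed by the recursion) plays no role either.

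Concretely, recall $\STraces \triangleq ∃n∈ℕ_ω.\ \overline{n} \to \States$ and that $ℕ_ω$ is the guarded-recursive type with constructors \ensuremath{\conid{Z}} and \ensuremath{\conid{S}\;(\later ℕ_ω)}. I argue by Löb induction, case-splitting on $n$. If $n = \conid{Z}$, the second clause of $α_{\STraces}$ yields $α_{\STraces}((σ_i)_{i∈\overline{0}},κ_0) = \conid{Ret}\;(\dots)$, so $\varid{len}$ of it is $\conid{Z} = n$. If $n = \conid{S}\;n'$ with $n' : \later ℕ_ω$, the guarded tail $(σ_{i+1})_{i∈\overline{n-1}} : \later\STraces$ has length $n'$, and the first clause yields
\[
  α_{\STraces}((σ_i)_{i∈\overline{n}},κ_0) = \conid{Step}\;(α_\Events(σ_0))\;\idiom{α_{\STraces}((σ_{i+1})_{i∈\overline{n-1}},κ_0)}.
\]
Applying \ensuremath{\varid{len}} gives $\conid{S}\;\idiom{\varid{len}\;( α_{\STraces}((σ_{i+1})_{i∈\overline{n-1}},κ_0) )}$, which by the Löb hypothesis applied to the tail equals $\conid{S}\;n' = n$, closing the induction.

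I do not expect a genuine obstacle here: the proposition is essentially definitional once $α_{\STraces}$ and \ensuremath{\varid{len}} have both been set up by guarded recursion over the same index $n$. The only thing requiring care is the guarded-recursion bookkeeping — threading the $\later$ modality and the idiom brackets through, and matching the successor structure of $ℕ_ω$ — which is routine and of a piece with the other abstraction-preservation lemmas in this subsection.
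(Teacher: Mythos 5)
Your proof is correct and follows essentially the same route as the paper's: a Löb induction case-splitting on $n$, matching the \ensuremath{\conid{Ret}} clause against \ensuremath{\conid{Z}} and the \ensuremath{\conid{Step}} clause against \ensuremath{\conid{S}}, applying the delayed induction hypothesis to the guarded tail via the applicative structure of $\later$. The observation that \ensuremath{\varid{len}} ignores the event/value payload and the constant continuation $κ_0$ is also implicit in the paper's argument, so there is nothing to add.
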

\begin{proof}
  This is quite simple to see and hence a good opportunity to familiarise
  ourselves with Löb induction, the induction principle of guarded recursion.
  Löb induction arises simply from applying the guarded recursive fixpoint
  combinator to a proposition:
  \[
    \textsf{löb} = \fix : \forall P.\ (\later P \Longrightarrow P) \Longrightarrow P
  \]
  That is, we assume that our proposition holds \emph{later}, \ie
  \[
    IH ∈ (\later P_{\ensuremath{\varid{len}}} \triangleq \later (
        \forall (σ_i)_{i∈\overline{n}}.\
        \ensuremath{\varid{len}\;( α_{\STraces}((σ_i)_{i∈\overline{n}},\cont(σ_0)) )} = n
      )),
  \]
  and use $IH$ to prove $P_{\ensuremath{\varid{len}}}$.

  To that end, let $(σ_i)_{i∈\overline{n}}$ be an LK trace and define
  $\ensuremath{\varcolor{\tau}} \triangleq α_{\STraces}((σ_i)_{i∈\overline{n}},\cont(σ_0))$.
  Now proceed by case analysis on $n$:
  \begin{itemize}
    \setlength{\itemsep}{0pt}
    \item \textbf{Case \ensuremath{\conid{Z}}}: Then we have either
      \ensuremath{\varcolor{\tau}\mathrel{=}\conid{Ret}\;\anonymous } which maps to \ensuremath{\conid{Z}} under \ensuremath{\varid{len}}.
    \item \textbf{Case \ensuremath{\conid{S}\;\idiom{\varid{m}}}}: Then
      \ensuremath{\varcolor{\tau}\mathrel{=}\conid{Step}\;\anonymous \;\idiom{α_{\STraces}((σ_{i+1})_{i∈\overline{m}},\cont(σ_0))}},
      where $(σ_{i+1})_{i∈\overline{m}} ∈ \later \STraces$ is the guarded
      tail of the LK trace $(σ_i)_{i∈\overline{n}}$.
      Now we apply the inductive hypothesis, as follows:
      \[
        (IH \aplater (σ_{i+1})_{i∈\overline{m}}) \in \later (\ensuremath{\varid{len}\;( α_{\STraces}((σ_{i+1})_{i∈\overline{m}},\cont(σ_0)) )} = m).
      \]
      We use this fact and congruence to prove
      \[\begin{array}{lclcl}
        n & = & \ensuremath{\conid{S}\;\idiom{\varid{m}}} & = & \ensuremath{\conid{S}\;\idiom{\varid{len}\;( α_{\STraces}((σ_{i+1})_{i∈\overline{m}},\cont(σ_0)) )}} \\
          &   &               & = & \ensuremath{\varid{len}\;( α_{\STraces}((σ_{i})_{i∈\overline{n}},\cont(σ_0)) )}.
      \end{array}\]
  \end{itemize}
\end{proof}

It is rewarding to study the use of Löb induction in the proof above in detail,
because many proofs in this subsection as well as \Cref{sec:soundness} will make
good use of it.

The next step is to prove that $α_{\STraces}$ preserves the termination
observable; then all that is left to do is to show that \ensuremath{\mathcal{S}_{\mathbf{need}}\denot{\wild}_{\wild}} abstracts
LK traces via $α_{\STraces}$.
The preservation property is formally expressed as follows:

\begin{lemmarep}[Abstraction preserves termination observable]
  \label{thm:abs-max-trace}
  Let $(σ_i)_{i∈\overline{n}}$ be a maximal trace.
  Then $α_{\STraces}((σ_i)_{i∈\overline{n}}, cont(σ_0))$
  \begin{itemize}
  \setlength{\itemsep}{0pt}
    \item ends in \ensuremath{\conid{Ret}\;(\conid{Fun}\;\anonymous ,\anonymous )} or \ensuremath{\conid{Ret}\;(\conid{Con}\;\anonymous \;\anonymous ,\anonymous )} if and only
      if $(σ_i)_{i∈\overline{n}}$ is balanced.
    \item is infinite if and only if $(σ_i)_{i∈\overline{n}}$ is diverging.
    \item ends in \ensuremath{\conid{Ret}\;(\conid{Stuck},\anonymous )} if and only if $(σ_i)_{i∈\overline{n}}$ is stuck.
  \end{itemize}
\end{lemmarep}
\begin{proof}
  The second point follows by a similar inductive argument as in \Cref{thm:abs-length}.

  In the other cases, we may assume that $n$ is finite.
  If $(σ_i)_{i∈\overline{n}}$ is balanced, then $σ_n$ is a return state with
  continuation $\cont(σ_0)$, so its control expression is a value.
  Then $α_{\STraces}$ will conclude with \ensuremath{\conid{Ret}\;(\varid{αValue}\;\anonymous \;\anonymous )}, and the latter is
  never \ensuremath{\conid{Ret}\;(\conid{Stuck},\anonymous )}.
  Conversely, if the trace ended with \ensuremath{\conid{Ret}\;(\conid{Fun}\;\anonymous )} or \ensuremath{\conid{Ret}\;(\conid{Con}\;\anonymous \;\anonymous )},
  then $\cont(σ_n) = \cont(σ_0)$ and $\ctrl(σ_n)$ is a value, so
  $(σ_i)_{i∈\overline{n}}$ forms a balanced trace.
  The stuck case is similar.
\end{proof}

The previous lemma allows us to apply the classifying terminology of maximal LK
traces to a \ensuremath{\varcolor{\tau}\mathbin{::}\conid{T}} in the range of $α_{\STraces}$.
For such a maximal \ensuremath{\varcolor{\tau}} we will say that it is balanced when it ends with
\ensuremath{\conid{Ret}\;(\varid{v},\varcolor{\mu})} for a \ensuremath{\varid{v}\mathrel{\clipbox{0pt 1.5pt 3pt 1.5pt}{\slantbox[.3]{$\models$}}\hspace{0.5pt}\clipbox{0pt 0pt 2.5pt 0pt}{$=$}}\conid{Stuck}}, stuck if ending in \ensuremath{\conid{Ret}\;(\conid{Stuck},\varcolor{\mu})} and diverging if
infinite.

The final remaining step is to prove that \ensuremath{\mathcal{S}_{\mathbf{need}}\denot{\wild}_{\wild}} produces an abstraction of
traces in the LK machine:

\begin{theorem}[\ensuremath{\mathcal{S}_{\mathbf{need}}\denot{\wild}_{\wild}} abstracts LK machine]
  \label{thm:need-abstracts-lk}
  Let $(σ_i)_{i∈\overline{n}}$ be a maximal LK trace with source
  state $(\pe,ρ,μ,κ)$.
  Then $α_{\STraces}((σ_i)_{i∈\overline{n}},κ) = \ensuremath{\mathcal{S}_{\mathbf{need}}\denot{\varid{e}}_{α_\Environments(\varcolor{\rho},\varcolor{\mu})}(α_\Heaps(\varcolor{\mu}))}$,
  where $α_{\STraces}$ is the abstraction function defined in
  \Cref{fig:eval-correctness}.
\end{theorem}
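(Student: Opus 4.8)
The plan is to prove this equation by \emph{Löb induction} over the guarded recursive structure shared by $α_{\STraces}$ and \ensuremath{\conid{T}} — exactly the induction principle used for \Cref{thm:abs-length}. Assuming the equation holds \emph{later} for \emph{every} maximal LK trace, I establish it \emph{now} for $(σ_i)_{i∈\overline{n}}$ by case analysis on the control expression $e$ of $σ_0 = (e,ρ,μ,κ)$, equivalently on which LK rule (if any) fires first. Two families of auxiliary facts are used throughout. First, monad-law reasoning for \ensuremath{\conid{T}} and \ensuremath{\conid{ByNeed}\;\conid{T}}: $\conid{Ret}\;v \bind k = k\;v$ and $\conid{Step}\;\varid{ev}\;t \bind k = \conid{Step}\;\varid{ev}\;(t \bind k)$, which let me splice a monadic continuation onto the terminal $\conid{Ret}$ of a returning sub-trace, plus the fact that $\varid{step}$ for \ensuremath{\conid{ByNeed}} is just threading $\conid{Step}$ pointwise over the heap. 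Second, compatibility lemmas for the abstraction functions: $α_\Environments(ρ,μ)$ depends only on the first (let-name) component of heap entries, so it is unchanged by $\UpdateT$ and by heap extension at fresh addresses, while $α_\Heaps(μ[\pa↦(\px,ρ',\pe)])$ equals $α_\Heaps(μ)[\pa↦\varid{memo}\;\pa\;(\mathcal{S}\denot{\pe}_{α_\Environments(ρ',μ[\pa↦(\px,ρ',\pe)])})]$ — literally the defining clause of $α_\Heaps$ — and $α_\Environments(ρ[\px↦\pa],μ) = α_\Environments(ρ,μ)[\px↦\varid{step}\;(\conid{Look}\;\py)\;(\varid{fetch}\;\pa)]$ when $μ(\pa) = (\py,\wild,\wild)$.

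The cases where $e$ is a value or an unbound variable are immediate. If $e = \px$ with $\px\notin\dom(ρ)$, the LK trace is the stuck singleton and both sides reduce to $\conid{Ret}\;(\conid{Stuck},α_\Heaps(μ))$. If $e$ is a value $\Lam{\px}{\pe'}$ or $K~\many{\px}$, then $σ_0$ is a return state and maximality forces the trace to be the singleton (any returning transition leaves $\cont(σ_0)$); here $α_\Values(σ_0,κ)$ matches the value built by \ensuremath{\varid{fun}}/\ensuremath{\varid{con}} in the \ensuremath{\conid{Domain}\;\conid{D}_{\mathbf{ne}}} instance, using the environment-compatibility lemma to identify the two closures. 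The case $e = \Let{\px}{\pe_1}{\pe_2}$ is a direct appeal to the inductive hypothesis: the LK machine takes $\LetIT$, $α_\Events$ emits $\conid{Let}_1$ matching the \ensuremath{\varid{step}\;\conid{Let}_1} inside \ensuremath{\varid{bind}}, and the compatibility lemmas reconcile the extended heaps and environments — provided the address \ensuremath{\varid{nextFree}} picks coincides with the LK machine's choice of $\pa\notin\dom(μ)$, which I would arrange either by fixing \ensuremath{\varid{nextFree}} or by working up to an address bijection.

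The substantive cases are the elimination forms $e = \pe'~\px$, $e = \Case{\pe'}{\ldots}$, and $e = \px$ with $\px∈\dom(ρ)$, where the LK machine pushes a frame ($\ApplyF$, $\SelF$, $\UpdateF$ respectively) and runs a sub-computation before that frame is consumed. To thread the Löb induction through the sub-computation I generalise the statement, carrying a \emph{pending continuation} that reinterprets the residual LK stack as monadic continuations of type \ensuremath{\conid{D}_{\mathbf{ne}}\to \conid{D}_{\mathbf{ne}}}: an $\ApplyF(\pa)$ frame becomes $\bind$-ing with $λv.\ \mathbf{case}~v~\mathbf{of}~\conid{Fun}\;f → f\;(α_\Environments(\ldots)\mathop{!}\ldots);\ \anonymous→\varid{stuck}$, a $\SelF$ frame with the \ensuremath{\varid{select}} alternative, and an $\UpdateF(\pa)$ frame with $λv.\ \conid{ByNeed}\;(\varid{upd}\;\pa\;v)$ from \ensuremath{\varid{memo}}; this is the by-need analogue of the $\mathit{apps}_μ$ device of \Cref{fig:absence-ext}, and the original statement is its empty-stack instance. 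Peeling the first transition then exposes a $\conid{Step}$ on both sides with matching event (by inspection of $α_\Events$), after which I split on whether the sub-computation returns: by the balanced-trace analysis of \citet{Sestoft:97} the remainder is either interior to the enlarged stack forever — then both sides diverge, or both reduce to $\conid{Ret}\;(\conid{Stuck},\wild)$, since a returned value rejected by the pending frame is sent to \ensuremath{\varid{stuck}} semantically while $α_\Values$ returns $\conid{Stuck}$ because the final continuation strictly extends $κ$ — or it has a unique balanced prefix after which the frame is popped. In the latter case the IH on the balanced prefix identifies its $α_{\STraces}$ with the \ensuremath{\conid{D}_{\mathbf{ne}}} sub-trace, the $\bind$ splices the pending frame's action onto the terminal $\conid{Ret}$, producing exactly the $\conid{App}_2$/$\conid{Case}_2$/$\conid{Upd}$ step $α_\Events$ predicts, and the IH on the continuation trace closes the case; the compatibility lemmas again reconcile environments and heaps (e.g.\ $α_\Environments(ρ''[\px↦\pa],μ'') = α_\Environments(ρ'',μ'')[\px↦α_\Environments(ρ,μ)\mathop{!}\ldots]$).

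I expect the $\UpdateF$/\ensuremath{\varid{memo}} sub-case to be the main obstacle. Beyond the splicing argument above, it requires showing that after the memoised value is obtained, the heap rewritten by $\UpdateT$ abstracts to exactly the heap produced by \ensuremath{\varid{upd}} — which needs $\mathcal{S}\denot{\pv}_{α_\Environments(\ldots)} = \varid{return}\;(α_\Values(\ldots))$ for values $\pv$ (the interpreter returns on a value with no steps) together with the insensitivity of $α_\Environments$ to the rewritten closure components — and it must cope with a thunk that re-forces itself, where the guarded fixpoint implicit in $α_\Heaps$ and in the self-referential entry $\varid{memo}\;\pa\;(\varid{return}\;\pv)$ has to line up with the LK machine's in-place update. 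Löb induction is precisely what tames this recursion, since the $\conid{Look}$ and $\conid{Upd}$ steps guard it; the remaining work — which heap abstraction ($α_\Heaps(μ)$ versus $α_\Heaps(μ')$) and which environment appears on each side — is routine bookkeeping once the generalised statement is in place.
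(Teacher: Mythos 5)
Your proposal is correct and its skeleton — Löb induction, case analysis on the control expression, monad-law splicing of continuations onto the terminal $\conid{Ret}$, and singling out the memoisation/$\UpdateT$ interaction as the delicate step — matches the paper's proof closely; your identification of the two key facts for the update case (values evaluate to $\conid{Ret}$ of their $α_{\Values}$-image in zero steps, and $α_\Environments$/$α_\Heaps$ are insensitive to the closure components rewritten by $\UpdateT$) is exactly what the paper's \textsc{Var} case uses. The one genuine architectural difference is your proposed generalisation to a statement carrying a pending continuation, in the style of $\mathit{apps}_μ$ from \Cref{fig:absence-ext}. The paper does not need this: its theorem statement already quantifies over \emph{arbitrary} source states $(\pe,ρ,μ,κ)$, and $α_{\STraces}$ takes the source continuation as its $κ_0$ parameter, so the Löb hypothesis can be invoked directly on the maximal sub-trace whose source state carries the pushed frame (e.g.\ $\UpdateF(\pa) \pushF κ$ or $\ApplyF(\pa) \pushF κ$); interiority guarantees that this sub-trace is a genuine factor of the original one, and the interpreter's own monadic bind performs the continuation splicing that your explicit pending-continuation parameter would encode. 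Your formulation buys a more explicit account of the balanced-stack factorisation (and makes the stuck-on-return analysis very transparent), at the cost of building and maintaining an extra semantic device; the paper's buys a leaner statement but leaves the factorisation of the trace through the balanced prefix implicit. Your remark about reconciling \ensuremath{\varid{nextFree}} with the machine's choice of fresh address is a real bookkeeping point that the paper glosses over, and working up to an address bijection (or fixing the allocator) is the right way to discharge it.
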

\begin{proof}
Let us abbreviate the proposed correctness relation as
\[\begin{array}{c}
  P_{α}((σ_i)_{i∈\overline{n}})
  \triangleq
  \maxtrace{(σ_i)_{i∈\overline{n}}}
  \Longrightarrow
  α_{\STraces}((σ_i)_{i∈\overline{n}},κ) = \ensuremath{\mathcal{S}_{\mathbf{need}}\denot{\varid{e}}_{α_\Environments(\varcolor{\rho},\varcolor{\mu})}(α_\Heaps(\varcolor{\mu}))} \\
  \hspace{12em}\text{where }(\pe,ρ,μ,κ) = σ_0
\end{array}\]
We prove it by Löb induction, with $IH ∈ \later P_{α}$ as the inductive hypothesis.

Now let $(σ_i)_{i∈\overline{n}}$ be a maximal LK trace with source state
$σ_0=(\pe,ρ,μ,κ)$ and let \ensuremath{\varcolor{\tau}\triangleq\mathcal{S}_{\mathbf{need}}\denot{\varid{e}}_{α_\Environments(\varcolor{\rho},\varcolor{\mu})}(α_\Heaps(\varcolor{\mu}))}.
Then the goal is to show $α_{\STraces}((σ_i)_{i∈\overline{n}},κ) = \ensuremath{\varcolor{\tau}}$.
We do so by cases over $\pe$, abbreviating \ensuremath{\varcolor{\tm}\triangleqα_\Heaps(\varcolor{\mu})} and \ensuremath{\varcolor{\tr}\triangleqα_\Environments(\varcolor{\rho},\varcolor{\mu})}:
\begin{itemize}
  \item \textbf{Case $\px$}:
    Note first that $\LookupT$ is the only applicable transition rule according
    to rule inversion on $\ctrl(σ_0) = \px$.

    In case that $n = 0$, $(σ_i)_{i∈\overline{n}}$ is stuck because
    $\ctrl(σ_0)$ is not a value, hence $α_{\STraces}$ returns \ensuremath{\conid{Ret}\;(\conid{Stuck},\anonymous )}.
    Since $\LookupT$ does not apply (otherwise $n > 0$), we must have $\px
    \not∈ \dom(ρ)$, and hence \ensuremath{\varcolor{\tau}\mathrel{=}\conid{Ret}\;(\conid{Stuck},\anonymous )} by calculation as well.

    Otherwise, $σ_1 \triangleq (\pe', ρ_1, μ, \UpdateF(\pa) \pushF κ), σ_0 \smallstep σ_1$
    via $\LookupT(\py)$, and $ρ(\px) = \pa, μ(\pa) = (\py, ρ_1, \pe')$.
    This matches \ensuremath{\varcolor{\tr}\mathop{!}\varid{x}\mathrel{=}\varid{step}\;(\conid{Look}\;\varid{y})\;(\varid{fetch}\;\varid{a})} in the interpreter.

    It suffices to show that the tails equate \emph{later}.

    We can infer that \ensuremath{\varcolor{\tm}\mathop{!}\varid{a}\mathrel{=}\varid{memo}\;\varid{a}\;(\mathcal{S}_{\mathbf{need}}\denot{\varid{e}'}_{\varcolor{\tr}})} by definition of
    $α_\Heaps$, so
    \begin{hscode}\SaveRestoreHook
\column{B}{@{}>{\hspre}l<{\hspost}@{}}%
\column{7}{@{}>{\hspre}l<{\hspost}@{}}%
\column{9}{@{}>{\hspre}l<{\hspost}@{}}%
\column{18}{@{}>{\hspre}l<{\hspost}@{}}%
\column{E}{@{}>{\hspre}l<{\hspost}@{}}%
\>[7]{}\varid{fetch}\;\varid{a}\;\varcolor{\tm}\mathrel{=}(\varcolor{\tm}\mathop{!}\varid{a})\;\varcolor{\tm}\mathrel{=}\mathcal{S}_{\mathbf{need}}\denot{\varid{e}'}_{\varcolor{\tr}}(\varcolor{\tm})\bind \lambda \keyword{case}{}\<[E]%
\\
\>[7]{}\hsindent{2}{}\<[9]%
\>[9]{}(\conid{Stuck},{}\<[18]%
\>[18]{}\varcolor{\tm})\to \conid{Ret}\;(\conid{Stuck},\varcolor{\tm}){}\<[E]%
\\
\>[7]{}\hsindent{2}{}\<[9]%
\>[9]{}(\varid{val},{}\<[18]%
\>[18]{}\varcolor{\tm})\to \conid{Step}\;\conid{Upd}\;(\conid{Ret}\;(\varid{val},\varcolor{\tm}[\varid{a}\mapsto\varid{memo}\;\varid{a}\;(\varid{return}\;\varid{val})])){}\<[E]%
\ColumnHook
\end{hscode}\resethooks

    Let us define \ensuremath{\varcolor{\tau^{\later}}\triangleq\idiom{\mathcal{S}_{\mathbf{need}}\denot{\varid{e}'}_{\varcolor{\tr}}(\varcolor{\tm})}} and apply the induction
    hypothesis $IH$ to the maximal trace starting at $σ_1$.
    This yields an equality
    \[
      IH \aplater (σ_{i+1})_{i∈\overline{m}} ∈ \idiom{α_{\STraces}((σ_{i+1})_{i∈\overline{m}},\UpdateF(\pa) \pushF κ) = τ^{\later}}
    \]
    Any \ensuremath{\conid{Step}}s in \ensuremath{\varcolor{\tau^{\later}}} match the transitions of
    $(σ_{i+1})_{i∈\overline{m}}$ per $IH$, and \ensuremath{\bind } simply forwards these
    \ensuremath{\conid{Step}}s.
    What remains to be shown is that the continuation passed to \ensuremath{\bind }
    operates correctly.

    If \ensuremath{\varcolor{\tau^{\later}}} is infinite, we are done, because the continuation is never called.
    If \ensuremath{\varcolor{\tau^{\later}}} ends in \ensuremath{\conid{Ret}\;(\conid{Stuck},\varcolor{\tm})}, then the continuation
    will return \ensuremath{\conid{Ret}\;(\conid{Stuck},\varcolor{\tm})}, indicating by \Cref{thm:abs-length} and
    \Cref{thm:abs-max-trace} that $(σ_{i+1})_{i∈\overline{n-1}}$ is stuck and
    hence $(σ_i)_{i∈\overline{n}}$ is stuck as well with the compatible heap
    from $σ_{n-1}$.

    Otherwise \ensuremath{\varcolor{\tau^{\later}}} ends after $m-1$ \ensuremath{\conid{Step}}s with \ensuremath{\conid{Ret}\;(\varid{val},\varcolor{\tm_m})} and
    by \Cref{thm:abs-max-trace} $(σ_{i+1})_{i∈\overline{m}}$ is balanced; hence
    $\cont(σ_m) = \UpdateF(\pa) \pushF κ$ and $\ctrl(σ_m)$ is a value.
    So $σ_m = (\pv,ρ_m,μ_m,\UpdateF(\pa) \pushF κ)$ and the
    $\UpdateT$ transition fires, reaching $(\pv,ρ_m,μ_m[\pa ↦ (\py, ρ_m, \pv)],κ)$
    and this must be the target state $σ_n$ (so $m = n-2$), because it remains
    a return state and has continuation $κ$, so $(σ_i)_{i∈\overline{n}}$ is
    balanced.
    Likewise, the continuation argument of \ensuremath{\bind } does a \ensuremath{\conid{Step}\;\conid{Upd}} on
    \ensuremath{\conid{Ret}\;(\varid{val},\varcolor{\tm_m})}, updating the heap.
    By cases on $\pv$ and the \ensuremath{\conid{Domain}\;(\conid{D}_{\mathbf{ne}})} instance we can see that
    \begin{hscode}\SaveRestoreHook
\column{B}{@{}>{\hspre}l<{\hspost}@{}}%
\column{7}{@{}>{\hspre}c<{\hspost}@{}}%
\column{7E}{@{}l@{}}%
\column{10}{@{}>{\hspre}l<{\hspost}@{}}%
\column{E}{@{}>{\hspre}l<{\hspost}@{}}%
\>[10]{}\conid{Ret}\;(\varid{val},\varcolor{\tm_m}[\varid{a}\mapsto\varid{memo}\;\varid{a}\;(\varid{return}\;\varid{val})]){}\<[E]%
\\
\>[7]{}\mathrel{=}{}\<[7E]%
\>[10]{}\conid{Ret}\;(\varid{val},\varcolor{\tm_m}[\varid{a}\mapsto\varid{memo}\;\varid{a}\;(\mathcal{S}_{\mathbf{need}}\denot{\varid{v}}_{\varcolor{\tr_m}})]){}\<[E]%
\\
\>[7]{}\mathrel{=}{}\<[7E]%
\>[10]{}\conid{Ret}\;(\varid{αValue}\;\sigma_n\;\kappa, α_\Heaps(μ_m[\pa ↦ (\py, ρ_m, \pv)]) ){}\<[E]%
\ColumnHook
\end{hscode}\resethooks
    and this equality concludes the proof, because the heap in $σ_n$ is
    exactly $μ_m[\pa ↦ (\py, ρ_m, \pv)]$.

  \item \textbf{Case $\pe~\px$}:
    The cases where $τ$ gets stuck or diverges before finishing evaluation of
    $\pe$ are similar to the variable case.
    So let us focus on the situation when \ensuremath{\varcolor{\tau^{\later}}\triangleq\idiom{\mathcal{S}_{\mathbf{need}}\denot{\varid{e}}_{\varcolor{\tr}}(\varcolor{\tm})}}
    returns and let $σ_m$ be LK state at the end of the balanced trace
    $(σ_{i+1})_{i∈\overline{m-1}}$ through $\pe$ starting in stack
    $\ApplyF(\pa) \pushF κ$.

    Now, either there exists a transition $σ_m \smallstep σ_{m+1}$, or it does
    not.
    When the transition exists, it must must leave the stack $\ApplyF(\pa)
    \pushF κ$ due to maximality, necessarily by an $\AppET$ transition.
    That in turn means that the value in $\ctrl(σ_m)$ must be a lambda
    $\Lam{\py}{\pe'}$, and $σ_{m+1} = (\pe',ρ_m[\py ↦ ρ(\px)],μ_m,κ)$.

    Likewise, \ensuremath{\varcolor{\tau^{\later}}} ends in
    \hfuzz=1em
    \begin{hscode}\SaveRestoreHook
\column{B}{@{}>{\hspre}l<{\hspost}@{}}%
\column{7}{@{}>{\hspre}l<{\hspost}@{}}%
\column{E}{@{}>{\hspre}l<{\hspost}@{}}%
\>[7]{} α_\Values(σ_m, \ApplyF(\pa) \pushF κ) \mathrel{=}\conid{Fun}\;(\lambda \varid{d}\to \varid{step}\;\conid{App}_{2}\;(\mathcal{S}_{\mathbf{need}}\denot{\varid{e}'}_{\varcolor{\tr_m}[\varid{y}\mapsto\varid{d}]})){}\<[E]%
\ColumnHook
\end{hscode}\resethooks
    (where \ensuremath{\varcolor{\tr_m}} corresponds to the environment in $σ_m$ in the usual way,
    similarly for \ensuremath{\varcolor{\tm_m}}).
    The \ensuremath{\varid{apply}} implementation of \ensuremath{\conid{Domain}\;(\conid{D}_{\mathbf{ne}})} applies the \ensuremath{\conid{Fun}} value
    to the argument denotation \ensuremath{\varcolor{\tr}\mathop{!}\varid{x}}, hence it remains to be shown that
    \ensuremath{\mathcal{S}_{\mathbf{need}}\denot{\varid{e}'}_{\varcolor{\tr_m}[\varid{y}\mapsto\varcolor{\tr}\;\varid{x}]}(\varcolor{\tm_m})} is equal to
    $α_{\STraces}((σ_{i+m+1})_{i∈\overline{k}}, κ)$ \emph{later},
    where $(σ_{i+m+1})_{i∈\overline{k}}$ is the maximal trace starting at
    $σ_{m+1}$.

    We can once again apply the induction hypothesis to this situation.
    From this and our earlier equalities, we get
    $α_{\STraces}((σ_i)_{i∈\overline{n}},κ) = \ensuremath{\varcolor{\tau}}$, concluding the proof of
    the case where there exists a transition $σ_m \smallstep σ_{m+1}$.

    When $σ_m \not\smallstep$, then $\ctrl(σ_m)$ is not a lambda, otherwise
    $\AppET$ would apply.
    In this case, \ensuremath{\varid{apply}} gets to see a \ensuremath{\conid{Stuck}} or \ensuremath{\conid{Con}} value, for which it
    is \ensuremath{\conid{Stuck}} as well.

  \item \textbf{Case $\Case{\pe_s}{\Sel[r]}$}:
    Similar to the application and lookup case.

  \item \textbf{Cases $\Lam{\px}{\pe}$, $K~\many{\px}$}:
    The length of both traces is $n = 0$ and the goal follows by simple calculation.

  \item \textbf{Case $\Let{\px}{\pe_1}{\pe_2}$}:
    Let $σ_0 = (\Let{\px}{\pe_1}{\pe_2},ρ,μ,κ)$.
    Then $σ_1 = (\pe_2, ρ_1, μ',κ)$ by $\LetIT$, where $ρ_1 = ρ[\px↦\pa_{\px,i}],
    μ' = μ[\pa_{\px,i}↦(\px,ρ_1,\pe_1)]$.
    Since the stack does not grow, maximality from the tail $(σ_{i+1})_{i∈\overline{n-1}}$
    transfers to $(σ_{i})_{i∈\overline{n}}$.
    Straightforward application of the induction hypothesis to
    $(σ_{i+1})_{i∈\overline{n-1}}$ yields the equality for the tail (after a bit
    of calculation for the updated environment and heap), which concludes the
    proof.
\end{itemize}
\end{proof}

\Cref{thm:need-abstracts-lk} and \Cref{thm:abs-max-trace} are the key to
proving the following theorem of adequacy, which formalises the intuitive
notion of adequacy from before.

(A state $σ$ is \emph{final} when $\ctrl(σ)$ is a value and $\cont(σ) = \StopF$.)

\begin{theorem}[Adequacy of \ensuremath{\mathcal{S}_{\mathbf{need}}\denot{\wild}_{\wild}}]
  \label{thm:need-adequate}
  Let \ensuremath{\varcolor{\tau}\triangleq\mathcal{S}_{\mathbf{need}}\denot{\varid{e}}_{\varcolor{\varepsilon}}(\varcolor{\varepsilon})}.
  \begin{itemize}
    \item
      \ensuremath{\varcolor{\tau}} ends with \ensuremath{\conid{Ret}\;(\conid{Fun}\;\anonymous ,\anonymous )} or \ensuremath{\conid{Ret}\;(\conid{Con}\;\anonymous \;\anonymous ,\anonymous )} (is balanced) iff there
      exists a final state $σ$ such that $\init(\pe) \smallstep^* σ$.
    \item
      \ensuremath{\varcolor{\tau}} ends with \ensuremath{\conid{Ret}\;(\conid{Stuck},\anonymous )} (is stuck) iff there exists a non-final
      state $σ$ such that $\init(\pe) \smallstep^* σ$ and there exists no $σ'$
      such that $σ \smallstep σ'$.
    \item
      \ensuremath{\varcolor{\tau}} is infinite \ensuremath{\conid{Step}\;\anonymous \;(\conid{Step}\;\anonymous \;\mathbin{...})} (is diverging) iff for all $σ$ with
      $\init(\pe) \smallstep^* σ$ there exists $σ'$ with $σ \smallstep σ'$.
    \item
      The \ensuremath{\varid{e}\mathbin{::}\conid{Event}} in every \ensuremath{\conid{Step}\;\varid{e}\;\mathbin{...}} occurrence in \ensuremath{\varcolor{\tau}} corresponds in
      the intuitive way to the matching small-step transition rule that was
      taken.
  \end{itemize}
\end{theorem}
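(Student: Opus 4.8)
The plan is to derive all four bullets from the two lemmas already established, \Cref{thm:need-abstracts-lk} and \Cref{thm:abs-max-trace}, by instantiating them at the initial configuration. First I would note that the maximal LK trace starting in $\init(\pe) = (\pe,\varepsilon,\varepsilon,\StopF)$ exists and is unique: the transition relation of \Cref{fig:lk-semantics} is deterministic (each rule is selected by a distinct shape of control expression and topmost continuation frame), so it is either the unique finite trace ending in a state with no successor, or the unique infinite one. Since $\alpha_{\Environments}(\varepsilon,\varepsilon)$ is the empty environment and $\alpha_{\Heaps}(\varepsilon)$ the empty heap, \Cref{thm:need-abstracts-lk} applied to this maximal trace, with $\kappa = \StopF$, gives immediately
\[
  \tau = \mathcal{S}_{\mathbf{need}}\denot{\pe}_\varepsilon(\varepsilon) = \alpha_{\STraces}(\init(\pe) \smallstep \ldots, \StopF).
\]

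Next I would specialise the trichotomy of maximal LK traces to this setting. Because $\StopF$ sits at the bottom of every continuation in the grammar of \Cref{fig:lk-semantics}, every continuation ``extends $\StopF$'', hence \emph{every} LK trace with source $\init(\pe)$ is $\StopF$-deep and therefore interior. So ``maximal'' collapses to ``admits no further transition'', and the three cases of \Cref{thm:abs-max-trace} read off as follows: the trace is \emph{balanced} iff it is finite with target a return state under continuation $\StopF$ --- i.e.\ a final state; it is \emph{stuck} iff it is finite with target a non-final state that has no successor; and it is \emph{diverging} iff it is infinite, i.e.\ every reachable state has a successor. Feeding these readings, together with the identity $\tau = \alpha_{\STraces}(\init(\pe)\smallstep\ldots,\StopF)$ and determinism (so that any prefix $\init(\pe)\smallstep^*\sigma$ is a prefix of the unique maximal trace), into the iff-characterisation of the terminal shape of $\alpha_{\STraces}$ from \Cref{thm:abs-max-trace}, yields the first three bullets. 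The only side condition to discharge is that $\alpha_{\Values}$, at a value state with continuation $\StopF$, always produces a $\conid{Fun}$ or $\conid{Con}$ and never $\conid{Stuck}$; this is immediate from its definition in \Cref{fig:eval-correctness}, since final states have a lambda or a constructor application in control.

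For the fourth bullet I would unfold $\alpha_{\STraces}$: on a trace of positive length it emits $\conid{Step}(\alpha_{\Events}(\sigma_0))$ and then, one step later, the abstraction of the tail; and $\alpha_{\Events}(\sigma_0)$ is by construction precisely the event that names the rule about to fire at $\sigma_0$ ($\conid{Look}\,y$ for a $\LookupT(y)$ transition, $\conid{App}_1$ for $\AppIT$, and so forth). Combined with $\tau = \alpha_{\STraces}(\init(\pe)\smallstep\ldots,\StopF)$, a routine Löb induction along the recursion structure of $\alpha_{\STraces}$ (exactly as in the proof of \Cref{thm:abs-length}) shows that the $i$-th $\conid{Step}$ of $\tau$ carries the event corresponding to the $i$-th transition of the LK trace.

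The only real friction I anticipate is notational: reconciling the paper's interiority-and-maximality apparatus --- which is phrased for an arbitrary initial continuation --- with the plainer ``runs to a final state / gets stuck / runs forever'' language of the theorem. As indicated above, that friction essentially disappears once one observes that $\StopF$ is the bottom of every stack, so the substantive content is already carried by \Cref{thm:need-abstracts-lk} and \Cref{thm:abs-max-trace}; what remains is careful case bookkeeping.
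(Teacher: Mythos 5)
Your proposal is correct and follows essentially the same route as the paper's proof: instantiate \Cref{thm:need-abstracts-lk} at $\init(\pe)$ with continuation $\StopF$ to obtain $\tau = \alpha_{\STraces}(\init(\pe)\smallstep\ldots,\StopF)$, use \Cref{thm:abs-max-trace} to translate the shape of $\tau$ into the balanced/stuck/diverging trichotomy, and then match that trichotomy against final states, successor-free non-final states, and always-progressing states (using determinism and the observation that every stack extends $\StopF$, so interiority is automatic). The paper's own argument is just a slightly terser version of the same case bookkeeping, with the event correspondence likewise read off directly from $\alpha_\Events$.
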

\begin{proof}
  There exists a maximal trace $(σ_i)_{i∈\overline{n}}$ starting
  from $σ_0 = \init(\pe)$, and by \Cref{thm:need-abstracts-lk} we have
  $α_{\STraces}((σ_i)_{i∈\overline{n}},\StopF) = τ$.
  The correctness of \ensuremath{\conid{Event}}s emitted follows directly from $α_\Events$.
  \begin{itemize}
    \item[$\Rightarrow$]
      \begin{itemize}
        \item
          If $(σ_i)_{i∈\overline{n}}$ is balanced, its target state $σ_n$
          is a return state that must also have the empty continuation, hence it
          is a final state.
        \item
          If $(σ_i)_{i∈\overline{n}}$ is stuck, it is finite and maximal, but not balanced, so its
          target state $σ_n$ cannot be a return state;
          otherwise maximality implies $σ_n$ has an (initial) empty continuation
          and the trace would be balanced. On the other hand, the only returning
          transitions apply to return states, so maximality implies there is no
          $σ'$ such that $σ \smallstep σ'$ whatsoever.
        \item
          If $(σ_i)_{i∈\overline{n}}$ is diverging, $n=ω$ and for every $σ$ with
          $\init(\pe) \smallstep^* σ$ there exists an $i$ such that $σ = σ_i$ by
          determinism.
      \end{itemize}

    \item[$\Leftarrow$]
      \begin{itemize}
        \item
          If $σ_n$ is a final state, it has $\cont(σ) = \cont(\init(\pe)) = []$,
          so the trace is balanced.
        \item
          If $σ$ is not a final state, $τ'$ is not balanced. Since there is no
          $σ'$ such that $σ \smallstep^* σ'$, it is still maximal; hence it must
          be stuck.
        \item
          Suppose that $n∈ℕ_ω$ was finite.
          Then, if for every choice of $σ$ there exists $σ'$ such that $σ
          \smallstep σ'$, then there must be $σ_{n+1}$ with $σ_n \smallstep
          σ_{n+1}$, violating maximality of the trace.
          Hence it must be infinite.
          It is also interior, because every stack extends the empty stack,
          hence it is diverging.
      \end{itemize}
  \end{itemize}
\end{proof}
\end{toappendix}

\subsection{Totality of \ensuremath{\mathcal{S}_{\mathbf{name}}\denot{\wild}_{\wild}} and \ensuremath{\mathcal{S}_{\mathbf{need}}\denot{\wild}_{\wild}}}
\label{sec:totality}

\begin{theorem}[Totality]
The interpreters \ensuremath{\mathcal{S}_{\mathbf{name}}\denot{\varid{e}}_{\varcolor{\rho}}} and \ensuremath{\mathcal{S}_{\mathbf{need}}\denot{\varid{e}}_{\varcolor{\rho}}(\varcolor{\mu})} are defined for every
\ensuremath{\varid{e}}, \ensuremath{\varcolor{\rho}}, \ensuremath{\varcolor{\mu}}.
\end{theorem}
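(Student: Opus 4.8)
The plan is to prove totality by exhibiting the interpreters as well-defined guarded recursive functions, leveraging the encoding in Guarded Cubical Agda already sketched in \Cref{sec:totality-detail}. The key observation is that \ensuremath{\mathcal{S}\denot{\wild}_{\wild}} is defined by structural recursion on the expression \ensuremath{\varid{e}}, so the only sources of potential non-termination are (a) the recursive knot tied by \ensuremath{\varid{bind}} in the \ensuremath{\conid{Let}} case, and (b) the coinductive unfolding of traces \ensuremath{\conid{T}} through \ensuremath{\varid{step}} and monadic bind \ensuremath{(\bind )}. Both are tamed by the $\later$ modality: every recursive occurrence of a denotation that is passed into the environment (and hence could be forced again) is guarded behind a \ensuremath{\varid{step}\;(\conid{Look}\;\varid{x})} prefix, which under the guarded encoding has type \ensuremath{\later\!\;(\conid{D}\;\varcolor{\tau})\to \conid{D}\;\varcolor{\tau}}. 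Thus \ensuremath{\varid{bind}} can be given the type \ensuremath{\later\!\;(\later\!\;\conid{D}\mathbin{→}\conid{D})\mathbin{→}(\later\!\;\conid{D}\mathbin{→}\conid{D})\mathbin{→}\conid{D}} and the fixpoint it computes is a genuine guarded fixpoint via $\fix$.

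First I would set up the statement precisely: fix the two concrete instances \ensuremath{\conid{D}\;(\conid{ByName}\;\conid{T})} and \ensuremath{\conid{D}\;(\conid{ByNeed}\;\conid{T})}, and observe that it suffices to check that all three type-class methods (\ensuremath{\varid{step}}, the \ensuremath{\conid{Domain}} operations, and \ensuremath{\varid{bind}}) are total at these instances and that \ensuremath{\mathcal{S}\denot{\wild}_{\wild}} itself only uses them in a way compatible with the guardedness discipline. Then I would walk through \ensuremath{\mathcal{S}\denot{\wild}_{\wild}} case by case: the \ensuremath{\conid{Var}}, \ensuremath{\conid{Lam}}, \ensuremath{\conid{App}}, \ensuremath{\conid{ConApp}} and \ensuremath{\conid{Case}} cases are immediate structural recursions whose recursive calls are on proper subterms, so they need no guardedness argument at all (Agda's termination checker accepts them directly). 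The \ensuremath{\conid{Let}} case is the only interesting one: here the recursive call \ensuremath{\mathcal{S}\denot{\varid{e}_{1}}_{\varcolor{\rho}[\varid{x}\mapsto\varid{step}\;(\conid{Look}\;\varid{x})\;\varid{d}_{1}]}} feeds a denotation back into the environment, and I would argue that because that denotation is wrapped in \ensuremath{\varid{step}\;(\conid{Look}\;\varid{x})}, the cycle is broken by a $\later$, so the call to \ensuremath{\varid{bind}} is an instance of $\fix$ and hence total. For the \ensuremath{\conid{ByNeed}} instance I additionally need the heap to be well-defined, which follows because \ensuremath{\conid{Heap}\;\varcolor{\tau}\mathrel{=}\conid{Addr}\mathbin{:\rightharpoonup}\later\!\;(\conid{D}\;\varcolor{\tau})} stores heap entries behind a $\later$, so \ensuremath{\varid{memo}} and \ensuremath{\varid{fetch}} compose into a guarded recursive definition of the state-passing function.

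The cleanest way to discharge all of this is simply to appeal to the type-checked Agda development: since \Cref{sec:totality-detail} explains that \ensuremath{\mathcal{S}\denot{\wild}_{\wild}} together with the \ensuremath{\conid{ByName}} and \ensuremath{\conid{ByNeed}} instances has been encoded in Ticked Cubical Type Theory and accepted by Guarded Cubical Agda, and since every well-typed term in that system denotes a total element of its type, we get totality of \ensuremath{\mathcal{S}_{\mathbf{name}}\denot{\wild}_{\wild}} and \ensuremath{\mathcal{S}_{\mathbf{need}}\denot{\wild}_{\wild}} for free. So the proof is essentially: (1) recall that structural recursion handles all non-\ensuremath{\conid{Let}} cases; (2) recall that the $\later$-guarded types for \ensuremath{\varid{step}}, \ensuremath{\conid{Fun}}, \ensuremath{\conid{Heap}} and \ensuremath{\varid{bind}} make the \ensuremath{\conid{Let}} case a legitimate guarded fixpoint; (3) conclude by the meta-theorem that every term of Guarded Cubical Agda is total, referencing the Supplement for the mechanised check.

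The main obstacle is not the high-level argument but making the guardedness bookkeeping airtight on paper — specifically, justifying that wrapping right-hand-side denotations in \ensuremath{\varid{step}\;(\conid{Look}\;\varid{x})} really does interpose a $\later$ at every negative recursive occurrence, including the subtle interaction with the \ensuremath{\conid{ByNeed}} heap where a denotation stored at an address is itself a stateful function over heaps containing denotations. I expect to handle this by deferring to the Agda encoding's use of the \ensuremath{\conid{Domain}} predicate parameter \ensuremath{\varid{p}} (which enforces the \ensuremath{\varid{step}\;(\conid{Look}\;\varid{x})\;\varid{d}} shape) and the defunctionalised \ensuremath{\conid{Fun}} constructor, rather than re-deriving the local contractiveness of the relevant functors inline; the pen-and-paper proof then need only state the correspondence and point at the mechanisation for the details.
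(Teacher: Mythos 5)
Your proposal is correct and takes essentially the same route as the paper: the paper's own proof is likewise a sketch that defers to the type-checked Guarded Cubical Agda encoding, with the key idea being that every environment entry has the $\later$-guarded form \ensuremath{\varid{step}\;(\conid{Look}\;\varid{x})\;\varid{d}} (and the \ensuremath{\conid{Fun}}, \ensuremath{\conid{Heap}} and \ensuremath{\varid{bind}} types are correspondingly guarded), so the \ensuremath{\conid{Let}} knot is a legitimate guarded fixpoint and totality follows from the meta-theory of Ticked Cubical Type Theory. Your case-by-case walk through the interpreter and your identification of the \ensuremath{\conid{ByNeed}} heap as the delicate point match the appendix's account of the encoding.
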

\begin{proofsketch}
In the Supplement, we provide an implementation of the generic interpreter
\ensuremath{\mathcal{S}\denot{\wild}_{\wild}} and its instances at \ensuremath{\conid{ByName}} and \ensuremath{\conid{ByNeed}} in Guarded Cubical Agda,
which offers a total type theory with \emph{guarded recursive
types}~\citet{tctt}.
Agda enforces that all encodable functions are total, therefore \ensuremath{\mathcal{S}_{\mathbf{name}}\denot{\wild}_{\wild}} and
\ensuremath{\mathcal{S}_{\mathbf{need}}\denot{\wild}_{\wild}} must be total as well.

The essential idea of the totality proof is that \emph{there is only a finite
number of transitions between every $\LookupT$ transition}.
In other words, if every environment lookup produces a \ensuremath{\conid{Step}} constructor, then
our semantics is total by coinduction.
Such an argument is quite natural to encode in guarded recursive types, hence
our use of Guarded Cubical Agda is appealing.
See \Cref*{sec:totality-detail} for the details of the encoding in Agda.
\end{proofsketch}


\section{Static Analysis}
\label{sec:abstraction}

So far, our semantic domains have all been \emph{infinite}, simply because the
dynamic traces they express are potentially infinite as well.
However, by instantiating the generic denotational interpreter with
a semantic domain in which every element is \emph{finite data}, we can run the
interpreter on the program statically, at compile time, to yield a \emph{finite}
abstraction of the dynamic behavior.
This gives us a \emph{static program analysis}.


We can get a wide range of static analyses by choosing appropriate semantic domains.
For example, we have successfully realised the following analyses as
denotational interpreters:
\begin{itemize}
  \item
    \Cref{sec:usage-analysis} defines a summary-based \emph{usage analysis},
    the running example of this work.
    We prove that usage analysis correctly infers absence in \Cref{sec:soundness}.

  \item
    \Cref{sec:type-analysis} defines a \emph{type analysis} with
    let generalisation that implements \citeauthor{Milner:78}'s Algorithm~J,
    inferring polytypes such as $\forall α_3.\ \mathtt{option}\;(α_3
    \rightarrow α_3)$ that act as summaries.

  \item
    \Cref*{sec:0cfa} defines 0CFA \emph{control-flow analysis}~\citep{Shivers:91},
    a non-modular analysis lacking a finite summary mechanism, simply as a
    proof of concept.

  \item
    To demonstrate that our framework scales to real-world compilers,
    we have refactored relevant parts of \emph{Demand Analysis} in the Glasgow
    Haskell Compiler into an abstract denotational interpreter as an artefact.
    The resulting compiler bootstraps and passes the testsuite.%
    \footnote{There is a small caveat: we did not try to optimise for compiler
    performance in our proof of concept and hence it regresses in a few
    compiler performance test cases.
    None of the runtime performance test cases regress and the inferred
    demand signatures stay unchanged.}
    Demand Analysis is the real-world implementation of the cardinality analysis
    work of \citet{Sergey:14}, generalising usage analysis and implementing
    strictness analysis as well.
    For a report of this case study, we defer to \Cref*{sec:demand-analysis}.

  \item
    Static compiler analyses such as Demand Analysis usually drive a subsequent
    optimisation, for which a single denotation for the entire program is
    insufficient.
    Rather, we need one for every sub-expression, or at least every binding.
    \Cref*{sec:annotations} proposes a very slight generalisation of the
    \ensuremath{\conid{Domain}} type class that lifts a stateless analysis into a stateful
    analysis writing out annotations for let bindings in a separate, global map.
    As a substantial bonus, we can use another stateful map to cache the
    results of fixpoint iterations.

%
%

\end{itemize}

In the following, we discuss usage analysis (\Cref{sec:usage-analysis}) and type
analysis (\Cref{sec:type-analysis}) in detail.

\subsection{Usage Analysis}
\label{sec:usage-analysis}

In this subsection, we give a detailed account of \emph{usage analysis} as
an instance of the denotational interpreter.
Usage analysis generalises the summary-based absence analysis from
\Cref{sec:problem}.
It is a compelling example because it illustrates that our framework is suitable
to infer \emph{operational properties}, such as an upper bound on the number of
variable lookups.

\subsubsection{Trace Abstraction in \ensuremath{\conid{Trace}\;\concolor{\mathsf{T_U}}}}
\label{sec:usage-trace-abstraction}

\begin{figure}
\begin{minipage}{0.4\textwidth}
\begin{hscode}\SaveRestoreHook
\column{B}{@{}>{\hspre}l<{\hspost}@{}}%
\column{3}{@{}>{\hspre}l<{\hspost}@{}}%
\column{8}{@{}>{\hspre}l<{\hspost}@{}}%
\column{E}{@{}>{\hspre}l<{\hspost}@{}}%
\>[B]{}\keyword{data}\;\conid{U}\mathrel{=}\concolor{\mathsf{U_0}}\mid \concolor{\mathsf{U_1}}\mid \concolor{\mathsf{U_\omega}}{}\<[E]%
\\
\>[B]{}\keyword{type}\;\conid{Uses}\mathrel{=}\conid{Name}\mathbin{:\rightharpoonup}\conid{U}{}\<[E]%
\\
\>[B]{}\keyword{class}\;\conid{UVec}\;\varid{a}\;\keyword{where}{}\<[E]%
\\
\>[B]{}\hsindent{3}{}\<[3]%
\>[3]{}(\mathbin{+}){}\<[8]%
\>[8]{}\mathbin{::}\varid{a}\to \varid{a}\to \varid{a}{}\<[E]%
\\
\>[B]{}\hsindent{3}{}\<[3]%
\>[3]{}(\mathbin{*}){}\<[8]%
\>[8]{}\mathbin{::}\conid{U}\to \varid{a}\to \varid{a}{}\<[E]%
\\
\>[B]{}\keyword{instance}\;\conid{UVec}\;\conid{U}\;\keyword{where}\; ... {}\<[E]%
\\
\>[B]{}\keyword{instance}\;\conid{UVec}\;\conid{Uses}\;\keyword{where}\; ... {}\<[E]%
\ColumnHook
\end{hscode}\resethooks
\end{minipage}%
\begin{minipage}{0.6\textwidth}
\begin{hscode}\SaveRestoreHook
\column{B}{@{}>{\hspre}l<{\hspost}@{}}%
\column{3}{@{}>{\hspre}l<{\hspost}@{}}%
\column{18}{@{}>{\hspre}l<{\hspost}@{}}%
\column{30}{@{}>{\hspre}l<{\hspost}@{}}%
\column{E}{@{}>{\hspre}l<{\hspost}@{}}%
\>[B]{}\keyword{data}\;\concolor{\mathsf{T_U}}\;\varid{v}\mathrel{=}\langle \conid{Uses}, \varid{v} \rangle{}\<[E]%
\\
\>[B]{}\keyword{instance}\;\conid{Trace}\;(\concolor{\mathsf{T_U}}\;\varid{v})\;\keyword{where}{}\<[E]%
\\
\>[B]{}\hsindent{3}{}\<[3]%
\>[3]{}\varid{step}\;(\conid{Look}\;\varid{x})\;{}\<[18]%
\>[18]{}\langle \varcolor{\varphi}, \varid{v} \rangle{}\<[30]%
\>[30]{}\mathrel{=}\langle [\varid{x}\mapsto\concolor{\mathsf{U_1}}]\mathbin{+}\varcolor{\varphi}, \varid{v} \rangle{}\<[E]%
\\
\>[B]{}\hsindent{3}{}\<[3]%
\>[3]{}\varid{step}\;\anonymous \;{}\<[18]%
\>[18]{}\varcolor{\tau}{}\<[30]%
\>[30]{}\mathrel{=}\varcolor{\tau}{}\<[E]%
\\
\>[B]{}\keyword{instance}\;\conid{Monad}\;\concolor{\mathsf{T_U}}\;\keyword{where}{}\<[E]%
\\
\>[B]{}\hsindent{3}{}\<[3]%
\>[3]{}\varid{return}\;\varid{a}\mathrel{=}\langle \varcolor{\varepsilon}, \varid{a} \rangle{}\<[E]%
\\
\>[B]{}\hsindent{3}{}\<[3]%
\>[3]{}\langle \varcolor{\varphi}_{1}, \varid{a} \rangle\bind \varid{k}\mathrel{=}\keyword{let}\;\langle \varcolor{\varphi}_{2}, \varid{b} \rangle\mathrel{=}\varid{k}\;\varid{a}\;\keyword{in}\;\langle \varcolor{\varphi}_{1}\mathbin{+}\varcolor{\varphi}_{2}, \varid{b} \rangle{}\<[E]%
\ColumnHook
\end{hscode}\resethooks
\end{minipage}
\\
\begin{minipage}{0.63\textwidth}
\begin{hscode}\SaveRestoreHook
\column{B}{@{}>{\hspre}l<{\hspost}@{}}%
\column{3}{@{}>{\hspre}l<{\hspost}@{}}%
\column{5}{@{}>{\hspre}l<{\hspost}@{}}%
\column{15}{@{}>{\hspre}c<{\hspost}@{}}%
\column{15E}{@{}l@{}}%
\column{18}{@{}>{\hspre}l<{\hspost}@{}}%
\column{42}{@{}>{\hspre}l<{\hspost}@{}}%
\column{E}{@{}>{\hspre}l<{\hspost}@{}}%
\>[B]{}\mathcal{S}_{\mathbf{usage}}\denot{\varid{e}}_{\varcolor{\rho}}\mathrel{=}\mathcal{S}\denot{\varid{e}}_{\varcolor{\rho}}\mathbin{::}\concolor{\mathsf{D_U}}{}\<[E]%
\\[\blanklineskip]%
\>[B]{}\keyword{instance}\;\conid{Domain}\;\concolor{\mathsf{D_U}}\;\keyword{where}{}\<[E]%
\\
\>[B]{}\hsindent{3}{}\<[3]%
\>[3]{}\varid{stuck}{}\<[42]%
\>[42]{}\mathrel{=}\bot{}\<[E]%
\\
\>[B]{}\hsindent{3}{}\<[3]%
\>[3]{}\varid{fun}\;\varid{x}\; \varid{f}{}\<[42]%
\>[42]{}\mathrel{=}\keyword{case}\;\varid{f}\;\langle [\varid{x}\mapsto\concolor{\mathsf{U_1}}], \conid{Rep}\;\concolor{\mathsf{U_\omega}} \rangle\;\keyword{of}{}\<[E]%
\\
\>[3]{}\hsindent{2}{}\<[5]%
\>[5]{}\langle \varcolor{\varphi}, \varid{v} \rangle\to \langle \varcolor{\varphi}[\varid{x}\mapsto\concolor{\mathsf{U_0}}], \varcolor{\varphi}\mathbin{!?}\varid{x} \argcons \varid{v} \rangle{}\<[E]%
\\
\>[B]{}\hsindent{3}{}\<[3]%
\>[3]{}\varid{apply}\;\langle \varcolor{\varphi}_{1}, \varid{v}_{1} \rangle\;\langle \varcolor{\varphi}_{2}, \anonymous  \rangle{}\<[42]%
\>[42]{}\mathrel{=}\keyword{case}\;\varid{peel}\;\varid{v}_{1}\;\keyword{of}{}\<[E]%
\\
\>[3]{}\hsindent{2}{}\<[5]%
\>[5]{}(\varid{u},\varid{v}_{2})\to \langle \varcolor{\varphi}_{1}\mathbin{+}\varid{u}\mathbin{*}\varcolor{\varphi}_{2}, \varid{v}_{2} \rangle{}\<[E]%
\\
\>[B]{}\hsindent{3}{}\<[3]%
\>[3]{}\varid{con}\; \anonymous \;\varid{ds}{}\<[42]%
\>[42]{}\mathrel{=}\varid{foldl}\;\varid{apply}\;\langle \varcolor{\varepsilon}, \conid{Rep}\;\concolor{\mathsf{U_\omega}} \rangle\;\varid{ds}{}\<[E]%
\\
\>[B]{}\hsindent{3}{}\<[3]%
\>[3]{}\varid{select}\;\varid{d}\;\varid{fs}{}\<[42]%
\>[42]{}\mathrel{=}{}\<[E]%
\\
\>[3]{}\hsindent{2}{}\<[5]%
\>[5]{}\varid{d}\sequ \varid{lub}\;{}\<[15]%
\>[15]{}[\mskip1.5mu {}\<[15E]%
\>[18]{}\varid{f}\;(\varid{replicate}\;(\varid{conArity}\;\varid{k})\;\langle \varcolor{\varepsilon}, \conid{Rep}\;\concolor{\mathsf{U_\omega}} \rangle){}\<[E]%
\\
\>[15]{}\mid {}\<[15E]%
\>[18]{}(\varid{k},\varid{f})\leftarrow \varid{assocs}\;\varid{fs}\mskip1.5mu]{}\<[E]%
\\[\blanklineskip]%
\>[B]{}\keyword{instance}\;\conid{HasBind}\;\concolor{\mathsf{D_U}}\;\keyword{where}{}\<[E]%
\\
\>[B]{}\hsindent{3}{}\<[3]%
\>[3]{}\varid{bind}\;\varid{rhs}\;\varid{body}\mathrel{=}\varid{body}\;(\varid{kleeneFix}\;\varid{rhs}){}\<[E]%
\ColumnHook
\end{hscode}\resethooks
\end{minipage}%
\begin{minipage}{0.3\textwidth}
\begin{hscode}\SaveRestoreHook
\column{B}{@{}>{\hspre}l<{\hspost}@{}}%
\column{3}{@{}>{\hspre}l<{\hspost}@{}}%
\column{7}{@{}>{\hspre}c<{\hspost}@{}}%
\column{7E}{@{}l@{}}%
\column{9}{@{}>{\hspre}l<{\hspost}@{}}%
\column{10}{@{}>{\hspre}l<{\hspost}@{}}%
\column{14}{@{}>{\hspre}l<{\hspost}@{}}%
\column{19}{@{}>{\hspre}l<{\hspost}@{}}%
\column{22}{@{}>{\hspre}l<{\hspost}@{}}%
\column{E}{@{}>{\hspre}l<{\hspost}@{}}%
\>[B]{}\keyword{data}\;\concolor{\mathsf{Value_U}}\mathrel{=}\conid{U} \argcons \concolor{\mathsf{Value_U}}\mid \conid{Rep}\;\conid{U}{}\<[E]%
\\
\>[B]{}\keyword{type}\;\concolor{\mathsf{D_U}}\mathrel{=}\concolor{\mathsf{T_U}}\;\concolor{\mathsf{Value_U}}{}\<[E]%
\\[\blanklineskip]%
\>[B]{}\keyword{instance}\;\conid{Lat}\;\conid{U}\;\keyword{where}\; ... {}\<[E]%
\\
\>[B]{}\keyword{instance}\;\conid{Lat}\;\conid{Uses}\;\keyword{where}\mathbin{...}{}\<[E]%
\\
\>[B]{}\keyword{instance}\;\conid{Lat}\;\concolor{\mathsf{Value_U}}\;\keyword{where}\; ... {}\<[E]%
\\
\>[B]{}\keyword{instance}\;\conid{Lat}\;\concolor{\mathsf{D_U}}\;\keyword{where}\; ... {}\<[E]%
\\[\blanklineskip]%
\>[B]{}\varid{peel}\mathbin{::}\concolor{\mathsf{Value_U}}\to (\conid{U},\concolor{\mathsf{Value_U}}){}\<[E]%
\\
\>[B]{}\varid{peel}\;(\conid{Rep}\;\varid{u}){}\<[19]%
\>[19]{}\mathrel{=}(\varid{u},(\conid{Rep}\;\varid{u})){}\<[E]%
\\
\>[B]{}\varid{peel}\;(\varid{u} \argcons \varid{v}){}\<[19]%
\>[19]{}\mathrel{=}(\varid{u},\varid{v}){}\<[E]%
\\[\blanklineskip]%
\>[B]{}(\mathbin{!?})\mathbin{::}\conid{Uses}\to \conid{Name}\to \conid{U}{}\<[E]%
\\
\>[B]{}\varid{m}\mathbin{!?}\varid{x}{}\<[9]%
\>[9]{}\mid \varid{x}\in \varid{dom}\;\varid{m}{}\<[22]%
\>[22]{}\mathrel{=}\varid{m}\mathop{!}\varid{x}{}\<[E]%
\\
\>[9]{}\mid \varid{otherwise}{}\<[22]%
\>[22]{}\mathrel{=}\concolor{\mathsf{U_0}}{}\<[E]%
\ColumnHook
\end{hscode}\resethooks
\end{minipage}
\\[-1em]
\caption{Summary-based usage analysis}
\label{fig:usage-analysis}
\end{figure}

In order to recover usage analysis as an instance of our generic interpreter, we
must define its finitely represented semantic domain \ensuremath{\concolor{\mathsf{D_U}}}.
A good first step is to replace the potentially
infinite traces \ensuremath{\conid{T}} in dynamic semantic domains such as \ensuremath{\conid{D}_{\mathbf{na}}} with finite data
such as \ensuremath{\concolor{\mathsf{T_U}}} in \Cref{fig:usage-analysis}.
A \emph{usage trace} \ensuremath{\langle \varcolor{\varphi}, \varid{val} \rangle\mathbin{::}\concolor{\mathsf{T_U}}\;\varid{v}} is a pair of a value \ensuremath{\varid{val}\mathbin{::}\varid{v}}
and a finite map \ensuremath{\varcolor{\varphi}\mathbin{::}\conid{Uses}}, mapping variables to a \emph{usage} \ensuremath{\conid{U}}.
The usage \ensuremath{\varcolor{\varphi}\mathbin{!?}\varid{x}} assigned to \ensuremath{\varid{x}} is meant to approximate the number of \ensuremath{\conid{Look}\;\varid{x}}
events; \ensuremath{\concolor{\mathsf{U_0}}} means ``at most 0 times'', \ensuremath{\concolor{\mathsf{U_1}}} means ``at most 1 times'',
and \ensuremath{\concolor{\mathsf{U_\omega}}} means ``an unknown number of times''.
In this way, \ensuremath{\concolor{\mathsf{T_U}}} is an \emph{abstraction} of \ensuremath{\conid{T}}: it squashes all \ensuremath{\conid{Look}\;\varid{x}}
events into a single entry \ensuremath{\varcolor{\varphi}\mathbin{!?}\varid{x}\mathbin{::}\conid{U}} and discards all other events.

Consider as an example the by-name trace evaluating $\pe \triangleq
\Let{i}{\Lam{x}{x}}{\Let{j}{\Lam{y}{y}}{i~j~j}}$:
\[
\LetIT\xhookrightarrow{\hspace{1.1ex}}\LetIT\xhookrightarrow{\hspace{1.1ex}}\AppIT\xhookrightarrow{\hspace{1.1ex}}\AppIT\xhookrightarrow{\hspace{1.1ex}}\LookupT(i)\xhookrightarrow{\hspace{1.1ex}}\AppET\xhookrightarrow{\hspace{1.1ex}}\LookupT(j)\xhookrightarrow{\hspace{1.1ex}}\AppET\xhookrightarrow{\hspace{1.1ex}}\LookupT(j)\xhookrightarrow{\hspace{1.1ex}}\langle \lambda\rangle 
\]
\noindent
We would like to abstract this trace into \ensuremath{\langle [\mskip1.5mu \varid{i}\mapsto\concolor{\mathsf{U_1}},\varid{j}\mapsto\concolor{\mathsf{U_\omega}}\mskip1.5mu], ... \rangle}.
One plausible way to achieve this is to replace every \ensuremath{\conid{Step}\;(\conid{Look}\;\varid{x})\;...}
in the by-name trace with a call to \ensuremath{\varid{step}\;(\conid{Look}\;\varid{x})\;...} from the \ensuremath{\conid{Trace}\;\concolor{\mathsf{T_U}}}
instance in \Cref{fig:usage-analysis}, essentially folding over the trace.
The \ensuremath{\varid{step}} implementation increments the usage of \ensuremath{\varid{x}} whenever a \ensuremath{\conid{Look}\;\varid{x}}
event occurs.
The addition operation used to carry out incrementation is defined in type class
instances \ensuremath{\conid{UVec}\;\conid{U}} and \ensuremath{\conid{UVec}\;\conid{Uses}}, together with scalar multiplication.
For example, \ensuremath{\concolor{\mathsf{U_0}}\mathbin{+}\varid{u}\mathrel{=}\varid{u}} and \ensuremath{\concolor{\mathsf{U_1}}\mathbin{+}\concolor{\mathsf{U_1}}\mathrel{=}\concolor{\mathsf{U_\omega}}} in \ensuremath{\conid{U}}, as well as \ensuremath{\concolor{\mathsf{U_0}}\mathbin{*}\varid{u}\mathrel{=}\concolor{\mathsf{U_0}}},
\ensuremath{\concolor{\mathsf{U_\omega}}\mathbin{*}\concolor{\mathsf{U_1}}\mathrel{=}\concolor{\mathsf{U_\omega}}}.
These operations lift to \ensuremath{\conid{Uses}} pointwise, \eg
\ensuremath{[\mskip1.5mu \varid{i}\mapsto\concolor{\mathsf{U_1}}\mskip1.5mu]\mathbin{+}(\concolor{\mathsf{U_\omega}}\mathbin{*}[\mskip1.5mu \varid{j}\mapsto\concolor{\mathsf{U_1}}\mskip1.5mu])\mathrel{=}[\mskip1.5mu \varid{i}\mapsto\concolor{\mathsf{U_1}},\varid{j}\mapsto\concolor{\mathsf{U_\omega}}\mskip1.5mu]}.

Abstracting \ensuremath{\conid{T}} into \ensuremath{\concolor{\mathsf{T_U}}} but keeping the concrete semantic \ensuremath{\conid{Value}} definition
amounts to what \citet{adi} call a \emph{collecting semantics}.
To recover such an analysis-specific collecting semantics,
it is sufficient to define a \ensuremath{\conid{Monad}} instance on \ensuremath{\concolor{\mathsf{T_U}}} mirroring trace
concatenation and then running our interpreter at, \eg $\ensuremath{\conid{D}\;(\conid{ByName}\;\concolor{\mathsf{T_U}})} \cong
\ensuremath{\concolor{\mathsf{T_U}}\;(\conid{Value}\;\concolor{\mathsf{T_U}})}$ on expression $\pe$ from earlier:
\[
  \ensuremath{\mathcal{S}\denot{(\Let{i}{\Lam{x}{x}}{\Let{j}{\Lam{y}{y}}{i~j~j}})}_{\varcolor{\varepsilon}}} = 
\langle [\mathit{i} \! \mapsto \! \concolor{\mathsf{U_1}},\mathit{j} \! \mapsto \! \concolor{\mathsf{U_ω}}], \lambda \rangle
\ensuremath{\mathbin{::}\conid{D}\;(\conid{ByName}\;\concolor{\mathsf{T_U}})}.
\]
\noindent
It is nice to explore whether the \ensuremath{\conid{Trace}} instance encodes the desired
operational property in this way, but of little practical relevance because
this interpreter instance will diverge whenever the input expression diverges.
We will now fix this by introducing a finitely represented \ensuremath{\concolor{\mathsf{Value_U}}} to replace
\ensuremath{\conid{Value}\;\concolor{\mathsf{T_U}}}.

\subsubsection{Value Abstraction \ensuremath{\concolor{\mathsf{Value_U}}} and Summarisation in \ensuremath{\conid{Domain}\;\concolor{\mathsf{D_U}}}}

We complement the trace type \ensuremath{\concolor{\mathsf{T_U}}} with an abstract value type \ensuremath{\concolor{\mathsf{Value_U}}}
to get the finitely represented semantic domain \ensuremath{\concolor{\mathsf{D_U}}\mathrel{=}\concolor{\mathsf{T_U}}\;\concolor{\mathsf{Value_U}}} in
\Cref{fig:usage-analysis}, and thus a \emph{static usage analysis} \ensuremath{\mathcal{S}_{\mathbf{usage}}\denot{\wild}_{\wild}}
when we instantiate \ensuremath{\mathcal{S}\denot{\wild}_{\wild}} at \ensuremath{\concolor{\mathsf{D_U}}}.


The definition of \ensuremath{\concolor{\mathsf{Value_U}}} is just a copy of $π ∈ \Args$ in
\Cref{fig:absence} that lists argument usage \ensuremath{\conid{U}} instead of $\Absence$ flags;
the entire intuition transfers.
For example, the \ensuremath{\concolor{\mathsf{Value_U}}} abstracting $\Lam{y}{\Lam{z}{y}}$ is
\ensuremath{\concolor{\mathsf{U_1}} \argcons \concolor{\mathsf{U_0}} \argcons \conid{Rep}\;\concolor{\mathsf{U_\omega}}}, because the first argument is used once while
the second is used 0 times.
What we previously called absence types $θ ∈ \AbsTy$ in \Cref{fig:absence} is
now the abstract semantic domain \ensuremath{\concolor{\mathsf{D_U}}}.
It is now evident that usage analysis is a modest generalisation of absence
analysis in \Cref{fig:absence}:
a variable is absent ($\aA$) when it has usage \ensuremath{\concolor{\mathsf{U_0}}}, otherwise it is used
($\aU$).

Consider
$\ensuremath{\mathcal{S}_{\mathbf{usage}}\denot{( \Let{k}{\Lam{y}{\Lam{z}{y}}}{k~x_1~x_2} )}_{\varcolor{\rho}_e}}
 = 
\langle [\mathit{k} \! \mapsto \! \concolor{\mathsf{U_1}},\mathit{x_1} \! \mapsto \! \concolor{\mathsf{U_1}}], \conid{Rep}\;\concolor{\mathsf{U_ω}} \rangle
$,
analysing the example expression from \Cref{sec:problem}.
Usage analysis successfully infers that $x_1$ is used at most once and that
$x_2$ is absent, because it does not occur in the reported \ensuremath{\conid{Uses}}.


On the other hand,
$\ensuremath{\mathcal{S}_{\mathbf{usage}}\denot{( \Let{i}{\Lam{x}{x}}{\Let{j}{\Lam{y}{y}}{i~i~j}} )}_{\varcolor{\varepsilon}}}
 = 
\langle [\mathit{i} \! \mapsto \! \concolor{\mathsf{U_ω}},\mathit{j} \! \mapsto \! \concolor{\mathsf{U_ω}}], \conid{Rep}\;\concolor{\mathsf{U_ω}} \rangle
$
demonstrates the limitations of the first-order summary mechanism.
While the program trace would only have one lookup for $j$, the analysis is
unable to reason through the indirect call and conservatively reports that $j$
may be used many times.

The \ensuremath{\conid{Domain}} instance is responsible for implementing the summary mechanism.
While \ensuremath{\varid{stuck}} expressions do not evaluate anything and hence are denoted by
\ensuremath{\bot\mathrel{=}\langle \varcolor{\varepsilon}, \conid{Rep}\;\concolor{\mathsf{U_0}} \rangle}, the \ensuremath{\varid{fun}} and \ensuremath{\varid{apply}} functions play exactly the same
roles as $\mathit{fun}_\px$ and $\mathit{app}$ in \Cref{fig:absence}.
Let us briefly review how the summary for the right-hand side $\Lam{x}{x}$ of
$i$ in the previous example is computed:
\begin{hscode}\SaveRestoreHook
\column{B}{@{}>{\hspre}c<{\hspost}@{}}%
\column{BE}{@{}l@{}}%
\column{4}{@{}>{\hspre}l<{\hspost}@{}}%
\column{30}{@{}>{\hspre}l<{\hspost}@{}}%
\column{76}{@{}>{\hspre}l<{\hspost}@{}}%
\column{E}{@{}>{\hspre}l<{\hspost}@{}}%
\>[4]{}\mathcal{S}\denot{\conid{Lam}\;\varid{x}\;(\conid{Var}\;\varid{x})}_{\varcolor{\rho}}\mathrel{=}{}\<[30]%
\>[30]{}\varid{fun}\;\varid{x}\;(\lambda \varid{d}\to \varid{step}\;\conid{App}_{2}\;(\mathcal{S}\denot{\conid{Var}\;\varid{x}}_{\varcolor{\rho}[\varid{x}\mapsto\varid{d}]})){}\<[E]%
\\
\>[B]{}\mathrel{=}{}\<[BE]%
\>[4]{}\keyword{case}\;\varid{step}\;\conid{App}_{2}\;(\mathcal{S}\denot{\conid{Var}\;\varid{x}}_{\varcolor{\rho}[\varid{x}\mapsto\langle [\varid{x}\mapsto\concolor{\mathsf{U_1}}], \conid{Rep}\;\concolor{\mathsf{U_\omega}} \rangle]})\;{}\<[76]%
\>[76]{}\keyword{of}\;\langle \varcolor{\varphi}, \varid{v} \rangle\to \langle \varcolor{\varphi}[\varid{x}\mapsto\concolor{\mathsf{U_0}}], \varcolor{\varphi}\mathbin{!?}\varid{x} \argcons \conid{Rep}\;\concolor{\mathsf{U_\omega}} \rangle{}\<[E]%
\\
\>[B]{}\mathrel{=}{}\<[BE]%
\>[4]{}\keyword{case}\;\langle [\varid{x}\mapsto\concolor{\mathsf{U_1}}], \conid{Rep}\;\concolor{\mathsf{U_\omega}} \rangle\;{}\<[76]%
\>[76]{}\keyword{of}\;\langle \varcolor{\varphi}, \varid{v} \rangle\to \langle \varcolor{\varphi}[\varid{x}\mapsto\concolor{\mathsf{U_0}}], \varcolor{\varphi}\mathbin{!?}\varid{x} \argcons \conid{Rep}\;\concolor{\mathsf{U_\omega}} \rangle{}\<[E]%
\\
\>[B]{}\mathrel{=}{}\<[BE]%
\>[4]{}\langle \varcolor{\varepsilon}, \concolor{\mathsf{U_1}} \argcons \conid{Rep}\;\concolor{\mathsf{U_\omega}} \rangle{}\<[E]%
\ColumnHook
\end{hscode}\resethooks
The definition of \ensuremath{\varid{fun}\;\varid{x}} applies the lambda body to a \emph{proxy} \ensuremath{\langle [\varid{x}\mapsto\concolor{\mathsf{U_1}}], \conid{Rep}\;\concolor{\mathsf{U_\omega}} \rangle}
to summarise how the body uses its argument by way of looking at how it uses \ensuremath{\varid{x}}.%
\footnote{As before, the exact identity of \ensuremath{\varid{x}} is exchangeable; we use it as a
De Bruijn level.}
Every use of \ensuremath{\varid{x}}'s proxy will contribute a usage of \ensuremath{\concolor{\mathsf{U_1}}} on \ensuremath{\varid{x}}, and multiple
uses in the lambda body would accumulate to a usage of \ensuremath{\concolor{\mathsf{U_\omega}}}.
In this case there is only a single use of \ensuremath{\varid{x}} and the final usage \ensuremath{\varcolor{\varphi}\mathbin{!?}\varid{x}\mathrel{=}\concolor{\mathsf{U_1}}} from the lambda body will be prepended to the value abstraction.
Occurrences of \ensuremath{\varid{x}} unleash the uninformative top value (\ensuremath{\conid{Rep}\;\concolor{\mathsf{U_\omega}}}) from \ensuremath{\varid{x}}'s
proxy for lack of knowing the actual argument at call sites.

The definition of \ensuremath{\varid{apply}} to apply such summaries to an argument is nearly the
same as in \Cref{fig:absence}, except for the use of \ensuremath{\mathbin{+}} instead of $⊔$ to
carry over \ensuremath{\concolor{\mathsf{U_1}}\mathbin{+}\concolor{\mathsf{U_1}}\mathrel{=}\concolor{\mathsf{U_\omega}}}, and an explicit \ensuremath{\varid{peel}} to view a \ensuremath{\concolor{\mathsf{Value_U}}} in terms
of $\argcons$ (it is $\ensuremath{\conid{Rep}\;\varid{u}} \equiv \ensuremath{\varid{u} \argcons \conid{Rep}\;\varid{u}}$).
The usage \ensuremath{\varid{u}} thus pelt from the value determines how often the actual
argument was evaluated, and multiplying the uses of the argument \ensuremath{\varcolor{\varphi}_{2}} with \ensuremath{\varid{u}}
accounts for that.

The example
$\ensuremath{\mathcal{S}_{\mathbf{usage}}\denot{( \Let{z}{Z()}{\Case{S(z)}{S(n) → n}} )}_{\varcolor{\varepsilon}}}
 = 
\langle [\mathit{z} \! \mapsto \! \concolor{\mathsf{U_ω}}], \conid{Rep}\;\concolor{\mathsf{U_ω}} \rangle
$
illustrates the summary mechanism for data types.
Our analysis imprecisely infers that \ensuremath{\varid{z}} might be used many times when it is
only used once.%
\footnote{Following \citet{Sergey:14} we could model \emph{demand} as
a property of evaluation contexts and propagate uses of field binders to the
scrutinee's fields to do better.}
This is achieved in \ensuremath{\varid{con}} by repeatedly \ensuremath{\varid{apply}}ing to the top value \ensuremath{(\conid{Rep}\;\concolor{\mathsf{U_\omega}})},
as if a data constructor was a lambda-bound variable.
Dually, \ensuremath{\varid{select}} does not need to track how fields are used and can pass \ensuremath{\langle \varcolor{\varepsilon}, \conid{Rep}\;\concolor{\mathsf{U_\omega}} \rangle} as proxies for field denotations.
The result uses anything the scrutinee expression used, plus the upper bound of
uses in case alternatives, one of which will be taken.

Note that the finite representation of the type \ensuremath{\concolor{\mathsf{D_U}}} rules out injective
implementations of \ensuremath{\varid{fun}\;\varid{x}\mathbin{::}(\concolor{\mathsf{D_U}}\to \concolor{\mathsf{D_U}})\to \concolor{\mathsf{D_U}}} and thus requires the
aforementioned \emph{approximate} summary mechanism.
There is another potential source of approximation: the \ensuremath{\conid{HasBind}}
instance discussed next.

\begin{figure}
\begin{hscode}\SaveRestoreHook
\column{B}{@{}>{\hspre}l<{\hspost}@{}}%
\column{E}{@{}>{\hspre}l<{\hspost}@{}}%
\>[B]{}\keyword{class}\;\conid{Eq}\;\varid{a}\Rightarrow \conid{Lat}\;\varid{a}\;\keyword{where}\;\bot\mathbin{::}\varid{a};(\mathbin{⊔})\mathbin{::}\varid{a}\to \varid{a}\to \varid{a};{}\<[E]%
\\
\>[B]{}\varid{kleeneFix}\mathbin{::}\conid{Lat}\;\varid{a}\Rightarrow (\varid{a}\to \varid{a})\to \varid{a};\qquad\;\varid{lub}\mathbin{::}\conid{Lat}\;\varid{a}\Rightarrow [\mskip1.5mu \varid{a}\mskip1.5mu]\to \varid{a}{}\<[E]%
\\
\>[B]{}\varid{kleeneFix}\;\varid{f}\mathrel{=}\varid{go}\;\bot\;\keyword{where}\;\varid{go}\;\varid{x}\mathrel{=}\keyword{let}\;\varid{x'}\mathrel{=}\varid{f}\;\varid{x}\;\keyword{in}\;\keyword{if}\;\varid{x'}\mathbin{⊑}\varid{x}\;\keyword{then}\;\varid{x'}\;\keyword{else}\;\varid{go}\;\varid{x'}{}\<[E]%
\ColumnHook
\end{hscode}\resethooks
\\[-1em]
\caption{Order theory and Kleene iteration}
\label{fig:lat}
\end{figure}

\subsubsection{Finite Fixpoint Strategy in \ensuremath{\conid{HasBind}\;\concolor{\mathsf{D_U}}} and Totality}
\label{sec:usage-fixpoint}

The third and last ingredient to recover a static analysis is the fixpoint
strategy in \ensuremath{\conid{HasBind}\;\concolor{\mathsf{D_U}}}, to be used for recursive let bindings.

For the dynamic semantics in \Cref{sec:interp} we made liberal use of
\emph{guarded fixpoints}, that is, recursively defined values such as \ensuremath{\keyword{let}\;\varid{d}\mathrel{=}\varid{rhs}\;\varid{d}\;\keyword{in}\;\varid{body}\;\varid{d}} in \ensuremath{\conid{HasBind}\;\conid{D}_{\mathbf{na}}} (\Cref{fig:eval}).
At least for \ensuremath{\mathcal{S}_{\mathbf{name}}\denot{\wild}_{\wild}} and \ensuremath{\mathcal{S}_{\mathbf{need}}\denot{\wild}_{\wild}}, we have proved in \Cref{sec:adequacy}
that these fixpoints always exist by a coinductive argument.
Alas, among other things this argument relies on the \ensuremath{\conid{Step}} constructor --- and
thus the \ensuremath{\varid{step}} method --- of the trace type \ensuremath{\conid{T}} being \emph{lazy} in
the tail of the trace!

When we replaced \ensuremath{\conid{T}} in favor of the finite data type \ensuremath{\concolor{\mathsf{T_U}}} in
\Cref{sec:usage-trace-abstraction} to get a collecting semantics \ensuremath{\conid{D}\;(\conid{ByName}\;\concolor{\mathsf{T_U}})}, we got a partial interpreter.
That was because the \ensuremath{\varid{step}} implementation of \ensuremath{\concolor{\mathsf{T_U}}} is \emph{not} lazy, and hence
the guarded fixpoint \ensuremath{\keyword{let}\;\varid{d}\mathrel{=}\varid{rhs}\;\varid{d}\;\keyword{in}\;\varid{body}\;\varid{d}} is not guaranteed to exist.

In general, finite data trace types cannot have a lazy \ensuremath{\varid{step}}
implementation, so finite data domains such as \ensuremath{\concolor{\mathsf{D_U}}} require a different fixpoint
strategy to ensure termination.
Depending on the abstract domain, different fixpoint strategies can be employed.
For an unusual example, in our type analysis \Cref{sec:type-analysis}, we
generate and solve a constraint system via unification to define fixpoints.
In case of \ensuremath{\concolor{\mathsf{D_U}}}, we compute least fixpoints by Kleene iteration \ensuremath{\varid{kleeneFix}}
in \Cref{fig:lat}.
\ensuremath{\varid{kleeneFix}} requires us to define an order on \ensuremath{\concolor{\mathsf{D_U}}}, which is induced
by \ensuremath{\concolor{\mathsf{U_0}}\mathbin{⊏}\concolor{\mathsf{U_1}}\mathbin{⊏}\concolor{\mathsf{U_\omega}}} in the same way that the order
on $\AbsTy$ in \Cref{sec:absence} was induced from the order $\aA ⊏ \aU$
on $\Absence$ flags.

The iteration procedure terminates whenever the type class instances of \ensuremath{\concolor{\mathsf{D_U}}} are
monotone and there are no infinite ascending chains in \ensuremath{\concolor{\mathsf{D_U}}}.
Alas, our \ensuremath{\concolor{\mathsf{Value_U}}} indeed contains such infinite chains, for example, \ensuremath{\concolor{\mathsf{U_1}} \argcons \concolor{\mathsf{U_1}} \argcons ... \argcons \conid{Rep}\;\concolor{\mathsf{U_0}}}!
This is easily worked around in practice by employing appropriate monotone
widening measures such as trimming any \ensuremath{\concolor{\mathsf{Value_U}}} at depth 10 to flat \ensuremath{\conid{Rep}\;\concolor{\mathsf{U_\omega}}}.
The resulting definition of \ensuremath{\conid{HasBind}} is safe for by-name and by-need semantics.%

\subsection{Type Analysis: Algorithm J}
\label{sec:type-analysis}

\begin{figure}
\belowdisplayskip0pt
\begin{hscode}\SaveRestoreHook
\column{B}{@{}>{\hspre}l<{\hspost}@{}}%
\column{3}{@{}>{\hspre}l<{\hspost}@{}}%
\column{5}{@{}>{\hspre}l<{\hspost}@{}}%
\column{7}{@{}>{\hspre}l<{\hspost}@{}}%
\column{9}{@{}>{\hspre}l<{\hspost}@{}}%
\column{11}{@{}>{\hspre}l<{\hspost}@{}}%
\column{17}{@{}>{\hspre}l<{\hspost}@{}}%
\column{19}{@{}>{\hspre}l<{\hspost}@{}}%
\column{22}{@{}>{\hspre}l<{\hspost}@{}}%
\column{37}{@{}>{\hspre}l<{\hspost}@{}}%
\column{41}{@{}>{\hspre}c<{\hspost}@{}}%
\column{41E}{@{}l@{}}%
\column{44}{@{}>{\hspre}l<{\hspost}@{}}%
\column{51}{@{}>{\hspre}l<{\hspost}@{}}%
\column{56}{@{}>{\hspre}l<{\hspost}@{}}%
\column{61}{@{}>{\hspre}l<{\hspost}@{}}%
\column{71}{@{}>{\hspre}l<{\hspost}@{}}%
\column{E}{@{}>{\hspre}l<{\hspost}@{}}%
\>[B]{}\keyword{data}\;\conid{TyCon}\mathrel{=}\conid{BoolTyCon}\mid \conid{NatTyCon}\mid \conid{OptionTyCon}\mid \conid{PairTyCon}{}\<[E]%
\\
\>[B]{}\keyword{data}\;\conid{Type}\mathrel{=}\conid{Type}\mathbin{:\rightarrow:}\conid{Type}\mid \conid{TyConApp}\;\conid{TyCon}\;[\mskip1.5mu \conid{Type}\mskip1.5mu]\mid \conid{TyVar}\;\conid{Name}\mid \conid{Wrong}{}\<[E]%
\\
\>[B]{}\keyword{data}\;\conid{PolyType}\mathrel{=}\conid{PT}\;[\mskip1.5mu \conid{Name}\mskip1.5mu]\;\conid{Type}{}\<[E]%
\\[\blanklineskip]%
\>[B]{}\mathcal{S}_{\mathbf{type}}\denot{\varid{e}}\mathrel{=}\varid{closedType}\;(\mathcal{S}\denot{\varid{e}}_{\varcolor{\varepsilon}})\mathbin{::}\conid{PolyType}{}\<[E]%
\\[\blanklineskip]%
\>[B]{}\keyword{type}\;\conid{Subst}\mathrel{=}\conid{Name}\mathbin{:\rightharpoonup}\conid{Type};\keyword{type}\;\conid{Constraint}\mathrel{=}(\conid{Type},\conid{Type}){}\<[E]%
\\
\>[B]{}\keyword{newtype}\;\conid{J}\;\varid{a}\mathrel{=}\conid{J}\;(\conid{StateT}\;(\conid{Set}\;\conid{Name},\conid{Subst})\;\conid{Maybe}\;\varid{a}){}\<[E]%
\\
\>[B]{}\varid{freshTyVar}\mathbin{::}\conid{J}\;\conid{Type};\qquad\;{}\<[37]%
\>[37]{}\varid{instantiatePolyTy}{}\<[56]%
\>[56]{}\mathbin{::}\conid{PolyType}\to \conid{J}\;\conid{Type}{}\<[E]%
\\
\>[B]{}\varid{unify}\mathbin{::}\conid{Constraint}\to \conid{J}\;();\qquad\;{}\<[37]%
\>[37]{}\varid{generaliseTy}{}\<[56]%
\>[56]{}\mathbin{::}\conid{J}\;\conid{Type}\to \conid{J}\;\conid{PolyType}{}\<[E]%
\\
\>[B]{}\varid{closedType}\mathbin{::}\conid{J}\;\conid{Type}\to \conid{PolyType} ;\qquad\;{}\<[37]%
\>[37]{}\keyword{instance}\;\conid{Trace}\;(\conid{J}\;\varid{v})\;\keyword{where}\;\varid{step}\;\anonymous \mathrel{=}\varid{id}{}\<[E]%
\\
\>[B]{}\keyword{instance}\;\conid{Domain}\;(\conid{J}\;\conid{Type})\;\keyword{where}{}\<[E]%
\\
\>[B]{}\hsindent{3}{}\<[3]%
\>[3]{}\varid{stuck}{}\<[41]%
\>[41]{}\mathrel{=}{}\<[41E]%
\>[44]{}\varid{return}\;\conid{Wrong}{}\<[E]%
\\
\>[B]{}\hsindent{3}{}\<[3]%
\>[3]{}\varid{fun}\;\anonymous \; \varid{f}{}\<[41]%
\>[41]{}\mathrel{=}{}\<[41E]%
\>[44]{}\keyword{do}\;\varcolor{\theta}_\varcolor{\alpha}\leftarrow \varid{freshTyVar};\varcolor{\theta}\leftarrow \varid{f}\;(\varid{return}\;\varcolor{\theta}_\varcolor{\alpha});\varid{return}\;(\varcolor{\theta}_\varcolor{\alpha}\mathbin{:\rightarrow:}\varcolor{\theta}){}\<[E]%
\\
\>[B]{}\hsindent{3}{}\<[3]%
\>[3]{}\varid{apply}\;\varid{v}\;\varid{a}{}\<[41]%
\>[41]{}\mathrel{=}{}\<[41E]%
\>[44]{}\keyword{do}\;\varcolor{\theta}_{1}{}\<[51]%
\>[51]{}\leftarrow \varid{v};\varcolor{\theta}_{2}{}\<[61]%
\>[61]{}\leftarrow \varid{a};\varcolor{\theta}_\varcolor{\alpha}{}\<[71]%
\>[71]{}\leftarrow \varid{freshTyVar};\varid{unify}\;(\varcolor{\theta}_{1},\varcolor{\theta}_{2}\mathbin{:\rightarrow:}\varcolor{\theta}_\varcolor{\alpha});\varid{return}\;\varcolor{\theta}_\varcolor{\alpha}{}\<[E]%
\\
\>[B]{}\varid{uniFix}\mathbin{::}(\conid{J}\;\conid{Type}\to \conid{J}\;\conid{Type})\to \conid{J}\;\conid{Type}{}\<[E]%
\\
\>[B]{}\varid{uniFix}\;\varid{rhs}\mathrel{=}\keyword{do}\;\varcolor{\theta}_\varcolor{\alpha}\leftarrow \varid{freshTyVar};\varcolor{\theta}\leftarrow \varid{rhs}\;(\varid{return}\;\varcolor{\theta}_\varcolor{\alpha});\varid{unify}\;(\varcolor{\theta}_\varcolor{\alpha},\varcolor{\theta});\varid{return}\;\varcolor{\theta}_\varcolor{\alpha}{}\<[E]%
\\
\>[B]{}\keyword{instance}\;\conid{HasBind}\;(\conid{J}\;\conid{Type})\;\keyword{where}{}\<[E]%
\\
\>[B]{}\hsindent{3}{}\<[3]%
\>[3]{}\varid{bind}\;\varid{rhs}\;\varid{body}\mathrel{=}\keyword{do}\;\varcolor{\sigma}\leftarrow \varid{generaliseTy}\;(\varid{uniFix}\;\varid{rhs});\varid{body}\;(\varid{instantiatePolyTy}\;\varcolor{\sigma}){}\<[E]%
\ColumnHook
\end{hscode}\resethooks
\vspace{-1em}
\caption{Type analysis with let generalisation (Algorithm J)}
\label{fig:type-analysis}
\end{figure}

Computing least fixpoints is common practice in static program analysis.
However, some abstract domains employ quite different fixpoint strategies.
The abstract domain of the type analysis we sketch in this subsection is
an interesting example:
Type analysis --- specifically, \citeauthor{Milner:78}'s Algorithm~J ---
computes fixpoints by generating and solving a constraint system via
unification.
Furthermore, since the domain is familiar, it is a good one to study in the
context of denotational interpreters.

\Cref{fig:type-analysis} outlines the abstract domain \ensuremath{\conid{J}\;\conid{Type}} at which the
generic denotational interpreter can be instantiated to perform Type analysis.
We omit implementational details that are derivative of Milner's description of
Algorithm~J.
The full implementation can be found in the extract generated from this
document, but the provided code is sufficiently exemplary of the approach.
The decompressed form of this subsection can be found in
\Cref*{sec:type-analysis-detail}, including many examples.

Type analysis \ensuremath{\mathcal{S}_{\mathbf{type}}\denot{\wild}} infers the most general Hindley-Milner polytype of an expression, \eg
\[\ensuremath{\mathcal{S}_{\mathbf{type}}\denot{( \Let{i}{\Lam{x}{x}}{\Let{o}{\mathit{Some}(i)}{o}} )}}
  = 
\forall\alpha_{6}.\ \texttt{option}\;(\alpha_{6} \rightarrow \alpha_{6})
.\]
Key to the analysis is its abstract trace type \ensuremath{\conid{J}}, offering means to invoke
unification (\ensuremath{\varid{unify}}), fresh name generation (\ensuremath{\varid{freshTyVar}}, \ensuremath{\varid{instantiatePolyTy}})
and let generalisation (\ensuremath{\varid{generaliseTy}}).
Type \ensuremath{\conid{J}} implements these effects by maintaining a unifying substitution and
a set of used names via the standard monad transformer \ensuremath{\conid{StateT}}.
Unification failure is signalled by returning \ensuremath{\conid{Nothing}} in the base monad
\ensuremath{\conid{Maybe}}, and function \ensuremath{\varid{closedType}} for handling \ensuremath{\conid{J}} effects will return \ensuremath{\conid{Wrong}}
when that happens:
\[\ensuremath{\mathcal{S}_{\mathbf{type}}\denot{( \Let{x}{\mathit{None}()}{x~x} )}}
  = 
\textbf{wrong}
.\]
Throughout the analysis, the invariant is maintained that the \ensuremath{\conid{J}\;\conid{Type}}
summaries of let-bound variables in the interpreter environment \ensuremath{\varcolor{\rho}} are of the
form \ensuremath{\varid{instantiatePolyTy}\;\varcolor{\sigma}} for a polytype \ensuremath{\varcolor{\sigma}}, while lambda- and field-bound
variables are denoted by \ensuremath{\varid{return}\;\varcolor{\theta}}, yielding the same monotype \ensuremath{\varcolor{\theta}} at all use
sites.
(Use of the term ``summary'' is justified because both \ensuremath{\varcolor{\sigma}} and \ensuremath{\varcolor{\theta}} are data,
and it would be easy to defunctionalise \ensuremath{\varid{instantiatePolyTy}\;\varcolor{\sigma}} and \ensuremath{\varid{return}\;\varcolor{\theta}}
to be data as well.)
Thus, let-bound denotations instantiate polytypes on-the-fly at occurrence
sites, just as in Algorithm~J.

The \ensuremath{\conid{Domain}\;(\conid{J}\;\conid{Type})} instance bears no surprises:
\ensuremath{\varid{stuck}} terms are denoted by the monotype \ensuremath{\conid{Wrong}} and the definition of \ensuremath{\varid{fun}}
and \ensuremath{\varid{apply}} are literal translations of Algorithm~J.

The generalisation machinery comes to bear in the implementation
of \ensuremath{\varid{bind}}, which implements a combination of the $\mathit{fix}$ and $\mathit{let}$
cases in Algorithm~J, computing fixpoints by unification (\ensuremath{\varid{uniFix}}).

\begin{toappendix}
\subsection{Type Analysis: Algorithm J}
\label{sec:type-analysis-detail}

Computing least fixpoints is common practice in static program analysis.
However, some abstract domains employ quite different fixpoint strategies.
The abstract domain of the type analysis we define in this subsection is
an interesting example:
Type analysis --- specifically, \citeauthor{Milner:78}'s Algorithm~J ---
computes fixpoints by generating and solving a constraint system via
unification.

\Cref{fig:type-analysis} outlines the abstract domain \ensuremath{\conid{J}\;\conid{Type}} at which the
generic denotational interpreter can be instantiated to perform Type analysis.
We omit implementational details that are derivative of Milner's description of
Algorithm~J.
The full implementation can be found in the extract generated from this
document, but the provided code is sufficiently exemplary of the approach.

Type analysis \ensuremath{\mathcal{S}_{\mathbf{type}}\denot{\wild}} infers the most general type of an expression, \eg
\[\ensuremath{\mathcal{S}_{\mathbf{type}}\denot{( \Let{f}{\Lam{g}{\Lam{x}{g~x}}}{f} )}}
  = 
\forall\alpha_{4}, \alpha_{5}.\ (\alpha_{4} \rightarrow \alpha_{5}) \rightarrow \alpha_{4} \rightarrow \alpha_{5}
.\]
The most general type can be \emph{polymorphic} when it universally quantifies
over \emph{generic} type variables such as $α_4$ and $α_5$ above.
In general, such a \ensuremath{\conid{PolyType}} is of the form $\forall \many{\alpha}.\ θ$,
where $θ$ ranges over a monomorphic \ensuremath{\conid{Type}} that can be either a type variable
\ensuremath{\conid{TyVar}\;\alpha} (we will use \ensuremath{\varcolor{\theta}_\varcolor{\alpha}} as meta variable for this form), a function type
\ensuremath{\varcolor{\theta}_{1}\mathbin{:\rightarrow:}\varcolor{\theta}_{2}}, or a type constructor application \ensuremath{\conid{TyConApp}}, where
\ensuremath{\conid{TyConApp}\;\conid{OptionTyCon}\;[\mskip1.5mu \varcolor{\theta}_{1}\mskip1.5mu]} is printed as $\mathtt{option}~θ_1$.
The \ensuremath{\conid{Wrong}} type indicates a type error and is printed as $\textbf{wrong}$.

Key to the analysis is its abstract trace type \ensuremath{\conid{J}}, the name of which refers to the ambient
effects of Milner's Algorithm~J, offering means to invoke unification (\ensuremath{\varid{unify}}),
fresh name generation (\ensuremath{\varid{freshTyVar}}, \ensuremath{\varid{instantiatePolyTy}}) and let
generalisation (\ensuremath{\varid{generaliseTy}}).
Our type \ensuremath{\conid{J}} implements these effects by maintaining two pieces of state via the
standard monad transformer \ensuremath{\conid{StateT}}:
\begin{enumerate}
  \item
    a consistent set of type constraints as a unifying substitution \ensuremath{\conid{Subst}}.
  \item
    the set of used names as a \ensuremath{\conid{Set}\;\conid{Name}}.
    This is to supply fresh names in \ensuremath{\varid{freshTyVar}}
    and to instantiate a polytype $\forall α. α \to α$ to a monotype $α_1
    \to α_1$ for fresh $α_1$ as done by \ensuremath{\varid{instantiatePolyTy}}, but also to
    identify the type variables which are \emph{generic}~\citep{Milner:78} in
    the ambient type context and hence may be generalised by \ensuremath{\varid{generaliseTy}}.
\end{enumerate}
Unification failure is signalled by returning \ensuremath{\conid{Nothing}} in the base monad
\ensuremath{\conid{Maybe}}, and function \ensuremath{\varid{closedType}} for handling \ensuremath{\conid{J}} effects will return \ensuremath{\conid{Wrong}}
when that happens:
\[\ensuremath{\mathcal{S}_{\mathbf{type}}\denot{( \Let{x}{\mathit{None}()}{x~x} )}}
  = 
\textbf{wrong}
\]
The operational detail offered by \ensuremath{\conid{Trace}} is ignored by \ensuremath{\conid{J}}, but the \ensuremath{\conid{Domain}}
and \ensuremath{\conid{HasBind}} instances for the abstract semantic domain \ensuremath{\conid{J}\;\conid{Type}} are quite
interesting.
Throughout the analysis, the invariant is maintained that the \ensuremath{\conid{J}\;\conid{Type}} denotation
of let-bound variables in the interpreter environment \ensuremath{\varcolor{\rho}} is of the form
\ensuremath{\varid{instantiatePolyTy}\;\varcolor{\sigma}} for a polytype \ensuremath{\varcolor{\sigma}}, while lambda- and field-bound
variables are denoted by \ensuremath{\varid{return}\;\varcolor{\theta}}, yielding the same monotype \ensuremath{\varcolor{\theta}} at all use
sites.
Thus, let-bound denotations instantiate polytypes on-the-fly at occurrence
sites, just as in Algorithm~J.

As expected, \ensuremath{\varid{stuck}} terms are denoted by the monotype \ensuremath{\conid{Wrong}}.
The definition of \ensuremath{\varid{fun}} resembles the abstraction rule of Algorithm~J,
in that it draws a fresh variable type \ensuremath{\varcolor{\theta}_\varcolor{\alpha}\mathbin{::}\conid{Type}} (of the form \ensuremath{\conid{TyVar}\;\alpha})
to stand for the type of the argument.
This type is passed as a monotype \ensuremath{\varid{return}\;\varcolor{\theta}_\varcolor{\alpha}} to the body denotation
\ensuremath{\varid{f}}, where it will be added to the environment (\cf \Cref{fig:eval}) in order to
compute the result type \ensuremath{\varcolor{\theta}} of the function body.
The type for the whole function is then \ensuremath{\varcolor{\theta}_\varcolor{\alpha}\mathbin{:\rightarrow:}\varcolor{\theta}}.
The definition for \ensuremath{\varid{apply}} is a literal translation of Algorithm~J as well.
The cases for \ensuremath{\varid{con}} and \ensuremath{\varid{select}} are omitted as their implementation follows
a similar routine.

The generalisation and instantiation machinery comes to bear in the implementation
of \ensuremath{\varid{bind}}, which implements a combination of the $\mathit{fix}$ and $\mathit{let}$
cases in Algorithm~J, computing fixpoints by unification (\ensuremath{\varid{uniFix}}).
It is best understood by tracing the right-hand side of $o$ in the following
example:
\[\ensuremath{\mathcal{S}_{\mathbf{type}}\denot{( \Let{i}{\Lam{x}{x}}{\Let{o}{\mathit{Some}(i)}{o}} )}}
  = 
\forall\alpha_{6}.\ \texttt{option}\;(\alpha_{6} \rightarrow \alpha_{6})
\]
The recursive knot is tied in the \ensuremath{\keyword{do}} block passed to \ensuremath{\varid{generaliseTy}}.
It works by calling the iteratee \ensuremath{\varid{rhs}} (corresponding to $\mathit{Some}(i)$)
with a fresh unification variable type \ensuremath{\varcolor{\theta}_\varcolor{\alpha}}, for example $α_1$.
The result of the call to \ensuremath{\varid{rhs}} in turn is a monotype \ensuremath{\varcolor{\theta}},
for example $\mathtt{option}\;(α_3 \rightarrow α_3)$ for \emph{generic}
$α_3$, meaning that $α_3$ is a fresh name introduced in the right-hand side
while instantiating the polymorphic identity function $i$.
Then \ensuremath{\varcolor{\theta}_\varcolor{\alpha}} is unified with \ensuremath{\varcolor{\theta}}, substituting $α_1$ with
$\mathtt{option}\;(α_3 \rightarrow α_3)$.
This concludes the implementation of Milner's $\mathit{fix}$ case.

For Milner's $\mathit{let}$ case, the type \ensuremath{\varcolor{\theta}_\varcolor{\alpha}} is
generalised to $\forall α_3.\ \mathtt{option}\;(α_3 \rightarrow α_3)$
by universally quantifying the generic variable $α_3$.
It is easy for \ensuremath{\varid{generaliseTy}} to deduce that $α_3$ must be generic \wrt the
right-hand side, because $α_3$ does not occur in the set of used \ensuremath{\conid{Name}}s prior
to the call to \ensuremath{\varid{rhs}}.
The generalised polytype \ensuremath{\varcolor{\sigma}} is then instantiated afresh via \ensuremath{\varid{instantiatePolyTy}\;\varcolor{\sigma}} at every use site of $o$ in the let body, implementing polymorphic
instantiation.

\subsection{Control-flow Analysis}
\label{sec:0cfa}

\begin{figure}
\belowdisplayskip=0pt
\begin{hscode}\SaveRestoreHook
\column{B}{@{}>{\hspre}l<{\hspost}@{}}%
\column{3}{@{}>{\hspre}l<{\hspost}@{}}%
\column{5}{@{}>{\hspre}l<{\hspost}@{}}%
\column{7}{@{}>{\hspre}l<{\hspost}@{}}%
\column{9}{@{}>{\hspre}l<{\hspost}@{}}%
\column{23}{@{}>{\hspre}c<{\hspost}@{}}%
\column{23E}{@{}l@{}}%
\column{26}{@{}>{\hspre}l<{\hspost}@{}}%
\column{E}{@{}>{\hspre}l<{\hspost}@{}}%
\>[B]{}\mathcal{S}_{\mathbf{cfa}}\denot{\varid{e}}\mathrel{=}\varid{runCFA}\;(\mathcal{S}\denot{\varid{e}}_{\varcolor{\varepsilon}});\;\varid{runCFA}\mathbin{::}\concolor{\mathsf{D_C}}\to \conid{Labels}{}\<[E]%
\\
\>[B]{}\keyword{newtype}\;\conid{Labels}\mathrel{=}\conid{Lbls}\;(\conid{Set}\;\conid{Label}){}\<[E]%
\\
\>[B]{}\keyword{type}\;\concolor{\mathsf{D_C}}\mathrel{=}\conid{State}\;\conid{Cache}\;\conid{Labels}{}\<[E]%
\\
\>[B]{}\keyword{data}\;\conid{Cache}\mathrel{=}\conid{Cache}\;(\conid{Label}\mathbin{:\rightharpoonup}\conid{ConCache})\;(\conid{Label}\mathbin{:\rightharpoonup}\conid{FunCache}){}\<[E]%
\\
\>[B]{}\keyword{type}\;\conid{ConCache}\mathrel{=}(\conid{Tag},[\mskip1.5mu \conid{Labels}\mskip1.5mu]){}\<[E]%
\\
\>[B]{}\keyword{data}\;\conid{FunCache}\mathrel{=}\conid{FC}\;(\conid{Maybe}\;(\conid{Labels},\conid{Labels}))\;(\concolor{\mathsf{D_C}}\to \concolor{\mathsf{D_C}}){}\<[E]%
\\[\blanklineskip]%
\>[B]{}\varid{updConCache}\mathbin{::}\conid{Label}\to \conid{Tag}\to [\mskip1.5mu \conid{Labels}\mskip1.5mu]\to \conid{State}\;\conid{Cache}\;(){}\<[E]%
\\
\>[B]{}\varid{updFunCache}\mathbin{::}\conid{Label}\to (\concolor{\mathsf{D_C}}\to \concolor{\mathsf{D_C}})\to \conid{State}\;\conid{Cache}\;(){}\<[E]%
\\
\>[B]{}\varid{cachedCall}\mathbin{::}\conid{Labels}\to \conid{Labels}\to \concolor{\mathsf{D_C}}{}\<[E]%
\\
\>[B]{}\varid{cachedCons}\mathbin{::}\conid{Labels}\to \conid{State}\;\conid{Cache}\;(\conid{Tag}\mathbin{:\rightharpoonup}[\mskip1.5mu \conid{Labels}\mskip1.5mu]){}\<[E]%
\\[\blanklineskip]%
\>[B]{}\keyword{instance}\;\conid{HasBind}\;\concolor{\mathsf{D_C}}\;\keyword{where}\; ... {}\<[E]%
\\
\>[B]{}\keyword{instance}\;\conid{Trace}\;\concolor{\mathsf{D_C}}\;\keyword{where}\;\varid{step}\;\anonymous \mathrel{=}\varid{id}{}\<[E]%
\\
\>[B]{}\keyword{instance}\;\conid{Domain}\;\concolor{\mathsf{D_C}}\;\keyword{where}{}\<[E]%
\\
\>[B]{}\hsindent{3}{}\<[3]%
\>[3]{}\varid{stuck}\mathrel{=}\varid{return}\;\bot{}\<[E]%
\\
\>[B]{}\hsindent{3}{}\<[3]%
\>[3]{}\varid{fun}\;\anonymous \;\varcolor{\ell}\;\varid{f}\mathrel{=}\keyword{do}{}\<[E]%
\\
\>[3]{}\hsindent{2}{}\<[5]%
\>[5]{}\varid{updFunCache}\;\varcolor{\ell}\;\varid{f}{}\<[E]%
\\
\>[3]{}\hsindent{2}{}\<[5]%
\>[5]{}\varid{return}\;(\conid{Lbls}\;(\varid{\conid{Set}.singleton}\;\varcolor{\ell})){}\<[E]%
\\
\>[B]{}\hsindent{3}{}\<[3]%
\>[3]{}\varid{apply}\;\varid{dv}\;\varid{da}\mathrel{=}\keyword{do}{}\<[E]%
\\
\>[3]{}\hsindent{2}{}\<[5]%
\>[5]{}\varid{v}\leftarrow \varid{dv}{}\<[E]%
\\
\>[3]{}\hsindent{2}{}\<[5]%
\>[5]{}\varid{a}\leftarrow \varid{da}{}\<[E]%
\\
\>[3]{}\hsindent{2}{}\<[5]%
\>[5]{}\varid{cachedCall}\;\varid{v}\;\varid{a}{}\<[E]%
\\
\>[B]{}\hsindent{3}{}\<[3]%
\>[3]{}\varid{con}\;\varcolor{\ell}\;\varid{k}\;\varid{ds}\mathrel{=}\keyword{do}{}\<[E]%
\\
\>[3]{}\hsindent{2}{}\<[5]%
\>[5]{}\varid{lbls}\leftarrow \varid{sequence}\;\varid{ds}{}\<[E]%
\\
\>[3]{}\hsindent{2}{}\<[5]%
\>[5]{}\varid{updConCache}\;\varcolor{\ell}\;\varid{k}\;\varid{lbls}{}\<[E]%
\\
\>[3]{}\hsindent{2}{}\<[5]%
\>[5]{}\varid{return}\;(\conid{Lbls}\;(\varid{\conid{Set}.singleton}\;\varcolor{\ell})){}\<[E]%
\\
\>[B]{}\hsindent{3}{}\<[3]%
\>[3]{}\varid{select}\;\varid{dv}\;\varid{fs}\mathrel{=}\keyword{do}{}\<[E]%
\\
\>[3]{}\hsindent{2}{}\<[5]%
\>[5]{}\varid{v}\leftarrow \varid{dv}{}\<[E]%
\\
\>[3]{}\hsindent{2}{}\<[5]%
\>[5]{}\varid{tag2flds}\leftarrow \varid{cachedCons}\;\varid{v}{}\<[E]%
\\
\>[3]{}\hsindent{2}{}\<[5]%
\>[5]{}\varid{lub}\mathbin{<\mspace{-6mu}\$\mspace{-6mu}>}\varid{sequence}\;{}\<[23]%
\>[23]{}[\mskip1.5mu {}\<[23E]%
\>[26]{}\varid{f}\;(\varid{map}\;\varid{return}\;(\varid{tag2flds}\mathop{!}\varid{k})){}\<[E]%
\\
\>[23]{}\mid {}\<[23E]%
\>[26]{}(\varid{k},\varid{f})\leftarrow \varid{\conid{Map}.assocs}\;\varid{fs},\varid{k}\in \varid{dom}\;\varid{tag2flds}\mskip1.5mu]{}\<[E]%
\ColumnHook
\end{hscode}\resethooks

\caption{Domain \ensuremath{\concolor{\mathsf{D_C}}} for 0CFA control-flow analysis}
\label{fig:cfa}
\end{figure}

Traditionally, control-flow analysis (CFA)~\citep{Shivers:91} is an important
instance of higher-order abstract interpreters~\citep{aam,adi}.
Although one of the main advantages of denotational interpreters is that
summary-based analyses can be derived as instances, this subsection demonstrates
that a call-strings-based CFA can be derived as an instance from the generic
denotational interpreter in \Cref{fig:eval} as well.

CFA overapproximates the set of syntactic values an expression evaluates to,
so as to narrow down the possible control-flow edges at application sites.
The resulting control-flow graph conservatively approximates the control-flow of
the whole program and can be used to apply classic intraprocedural analyses such
as interval analysis or constant propagation in an interprocedural setting.

\Cref{fig:cfa} implements the 0CFA variant of control-flow analysis~\citep{Shivers:91}.
For a given expression, it reports a set of \emph{program labels} --- textual
representations of positions in the program ---
that the expression might evaluate to:
\begin{align}\ensuremath{\mathcal{S}_{\mathbf{cfa}}\denot{( \Let{i}{\Lam{x}{x}}{\Let{j}{\Lam{y}{y}}{i~j~j}} )}}
  = 
\{\lambda y..\}
 \label{ex:cfa}\end{align}
Here, 0CFA infers that the example expression will evaluate to the lambda
expression bound to $j$.
This lambda is uniquely identified by the reported label $λy..$ per the unique
binder assumption in \Cref{sec:lang}.
Furthermore, the analysis determined that the expression cannot evaluate to the
lambda expression bound to $i$, hence its label $λx..$ is \emph{not} included
in the set.

By contrast, when $i$ is dynamically called both with $i$ and with $j$, the
result becomes approximate because 0CFA joins together the information from the
two call sites:
\[\ensuremath{\mathcal{S}_{\mathbf{cfa}}\denot{( \Let{i}{\Lam{x}{x}}{\Let{j}{\Lam{y}{y}}{i~\highlight{i}~j}} )}}
  = 
\{\lambda x..,  \lambda y..\}
\]

Labels for constructor applications simply print their syntax, \eg
\begin{equation}\thickmuskip=4mu\ensuremath{\mathcal{S}_{\mathbf{cfa}}\denot{( \Let{x}{\Let{y}{S(x)}{S(y)}}{\Case{x}{\{ Z() \to x; S(z) \to z \}}} )}}
  = 
\{S(x)\}
.\label{ex:cfa2} \end{equation}
Note that in this example, 0CFA discovers that $x$ evaluates to $S(y)$ and hence
is able to conclude that the $Z()$ branch of the case expression is dead.
In doing so, 0CFA rules out that the expression evaluates to $S(y)$,
reporting $S(x)$ as the only value of the expression.

In general, the label (\ie string) $S(y)$ does not uniquely determine a position
in the program because the expression may occur multiple times.
However, eliminating such common subexpressions is semantics-preserving, so
We argue that this poor man's notion of program labels is good enough for the
purpose of this demonstration.

To facilitate 0CFA as an instance of the generic denotational interpreter, we
need to slightly revise the \ensuremath{\conid{Domain}} class to pass the syntactic label to \ensuremath{\varid{fun}}
and \ensuremath{\varid{con}}:
\begin{hscode}\SaveRestoreHook
\column{B}{@{}>{\hspre}l<{\hspost}@{}}%
\column{3}{@{}>{\hspre}l<{\hspost}@{}}%
\column{8}{@{}>{\hspre}l<{\hspost}@{}}%
\column{E}{@{}>{\hspre}l<{\hspost}@{}}%
\>[B]{}\keyword{type}\;\conid{Label}\mathrel{=}\conid{String}{}\<[E]%
\\
\>[B]{}\keyword{class}\;\conid{Domain}\;\varid{d}\;\keyword{where}{}\<[E]%
\\
\>[B]{}\hsindent{3}{}\<[3]%
\>[3]{}\varid{con}{}\<[8]%
\>[8]{}\mathbin{::}\highlight{\conid{Label}}\to \conid{Tag}\to [\mskip1.5mu \varid{d}\mskip1.5mu]\to \varid{d}{}\<[E]%
\\
\>[B]{}\hsindent{3}{}\<[3]%
\>[3]{}\varid{fun}{}\<[8]%
\>[8]{}\mathbin{::}\conid{Name}\to \highlight{\conid{Label}}\to (\varid{d}\to \varid{d})\to \varid{d}{}\<[E]%
\ColumnHook
\end{hscode}\resethooks
\noindent
Constructing and forwarding labels appropriately in \ensuremath{\mathcal{S}\denot{\wild}_{\wild}} and adjusting
previous \ensuremath{\conid{Domain}} instances is routine and hence omitted.

\Cref{fig:cfa} represents sets of labels with the type \ensuremath{\conid{Labels}}, the
type of abstract values of the analysis.
The abstract domain \ensuremath{\concolor{\mathsf{D_C}}} of 0CFA is simply a stateful computation returning \ensuremath{\conid{Labels}}.
To this end, we define \ensuremath{\concolor{\mathsf{D_C}}} in terms of the standard \ensuremath{\conid{State}} monad to mutate a
\ensuremath{\conid{Cache}}, an abstraction of the heap discussed next.

\medskip

Recall that each \ensuremath{\conid{Label}} determines a syntactic value in the program.
The \ensuremath{\conid{Cache}} maintains, for every labelled value encountered thus far, an
approximation of its action on \ensuremath{\conid{Labels}}.

For example, the denotational interpreter evaluates the constructor application
$S(y)$ in the right-hand side of $x$ in \Cref{ex:cfa2} by calling
the \ensuremath{\conid{Domain}} method \ensuremath{\varid{con}}.
This call is implemented by updating the \ensuremath{\conid{ConCache}} field under the label $S(y)$
so that it carries the constructor tag $S$ as well as the \ensuremath{\conid{Labels}} that its
field $y$ evaluates to. In our example, $y$ evaluates to the set $\{S(x)\}$,
so the \ensuremath{\conid{ConCache}} entry at label $S(y)$ is updated to $(S,[\{S(x)\}])$.
This information is then available when evaluating the $\mathbf{case}$ expression
in \Cref{ex:cfa2} with \ensuremath{\varid{select}}, where the scrutinee $x$ returns $\ensuremath{\varid{v}} \triangleq \{S(y)\}$.
Function \ensuremath{\varid{cachedCons}} looks up for each label in \ensuremath{\varid{v}} the respective \ensuremath{\conid{ConCache}}
entry and merges these entries into an environment
\ensuremath{\varid{tag2flds}\mathbin{::}\conid{Tag}\mathbin{:\rightharpoonup}[\mskip1.5mu \conid{Labels}\mskip1.5mu]}, representing all the possible shapes the
scrutinee can take.
In our case, \ensuremath{\varid{tag2flds}} is just a singleton environment $[S ↦ [\{S(x)\}]]$.
This environment is subsequently joined with the alternatives of the case expression.
The only alternative that matches is $S(z) \to z$, where $z$ is bound to $\{S(x)\}$
from the information in the \ensuremath{\conid{ConCache}}.
The alternative $Z() \to x$ is dead because the case scrutinee $x$ does not
evaluate to shape $Z()$.

For another example involving the \ensuremath{\conid{FunCache}}, consider the example \Cref{ex:cfa}.
When the lambda expression $\Lam{x}{x}$ in the right-hand side of $i$ is
evaluated through \ensuremath{\varid{fun}}, it updates the \ensuremath{\conid{FunCache}} at label $λx..$ with
the corresponding abstract transformer \ensuremath{(\lambda \varid{x}\to \varid{x})\mathbin{::}\concolor{\mathsf{D_C}}\to \concolor{\mathsf{D_C}}}, registering it
for call sites.
Later, the application site $i~j$ in \Cref{ex:cfa} is evaluated to a call to
the \ensuremath{\conid{Domain}} method \ensuremath{\varid{apply}} with the denotations of $i$ and $j$.
The denotation for $i$ is bound to \ensuremath{\varid{dv}} and returns a set $\ensuremath{\varid{v}} \triangleq \{λx..\}$, while
the denotation for $j$ is bound to \ensuremath{\varid{da}} and returns a set $\ensuremath{\varid{a}} \triangleq \{λy..\}$.
These sets are passed to \ensuremath{\varid{cachedCall}} which iterates over the labels in the
callee \ensuremath{\varid{v}}.
For each such label, it looks up the abstract transformer in \ensuremath{\conid{FunCache}}, applies
it to the set of labels \ensuremath{\varid{a}} (taking approximative measures described below) and
joins together the labels returned from each call.
In our example, there is just a single callee label $λx..$, the abstract transformer
of which is the identity function \ensuremath{(\lambda \varid{x}\to \varid{x})\mathbin{::}\concolor{\mathsf{D_C}}\to \concolor{\mathsf{D_C}}}.
Applying the identity transformer to the set of labels $\{λy..\}$ from the
denotation of the argument $j$ returns this same set; the result of the
application $i~j$.

The above description calls a function label's abstract transformer anew at
every call site.
This yields the exact control-flow semantics of the original control-flow
analysis work~\citep[Section 3.4]{Shivers:91}, which is potentially diverging.
The way 0CFA (and our implementation of it) becomes finite is by maintaining only
a single point approximation of each abstract transformer's graph ($k$-CFA would
maintain one point per contour).
This single point approximation can be seen as the transformer's summary, but
this summary is \emph{call-site sensitive}:
Since the single point must be applicable at all call sites, the function body
must be reanalysed as the inputs from call sites increase.
Maintaining the single point approximation is the purpose of the \ensuremath{\conid{Maybe}\;(\conid{Labels},\conid{Labels})} field of the \ensuremath{\conid{FunCache}} and is a standard, if somewhat delicate hassle
in control-flow analyses.

Note that the given formulation of 0CFA is not modular; that is, the single point
approximation for a function \ensuremath{\varid{\conid{A}.f}} is not generally applicable at a call site
in module \ensuremath{\conid{B}} such as \ensuremath{\varid{\conid{A}.f}\;\varid{\conid{B}.x}} because the labels that \ensuremath{\varid{\conid{B}.x}} evaluate to
might not be known when compiling module \ensuremath{\conid{A}}.
\citet[Section 3.8.2]{Shivers:91} proposes a solution to this problem that
we chose not to implement for the simple proof of concept here.


Note that a \ensuremath{\concolor{\mathsf{D_C}}} transitively (through \ensuremath{\conid{Cache}}) recurses into \ensuremath{\concolor{\mathsf{D_C}}\to \concolor{\mathsf{D_C}}},
rendering the encoding non-inductive due to the negative occurrence.
This highlights a common challenge with instances of CFA:
the obligation to prove that the analysis actually terminates on all inputs; an
obligation that we will gloss over in this short demonstration.

\subsection{Stateful Analysis and Annotations}
\label{sec:annotations}

\begin{figure}
\hfuzz=2em
\belowdisplayskip=0pt
\begin{hscode}\SaveRestoreHook
\column{B}{@{}>{\hspre}l<{\hspost}@{}}%
\column{3}{@{}>{\hspre}l<{\hspost}@{}}%
\column{5}{@{}>{\hspre}l<{\hspost}@{}}%
\column{11}{@{}>{\hspre}l<{\hspost}@{}}%
\column{13}{@{}>{\hspre}l<{\hspost}@{}}%
\column{16}{@{}>{\hspre}l<{\hspost}@{}}%
\column{23}{@{}>{\hspre}c<{\hspost}@{}}%
\column{23E}{@{}l@{}}%
\column{26}{@{}>{\hspre}l<{\hspost}@{}}%
\column{E}{@{}>{\hspre}l<{\hspost}@{}}%
\>[B]{}\keyword{class}\;\conid{Domain}\;\varid{d}\Rightarrow \conid{StaticDomain}\;\varid{d}\;\keyword{where}{}\<[E]%
\\
\>[B]{}\hsindent{3}{}\<[3]%
\>[3]{}\keyword{type}\;\conid{Ann}\;\varid{d}{}\<[16]%
\>[16]{}\mathbin{::}\mathbin{*}{}\<[E]%
\\
\>[B]{}\hsindent{3}{}\<[3]%
\>[3]{}\varid{extractAnn}{}\<[16]%
\>[16]{}\mathbin{::}\conid{Name}\to \varid{d}\to (\varid{d},\conid{Ann}\;\varid{d}){}\<[E]%
\\
\>[B]{}\hsindent{3}{}\<[3]%
\>[3]{}\varid{funS}{}\<[16]%
\>[16]{}\mathbin{::}\conid{Monad}\;\varid{m}\Rightarrow \conid{Name}\to  (\varid{m}\;\varid{d}\to \varid{m}\;\varid{d})\to \varid{m}\;\varid{d}{}\<[E]%
\\
\>[B]{}\hsindent{3}{}\<[3]%
\>[3]{}\varid{selectS}{}\<[16]%
\>[16]{}\mathbin{::}\conid{Monad}\;\varid{m}\Rightarrow \varid{m}\;\varid{d}\to (\conid{Tag}\mathbin{:\rightharpoonup}([\mskip1.5mu \varid{m}\;\varid{d}\mskip1.5mu]\to \varid{m}\;\varid{d}))\to \varid{m}\;\varid{d}{}\<[E]%
\\
\>[B]{}\hsindent{3}{}\<[3]%
\>[3]{}\varid{bindS}{}\<[16]%
\>[16]{}\mathbin{::}\conid{Monad}\;\varid{m}\Rightarrow \conid{Name}\to \varid{d}\to (\varid{d}\to \varid{m}\;\varid{d})\to (\varid{d}\to \varid{m}\;\varid{d})\to \varid{m}\;\varid{d}{}\<[E]%
\\[\blanklineskip]%
\>[B]{}\keyword{instance}\;\conid{StaticDomain}\;\concolor{\mathsf{D_U}}\;\keyword{where}{}\<[E]%
\\
\>[B]{}\hsindent{3}{}\<[3]%
\>[3]{}\keyword{type}\;\conid{Ann}\;\concolor{\mathsf{D_U}}\mathrel{=}\conid{U}{}\<[E]%
\\
\>[B]{}\hsindent{3}{}\<[3]%
\>[3]{}\varid{extractAnn}\;\varid{x}\;\langle \varcolor{\varphi}, \varid{v} \rangle\mathrel{=}(\langle \varid{\conid{Map}.delete}\;\varid{x}\;\varcolor{\varphi}, \varid{v} \rangle,\varcolor{\varphi}\mathbin{!?}\varid{x}){}\<[E]%
\\
\>[B]{}\hsindent{3}{}\<[3]%
\>[3]{}\varid{funS}\;\varid{x}\;\varid{f}\mathrel{=}\keyword{do}{}\<[E]%
\\
\>[3]{}\hsindent{2}{}\<[5]%
\>[5]{}\langle \varcolor{\varphi}, \varid{v} \rangle\leftarrow \varid{f}\;(\varid{return}\;\langle [\varid{x}\mapsto\concolor{\mathsf{U_1}}], \conid{Rep}\;\concolor{\mathsf{U_\omega}} \rangle){}\<[E]%
\\
\>[3]{}\hsindent{2}{}\<[5]%
\>[5]{}\varid{return}\;\langle \varcolor{\varphi}[\varid{x}\mapsto\concolor{\mathsf{U_0}}], \varcolor{\varphi}\mathbin{!?}\varid{x} \argcons \varid{v} \rangle{}\<[E]%
\\
\>[B]{}\hsindent{3}{}\<[3]%
\>[3]{}\varid{selectS}\;\varid{md}\;\varid{mfs}\mathrel{=}\keyword{do}{}\<[E]%
\\
\>[3]{}\hsindent{2}{}\<[5]%
\>[5]{}\varid{d}\leftarrow \varid{md}{}\<[E]%
\\
\>[3]{}\hsindent{2}{}\<[5]%
\>[5]{}\varid{alts}\leftarrow \varid{sequence}\;{}\<[23]%
\>[23]{}[\mskip1.5mu {}\<[23E]%
\>[26]{}\varid{f}\;(\varid{replicate}\;(\varid{conArity}\;\varid{k})\;(\varid{return}\;\langle \varcolor{\varepsilon}, \conid{Rep}\;\concolor{\mathsf{U_\omega}} \rangle)){}\<[E]%
\\
\>[23]{}\mid {}\<[23E]%
\>[26]{}(\varid{k},\varid{f})\leftarrow \varid{\conid{Map}.assocs}\;\varid{mfs}\mskip1.5mu]{}\<[E]%
\\
\>[3]{}\hsindent{2}{}\<[5]%
\>[5]{}\varid{return}\;(\varid{d}\sequ \varid{lub}\;\varid{alts}){}\<[E]%
\\
\>[B]{}\hsindent{3}{}\<[3]%
\>[3]{}\varid{bindS}\;\anonymous \;\varid{init}\;\varid{rhs}\;\varid{body}\mathrel{=}\varid{kleeneFixAboveM}\;\varid{init}\;\varid{rhs}\bind \varid{body}{}\<[E]%
\\[\blanklineskip]%
\>[B]{}\varid{kleeneFixAboveM}\mathbin{::}(\conid{Monad}\;\varid{m},\conid{Lat}\;\varid{a})\Rightarrow \varid{a}\to (\varid{a}\to \varid{m}\;\varid{a})\to \varid{m}\;\varid{a}{}\<[E]%
\\[\blanklineskip]%
\>[B]{}\mathcal{S}_{\mathbf{usage}\leadsto}\denot{\varid{e}}_{\varcolor{\rho}}\mathrel{=}\varid{runAnn}\;(\mathcal{S}\denot{\varid{e}}_{\varid{return}\mathbin{\lhd}\varcolor{\rho}})\mathbin{::}(\concolor{\mathsf{D_U}},\conid{Name}\mathbin{:\rightharpoonup}\conid{U}){}\<[E]%
\\[\blanklineskip]%
\>[B]{}\keyword{data}\;\conid{Refs}\;\varid{s}\;\varid{d}\mathrel{=}\conid{Refs}\;(\conid{STRef}\;\varid{s}\;(\conid{Name}\mathbin{:\rightharpoonup}\varid{d}))\;(\conid{STRef}\;\varid{s}\;(\conid{Name}\mathbin{:\rightharpoonup}\conid{Ann}\;\varid{d})){}\<[E]%
\\
\>[B]{}\keyword{newtype}\;\conid{AnnT}\;\varid{s}\;\varid{d}\;\varid{a}\mathrel{=}\conid{AnnT}\;(\conid{Refs}\;\varid{s}\;\varid{d}\to \conid{ST}\;\varid{s}\;\varid{a});\keyword{type}\;\conid{AnnD}\;\varid{s}\;\varid{d}\mathrel{=}\conid{AnnT}\;\varid{s}\;\varid{d}\;\varid{d}{}\<[E]%
\\
\>[B]{}\varid{runAnn}{}\<[11]%
\>[11]{}\mathbin{::}(\keyword{\forall}\!\! \hsforall \;\varid{s}\hsdot{\circ }{.\ }\conid{AnnD}\;\varid{s}\;\varid{d})\to (\varid{d},\conid{Name}\mathbin{:\rightharpoonup}\conid{Ann}\;\varid{d}){}\<[E]%
\\[\blanklineskip]%
\>[B]{}\keyword{instance}\;\conid{Monad}\;(\conid{AnnT}\;\varid{s}\;\varid{d})\;\keyword{where}\mathbin{...}{}\<[E]%
\\[\blanklineskip]%
\>[B]{}\keyword{instance}\;\conid{Trace}\;\varid{d}\Rightarrow \conid{Trace}\;(\conid{AnnD}\;\varid{s}\;\varid{d})\;\keyword{where}{}\<[E]%
\\
\>[B]{}\hsindent{3}{}\<[3]%
\>[3]{}\varid{step}\;\varid{ev}\;(\conid{AnnT}\;\varid{f})\mathrel{=}\conid{AnnT}\;(\lambda \varid{refs}\to \varid{step}\;\varid{ev}\mathbin{<\mspace{-6mu}\$\mspace{-6mu}>}\varid{f}\;\varid{refs}){}\<[E]%
\\[\blanklineskip]%
\>[B]{}\keyword{instance}\;\conid{StaticDomain}\;\varid{d}\Rightarrow \conid{Domain}\;(\conid{AnnD}\;\varid{s}\;\varid{d})\;\keyword{where}\; ... {}\<[E]%
\\[\blanklineskip]%
\>[B]{}\varid{readCache}{}\<[13]%
\>[13]{}\mathbin{::}\conid{Lat}\;\varid{d}\Rightarrow \conid{Name}\to \conid{AnnD}\;\varid{s}\;\varid{d}{}\<[E]%
\\
\>[B]{}\varid{writeCache}{}\<[13]%
\>[13]{}\mathbin{::}\conid{Name}\to \varid{d}\to \conid{AnnT}\;\varid{s}\;\varid{d}\;(){}\<[E]%
\\
\>[B]{}\varid{annotate}{}\<[13]%
\>[13]{}\mathbin{::}\conid{StaticDomain}\;\varid{d}\Rightarrow \conid{Name}\to \conid{AnnD}\;\varid{s}\;\varid{d}\to \conid{AnnD}\;\varid{s}\;\varid{d}{}\<[E]%
\\[\blanklineskip]%
\>[B]{}\keyword{instance}\;(\conid{Lat}\;\varid{d},\conid{StaticDomain}\;\varid{d})\Rightarrow \conid{HasBind}\;(\conid{AnnD}\;\varid{s}\;\varid{d})\;\keyword{where}{}\<[E]%
\\
\>[B]{}\hsindent{3}{}\<[3]%
\>[3]{}\varid{bind}\;\varid{x}\;\varid{rhs}\;\varid{body}\mathrel{=}\keyword{do}{}\<[E]%
\\
\>[3]{}\hsindent{2}{}\<[5]%
\>[5]{}\varid{init}\leftarrow \varid{readCache}\;\varid{x}{}\<[E]%
\\
\>[3]{}\hsindent{2}{}\<[5]%
\>[5]{}\keyword{let}\;\varid{rhs'}\;\varid{d}_{1}\mathrel{=}\keyword{do}\;\varid{d}_{2}\leftarrow \varid{rhs}\;(\varid{return}\;\varid{d}_{1});\varid{writeCache}\;\varid{x}\;\varid{d}_{2};\varid{return}\;\varid{d}_{2}{}\<[E]%
\\
\>[3]{}\hsindent{2}{}\<[5]%
\>[5]{}\varid{annotate}\;\varid{x}\;(\varid{bindS}\;\varid{x}\;\varid{init}\;\varid{rhs'}\;(\varid{body}\hsdot{\circ }{.\ }\varid{return})){}\<[E]%
\ColumnHook
\end{hscode}\resethooks
\caption{Trace transformer \ensuremath{\conid{AnnT}} for recording annotations and caching of fixpoints}
\label{fig:annotations}
\end{figure}

Thus far, the static analyses derived from the generic denotational interpreter
produce a single abstract denotation \ensuremath{\varid{d}\triangleq\mathcal{S}\denot{\varid{e}}_{\varcolor{\varepsilon}}} for the program
expression \ensuremath{\varid{e}}.
However, in practice static compiler analyses such as Demand Analysis usually
drive a subsequent optimisation, for which a single denotation for the entire
program is insufficient.
Rather, we need one for every sub-expression, or at least every binding.

If we are interested in analysis results for variables \emph{bound} in
\ensuremath{\varid{e}}, then either the analysis must collect these results in \ensuremath{\varid{d}}, or we must
redundantly re-run the analysis for subexpressions.

In this subsection we show how to lift a pure, \emph{single-result} analysis into a
\emph{stateful} analysis that gives results for every binder, such that
\begin{itemize}
  \item it collects analysis results for bound variables in a separate, global map, and
  \item it caches fixpoints in yet another global map, so that nested fixpoint
    iteration can be sped up by starting from a previous approximation.
\end{itemize}
It is a common pattern for analyses to be stateful in this
way~\citep{Sergey:14}; GHC's Demand Analysis is a good real-world example.
The following demonstration targets usage analysis, but the technique should be
easy to adapt for other analyses discussed in this section.

\subsubsection{The Need for Isolating Bound Variable Usage}

For a concrete example, let us compare the results of usage analysis
from \Cref{sec:usage-analysis} on the expression $\pe_1 \triangleq
\Let{i}{\Lam{x}{\Let{j}{\Lam{y}{y}}{j}}}{i~i~i}$ and its subexpression
$\pe_2 \triangleq \Let{j}{\Lam{y}{y}}{j}$:
\[\begin{array}{lcl}
\ensuremath{\mathcal{S}_{\mathbf{usage}}\denot{( \Let{i}{\Lam{x}{\Let{j}{\Lam{y}{y}}{j}}}{i~i~i} )}_{\varcolor{\varepsilon}}}
 & = & 
\langle [\mathit{i} \! \mapsto \! \concolor{\mathsf{U_ω}},\mathit{j} \! \mapsto \! \concolor{\mathsf{U_ω}}], \conid{Rep}\;\concolor{\mathsf{U_ω}} \rangle
 \\
\ensuremath{\mathcal{S}_{\mathbf{usage}}\denot{( \Let{j}{\Lam{y}{y}}{j} )}_{\varcolor{\varepsilon}}}
 & = & 
\langle [\mathit{j} \! \mapsto \! \concolor{\mathsf{U_1}}], \concolor{\mathsf{U_1}} \argcons \conid{Rep}\;\concolor{\mathsf{U_ω}} \rangle

\end{array}\]
The analysis reports a different usage \ensuremath{\concolor{\mathsf{U_1}}} for the bound variable $j$ in the
subexpression $\pe_2$ versus \ensuremath{\concolor{\mathsf{U_\omega}}} in the containing expression $\pe_1$.
This is because in order for single-result usage analysis to report information
on \emph{bound} variable $j$ at all, it treats $j$ like a \emph{free} variable
of $i$, adding a use on $j$ for every call of $i$.
While this treatment reflects that multiple $\LookupT(j)$ events
will be observed when evaluating $\pe_1$, each event associates to a
\emph{different} activation (\ie heap entry) of the let binding $j$.
The result \ensuremath{\concolor{\mathsf{U_1}}} reported for $j$ in subexpression $\pe_2$ is more useful
because it reflects that every \emph{activation} of the binding
$j$ is looked up at most once over its lifetime, which is indeed the formal
property of interest in \Cref{sec:soundness}.

Rather than to re-run the analysis for every let binding such as $j$, we will
now demonstrate a way to write out an \emph{annotation} for $j$, just before
analysis leaves the $\mathbf{let}$ that binds $j$.
Annotations for bound variables constitute analysis state that will be
maintained separately from information on free variables.

\subsubsection{Maintaining Annotations by Implementing \ensuremath{\conid{StaticDomain}}}

\Cref{fig:annotations} lifts the existing definition for single-result usage
analysis \ensuremath{\mathcal{S}_{\mathbf{usage}}\denot{\wild}_{\wild}} into a stateful analysis \ensuremath{\mathcal{S}_{\mathbf{usage}\leadsto}\denot{\wild}_{\wild}} that writes out usage
information on bound variables into a separate map.
Consider the result on the example expression $\pe_1$ from above, where the pair
$(d, \mathit{anns})$ returned by \ensuremath{\mathcal{S}_{\mathbf{usage}\leadsto}\denot{\wild}_{\wild}} is printed as $d \leadsto \mathit{anns}$:
\[\thickmuskip=4mu\ensuremath{\mathcal{S}_{\mathbf{usage}\leadsto}\denot{( \Let{i}{\Lam{x}{\Let{j}{\Lam{y}{y}}{j}}}{i~i~i} )}_{\varcolor{\varepsilon}}}
 = 
\langle [], \conid{Rep}\;\concolor{\mathsf{U_ω}} \rangle \leadsto [\mathit{i} \! \mapsto \! \concolor{\mathsf{U_ω}},\mathit{j} \! \mapsto \! \concolor{\mathsf{U_1}}]
 \]
The annotations for both bound variables $i$ and $j$ are returned in an
annotation environment separate from the empty abstract free variable
environment \ensuremath{\varcolor{\varepsilon}\mathbin{::}\conid{Uses}} of the expression.
Furthermore, the use \ensuremath{\concolor{\mathsf{U_1}}} reported for $j$ is exactly as when analysing the
subexpression $\pe_2$ in isolation, as required.

Lifting the single-result analysis \ensuremath{\mathcal{S}_{\mathbf{usage}}\denot{\wild}_{\wild}} defined on \ensuremath{\concolor{\mathsf{D_U}}} to a stateful
analysis \ensuremath{\mathcal{S}_{\mathbf{usage}\leadsto}\denot{\wild}_{\wild}} requires very little extra code, implementing a type class instance \ensuremath{\conid{StaticDomain}\;\concolor{\mathsf{D_U}}}.
Before going into detail about how this lifting is implemented in terms of type
\ensuremath{\conid{AnnT}}, let us review its type class interface.
The type class \ensuremath{\conid{StaticDomain}} defines the associated type \ensuremath{\conid{Ann}} of annotations
in the particular static domain, along with a function \ensuremath{\varid{extractAnn}\;\varid{x}\;\varid{d}} for
extracting information on a let-bound \ensuremath{\varid{x}} from the denotation \ensuremath{\varid{d}}.
The instance for \ensuremath{\concolor{\mathsf{D_U}}} instantiates \ensuremath{\conid{Ann}\;\concolor{\mathsf{D_U}}} to bound variable use \ensuremath{\conid{U}}, and
\ensuremath{\varid{extractAnn}\;\varid{x}\;\langle \varcolor{\varphi}, \varid{v} \rangle} isolates the free variable use \ensuremath{\varcolor{\varphi}\mathop{!}\varid{x}} as annotation.
The remaining type class methods \ensuremath{\varid{funS}}, \ensuremath{\varid{selectS}} and \ensuremath{\varid{bindS}} are
simple monadic generalisations of their counterparts in \ensuremath{\conid{Domain}} and \ensuremath{\conid{HasBind}}.

The implementation of \ensuremath{\conid{StaticDomain}} requires very little extra code to
maintain, because the original definitions of \ensuremath{\varid{fun}}, \ensuremath{\varid{select}} and \ensuremath{\varid{bind}} can be
recovered in terms of the generalised definitions via the standard \ensuremath{\conid{Identity}}
monad as follows, where \ensuremath{\varid{coerce}} denotes a safe zero-cost coercion function
provided by GHC~\citep{Breitner:14}:
\begin{hscode}\SaveRestoreHook
\column{B}{@{}>{\hspre}l<{\hspost}@{}}%
\column{E}{@{}>{\hspre}l<{\hspost}@{}}%
\>[B]{}\keyword{newtype}\;\conid{Identity}\;\varid{a}\mathrel{=}\conid{Identity}\;\{\mskip1.5mu \varid{runIdentity}\mathbin{::}\varid{a}\mskip1.5mu\}{}\<[E]%
\\[\blanklineskip]%
\>[B]{}\varid{fun'}\mathbin{::}\conid{StaticDomain}\;\varid{d}\Rightarrow \conid{Name}\to \conid{Label}\to (\varid{d}\to \varid{d})\to \varid{d}{}\<[E]%
\\
\>[B]{}\varid{fun'}\;\varid{x}\;\varid{f}\mathrel{=}\varid{runIdentity}\;(\varid{funS}\;\varid{x}\;(\varid{coerce}\;\varid{f})){}\<[E]%
\\
\>[B]{}\varid{select'}\mathbin{::}\conid{StaticDomain}\;\varid{d}\Rightarrow \varid{d}\to (\conid{Tag}\mathbin{:\rightharpoonup}([\mskip1.5mu \varid{d}\mskip1.5mu]\to \varid{d}))\to \varid{d}{}\<[E]%
\\
\>[B]{}\varid{select'}\;\varid{d}\;\varid{fs}\mathrel{=}\varid{runIdentity}\;(\varid{selectS}\;(\conid{Identity}\;\varid{d})\;(\varid{coerce}\;\varid{fs})){}\<[E]%
\\
\>[B]{}\varid{bind'}\mathbin{::}(\conid{Lat}\;\varid{d},\conid{StaticDomain}\;\varid{d})\Rightarrow \conid{Name}\to (\varid{d}\to \varid{d})\to (\varid{d}\to \varid{d})\to \varid{d}{}\<[E]%
\\
\>[B]{}\varid{bind'}\;\varid{x}\;\varid{rhs}\;\varid{body}\mathrel{=}\varid{runIdentity}\;(\varid{bindS}\;\varid{x}\;\bot\;(\varid{coerce}\;\varid{rhs})\;(\varid{coerce}\;\varid{body})){}\<[E]%
\ColumnHook
\end{hscode}\resethooks
Any reasonable instance of \ensuremath{\conid{StaticDomain}} must satisfy the laws \ensuremath{\varid{fun}\mathrel{=}\varid{fun'}},
\ensuremath{\varid{select}\mathrel{=}\varid{select'}} and \ensuremath{\varid{bind}\mathrel{=}\varid{bind'}}.
(As can be seen in \Cref{fig:annotations} and above, we needed to slightly revise
the \ensuremath{\conid{HasBind}} type class in order to pass the name \ensuremath{\varid{x}} of the let-bound variable
to \ensuremath{\varid{bind}} and \ensuremath{\varid{bindS}}, similar as for \ensuremath{\varid{fun}}.)

Let us now look at how \ensuremath{\conid{AnnT}} extends the pure, single-result usage analysis
into a stateful one that maintains annotations.

\subsubsection{Trace Transformer \ensuremath{\conid{AnnT}} for Stateful Analysis}

Every instance \ensuremath{\conid{StaticDomain}\;\varid{d}} induces an instance \ensuremath{\conid{Domain}\;(\conid{AnnD}\;\varid{s}\;\varid{d})},
where the type \ensuremath{\conid{AnnD}\;\varid{s}\;\varid{d}} is another example of a \emph{trace transformer}:
It transforms the \ensuremath{\conid{Trace}} instance on type \ensuremath{\varid{d}} into a \ensuremath{\conid{Trace}} instance for \ensuremath{\conid{AnnD}\;\varid{s}\;\varid{d}}. The abstract domain \ensuremath{\conid{AnnD}} is defined in terms of the abstract trace
type \ensuremath{\conid{AnnT}}, which is a standard \ensuremath{\conid{ST}} monad utilising efficient and pure mutable
state threads~\citep{Launchbury:94}, stacked below a \ensuremath{\conid{Refs}} environment that
carries the mutable ref cells.
A stateful analysis computation \ensuremath{\keyword{\forall}\!\! \hsforall \;\varid{s}\hsdot{\circ }{.\ }\conid{AnnD}\;\varid{s}\;\concolor{\mathsf{D_U}}} is then run with \ensuremath{\varid{runAnn}},
initialising \ensuremath{\conid{Refs}} with ref cells pointing to empty environments.
(The universal quantification over \ensuremath{\varid{s}} in the type of \ensuremath{\varid{runAnn}} ensures that no
mutable \ensuremath{\conid{STRef}} from \ensuremath{\conid{Refs}} escapes the functional state thread of the
underlying \ensuremath{\conid{ST}} computation~\citep{Launchbury:94}.)

The induced instance \ensuremath{\conid{Domain}\;(\conid{AnnD}\;\varid{s}\;\varid{d})} is implemented
by lifting operations \ensuremath{\varid{stuck}}, \ensuremath{\varid{apply}} and \ensuremath{\varid{con}} into monadic \ensuremath{\conid{AnnT}\;\varid{s}\;\varid{d}} context
and by calling \ensuremath{\varid{funS}} and \ensuremath{\varid{selectS}}.
Finally, the stateful nature of the domain \ensuremath{\conid{AnnD}\;\varid{s}\;\varid{d}} is exploited in the
\ensuremath{\conid{HasBind}\;(\conid{AnnD}\;\varid{s}\;\varid{d})} instance, in two ways:

\begin{itemize}
  \item
    The call to \ensuremath{\varid{annotate}} writes out the annotation on the let-bound variable
    \ensuremath{\varid{x}} that is extracted from the denotation returned by the call to \ensuremath{\varid{bindS}}.
    The omitted definition of \ensuremath{\varid{annotate}} is just a thin wrapper around
    \ensuremath{\varid{extractAnn}} to store the extracted annotation in the \ensuremath{\conid{Name}\mathbin{:\rightharpoonup}\conid{Ann}\;\varid{d}}
    ref cell of \ensuremath{\conid{Refs}}, the contents of which are returned from \ensuremath{\varid{runAnn}}.
  \item
    The calls to \ensuremath{\varid{readCache}} and \ensuremath{\varid{writeCache}} read from and write to the
    \ensuremath{\conid{Name}\mathbin{:\rightharpoonup}\varid{d}} ref cell of \ensuremath{\conid{Refs}} in order to provide the initial value \ensuremath{\varid{init}}
    for fixpoint iteration.
    To this end, \ensuremath{\varid{kleeneFix}} is generalised to \ensuremath{\varid{kleeneFixAboveM}\;\varid{init}\;\varid{f}} which
    iterates the monadic function \ensuremath{\varid{f}} starting from \ensuremath{\varid{init}} until a reductive
    point of \ensuremath{\varid{f}} is found (\ie a \ensuremath{\varid{d}} such that \ensuremath{\varid{f}\;\varid{d}\mathbin{⊑}\varid{return}\;\varid{d}}).
    When fixpoint iteration is first started, there is no cached value, in which
    case \ensuremath{\varid{readCache}} returns \ensuremath{\bot} to be used as the initial value, just as
    for the single-result analysis.
    However, after every iteration of \ensuremath{\varid{rhs}}, the call to \ensuremath{\varid{writeCache}} persists
    the current iterate, which will be the initial value of the fixpoint
    iteration for any future calls to \ensuremath{\varid{bind}} for the same let binding \ensuremath{\varid{x}}.
\end{itemize}
The caching technique is important because naïve fixpoint iteration in
single-result analysis can be exponentially slow for nested let bindings, such
as in
\[
  \Lam{z}{\Let{x_1}{(\Let{x_2}{...(\Let{x_n}{z}{x_n})...}{x_2})}{x_1}}.
\]
Naïvely, every let binding needs two iterations per one iteration of its
enclosing binding: the first iteration assuming \ensuremath{\bot} as the initial value
for $x_i$ and the next assuming the fixpoint \ensuremath{\langle [\varid{z}\mapsto\concolor{\mathsf{U_1}}], \conid{Rep}\;\concolor{\mathsf{U_\omega}} \rangle}.
Ultimately, $z$ is used in the denotation of $x_n$, ..., $x_1$, totalling to
$2^n$ iterations for $x_n$ during stateless analysis.

Stateful caching of the previous fixpoint improves this drastically.
The right-hand side of $x_n = z$ is only iterated $n+1$ times in total:
once with \ensuremath{\bot} as the initial value for $x_n$, once more to confirm the
fixpoint \ensuremath{\langle [\varid{z}\mapsto\concolor{\mathsf{U_1}}], \conid{Rep}\;\concolor{\mathsf{U_\omega}} \rangle} and then $n-1$ more times to confirm
the fixpoints of $x_{n-1}, ..., x_1$.

It is possible to improve the number of iterations for $x_n$ to a constant, by
employing classic chaotic iteration and worklist techniques.
These techniques require a decoupling of iteration order from the lexical
nesting imposed by the syntax tree, instead choosing the next iteratee by
examining the graph of data flow dependencies.
Crucially, such sophisticated and stateful data-flow frameworks can be developed
and improved without complicating the analysis domain, which is often very
complicated in its own right.

\subsection{Case Study: GHC's Demand Analyser}
\label{sec:demand-analysis}

To test how well our denotational interpreter framework scales to real-world
applications, we applied the design pattern to GHC's existing Demand Analyser
and will reproduce the salient points here.
GHC's Demand Analyser infers nested usage~\citep{Sergey:14},
strictness~\citep{SPJ:06} and boxity information.
These analysis results thus fuel a number of optimisations, such
as dead code elimination and unboxing through the worker/wrapper
transformation~\citep{Gill:09}, update avoidance~\citep{Gustavsson:98},
η-expansion and η-reduction, and inlining under lambdas, to name a few.

Concretely, the refactoring entailed
\begin{itemize}
  \item
    identifying which parts of the analyser need to be part of the \ensuremath{\conid{Domain}} interface,
  \item
    writing an abstract denotational interpreter for GHC Core, the typed
    intermediate representation of GHC, thereby identifying
  \item
    validating the usefulness of this interpreter by instantiating it at the GHC
    Core-specific analogue of the concrete by-need domain \ensuremath{\conid{D}\;(\conid{ByNeed}\;\conid{T})}, and finally
  \item
    defining the abstract \ensuremath{\conid{Domain}} instance for Demand Analysis, to replace
    its compositional analysis function on expressions by a call to the
    denotational interpreter.
\end{itemize}
The resulting compiler bootstraps and passes the testsuite.

\subsubsection{GHC Core}

\begin{figure}
\begin{hscode}\SaveRestoreHook
\column{B}{@{}>{\hspre}l<{\hspost}@{}}%
\column{3}{@{}>{\hspre}c<{\hspost}@{}}%
\column{3E}{@{}l@{}}%
\column{6}{@{}>{\hspre}l<{\hspost}@{}}%
\column{16}{@{}>{\hspre}l<{\hspost}@{}}%
\column{E}{@{}>{\hspre}l<{\hspost}@{}}%
\>[B]{}\keyword{data}\;\conid{Expr}{}\<[E]%
\\
\>[B]{}\hsindent{3}{}\<[3]%
\>[3]{}\mathrel{=}{}\<[3E]%
\>[6]{}\conid{Var}\;{}\<[16]%
\>[16]{}\conid{Id}{}\<[E]%
\\
\>[B]{}\hsindent{3}{}\<[3]%
\>[3]{}\mid {}\<[3E]%
\>[6]{}\conid{Lit}\;{}\<[16]%
\>[16]{}\conid{Literal}{}\<[E]%
\\
\>[B]{}\hsindent{3}{}\<[3]%
\>[3]{}\mid {}\<[3E]%
\>[6]{}\conid{App}\;{}\<[16]%
\>[16]{}\conid{Expr}\;\conid{Expr}{}\<[E]%
\\
\>[B]{}\hsindent{3}{}\<[3]%
\>[3]{}\mid {}\<[3E]%
\>[6]{}\conid{Lam}\;{}\<[16]%
\>[16]{}\conid{Var}\;\conid{Expr}{}\<[E]%
\\
\>[B]{}\hsindent{3}{}\<[3]%
\>[3]{}\mid {}\<[3E]%
\>[6]{}\conid{Let}\;{}\<[16]%
\>[16]{}\conid{Bind}\;\conid{Expr}{}\<[E]%
\\
\>[B]{}\hsindent{3}{}\<[3]%
\>[3]{}\mid {}\<[3E]%
\>[6]{}\conid{Case}\;{}\<[16]%
\>[16]{}\conid{Expr}\;\conid{Id}\;\conid{Type}\;\conid{Alt}{}\<[E]%
\\
\>[B]{}\hsindent{3}{}\<[3]%
\>[3]{}\mid {}\<[3E]%
\>[6]{}\conid{Cast}\;{}\<[16]%
\>[16]{}\conid{Expr}\;\conid{Coercion}{}\<[E]%
\\
\>[B]{}\hsindent{3}{}\<[3]%
\>[3]{}\mid {}\<[3E]%
\>[6]{}\conid{Tick}\;{}\<[16]%
\>[16]{}\conid{Tickish}\;\conid{Expr}{}\<[E]%
\\
\>[B]{}\hsindent{3}{}\<[3]%
\>[3]{}\mid {}\<[3E]%
\>[6]{}\conid{Type}\;{}\<[16]%
\>[16]{}\conid{Type}{}\<[E]%
\\
\>[B]{}\hsindent{3}{}\<[3]%
\>[3]{}\mid {}\<[3E]%
\>[6]{}\conid{Coercion}\;{}\<[16]%
\>[16]{}\conid{Coercion}{}\<[E]%
\\
\>[B]{}\keyword{data}\;\conid{Var}\mathrel{=}\conid{Id}\;\mathbin{...}\mid \conid{TyVar}\;\mathbin{...}\mid \conid{CoVar}\mathbin{...}{}\<[E]%
\\
\>[B]{}\keyword{type}\;\conid{Id}\mathrel{=}\conid{Var}\mbox{\onelinecomment  always a term-level Id}{}\<[E]%
\\
\>[B]{}\keyword{data}\;\conid{Literal}\mathrel{=}\conid{LitNumber}\;\mathbin{...}\mid \conid{LitFloat}\;\mathbin{...}\mid \conid{LitString}\mathbin{...}{}\<[E]%
\\
\>[B]{}\keyword{type}\;\conid{Alt}\mathrel{=}(\conid{AltCon},[\mskip1.5mu \conid{Var}\mskip1.5mu],\conid{Expr}){}\<[E]%
\\
\>[B]{}\keyword{data}\;\conid{AltCon}\mathrel{=}\conid{LitAlt}\;\conid{Literal}\mid \conid{DataAlt}\;\conid{DataCon}\mid \conid{DEFAULT}{}\<[E]%
\\
\>[B]{}\keyword{data}\;\conid{Bind}\mathrel{=}\conid{NonRec}\;\conid{Id}\;\conid{Expr}\mid \conid{Rec}\;[\mskip1.5mu (\conid{Id},\conid{Expr})\mskip1.5mu]{}\<[E]%
\\
\>[B]{}\keyword{data}\;\conid{Type}{}\<[16]%
\>[16]{}\mathrel{=}\mathbin{...}{}\<[E]%
\\
\>[B]{}\keyword{data}\;\conid{Coercion}{}\<[16]%
\>[16]{}\mathrel{=}\mathbin{...}{}\<[E]%
\ColumnHook
\end{hscode}\resethooks
\caption{GHC Core}
\label{fig:core}
\end{figure}

GHC Core implements a variant of the polymorphic lambda calculus System $F_ω$
called System $F_C$~\citep{Sulzmann:07}.
Its definition in GHC is given in \Cref{fig:core} and includes explicit
type applications as well as witnesses of type equality constraints called
\emph{coercions}.

GHC Core's \ensuremath{\conid{Expr}} has a lot in common with the untyped object language \ensuremath{\conid{Exp}}
introduced in \Cref{sec:lang}.
For example, there are constructors for \ensuremath{\conid{Var}}, \ensuremath{\conid{App}}, \ensuremath{\conid{Lam}}, \ensuremath{\conid{Let}} and \ensuremath{\conid{Case}}.
There are a number of differences, however:
\begin{itemize}
  \item
    GHC Core allows non-variable arguments in applications.
    This has implications on the denotational interpreter, which must let-bind
    non-variable arguments to establish A-normal form on-the-fly.
  \item
    There is no distinguished \ensuremath{\conid{ConApp}} form. That is because data constructors
    are just special kinds of \ensuremath{\conid{Id}}s and may be unsaturated; the interpreter
    must eta-expand such data constructor applications on-the-fly.
  \item
    \ensuremath{\conid{Case}} alternatives allow matching on literals (\ensuremath{\conid{LitAlt}}) as well as data
    constructors (\ensuremath{\conid{DataAlt}}), and include a default alternative (\ensuremath{\conid{DEFAULT}}) that
    matches any case not matched by other alternatives.
    Furthermore, after \ensuremath{\conid{Case}} evaluates the scrutinee, its value is bound to
    a designated \ensuremath{\conid{Id}} called the \emph{case binder} that scopes over all case
    alternatives.
  \item
    Beyond data constructors, there are other distinguished \ensuremath{\conid{Id}}s without a
    local binding, such as ``global'' identifiers imported from a different
    module, class method seelctors and primitive operations defined by the
    runtime system.
  \item
    \ensuremath{\conid{Let}} bindings are either explicitly non-recursive (\ensuremath{\conid{NonRec}}) or a mutually
    recursive group with potentially many bindings (\ensuremath{\conid{Rec}}).
  \item
    Not shown in \Cref{fig:core} is GHC's support for \emph{inline unfoldings}
    attached to let-bound \ensuremath{\conid{Id}}s as well as \emph{rewrite rules} declared by
    \texttt{RULES} pragmas.
    Each give rise to additional right-hand sides which must be handled with
    conservative care.
    Mistreatment of these subtle constructs in the Demand Analyser has caused
    numerous bugs over the years.
\end{itemize}
Beyond these differences, GHC Core includes forms for embedding \ensuremath{\conid{Literal}}s,
\ensuremath{\conid{Type}}s and \ensuremath{\conid{Coercion}}s in select expression forms.
Type abstraction and application use regular \ensuremath{\conid{Lam}} and \ensuremath{\conid{App}} constructors,
whereas rewriting an expression's type along a \ensuremath{\conid{Coercion}} happens through \ensuremath{\conid{Cast}}s.
The constructor \ensuremath{\conid{Tick}} annotates debugging and profiling information and can be
ignored.

\subsubsection{A Semantic \ensuremath{\conid{Domain}} for GHC Core}

\begin{figure}
\begin{hscode}\SaveRestoreHook
\column{B}{@{}>{\hspre}l<{\hspost}@{}}%
\column{3}{@{}>{\hspre}l<{\hspost}@{}}%
\column{13}{@{}>{\hspre}c<{\hspost}@{}}%
\column{13E}{@{}l@{}}%
\column{16}{@{}>{\hspre}l<{\hspost}@{}}%
\column{E}{@{}>{\hspre}l<{\hspost}@{}}%
\>[B]{}\keyword{data}\;\conid{Event}{}\<[13]%
\>[13]{}\mathrel{=}{}\<[13E]%
\>[16]{}\conid{Look}\;\conid{Id}\mid \conid{LookArg}\;\conid{CoreExpr}\mid \conid{Update}{}\<[E]%
\\
\>[13]{}\mid {}\<[13E]%
\>[16]{}\conid{App}_{1}\mid \conid{App}_{2}\mid \conid{Case}_{1}\mid \conid{Case}_{2}\mid \conid{Let}_{1}{}\<[E]%
\\
\>[B]{}\keyword{class}\;\conid{Trace}\;\varid{d}\;\keyword{where}\;\varid{step}\mathbin{::}\conid{Event}\to \varid{d}\to \varid{d}{}\<[E]%
\\[\blanklineskip]%
\>[B]{}\keyword{class}\;\conid{Domain}\;\varid{d}\;\keyword{where}{}\<[E]%
\\
\>[B]{}\hsindent{3}{}\<[3]%
\>[3]{}\varid{stuck}\mathbin{::}\varid{d}{}\<[E]%
\\
\>[B]{}\hsindent{3}{}\<[3]%
\>[3]{}\varid{lit}\mathbin{::}\conid{Literal}\to \varid{d}{}\<[E]%
\\
\>[B]{}\hsindent{3}{}\<[3]%
\>[3]{}\varid{global}\mathbin{::}\conid{Id}\to \varid{d}{}\<[E]%
\\
\>[B]{}\hsindent{3}{}\<[3]%
\>[3]{}\varid{classOp}\mathbin{::}\conid{Id}\to \conid{Class}\to \varid{d}{}\<[E]%
\\
\>[B]{}\hsindent{3}{}\<[3]%
\>[3]{}\varid{primOp}\mathbin{::}\conid{Id}\to \conid{PrimOp}\to \varid{d}{}\<[E]%
\\
\>[B]{}\hsindent{3}{}\<[3]%
\>[3]{}\varid{fun}\mathbin{::}\conid{Id}\to (\varid{d}\to \varid{d})\to \varid{d}{}\<[E]%
\\
\>[B]{}\hsindent{3}{}\<[3]%
\>[3]{}\varid{con}\mathbin{::}\conid{DataCon}\to [\mskip1.5mu \varid{d}\mskip1.5mu]\to \varid{d}{}\<[E]%
\\
\>[B]{}\hsindent{3}{}\<[3]%
\>[3]{}\varid{apply}\mathbin{::}\varid{d}\to (\conid{Bool},\varid{d})\to \varid{d}{}\<[E]%
\\
\>[B]{}\hsindent{3}{}\<[3]%
\>[3]{}\varid{select}\mathbin{::}\varid{d}\to \conid{CoreExpr}\to \conid{Id}\to [\mskip1.5mu \conid{DAlt}\;\varid{d}\mskip1.5mu]\to \varid{d}{}\<[E]%
\\
\>[B]{}\hsindent{3}{}\<[3]%
\>[3]{}\varid{erased}\mathbin{::}\varid{d}{}\<[E]%
\\
\>[B]{}\hsindent{3}{}\<[3]%
\>[3]{}\varid{keepAlive}\mathbin{::}[\mskip1.5mu \varid{d}\mskip1.5mu]\to \varid{d}\to \varid{d}{}\<[E]%
\\
\>[B]{}\keyword{type}\;\conid{DAlt}\;\varid{d}\mathrel{=}(\conid{AltCon},[\mskip1.5mu \conid{Id}\mskip1.5mu],\varid{d}\to [\mskip1.5mu \varid{d}\mskip1.5mu]\to \varid{d}){}\<[E]%
\\[\blanklineskip]%
\>[B]{}\keyword{data}\;\conid{BindHint}\mathrel{=}\conid{BindArg}\;\conid{Id}\mid \conid{BindLet}\;\conid{Bind}{}\<[E]%
\\
\>[B]{}\keyword{class}\;\conid{HasBind}\;\varid{d}\;\keyword{where}{}\<[E]%
\\
\>[B]{}\hsindent{3}{}\<[3]%
\>[3]{}\varid{bind}\mathbin{::}\conid{BindHint}\to [\mskip1.5mu [\mskip1.5mu \varid{d}\mskip1.5mu]\to \varid{d}\mskip1.5mu]\to ([\mskip1.5mu \varid{d}\mskip1.5mu]\to \varid{d})\to \varid{d}{}\<[E]%
\ColumnHook
\end{hscode}\resethooks
\caption{A \ensuremath{\conid{Domain}} interface for GHC Core}
\label{fig:core-domain}
\end{figure}

\Cref{fig:core-domain} defines the semantic domain abstraction for which
we implemented both a concrete \ensuremath{\conid{ByNeed}} instance as well as an abstract
instance for Demand Analysis.
Its design was inspired by the domain definition in \Cref{fig:eval}, but
ultimately driven by the hands-on desire to accommodate both \ensuremath{\conid{ByNeed}} and Demand
Analysis as instances.

The \ensuremath{\varid{stuck}}, \ensuremath{\varid{con}}, \ensuremath{\varid{fun}}, \ensuremath{\varid{apply}} and \ensuremath{\varid{select}} methods serve the exact same
purpose as in prior sections, generalised to deal with the Core expressions
they are modelled after.
Method \ensuremath{\varid{apply}} receives an additional \ensuremath{\conid{Bool}} to tell whether it is a
runtime-irrelevant type application.
Unsurprisingly, there is a method \ensuremath{\varid{lit}} for embedding \ensuremath{\conid{Literal}}s, similar to
\ensuremath{\varid{con}}.
Demand Analysis assigns special meaning to primitive operations (\ensuremath{\varid{primOp}}),
class method selectors (\ensuremath{\varid{classOp}}) and imported \ensuremath{\conid{Id}}s (\ensuremath{\varid{global}}), so each
get their own \ensuremath{\conid{Domain}} method.

Types and coercions are erased at runtime, represented by method \ensuremath{\varid{erased}}.
Coercion expressions, inline unfoldings and rewrite \ensuremath{\conid{RULES}} keep alive
their free variables (\ensuremath{\varid{keepAlive}}).

The \ensuremath{\conid{HasBind}} type class accommodates both non-recursive as well as mutually
recursive let bindings.
The \ensuremath{\conid{BindHint}} is used to communicate whether such a binding comes from
the on-the-fly ANF-isation pass of the interpreter (\ensuremath{\conid{BindArg}}) or whether it was
a manifest let binding in the Core program (\ensuremath{\conid{BindLet}}).


\subsubsection{The Glasgow Haskell Denotational Interpreter (GHDi)}

\begin{figure}
\begin{hscode}\SaveRestoreHook
\column{B}{@{}>{\hspre}l<{\hspost}@{}}%
\column{3}{@{}>{\hspre}l<{\hspost}@{}}%
\column{4}{@{}>{\hspre}c<{\hspost}@{}}%
\column{4E}{@{}l@{}}%
\column{5}{@{}>{\hspre}l<{\hspost}@{}}%
\column{7}{@{}>{\hspre}l<{\hspost}@{}}%
\column{8}{@{}>{\hspre}l<{\hspost}@{}}%
\column{9}{@{}>{\hspre}l<{\hspost}@{}}%
\column{13}{@{}>{\hspre}c<{\hspost}@{}}%
\column{13E}{@{}l@{}}%
\column{17}{@{}>{\hspre}l<{\hspost}@{}}%
\column{18}{@{}>{\hspre}l<{\hspost}@{}}%
\column{33}{@{}>{\hspre}l<{\hspost}@{}}%
\column{E}{@{}>{\hspre}l<{\hspost}@{}}%
\>[B]{}\keyword{type}\;\conid{D}\;\varid{d}\mathrel{=}(\conid{Trace}\;\varid{d},\conid{Domain}\;\varid{d},\conid{HasBind}\;\varid{d}){}\<[E]%
\\
\>[B]{}\varid{anfise}{}\<[13]%
\>[13]{}\mathbin{::}{}\<[13E]%
\>[17]{}\conid{D}\;\varid{d}\Rightarrow [\mskip1.5mu \conid{Expr}\mskip1.5mu]\to (\conid{Name}\mathbin{:\rightharpoonup}\varid{d})\to ([\mskip1.5mu \varid{d}\mskip1.5mu]\to \varid{d})\to \varid{d}{}\<[E]%
\\
\>[B]{}\varid{evalConApp}{}\<[13]%
\>[13]{}\mathbin{::}{}\<[13E]%
\>[17]{}\conid{D}\;\varid{d}\Rightarrow \conid{DataCon}\to [\mskip1.5mu \varid{d}\mskip1.5mu]\to \varid{d}{}\<[E]%
\\[\blanklineskip]%
\>[B]{}\mathcal{S}\denot{\wild}_{\wild}{}\<[13]%
\>[13]{}\mathbin{::}{}\<[13E]%
\>[17]{}\conid{D}\;\varid{d}\Rightarrow \conid{Expr}\to (\conid{Name}\mathbin{:\rightharpoonup}\varid{d})\to \varid{d}{}\<[E]%
\\
\>[B]{}\mathcal{S}\denot{\conid{Type}\;\anonymous }_{\varcolor{\rho}}{}\<[33]%
\>[33]{}\mathrel{=}\varid{erased}{}\<[E]%
\\
\>[B]{}\mathcal{S}\denot{\conid{Lit}\;\varid{l}}_{\varcolor{\rho}}{}\<[33]%
\>[33]{}\mathrel{=}\varid{lit}\;\varid{l}{}\<[E]%
\\
\>[B]{}\mathcal{S}\denot{\conid{Var}\;\varid{x}}_{\varcolor{\rho}}{}\<[18]%
\>[18]{}\mid \varid{not}\;\varid{special}{}\<[33]%
\>[33]{}\mathrel{=}\varcolor{\rho}\mathop{!}\varid{x}{}\<[E]%
\\
\>[18]{}\mid \varid{otherwise}{}\<[33]%
\>[33]{}\mathrel{=}\mathbin{...}{}\<[E]%
\\
\>[B]{}\mathcal{S}\denot{\conid{Lam}\;\varid{x}\;\varid{e}}_{\varcolor{\rho}}{}\<[33]%
\>[33]{}\mathrel{=}\varid{fun}\;\varid{x}\;(\lambda \varid{d}\to \varid{step}\;\conid{App}_{2}\;(\mathcal{S}\denot{\varid{e}}_{\varcolor{\rho}[\varid{x}\mapsto\varid{d}]})){}\<[E]%
\\
\>[B]{}\mathcal{S}\denot{\varid{e}\mathord{@}\conid{App}\;\{\mskip1.5mu \mskip1.5mu\}}_{\varcolor{\rho}}{}\<[E]%
\\
\>[B]{}\hsindent{4}{}\<[4]%
\>[4]{}\mid {}\<[4E]%
\>[7]{}\conid{Var}\;\varid{v}\leftarrow \varid{f},\conid{Just}\;\varid{dc}\leftarrow \varid{isDataConWorkId\char95 maybe}\;\varid{v}{}\<[E]%
\\
\>[B]{}\hsindent{4}{}\<[4]%
\>[4]{}\mathrel{=}{}\<[4E]%
\>[7]{}\varid{anfise}\;\varid{as}\;\varcolor{\rho}\;(\varid{evalConApp}\;\varid{dc}){}\<[E]%
\\
\>[B]{}\hsindent{4}{}\<[4]%
\>[4]{}\mid {}\<[4E]%
\>[7]{}\varid{otherwise}{}\<[E]%
\\
\>[B]{}\hsindent{4}{}\<[4]%
\>[4]{}\mathrel{=}{}\<[4E]%
\>[7]{}\varid{anfise}\;(\varid{f}\mathbin{:}\varid{as})\;\varcolor{\rho}\mathbin{\$}\lambda (\varid{df}\mathbin{:}\varid{das})\to {}\<[E]%
\\
\>[7]{}\hsindent{2}{}\<[9]%
\>[9]{}\varid{go}\;\varid{df}\;(\varid{zipWith}\;(\lambda \varid{d}\;\varid{a}\to (\varid{d},\varid{isTypeArg}\;\varid{a}))\;\varid{das}\;\varid{as}){}\<[E]%
\\
\>[B]{}\hsindent{3}{}\<[3]%
\>[3]{}\keyword{where}{}\<[E]%
\\
\>[3]{}\hsindent{2}{}\<[5]%
\>[5]{}(\varid{f},\varid{as})\mathrel{=}\varid{collectArgs}\;\varid{e}{}\<[E]%
\\
\>[3]{}\hsindent{2}{}\<[5]%
\>[5]{}\varid{go}\;\varid{df}\;[\mskip1.5mu \mskip1.5mu]\mathrel{=}\varid{df}{}\<[E]%
\\
\>[3]{}\hsindent{2}{}\<[5]%
\>[5]{}\varid{go}\;\varid{df}\;((\varid{d},\varid{is\char95 ty})\mathbin{:}\varid{ds})\mathrel{=}\varid{go}\;(\varid{step}\;\conid{App}_{1}\mathbin{\$}\varid{apply}\;\varid{df}\;(\varid{is\char95 ty},\varid{d}))\;\varid{ds}{}\<[E]%
\\
\>[B]{}\mathcal{S}\denot{\conid{Let}\;\varid{b}\mathord{@}(\conid{NonRec}\;\varid{x}\;\varid{rhs})\;\varid{body}}_{\varcolor{\rho}}\mathrel{=}{}\<[E]%
\\
\>[B]{}\hsindent{3}{}\<[3]%
\>[3]{}\varid{bind}\;(\conid{BindLet}\;\varid{b})\;{}\<[E]%
\\
\>[3]{}\hsindent{5}{}\<[8]%
\>[8]{}[\mskip1.5mu \lambda \varid{ds}\to \varid{keepAliveUnfRules}\;\varid{x}\;\varcolor{\rho}\;(\mathcal{S}\denot{\varid{rhs}}_{\varcolor{\rho}})\mskip1.5mu]\;{}\<[E]%
\\
\>[3]{}\hsindent{5}{}\<[8]%
\>[8]{}(\lambda \varid{ds}\to \varid{step}\;\conid{Let}_{1}\;(\mathcal{S}\denot{\varid{body}}_{\varcolor{\rho}[\varid{x}\mapsto\varid{step}\;(\conid{Lookup}\;\varid{x})\;(\varid{only}\;\varid{ds})]})){}\<[E]%
\\
\>[B]{}\mathbin{...}{}\<[E]%
\ColumnHook
\end{hscode}\resethooks
\caption{A glimpse of the Glasgow Haskell Denotational Interpreter (GHDi)}
\label{fig:core-eval}
\end{figure}

\Cref{fig:core-eval} shows a slightly adjusted and abridged version of the
denotational interpreter.
The actual definition takes around 100 lines of Haskell.
Its highlights include erasure of types, a new case for literals, on-the-fly
ANF-isation in the application case and picking out data constructor application
from regular function application in order to η-expand accordingly in
\ensuremath{\varid{evalConApp}}.
Whenever an ANF-ised argument is looked up, a \ensuremath{\conid{LookArg}} event is emitted;
this is simply for a lack of a globally unique \ensuremath{\conid{Id}}.
In the \ensuremath{\conid{Let}} case, the call \ensuremath{\varid{keepAliveUnfRules}} makes sure to keep alive
the free variables of inline unfoldings and rewrite rules attached to \ensuremath{\varid{x}}.

The \ensuremath{\conid{Domain}} and \ensuremath{\conid{HasBind}} instance for the concrete semantics \ensuremath{\conid{D}\;(\conid{ByNeed}\;\conid{T})}
is routine.
The resulting denotational interpreter can execute GHC Core expressions.
To demonstrate this, we wrote a small REPL around it:
\begin{Verbatim}
$ ./ghdi $(ghc --print-libdir)
prompt> let f x = x*42 :: Int; {-# NOINLINE f #-} in even $ f 3
Above expression as (optimised) Core:
  join {
    f_sZe [InlPrag=NOINLINE, Dmd=1C(1,L)] :: Int -> Bool
    [LclId[JoinId(1)(Just [!])], Arity=1, Str=<1L>]
    f_sZe (x_aYj [OS=OneShot] :: Int)
      = case x_aYj of { I# x1_aHU ->
        case remInt# (*# x1_aHU 42#) 2# of {
          __DEFAULT -> False;
          0# -> True
        }
        } } in
  jump f_sZe (I# 3#)
Trace of denotational interpreter:
Let1->App1->Lookup(f_sZe)->Update->App2->Case1->
  LookupArg(I# 3#)->Update->Case2->Case1->App1->
  App1->App2->App2->LookupArg(*# x1_aHU 42#)->App1->App1->
  App2->App2->Update->Case2-><(True, [0↦_, 1↦_, 2↦_])>
\end{Verbatim}

\subsubsection{Demand Analysis as Denotational Interpreter}

\begin{figure}
\begin{hscode}\SaveRestoreHook
\column{B}{@{}>{\hspre}l<{\hspost}@{}}%
\column{3}{@{}>{\hspre}l<{\hspost}@{}}%
\column{5}{@{}>{\hspre}l<{\hspost}@{}}%
\column{E}{@{}>{\hspre}l<{\hspost}@{}}%
\>[B]{}\keyword{type}\;\conid{DmdT}\;\varid{s}\;\varid{v}\mathrel{=}\conid{AnalEnv}\to \conid{SubDemand}\to \conid{AnalM}\;\varid{s}\;(\varid{v},\conid{Uses}){}\<[E]%
\\
\>[B]{}\keyword{type}\;\conid{DmdVal}\mathrel{=}[\mskip1.5mu \conid{Demand}\mskip1.5mu]{}\<[E]%
\\
\>[B]{}\keyword{type}\;\conid{DmdD}\;\varid{s}\mathrel{=}\conid{DmdT}\;\varid{s}\;\conid{DmdVal}{}\<[E]%
\\[\blanklineskip]%
\>[B]{}\keyword{instance}\;\conid{Trace}\;(\conid{DmdD}\;\varid{s})\;\keyword{where}{}\<[E]%
\\
\>[B]{}\hsindent{3}{}\<[3]%
\>[3]{}\varid{step}\;(\conid{Look}\;\varid{x})\;\varid{d}\mathrel{=}\lambda \varid{env}\;\varid{sd}\to \keyword{do}{}\<[E]%
\\
\>[3]{}\hsindent{2}{}\<[5]%
\>[5]{}(\varid{v},\varcolor{\varphi})\leftarrow \varid{d}\;\varid{env}\;\varid{sd}{}\<[E]%
\\
\>[3]{}\hsindent{2}{}\<[5]%
\>[5]{}\keyword{if}\;\varid{isBoundAtTopLvl}\;\varid{env}\;\varid{x}\;\keyword{then}\mathbin{...}\keyword{else}\;\varid{pure}\;(\varid{v},\varcolor{\varphi}\mathbin{+}[\varid{x}\mapsto\conid{C\char95 11}\mathbin{:*:}\varid{sd}]){}\<[E]%
\\
\>[B]{}\hsindent{3}{}\<[3]%
\>[3]{}\varid{step}\;\anonymous \;\varid{d}\mathrel{=}\varid{d}{}\<[E]%
\\
\>[B]{}\keyword{instance}\;\conid{Domain}\;(\conid{DmdD}\;\varid{s})\;\keyword{where}\mathbin{...}{}\<[E]%
\\
\>[B]{}\keyword{instance}\;\conid{HasBind}\;(\conid{DmdD}\;\varid{s})\;\keyword{where}\mathbin{...}{}\<[E]%
\ColumnHook
\end{hscode}\resethooks
\caption{A rough outline of the semantic domain of Demand Analysis}
\label{fig:dmd-domain}
\end{figure}

\Cref{fig:dmd-domain} gives a rough sketch of the semantic domain definition for
Demand Analysis.
The abstract trace type \ensuremath{\conid{DmdT}} produces some value \ensuremath{\varid{v}} as well as a \ensuremath{\conid{Uses}},
just as for \ensuremath{\concolor{\mathsf{T_U}}} in \Cref{sec:usage-analysis}.
However, it does so in a rather deep nest of types:
\begin{itemize}
  \item
    \ensuremath{\conid{AnalM}\;\varid{s}} plays the role of \ensuremath{\conid{AnnT}\;\varid{s}} in \Cref{sec:annotations}, maintaining
    annotations and speeding up fixpoint iteration.
  \item
    The analysis result is \emph{indexed} by a \ensuremath{\conid{SubDemand}}; a description of how
    deep the expression is to be evaluated.
    A \ensuremath{\conid{SubDemand}} is best understood as an abstraction of evaluation contexts.
    The more precise this abstraction, the more accurate are the \ensuremath{\conid{Uses}} returned
    for that expression.
  \item
    Furthermore, an \ensuremath{\conid{AnalEnv}} carries global state such as optimisation flags,
    means for reducing types and further syntactic information about bindings,
    such as whether a variable is bound at the top-level.
\end{itemize}
An abstract domain defined as a function sounds antithetical to our mantra in
\Cref{sec:abstraction} that abstract domains are finitely represented.
However, Demand Analysis only ever maintains one particular point of the indexed
domain, that is, every expression is analysed under one particular \ensuremath{\conid{SubDemand}}.
This \ensuremath{\conid{SubDemand}} may increase during fixpoint iteration, though, causing another
round of analysis.
We apply the typical widening measures in \ensuremath{\conid{HasBind}}, so in practice Demand
Analysis has not run into infinite loops for a couple of years.

Type \ensuremath{\conid{DmdVal}} is the similar to \ensuremath{\concolor{\mathsf{Value_U}}}, except that it lists full \ensuremath{\conid{Demand}}s
instead of flat usage cardinalities \ensuremath{\conid{U}}.
Such a demand \ensuremath{\varid{n}\mathbin{:*:}\varid{sd}} describes how often (\ensuremath{\varid{sd}}) and how deep (\ensuremath{\varid{sd}}) a
variable is used; it is an abstraction of its contexts of use.

The \ensuremath{\conid{Trace}} instance is very similar to the one for \ensuremath{\concolor{\mathsf{D_U}}}, it is just a little
bit more complicated because of special code for top-level bindings and
the fact that bindings get annotated with demands instead of simple usage cardinalities.
The demand \ensuremath{\conid{C\char95 11}\mathbin{:*:}\varid{sd}} describes a single, strict use of the variable in the
evaluation context described by sub-demand \ensuremath{\varid{sd}}.

The resulting analysis is sufficient to bootstrap the compiler and passes the
testsuite.
However, the compiler performance takes a serious hit due to the implementation of
\ensuremath{\varid{bind}\mathbin{::}\conid{BindHint}\to [\mskip1.5mu [\mskip1.5mu \varid{d}\mskip1.5mu]\to \varid{d}\mskip1.5mu]\to ([\mskip1.5mu \varid{d}\mskip1.5mu]\to \varid{d})\to \varid{d}}.
The way fixpoint iteration updates one binding \ensuremath{\varid{d}} in mutually recursive groups
\ensuremath{[\mskip1.5mu \varid{d}\mskip1.5mu]} at a time is very inefficient for the linked list representation, also
because every \ensuremath{[\mskip1.5mu \varid{d}\mskip1.5mu]} ultimately turn into as many updates of the \ensuremath{\conid{Name}\mathbin{:\rightharpoonup}\varid{d}}
mapping.
It would be far preferable to operate on the \ensuremath{\conid{Name}\mathbin{:\rightharpoonup}\varid{d}} environment directly.
Finding a good abstraction that achieves this without exposing the whole
environment is left for future work.
\end{toappendix}

\smallskip
\noindent
It is nice that usage and type analysis fit into the same
framework as the call-by-need semantics.
Another important benefit is that correctness proofs become simpler, as we will
see next.

\section{Generic Abstract By-Name and By-Need Interpretation}
\label{sec:soundness}

\begin{toappendix}
\label{sec:soundness-detail}

So far, we have seen how to \emph{use} the abstraction
\Cref{thm:abstract-by-need}, but not proved it.
Proving this theorem correct is the goal of this section,
and we approach the problem from the bottom up.

We will discuss why it is safe to approximate guarded fixpoints with
least fixpoints and why the definition of the Galois connection in
\Cref{fig:abstract-name-need} as a fold over the trace is well-defined
in \Cref{sec:safety-extension}.
Then we will go on to prove sound abstract by-name interpretation in
\Cref{sec:by-name-soundness}, and finally sound abstract by-need interpretation
in \Cref{sec:by-need-soundness}.

\subsection{Guarded Fixpoints, Safety Properties and Safety Extension of a Galois Connection}
\label{sec:safety-extension}

\Cref{fig:abstract-name-need} describes a semantic trace property as a ``fold'', in
terms of a \ensuremath{\conid{Trace}} instance.
Of course such a fold (an inductive elimination procedure) has no meaning when
the trace is infinite!
Yet it is always clear what we mean: When the trace is infinite and described by
a guarded fixpoint, we consider the meaning of the fold as the limit (\ie least
fixpoint) of its finite prefixes.
In this subsection, we discuss when and why this meaning is correct.

Suppose that we were only interested in the trace component of our
semantic domain, thus effectively restricting ourselves to
$\Traces \triangleq \ensuremath{\conid{T}\;()}$, and that we were to approximate properties $P ∈
\pow{\Traces}$ about such traces by a Galois connection
$\ensuremath{\alpha} : (\pow{\Traces},⊆) \rightleftarrows (\ensuremath{\widehat{\conid{D}}}, ⊑) : \ensuremath{\gamma}$.
Alas, although the abstraction function \ensuremath{\alpha} is well-defined as a mathematical
function, it most certainly is \emph{not} computable at infinite inputs (in
$\Traces^{\infty}$), for example at
\ensuremath{\varid{fix}\;(\conid{Step}\;(\conid{Look}\;\varid{x}))\mathrel{=}\conid{Step}\;(\conid{Look}\;\varid{x})\;(\conid{Step}\;(\conid{Look}\;\varid{x})\mathbin{...})}!

The whole point about \emph{static} analyses is that they approximate program
behavior in finite data.
As we have discussed in \Cref{sec:usage-fixpoint}, this rules out use of
\emph{guarded} fixpoints \ensuremath{\varid{fix}} for usage analysis, so it uses computes the
\emph{least} fixpoint \ensuremath{\varid{lfp}} instead.
Concretely, static analyses often approximate the abstraction of the guarded
fixpoint by the least fixpoint of the abstracted iteratee, assuming the
following approximation relationship:
\[
\ensuremath{\alpha\;(\{\varid{fix}\;(\conid{Step}\;(\conid{Look}\;\varid{x}))\})} ⊑ \ensuremath{\varid{lfp}\;(\alpha\hsdot{\circ }{.\ }{(\conid{Step}\;(\conid{Look}\;\varid{x}))}^{*}\hsdot{\circ }{.\ }\gamma)}.
\]
This inequality does not hold for \emph{all} trace properties, but we will
show that it holds for \emph{safety} properties~\citep{Lamport:77}:

\begin{definition}[Safety property]
A trace property $P ⊆ \Traces$ is a \emph{safety property} iff,
whenever $\ensuremath{\varcolor{\tau}_{1}}∈\Traces^{\infty}$ violates $P$ (so $\ensuremath{\varcolor{\tau}_{1}} \not∈ P$), then there exists some proper
prefix $\ensuremath{\varcolor{\tau}_{2}}∈\Traces^{*}$ (written $\ensuremath{\varcolor{\tau}_{2}} \lessdot \ensuremath{\varcolor{\tau}_{1}}$) such that $\ensuremath{\varcolor{\tau}_{2}} \not∈ P$.
\end{definition}

Note that both well-typedness (``\ensuremath{\varcolor{\tau}} does not go wrong'') and usage cardinality
abstract safety properties.
Conveniently, guarded recursive predicates (on traces) always describe safety
properties~\citep{Spies:21,iris-lecture-notes}.

The contraposition of the above definition is
\[
  \forall \ensuremath{\varcolor{\tau}_{1}}∈\Traces^{\infty}.\ (\forall \ensuremath{\varcolor{\tau}_{2}}∈\Traces^{*}.\ \ensuremath{\varcolor{\tau}_{2}} \lessdot \ensuremath{\varcolor{\tau}_{1}} \Longrightarrow \ensuremath{\varcolor{\tau}_{2}} ∈ P) \Longrightarrow \ensuremath{\varcolor{\tau}_{1}} ∈ P,
\]
and we can exploit safety to extend a finitary Galois connection, such as
$α_{\mathcal{S}}$ in \Cref{fig:abstract-name-need} defined by a fold over the
trace, to infinite inputs:
\begin{lemma}[Safety extension]
\label{thm:safety-extension}
Let \ensuremath{\widehat{\conid{D}}} be a domain with an instance for \ensuremath{\conid{Lat}},
$\ensuremath{\alpha} : (\pow{\Traces^{*}},⊆) \rightleftarrows (\ensuremath{\widehat{\conid{D}}}, ⊑) : \ensuremath{\gamma}$ a Galois
connection and $P ∈ \pow{\Traces}$ a safety property.
Then any domain element \ensuremath{\widehat{\varid{d}}} that soundly approximates $P$ via \ensuremath{\gamma} on finite
traces soundly approximates $P$ on infinite traces as well:
\[
  \forall \ensuremath{\widehat{\varid{d}}}.\ P ∩ \Traces^{*} ⊆ \ensuremath{\gamma}(\ensuremath{\widehat{\varid{d}}}) \Longrightarrow P ∩ \Traces^{\infty} ⊆ \ensuremath{\gamma^{\infty}}(\ensuremath{\widehat{\varid{d}}}),
\]
where the \emph{extension} $\ensuremath{\alpha^{\infty}} : (\pow{\Traces^{*}},⊆) \rightleftarrows (\ensuremath{\widehat{\conid{D}}}, ⊑) : \ensuremath{\gamma^{\infty}}$ of
$\ensuremath{\alpha}\rightleftarrows\ensuremath{\gamma}$ is defined by the following abstraction function:
\[
  \ensuremath{\alpha^{\infty}}(P) \triangleq \ensuremath{\alpha}(\{ \ensuremath{\varcolor{\tau}_{2}} \mid \exists \ensuremath{\varcolor{\tau}_{1}}∈P.\ \ensuremath{\varcolor{\tau}_{2}} \lessdot \ensuremath{\varcolor{\tau}_{1}} \})
\]
\end{lemma}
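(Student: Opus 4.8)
The plan is to prove \Cref{thm:safety-extension} by a direct "limit of prefixes" argument, exploiting the definition of $\ensuremath{\gamma^{\infty}}$ as the adjoint of $\ensuremath{\alpha^{\infty}}$ together with the characterisation of safety properties. First I would unfold what has to be shown: fix $\ensuremath{\widehat{\varid{d}}}$ with $P \cap \Traces^{*} \subseteq \ensuremath{\gamma}(\ensuremath{\widehat{\varid{d}}})$, take an arbitrary $\ensuremath{\varcolor{\tau}_{1}} \in P \cap \Traces^{\infty}$, and prove $\ensuremath{\varcolor{\tau}_{1}} \in \ensuremath{\gamma^{\infty}}(\ensuremath{\widehat{\varid{d}}})$. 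By the Galois connection $\ensuremath{\alpha^{\infty}} \rightleftarrows \ensuremath{\gamma^{\infty}}$, this is equivalent to $\ensuremath{\alpha^{\infty}}(\{\ensuremath{\varcolor{\tau}_{1}}\}) \sqsubseteq \ensuremath{\widehat{\varid{d}}}$. Now I would compute: by definition $\ensuremath{\alpha^{\infty}}(\{\ensuremath{\varcolor{\tau}_{1}}\}) = \ensuremath{\alpha}(\{\ensuremath{\varcolor{\tau}_{2}} \mid \ensuremath{\varcolor{\tau}_{2}} \lessdot \ensuremath{\varcolor{\tau}_{1}}\})$, i.e.\ the abstraction of the set of all proper finite prefixes of $\ensuremath{\varcolor{\tau}_{1}}$. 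So it suffices to show $\ensuremath{\alpha}(\{\ensuremath{\varcolor{\tau}_{2}} \mid \ensuremath{\varcolor{\tau}_{2}} \lessdot \ensuremath{\varcolor{\tau}_{1}}\}) \sqsubseteq \ensuremath{\widehat{\varid{d}}}$, which by the finitary Galois connection $\ensuremath{\alpha} \rightleftarrows \ensuremath{\gamma}$ is equivalent to $\{\ensuremath{\varcolor{\tau}_{2}} \mid \ensuremath{\varcolor{\tau}_{2}} \lessdot \ensuremath{\varcolor{\tau}_{1}}\} \subseteq \ensuremath{\gamma}(\ensuremath{\widehat{\varid{d}}})$.

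The crux is then to establish that every proper finite prefix $\ensuremath{\varcolor{\tau}_{2}} \lessdot \ensuremath{\varcolor{\tau}_{1}}$ lies in $\ensuremath{\gamma}(\ensuremath{\widehat{\varid{d}}})$. Here I would invoke the safety hypothesis on $P$: I want to argue $\ensuremath{\varcolor{\tau}_{2}} \in P$ (and then, since $\ensuremath{\varcolor{\tau}_{2}} \in \Traces^{*}$, conclude $\ensuremath{\varcolor{\tau}_{2}} \in P \cap \Traces^{*} \subseteq \ensuremath{\gamma}(\ensuremath{\widehat{\varid{d}}})$). To see $\ensuremath{\varcolor{\tau}_{2}} \in P$, suppose not; then $\ensuremath{\varcolor{\tau}_{2}}$ is itself a finite trace violating $P$, and $\ensuremath{\varcolor{\tau}_{2}} \lessdot \ensuremath{\varcolor{\tau}_{1}}$, so $\ensuremath{\varcolor{\tau}_{1}}$ has a proper prefix violating $P$ — but that is exactly the situation safety \emph{permits} for a violator, not a contradiction by itself. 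The right move is the contrapositive form of safety stated in the excerpt: if \emph{every} proper finite prefix of $\ensuremath{\varcolor{\tau}_{1}}$ is in $P$, then $\ensuremath{\varcolor{\tau}_{1}} \in P$. Since we assumed $\ensuremath{\varcolor{\tau}_{1}} \in P$, this does not immediately give prefix membership either. I therefore expect the actual argument needs prefixes of prefixes: if $\ensuremath{\varcolor{\tau}_{2}} \lessdot \ensuremath{\varcolor{\tau}_{1}}$ and $\ensuremath{\varcolor{\tau}_{2}} \notin P$, apply safety to $\ensuremath{\varcolor{\tau}_{2}}$ to get a strictly shorter violating prefix, and iterate; since $\ensuremath{\varcolor{\tau}_{2}}$ is finite this descent terminates, eventually contradicting... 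Actually the clean route is: safety says a violator \emph{in $\Traces^{\infty}$} has a violating proper prefix. Any finite trace $\ensuremath{\varcolor{\tau}_{2}}$ also belongs to $\Traces^{\infty}$ (finite traces are included), so if $\ensuremath{\varcolor{\tau}_{2}} \notin P$ there is $\ensuremath{\varcolor{\tau}_{3}} \lessdot \ensuremath{\varcolor{\tau}_{2}}$ with $\ensuremath{\varcolor{\tau}_{3}} \notin P$, and by well-founded induction on the length of $\ensuremath{\varcolor{\tau}_{2}}$ we reach the empty/minimal trace as a violator, which I would need to rule out — typically the minimal trace (a single $\ensuremath{\conid{Ret}}$ or the empty prefix) is in $P$ for the safety properties of interest, or else $\ensuremath{\varcolor{\tau}_{1}} \notin P$ already. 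So the genuine obstacle is pinning down this base case cleanly; I anticipate it follows because $P$ being a safety property that $\ensuremath{\varcolor{\tau}_{1}} \in P$ satisfies forces all \emph{prefixes} of $\ensuremath{\varcolor{\tau}_{1}}$ into $P$ by the well-founded descent, using that prefixes are linearly ordered.

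Concretely, the key steps in order are: (1) reduce the goal through both Galois connections to $\{\ensuremath{\varcolor{\tau}_{2}} \mid \ensuremath{\varcolor{\tau}_{2}} \lessdot \ensuremath{\varcolor{\tau}_{1}}\} \subseteq \ensuremath{\gamma}(\ensuremath{\widehat{\varid{d}}})$; (2) prove the "prefix-closedness of $P$ along a member" lemma: if $\ensuremath{\varcolor{\tau}_{1}} \in P$ and $\ensuremath{\varcolor{\tau}_{2}} \lessdot \ensuremath{\varcolor{\tau}_{1}}$ then $\ensuremath{\varcolor{\tau}_{2}} \in P$, by well-founded induction on trace length using the contrapositive safety statement; (3) conclude each such $\ensuremath{\varcolor{\tau}_{2}}$ is in $P \cap \Traces^{*}$, hence in $\ensuremath{\gamma}(\ensuremath{\widehat{\varid{d}}})$ by hypothesis; (4) assemble via monotonicity of $\ensuremath{\alpha}$ and the adjunction. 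I expect step (2) — establishing that safety properties are downward-closed under the prefix order among the prefixes of an accepted trace — to be the main obstacle, since it is the one place the precise formulation of "safety" and the linearity of the prefix relation $\lessdot$ really matter; everything else is bookkeeping with the two Galois connections and the definition of $\ensuremath{\alpha^{\infty}}$. I would also remark in passing that for the specific $\alpha$ of \Cref{fig:abstract-name-need} (a fold), $\ensuremath{\gamma^{\infty}}$ and the limit-of-prefixes reading coincide, which is what licenses treating the "fold over an infinite trace" as its intended least-fixpoint meaning.
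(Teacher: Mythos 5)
Your reduction through the two Galois connections is exactly the paper's: its proof likewise passes to the representation function $\beta^{\infty}(\tau_1) = \alpha(\{\tau_2 \mid \tau_2 \lessdot \tau_1\})$, unfolds $\tau_1 \in \gamma^{\infty}(\hat{d})$ to $\forall \tau_2.\ \tau_2 \lessdot \tau_1 \Longrightarrow \tau_2 \in \gamma(\hat{d})$, and closes with the hypothesis $P \cap \Traces^{*} \subseteq \gamma(\hat{d})$. So your steps (1), (3) and (4) coincide with the published argument.

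The step you single out as the main obstacle --- that every proper prefix of a trace in $P$ is again in $P$ --- is precisely where the paper stops arguing: its proof asserts ``$P$ is a safety property and $\tau \in P$, so for any prefix $\tau_2$ of $\tau$ we have $\tau_2 \in P \cap \Traces^{*}$'' and moves on. Your hesitation is well placed. The contrapositive of the stated definition (if all proper finite prefixes of an infinite trace lie in $P$, so does the trace) does not entail prefix-closedness of $P$ along its members, and the well-founded descent you sketch does not close: as literally written, the safety condition only constrains \emph{infinite} violators, so it cannot be applied to a finite $\tau_2 \notin P$ at all; and under the generous reading where it constrains finite violators too, the descent proves the far stronger fact that $P$ contains \emph{every} finite trace, which trivialises the step but is plainly not the intended notion. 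What the paper is really using is the standard convention for safety properties over mixed finite/infinite traces, in which prefix-closedness is part of the definition (equivalently, $P$ restricted to $\Traces^{*}$ is by construction the set of finite prefixes of its members); this does hold for the concrete properties in scope, which are guarded-recursive predicates defined by folds over the trace. So your proof is not on a wrong track so much as more honest than the original: to discharge your step (2) you should either build prefix-closedness into the definition of safety or observe it directly for the properties actually used, rather than trying to derive it from the contrapositive clause alone.
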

\begin{proof}
First note that \ensuremath{\alpha^{\infty}} uniquely determines the Galois connection by the
representation function~\citep[Section 4.3]{Nielson:99}
\[
  \ensuremath{\beta^{\infty}}(\ensuremath{\varcolor{\tau}_{1}}) \triangleq \ensuremath{\alpha}({\textstyle \bigcup} \{ \ensuremath{\varcolor{\tau}_{2}} \mid \ensuremath{\varcolor{\tau}_{2}} \lessdot \ensuremath{\varcolor{\tau}_{1}} \}).
\]
Now let $\ensuremath{\varcolor{\tau}} ∈ P ∩ \Traces^{\infty}$.
The goal is to show that $\ensuremath{\varcolor{\tau}} ∈ \ensuremath{\gamma^{\infty}}(\ensuremath{\widehat{\varid{d}}})$, which we rewrite as follows:
\begin{hscode}\SaveRestoreHook
\column{B}{@{}>{\hspre}c<{\hspost}@{}}%
\column{BE}{@{}l@{}}%
\column{7}{@{}>{\hspre}l<{\hspost}@{}}%
\column{E}{@{}>{\hspre}l<{\hspost}@{}}%
\>[7]{}\varcolor{\tau}\in \gamma^{\infty}\;\widehat{\varid{d}}{}\<[E]%
\\
\>[B]{}\Longleftrightarrow{}\<[BE]%
\>[7]{}\mbox{\commentbegin  Galois  \commentend}{}\<[E]%
\\
\>[7]{}\beta^{\infty}\;\varcolor{\tau}\mathbin{⊑}\widehat{\varid{d}}{}\<[E]%
\\
\>[B]{}\Longleftrightarrow{}\<[BE]%
\>[7]{}\mbox{\commentbegin  Definition of \ensuremath{\beta^{\infty}}  \commentend}{}\<[E]%
\\
\>[7]{}\alpha\;\bigcup\{\varcolor{\tau}_{2}\mid \varcolor{\tau}_{2}\lessdot\varcolor{\tau}_{1}\}\mathbin{⊑}\widehat{\varid{d}}{}\<[E]%
\\
\>[B]{}\Longleftrightarrow{}\<[BE]%
\>[7]{}\mbox{\commentbegin  Galois  \commentend}{}\<[E]%
\\
\>[7]{}\bigcup\{\varcolor{\tau}_{2}\mid \varcolor{\tau}_{2}\lessdot\varcolor{\tau}_{1}\}\mathbin{⊆}\gamma\;\widehat{\varid{d}}{}\<[E]%
\\
\>[B]{}\Longleftrightarrow{}\<[BE]%
\>[7]{}\mbox{\commentbegin  Definition of Union  \commentend}{}\<[E]%
\\
\>[7]{}\keyword{\forall}\!\! \hsforall \;\varcolor{\tau}_{2}\hsdot{\circ }{.\ }\varcolor{\tau}_{2}\lessdot\varcolor{\tau}\implies\varcolor{\tau}_{2}\in \gamma\;\widehat{\varid{d}}{}\<[E]%
\ColumnHook
\end{hscode}\resethooks
Now, $P$ is a safety property and $\ensuremath{\varcolor{\tau}} ∈ P$, so for any
prefix \ensuremath{\varcolor{\tau}_{2}} of \ensuremath{\varcolor{\tau}} we have $\ensuremath{\varcolor{\tau}_{2}} ∈ P ∩ \Traces^{*}$.
Hence the goal follows by assumption that $P ∩ \Traces^{*} ⊆ \ensuremath{\gamma}(\ensuremath{\widehat{\varid{d}}})$.
\end{proof}

From now on, we tacitly assume that all trace properties of interest are safety
properties, and that any Galois connection defined in Haskell has been extended
to infinite traces via \Cref{thm:safety-extension}.
Any such Galois connection can be used to approximate guarded fixpoints via
least fixpoints:

\begin{lemma}[Guarded fixpoint abstraction for safety extensions]
\label{thm:guarded-fixpoint-abstraction}
Let \ensuremath{\widehat{\conid{D}}} be a domain with an instance for
\ensuremath{\conid{Lat}}, and let $\ensuremath{\alpha} : (\pow{\Traces},⊆) \rightleftarrows (\ensuremath{\widehat{\conid{D}}}, ⊑) : \ensuremath{\gamma}$ a Galois
connection extended to infinite traces via \Cref{thm:safety-extension}.
Then, for any iteratee \ensuremath{\varid{f}\mathbin{::}\Traces\to \Traces},
\[
  \ensuremath{\alpha}(\{ \ensuremath{\varid{fix}\;\varid{f}} \}) ⊑ \ensuremath{\varid{lfp}\;(\alpha\hsdot{\circ }{.\ }{\varid{f}}^{*}\hsdot{\circ }{.\ }\gamma)},
\]
where \ensuremath{\varid{lfp}\;\widehat{\varid{f}}} denotes the least fixpoint of \ensuremath{\widehat{\varid{f}}} and \ensuremath{{\varid{f}}^{*}\mathbin{::}\pow{\Traces}\to \pow{\Traces}} is the lifting of \ensuremath{\varid{f}} to powersets.
\end{lemma}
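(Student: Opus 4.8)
The statement I want to prove is $\ensuremath{\alpha}(\{\ensuremath{\varid{fix}\;\varid{f}}\}) \sqsubseteq \ensuremath{\varid{lfp}\;(\alpha\hsdot{\circ }{.\ }{\varid{f}}^{*}\hsdot{\circ }{.\ }\gamma)}$, where $\alpha \rightleftarrows \gamma$ is the safety extension $\alpha^\infty \rightleftarrows \gamma^\infty$ of a finitary Galois connection. The plan is to exploit the safety-extension machinery of \Cref{thm:safety-extension} directly: since $\gamma$ is a safety extension, membership $\tau \in \gamma(\widehat{d})$ is equivalent to ``every finite prefix $\tau_2 \lessdot \tau$ lies in $\gamma(\widehat{d})$'' (this is exactly the chain of rewrites in the proof of \Cref{thm:safety-extension}). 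So it suffices to show that every finite prefix of $\fix\;f$ is abstracted by $\ensuremath{\varid{lfp}\;(\alpha\hsdot{\circ }{.\ }{\varid{f}}^{*}\hsdot{\circ }{.\ }\gamma)}$, and then lift to the infinite trace $\fix\;f$ via the safety property.

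\textbf{Key steps, in order.} First I would recall that the guarded fixpoint $\fix\;f$ is the limit of its finite approximants: writing $f^{(n)}$ for the $n$-fold iterate obtained by unrolling the guarded recursion $n$ times (each unrolling consuming one $\later$), the finite prefixes of $\fix\;f$ coincide with the finite prefixes of the $f^{(n)}$ as $n$ ranges over $\mathbb{N}$. This is the standard fact that $\fix\;f = \bigsqcup_n f^{(n)}(\bot)$ in the appropriate prefix-ordered sense on $\Traces^\infty$. Second, I would prove by induction on $n$ that $\alpha(\{f^{(n)}(\bot)\}) \sqsubseteq \ensuremath{\varid{lfp}\;(\alpha\hsdot{\circ }{.\ }{\varid{f}}^{*}\hsdot{\circ }{.\ }\gamma)}$: the base case uses that $\bot$ is abstracted by anything (or that the empty/trivial trace is), and the inductive step uses monotonicity of $\alpha$, the soundness direction of the Galois connection ($\mathrm{id} \sqsubseteq \gamma \circ \alpha$), and that $\ensuremath{\varid{lfp}\;\widehat{f}}$ is a fixpoint of $\widehat{f} \triangleq \alpha \circ f^* \circ \gamma$, so $\widehat{f}(\ensuremath{\varid{lfp}\;\widehat{f}}) = \ensuremath{\varid{lfp}\;\widehat{f}}$; chaining these gives $\alpha(\{f^{(n+1)}(\bot)\}) = \alpha(f^*(\{f^{(n)}(\bot)\})) \sqsubseteq \alpha(f^*(\gamma(\ensuremath{\varid{lfp}\;\widehat{f}}))) = \widehat{f}(\ensuremath{\varid{lfp}\;\widehat{f}}) = \ensuremath{\varid{lfp}\;\widehat{f}}$. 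Third, I would assemble: every finite prefix $\tau_2 \lessdot \fix\;f$ is a prefix of some $f^{(n)}(\bot)$, hence (since $\alpha$ is defined as a fold over finite traces and is monotone) $\tau_2 \in \gamma(\alpha(\{f^{(n)}(\bot)\})) \subseteq \gamma(\ensuremath{\varid{lfp}\;\widehat{f}})$. Finally, since the target property $\gamma(\ensuremath{\varid{lfp}\;\widehat{f}})$ is (by our standing assumption) a safety property and we have just shown all finite prefixes of $\fix\;f$ satisfy it, the safety extension \Cref{thm:safety-extension} yields $\fix\;f \in \gamma^\infty(\ensuremath{\varid{lfp}\;\widehat{f}})$, which by the Galois correspondence is exactly $\alpha^\infty(\{\fix\;f\}) \sqsubseteq \ensuremath{\varid{lfp}\;\widehat{f}}$.

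\textbf{Main obstacle.} The delicate point is making precise the relationship between the \emph{guarded} fixpoint $\fix\;f$ (a coinductive object living in guarded type theory) and the \emph{classical} limit $\bigsqcup_n f^{(n)}(\bot)$ that the least-fixpoint argument manipulates — i.e. justifying step one rigorously. In the guarded setting one does not literally have an $\omega$-indexed chain; instead one argues by Löb induction that for each ``depth'' the prefix of $\fix\;f$ agrees with the prefix of a sufficiently-unrolled iterate. I expect to phrase this as: the set of finite prefixes $\{\tau_2 \mid \tau_2 \lessdot \fix\;f\}$ equals $\bigcup_n \{\tau_2 \mid \tau_2 \lessdot f^{(n)}(\bot)\}$, proved by induction on the length of $\tau_2$ using the defining equation $\fix\;f = f(\purelater(\fix\;f))$ one layer at a time. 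Everything else is routine Galois-connection and fixpoint bookkeeping; this bridge between the guarded and the classical fixpoint, together with checking that $f^*$ and $\alpha$ interact monotonically with the prefix order, is where the real care is needed.
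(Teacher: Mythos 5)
Your proposal reaches the right conclusion and its algebraic core — unfold the fixpoint once, insert $\mathit{id} \sqsubseteq \gamma \circ \alpha$, and collapse with $\widehat{f}(\lfp \widehat{f}) = \lfp \widehat{f}$ where $\widehat{f} = \alpha \circ f^* \circ \gamma$ — is exactly the chain the paper uses. But the induction principle is genuinely different. The paper performs L\"ob induction \emph{directly on the goal} for a finite $\fix f$: it assumes $\later(\alpha(\{\fix f\}) \sqsubseteq \lfp \widehat{f})$, unfolds $\fix f = f(\fix f)$, and applies the guarded hypothesis under the one layer of $f$ that guards the recursive occurrence; the infinite case is then dispatched by \Cref{thm:safety-extension}. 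You instead replace the guarded fixpoint by a classical $\omega$-chain of approximants $f^{(n)}$, do ordinary induction on $n$, and re-enter the guarded world only at the end via the prefix characterisation of $\gamma^{\infty}$. What your route buys is independence from L\"ob induction (a reader who distrusts the paper's admittedly "sloppy" treatment of $\later$ can follow a purely order-theoretic argument); what it costs is precisely the bridge lemma you flag — that the finite prefixes of $\fix f$ are exhausted by the prefixes of the $f^{(n)}$ — which the paper never needs because L\"ob induction internalises that limit.

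Two points in your step three need more care than "routine bookkeeping". First, $\Traces = \ensuremath{\conid{T}\;()}$ has no $\bot$ element, so $f^{(n)}(\bot)$ is not a trace; you must instead apply $f^{(n)}$ to an arbitrary seed and argue (using that $f$'s recursive argument sits under a $\later$, hence under at least one \ensuremath{\conid{Step}}) that prefixes of length at most $n$ are independent of the seed. Second, your inference from "$\tau_2 \lessdot f^{(n)}(\cdot)$" to "$\tau_2 \in \gamma(\alpha(\{f^{(n)}(\cdot)\}))$" silently assumes that the finitary $\beta_\Traces$ is monotone in the prefix order ($\tau_2 \lessdot \tau \Rightarrow \beta_\Traces(\tau_2) \sqsubseteq \beta_\Traces(\tau)$). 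That holds for the fold-style abstractions the paper actually instantiates, but it is not a consequence of the Galois-connection axioms alone and is not among the lemma's hypotheses, so you should either add it as an assumption or restructure the limit step so that only whole approximants, never their prefixes, are pushed through $\alpha$. The paper's L\"ob-induction route avoids both issues, which is a fair part of why it was chosen.
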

\begin{proof}
We should note that we are sloppy in the treatment of the \emph{later} modality
$\later$ here.
Since we have proven totality of all expressions worth considering in
\Cref{sec:totality}, the utility of being explicit in \ensuremath{\varid{next}} is rather low (much
more so since a pen and paper proof is not type checked) and we will admit
ourselves this kind of sloppiness from now on.

Let us assume that \ensuremath{\varcolor{\tau}\mathrel{=}\varid{fix}\;\varid{f}} is finite and proceed by Löb induction.
\begin{hscode}\SaveRestoreHook
\column{B}{@{}>{\hspre}c<{\hspost}@{}}%
\column{BE}{@{}l@{}}%
\column{5}{@{}>{\hspre}l<{\hspost}@{}}%
\column{E}{@{}>{\hspre}l<{\hspost}@{}}%
\>[5]{}\alpha\;\{\varid{fix}\;\varid{f}\}\mathbin{⊑}\varid{lfp}\;(\alpha\hsdot{\circ }{.\ }{\varid{f}}^{*}\hsdot{\circ }{.\ }\gamma){}\<[E]%
\\
\>[B]{}\mathrel{=}{}\<[BE]%
\>[5]{}\mbox{\commentbegin  \ensuremath{\varid{fix}\;\varid{f}\mathrel{=}\varid{f}\;(\varid{fix}\;\varid{f})}  \commentend}{}\<[E]%
\\
\>[5]{}\alpha\;\{\varid{f}\;(\varid{fix}\;\varid{f})\}{}\<[E]%
\\
\>[B]{}\mathrel{=}{}\<[BE]%
\>[5]{}\mbox{\commentbegin  Commute \ensuremath{\varid{f}} and \ensuremath{\{\wild\}}  \commentend}{}\<[E]%
\\
\>[5]{}\alpha\;({\varid{f}}^{*}\;\{\varid{fix}\;\varid{f}\}){}\<[E]%
\\
\>[B]{}\mathbin{⊑}{}\<[BE]%
\>[5]{}\mbox{\commentbegin  \ensuremath{\varid{id}\mathbin{⊑}\gamma\hsdot{\circ }{.\ }\alpha}  \commentend}{}\<[E]%
\\
\>[5]{}\alpha\;({\varid{f}}^{*}\;(\gamma\;(\alpha\;\{\varid{fix}\;\varid{f}\}))){}\<[E]%
\\
\>[B]{}\mathbin{⊑}{}\<[BE]%
\>[5]{}\mbox{\commentbegin  Induction hypothesis  \commentend}{}\<[E]%
\\
\>[5]{}\alpha\;({\varid{f}}^{*}\;(\gamma\;(\varid{lfp}\;(\alpha\hsdot{\circ }{.\ }{\varid{f}}^{*}\hsdot{\circ }{.\ }\gamma)))){}\<[E]%
\\
\>[B]{}\mathbin{⊑}{}\<[BE]%
\>[5]{}\mbox{\commentbegin  \ensuremath{\varid{lfp}\;\widehat{\varid{f}}\mathrel{=}\widehat{\varid{f}}\;(\varid{lfp}\;\widehat{\varid{f}})}  \commentend}{}\<[E]%
\\
\>[5]{}\varid{lfp}\;(\alpha\hsdot{\circ }{.\ }{\varid{f}}^{*}\hsdot{\circ }{.\ }\gamma){}\<[E]%
\ColumnHook
\end{hscode}\resethooks
When \ensuremath{\varcolor{\tau}} is infinite, the result follows by \Cref{thm:safety-extension}
and the fact that all properties of interest are safety properties.
\end{proof}

\subsection{Abstract By-name Soundness, in Detail}
\label{sec:by-name-soundness}

\begin{figure}
\[\ruleform{\begin{array}{c}
  α_{\mathcal{S}} : (\ensuremath{(\conid{Name}\mathbin{:\rightharpoonup}\conid{D}_{\mathbf{na}})\to \conid{D}_{\mathbf{na}}}) \rightleftarrows (\ensuremath{(\conid{Name}\mathbin{:\rightharpoonup}\widehat{\conid{D}})\to \widehat{\conid{D}}}) : γ_{\mathcal{S}}
  \\
  α_{\Environments} : \pow{\ensuremath{\conid{Name}\mathbin{:\rightharpoonup}\conid{D}_{\mathbf{na}}}} \rightleftarrows (\ensuremath{\conid{Name}\mathbin{:\rightharpoonup}\widehat{\conid{D}}}) : γ_{\Environments}
  \\
  α_{\Domain{}} : \pow{\ensuremath{\conid{D}_{\mathbf{na}}}} \rightleftarrows \ensuremath{\widehat{\conid{D}}} : γ_{\Domain{}}
  \qquad
  β_\Traces : \ensuremath{\conid{T}\;(\conid{Value}\;(\conid{ByName}\;\conid{T}))} \to \ensuremath{\widehat{\conid{D}}}
  \qquad
\end{array}}\]
\belowdisplayskip=0pt
\arraycolsep=2pt
\[\begin{array}{lcl}
α_{\mathcal{S}}(S)(\widehat{ρ}) & = & α_\Traces(\{\  S(ρ) \mid ρ ∈ γ_{\Environments}(\widehat{ρ}) \ \}) \\
α_{\Environments}(R)(x) & = & \Lub \{\  α_{\Domain{}}(\{ρ(x)\}) \mid ρ ∈ R \ \} \\
α_{\Domain{}}(D) & = & \Lub \{\  β_\Traces(d) \mid d ∈ D \ \}  \\
\\[-0.75em]
β_\Traces(\ensuremath{\varcolor{\tau}}) & = & \begin{cases}
  \ensuremath{\varid{step}\;\varid{e}\;(\beta_\Traces (\varcolor{\tau}))}               & \text{if \ensuremath{\varcolor{\tau}\mathrel{=}\conid{Step}\;\varid{e}\;\varcolor{\tau}'}} \\
  \ensuremath{\varid{stuck}}                         & \text{if \ensuremath{\varcolor{\tau}\mathrel{=}\conid{Ret}\;\conid{Stuck}}} \\
  \ensuremath{\varid{fun}\;(\alpha_\Domain\hsdot{\circ }{.\ }{\varid{f}}^{*}\hsdot{\circ }{.\ }\gamma_\Domain)}      & \text{if \ensuremath{\varcolor{\tau}\mathrel{=}\conid{Ret}\;(\conid{Fun}\;\varid{f})}} \\
  \ensuremath{\varid{con}\;\varid{k}\;(\varid{map}\;(\alpha_\Domain\hsdot{\circ }{.\ }\{\wild\})\;\varid{ds})}     & \text{if \ensuremath{\varcolor{\tau}\mathrel{=}\conid{Ret}\;(\conid{Con}\;\varid{k}\;\varid{ds})}} \\
  \end{cases} \\
\\[-0.75em]
\end{array}\]
\caption{Galois connection $α_{\mathcal{S}}$ for by-name abstraction derived from \ensuremath{\conid{Trace}}, \ensuremath{\conid{Domain}} and \ensuremath{\conid{Lat}} instances on \ensuremath{\widehat{\conid{D}}}}
\label{fig:abstract-name}
\end{figure}

We will now prove that the by-name abstraction laws in \Cref{fig:abstraction-laws}
induce an abstract interpretation of by-name evaluation via \ensuremath{\alpha_\mathcal{S}} defined in
\Cref{fig:abstract-name}.
The Galois connection and the corresponding proofs are very similar, yet
somewhat simpler than for by-need because no heap update is involved.

We write $\ensuremath{{\varid{f}}^{*}} : \pow{A} \to \pow{B}$ to lift a function $\ensuremath{\varid{f}} : A \to B$
to powersets, and write $\ensuremath{\{\wild\}} : A \to \pow{A}$ to construct a singleton set in
pointfree style.
Note that we will omit \ensuremath{\conid{ByName}} newtype wrappers, as in many other preceding
sections, as well as the \ensuremath{\conid{Name}} passed to \ensuremath{\varid{fun}} as a poor man's De Bruijn level.

Compared to the by-need trace abstraction in \Cref{fig:abstract-name-need}, the
by-name trace abstraction function in \Cref{fig:abstract-name} is rather
straightforward because no heap is involved.

Note that the recursion in \ensuremath{\beta_\Traces} is defined in terms of the least fixpoint;
we discussed in \Cref{sec:safety-extension} why this is a natural choice.

We will now prove sound by-name interpretation by appealing to parametricity.

Following the semi-formal style of \citet[Section 3]{Wadler:89},
we apply the abstraction theorem to the System $F$ encoding
of the type of \ensuremath{\mathcal{S}\denot{\wild}_{\wild}}
\[
  \ensuremath{\mathcal{S}\denot{\wild}_{\wild}} : \forall X.\ \mathsf{Dict}(X) → \mathsf{Exp} → (\mathsf{Name} \pfun X) → X
\]
where $\mathsf{Dict}(\ensuremath{\varid{d}})$ is the type of the type class
dictionary for \ensuremath{(\conid{Trace}\;\varid{d},\conid{Domain}\;\varid{d},\conid{HasBind}\;\varid{d})}.
The abstraction theorem yields the following assertion about relations:
\[
  (\ensuremath{\mathcal{S}\denot{\wild}_{\wild}}, \ensuremath{\mathcal{S}\denot{\wild}_{\wild}}) ∈ \forall \mathcal{X}.\ \mathsf{Dict}(\mathcal{X}) → \mathsf{Exp} → (\mathsf{Name} \pfun \mathcal{X}) → \mathcal{X}
\]
Wadler overloads the type formers with a suitable relational interpretation, which translates to
\begin{align}
  &\forall A, B.\
  \forall R ⊆ A \times B.\
  \forall (\mathit{inst_1}, \mathit{inst_2}) ∈ \mathsf{Dict}(R).\
  \forall \ensuremath{\varid{e}} ∈ \mathsf{Exp}.\
  \forall (ρ_1, ρ_2) ∈ (\mathsf{Name} \pfun R). \notag \\
  &(\ensuremath{\mathcal{S}_{A }\denot{\varid{e}}}(\mathit{inst_1})(ρ_1), \ensuremath{\mathcal{S}_{B }\denot{\varid{e}}}(\mathit{inst_2})(ρ_2)) ∈ R \label{eqn:name-abs}
\end{align}
and in the following proof, we will instantiate $R(\ensuremath{\varid{d}},\ensuremath{\widehat{\varid{d}}}) \triangleq \ensuremath{\alpha_\Domain (\{\varid{d}\})\mathbin{⊑}\widehat{\varid{d}}}$ to show the abstraction relationship.

We will need the following auxiliary lemma for the \ensuremath{\varid{apply}} and \ensuremath{\varid{select}} cases:
\begin{lemma}[By-name bind]
\label{thm:by-name-bind}
It is \ensuremath{\beta_\Traces (\varid{d}\bind \varid{f})\mathbin{⊑}\widehat{\varid{f}}\;\widehat{\varid{d}}} if
\begin{enumerate}
  \item \ensuremath{\beta_\Traces (\varid{d})\mathbin{⊑}\widehat{\varid{d}}}, and
  \item for all events \ensuremath{\varid{ev}} and elements \ensuremath{\widehat{\varid{d'}}}, \ensuremath{\widehat{\varid{step}}\;\varid{ev}\;(\widehat{\varid{f}}\;\widehat{\varid{d'}})\mathbin{⊑}\widehat{\varid{f}}\;(\widehat{\varid{step}}\;\varid{ev}\;\widehat{\varid{d'}})}, and
  \item for all values \ensuremath{\varid{v}}, \ensuremath{\beta_\Traces (\varid{f}\;\varid{v})\mathbin{⊑}\widehat{\varid{f}}\;(\beta_\Traces (\conid{Ret}\;\varid{v}))}.
\end{enumerate}
\end{lemma}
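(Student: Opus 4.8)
The plan is to prove this inequality by Löb induction on the (possibly infinite) argument trace $\varid{d}$, holding $\varid{f}$, $\widehat{\varid{f}}$ and the hypotheses~(2) and (3) fixed throughout. Löb induction is the right principle here because $\beta_\Traces$ is defined by guarded recursion over the trace — and, on infinite traces, as the least fixpoint of those clauses via \Cref{thm:safety-extension} — so the argument parallels the ones in \Cref{thm:abs-length} and \Cref{thm:guarded-fixpoint-abstraction}; following the convention already adopted there, I will stay informal about where the $\later$ modality sits. I will also use, as a standing assumption of the abstract-interpretation framework (the one that already makes $\varid{kleeneFix}$ legitimate), that $\widehat{\varid{step}}\;\varid{ev}$ and $\widehat{\varid{f}}$ are monotone with respect to $⊑$. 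The proof then splits on the two constructors of $\conid{T}$, which line up exactly with the two defining equations of $\bind$ for $\conid{T}$ and with the clauses of $\beta_\Traces$.

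In the base case $\varid{d} = \conid{Ret}\;\varid{v}$, the monad law $\conid{Ret}\;\varid{v} \bind \varid{f} = \varid{f}\;\varid{v}$ reduces the goal to $\beta_\Traces(\varid{f}\;\varid{v}) ⊑ \widehat{\varid{f}}\;\widehat{\varid{d}}$; hypothesis~(3) gives $\beta_\Traces(\varid{f}\;\varid{v}) ⊑ \widehat{\varid{f}}\;(\beta_\Traces(\conid{Ret}\;\varid{v}))$, and since $\beta_\Traces(\conid{Ret}\;\varid{v}) = \beta_\Traces(\varid{d}) ⊑ \widehat{\varid{d}}$ by hypothesis~(1), monotonicity of $\widehat{\varid{f}}$ closes the case. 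In the step case $\varid{d} = \conid{Step}\;\varid{ev}\;\varid{d}_1$, the monad law gives $\varid{d} \bind \varid{f} = \conid{Step}\;\varid{ev}\;(\varid{d}_1 \bind \varid{f})$ and hence $\beta_\Traces(\varid{d} \bind \varid{f}) = \widehat{\varid{step}}\;\varid{ev}\;(\beta_\Traces(\varid{d}_1 \bind \varid{f}))$. I then invoke the (guarded) induction hypothesis on $\varid{d}_1$ with the abstract witness $\widehat{\varid{d}} \triangleq \beta_\Traces(\varid{d}_1)$, for which hypothesis~(1) holds by reflexivity of $⊑$; this yields $\beta_\Traces(\varid{d}_1 \bind \varid{f}) ⊑ \widehat{\varid{f}}\;(\beta_\Traces(\varid{d}_1))$. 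Chaining monotonicity of $\widehat{\varid{step}}\;\varid{ev}$, then hypothesis~(2) instantiated at $\widehat{\varid{d'}} \triangleq \beta_\Traces(\varid{d}_1)$, then the definitional equality $\widehat{\varid{step}}\;\varid{ev}\;(\beta_\Traces(\varid{d}_1)) = \beta_\Traces(\varid{d})$, hypothesis~(1) and monotonicity of $\widehat{\varid{f}}$, we get $\beta_\Traces(\varid{d} \bind \varid{f}) ⊑ \widehat{\varid{step}}\;\varid{ev}\;(\widehat{\varid{f}}\;(\beta_\Traces(\varid{d}_1))) ⊑ \widehat{\varid{f}}\;(\widehat{\varid{step}}\;\varid{ev}\;(\beta_\Traces(\varid{d}_1))) = \widehat{\varid{f}}\;(\beta_\Traces(\varid{d})) ⊑ \widehat{\varid{f}}\;\widehat{\varid{d}}$, as required.

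The one genuine subtlety — the step I expect to need the most care in the write-up — is the treatment of infinite $\varid{d}$: there $\beta_\Traces(\varid{d} \bind \varid{f})$ is understood as the least upper bound of $\beta_\Traces$ over the finite prefixes of $\varid{d} \bind \varid{f}$, so the asserted bound against the fixed element $\widehat{\varid{f}}\;\widehat{\varid{d}}$ is a safety property of $\varid{d}$, and by \Cref{thm:safety-extension} it suffices to establish it prefix-wise, which is exactly what the Löb induction above delivers. Everything else is a mechanical unfolding of the two $\bind$ equations and the four clauses of $\beta_\Traces$, and no case beyond $\conid{Ret}$ and $\conid{Step}$ ever arises, since $\beta_\Traces$ only recurses through $\conid{Step}$.
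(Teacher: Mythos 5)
Your proof is correct and follows essentially the same route as the paper's: Löb induction on the trace, with the \ensuremath{\conid{Ret}} case discharged by hypothesis~(3) plus (1), and the \ensuremath{\conid{Step}} case by instantiating the induction hypothesis at \ensuremath{\widehat{\varid{d}}\triangleq\beta_\Traces (\varid{d}_{1})} and then chaining monotonicity, hypothesis~(2) and hypothesis~(1). The paper likewise leaves the monotonicity of \ensuremath{\widehat{\varid{step}}} and \ensuremath{\widehat{\varid{f}}} implicit as part of the framework (law \textsc{Mono}), and defers the infinite-trace justification to the safety-extension machinery exactly as you note.
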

\begin{proof}
By Löb induction.

If \ensuremath{\varid{d}\mathrel{=}\conid{Step}\;\varid{ev}\;\varid{d'}}, define \ensuremath{\widehat{\varid{d'}}\triangleq\beta_\Traces (\varid{d'}}.
We) get
\begin{hscode}\SaveRestoreHook
\column{B}{@{}>{\hspre}l<{\hspost}@{}}%
\column{3}{@{}>{\hspre}l<{\hspost}@{}}%
\column{4}{@{}>{\hspre}l<{\hspost}@{}}%
\column{5}{@{}>{\hspre}l<{\hspost}@{}}%
\column{E}{@{}>{\hspre}l<{\hspost}@{}}%
\>[3]{}\beta_\Traces (\varid{d}\bind \varid{f})\mathrel{=}\beta_\Traces (\conid{Step}\;\varid{ev}\;\varid{d'}\bind \varid{f})\mathrel{=}\widehat{\varid{step}}\;\varid{ev}\;(\beta_\Traces (\varid{d'}\bind \varid{f})){}\<[E]%
\\
\>[B]{}\mathbin{⊑}{}\<[4]%
\>[4]{}\mbox{\commentbegin  Induction hypothesis at \ensuremath{\beta_\Traces (\varid{d'})\mathrel{=}\widehat{\varid{d'}}}, Monotonicity of \ensuremath{\widehat{\varid{step}}}  \commentend}{}\<[E]%
\\
\>[B]{}\hsindent{3}{}\<[3]%
\>[3]{}\widehat{\varid{step}}\;\varid{ev}\;(\widehat{\varid{f}}\;(\beta_\Traces (\varid{d'})){}\<[E]%
\\
\>[B]{}\mathbin{⊑}){}\<[5]%
\>[5]{}\mbox{\commentbegin  Assumption (2)  \commentend}{}\<[E]%
\\
\>[B]{}\hsindent{3}{}\<[3]%
\>[3]{}\widehat{\varid{f}}\;(\widehat{\varid{step}}\;\varid{ev}\;(\beta_\Traces (\varid{d'})))\mathrel{=}\widehat{\varid{f}}\;(\beta_\Traces (\varid{d}){}\<[E]%
\\
\>[B]{}\mathbin{⊑}){}\<[5]%
\>[5]{}\mbox{\commentbegin  Assumption (1)  \commentend}{}\<[E]%
\\
\>[B]{}\hsindent{3}{}\<[3]%
\>[3]{}\widehat{\varid{f}}\;\widehat{\varid{d}}{}\<[E]%
\ColumnHook
\end{hscode}\resethooks

Otherwise, \ensuremath{\varid{d}\mathrel{=}\conid{Ret}\;\varid{v}} for some \ensuremath{\varid{v}\mathbin{::}\conid{Value}}.
\begin{hscode}\SaveRestoreHook
\column{B}{@{}>{\hspre}l<{\hspost}@{}}%
\column{3}{@{}>{\hspre}l<{\hspost}@{}}%
\column{4}{@{}>{\hspre}l<{\hspost}@{}}%
\column{5}{@{}>{\hspre}l<{\hspost}@{}}%
\column{E}{@{}>{\hspre}l<{\hspost}@{}}%
\>[3]{}\beta_\Traces (\conid{Ret}\;\varid{v}\bind \varid{f})\mathrel{=}\beta_\Traces (\varid{f}\;\varid{v}){}\<[E]%
\\
\>[B]{}\mathbin{⊑}{}\<[4]%
\>[4]{}\mbox{\commentbegin  Assumption (3)  \commentend}{}\<[E]%
\\
\>[B]{}\hsindent{3}{}\<[3]%
\>[3]{}\widehat{\varid{f}}\;(\beta_\Traces (\conid{Ret}\;\varid{v}))\mathrel{=}\widehat{\varid{f}}\;(\beta_\Traces (\varid{d}){}\<[E]%
\\
\>[B]{}\mathbin{⊑}){}\<[5]%
\>[5]{}\mbox{\commentbegin  Assumption (1)  \commentend}{}\<[E]%
\\
\>[B]{}\hsindent{3}{}\<[3]%
\>[3]{}\widehat{\varid{f}}\;\widehat{\varid{d}}{}\<[E]%
\ColumnHook
\end{hscode}\resethooks
\end{proof}

What follows is the sound abstraction proof by parametricity.
Note that its statement fixes the interpreter to \ensuremath{\mathcal{S}\denot{\wild}_{\wild}}, however the proof would
still work if generalised to \emph{any} definition with the same type as \ensuremath{\mathcal{S}\denot{\wild}_{\wild}}!

\begin{theorem}[Abstract By-name Interpretation]
\label{thm:abstract-by-name}
Let \ensuremath{\varid{e}} be an expression, \ensuremath{\widehat{\conid{D}}} a domain with instances for \ensuremath{\conid{Trace}}, \ensuremath{\conid{Domain}}, \ensuremath{\conid{HasBind}} and
\ensuremath{\conid{Lat}}, and let $α_{\mathcal{S}}$ be the abstraction function from \Cref{fig:abstract-name}.
If the by-name abstraction laws in \Cref{fig:abstraction-laws} hold,
then \ensuremath{\mathcal{S}_{\widehat{\conid{D}}}\denot{\wild}} is an abstract interpreter that is sound \wrt $α_{\mathcal{S}}$,
\[
  α_{\mathcal{S}}(\ensuremath{\mathcal{S}_{\mathbf{name}}\denot{\varid{e}}}) ⊑ \ensuremath{\mathcal{S}_{\widehat{\conid{D}}}\denot{\varid{e}}}.
\]
\end{theorem}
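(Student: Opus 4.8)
### Proof plan for Theorem~\ref{thm:abstract-by-name}

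The plan is to derive the soundness statement $\alpha_{\mathcal{S}}(\mathcal{S}_{\mathbf{name}}\denot{e}) \sqsubseteq \mathcal{S}_{\widehat{\conid{D}}}\denot{e}$ as an instance of the parametricity assertion~\labelcref{eqn:name-abs}, which holds for $\mathcal{S}\denot{\wild}_{\wild}$ by the abstraction theorem applied to its polymorphic type. First I would instantiate the relational variables in~\labelcref{eqn:name-abs}: take $A \triangleq \conid{D}_{\mathbf{na}}$, $B \triangleq \widehat{\conid{D}}$, and the relation $R(d,\widehat{d}) \triangleq \bigl(\alpha_\Domain(\{d\}) \sqsubseteq \widehat{d}\bigr)$. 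Unfolding the definitions of $\alpha_{\mathcal{S}}$, $\alpha_{\Environments}$ and $\alpha_{\Domain}$ in \Cref{fig:abstract-name}, the desired inequality $\alpha_{\mathcal{S}}(\mathcal{S}_{\mathbf{name}}\denot{e})(\widehat{\rho}) \sqsubseteq \mathcal{S}_{\widehat{\conid{D}}}\denot{e}(\widehat{\rho})$ reduces, for each concrete $\rho \in \gamma_{\Environments}(\widehat{\rho})$, to $\beta_\Traces(\mathcal{S}_{\mathbf{name}}\denot{e}(\rho)) \sqsubseteq \mathcal{S}_{\widehat{\conid{D}}}\denot{e}(\widehat{\rho})$ — which is precisely $R$ applied to the two interpreter instances, provided the environments are $R$-related pointwise. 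That environment relation holds by construction: $\rho \in \gamma_{\Environments}(\widehat{\rho})$ means exactly $\alpha_\Domain(\{\rho(x)\}) \sqsubseteq \widehat{\rho}(x)$ for all $x$, i.e.\ $(\rho(x),\widehat{\rho}(x)) \in R$.

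The substance of the proof is discharging the hypothesis $(\mathit{inst}_1,\mathit{inst}_2) \in \mathsf{Dict}(R)$ — that is, showing the concrete by-name type class dictionary and the abstract dictionary on $\widehat{\conid{D}}$ are related at $R$. This breaks into one obligation per method: $\varid{step}$, $\varid{stuck}$, $\varid{fun}$, $\varid{apply}$, $\varid{con}$, $\varid{select}$, and $\varid{bind}$. For $\varid{step}$, $\varid{stuck}$, $\varid{fun}$, $\varid{con}$ the obligation follows almost directly by unfolding $\beta_\Traces$ on the relevant shape of trace (it was essentially designed to make these hold), together with the corresponding by-name abstraction law from \Cref{fig:abstraction-laws}. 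For $\varid{apply}$ and $\varid{select}$ the argument goes through the monadic bind of $\conid{T}$, so here I would invoke \Cref{thm:by-name-bind} (by-name bind): its three premises are, respectively, the $R$-relatedness of the scrutinee/callee denotation (available as an induction hypothesis), the abstract law that $\widehat{\varid{step}}$ commutes appropriately with the continuation (an abstraction law), and the $R$-relatedness of the continuation applied to a returned value (again structural, using the relevant law). The $\varid{bind}$ case requires handling the recursive-let fixpoint: the concrete side uses the guarded fixpoint $\keyword{let}\ d = \varid{rhs}\ d\ \keyword{in}\ \varid{body}\ d$ while the abstract side uses whatever the abstract $\varid{bind}$ does; I would appeal to \Cref{thm:guarded-fixpoint-abstraction} to bound $\alpha_\Domain$ of the guarded fixpoint by the least fixpoint of the abstracted iteratee, then use the abstract $\varid{bind}$ law to relate that to $\mathcal{S}_{\widehat{\conid{D}}}\denot{\wild}$.

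The main obstacle I anticipate is the $\varid{bind}$ / fixpoint case, for two reasons. First, the parametricity statement as written quantifies over a fixed relation $R$, but the guarded fixpoint lives "one step later", so making the Löb-induction structure line up with the appeal to \Cref{thm:guarded-fixpoint-abstraction} requires some care about the $\later$ modality — the paper itself signals (in the proof of \Cref{thm:guarded-fixpoint-abstraction}) that it is being deliberately sloppy about $\later$ here, and I would follow that convention, citing totality from \Cref{sec:totality} to justify it. Second, the abstract $\varid{bind}$ law in \Cref{fig:abstraction-laws} must be strong enough to absorb the gap between the least-fixpoint over-approximation and the actual abstract $\varid{bind}$ implementation (e.g.\ Kleene iteration plus widening for $\concolor{\mathsf{D_U}}$); verifying that the stated law does the job — rather than silently needing a stronger hypothesis — is where I'd expect to spend the most effort. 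Everything else is routine law-chasing, and I would present the method cases compactly, spelling out $\varid{apply}$ in full as the representative non-trivial structural case and $\varid{bind}$ in full as the representative fixpoint case.
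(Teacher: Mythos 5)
Your proposal matches the paper's proof essentially step for step: the same parametricity instantiation with $R(d,\widehat{d}) \triangleq \alpha_{\Domain{}}(\{d\}) \sqsubseteq \widehat{d}$, the same reduction to relating the two dictionaries method by method, the same use of the by-name bind lemma for \ensuremath{\varid{apply}} and \ensuremath{\varid{select}}, and the same treatment of \ensuremath{\varid{bind}} via the guarded-fixpoint-to-least-fixpoint abstraction followed by \textsc{ByName-Bind}. The only nit is that the premises you call ``induction hypotheses'' in the \ensuremath{\varid{apply}}/\ensuremath{\varid{select}} cases are really just the antecedents of the per-method relatedness obligations (parametricity replaces structural induction on \ensuremath{\varid{e}} here), which does not affect correctness.
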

\begin{proof}
Let $\ensuremath{\varid{inst}} : \mathsf{Dict}(\ensuremath{\conid{D}_{\mathbf{na}}})$, $\ensuremath{\widehat{\varid{inst}}} : \mathsf{Dict}(\ensuremath{\widehat{\conid{D}}})$ the
canonical dictionaries from the type class instance definitions.
Instantiate the free theorem \labelcref{eqn:name-abs} above as follows:
\[\begin{array}{c}
A \triangleq \ensuremath{\conid{D}_{\mathbf{na}}}, B \triangleq \ensuremath{\widehat{\conid{D}}}, R(\ensuremath{\varid{d}}, \ensuremath{\widehat{\varid{d}}}) \triangleq \ensuremath{\alpha_\Domain (\{\varid{d}\})\mathbin{⊑}\widehat{\varid{d}}},
\mathit{inst_1} \triangleq \ensuremath{\varid{inst}}, \mathit{inst_2} \triangleq \ensuremath{\widehat{\varid{inst}}}, \ensuremath{\varid{e}\triangleq\varid{e}}
\end{array}\]
Note that $(\ensuremath{\varcolor{\rho}},\ensuremath{\widehat{\varcolor{\rho}}}) ∈ (\mathsf{Name} \pfun R) \Longleftrightarrow \ensuremath{\alpha_\Environments (\{\varcolor{\rho}\})\mathbin{⊑}\widehat{\varcolor{\rho}}\Longleftrightarrow\varcolor{\rho}\in \gamma_\Environments (\widehat{\varcolor{\rho}})}$ by simple calculation.

The above instantiation yields, in Haskell,
\[
  \inferrule
    {(\ensuremath{\varid{inst}}, \ensuremath{\widehat{\varid{inst}}}) ∈ \mathsf{Dict}(R) \\ \ensuremath{\varcolor{\rho}\in \gamma_\Environments (\widehat{\varcolor{\rho}})}}
    {\ensuremath{\alpha_\Domain (\{\mathcal{S}_{\mathbf{name}}\denot{\varid{e}}_{\varcolor{\rho}}\})\mathbin{⊑}\mathcal{S}_{\widehat{\conid{D}}}\denot{\varid{e}}_{\widehat{\varcolor{\rho}}}}}
\]
and since \ensuremath{\varcolor{\rho}} and \ensuremath{\widehat{\varcolor{\rho}}} can be chosen arbitrarily, this can be reformulated as
\[
  \inferrule
    {(\ensuremath{\varid{inst}}, \ensuremath{\widehat{\varid{inst}}}) ∈ \mathsf{Dict}(R)}
    {α_{\mathcal{S}}(\ensuremath{\mathcal{S}_{\mathbf{name}}\denot{\varid{e}}}) ⊑ \ensuremath{\mathcal{S}_{\widehat{\conid{D}}}\denot{\varid{e}}}}
\]
Hence, in order to show the goal, it suffices to prove $(\ensuremath{\varid{inst}}, \ensuremath{\widehat{\varid{inst}}}) ∈ \mathsf{Dict}(R)$.
By the relational interpretation of products, we get one subgoal per instance method.
Note that $R(\ensuremath{\varid{d}}, \ensuremath{\widehat{\varid{d}}}) \Longleftrightarrow \ensuremath{\beta_\Traces (\varid{d})\mathbin{⊑}\widehat{\varid{d}}}$ and it is more
direct to argue in terms of the latter.
\begin{itemize}
  \item \textbf{Case \ensuremath{\varid{step}}}.
    Goal: $\inferrule{(\ensuremath{\varid{d}},\ensuremath{\widehat{\varid{d}}}) ∈ R}{(\ensuremath{\varid{step}\;\varid{ev}\;\varid{d}}, \ensuremath{\widehat{\varid{step}}\;\varid{ev}\;\widehat{\varid{d}}}) ∈ R}$. \\
    Then \ensuremath{\beta_\Traces (\conid{Step}\;\varid{ev}\;\varid{d})\mathrel{=}\widehat{\varid{step}}\;\varid{ev}\;(\beta_\Traces (\varid{d}))\mathbin{⊑}\widehat{\varid{step}}\;\varid{ev}\;\widehat{\varid{d}}} by assumption and monotonicity.

  \item \textbf{Case \ensuremath{\varid{stuck}}}.
    Goal: $(\ensuremath{\varid{stuck}}, \ensuremath{\widehat{\varid{stuck}}}) ∈ R$. \\
    Then \ensuremath{\beta_\Traces (\varid{stuck})\mathrel{=}\beta_\Traces (\conid{Ret}\;\conid{Stuck})\mathrel{=}\widehat{\varid{stuck}}}.

  \item \textbf{Case \ensuremath{\varid{fun}}}.
    Goal: $\inferrule{\forall (\ensuremath{\varid{d}},\ensuremath{\widehat{\varid{d}}}) ∈ R.\ (\ensuremath{\varid{f}\;\varid{d}}, \ensuremath{\widehat{\varid{f}}\;\widehat{\varid{d}}}) ∈ R}{(\ensuremath{\varid{fun}\;\varid{f}}, \ensuremath{\widehat{\varid{fun}}\;\widehat{\varid{f}}}) ∈ R}$. \\
    Then \ensuremath{\beta_\Traces (\varid{fun}\;\varid{f})\mathrel{=}\beta_\Traces (\conid{Ret}\;(\conid{Fun}\;\varid{f}))\mathrel{=}\widehat{\varid{fun}}\;(\alpha_\Domain\hsdot{\circ }{.\ }{\varid{f}}^{*}\hsdot{\circ }{.\ }\gamma_\Domain)} and
    it suffices to show that \ensuremath{\alpha_\Domain\hsdot{\circ }{.\ }{\varid{f}}^{*}\hsdot{\circ }{.\ }\gamma_\Domain\mathbin{⊑}\widehat{\varid{f}}} by monotonicity of \ensuremath{\widehat{\varid{fun}}}.
    \begin{hscode}\SaveRestoreHook
\column{B}{@{}>{\hspre}l<{\hspost}@{}}%
\column{5}{@{}>{\hspre}c<{\hspost}@{}}%
\column{5E}{@{}l@{}}%
\column{7}{@{}>{\hspre}l<{\hspost}@{}}%
\column{8}{@{}>{\hspre}l<{\hspost}@{}}%
\column{E}{@{}>{\hspre}l<{\hspost}@{}}%
\>[7]{}(\alpha_\Domain\hsdot{\circ }{.\ }{\varid{f}}^{*}\hsdot{\circ }{.\ }\gamma_\Domain)\;\widehat{\varid{d}}{}\<[E]%
\\
\>[5]{}\mathrel{=}{}\<[5E]%
\>[8]{}\mbox{\commentbegin  Unfold \ensuremath{{\wild}^{*}}, \ensuremath{\alpha_\Domain}, simplify  \commentend}{}\<[E]%
\\
\>[5]{}\hsindent{2}{}\<[7]%
\>[7]{}\Lub\{\beta_\Traces (\varid{f}\;\varid{d})\mid \varid{d}\in \gamma_\Domain\;\widehat{\varid{d}}\}{}\<[E]%
\\
\>[5]{}\mathbin{⊑}{}\<[5E]%
\>[8]{}\mbox{\commentbegin  Apply premise to \ensuremath{\beta_\Traces (\varid{d})\mathbin{⊑}\widehat{\varid{d}}}  \commentend}{}\<[E]%
\\
\>[5]{}\hsindent{2}{}\<[7]%
\>[7]{}\widehat{\varid{f}}\;\widehat{\varid{d}}{}\<[E]%
\ColumnHook
\end{hscode}\resethooks

  \item \textbf{Case \ensuremath{\varid{apply}}}.
    Goal: $\inferrule{(\ensuremath{\varid{d}},\ensuremath{\widehat{\varid{d}}}) ∈ R \\ (\ensuremath{\varid{a}},\ensuremath{\widehat{\varid{a}}}) ∈ R}{(\ensuremath{\varid{apply}\;\varid{d}\;\varid{a}}, \ensuremath{\widehat{\varid{apply}}\;\widehat{\varid{d}}\;\widehat{\varid{a}}}) ∈ R}$. \\
    \ensuremath{\varid{apply}\;\varid{d}\;\varid{a}} is defined as \ensuremath{\varid{d}\bind \lambda \varid{v}\to \keyword{case}\;\varid{v}\;\keyword{of}\;\conid{Fun}\;\varid{g}\to \varid{g}\;\varid{a};\anonymous \to \varid{stuck}}.
    In order to show the goal, we need to apply \Cref{thm:by-name-bind} at \ensuremath{\widehat{\varid{f}}\;\widehat{\varid{d}}\triangleq\widehat{\varid{apply}}\;\widehat{\varid{d}}\;\widehat{\varid{a}}}.
    We get three subgoals for the premises of \Cref{thm:by-name-bind}:
    \begin{itemize}
      \item \ensuremath{\beta_\Traces (\varid{d})\mathbin{⊑}\widehat{\varid{d}}}: By assumption $(\ensuremath{\varid{d}},\ensuremath{\widehat{\varid{d}}}) ∈ R$.
      \item \ensuremath{\keyword{\forall}\!\! \hsforall \;\varid{ev}\;\widehat{\varid{d'}}\hsdot{\circ }{.\ }\widehat{\varid{step}}\;\varid{ev}\;(\widehat{\varid{apply}}\;\widehat{\varid{d'}}\;\widehat{\varid{a}})\mathbin{⊑}\widehat{\varid{apply}}\;(\widehat{\varid{step}}\;\varid{ev}\;\widehat{\varid{d'}})\;\widehat{\varid{a}}}: By assumption \textsc{Step-App}.
      \item \ensuremath{\keyword{\forall}\!\! \hsforall \;\varid{v}\hsdot{\circ }{.\ }\beta_\Traces (\keyword{case}\;\varid{v}\;\keyword{of}\;\conid{Fun}\;\varid{g}\to \varid{g}\;\varid{a};\anonymous \to \varid{stuck})\mathbin{⊑}\widehat{\varid{apply}}\;(\beta_\Traces (\conid{Ret}\;\varid{v}))\;\widehat{\varid{a}}}: \\
        By cases over \ensuremath{\varid{v}}.
        \begin{itemize}
          \item \textbf{Case \ensuremath{\varid{v}\mathrel{=}\conid{Stuck}}}:
            Then \ensuremath{\beta_\Traces (\varid{stuck})\mathrel{=}\widehat{\varid{stuck}}\mathbin{⊑}\widehat{\varid{apply}}\;\widehat{\varid{stuck}}\;\widehat{\varid{a}}} by assumption \textsc{Stuck-App}.
          \item \textbf{Case \ensuremath{\varid{v}\mathrel{=}\conid{Con}\;\varid{k}\;\varid{ds}}}:
            Then \ensuremath{\beta_\Traces (\varid{stuck})\mathrel{=}\widehat{\varid{stuck}}\mathbin{⊑}\widehat{\varid{apply}}\;(\widehat{\varid{con}}\;\varid{k}\;\widehat{\varid{ds}})\;\widehat{\varid{a}}} by assumption \textsc{Stuck-App}, for the suitable \ensuremath{\widehat{\varid{ds}}}.
          \item \textbf{Case \ensuremath{\varid{v}\mathrel{=}\conid{Fun}\;\varid{g}}}:
            Then
            \begin{hscode}\SaveRestoreHook
\column{B}{@{}>{\hspre}l<{\hspost}@{}}%
\column{15}{@{}>{\hspre}c<{\hspost}@{}}%
\column{15E}{@{}l@{}}%
\column{17}{@{}>{\hspre}l<{\hspost}@{}}%
\column{18}{@{}>{\hspre}l<{\hspost}@{}}%
\column{E}{@{}>{\hspre}l<{\hspost}@{}}%
\>[17]{}\beta_\Traces (\varid{g}\;\varid{a}){}\<[E]%
\\
\>[15]{}\mathbin{⊑}{}\<[15E]%
\>[18]{}\mbox{\commentbegin  \ensuremath{\varid{id}\mathbin{⊑}\gamma_\Domain\hsdot{\circ }{.\ }\alpha_\Domain}, rearrange  \commentend}{}\<[E]%
\\
\>[15]{}\hsindent{2}{}\<[17]%
\>[17]{}(\alpha_\Domain\hsdot{\circ }{.\ }{\varid{g}}^{*}\hsdot{\circ }{.\ }\gamma_\Domain)\;(\alpha_\Domain\;\varid{a}){}\<[E]%
\\
\>[15]{}\mathbin{⊑}{}\<[15E]%
\>[18]{}\mbox{\commentbegin  Assumption \ensuremath{\beta_\Traces (\varid{a})\mathbin{⊑}\widehat{\varid{a}}}  \commentend}{}\<[E]%
\\
\>[15]{}\hsindent{2}{}\<[17]%
\>[17]{}(\alpha_\Domain\hsdot{\circ }{.\ }{\varid{g}}^{*}\hsdot{\circ }{.\ }\gamma_\Domain)\;\widehat{\varid{a}}{}\<[E]%
\\
\>[15]{}\mathbin{⊑}{}\<[15E]%
\>[18]{}\mbox{\commentbegin  Assumption \textsc{Beta-App}  \commentend}{}\<[E]%
\\
\>[15]{}\hsindent{2}{}\<[17]%
\>[17]{}\widehat{\varid{apply}}\;(\widehat{\varid{fun}}\;(\alpha_\Domain\hsdot{\circ }{.\ }{\varid{g}}^{*}\hsdot{\circ }{.\ }\gamma_\Domain))\;\widehat{\varid{a}}{}\<[E]%
\\
\>[15]{}\mathrel{=}{}\<[15E]%
\>[18]{}\mbox{\commentbegin  Definition of \ensuremath{\beta_\Traces}, \ensuremath{\varid{v}}  \commentend}{}\<[E]%
\\
\>[15]{}\hsindent{2}{}\<[17]%
\>[17]{}\widehat{\varid{apply}}\;(\beta_\Traces (\conid{Ret}\;\varid{v}))\;\widehat{\varid{a}}{}\<[E]%
\ColumnHook
\end{hscode}\resethooks
        \end{itemize}
    \end{itemize}

  \item \textbf{Case \ensuremath{\varid{con}}}.
    Goal: $\inferrule{(\ensuremath{\varid{ds}},\ensuremath{\widehat{\varid{ds}}}) ∈ \ensuremath{[\mskip1.5mu R \mskip1.5mu]}}{(\ensuremath{\varid{con}\;\varid{k}\;\varid{ds}}, \ensuremath{\widehat{\varid{con}}\;\varid{k}\;\widehat{\varid{ds}}}) ∈ R}$. \\
    Then \ensuremath{\beta_\Traces (\varid{con}\;\varid{k}\;\varid{ds})\mathrel{=}\beta_\Traces (\conid{Ret}\;(\conid{Con}\;\varid{k}\;\varid{ds}))\mathrel{=}\widehat{\varid{con}}\;\varid{k}\;(\varid{map}\;(\alpha_\Domain\hsdot{\circ }{.\ }\{\wild\})\;\varid{ds})}.
    The assumption $(\ensuremath{\varid{ds}},\ensuremath{\widehat{\varid{ds}}}) ∈ \ensuremath{[\mskip1.5mu R \mskip1.5mu]}$ implies \ensuremath{\varid{map}\;(\alpha_\Domain\hsdot{\circ }{.\ }\{\wild\})\;\varid{ds}\mathbin{⊑}\widehat{\varid{ds}}} and
    the goal follows by monotonicity of \ensuremath{\widehat{\varid{con}}}.

  \item \textbf{Case \ensuremath{\varid{select}}}.
    Goal: $\inferrule{(\ensuremath{\varid{d}},\ensuremath{\widehat{\varid{d}}}) ∈ R \\ (\ensuremath{\varid{alts}},\ensuremath{\widehat{\varid{alts}}}) ∈ \ensuremath{\conid{Tag}\mathbin{:\rightharpoonup}([\mskip1.5mu R \mskip1.5mu]\to R )}}{(\ensuremath{\varid{select}\;\varid{d}\;\varid{alts}}, \ensuremath{\widehat{\varid{select}}\;\widehat{\varid{d}}\;\widehat{\varid{alts}}}) ∈ R}$. \\
    \ensuremath{\varid{select}\;\varid{d}\;\varid{fs}} is defined as \ensuremath{\varid{d}\bind \lambda \varid{v}\to \keyword{case}\;\varid{v}\;\keyword{of}\;\conid{Con}\;\varid{k}\;\varid{ds}\mid \varid{k}\in \varid{dom}\;\varid{alts}\to (\varid{alts}\mathop{!}\varid{k})\;\varid{ds};\anonymous \to \varid{stuck}}.
    In order to show the goal, we need to apply \Cref{thm:by-name-bind} at \ensuremath{\widehat{\varid{f}}\;\widehat{\varid{d}}\triangleq\widehat{\varid{select}}\;\widehat{\varid{d}}\;\widehat{\varid{alts}}}.
    We get three subgoals for the premises of \Cref{thm:by-name-bind}:
    \begin{itemize}
      \item \ensuremath{\beta_\Traces (\varid{d})\mathbin{⊑}\widehat{\varid{d}}}: By assumption $(\ensuremath{\varid{d}},\ensuremath{\widehat{\varid{d}}}) ∈ R$.
      \item \ensuremath{\keyword{\forall}\!\! \hsforall \;\varid{ev}\;\widehat{\varid{d'}}\hsdot{\circ }{.\ }\widehat{\varid{step}}\;\varid{ev}\;(\widehat{\varid{select}}\;\widehat{\varid{d'}}\;\widehat{\varid{alts}})\mathbin{⊑}\widehat{\varid{select}}\;(\widehat{\varid{step}}\;\varid{ev}\;\widehat{\varid{d'}})\;\widehat{\varid{alts}}}: By assumption \textsc{Step-Sel}.
      \item \ensuremath{\keyword{\forall}\!\! \hsforall \;\varid{v}\hsdot{\circ }{.\ }\beta_\Traces (\keyword{case}\;\varid{v}\;\keyword{of}\;\conid{Con}\;\varid{k}\;\varid{ds}\mid \varid{k}\in \varid{dom}\;\varid{alts}\to (\varid{alts}\mathop{!}\varid{k})\;\varid{ds};\anonymous \to \varid{stuck})\mathbin{⊑}\widehat{\varid{select}}\;(\beta_\Traces (\conid{Ret}\;\varid{v}))\;\widehat{\varid{alts}}}: \\
        By cases over \ensuremath{\varid{v}}. The first three all correspond to when the continuation of \ensuremath{\varid{select}} gets stuck.
        \begin{itemize}
          \item \textbf{Case \ensuremath{\varid{v}\mathrel{=}\conid{Stuck}}}:
            Then \ensuremath{\beta_\Traces (\varid{stuck})\mathrel{=}\widehat{\varid{stuck}}\mathbin{⊑}\widehat{\varid{select}}\;\widehat{\varid{stuck}}\;\widehat{\varid{alts}}} by assumption \textsc{Stuck-Sel}.
          \item \textbf{Case \ensuremath{\varid{v}\mathrel{=}\conid{Fun}\;\varid{f}}}:
            Then \ensuremath{\beta_\Traces (\varid{stuck})\mathrel{=}\widehat{\varid{stuck}}\mathbin{⊑}\widehat{\varid{select}}\;(\widehat{\varid{fun}}\;\widehat{\varid{f}})\;\widehat{\varid{alts}}} by assumption \textsc{Stuck-Sel}, for the suitable \ensuremath{\widehat{\varid{f}}}.
          \item \textbf{Case \ensuremath{\varid{v}\mathrel{=}\conid{Con}\;\varid{k}\;\varid{ds}}, $\ensuremath{\varid{k}} \not∈ \ensuremath{\varid{dom}\;\varid{alts}}$}:
            Then \ensuremath{\beta_\Traces (\varid{stuck})\mathrel{=}\widehat{\varid{stuck}}\mathbin{⊑}\widehat{\varid{select}}\;(\widehat{\varid{con}}\;\varid{k}\;\widehat{\varid{ds}})\;\widehat{\varid{alts}}} by assumption \textsc{Stuck-Sel}, for the suitable \ensuremath{\widehat{\varid{ds}}}.
          \item \textbf{Case \ensuremath{\varid{v}\mathrel{=}\conid{Con}\;\varid{k}\;\varid{ds}}, $\ensuremath{\varid{k}} ∈ \ensuremath{\varid{dom}\;\varid{alts}}$}:
            Then
            \begin{hscode}\SaveRestoreHook
\column{B}{@{}>{\hspre}l<{\hspost}@{}}%
\column{15}{@{}>{\hspre}c<{\hspost}@{}}%
\column{15E}{@{}l@{}}%
\column{17}{@{}>{\hspre}l<{\hspost}@{}}%
\column{18}{@{}>{\hspre}l<{\hspost}@{}}%
\column{E}{@{}>{\hspre}l<{\hspost}@{}}%
\>[17]{}\beta_\Traces ((\varid{alts}\mathop{!}\varid{k})\;\varid{ds}){}\<[E]%
\\
\>[15]{}\mathbin{⊑}{}\<[15E]%
\>[18]{}\mbox{\commentbegin  \ensuremath{\varid{id}\mathbin{⊑}\gamma_\Domain\hsdot{\circ }{.\ }\alpha_\Domain}, rearrange  \commentend}{}\<[E]%
\\
\>[15]{}\hsindent{2}{}\<[17]%
\>[17]{}(\alpha_\Domain\hsdot{\circ }{.\ }{(\varid{alts}\mathop{!}\varid{k})}^{*}\hsdot{\circ }{.\ }\varid{map}\;\gamma_\Domain)\;(\varid{map}\;(\alpha_\Domain\hsdot{\circ }{.\ }\{\wild\})\;\varid{ds}){}\<[E]%
\\
\>[15]{}\mathbin{⊑}{}\<[15E]%
\>[18]{}\mbox{\commentbegin  Assumption $(\ensuremath{\varid{alts}},\ensuremath{\widehat{\varid{alts}}}) ∈ \ensuremath{\conid{Tag}\mathbin{:\rightharpoonup}([\mskip1.5mu R \mskip1.5mu]\to R )}$  \commentend}{}\<[E]%
\\
\>[15]{}\hsindent{2}{}\<[17]%
\>[17]{}(\widehat{\varid{alts}}\mathop{!}\varid{k})\;(\varid{map}\;(\alpha_\Domain\hsdot{\circ }{.\ }\{\wild\})\;\varid{ds}){}\<[E]%
\\
\>[15]{}\mathbin{⊑}{}\<[15E]%
\>[18]{}\mbox{\commentbegin  Assumption \textsc{Beta-Sel}  \commentend}{}\<[E]%
\\
\>[15]{}\hsindent{2}{}\<[17]%
\>[17]{}\widehat{\varid{select}}\;(\widehat{\varid{con}}\;\varid{k}\;(\varid{map}\;(\alpha_\Domain\hsdot{\circ }{.\ }\{\wild\})\;\varid{ds}))\;\widehat{\varid{alts}}{}\<[E]%
\\
\>[15]{}\mathrel{=}{}\<[15E]%
\>[18]{}\mbox{\commentbegin  Definition of \ensuremath{\beta_\Traces}, \ensuremath{\varid{v}}  \commentend}{}\<[E]%
\\
\>[15]{}\hsindent{2}{}\<[17]%
\>[17]{}\widehat{\varid{select}}\;(\beta_\Traces (\conid{Ret}\;\varid{v}))\;\widehat{\varid{alts}}{}\<[E]%
\ColumnHook
\end{hscode}\resethooks
        \end{itemize}
    \end{itemize}

  \item \textbf{Case \ensuremath{\varid{bind}}}.
    Goal: $\inferrule{(\forall (\ensuremath{\varid{d}},\ensuremath{\widehat{\varid{d}}}) ∈ R.\ (\ensuremath{\varid{rhs}\;\varid{d}}, \ensuremath{\widehat{\varid{rhs}}\;\widehat{\varid{d}}}) ∈ R) \\ (\forall (\ensuremath{\varid{d}},\ensuremath{\widehat{\varid{d}}}) ∈ R.\ (\ensuremath{\varid{body}\;\varid{d}}, \ensuremath{\widehat{\varid{body}}\;\widehat{\varid{d}}}) ∈ R)}
                     {(\ensuremath{\varid{bind}\;\varid{rhs}\;\varid{body}}, \ensuremath{\widehat{\varid{bind}}\;\widehat{\varid{rhs}}\;\widehat{\varid{body}}}) ∈ R}$. \\
    It is \ensuremath{\varid{bind}\;\varid{rhs}\;\varid{body}\mathrel{=}\varid{body}\;(\varid{fix}\;\varid{rhs})} and \ensuremath{\widehat{\varid{body}}\;(\varid{lfp}\;\widehat{\varid{rhs}})\mathbin{⊑}\widehat{\varid{bind}}\;\widehat{\varid{rhs}}\;\widehat{\varid{body}}} by Assumption \textsc{ByName-Bind}.
    Let us first establish that $(\ensuremath{\varid{fix}\;\varid{rhs}}, \ensuremath{\varid{lfp}\;\widehat{\varid{rhs}}}) ∈ R$, leaning on
    our theory about safety extension in \Cref{sec:safety-extension}:
    \begin{hscode}\SaveRestoreHook
\column{B}{@{}>{\hspre}l<{\hspost}@{}}%
\column{5}{@{}>{\hspre}c<{\hspost}@{}}%
\column{5E}{@{}l@{}}%
\column{7}{@{}>{\hspre}l<{\hspost}@{}}%
\column{8}{@{}>{\hspre}l<{\hspost}@{}}%
\column{E}{@{}>{\hspre}l<{\hspost}@{}}%
\>[7]{}\alpha_\Domain (\{\varid{fix}\;\varid{rhs}\}){}\<[E]%
\\
\>[5]{}\mathbin{⊑}{}\<[5E]%
\>[8]{}\mbox{\commentbegin  By \Cref{thm:safety-extension}  \commentend}{}\<[E]%
\\
\>[5]{}\hsindent{2}{}\<[7]%
\>[7]{}\varid{lfp}\;(\alpha_\Domain\hsdot{\circ }{.\ }{\varid{rhs}}^{*}\hsdot{\circ }{.\ }\gamma_\Domain){}\<[E]%
\\
\>[5]{}\mathrel{=}{}\<[5E]%
\>[8]{}\mbox{\commentbegin  Unfolding \ensuremath{{\wild}^{*}}, \ensuremath{\alpha_\Domain}  \commentend}{}\<[E]%
\\
\>[5]{}\hsindent{2}{}\<[7]%
\>[7]{}\varid{lfp}\;(\lambda \widehat{\varid{d}}\to \Lub\{\beta_\Traces (\varid{rhs}\;\varid{d})\mid \varid{d}\in \gamma_\Domain\;\widehat{\varid{d}}\}{}\<[E]%
\\
\>[5]{}\mathbin{⊑}{}\<[5E]%
\>[8]{}\mbox{\commentbegin  Apply assumption to $\ensuremath{\alpha_\Domain (\{\varid{d}\})\mathbin{⊑}\alpha_\Domain\;(\gamma_\Domain\;\widehat{\varid{d}})\mathbin{⊑}\widehat{\varid{d}}\Longleftrightarrow} (\ensuremath{\varid{d}},\ensuremath{\widehat{\varid{d}}}) ∈ R$  \commentend}{}\<[E]%
\\
\>[5]{}\hsindent{2}{}\<[7]%
\>[7]{}\varid{lfp}\;\widehat{\varid{rhs}}{}\<[E]%
\ColumnHook
\end{hscode}\resethooks
    Applying this fact to the second assumption proves
    $(\ensuremath{\varid{body}\;(\varid{fix}\;\varid{rhs})}, \ensuremath{\widehat{\varid{body}}\;(\varid{lfp}\;\widehat{\varid{rhs}})}) ∈ R$ and thus the goal.
\end{itemize}
\end{proof}

\subsection{Abstract By-need Soundness, in Detail}
\label{sec:by-need-soundness}

The goal of this section is to prove \Cref{thm:abstract-by-need} correct,
which is applicable for analyses that are sound both \wrt to by-name
as well as by-need, such as usage analysis or perhaps type analysis in
\Cref{sec:type-analysis} (we have however not proven it so).

The setup in \Cref{sec:by-name-soundness} with its Galois connection in
\Cref{fig:abstract-name} is somewhat similar to the Galois connection in
\Cref{fig:abstract-name-need}, however for by-need abstraction the Galois
connection for domain elements \ensuremath{\varid{d}\mathbin{::}\conid{D}_{\mathbf{ne}}} is indexed by a heap \wrt to which the
element is abstracted.
We will later see how this indexing yields a Kripke-style logical relation
as the soundness condition, and that, sadly, such a relation cannot easily be
proven by appealing to parametricity.

The reason we need to index correctness relations by a heap is as follows:
Although in \Cref{sec:evaluation-strategies} we considered an element \ensuremath{\varid{d}}
as an atomic denotation, such a denotation actually only carries meaning when it
travels together with a heap \ensuremath{\varcolor{\mu}} that ties the addresses that \ensuremath{\varid{d}} references to
actual meaning.

There are \emph{many} elements (functions!) \ensuremath{\varid{d}\mathbin{::}\conid{D}_{\mathbf{ne}}}, many with very
surprising behavior, but we are only interested in elements \emph{definable}
by our interpreter:

\begin{definition}[Definable by-need entities]
  \label{defn:definable}
  We write \ensuremath{{\mathop{\vdash_{\Domain{}}} \varid{d}}}, \ensuremath{{\mathop{\vdash_{\Environments}} \varcolor{\rho}}} or \ensuremath{{\mathop{\vdash_{\Heaps}} \varcolor{\mu}}} to say that the by-need
  element \ensuremath{\varid{d}}, environment \ensuremath{\varcolor{\rho}} or heap \ensuremath{\varcolor{\mu}} is \emph{definable}, defined as
  \begin{itemize}
    \item \ensuremath{{\mathop{\vdash_{\Environments}} \varcolor{\rho}}\triangleq\keyword{\forall}\!\! \hsforall \;\varid{x}\in \varid{dom}\;\varcolor{\rho}\hsdot{\circ }{.\ }\keyword{\exists}\!\! \hsexists \;\varid{x}\;\varid{a}\hsdot{\circ }{.\ }\varcolor{\rho}\mathop{!}\varid{x}\mathrel{=}\varid{step}\;(\conid{Look}\;\varid{y})\;(\varid{fetch}\;\varid{a})}.
    \item \ensuremath{\varid{adom}\;\varcolor{\rho}\triangleq\{\varid{a}\mid \varcolor{\rho}\mathop{!}\varid{x}\mathrel{=}\varid{step}\;(\conid{Look}\;\varid{y})\;(\varid{fetch}\;\varid{a})\}}.
    \item \ensuremath{{\mathop{\vdash_{\Domain{}}} \varid{d}}\triangleq\keyword{\exists}\!\! \hsexists \;\varid{e}\;\varcolor{\rho}\hsdot{\circ }{.\ }{\mathop{\vdash_{\Environments}} \varcolor{\rho}}\land\varid{d}\mathrel{=}\mathcal{S}_{\mathbf{need}}\denot{\varid{e}}_{\varcolor{\rho}}}.
    \item \ensuremath{\varid{adom}\;\varid{d}\triangleq\varid{adom}\;\varcolor{\rho}} where \ensuremath{\varid{d}\mathrel{=}\mathcal{S}_{\mathbf{need}}\denot{\varid{e}}_{\varcolor{\rho}}}.
    \item \ensuremath{{\mathop{\vdash_{\Heaps}} \varcolor{\mu}}\triangleq\keyword{\forall}\!\! \hsforall \;\varid{a}\hsdot{\circ }{.\ }\keyword{\exists}\!\! \hsexists \;\varid{d}\hsdot{\circ }{.\ }\varcolor{\mu}\mathop{!}\varid{a}\mathrel{=}\varid{memo}\;\varid{a}\;\varid{d}\land\later\!\;({\mathop{\vdash_{\Domain{}}} \varid{d}}\land\varid{adom}\;\varid{d}\mathbin{⊆}\varid{dom}\;\varcolor{\mu})}.
  \end{itemize}
  We refer to \ensuremath{\varid{adom}\;\varid{d}} (resp. \ensuremath{\varid{adom}\;\varcolor{\rho}}) as the \emph{address domain} of \ensuremath{\varid{d}} (resp. \ensuremath{\varcolor{\rho}}).
\end{definition}

Henceforth, we assume that all elements \ensuremath{\varid{d}}, environments \ensuremath{\varcolor{\rho}} and heaps \ensuremath{\varcolor{\mu}} of
interest are definable in this sense.
It is easy to see that definability is preserved by \ensuremath{\mathcal{S}_{\mathbf{need}}\denot{\wild}_{\wild}} whenever the
environment or heap is extended; the important case is the implementation of
\ensuremath{\varid{bind}}.


The indexed family of abstraction functions improves whenever the heap with
which we index is ``more evaluated'' --- the binary relation \ensuremath{(\progressto)} (``progresses
to'') on heaps in \Cref{fig:heap-progression} captures this progression.
It is defined as the smallest preorder (rules \progresstorefl, \progresstotrans)
that also contains rules \progresstoext and \progresstomemo.
The former corresponds to extending the heap in the \ensuremath{\conid{Let}} case.
The latter corresponds to memoising a heap entry after it was evaluated in the
\ensuremath{\conid{Var}} case.

\begin{figure}
  \[\begin{array}{c}
    \ruleform{ μ_1 \progressto μ_2 }
    \\ \\[-0.5em]
    \inferrule[\progresstorefl]{\ensuremath{{\mathop{\vdash_{\Heaps}} \varcolor{\mu}}}}{\ensuremath{\varcolor{\mu}\progressto\varcolor{\mu}}}
    \qquad
    \inferrule[\progresstotrans]{\ensuremath{\varcolor{\mu}_{1}\progressto\varcolor{\mu}_{2}} \quad \ensuremath{\varcolor{\mu}_{2}\progressto\varcolor{\mu}_{3}}}{\ensuremath{\varcolor{\mu}_{1}\progressto\varcolor{\mu}_{3}}}
    \qquad
    \inferrule[\progresstoext]
      {\ensuremath{\varid{a}} \not∈ \ensuremath{\varid{dom}\;\varcolor{\mu}} \quad \ensuremath{\varid{adom}\;\varcolor{\rho}\mathbin{⊆}\varid{dom}\;\varcolor{\mu}\mathbin{∪}\{\varid{a}\}}}
      {\ensuremath{\varcolor{\mu}\progressto\varcolor{\mu}[\varid{a}\mapsto\varid{memo}\;\varid{a}\;(\mathcal{S}_{\mathbf{need}}\denot{\varid{e}}_{\varcolor{\rho}})]}}
    \\ \\[-0.5em]
    \inferrule[\progresstomemo]
      {\ensuremath{\varcolor{\mu}_{1}\mathop{!}\varid{a}\mathrel{=}\varid{memo}\;\varid{a}\;(\mathcal{S}_{\mathbf{need}}\denot{\varid{e}}_{\varcolor{\rho}_{1}})} \quad \ensuremath{\later\!\;(\mathcal{S}_{\mathbf{need}}\denot{\varid{e}}_{\varcolor{\rho}_{1}}(\varcolor{\mu}_{1})\mathrel{=}\many{\conid{Step}\;\varid{ev}}\;(\mathcal{S}_{\mathbf{need}}\denot{\varid{v}}_{\varcolor{\rho}_{2}}(\varcolor{\mu}_{2})))}}
      {\ensuremath{\varcolor{\mu}_{1}\progressto\varcolor{\mu}_{2}[\varid{a}\mapsto\varid{memo}\;\varid{a}\;(\mathcal{S}_{\mathbf{need}}\denot{\varid{v}}_{\varcolor{\rho}_{2}})]}}
    \\[-0.5em]
  \end{array}\]
  \caption{Heap progression relation}
  \label{fig:heap-progression}
\end{figure}

Heap progression is the primary mechanism by which we can reason about the
meaning of programs:
If \ensuremath{\varcolor{\mu}_{1}} progresses to \ensuremath{\varcolor{\mu}_{2}} (\ie \ensuremath{\varcolor{\mu}_{1}\progressto\varcolor{\mu}_{2}}), and \ensuremath{\varid{adom}\;\varid{d}\mathbin{⊆}\varid{dom}\;\varcolor{\mu}_{1}}, then
\ensuremath{\varid{d}\;\varcolor{\mu}_{1}} has the same by-name semantics as \ensuremath{\varid{d}\;\varcolor{\mu}_{2}}, with the latter potentially
terminating in fewer steps.
We will exploit this observation in the abstract in
\Cref{thm:heap-progress-mono}, and now work towards proof.

To that end, it is important to build witnesses of \ensuremath{\varcolor{\mu}_{1}\progressto\varcolor{\mu}_{2}} in the first
place:

\begin{lemma}[Evaluation progresses the heap]
\label{thm:eval-progression}
If \ensuremath{\mathcal{S}_{\mathbf{need}}\denot{\varid{e}}_{\varcolor{\rho}_{1}}(\varcolor{\mu}_{1})\mathrel{=}\many{\conid{Step}\;\varid{ev}}\;(\mathcal{S}_{\mathbf{need}}\denot{\varid{v}}_{\varcolor{\rho}_{2}}(\varcolor{\mu}_{2}))}, then \ensuremath{\varcolor{\mu}_{1}\progressto\varcolor{\mu}_{2}}.
\end{lemma}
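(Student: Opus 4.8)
The plan is to prove \Cref{thm:eval-progression} by Löb induction over the guarded recursive structure of \ensuremath{\mathcal{S}_{\mathbf{need}}\denot{\wild}_{\wild}}, mirroring the style of the adequacy proof in \Cref{thm:need-abstracts-lk}. The statement is naturally a statement about the \ensuremath{\conid{ByNeed}} state transformer: \ensuremath{\mathcal{S}_{\mathbf{need}}\denot{\varid{e}}_{\varcolor{\rho}_{1}}(\varcolor{\mu}_{1})} runs the stateful computation starting in heap \ensuremath{\varcolor{\mu}_{1}}, and if it terminates with a value \ensuremath{\varid{v}} in heap \ensuremath{\varcolor{\mu}_{2}}, we must exhibit a derivation of \ensuremath{\varcolor{\mu}_{1}\progressto\varcolor{\mu}_{2}}. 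I would proceed by structural case analysis on \ensuremath{\varid{e}}, using the inductive hypothesis (available \emph{later}, via Löb) for each recursive invocation of the interpreter, and composing the resulting progression witnesses with \progresstotrans.

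First I would handle the cases that do not touch the heap directly. For \ensuremath{\conid{Lam}}, \ensuremath{\conid{ConApp}}, and variable lookups that get \ensuremath{\varid{stuck}}, the computation returns in zero steps with \ensuremath{\varcolor{\mu}_{2}\mathrel{=}\varcolor{\mu}_{1}}, so \progresstorefl closes the goal (using \ensuremath{{\mathop{\vdash_{\Heaps}} \varcolor{\mu}_{1}}}, which holds by our standing definability assumption). For \ensuremath{\conid{App}\;\varid{e}\;\varid{x}}, \ensuremath{\conid{Case}\;\varid{e}\;\varid{alts}}, and the \ensuremath{\varid{apply}}/\ensuremath{\varid{select}} combinators, the computation threads the heap linearly through the evaluation of the scrutinee/callee and then the body: the sequence of \ensuremath{\conid{Step}}s factors as evaluation of \ensuremath{\varid{e}} from \ensuremath{\varcolor{\mu}_{1}} to some intermediate \ensuremath{\varcolor{\mu}'} reaching a value, followed by further evaluation from \ensuremath{\varcolor{\mu}'}. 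I apply the (later) IH to each sub-evaluation to get \ensuremath{\varcolor{\mu}_{1}\progressto\varcolor{\mu}'} and \ensuremath{\varcolor{\mu}'\progressto\varcolor{\mu}_{2}}, then conclude by \progresstotrans. The only subtlety is that \ensuremath{\varid{apply}} on a \ensuremath{\conid{Fun}} value invokes \ensuremath{\varid{f}\;\varid{a}}, which by definability is again of the form \ensuremath{\mathcal{S}_{\mathbf{need}}\denot{\varid{e}'}_{\varcolor{\rho}'}}, so the IH applies.

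The two genuinely interesting cases are \ensuremath{\conid{Let}} and the \ensuremath{\conid{Var}} lookup that actually fetches and memoises. For \ensuremath{\conid{Let}\;\varid{x}\;\varid{e}_{1}\;\varid{e}_{2}}, \ensuremath{\varid{bind}} allocates a fresh address \ensuremath{\varid{a}\mathrel{=}\varid{nextFree}\;\varcolor{\mu}_{1}} and extends the heap to \ensuremath{\varcolor{\mu}_{1}[\varid{a}\mapsto\varid{memo}\;\varid{a}\;(\mathcal{S}_{\mathbf{need}}\denot{\varid{e}_{1}}_{\varcolor{\rho}'})]}; this step is exactly witnessed by \progresstoext (the freshness and address-domain side conditions follow from \ensuremath{\varid{nextFree}} and definability), and then the IH on \ensuremath{\varid{e}_{2}} composed via \progresstotrans finishes it. For the \ensuremath{\conid{Var}} case with \ensuremath{\varcolor{\rho}_{1}\mathop{!}\varid{x}\mathrel{=}\varid{step}\;(\conid{Look}\;\varid{y})\;(\varid{fetch}\;\varid{a})} and \ensuremath{\varcolor{\mu}_{1}\mathop{!}\varid{a}\mathrel{=}\varid{memo}\;\varid{a}\;(\mathcal{S}_{\mathbf{need}}\denot{\varid{e}'}_{\varcolor{\rho}'})}, the computation runs \ensuremath{\mathcal{S}_{\mathbf{need}}\denot{\varid{e}'}_{\varcolor{\rho}'}(\varcolor{\mu}_{1})} to a value in some \ensuremath{\varcolor{\mu}''} (IH gives \ensuremath{\varcolor{\mu}_{1}\progressto\varcolor{\mu}''}), and then \ensuremath{\varid{memo}} performs an \ensuremath{\conid{Upd}} step overwriting \ensuremath{\varid{a}} with \ensuremath{\varid{memo}\;\varid{a}\;(\varid{return}\;\varid{v})}; this final update step is precisely an instance of \progresstomemo (with the required \ensuremath{\later} equation supplied by the trace we are given), and \progresstotrans composes the two. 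I expect the main obstacle to be bureaucratic rather than conceptual: carefully managing the \ensuremath{\later} modality so that the inductive hypothesis is only ever used under a \ensuremath{\conid{Step}} (to keep the Löb induction well-founded), and matching the informal ``\ensuremath{\many{\conid{Step}\;\varid{ev}}}'' notation in the statement against the actual monadic \ensuremath{\bind } plumbing of \ensuremath{\conid{ByNeed}} — essentially the same factoring-of-balanced-traces reasoning already used, and deferred with hand-waving, in the \ensuremath{\UpdateT} case of \Cref{thm:preserve-absent}. As in the rest of \Cref{sec:by-need-soundness}, I would be deliberately relaxed about the explicit \ensuremath{\varid{next}}/\ensuremath{\later} bookkeeping, justified by the totality result of \Cref{sec:totality}.
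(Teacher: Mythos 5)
Your proposal is correct and follows essentially the same route as the paper's proof: Löb induction with a case analysis on \ensuremath{\varid{e}}, \progresstorefl{} for values, \progresstotrans{} composing the induction hypotheses in the \ensuremath{\conid{App}}/\ensuremath{\conid{Case}} cases, \progresstoext{} for \ensuremath{\conid{Let}}, and \progresstomemo{} for \ensuremath{\conid{Var}}. The only slight imprecision is in the \ensuremath{\conid{Var}} case: \progresstomemo{} is not a rule about the final update step in isolation but already takes the pre-evaluation heap \ensuremath{\varcolor{\mu}_{1}} and the delayed evaluation equation as premises and concludes \ensuremath{\varcolor{\mu}_{1}\progressto\varcolor{\mu}''[\varid{a}\mapsto\varid{memo}\;\varid{a}\;(\mathcal{S}_{\mathbf{need}}\denot{\varid{v}}_{\varcolor{\rho}_{2}})]} directly, so the paper closes that case with a single application of the rule and no appeal to the induction hypothesis or \progresstotrans{}.
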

\begin{proof}
By Löb induction and cases on \ensuremath{\varid{e}}.
\begin{itemize}
  \item \textbf{Case} \ensuremath{\conid{Var}\;\varid{x}}:
    Let \ensuremath{\many{\varid{ev}_{1}}\triangleq\varid{tail}\;(\varid{init}\;(\many{\varid{ev}}))}.
    \begin{hscode}\SaveRestoreHook
\column{B}{@{}>{\hspre}l<{\hspost}@{}}%
\column{5}{@{}>{\hspre}c<{\hspost}@{}}%
\column{5E}{@{}l@{}}%
\column{9}{@{}>{\hspre}l<{\hspost}@{}}%
\column{11}{@{}>{\hspre}l<{\hspost}@{}}%
\column{E}{@{}>{\hspre}l<{\hspost}@{}}%
\>[9]{}(\varcolor{\rho}_{1}\mathop{!}\varid{x})\;\varcolor{\mu}_{1}{}\<[E]%
\\
\>[5]{}\mathrel{=}{}\<[5E]%
\>[9]{}\mbox{\commentbegin  \ensuremath{{\mathop{\vdash_{\Environments}} \varcolor{\rho}_{1}}}, some \ensuremath{\varid{y}}, \ensuremath{\varid{a}}  \commentend}{}\<[E]%
\\
\>[9]{}\conid{Step}\;(\conid{Look}\;\varid{y})\;(\varid{fetch}\;\varid{a}\;\varcolor{\mu}_{1}){}\<[E]%
\\
\>[5]{}\mathrel{=}{}\<[5E]%
\>[9]{}\mbox{\commentbegin  Unfold \ensuremath{\varid{fetch}}  \commentend}{}\<[E]%
\\
\>[9]{}\conid{Step}\;(\conid{Look}\;\varid{y})\;((\varcolor{\mu}_{1}\mathop{!}\varid{a})\;\varcolor{\mu}_{1}){}\<[E]%
\\
\>[5]{}\mathrel{=}{}\<[5E]%
\>[9]{}\mbox{\commentbegin  \ensuremath{{\mathop{\vdash_{\Heaps}} \varcolor{\mu}}}, some \ensuremath{\varid{e}}, \ensuremath{\varcolor{\rho}_{3}}  \commentend}{}\<[E]%
\\
\>[9]{}\conid{Step}\;(\conid{Look}\;\varid{y})\;(\varid{memo}\;\varid{a}\;(\mathcal{S}_{\mathbf{need}}\denot{\varid{e}}_{\varcolor{\rho}_{3}}(\varcolor{\mu}_{1}))){}\<[E]%
\\
\>[5]{}\mathrel{=}{}\<[5E]%
\>[9]{}\mbox{\commentbegin  Unfold \ensuremath{\varid{memo}}  \commentend}{}\<[E]%
\\
\>[9]{}\conid{Step}\;(\conid{Look}\;\varid{y})\;(\mathcal{S}_{\mathbf{need}}\denot{\varid{e}}_{\varcolor{\rho}_{3}}(\varcolor{\mu}_{1})\bind \varid{upd}){}\<[E]%
\\
\>[5]{}\mathrel{=}{}\<[5E]%
\>[9]{}\mbox{\commentbegin  \ensuremath{\mathcal{S}_{\mathbf{need}}\denot{\varid{e}}_{\varcolor{\rho}_{3}}(\varcolor{\mu}_{1})\mathrel{=}\many{\conid{Step}\;\varid{ev}_{1}}\;(\mathcal{S}_{\mathbf{need}}\denot{\varid{v}}_{\varcolor{\rho}_{2}}(\varcolor{\mu}_{3}))} for some \ensuremath{\varcolor{\mu}_{3}}, unfold \ensuremath{\bind }, \ensuremath{\varid{upd}}  \commentend}{}\<[E]%
\\
\>[9]{}\conid{Step}\;(\conid{Look}\;\varid{y})\;(\many{\conid{Step}\;\varid{ev}_{1}}\;(\mathcal{S}_{\mathbf{need}}\denot{\varid{v}}_{\varcolor{\rho}_{2}}(\varcolor{\mu}_{3})\bind \lambda \varid{v}\;\varcolor{\mu}_{3}\to {}\<[E]%
\\
\>[9]{}\hsindent{2}{}\<[11]%
\>[11]{}\conid{Step}\;\conid{Upd}\;(\conid{Ret}\;(\varid{v},\varcolor{\mu}_{3}[\varid{a}\mapsto\varid{memo}\;\varid{a}\;(\varid{return}\;\varid{v})])))){}\<[E]%
\ColumnHook
\end{hscode}\resethooks
    Now let \ensuremath{\varid{sv}\mathbin{::}\conid{Value}\;(\conid{ByNeed}\;\conid{T})} be the semantic value such that \ensuremath{\mathcal{S}_{\mathbf{need}}\denot{\varid{v}}_{\varcolor{\rho}_{2}}(\varcolor{\mu}_{3})\mathrel{=}\conid{Ret}\;(\varid{sv},\varcolor{\mu}_{3})}.
    \begin{hscode}\SaveRestoreHook
\column{B}{@{}>{\hspre}l<{\hspost}@{}}%
\column{5}{@{}>{\hspre}c<{\hspost}@{}}%
\column{5E}{@{}l@{}}%
\column{9}{@{}>{\hspre}l<{\hspost}@{}}%
\column{E}{@{}>{\hspre}l<{\hspost}@{}}%
\>[5]{}\mathrel{=}{}\<[5E]%
\>[9]{}\mbox{\commentbegin  \ensuremath{\mathcal{S}_{\mathbf{need}}\denot{\varid{v}}_{\varcolor{\rho}_{2}}(\varcolor{\mu}_{3})\mathrel{=}\conid{Ret}\;(\varid{sv},\varcolor{\mu}_{3})}  \commentend}{}\<[E]%
\\
\>[9]{}\conid{Step}\;(\conid{Look}\;\varid{y})\;(\many{\conid{Step}\;\varid{ev}_{1}}\;(\conid{Step}\;\conid{Upd}\;(\conid{Ret}\;(\varid{sv},\varcolor{\mu}_{3}[\varid{a}\mapsto\varid{memo}\;\varid{a}\;(\varid{return}\;\varid{sv})])))){}\<[E]%
\\
\>[5]{}\mathrel{=}{}\<[5E]%
\>[9]{}\mbox{\commentbegin  Refold \ensuremath{\mathcal{S}_{\mathbf{need}}\denot{\varid{v}}_{\varcolor{\rho}_{2}}(\wild)}, \ensuremath{\many{\varid{ev}}\mathrel{=}[\mskip1.5mu \conid{Look}\;\varid{y}\mskip1.5mu]\plus \many{\varid{ev}_{1}}\plus [\mskip1.5mu \conid{Upd}\mskip1.5mu]}  \commentend}{}\<[E]%
\\
\>[9]{}\many{\conid{Step}\;\varid{ev}}\;(\mathcal{S}_{\mathbf{need}}\denot{\varid{v}}_{\varcolor{\rho}_{2}}(\varcolor{\mu}_{3}[\varid{a}\mapsto\varid{memo}\;\varid{a}\;(\mathcal{S}_{\mathbf{need}}\denot{\varid{v}}_{\varcolor{\rho}_{2}})])){}\<[E]%
\\
\>[5]{}\mathrel{=}{}\<[5E]%
\>[9]{}\mbox{\commentbegin  Determinism of \ensuremath{\mathcal{S}_{\mathbf{need}}\denot{\wild}_{\wild}}, assumption  \commentend}{}\<[E]%
\\
\>[9]{}\many{\conid{Step}\;\varid{ev}}\;(\mathcal{S}_{\mathbf{need}}\denot{\varid{v}}_{\varcolor{\rho}_{2}}(\varcolor{\mu}_{2})){}\<[E]%
\ColumnHook
\end{hscode}\resethooks
    We have
    \begin{align}
      & \ensuremath{\varcolor{\mu}_{1}\mathop{!}\varid{a}\mathrel{=}\varid{memo}\;\varid{a}\;(\mathcal{S}_{\mathbf{need}}\denot{\varid{e}}_{\varcolor{\rho}_{3}})} \label{eqn:eval-progression-memo} \\
      & \ensuremath{\later\!\;(\mathcal{S}_{\mathbf{need}}\denot{\varid{e}}_{\varcolor{\rho}_{3}}(\varcolor{\mu}_{1})\mathrel{=}\many{\conid{Step}\;\varid{ev}_{1}}\;(\mathcal{S}_{\mathbf{need}}\denot{\varid{v}}_{\varcolor{\rho}_{2}}(\varcolor{\mu}_{3})))} \label{eqn:eval-progression-eval} \\
      & \ensuremath{\varcolor{\mu}_{2}\mathrel{=}\varcolor{\mu}_{3}[\varid{a}\mapsto\varid{memo}\;\varid{a}\;(\mathcal{S}_{\mathbf{need}}\denot{\varid{v}}_{\varcolor{\rho}_{2}})]} \label{eqn:eval-progression-heaps}
    \end{align}
    We can apply rule \progresstomemo to \Cref{eqn:eval-progression-memo} and \Cref{eqn:eval-progression-eval}
    to get \ensuremath{\varcolor{\mu}_{1}\progressto\varcolor{\mu}_{3}[\varid{a}\mapsto\varid{memo}\;\varid{a}\;(\mathcal{S}_{\mathbf{need}}\denot{\varid{v}}_{\varcolor{\rho}_{2}})]}, and rewriting along
    \Cref{eqn:eval-progression-heaps} proves the goal.
  \item \textbf{Case} \ensuremath{\conid{Lam}\;\varid{x}\;\varid{body}}, \ensuremath{\conid{ConApp}\;\varid{k}\;\varid{xs}}:
    Then \ensuremath{\varcolor{\mu}_{1}\mathrel{=}\varcolor{\mu}_{2}} and the goal follows by \progresstorefl.
  \item \textbf{Case} \ensuremath{\conid{App}\;\varid{e}_{1}\;\varid{x}}:
    Let us assume that \ensuremath{\mathcal{S}_{\mathbf{need}}\denot{\varid{e}_{1}}_{\varcolor{\rho}_{1}}(\varcolor{\mu}_{1})\mathrel{=}\many{\conid{Step}\;\varid{ev}_{1}}\;(\mathcal{S}_{\mathbf{need}}\denot{\conid{Lam}\;\varid{y}\;\varid{e}_{2}}_{\varcolor{\rho}_{3}}(\varcolor{\mu}_{3}))} and
    \ensuremath{\mathcal{S}_{\mathbf{need}}\denot{\varid{e}_{2}}_{\varcolor{\rho}_{3}[\varid{y}\mapsto\varcolor{\rho}\mathop{!}\varid{x}]}(\varcolor{\mu}_{3})\mathrel{=}\many{\conid{Step}\;\varid{ev}_{2}}\;(\mathcal{S}_{\mathbf{need}}\denot{\varid{v}}_{\varcolor{\rho}_{2}}(\varcolor{\mu}_{2}))}, so that
    \ensuremath{\varcolor{\mu}_{1}\progressto\varcolor{\mu}_{3}}, \ensuremath{\varcolor{\mu}_{3}\progressto\varcolor{\mu}_{2}} by the induction hypothesis.
    The goal follows by \progresstotrans, because
    \ensuremath{\many{\varid{ev}}\mathrel{=}[\mskip1.5mu \conid{App}_{1}\mskip1.5mu]\plus \many{\varid{ev}_{1}}\plus [\mskip1.5mu \conid{App}_{2}\mskip1.5mu]\plus \many{\varid{ev}_{2}}}.
  \item \textbf{Case} \ensuremath{\conid{Case}\;\varid{e}_{1}\;\varid{alts}}:
    Similar to \ensuremath{\conid{App}\;\varid{e}_{1}\;\varid{x}}.
  \item \textbf{Case} \ensuremath{\conid{Let}\;\varid{x}\;\varid{e}_{1}\;\varid{e}_{2}}:
    \begin{hscode}\SaveRestoreHook
\column{B}{@{}>{\hspre}l<{\hspost}@{}}%
\column{5}{@{}>{\hspre}c<{\hspost}@{}}%
\column{5E}{@{}l@{}}%
\column{9}{@{}>{\hspre}l<{\hspost}@{}}%
\column{15}{@{}>{\hspre}l<{\hspost}@{}}%
\column{32}{@{}>{\hspre}l<{\hspost}@{}}%
\column{E}{@{}>{\hspre}l<{\hspost}@{}}%
\>[9]{}\mathcal{S}_{\mathbf{need}}\denot{\conid{Let}\;\varid{x}\;\varid{e}_{1}\;\varid{e}_{2}}_{\varcolor{\rho}_{1}}(\varcolor{\mu}_{1}){}\<[E]%
\\
\>[5]{}\mathrel{=}{}\<[5E]%
\>[9]{}\mbox{\commentbegin  Unfold \ensuremath{\mathcal{S}_{\mathbf{need}}\denot{\wild}_{\wild}}  \commentend}{}\<[E]%
\\
\>[9]{}\varid{bind}\;{}\<[15]%
\>[15]{}(\lambda \varid{d}_{1}\to \mathcal{S}_{\mathbf{need}}\denot{\varid{e}_{1}}_{\varcolor{\rho}_{1}[\varid{x}\mapsto\varid{step}\;(\conid{Look}\;\varid{x})\;\varid{d}_{1}]}(\wild))\;{}\<[E]%
\\
\>[15]{}(\lambda \varid{d}_{1}\to \varid{step}\;\conid{Let}_{1}\;(\mathcal{S}_{\mathbf{need}}\denot{\varid{e}_{2}}_{\varcolor{\rho}_{1}[\varid{x}\mapsto\varid{step}\;(\conid{Look}\;\varid{x})\;\varid{d}_{1}]}(\wild)))\;{}\<[E]%
\\
\>[15]{}\varcolor{\mu}_{1}{}\<[E]%
\\
\>[5]{}\mathrel{=}{}\<[5E]%
\>[9]{}\mbox{\commentbegin  Unfold \ensuremath{\varid{bind}}, \ensuremath{\varid{a}\triangleq\varid{nextFree}\;\varcolor{\mu}} with $\ensuremath{\varid{a}} \not\in \ensuremath{\varid{dom}\;\varcolor{\mu}}$  \commentend}{}\<[E]%
\\
\>[9]{}\varid{step}\;\conid{Let}_{1}\;(\mathcal{S}_{\mathbf{need}}\denot{\varid{e}_{2}}_{\varcolor{\rho}_{1}[\varid{x}\mapsto\varid{step}\;(\conid{Look}\;\varid{x})\;(\varid{fetch}\;\varid{a})]}({}\<[E]%
\\
\>[9]{}\hsindent{23}{}\<[32]%
\>[32]{}\varcolor{\mu}_{1}[\varid{a}\mapsto\varid{memo}\;\varid{a}\;(\mathcal{S}_{\mathbf{need}}\denot{\varid{e}_{1}}_{\varcolor{\rho}_{1}[\varid{x}\mapsto\varid{step}\;(\conid{Look}\;\varid{x})\;(\varid{fetch}\;\varid{a})]})])){}\<[E]%
\ColumnHook
\end{hscode}\resethooks
    At this point, we can apply the induction hypothesis to \ensuremath{\mathcal{S}_{\mathbf{need}}\denot{\varid{e}_{2}}_{\varcolor{\rho}_{1}[\varid{x}\mapsto\varid{step}\;(\conid{Look}\;\varid{x})\;(\varid{fetch}\;\varid{a})]}} to conclude that
    \ensuremath{\varcolor{\mu}_{1}[\varid{a}\mapsto\varid{memo}\;\varid{a}\;(\mathcal{S}_{\mathbf{need}}\denot{\varid{e}_{1}}_{\varcolor{\rho}_{1}[\varid{x}\mapsto\varid{step}\;(\conid{Look}\;\varid{x})\;(\varid{fetch}\;\varid{a})]})]\progressto\varcolor{\mu}_{2}}.

    On the other hand, we have
    \ensuremath{\varcolor{\mu}_{1}\progressto\varcolor{\mu}_{1}[\varid{a}\mapsto\varid{memo}\;\varid{a}\;(\mathcal{S}_{\mathbf{need}}\denot{\varid{e}_{1}}_{\varcolor{\rho}_{1}[\varid{x}\mapsto\varid{step}\;(\conid{Look}\;\varid{x})\;(\varid{fetch}\;\varid{a})]})]}
    by rule \progresstoext (note that $\ensuremath{\varid{a}} \not∈ \ensuremath{\varid{dom}\;\varcolor{\mu}}$), so the goal follows
    by \progresstotrans.
\end{itemize}
\end{proof}


It is often necessary, but non-trivial to cope with equality of elements \ensuremath{\varid{d}}
modulo readdressing.
Fortunately, we only need to consider equality in the abstract, that is,
modulo \ensuremath{\beta_\Domain}, where \ensuremath{\beta_\Domain (\varcolor{\mu}) (\varid{d})\triangleq\alpha_\Domain (\varcolor{\mu}) (\{\varid{d}\})} is the representation
function of \ensuremath{\alpha_\Domain}.

\begin{lemma}[Abstract readdressing]
\label{thm:abstract-readdressing}
If \ensuremath{\varid{a}_{1}\in \varid{dom}\;\varcolor{\mu}_{1}}, but $\ensuremath{\varid{a}_{2}} \not∈ \ensuremath{\varid{dom}\;\varcolor{\mu}_{1}}$,
then \ensuremath{\beta_\Domain (\varcolor{\mu}_{1}) (\mathcal{S}_{\mathbf{need}}\denot{\varid{e}}_{\varcolor{\rho}_{1}})\mathrel{=}\beta_\Domain (\varcolor{\mu}_{2}) (\mathcal{S}_{\mathbf{need}}\denot{\varid{e}}_{\varcolor{\rho}_{2}})},
where \ensuremath{\varcolor{\rho}_{2}} and \ensuremath{\varcolor{\mu}_{2}} are renamings of \ensuremath{\varcolor{\rho}_{1}} and \ensuremath{\varcolor{\mu}_{1}} defined as follows:
\begin{itemize}
  \item $\ensuremath{\varcolor{\rho}_{2}\mathop{!}\varid{x}} = \begin{cases} \ensuremath{\varid{step}\;(\conid{Look}\;\varid{y})\;(\varid{fetch}\;\varid{a}_{2})} & \textup{if }\ensuremath{\varcolor{\rho}_{1}\mathop{!}\varid{x}\mathrel{=}\varid{step}\;(\conid{Look}\;\varid{y})\;(\varid{fetch}\;\varid{a}_{1})} \\ \ensuremath{\varcolor{\rho}_{1}\mathop{!}\varid{x}} & \textup{otherwise} \end{cases}$
  \item $\ensuremath{\varid{a}_{1}} \not∈ \ensuremath{\varid{dom}\;\varcolor{\mu}_{2}}$
  \item $\ensuremath{\varcolor{\mu}_{2}\mathop{!}\varid{a}} = \begin{cases} \ensuremath{\varid{memo}\;\varid{a}_{2}\;(\mathcal{S}_{\mathbf{need}}\denot{\varid{e}_{1}}_{\varcolor{\rho}_{4}})} & \textup{if \ensuremath{\varid{a}\mathrel{=}\varid{a}_{2}}, \ensuremath{\varcolor{\mu}_{1}\mathop{!}\varid{a}_{1}\mathrel{=}\varid{memo}\;\varid{a}_{1}\;(\mathcal{S}_{\mathbf{need}}\denot{\varid{e}_{1}}_{\varcolor{\rho}_{3}})}, \ensuremath{\varcolor{\rho}_{4}} renaming of \ensuremath{\varcolor{\rho}_{3}}} \\ \ensuremath{\varid{memo}\;\varid{a}\;(\mathcal{S}_{\mathbf{need}}\denot{\varid{e}_{1}}_{\varcolor{\rho}_{4}})} & \textup{where \ensuremath{\varcolor{\mu}_{1}\mathop{!}\varid{a}\mathrel{=}\varid{memo}\;\varid{a}\;(\mathcal{S}_{\mathbf{need}}\denot{\varid{e}_{1}}_{\varcolor{\rho}_{3}})}, \ensuremath{\varcolor{\rho}_{4}} renaming of \ensuremath{\varcolor{\rho}_{3}}} \end{cases}$
\end{itemize}
\end{lemma}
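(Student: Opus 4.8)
The plan is to reduce the lemma to an \emph{equivariance} property of \ensuremath{\mathcal{S}_{\mathbf{need}}\denot{\wild}_{\wild}} under renaming of heap addresses, and then to observe that \ensuremath{\beta_\Domain} cannot see address identities. First I would fix the involution \ensuremath{\pi = (a_1\ a_2)} on \ensuremath{\Addresses} and lift it structurally: to environments (\ensuremath{\pi\cdot\rho} rewrites every \ensuremath{\varid{fetch}\;\varid{a}} occurring in \ensuremath{\rho} to \ensuremath{\varid{fetch}\;(\pi\;\varid{a})}), to heaps (\ensuremath{\pi\cdot\mu} permutes the domain and acts on the nested environments), to definable denotations \ensuremath{\mathcal{S}_{\mathbf{need}}\denot{\varid{e}}_{\rho}}, and coinductively to by-need traces. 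Crucially \ensuremath{\pi} fixes every \ensuremath{\conid{Event}} (these carry \ensuremath{\conid{Name}}s, never addresses) and fixes the \ensuremath{\conid{Stuck}}/\ensuremath{\conid{Fun}}/\ensuremath{\conid{Con}} outer shape of \ensuremath{\conid{Value}}s. By construction the renamed data \ensuremath{\rho_2,\mu_2} of the statement are exactly \ensuremath{\pi\cdot\rho_1} and \ensuremath{\pi\cdot\mu_1}, and since \ensuremath{\varid{a}_{2}\not\in\varid{dom}\;\mu_1} the restriction \ensuremath{\pi|_{\varid{dom}\;\mu_1}} is a bijection onto \ensuremath{\varid{dom}\;\mu_2}; this already gives the side condition \ensuremath{\varid{a}_{1}\not\in\varid{dom}\;\mu_2}.

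The key step is the equivariance of \ensuremath{\mathcal{S}_{\mathbf{need}}\denot{\wild}_{\wild}} modulo readdressing: for every \ensuremath{\varid{e}} and every definable \ensuremath{\rho,\mu} with \ensuremath{\varid{adom}\;\rho\subseteq\varid{dom}\;\mu}, there is an address bijection \ensuremath{\sigma\supseteq\pi|_{\varid{dom}\;\mu}} with \ensuremath{\mathcal{S}_{\mathbf{need}}\denot{\varid{e}}_{\pi\cdot\rho}(\pi\cdot\mu) = \sigma\cdot(\mathcal{S}_{\mathbf{need}}\denot{\varid{e}}_{\rho}(\mu))}. I would prove this by Löb induction and cases on \ensuremath{\varid{e}}, following the template of \Cref{thm:eval-progression}. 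For \ensuremath{\conid{Lam}} and \ensuremath{\conid{ConApp}} the heap is untouched and \ensuremath{\sigma = \pi|_{\varid{dom}\;\mu}} works. For \ensuremath{\conid{Var}}, \ensuremath{\pi} being a bijection on \ensuremath{\varid{dom}\;\mu} makes heap lookup and the \ensuremath{\varid{fetch}}/\ensuremath{\varid{memo}} machinery commute with \ensuremath{\pi}, and the hypothesis handles the \ensuremath{\conid{Upd}} tail. For \ensuremath{\conid{App}} and \ensuremath{\conid{Case}} one composes the bijections obtained from two nested uses of the hypothesis. The only delicate case is \ensuremath{\conid{Let}}: the renamed run allocates at \ensuremath{\varid{nextFree}\;(\pi\cdot\mu)}, which need not equal \ensuremath{\varid{nextFree}\;\mu}, so \ensuremath{\sigma} must be extended to swap those two fresh addresses before invoking the hypothesis on the body. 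This is where the bijection genuinely grows as the trace unfolds, and it is the reason the lemma is stated modulo \ensuremath{\beta_\Domain} rather than as an on-the-nose equality of traces.

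Finally I would discharge the lemma. Since \ensuremath{\beta_\Domain(\mu)(\varid{d})} is the (guarded) fold of the by-need trace \ensuremath{\varid{d}\;\mu} that reads off only events, outer \ensuremath{\conid{Value}} shapes, and the \ensuremath{\beta_\Domain}-images---re-indexed by the heap reached at that point---of the sub-denotations carried by \ensuremath{\conid{Fun}}/\ensuremath{\conid{Con}}, a second Löb induction shows \ensuremath{\beta_\Domain} is invariant under readdressing: if \ensuremath{\varid{d'}=\sigma\cdot\varid{d}} and \ensuremath{\mu'=\sigma\cdot\mu} for an address bijection \ensuremath{\sigma}, then \ensuremath{\beta_\Domain(\mu')(\varid{d'}) = \beta_\Domain(\mu)(\varid{d})}, because events and shapes agree verbatim and the stored sub-denotations are related again by (an extension of) \ensuremath{\sigma} together with the terminal heap. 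Instantiating the equivariance of the previous paragraph at \ensuremath{\varid{d}=\mathcal{S}_{\mathbf{need}}\denot{\varid{e}}_{\rho_1}}, \ensuremath{\mu=\mu_1} and composing with this invariance yields \ensuremath{\beta_\Domain(\mu_1)(\mathcal{S}_{\mathbf{need}}\denot{\varid{e}}_{\rho_1}) = \beta_\Domain(\mu_2)(\mathcal{S}_{\mathbf{need}}\denot{\varid{e}}_{\rho_2})}, as required. (Alternatively, the two inductions can be fused into a single Löb induction directly on this equation, generalised to arbitrary heap bijections relating \ensuremath{\rho_1,\mu_1} and \ensuremath{\rho_2,\mu_2}.) The main obstacle throughout is the bookkeeping for the growing address bijection in the \ensuremath{\conid{Let}} case within the guarded setting; the remaining cases are the routine equivariance arguments used implicitly elsewhere in \Cref{sec:by-need-soundness}.
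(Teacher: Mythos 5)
Your proposal is correct and follows essentially the same route as the paper, whose proof of this lemma is literally just ``Simple proof by Löb induction and cases on \ensuremath{\varid{e}}'' — your fused single Löb induction over a generalised statement (arbitrary address bijections relating \ensuremath{\varcolor{\rho}_{1},\varcolor{\mu}_{1}} to \ensuremath{\varcolor{\rho}_{2},\varcolor{\mu}_{2}}) is exactly what that one-liner must mean, since the single-swap statement is not directly inductive. You supply the details the paper elides, in particular the correct observation that the bijection must grow in the \ensuremath{\conid{Let}} case because \ensuremath{\varid{nextFree}} need not agree on the two heaps, which is precisely why the equality holds only modulo \ensuremath{\beta_\Domain}.
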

\begin{proof}
Simple proof by Löb induction and cases on \ensuremath{\varid{e}}.
\end{proof}

Readdressing allows us to prove an abstract pendant of the venerable \emph{frame
rule} of separation logic:

\begin{lemma}[Abstract frame rule]
\label{thm:abstract-frame}
If \ensuremath{\varid{adom}\;\varcolor{\rho}\mathbin{⊆}\varid{dom}\;\varcolor{\mu}} and $\ensuremath{\varid{a}} \not∈ \ensuremath{\varid{dom}\;\varcolor{\mu}}$,
then
\[\ensuremath{\beta_\Domain (\varcolor{\mu}) (\mathcal{S}_{\mathbf{need}}\denot{\varid{e}}_{\varcolor{\rho}})\mathrel{=}\beta_\Domain (\varcolor{\mu}[\varid{a}\mapsto\varid{memo}\;\varid{a}\;\varid{d}]) (\mathcal{S}_{\mathbf{need}}\denot{\varid{e}}_{\varcolor{\rho}})}.\]
\end{lemma}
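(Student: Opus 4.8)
The plan is to prove the abstract frame rule \Cref{thm:abstract-frame} by Löb induction on the (guarded) recursive structure of $\beta_\Domain$ together with a case analysis on the expression $\pe$, mirroring the structure of the abstract readdressing lemma \Cref{thm:abstract-readdressing} on which it crucially depends. The key observation that makes the statement plausible is that, since $\pa \not\in \dom(\varcolor{\mu})$ and $\varid{adom}\;\varcolor{\rho} \subseteq \dom(\varcolor{\mu})$, the fresh binding $[\pa \mapsto \varid{memo}\;\pa\;\varid{d}]$ is \emph{unreachable} from $\mathcal{S}_{\mathbf{need}}\denot{\pe}_{\varcolor{\rho}}$: no address occurring in $\varcolor{\rho}$ (or, inductively, in any heap entry reachable from $\varcolor{\rho}$) equals $\pa$, so the extra entry never contributes to the by-name trace that $\beta_\Domain$ folds over. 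Hence the abstracted traces in $\varcolor{\mu}$ and $\varcolor{\mu}[\pa \mapsto \varid{memo}\;\pa\;\varid{d}]$ coincide step by step.

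Concretely, I would first unfold $\beta_\Domain\;(\varcolor{\mu})\;(\mathcal{S}_{\mathbf{need}}\denot{\pe}_{\varcolor{\rho}}) = \alpha_\Domain\;(\varcolor{\mu})\;(\{\mathcal{S}_{\mathbf{need}}\denot{\pe}_{\varcolor{\rho}}\})$ into the trace fold $\beta_\Traces$ applied to the by-need trace $\mathcal{S}_{\mathbf{need}}\denot{\pe}_{\varcolor{\rho}}(\varcolor{\mu})$, and likewise for $\varcolor{\mu}[\pa \mapsto \varid{memo}\;\pa\;\varid{d}]$. Then I proceed by cases on $\pe$. For $\Lam{\px}{\pe'}$ and $\conid{ConApp}\;K\;\many{\px}$, the heap is returned unchanged and the value closure only mentions addresses in $\varid{adom}\;\varcolor{\rho} \subseteq \dom(\varcolor{\mu})$, so the $\beta_\Traces$ cases for $\conid{Fun}$/$\conid{Con}$ produce identical results (the recursive $\alpha_\Domain \circ \{-\}$ components are literally the same). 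For $\conid{Var}\;\px$, the lookup $\varcolor{\rho}\mathop{!}\px = \varid{step}\;(\conid{Look}\;\py)\;(\varid{fetch}\;\pa')$ fetches some $\pa' \in \varid{adom}\;\varcolor{\rho} \subseteq \dom(\varcolor{\mu})$, and since $\pa' \neq \pa$, $\varcolor{\mu}\mathop{!}\pa' = (\varcolor{\mu}[\pa\mapsto\dots])\mathop{!}\pa'$; the subsequent $\varid{memo}\;\pa'$/$\varid{upd}$ updates stay disjoint from $\pa$, so the induction hypothesis (guarded by $\later$, discharged by Löb) applies to the tail. For $\App\;\pe_1\;\px$, $\Case\;\pe_1\;\many{\dots}$ I use the IH on the subexpression trace and on the continuation; for $\Let{\px}{\pe_1}{\pe_2}$, $\varid{bind}$ allocates $\pa'' = \varid{nextFree}\;\varcolor{\mu}$, and I would invoke \Cref{thm:abstract-readdressing} to identify the two fresh addresses picked in the two heaps (which may differ because $\pa \in \dom(\varcolor{\mu}[\pa\mapsto\dots])$ shifts $\varid{nextFree}$), reducing to the IH on $\pe_2$ in the readdressed configuration.

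The main obstacle I anticipate is exactly the $\conid{Let}$ case and its interaction with freshness: because $\dom(\varcolor{\mu}[\pa \mapsto \varid{memo}\;\pa\;\varid{d}])$ strictly contains $\dom(\varcolor{\mu})$, the nondeterministic-looking $\varid{nextFree}$ may return a different address in each run, so the two heaps need not be syntactically equal even up to the frame entry — only equal modulo readdressing. Threading \Cref{thm:abstract-readdressing} through cleanly, and checking that its side conditions ($\pa_1 \in \dom(\varcolor{\mu}_1)$, $\pa_2 \not\in \dom(\varcolor{\mu}_1)$, the renaming of reachable environments) are met at each recursive step, is where the bookkeeping is delicate. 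A secondary subtlety, shared with all proofs in this section, is the informal handling of the $\later$ modality: I would follow the paper's established convention (cf.\ the remark in the proof of \Cref{thm:guarded-fixpoint-abstraction}) of suppressing explicit $\purelater$/$\aplater$ and simply appealing to Löb induction, justified by the totality results of \Cref{sec:totality}. Once the $\conid{Var}$ and $\conid{Let}$ cases are handled, the remaining cases are routine congruences.
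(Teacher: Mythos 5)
Your proposal matches the paper's proof: Löb induction with a case analysis on $\pe$, observing that only the \ensuremath{\conid{Var}} and \ensuremath{\conid{Let}} cases touch the heap, using $\ensuremath{\varid{a}} \not\in \ensuremath{\varid{adom}\;\varcolor{\rho}}$ (together with definability of \ensuremath{\varcolor{\mu}}, which gives \ensuremath{\varid{adom}\;\varcolor{\rho}_{1}\mathbin{⊆}\varid{dom}\;\varcolor{\mu}} for fetched entries) in the \ensuremath{\conid{Var}} case, and invoking the readdressing lemma in the \ensuremath{\conid{Let}} case so the induction hypothesis applies to the extended heaps. Your write-up is in fact more explicit than the paper's two-line argument about why readdressing is needed (the shift in \ensuremath{\varid{nextFree}}), but the route is the same.
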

\begin{proof}
By Löb induction and cases on \ensuremath{\varid{e}}.
Only the cases that access the heap are interesting.
\begin{itemize}
  \item \textbf{Case} \ensuremath{\conid{Var}\;\varid{x}}:
    We never fetch \ensuremath{\varid{a}}, because $\ensuremath{\varid{a}} \not∈ \ensuremath{\varid{adom}\;\varcolor{\rho}}$.
    Furthermore, the environment \ensuremath{\varcolor{\rho}_{1}} of the heap entry \ensuremath{\mathcal{S}_{\mathbf{need}}\denot{\varid{e}_{1}}_{\varcolor{\rho}_{1}}} thus
    fetched satisfies \ensuremath{\varid{adom}\;\varcolor{\rho}_{1}\mathbin{⊆}\varid{dom}\;\varcolor{\mu}} so that we may apply the induction
    hypothesis.
  \item \textbf{Case} \ensuremath{\conid{Let}\;\varid{x}\;\varid{e}_{1}\;\varid{e}_{2}}:
    Follows by the induction hypothesis after readdressing the extended heap
    (\Cref{thm:abstract-readdressing}) so that the induction hypothesis can be applied.
\end{itemize}
\end{proof}

The frame rule in turn is important to show that heap progression preserves the
results of the abstraction function:

\begin{lemma}[Heap progression preserves abstraction]
  \label{thm:heap-progress-mono}
  Let \ensuremath{\widehat{\conid{D}}} be a domain with instances for \ensuremath{\conid{Trace}}, \ensuremath{\conid{Domain}}, \ensuremath{\conid{HasBind}} and
  \ensuremath{\conid{Lat}}, satisfying \textsc{Beta-App}, \textsc{Beta-Sel}, \textsc{ByName-Bind}
  and \textsc{Step-Inc} from \Cref{fig:abstraction-laws}.
  \[ \ensuremath{\varcolor{\mu}_{1}\progressto\varcolor{\mu}_{2}\land\varid{adom}\;\varid{d}\mathbin{⊆}\varid{dom}\;\varcolor{\mu}_{1}\implies\beta_\Domain (\varcolor{\mu}_{2}) (\varid{d})\mathbin{⊑}\beta_\Domain (\varcolor{\mu}_{1}) (\varid{d})}. \]
\end{lemma}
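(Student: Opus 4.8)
\textbf{Proof plan for \Cref{thm:heap-progress-mono}.}
The plan is to proceed by induction on the derivation of $\ensuremath{\varcolor{\mu}_{1}\progressto\varcolor{\mu}_{2}}$, following the structure of \Cref{fig:heap-progression}. The reflexivity case \progresstorefl is immediate, and the transitivity case \progresstotrans follows by composing the two induction hypotheses, using the fact that $\ensuremath{\varid{adom}\;\varid{d}\mathbin{⊆}\varid{dom}\;\varcolor{\mu}_{1}}$ together with the observation that heap progression never shrinks the domain (so $\ensuremath{\varid{dom}\;\varcolor{\mu}_{1}\mathbin{⊆}\varid{dom}\;\varcolor{\mu}_{2}\mathbin{⊆}\varid{dom}\;\varcolor{\mu}_{3}}$, keeping $\ensuremath{\varid{adom}\;\varid{d}}$ in scope throughout) and transitivity of $\mathord{⊑}$. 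So the real work is in the two base cases \progresstoext and \progresstomemo.

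For \progresstoext, where $\ensuremath{\varcolor{\mu}_{2}\mathrel{=}\varcolor{\mu}_{1}[\varid{a}\mapsto\varid{memo}\;\varid{a}\;(\mathcal{S}_{\mathbf{need}}\denot{\varid{e}}_{\varcolor{\rho}})]}$ with $\ensuremath{\varid{a}} \not\in \ensuremath{\varid{dom}\;\varcolor{\mu}_{1}}$, I expect the statement to follow directly — in fact with equality rather than mere $\mathord{⊑}$ — from the abstract frame rule \Cref{thm:abstract-frame}: since $\ensuremath{\varid{adom}\;\varid{d}\mathbin{⊆}\varid{dom}\;\varcolor{\mu}_{1}}$ and $\ensuremath{\varid{a}} \not\in \ensuremath{\varid{dom}\;\varcolor{\mu}_{1}}$, adding the fresh entry does not change $\ensuremath{\beta_\Domain (\wild) (\varid{d})}$. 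Here I would unfold the definability assumption $\ensuremath{{\mathop{\vdash_{\Domain{}}} \varid{d}}}$ to write $\ensuremath{\varid{d}\mathrel{=}\mathcal{S}_{\mathbf{need}}\denot{\varid{e}'}_{\varcolor{\rho}'}}$ so that \Cref{thm:abstract-frame} applies literally.

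The main obstacle is the \progresstomemo case, where $\ensuremath{\varcolor{\mu}_{1}\mathop{!}\varid{a}\mathrel{=}\varid{memo}\;\varid{a}\;(\mathcal{S}_{\mathbf{need}}\denot{\varid{e}}_{\varcolor{\rho}_{1}})}$, $\ensuremath{\varcolor{\mu}_{2}\mathrel{=}\varcolor{\mu}_{3}[\varid{a}\mapsto\varid{memo}\;\varid{a}\;(\mathcal{S}_{\mathbf{need}}\denot{\varid{v}}_{\varcolor{\rho}_{2}})]}$, and $\ensuremath{\later\!\;(\mathcal{S}_{\mathbf{need}}\denot{\varid{e}}_{\varcolor{\rho}_{1}}(\varcolor{\mu}_{1})\mathrel{=}\many{\conid{Step}\;\varid{ev}}\;(\mathcal{S}_{\mathbf{need}}\denot{\varid{v}}_{\varcolor{\rho}_{2}}(\varcolor{\mu}_{3})))}$. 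This is where overwriting a thunk with its (already-reduced) value genuinely changes the denotation attached to $\ensuremath{\varid{a}}$, and the inequality (not equality) is essential. The plan is to do Löb induction on the statement (nested inside, or folded into, the structural induction over $\ensuremath{\progressto}$) and case-analyse the expression $\ensuremath{\varid{e}'}$ with $\ensuremath{\varid{d}\mathrel{=}\mathcal{S}_{\mathbf{need}}\denot{\varid{e}'}_{\varcolor{\rho}'}}$, so that the only interesting case is $\ensuremath{\conid{Var}\;\varid{x}}$ with $\ensuremath{\varcolor{\rho}'\mathop{!}\varid{x}}$ fetching address $\ensuremath{\varid{a}}$. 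In that case, $\ensuremath{\beta_\Domain (\varcolor{\mu}_{1}) (\varid{d})}$ unfolds the $\ensuremath{\varid{memo}}$ around $\ensuremath{\mathcal{S}_{\mathbf{need}}\denot{\varid{e}}_{\varcolor{\rho}_{1}}}$, which by the $\ensuremath{\later}$-premise steps through $\ensuremath{\many{\conid{Step}\;\varid{ev}}}$ to $\ensuremath{\mathcal{S}_{\mathbf{need}}\denot{\varid{v}}_{\varcolor{\rho}_{2}}}$ plus an $\ensuremath{\conid{Upd}}$ step; on the $\ensuremath{\varcolor{\mu}_{2}}$ side we immediately get $\ensuremath{\mathcal{S}_{\mathbf{need}}\denot{\varid{v}}_{\varcolor{\rho}_{2}}}$. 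The gap between the two is exactly a prefix of $\ensuremath{\conid{Step}}$s (the $\ensuremath{\many{\varid{ev}}}$ and the $\ensuremath{\conid{Upd}}$), which \textsc{Step-Inc} from \Cref{fig:abstraction-laws} lets us discard in the $\mathord{⊑}$ direction; the residual comparison of $\ensuremath{\beta_\Domain}$ at the two heaps on the common tail $\ensuremath{\mathcal{S}_{\mathbf{need}}\denot{\varid{v}}_{\varcolor{\rho}_{2}}}$ is handled by the Löb hypothesis, after checking that $\ensuremath{\varcolor{\mu}_{3}\progressto\varcolor{\mu}_{2}}$ (immediate from the shape of $\ensuremath{\varcolor{\mu}_{2}}$ via \progresstoext/\progresstomemo reasoning) and that address domains stay within bounds (using \Cref{thm:eval-progression} applied to the $\ensuremath{\later}$-premise to know $\ensuremath{\varcolor{\mu}_{1}\progressto\varcolor{\mu}_{3}}$, hence $\ensuremath{\varid{adom}}$ of everything reachable is contained in $\ensuremath{\varid{dom}\;\varcolor{\mu}_{3}}$). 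The fiddly bookkeeping will be tracking $\ensuremath{\varid{adom}}$ side-conditions through the unfoldings and making the $\ensuremath{\later}$/Löb discipline precise; I would follow the same "admitted sloppiness about $\ensuremath{\later}$" convention adopted earlier in \Cref{sec:safety-extension} to keep this manageable.
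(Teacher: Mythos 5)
Your plan is essentially the paper's proof with the two nested inductions swapped: the paper runs Löb induction plus a case analysis on the expression defining $\ensuremath{\varid{d}}$ on the outside and performs the induction on the derivation of $μ_1 \progressto μ_2$ only inside the $\ensuremath{\conid{Var}}$ case, whereas you put the derivation induction outermost. The ingredients are the same (definability of $\ensuremath{\varid{d}}$, the frame rule, \textsc{Step-Inc}, Löb induction across the $\later$ guards), and dispatching \progresstoext at top level directly by \Cref{thm:abstract-frame} is, if anything, a little cleaner than the paper's version. However, two points in your \progresstomemo case need repair.

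First, ``the only interesting case is $\ensuremath{\conid{Var}\;\varid{x}}$'' is wrong: the $\ensuremath{\conid{Let}}$ case is the other interesting one, and the paper treats it explicitly. Running $\ensuremath{\varid{d}\mathrel{=}\mathcal{S}_{\mathbf{need}}\denot{\conid{Let}\;\varid{x}\;\varid{e}_{1}\;\varid{e}_{2}}_{\varcolor{\rho}'}}$ on $μ_1$ allocates $\ensuremath{\varid{nextFree}}$ of $μ_1$, while on $μ_2$ it allocates $\ensuremath{\varid{nextFree}}$ of $μ_2$; in the \progresstomemo case these differ because the recorded evaluation may have grown the heap. Before the Löb hypothesis applies you must show that the two \emph{extended} heaps are again related by $\progressto$, which is exactly what the readdressing \Cref{thm:abstract-readdressing} is for. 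Omitting this, the inductive step does not go through.

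Second, your residual comparison invokes the induction hypothesis at $μ_3 \progressto μ_2$, which you call immediate. It is not: $μ_2$ is $μ_3$ with the entry at $\pa$ overwritten, and $\pa ∈ \dom(μ_3)$, so \progresstoext does not apply; \progresstomemo would require re-running the thunk at $\pa$ \emph{starting from $μ_3$} and ending at $μ_3$ again, a determinism/idempotence fact about evaluation on more-evaluated heaps that is not among the rules and is about as hard as the lemma itself. The paper sidesteps this with a direct equality $\beta_\Domain(μ_2)(\ensuremath{\varid{memo}\;\varid{a}\;(\mathcal{S}_{\mathbf{need}}\denot{\varid{v}}_{\varcolor{\rho}_{2}})}) = \beta_\Domain(μ_3)(\ensuremath{\varid{memo}\;\varid{a}\;(\mathcal{S}_{\mathbf{need}}\denot{\varid{v}}_{\varcolor{\rho}_{2}})})$: evaluating the value $\ensuremath{\varid{v}}$ contributes no heap-dependent steps, and both runs end in literally the same heap because the update overwrites the entry at $\pa$, the only address where $μ_2$ and $μ_3$ disagree. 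Substituting that equality for your $μ_3 \progressto μ_2$ step closes the case with \textsc{Step-Inc} alone, as in the paper.
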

\begin{proof}
By Löb induction.
Element \ensuremath{\varid{d}} is definable of the form \ensuremath{\varid{d}\mathrel{=}\mathcal{S}_{\mathbf{need}}\denot{\varid{e}}_{\varcolor{\rho}}} for definable \ensuremath{\varcolor{\rho}}.
Proceed by cases on \ensuremath{\varid{e}}. Only the \ensuremath{\conid{Var}} and \ensuremath{\conid{Let}} cases are interesting.
\begin{itemize}
  \item \textbf{Case} \ensuremath{\conid{Let}\;\varid{x}\;\varid{e}_{1}\;\varid{e}_{2}}:
    We need to readdress the extended heaps with \Cref{thm:abstract-readdressing} so that
    \ensuremath{\varcolor{\mu}_{1}[\varid{a}_{1}\mapsto\varid{memo}\;\varid{a}_{1}\;\varid{d}_{1}]\progressto\varcolor{\mu}_{2}[\varid{a}_{1}\mapsto\varid{memo}\;\varid{a}_{1}\;\varid{d}_{1}]} is preserved, in which case the goal follows by applying the
    induction hypothesis.
  \item \textbf{Case} \ensuremath{\conid{Var}\;\varid{x}}:
    Let us assume that \ensuremath{\varcolor{\mu}_{1}\progressto\varcolor{\mu}_{2}} and \ensuremath{\varid{adom}\;\varid{d}\mathbin{⊆}\varid{dom}\;\varcolor{\mu}_{1}}.
    We get \ensuremath{\varid{d}\mathrel{=}\varid{step}\;(\conid{Look}\;\varid{y})\;(\varid{fetch}\;\varid{a})} for some \ensuremath{\varid{y}} and \ensuremath{\varid{a}}.
    Furthermore, let us abbreviate \ensuremath{\varid{memo}\;\varid{a}\;(\mathcal{S}_{\mathbf{need}}\denot{\varcolor{e}_i}_{\varcolor{\rho}_i})\triangleq\varcolor{\mu}_i\mathop{!}\varid{a}}.
    The goal is to show
    \[
      \ensuremath{\varid{step}\;(\conid{Look}\;\varid{y})\;(\beta_\Domain (\varcolor{\mu}_{2}) (\varid{memo}\;\varid{a}\;(\mathcal{S}_{\mathbf{need}}\denot{\varid{e}_{2}}_{\varcolor{\rho}_{2}})))\mathbin{⊑}\varid{step}\;(\conid{Look}\;\varid{y})\;(\beta_\Domain (\varcolor{\mu}_{1}) (\varid{memo}\;\varid{a}\;(\mathcal{S}_{\mathbf{need}}\denot{\varid{e}_{1}}_{\varcolor{\rho}_{1}})))}.
    \]
    Monotonicity allows us to drop the \ensuremath{\varid{step}\;(\conid{Look}\;\varid{y})} context
    \[
      \ensuremath{\later\!\;(\beta_\Domain (\varcolor{\mu}_{2}) (\varid{memo}\;\varid{a}\;(\mathcal{S}_{\mathbf{need}}\denot{\varid{e}_{2}}_{\varcolor{\rho}_{2}}))\mathbin{⊑}\beta_\Domain (\varcolor{\mu}_{1}) (\varid{memo}\;\varid{a}\;(\mathcal{S}_{\mathbf{need}}\denot{\varid{e}_{1}}_{\varcolor{\rho}_{1}})))}.
    \]
    Now we proceed by induction on \ensuremath{\varcolor{\mu}_{1}\progressto\varcolor{\mu}_{2}}, which we only use to prove correct the
    reflexive and transitive closure in \progresstorefl and \progresstotrans.
    \begin{itemize}
      \item \textbf{Case} \progresstorefl:
        Then \ensuremath{\varcolor{\mu}_{1}\mathrel{=}\varcolor{\mu}_{2}} and hence \ensuremath{\beta_\Domain (\varcolor{\mu}_{1})\mathrel{=}\beta_\Domain (\varcolor{\mu}_{2})}.
      \item \textbf{Case} \progresstotrans:
        Apply the induction hypothesis to the sub-derivations and apply transitivity
        of \ensuremath{\mathbin{⊑}}.

      \item \textbf{Case} $\inferrule*[vcenter,left=\progresstoext]{\ensuremath{\varid{a}_{1}} \not∈ \ensuremath{\varid{dom}\;\varcolor{\mu}_{1}} \quad \ensuremath{\varid{adom}\;\varcolor{\rho}\mathbin{⊆}\varid{dom}\;\varcolor{\mu}_{1}\mathbin{∪}\{\varid{a}_{1}\}}}{\ensuremath{\varcolor{\mu}_{1}\progressto\varcolor{\mu}_{1}[\varid{a}_{1}\mapsto\varid{memo}\;\varid{a}_{1}\;(\mathcal{S}_{\mathbf{need}}\denot{\varid{e}}_{\varcolor{\rho}})]}}$:\\
        We get to refine \ensuremath{\varcolor{\mu}_{2}\mathrel{=}\varcolor{\mu}_{1}[\varid{a}_{1}\mapsto\varid{memo}\;\varid{a}_{1}\;(\mathcal{S}_{\mathbf{need}}\denot{\varid{e}}_{\varcolor{\rho}})]}.
        Since \ensuremath{\varid{a}\in \varid{dom}\;\varcolor{\mu}_{1}},
        we have $\ensuremath{\varid{a}_{1}} \not= \ensuremath{\varid{a}}$
        and thus \ensuremath{\varcolor{\mu}_{1}\mathop{!}\varid{a}\mathrel{=}\varcolor{\mu}_{2}\mathop{!}\varid{a}}, thus \ensuremath{\varid{e}_{1}\mathrel{=}\varid{e}_{2}}, \ensuremath{\varcolor{\rho}_{1}\mathrel{=}\varcolor{\rho}_{2}}.
        We can exploit monotonicity of \ensuremath{\later\!} and simplify the goal to
        \begin{hscode}\SaveRestoreHook
\column{B}{@{}>{\hspre}l<{\hspost}@{}}%
\column{11}{@{}>{\hspre}l<{\hspost}@{}}%
\column{E}{@{}>{\hspre}l<{\hspost}@{}}%
\>[11]{}\beta_\Domain (\varcolor{\mu}_{1}[\varid{a}_{1}\mapsto\varid{memo}\;\varid{a}_{1}\;(\mathcal{S}_{\mathbf{need}}\denot{\varid{e}}_{\varcolor{\rho}})]) (\varid{memo}\;\varid{a}\;(\mathcal{S}_{\mathbf{need}}\denot{\varid{e}_{1}}_{\varcolor{\rho}_{1}}))\mathbin{⊑}\beta_\Domain (\varcolor{\mu}_{1}) (\varid{memo}\;\varid{a}\;(\mathcal{S}_{\mathbf{need}}\denot{\varid{e}_{1}}_{\varcolor{\rho}_{1}})){}\<[E]%
\ColumnHook
\end{hscode}\resethooks
        This follows by applying the abstract frame rule (\Cref{thm:abstract-frame}).

      \item \textbf{Case} $\inferrule*[vcenter,left=\progresstomemo]{\ensuremath{\varcolor{\mu}_{1}\mathop{!}\varid{a}_{1}\mathrel{=}\varid{memo}\;\varid{a}_{1}\;(\mathcal{S}_{\mathbf{need}}\denot{\varid{e}}_{\varcolor{\rho}_{3}})} \quad \ensuremath{\later\!\;(\mathcal{S}_{\mathbf{need}}\denot{\varid{e}}_{\varcolor{\rho}_{3}}(\varcolor{\mu}_{1})\mathrel{=}\many{\conid{Step}\;\varid{ev}}\;(\mathcal{S}_{\mathbf{need}}\denot{\varid{v}}_{\varcolor{\rho}_{4}}(\varcolor{\mu}_{3})))}}{\ensuremath{\varcolor{\mu}_{1}\progressto\varcolor{\mu}_{3}[\varid{a}_{1}\mapsto\varid{memo}\;\varid{a}_{1}\;(\mathcal{S}_{\mathbf{need}}\denot{\varid{v}}_{\varcolor{\rho}_{4}})]}}$:\\
        We get to refine \ensuremath{\varcolor{\mu}_{2}\mathrel{=}\varcolor{\mu}_{3}[\varid{a}_{1}\mapsto\varid{memo}\;\varid{a}_{1}\;(\mathcal{S}_{\mathbf{need}}\denot{\varid{v}}_{\varcolor{\rho}_{2}})]}.
        When $\ensuremath{\varid{a}_{1}} \not= \ensuremath{\varid{a}}$, we have \ensuremath{\varcolor{\mu}_{1}\mathop{!}\varid{a}\mathrel{=}\varcolor{\mu}_{2}\mathop{!}\varid{a}} and the goal follows as in the \progresstoext case.
        Otherwise, \ensuremath{\varid{a}\mathrel{=}\varid{a}_{1}}, \ensuremath{\varid{e}_{1}\mathrel{=}\varid{e}}, \ensuremath{\varcolor{\rho}_{3}\mathrel{=}\varcolor{\rho}_{1}}, \ensuremath{\varcolor{\rho}_{4}\mathrel{=}\varcolor{\rho}_{2}}, \ensuremath{\varid{e}_{2}\mathrel{=}\varid{v}}.

        The goal can be simplified to
        \ensuremath{\later\!\;(\beta_\Domain (\varcolor{\mu}_{2}) (\varid{memo}\;\varid{a}\;(\mathcal{S}_{\mathbf{need}}\denot{\varid{v}}_{\varcolor{\rho}_{2}}))\mathbin{⊑}\beta_\Domain (\varcolor{\mu}_{1}) (\varid{memo}\;\varid{a}\;(\mathcal{S}_{\mathbf{need}}\denot{\varid{e}_{1}}_{\varcolor{\rho}_{1}})))}.
        We reason under \ensuremath{\later\!} as follows
        \begin{hscode}\SaveRestoreHook
\column{B}{@{}>{\hspre}l<{\hspost}@{}}%
\column{9}{@{}>{\hspre}l<{\hspost}@{}}%
\column{11}{@{}>{\hspre}l<{\hspost}@{}}%
\column{13}{@{}>{\hspre}l<{\hspost}@{}}%
\column{E}{@{}>{\hspre}l<{\hspost}@{}}%
\>[11]{}\beta_\Domain (\varcolor{\mu}_{2}) (\varid{memo}\;\varid{a}\;(\mathcal{S}_{\mathbf{need}}\denot{\varid{v}}_{\varcolor{\rho}_{2}})){}\<[E]%
\\
\>[9]{}\mathrel{=}\mbox{\commentbegin  \ensuremath{\varcolor{\mu}_{2}\mathop{!}\varid{a}\mathrel{=}\varid{memo}\;\varid{a}\;(\mathcal{S}_{\mathbf{need}}\denot{\varid{v}}_{\varcolor{\rho}_{2}})}  \commentend}{}\<[E]%
\\
\>[9]{}\hsindent{2}{}\<[11]%
\>[11]{}\beta_\Traces (\conid{Step}\;\conid{Update}\;(\mathcal{S}_{\mathbf{need}}\denot{\varid{v}}_{\varcolor{\rho}_{2}}(\varcolor{\mu}_{2}))){}\<[E]%
\\
\>[9]{}\mathrel{=}\mbox{\commentbegin  \ensuremath{\varcolor{\mu}_{2}\mathrel{=}\varcolor{\mu}_{3}[\varid{a}\mapsto\varid{memo}\;\varid{a}\;(\mathcal{S}_{\mathbf{need}}\denot{\varid{v}}_{\varcolor{\rho}_{2}})]}  \commentend}{}\<[E]%
\\
\>[9]{}\hsindent{2}{}\<[11]%
\>[11]{}\beta_\Domain (\varcolor{\mu}_{3}) (\varid{memo}\;\varid{a}\;(\mathcal{S}_{\mathbf{need}}\denot{\varid{v}}_{\varcolor{\rho}_{2}})){}\<[E]%
\\
\>[9]{}\mathbin{⊑}{}\<[13]%
\>[13]{}\mbox{\commentbegin  \ensuremath{\mathcal{S}_{\mathbf{need}}\denot{\varid{e}_{1}}_{\varcolor{\rho}_{1}}(\varcolor{\mu}_{1})\mathrel{=}\many{\conid{Step}\;\varid{ev}}\;(\mathcal{S}_{\mathbf{need}}\denot{\varid{v}}_{\varcolor{\rho}_{2}}(\varcolor{\mu}_{3}))}, assumption \textsc{Step-Inc}  \commentend}{}\<[E]%
\\
\>[9]{}\hsindent{2}{}\<[11]%
\>[11]{}\beta_\Domain (\varcolor{\mu}_{1}) (\varid{memo}\;\varid{a}\;(\mathcal{S}_{\mathbf{need}}\denot{\varid{e}_{1}}_{\varcolor{\rho}_{1}})){}\<[E]%
\ColumnHook
\end{hscode}\resethooks
    \end{itemize}
\end{itemize}
\end{proof}

The preceding lemma corresponds to the $\UpdateT$ step of the preservation
\Cref{thm:preserve-absent} where we (and \citet{Sergey:14}) resorted to
hand-waving.
Here, we hand-wave no more!

In order to prove the main soundness \Cref{thm:abstract-by-need},
we need two more auxiliary lemmas about \ensuremath{(\bind )} and environment
access.

\begin{lemma}[By-need bind]
\label{thm:by-need-bind}
It is \ensuremath{\beta_\Traces ((\varid{d}\bind \varid{f})\;\varcolor{\mu}_{1})\mathbin{⊑}\widehat{\varid{f}}\;\widehat{\varid{d}}} if
\begin{enumerate}
  \item \ensuremath{\beta_\Traces (\varid{d}\;\varcolor{\mu}_{1})\mathbin{⊑}\widehat{\varid{d}}}, and
  \item for all events \ensuremath{\varid{ev}} and elements \ensuremath{\widehat{\varid{d'}}}, \ensuremath{\widehat{\varid{step}}\;\varid{ev}\;(\widehat{\varid{f}}\;\widehat{\varid{d'}})\mathbin{⊑}\widehat{\varid{f}}\;(\widehat{\varid{step}}\;\varid{ev}\;\widehat{\varid{d'}})}, and
  \item for all values \ensuremath{\varid{v}} and heaps \ensuremath{\varcolor{\mu}_{2}} such that \ensuremath{\varcolor{\mu}_{1}\progressto\varcolor{\mu}_{2}}, \ensuremath{\beta_\Traces (\varid{f}\;\varid{v}\;\varcolor{\mu}_{2})\mathbin{⊑}\widehat{\varid{f}}\;(\beta_\Traces (\conid{Ret}\;(\varid{v},\varcolor{\mu}_{2})))}.
\end{enumerate}
\end{lemma}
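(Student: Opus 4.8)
The plan is to follow the proof of the by-name analogue \Cref{thm:by-name-bind} almost verbatim, the only genuinely new wrinkle being the heap-indexing of denotations and of the trace abstraction, together with the fact that the continuation hypothesis~(3) only furnishes a bound at heaps $\varcolor{\mu}_2$ with $\varcolor{\mu}_1 \progressto \varcolor{\mu}_2$. By definition of the $\conid{ByNeed}$ monad, $(\varid{d}\bind\varid{f})\;\varcolor{\mu}_1$ unfolds to $(\varid{d}\;\varcolor{\mu}_1)\bind\varid{g}$, where $\bind$ is now the (productive) monadic bind of $\conid{T}$ and $\varid{g}\,(\varid{v},\varcolor{\mu}')\triangleq\varid{f}\;\varid{v}\;\varcolor{\mu}'$. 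First I would reformulate the goal purely over the $\conid{T}$-trace $\varcolor{\tau}\triangleq\varid{d}\;\varcolor{\mu}_1$: it suffices to prove $\beta_\Traces(\varcolor{\tau}\bind\varid{g}) ⊑ \widehat{\varid{f}}\;\widehat{\varid{d}}$ whenever (1') $\beta_\Traces(\varcolor{\tau}) ⊑ \widehat{\varid{d}}$, (2) the step-commutation law $\widehat{\varid{step}}\;\varid{ev}\;(\widehat{\varid{f}}\;\widehat{\varid{d'}}) ⊑ \widehat{\varid{f}}\;(\widehat{\varid{step}}\;\varid{ev}\;\widehat{\varid{d'}})$ holds, and (3') in case $\varcolor{\tau}$ is finite with final value $(\varid{v},\varcolor{\mu}_2)$ we have $\beta_\Traces(\varid{g}\,(\varid{v},\varcolor{\mu}_2)) ⊑ \widehat{\varid{f}}\;(\beta_\Traces(\conid{Ret}\,(\varid{v},\varcolor{\mu}_2)))$.

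The lemma as stated then follows from this reformulation: (1) and (2) are (1') and (2), and to obtain (3') from (3) it is enough to note that if $\varcolor{\tau}=\varid{d}\;\varcolor{\mu}_1$ is finite with final value $(\varid{v},\varcolor{\mu}_2)$ then $\varcolor{\mu}_1\progressto\varcolor{\mu}_2$. Here I would invoke the standing definability assumption — $\varid{d}=\mathcal{S}_{\mathbf{need}}\denot{\varid{e}}_{\varcolor{\rho}}$ for definable $\varcolor{\rho}$ — together with \Cref{thm:eval-progression}: a terminating by-need run ends in $\mathcal{S}_{\mathbf{need}}\denot{\varid{v}}_{\varcolor{\rho}_{2}}(\varcolor{\mu}_{2})$ for a value expression $\varid{v}$, and \Cref{thm:eval-progression} yields $\varcolor{\mu}_1\progressto\varcolor{\mu}_2$; since (3) quantifies over all such $\varcolor{\mu}_2$, it specialises to (3').

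The reformulated statement I would prove by Löb induction on $\varcolor{\tau}$, mirroring \Cref{thm:by-name-bind} step for step. If $\varcolor{\tau}=\conid{Step}\;\varid{ev}\;\varcolor{\tau'}$, then $\varcolor{\tau}\bind\varid{g}=\conid{Step}\;\varid{ev}\;(\varcolor{\tau'}\bind\varid{g})$ and $\beta_\Traces(\varcolor{\tau}\bind\varid{g})=\widehat{\varid{step}}\;\varid{ev}\;(\beta_\Traces(\varcolor{\tau'}\bind\varid{g}))$; taking $\widehat{\varid{d'}}\triangleq\beta_\Traces(\varcolor{\tau'})$ and observing that $\varcolor{\tau'}$ shares the final value of $\varcolor{\tau}$ — so (3') transports to $\varcolor{\tau'}$ unchanged, and crucially the heap $\varcolor{\mu}_1$ appearing there is not disturbed — the induction hypothesis gives $\beta_\Traces(\varcolor{\tau'}\bind\varid{g}) ⊑ \widehat{\varid{f}}\;\widehat{\varid{d'}}$ later, whence $\widehat{\varid{step}}\;\varid{ev}\;(\widehat{\varid{f}}\;\widehat{\varid{d'}}) ⊑ \widehat{\varid{f}}\;(\widehat{\varid{step}}\;\varid{ev}\;\widehat{\varid{d'}}) = \widehat{\varid{f}}\;(\beta_\Traces(\varcolor{\tau})) ⊑ \widehat{\varid{f}}\;\widehat{\varid{d}}$ by (2), (1'), and monotonicity of $\widehat{\varid{step}}$ and $\widehat{\varid{f}}$. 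If $\varcolor{\tau}=\conid{Ret}\,(\varid{v},\varcolor{\mu}_2)$, then $\varcolor{\tau}\bind\varid{g}=\varid{g}\,(\varid{v},\varcolor{\mu}_2)$ and $\beta_\Traces(\varid{g}\,(\varid{v},\varcolor{\mu}_2)) ⊑ \widehat{\varid{f}}\;(\beta_\Traces(\conid{Ret}\,(\varid{v},\varcolor{\mu}_2))) = \widehat{\varid{f}}\;(\beta_\Traces(\varcolor{\tau})) ⊑ \widehat{\varid{f}}\;\widehat{\varid{d}}$ by (3'), (1') and monotonicity. As elsewhere in this development, I would be deliberately casual about where the $\later$ modality sits (justified by the totality results of \Cref{sec:totality}) and read $\beta_\Traces$ on infinite traces through the safety extension of \Cref{sec:safety-extension}.

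The main obstacle is purely bookkeeping: getting the heap indices right, i.e. making precise that peeling a $\conid{Step}$ off $\varcolor{\tau}=\varid{d}\;\varcolor{\mu}_1$ leaves the heap $\varcolor{\mu}_1$ against which hypotheses~(3)/(3') are phrased untouched — which is exactly why I would route the argument through the $\conid{T}$-trace reformulation instead of trying to exhibit the tail as $\varid{d'}\;\varcolor{\mu}_1$ for some contrived $\varid{d'}$ — and threading \Cref{thm:eval-progression} in at the single point where hypothesis~(3) needs to be unlocked. Everything else is structurally identical to \Cref{thm:by-name-bind}.
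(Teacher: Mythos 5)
Your proof is correct, and it reorganises the argument in a way that differs from the paper's in one substantive respect. The paper also proceeds by L\"ob induction, but it peels a step at the \emph{ByNeed} level, writing $d\,μ_1 = \mathit{Step}\ \mathit{ev}\ (d'\,μ_1')$ for some tail denotation $d'$ and progressed heap $μ_1'$, invoking \Cref{thm:eval-progression} to get $μ_1 \progressto μ_1'$, and then re-establishing hypothesis~(3) for the tail via transitivity of $\progressto$ --- indeed the paper remarks explicitly that the universal quantification over $μ_2$ in~(3) exists precisely so that the induction hypothesis can be applied at the progressed heap $μ_1'$. You instead descend to the underlying $\mathbb{T}$-trace, where the heap appears only in the terminal $\mathit{Ret}$, so your continuation hypothesis~(3') is stable under peeling a $\mathit{Step}$ and the quantified~(3) is consumed exactly once, up front, when \Cref{thm:eval-progression} identifies the trace's actual final heap as a progression of $μ_1$. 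This buys you a cleaner induction that avoids exhibiting the tail as $d'\,μ_1'$ (which, as you note, is somewhat contrived, and is also the point where the paper's appeal to \Cref{thm:eval-progression} for a \emph{single} step rather than a complete evaluation is loosest); what it costs is the extra appeal to the characterisation of terminating definable traces as ending in $\mathcal{S}_{\mathbf{need}}\denot{\pv}_{ρ_2}(μ_2)$, which the paper's formulation does not need at this point. Both proofs share the same level of informality about $\later$ and about monotonicity of $\widehat{f}$ in the final step, so no gap there relative to the paper's own standard.
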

\begin{proof}
By Löb induction.

If \ensuremath{\varid{d}\;\varcolor{\mu}_{1}\mathrel{=}\conid{Step}\;\varid{ev}\;(\varid{d'}\;\varcolor{\mu}_{1}')}, define \ensuremath{\widehat{\varid{d'}}\triangleq\beta_\Traces (\varid{d'}\;\varcolor{\mu}_{1}')} and note that
\ensuremath{\varcolor{\mu}_{1}\progressto\varcolor{\mu}_{1}'} for the definable \ensuremath{\varid{d}} as in \Cref{thm:eval-progression} (we will
always instantiate the original \ensuremath{\varid{d}} to \ensuremath{\mathcal{S}_{\mathbf{need}}\denot{\varid{e}}_{\varcolor{\rho}}}).
We get
\begin{hscode}\SaveRestoreHook
\column{B}{@{}>{\hspre}c<{\hspost}@{}}%
\column{BE}{@{}l@{}}%
\column{3}{@{}>{\hspre}l<{\hspost}@{}}%
\column{4}{@{}>{\hspre}l<{\hspost}@{}}%
\column{E}{@{}>{\hspre}l<{\hspost}@{}}%
\>[3]{}\beta_\Traces ((\varid{d}\bind \varid{f})\;\varcolor{\mu}_{1})\mathrel{=}\beta_\Traces (\conid{Step}\;\varid{ev}\;((\varid{d'}\bind \varid{f})\;\varcolor{\mu}_{1}'))\mathrel{=}\widehat{\varid{step}}\;\varid{ev}\;(\beta_\Traces ((\varid{d'}\bind \varid{f})\;\varcolor{\mu}_{1}')){}\<[E]%
\\
\>[B]{}\mathbin{⊑}{}\<[BE]%
\>[4]{}\mbox{\commentbegin  Induction hypothesis at \ensuremath{\beta_\Traces (\varid{d'}\;\varcolor{\mu}_{1}')\mathrel{=}\widehat{\varid{d'}}}, Monotonicity of \ensuremath{\widehat{\varid{step}}}  \commentend}{}\<[E]%
\\
\>[B]{}\hsindent{3}{}\<[3]%
\>[3]{}\widehat{\varid{step}}\;\varid{ev}\;(\widehat{\varid{f}}\;(\beta_\Traces (\varid{d'}\;\varcolor{\mu}_{1}'))){}\<[E]%
\\
\>[B]{}\mathbin{⊑}{}\<[BE]%
\>[4]{}\mbox{\commentbegin  Assumption (2)  \commentend}{}\<[E]%
\\
\>[B]{}\hsindent{3}{}\<[3]%
\>[3]{}\widehat{\varid{f}}\;(\widehat{\varid{step}}\;\varid{ev}\;(\beta_\Traces (\varid{d'}\;\varcolor{\mu}_{1}')))\mathrel{=}\widehat{\varid{f}}\;(\beta_\Traces (\varid{d}\;\varcolor{\mu}_{1})){}\<[E]%
\\
\>[B]{}\mathbin{⊑}{}\<[BE]%
\>[4]{}\mbox{\commentbegin  Assumption (1)  \commentend}{}\<[E]%
\\
\>[B]{}\hsindent{3}{}\<[3]%
\>[3]{}\widehat{\varid{f}}\;\widehat{\varid{d}}{}\<[E]%
\ColumnHook
\end{hscode}\resethooks
Note that in order to apply the induction hypothesis at \ensuremath{\varcolor{\mu}_{1}'} above, we need
refine assumption (3) to apply at any \ensuremath{\varcolor{\mu}_{2}} such that \ensuremath{\varcolor{\mu}_{1}'\progressto\varcolor{\mu}_{2}}.
This would not be possible without generalising for any such \ensuremath{\varcolor{\mu}_{2}} in the first
place.

Otherwise, \ensuremath{\varid{d}\mathrel{=}\varid{return}\;\varid{v}} for some \ensuremath{\varid{v}\mathbin{::}\conid{Value}}.
\begin{hscode}\SaveRestoreHook
\column{B}{@{}>{\hspre}c<{\hspost}@{}}%
\column{BE}{@{}l@{}}%
\column{3}{@{}>{\hspre}l<{\hspost}@{}}%
\column{4}{@{}>{\hspre}l<{\hspost}@{}}%
\column{E}{@{}>{\hspre}l<{\hspost}@{}}%
\>[3]{}\beta_\Traces ((\varid{return}\;\varid{v}\bind \varid{f})\;\varcolor{\mu}_{1})\mathrel{=}\beta_\Traces (\varid{f}\;\varid{v}\;\varcolor{\mu}_{1}){}\<[E]%
\\
\>[B]{}\mathbin{⊑}{}\<[BE]%
\>[4]{}\mbox{\commentbegin  Assumption (3)  \commentend}{}\<[E]%
\\
\>[B]{}\hsindent{3}{}\<[3]%
\>[3]{}\widehat{\varid{f}}\;(\beta_\Traces (\conid{Ret}\;\varid{v},\varcolor{\mu}_{1}))\mathrel{=}\widehat{\varid{f}}\;(\beta_\Traces (\varid{d}\;\varcolor{\mu}_{1})){}\<[E]%
\\
\>[B]{}\mathbin{⊑}{}\<[BE]%
\>[4]{}\mbox{\commentbegin  Assumption (1)  \commentend}{}\<[E]%
\\
\>[B]{}\hsindent{3}{}\<[3]%
\>[3]{}\widehat{\varid{f}}\;\widehat{\varid{d}}{}\<[E]%
\ColumnHook
\end{hscode}\resethooks
\end{proof}

\begin{lemma}[By-need environment unrolling]
\label{thm:by-need-env-unroll}
Let \ensuremath{\widehat{\conid{D}}} a domain with instances for \ensuremath{\conid{Trace}}, \ensuremath{\conid{Domain}}, \ensuremath{\conid{HasBind}} and \ensuremath{\conid{Lat}},
satisfying $\textsc{Update}$ from \Cref{fig:abstraction-laws},
and let \ensuremath{\varcolor{\mu}_{1}\triangleq\varcolor{\mu}[\varid{a}\mapsto\varid{memo}\;\varid{a}\;(\mathcal{S}_{\mathbf{need}}\denot{\varid{e}_{1}}_{\varcolor{\rho}_{1}}(\wild))],\varcolor{\rho}_{1}\triangleq\varcolor{\rho}[\varid{x}\mapsto\varid{step}\;(\conid{Look}\;\varid{x})\;(\varid{fetch}\;\varid{a})]}. \\
If \ensuremath{\later\!\;(\keyword{\forall}\!\! \hsforall \;\varid{e}\;\varcolor{\rho}\;\varcolor{\mu}\hsdot{\circ }{.\ }\beta_\Traces (\mathcal{S}_{\mathbf{need}}\denot{\varid{e}}_{\varcolor{\rho}}(\varcolor{\mu}))\mathbin{⊑}(\mathcal{S}_{\widehat{\conid{D}}}\denot{\varid{e}}_{\beta_\Domain (\varcolor{\mu})\mathbin{\lhd}\varcolor{\rho}}))},
then \ensuremath{\beta_\Domain (\varcolor{\mu}_{1}) (\varcolor{\rho}_{1}\mathop{!}\varid{x})\mathbin{⊑}\varid{step}\;(\conid{Look}\;\varid{x})\;(\mathcal{S}_{\widehat{\conid{D}}}\denot{\varid{e}_{1}}_{\beta_\Domain (\varcolor{\mu}_{1})\mathbin{\lhd}\varcolor{\rho}_{1}})}.
\end{lemma}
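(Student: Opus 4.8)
The plan is to unfold the definitions on the left-hand side and then reduce, \emph{under the $\later$ carried by the hypothesis}, to a single application of the by-need bind lemma (\Cref{thm:by-need-bind}). By definition of $\alpha_{\Domain}$ in \Cref{fig:abstract-name-need}, $\beta_{\Domain}(\mu_1)(\rho_1 \mathbin{!} x) = \beta_{\Traces}((\rho_1 \mathbin{!} x)\,\mu_1)$; since $\rho_1 \mathbin{!} x = \mathit{step}\,(\conid{Look}\ x)\,(\mathit{fetch}\ a)$ and the $\conid{ByNeed}$ instance of $\conid{Trace}$ forwards $\mathit{step}$ pointwise over heaps, this equals $\widehat{\mathit{step}}\,(\conid{Look}\ x)\,(\beta_{\Traces}(\mathit{fetch}\ a\,\mu_1))$. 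By monotonicity of $\widehat{\mathit{step}}$ it therefore suffices to show, under $\later$, that $\beta_{\Traces}(\mathit{fetch}\ a\,\mu_1) \sqsubseteq \mathcal{S}_{\widehat{D}}\denot{e_1}_{\beta_{\Domain}(\mu_1)\mathbin{\lhd}\rho_1}$. Unfolding $\mathit{fetch}\ a\,\mu_1 = (\mu_1 \mathbin{!} a)\,\mu_1 = (\mathit{memo}\ a\ (\mathcal{S}_{\mathbf{need}}\denot{e_1}_{\rho_1}))\,\mu_1$ and then $\mathit{memo}\ a\ d = d \bind (\lambda v.\ \conid{ByNeed}(\mathit{upd}\ v))$ exposes a monadic bind to which \Cref{thm:by-need-bind} applies, instantiated with $\widehat{d} \triangleq \mathcal{S}_{\widehat{D}}\denot{e_1}_{\beta_{\Domain}(\mu_1)\mathbin{\lhd}\rho_1}$ and $\widehat{f} \triangleq \mathit{id}$.

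It then remains to discharge the three premises of \Cref{thm:by-need-bind}. Premise~(1), $\beta_{\Traces}(\mathcal{S}_{\mathbf{need}}\denot{e_1}_{\rho_1}(\mu_1)) \sqsubseteq \widehat{d}$, is exactly an instance of the $\later$-hypothesis that we carry. Premise~(2) is trivial for $\widehat{f} = \mathit{id}$. Premise~(3) asks that, for every value $v$ and every $\mu_2$ with $\mu_1 \progressto \mu_2$, $\beta_{\Traces}(\mathit{upd}\ v\,\mu_2) \sqsubseteq \beta_{\Traces}(\conid{Ret}\,(v,\mu_2))$. When $v = \conid{Stuck}$ this is a definitional equality; otherwise $\mathit{upd}\ v\,\mu_2 = \conid{Step}\ \conid{Upd}\ (\conid{Ret}\,(v,\mu_2[a \mapsto \mathit{memo}\ a\ (\mathit{return}\ v)]))$, and I would first use the $\textsc{Update}$ abstraction law of \Cref{fig:abstraction-laws} to drop the leading $\widehat{\mathit{step}}\ \conid{Upd}$, and then observe that $\mu_2 \progressto \mu_2[a \mapsto \mathit{memo}\ a\ (\mathit{return}\ v)]$ by rule \progresstomemo---using that $v$ is definable as $\mathcal{S}_{\mathbf{need}}\denot{v'}_{\rho'}$ for a syntactic value $v'$ by determinism of $\mathcal{S}_{\mathbf{need}}\denot{\wild}_{\wild}$ together with \Cref{thm:eval-progression}---so that \Cref{thm:heap-progress-mono} gives $\beta_{\Traces}(\conid{Ret}\,(v,\mu_2[a \mapsto \ldots])) \sqsubseteq \beta_{\Traces}(\conid{Ret}\,(v,\mu_2))$ on the field denotations embedded in $v$. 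Collecting these, \Cref{thm:by-need-bind} yields $\beta_{\Traces}((\mathit{memo}\ a\ (\mathcal{S}_{\mathbf{need}}\denot{e_1}_{\rho_1}))\,\mu_1) \sqsubseteq \widehat{d}$, and re-wrapping with $\widehat{\mathit{step}}\,(\conid{Look}\ x)$ gives the claim.

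The main obstacle is premise~(3) in the non-$\conid{Stuck}$ case: being precise about how $\beta_{\Traces}(\conid{Ret}\,(v,\mu))$ depends on the heap $\mu$ through the $\conid{D}_{\mathbf{ne}}$ fields stored inside $\conid{Fun}$ and $\conid{Con}$ values, and checking that the side conditions of \progresstomemo are genuinely met---in particular that the value delivered by $\mathit{memo}$ coincides with the memoised re-evaluation of the thunk at $a$, which is exactly the subtlety the paper flags in the heap-update case of \Cref{thm:preserve-absent}. Threading the hypothesis through $\later$ and through \Cref{thm:by-need-bind} (itself proved by Löb induction) needs care but follows the conventions of \Cref{sec:safety-extension}; everything else is routine unfolding of $\mathit{fetch}$, $\mathit{memo}$, $\mathit{upd}$, and the $\conid{ByNeed}$ $\mathit{step}$ and $\bind$ instances.
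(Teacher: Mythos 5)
Your proposal follows essentially the same route as the paper's proof: unfold $\beta_\Domain$ and $\mathit{fetch}$ to expose $\mathit{memo}$ as a monadic bind, apply \Cref{thm:by-need-bind} under $\ensuremath{\varid{step}\;(\conid{Look}\;\varid{x})}$ with $\widehat{f} = \mathit{id}$, and discharge the three premises exactly as the paper does — the $\later$-hypothesis for premise (1), reflexivity for (2), and for (3) the case split on \ensuremath{\conid{Stuck}}, the \textsc{Update} law, heap progression via \progresstorefl/\progresstomemo, and \Cref{thm:heap-progress-mono}. The subtlety you flag in the non-\ensuremath{\conid{Stuck}} case is handled by precisely these ingredients in the paper, so there is no gap.
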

\begin{proof} $ $
\begin{hscode}\SaveRestoreHook
\column{B}{@{}>{\hspre}c<{\hspost}@{}}%
\column{BE}{@{}l@{}}%
\column{3}{@{}>{\hspre}l<{\hspost}@{}}%
\column{5}{@{}>{\hspre}l<{\hspost}@{}}%
\column{E}{@{}>{\hspre}l<{\hspost}@{}}%
\>[3]{}\beta_\Domain (\varcolor{\mu}_{1}) (\varcolor{\rho}_{1}\mathop{!}\varid{x}){}\<[E]%
\\
\>[B]{}\mathrel{=}{}\<[BE]%
\>[5]{}\mbox{\commentbegin  \ensuremath{{\mathop{\vdash_{\Environments}} \varcolor{\rho}_{1}}}, \ensuremath{{\mathop{\vdash_{\Heaps}} \varcolor{\mu}_{1}}}, unfold \ensuremath{\beta_\Domain} and \ensuremath{\varid{fetch}\;\varid{a}}  \commentend}{}\<[E]%
\\
\>[B]{}\hsindent{3}{}\<[3]%
\>[3]{}\varid{step}\;(\conid{Look}\;\varid{x})\;(\beta_\Traces (\varid{memo}\;\varid{a}\;(\mathcal{S}_{\mathbf{need}}\denot{\varid{e}_{1}}_{\varcolor{\rho}_{1}}(\wild))\;\varcolor{\mu}_{1})){}\<[E]%
\\
\>[B]{}\mathrel{=}{}\<[BE]%
\>[5]{}\mbox{\commentbegin  Unfold \ensuremath{\varid{memo}\;\varid{a}}  \commentend}{}\<[E]%
\\
\>[B]{}\hsindent{3}{}\<[3]%
\>[3]{}\varid{step}\;(\conid{Look}\;\varid{x})\;(\beta_\Traces ((\mathcal{S}_{\mathbf{need}}\denot{\varid{e}_{1}}_{\varcolor{\rho}_{1}}(\wild)\bind \varid{upd})\;\varcolor{\mu}_{1})){}\<[E]%
\\
\>[B]{}\mathbin{⊑}{}\<[BE]%
\>[5]{}\mbox{\commentbegin  Apply \Cref{thm:by-need-bind}; see below  \commentend}{}\<[E]%
\\
\>[B]{}\hsindent{3}{}\<[3]%
\>[3]{}\varid{step}\;(\conid{Look}\;\varid{x})\;(\mathcal{S}_{\widehat{\conid{D}}}\denot{\varid{e}_{1}}_{\beta_\Domain (\varcolor{\mu}_{1})\mathbin{\lhd}\varcolor{\rho}_{1}}){}\<[E]%
\ColumnHook
\end{hscode}\resethooks

In the last step, we apply \Cref{thm:by-need-bind} under \ensuremath{\varid{step}\;(\conid{Look}\;\varid{x})}, which
yields three subgoals (under $\later$):
\begin{itemize}
  \item \ensuremath{\beta_\Traces (\mathcal{S}_{\mathbf{need}}\denot{\varid{e}_{1}}_{\varcolor{\rho}_{1}}(\varcolor{\mu}_{1}))\mathbin{⊑}\mathcal{S}_{\widehat{\conid{D}}}\denot{\varid{e}_{1}}_{\beta_\Domain (\varcolor{\mu}_{1})\mathbin{\lhd}\varcolor{\rho}_{1}}}:
    By assumption.
  \item \ensuremath{\keyword{\forall}\!\! \hsforall \;\varid{ev}\;\widehat{\varid{d'}}\hsdot{\circ }{.\ }\widehat{\varid{step}}\;\varid{ev}\;(\varid{id}\;\widehat{\varid{d'}})\mathbin{⊑}\varid{id}\;(\widehat{\varid{step}}\;\varid{ev}\;\widehat{\varid{d'}})}:
    By reflexivity.
  \item \ensuremath{\keyword{\forall}\!\! \hsforall \;\varid{v}\;\varcolor{\mu}_{2}\hsdot{\circ }{.\ }\varcolor{\mu}_{1}\progressto\varcolor{\mu}_{2}\implies\beta_\Traces (\varid{upd}\;\varid{v}\;\varcolor{\mu}_{2})\mathbin{⊑}\varid{id}\;(\beta_\Traces (\conid{Ret}\;(\varid{v},\varcolor{\mu}_{2})))}:
    If \ensuremath{\varid{v}\mathrel{=}\conid{Stuck}}, then \ensuremath{\varid{upd}\;\varid{v}\;\varcolor{\mu}_{2}\mathrel{=}\conid{Ret}\;(\varid{v},\varcolor{\mu}_{2})} and the goal follows by reflexivity.
    Otherwise, \ensuremath{\varid{upd}\;\varid{v}\;\varcolor{\mu}_{2}\mathrel{=}\conid{Step}\;\conid{Update}\;(\conid{Ret}\;(\varid{v},\varcolor{\mu}_{2}[\varid{a}\mapsto\varid{memo}\;\varid{a}\;(\varid{return}\;\varid{v})]))}.
    By $\textsc{Update}$, it suffices to show \ensuremath{\beta_\Traces (\conid{Ret}\;(\varid{v},\varcolor{\mu}_{2}[\varid{a}\mapsto\varid{memo}\;\varid{a}\;(\varid{return}\;\varid{v})]))\mathbin{⊑}\beta_\Traces (\conid{Ret}\;(\varid{v},\varcolor{\mu}_{2}))}.
    It is \ensuremath{\varcolor{\mu}_{2}\progressto\varcolor{\mu}_{2}[\varid{a}\mapsto\varid{memo}\;\varid{a}\;(\varid{return}\;\varid{v})]} (either by \progresstorefl or \progresstomemo)
    and the goal follows by \Cref{thm:heap-progress-mono}.
\end{itemize}
\end{proof}

Finally, we can prove \Cref{thm:abstract-by-need}:
\end{toappendix}

\begin{figure}
\[\ruleform{\begin{array}{c}
  α_{\mathcal{S}} : (\ensuremath{(\conid{Name}\mathbin{:\rightharpoonup}\conid{D}_{\mathbf{ne}})\to \conid{D}_{\mathbf{ne}}}) \to (\ensuremath{(\conid{Name}\mathbin{:\rightharpoonup}\widehat{\conid{D}})\to \widehat{\conid{D}}})
  \\
  α_{\Environments} : \ensuremath{\conid{Heap}_{\mathbf{ne}}} \times \pow{(\ensuremath{\conid{Name}\mathbin{:\rightharpoonup}\conid{D}_{\mathbf{ne}}})} \rightleftarrows (\ensuremath{\conid{Name}\mathbin{:\rightharpoonup}\widehat{\conid{D}}}) : γ_{\Environments}
  \\
  α_{\Domain{}} : \ensuremath{\conid{Heap}_{\mathbf{ne}}} \to \pow{\ensuremath{\conid{D}_{\mathbf{ne}}}} \rightleftarrows \ensuremath{\widehat{\conid{D}}} : γ_{\Domain{}}
  \\
  α_\Traces : \pow{\ensuremath{\conid{T}\;(\conid{Value}_{\mathbf{ne}},\conid{Heap}_{\mathbf{ne}})}} \rightleftarrows \ensuremath{\widehat{\conid{D}}} : γ_\Traces
  \qquad
  β_\Traces : \ensuremath{\conid{T}\;(\conid{Value}_{\mathbf{ne}},\conid{Heap}_{\mathbf{ne}})} \to \ensuremath{\widehat{\conid{D}}}
  \qquad
\end{array}}\]
\belowdisplayskip=0pt
\arraycolsep=2pt
\[\begin{array}{lcl}
α_{\mathcal{S}}(S)(\widehat{ρ}) & = & α_\Traces(\{\  S(ρ)(μ) \mid (μ,ρ) ∈ γ_{\Environments}(\widehat{ρ}) \ \}) \\
α_{\Environments}(E)(x) & = & α_\Traces(\{\  ρ(x)(μ) \mid (μ,ρ) ∈ E \ \}) \\
α_{\Domain{}}(μ)(D) & = & α_\Traces(\{\  d(μ) \mid d ∈ D \ \}) \\
α_\Traces(T) & = & \Lub \{\ β_\Traces(τ) \mid τ ∈ T \ \} \\
\\[-0.75em]
β_\Traces(\ensuremath{\varcolor{\tau}}) & = & \begin{cases}
  \ensuremath{\varid{step}\;\varid{e}\;(\beta_\Traces (\varcolor{\tau}'))} & \text{if \ensuremath{\varcolor{\tau}\mathrel{=}\conid{Step}\;\varid{e}\;\varcolor{\tau}'}} \\
  \ensuremath{\varid{stuck}}                         & \text{if \ensuremath{\varcolor{\tau}\mathrel{=}\conid{Ret}\;(\conid{Stuck},\varcolor{\mu})}} \\
  \ensuremath{\varid{fun}\;(\alpha_\Domain (\varcolor{\mu})\hsdot{\circ }{.\ }{\varid{f}}^{*}\hsdot{\circ }{.\ }\gamma_\Domain (\varcolor{\mu}))} & \text{if \ensuremath{\varcolor{\tau}\mathrel{=}\conid{Ret}\;(\conid{Fun}\;\varid{f},\varcolor{\mu})}} \\
  \ensuremath{\varid{con}\;\varid{k}\;(\varid{map}\;(\alpha_\Domain (\varcolor{\mu})\hsdot{\circ }{.\ }\{\wild\})\;\varid{ds})} & \text{if \ensuremath{\varcolor{\tau}\mathrel{=}\conid{Ret}\;(\conid{Con}\;\varid{k}\;\varid{ds},\varcolor{\mu})}} \\
  \end{cases} \\
\end{array}\]
\caption{Galois connection $α_{\mathcal{S}}$ for by-need abstraction derived from \ensuremath{\conid{Trace}}, \ensuremath{\conid{Domain}} and \ensuremath{\conid{Lat}} instances on \ensuremath{\widehat{\conid{D}}}}
\label{fig:abstract-name-need}
\end{figure}

In this section we prove and apply a generic abstract interpretation theorem
of the form
\[
  α_{\mathcal{S}}(\ensuremath{\mathcal{S}_{\mathbf{need}}\denot{\varid{e}}}) ⊑ \ensuremath{\mathcal{S}_{\widehat{\conid{D}}}\denot{\varid{e}}}.
\]
This statement can be read as follows:
For an expression \ensuremath{\varid{e}}, the \emph{static analysis} \ensuremath{\mathcal{S}_{\widehat{\conid{D}}}\denot{\varid{e}}}
on the right-hand side \emph{overapproximates} ($⊒$) a property of the by-need
\emph{semantics} \ensuremath{\mathcal{S}_{\mathbf{need}}\denot{\varid{e}}} on the left-hand side.
The abstraction function $α_{\mathcal{S}}$, given in
\Cref{fig:abstract-name-need}, defines the semantic property of interest in
terms of the abstract domain \ensuremath{\widehat{\conid{D}}} of \ensuremath{\mathcal{S}_{\widehat{\conid{D}}}\denot{\varid{e}}\;\varcolor{\rho}}, which is
short for \ensuremath{\mathcal{S}\denot{\varid{e}}_{\varcolor{\rho}}\mathbin{::}\widehat{\conid{D}}}.
That is: the type class instances on \ensuremath{\widehat{\conid{D}}} determine $α_{\mathcal{S}}$, and
hence the semantic property that is soundly abstracted by \ensuremath{\mathcal{S}_{\widehat{\conid{D}}}\denot{\varid{e}}_{\varcolor{\rho}}}.

We will instantiate the theorem at \ensuremath{\concolor{\mathsf{D_U}}} in order to prove that usage analysis
\ensuremath{\mathcal{S}_{\mathbf{usage}}\denot{\varid{e}}_{\varcolor{\rho}}\mathrel{=}\mathcal{S}_{\concolor{\mathsf{D_U}}}\denot{\varid{e}}_{\varcolor{\rho}}} infers absence, just as absence analysis in
\Cref{sec:problem}.
This proof will be much simpler than the proof for \Cref{thm:absence-correct},
because the complicated preservation proof is reusably contained in
the abstract interpretation theorem.

\begin{figure}
  \belowdisplayskip=0pt
  \[\begin{array}{cc}
    \multicolumn{2}{c}{\inferrule[\textsc{Mono}]
      {}
      {\text{\ensuremath{\varid{step}}, \ensuremath{\varid{stuck}}, \ensuremath{\varid{fun}}, \ensuremath{\varid{apply}}, \ensuremath{\varid{con}}, \ensuremath{\varid{select}}, \ensuremath{\varid{bind}} monotone}}} \\
    \\[-0.5em]
    \inferrule[\textsc{Step-App}]{}{%
      \ensuremath{\varid{step}\;\varid{ev}\;(\varid{apply}\;\varid{d}\;\varid{a})\mathbin{⊑}\varid{apply}\;(\varid{step}\;\varid{ev}\;\varid{d})\;\varid{a}}}
    &
    \inferrule[\textsc{Step-Sel}]{}{%
      \ensuremath{\varid{step}\;\varid{ev}\;(\varid{select}\;\varid{d}\;\varid{alts})\mathbin{⊑}\varid{select}\;(\varid{step}\;\varid{ev}\;\varid{d})\;\varid{alts}}} \\
    \\[-0.5em]
    \inferrule[\textsc{Stuck-App}]
      {\ensuremath{\varid{d}\in \{\varid{stuck},\varid{con}\;\varid{k}\;\varid{ds}\}}}
      {\ensuremath{\varid{stuck}\mathbin{⊑}\varid{apply}\;\varid{d}\;\varid{a}}}
    &
    \inferrule[\textsc{Stuck-Sel}]
      {\ensuremath{\varid{d}\in \{\varid{stuck},\varid{fun}\;\varid{x}\;\varid{f}\}\mathbin{∪}\{\varid{con}\;\varid{k}\;\varid{ds}\mid \varid{k}\not\in \varid{dom}\;\varid{alts}\}}}
      {\ensuremath{\varid{stuck}\mathbin{⊑}\varid{select}\;\varid{d}\;\varid{alts}}} \\
    \\[-0.5em]
    \inferrule[\textsc{Beta-App}]
      {\ensuremath{\varid{f}} \text{ polymorphic} \\ \ensuremath{\varid{x}}\text{ fresh}}
      {\ensuremath{\varid{f}\;\varid{a}\mathbin{⊑}\varid{apply}\;(\varid{fun}\;\varid{x}\;\varid{f})\;\varid{a}}}
    &
    \inferrule[\textsc{Beta-Sel}]
      {\ensuremath{\varid{alts}} \text{ polymorphic} \\ \ensuremath{\varid{k}\in \varid{dom}\;\varid{alts}}}
      {\ensuremath{(\varid{alts}\mathop{!}\varid{k})\;\varid{ds}\mathbin{⊑}\varid{select}\;(\varid{con}\;\varid{k}\;\varid{ds})\;\varid{alts}}} \\
    \\[-0.5em]
    \inferrule[\textsc{ByName-Bind}]
      {\ensuremath{\varid{rhs},\varid{body}}\text{ polymorphic}}
      {\ensuremath{\varid{body}\;(\varid{lfp}\;\varid{rhs})\mathbin{⊑}\varid{bind}\;\varid{rhs}\;\varid{body}}}
    &
    \fcolorbox{lightgray}{white}{$\begin{array}{c}
      \inferrule[\textsc{Step-Inc}]{}{\ensuremath{\varid{d}\mathbin{⊑}\varid{step}\;\varid{ev}\;\varid{d}}}
      \qquad
      \inferrule[\textsc{Update}]{}{\ensuremath{\varid{step}\;\conid{Upd}\;\varid{d}\mathrel{=}\varid{d}}}
    \end{array}$}
  \end{array}\]
  \caption{By-name and \fcolorbox{lightgray}{white}{by-need} abstraction laws for type class instances of abstract domain \ensuremath{\widehat{\conid{D}}}}
  \label{fig:abstraction-laws}
\end{figure}

\subsection{A Reusable Abstract By-Need Interpretation Theorem}

In this subsection, we explain and prove \Cref{thm:abstract-by-need} for
abstract by-need interpretation, which we will apply to prove usage analysis
sound in \Cref{sec:usage-sound}.
The theorem corresponds to the following derived inference rule, referring to
the \emph{abstraction laws} in \Cref{fig:abstraction-laws} by name:
\[\begin{array}{c}
  \inferrule{%
    \textsc{Mono} \\ \textsc{Step-App} \\ \textsc{Step-Sel} \\ \textsc{Stuck-App} \\
    \textsc{Stuck-Sel} \\ \textsc{Beta-App} \\ \textsc{Beta-Sel} \\ \textsc{ByName-Bind} \\
    \textsc{Step-Inc} \\ \textsc{Update}
  }{%
  α_{\mathcal{S}}(\ensuremath{\mathcal{S}_{\mathbf{need}}\denot{\varid{e}}}) ⊑ \ensuremath{\mathcal{S}_{\widehat{\conid{D}}}\denot{\varid{e}}}
  }
\end{array}\]
\noindent
In other words: prove the abstraction laws for an abstract domain \ensuremath{\widehat{\conid{D}}} of
your choosing (such as \ensuremath{\concolor{\mathsf{D_U}}}) and we give you a proof of sound abstract by-need
interpretation for the static analysis \ensuremath{\mathcal{S}_{\widehat{\conid{D}}}\denot{\wild}}!

Note that \emph{we} get to determine the abstraction function $α_{\mathcal{S}}$ based
on the \ensuremath{\conid{Trace}}, \ensuremath{\conid{Domain}} and \ensuremath{\conid{Lat}} instance on \emph{your} \ensuremath{\widehat{\conid{D}}}.
\Cref{fig:abstract-name-need} defines how $α_{\mathcal{S}}$ is thus derived.

Let us calculate \ensuremath{\alpha_\mathcal{S}} for the closed example expression
$\pe \triangleq \Let{i}{(\Lam{y}{\Lam{x}{x}})~i}{i}$:
\begin{align}
    & \ensuremath{\alpha_\mathcal{S} (\mathcal{S}_{\mathbf{need}}\denot{( \Let{i}{(\Lam{y}{\Lam{x}{x}})~i}{i} )}) (\varcolor{\varepsilon})} \notag \\
={} & \ensuremath{\beta_\Traces (\mathcal{S}_{\mathbf{need}}\denot{( \Let{i}{(\Lam{y}{\Lam{x}{x}})~i}{i} )}_{\varcolor{\varepsilon}}(\varcolor{\varepsilon}))} \label{eqn:abs-ex1} \\
={} & \ensuremath{\beta_\Traces}(
\LetIT\xhookrightarrow{\hspace{1.1ex}}\LookupT(i)\xhookrightarrow{\hspace{1.1ex}}\AppIT\xhookrightarrow{\hspace{1.1ex}}\AppET\xhookrightarrow{\hspace{1.1ex}}\UpdateT\xhookrightarrow{\hspace{1.1ex}}\langle (\lambda,[0\!\! \mapsto \!\! \wild])\rangle 
) \label{eqn:abs-ex2} \\
={} & \textstyle\ensuremath{\varid{step}\;\conid{Let}_{1}\mathbin{\$}\varid{step}\;(\conid{Look}\;\text{\ttfamily \char34 i\char34})\mathbin{\$}\mathbin{...}\mathbin{\$}\varid{fun}\;(\lambda \widehat{\varid{d}}\to \Lub\{\beta_\Traces ( \AppET \smallstep \varid{d}([0\!\!↦\!\!\wild])  )\mid \varid{d}\in \gamma_\Domain ([0\!\!↦\!\!\wild] ) (\widehat{\varid{d}})\})} \notag \\
⊑{} & \textstyle\ensuremath{\varid{step}\;\conid{Let}_{1}\mathbin{\$}\varid{step}\;(\conid{Look}\;\text{\ttfamily \char34 i\char34})\mathbin{\$}\mathbin{...}\mathbin{\$}\varid{fun}\;(\lambda \widehat{\varid{d}}\to \varid{step}\;\conid{App}_{2}\;\widehat{\varid{d}})} \label{eqn:abs-ex3} \\
={} & \ensuremath{\langle [\varid{i}\mapsto\concolor{\mathsf{U_1}}], \concolor{\mathsf{U_1}} \argcons \conid{Rep}\;\concolor{\mathsf{U_\omega}} \rangle\mathbin{::}\concolor{\mathsf{D_U}}} \label{eqn:abs-ex4} \\
={} & \ensuremath{\mathcal{S}_{\mathbf{usage}}\denot{( \Let{i}{(\Lam{y}{\Lam{x}{x}})~i}{i} )}_{\varcolor{\varepsilon}}} \notag
\end{align}
\noindent
In (\ref{eqn:abs-ex1}), $α_{\mathcal{S}}(\ensuremath{\mathcal{S}_{\mathbf{need}}\denot{\varid{e}}})(\ensuremath{\varcolor{\varepsilon}})$ simplifies to \ensuremath{\beta_\Traces (\mathcal{S}_{\mathbf{need}}\denot{\varid{e}}_{\varcolor{\varepsilon}}(\varcolor{\varepsilon}))}.
Function \ensuremath{\beta_\Traces} then folds the by-need trace (\ref{eqn:abs-ex2}) into an abstract
domain element in \ensuremath{\widehat{\conid{D}}}.
It does so by eliminating every \ensuremath{\conid{Step}\;\varid{ev}} in the trace with a call to
\ensuremath{\varid{step}\;\varid{ev}}, and every concrete \ensuremath{\conid{Value}} at the end of the trace with a
call to the corresponding \ensuremath{\conid{Domain}} method, following the structure of
types as in \citet{Backhouse:04}.
Since \ensuremath{\widehat{\conid{D}}} has a \ensuremath{\conid{Lat}} instance, \ensuremath{\beta_\Traces} is a \emph{representation
function}~\citep[Section 4.3]{Nielson:99}, giving rise to Galois connections
$α_\Traces \rightleftarrows γ_\Traces$ and
$α_\Domain(μ) \rightleftarrows γ_\Domain(μ)$.
This implies that $α_\Domain(μ) \circ γ_\Domain(μ) ⊑ \mathit{id}$,
justifying the approximation step $(⊑)$ in (\ref{eqn:abs-ex3}).
For the concrete example, we instantiate \ensuremath{\widehat{\conid{D}}} to \ensuremath{\concolor{\mathsf{D_U}}} in step
(\ref{eqn:abs-ex4}) to assert that the resulting usage type indeed
coincides with the result of \ensuremath{\mathcal{S}_{\mathbf{usage}}\denot{\wild}}, as predicted by the abstract
interpretation theorem.

The abstraction function \ensuremath{\alpha_\Domain} for by-need elements \ensuremath{\varid{d}} is a bit unusual because
it is \emph{indexed} by a heap to give meaning to addresses referenced by \ensuremath{\varid{d}}.
Our framework is carefully set up in a way that \ensuremath{\alpha_\Domain (\varcolor{\mu})} is preserved when \ensuremath{\varcolor{\mu}}
is modified by memoisation ``in the future'', reminiscent of
\citeauthor{Kripke:63}'s possible worlds.
For similar reasons, the abstraction function for environments pairs up
definable by-need environments \ensuremath{\varcolor{\rho}}, the entries of which are of the form (\ensuremath{\varid{step}\;\conid{Look}\;\varid{y})\;(\varid{fetch}\;\varid{a})}, with heaps \ensuremath{\varcolor{\mu}}.

Thanks to fixing $α_{\mathcal{S}}$, we can prove the following abstraction theorem,
corresponding to the inference rule at the begin of this subsection:

\begin{theoremrep}[Abstract By-need Interpretation]
\label{thm:abstract-by-need}
Let \ensuremath{\varid{e}} be an expression, \ensuremath{\widehat{\conid{D}}} a domain with instances for \ensuremath{\conid{Trace}}, \ensuremath{\conid{Domain}}, \ensuremath{\conid{HasBind}} and
\ensuremath{\conid{Lat}}, and let $α_{\mathcal{S}}$ be the abstraction function from \Cref{fig:abstract-name-need}.
If the abstraction laws in \Cref{fig:abstraction-laws} hold,
then \ensuremath{\mathcal{S}_{\widehat{\conid{D}}}\denot{\wild}} is an abstract interpreter that is sound \wrt $α_{\mathcal{S}}$,
that is,
\[
  α_{\mathcal{S}}(\ensuremath{\mathcal{S}_{\mathbf{need}}\denot{\varid{e}}}) ⊑ \ensuremath{\mathcal{S}_{\widehat{\conid{D}}}\denot{\varid{e}}}.
\]
\end{theoremrep}
\begin{proof}
We simplify our proof obligation to the single-trace case:
\begin{hscode}\SaveRestoreHook
\column{B}{@{}>{\hspre}l<{\hspost}@{}}%
\column{3}{@{}>{\hspre}c<{\hspost}@{}}%
\column{3E}{@{}l@{}}%
\column{5}{@{}>{\hspre}l<{\hspost}@{}}%
\column{9}{@{}>{\hspre}l<{\hspost}@{}}%
\column{E}{@{}>{\hspre}l<{\hspost}@{}}%
\>[5]{}\keyword{\forall}\!\! \hsforall \;\varid{e}\hsdot{\circ }{.\ }\alpha_\mathcal{S} (\mathcal{S}_{\mathbf{need}}\denot{\varid{e}})\mathbin{⊑}\mathcal{S}_{\widehat{\conid{D}}}\denot{\varid{e}}{}\<[E]%
\\
\>[3]{}\Longleftrightarrow{}\<[3E]%
\>[9]{}\mbox{\commentbegin  Unfold \ensuremath{\alpha_\mathcal{S}}, \ensuremath{\alpha_\Traces}  \commentend}{}\<[E]%
\\
\>[3]{}\hsindent{2}{}\<[5]%
\>[5]{}\keyword{\forall}\!\! \hsforall \;\varid{e}\;\widehat{\varcolor{\rho}}\hsdot{\circ }{.\ }\Lub\{\beta_\Traces (\mathcal{S}_{\mathbf{need}}\denot{\varid{e}}_{\varcolor{\rho}}(\varcolor{\mu}))\mid (\varcolor{\rho},\varcolor{\mu})\in \gamma_\Environments (\widehat{\varcolor{\rho}})\}\mathbin{⊑}(\mathcal{S}_{\widehat{\conid{D}}}\denot{\varid{e}}_{\widehat{\varcolor{\rho}}}){}\<[E]%
\\
\>[3]{}\Longleftrightarrow{}\<[3E]%
\>[9]{}\mbox{\commentbegin  \ensuremath{(\varcolor{\rho},\varcolor{\mu})\in \gamma_\Environments (\widehat{\varcolor{\rho}})\Longleftrightarrow\alpha_\Environments (\{\varcolor{\rho},\varcolor{\mu}\})\mathbin{⊑}\widehat{\varcolor{\rho}}}, unfold \ensuremath{\alpha_\Environments}, refold \ensuremath{\beta_\Domain}  \commentend}{}\<[E]%
\\
\>[3]{}\hsindent{2}{}\<[5]%
\>[5]{}\keyword{\forall}\!\! \hsforall \;\varid{e}\;\varcolor{\rho}\;\varcolor{\mu}\hsdot{\circ }{.\ }\beta_\Traces (\mathcal{S}_{\mathbf{need}}\denot{\varid{e}}_{\varcolor{\rho}}(\varcolor{\mu}))\mathbin{⊑}(\mathcal{S}_{\widehat{\conid{D}}}\denot{\varid{e}}_{\beta_\Domain (\varcolor{\mu})\mathbin{\lhd}\varcolor{\rho}}){}\<[E]%
\ColumnHook
\end{hscode}\resethooks
where \ensuremath{\beta_\Traces\triangleq\alpha_\Traces\hsdot{\circ }{.\ }\{\wild\}} and \ensuremath{\beta_\Domain (\varcolor{\mu})\triangleq\alpha_\Domain (\varcolor{\mu})\hsdot{\circ }{.\ }\{\wild\}} are the representation
functions corresponding to \ensuremath{\alpha_\Traces} and \ensuremath{\alpha_\Domain}.
We proceed by Löb induction and cases over \ensuremath{\varid{e}}.
\begin{itemize}
  \item \textbf{Case} \ensuremath{\conid{Var}\;\varid{x}}:
    The case $\ensuremath{\varid{x}} \not∈ \ensuremath{\varid{dom}\;\varcolor{\rho}}$ follows by reflexivity.
    Otherwise, let \ensuremath{\varcolor{\rho}\mathop{!}\varid{x}\mathrel{=}\conid{Step}\;(\conid{Look}\;\varid{y})\;(\varid{fetch}\;\varid{a})}.
    \begin{hscode}\SaveRestoreHook
\column{B}{@{}>{\hspre}l<{\hspost}@{}}%
\column{5}{@{}>{\hspre}c<{\hspost}@{}}%
\column{5E}{@{}l@{}}%
\column{9}{@{}>{\hspre}l<{\hspost}@{}}%
\column{E}{@{}>{\hspre}l<{\hspost}@{}}%
\>[9]{}\beta_\Traces ((\varcolor{\rho}\mathop{!}\varid{x})\;\varcolor{\mu}){}\<[E]%
\\
\>[5]{}\mathbin{⊑}{}\<[5E]%
\>[9]{}\mbox{\commentbegin  \Cref{thm:by-need-env-unroll}  \commentend}{}\<[E]%
\\
\>[9]{}\varid{step}\;(\conid{Look}\;\varid{y})\;(\mathcal{S}_{\widehat{\conid{D}}}\denot{\varid{e}_{1}}_{\beta_\Domain (\varcolor{\mu})\mathbin{\lhd}\varcolor{\rho}_{1}}){}\<[E]%
\\
\>[5]{}\mathrel{=}{}\<[5E]%
\>[9]{}\mbox{\commentbegin  Refold \ensuremath{\beta_\Domain}  \commentend}{}\<[E]%
\\
\>[9]{}\beta_\Domain (\varcolor{\mu}) (\varcolor{\rho}\mathop{!}\varid{x}){}\<[E]%
\ColumnHook
\end{hscode}\resethooks

  \item \textbf{Case} \ensuremath{\conid{Lam}\;\varid{x}\;\varid{body}}:
    \begin{hscode}\SaveRestoreHook
\column{B}{@{}>{\hspre}l<{\hspost}@{}}%
\column{5}{@{}>{\hspre}c<{\hspost}@{}}%
\column{5E}{@{}l@{}}%
\column{9}{@{}>{\hspre}l<{\hspost}@{}}%
\column{E}{@{}>{\hspre}l<{\hspost}@{}}%
\>[9]{}\beta_\Traces (\mathcal{S}_{\mathbf{need}}\denot{\conid{Lam}\;\varid{x}\;\varid{body}}_{\varcolor{\rho}}(\varcolor{\mu})){}\<[E]%
\\
\>[5]{}\mathrel{=}{}\<[5E]%
\>[9]{}\mbox{\commentbegin  Unfold \ensuremath{\mathcal{S}_{\mathbf{need}}\denot{\wild}_{\wild}}, \ensuremath{\beta_\Traces}  \commentend}{}\<[E]%
\\
\>[9]{}\varid{fun}\;(\lambda \widehat{\varid{d}}\to \Lub\{\varid{step}\;\conid{App}_{2}\;(\beta_\Traces (\mathcal{S}_{\mathbf{need}}\denot{\varid{body}}_{\varcolor{\rho}[\varid{x}\mapsto\varid{d}]}(\varcolor{\mu})))\mid \varid{d}\in \gamma_\Domain (\varcolor{\mu})\;\widehat{\varid{d}}\}){}\<[E]%
\\
\>[5]{}\mathbin{⊑}{}\<[5E]%
\>[9]{}\mbox{\commentbegin  Induction hypothesis  \commentend}{}\<[E]%
\\
\>[9]{}\varid{fun}\;(\lambda \widehat{\varid{d}}\to \Lub\{\varid{step}\;\conid{App}_{2}\;(\mathcal{S}_{\widehat{\conid{D}}}\denot{\varid{body}}_{\beta_\Domain (\varcolor{\mu})\mathbin{\lhd}\varcolor{\rho}[\varid{x}\mapsto\varid{d}]})\mid \varid{d}\in \gamma_\Domain (\varcolor{\mu})\;\widehat{\varid{d}}\}){}\<[E]%
\\
\>[5]{}\mathbin{⊑}{}\<[5E]%
\>[9]{}\mbox{\commentbegin  Least upper bound / \ensuremath{\alpha_\Domain (\varcolor{\mu})\hsdot{\circ }{.\ }\gamma_\Domain (\varcolor{\mu})\mathbin{⊑}\varid{id}}  \commentend}{}\<[E]%
\\
\>[9]{}\varid{fun}\;(\lambda \widehat{\varid{d}}\to \varid{step}\;\conid{App}_{2}\;(\mathcal{S}_{\widehat{\conid{D}}}\denot{\varid{body}}_{(\beta_\Domain (\varcolor{\mu})\mathbin{\lhd}\varcolor{\rho})[\varid{x}\mapsto\widehat{\varid{d}}]})){}\<[E]%
\\
\>[5]{}\mathrel{=}{}\<[5E]%
\>[9]{}\mbox{\commentbegin  Refold \ensuremath{\mathcal{S}_{\widehat{\conid{D}}}\denot{\wild}_{\wild}}  \commentend}{}\<[E]%
\\
\>[9]{}\mathcal{S}_{\widehat{\conid{D}}}\denot{\conid{Lam}\;\varid{x}\;\varid{body}}_{\beta_\Domain (\varcolor{\mu})\mathbin{\lhd}\varcolor{\rho}}{}\<[E]%
\ColumnHook
\end{hscode}\resethooks

  \item \textbf{Case} \ensuremath{\conid{ConApp}\;\varid{k}\;\varid{xs}}:
    \begin{hscode}\SaveRestoreHook
\column{B}{@{}>{\hspre}l<{\hspost}@{}}%
\column{5}{@{}>{\hspre}c<{\hspost}@{}}%
\column{5E}{@{}l@{}}%
\column{9}{@{}>{\hspre}l<{\hspost}@{}}%
\column{E}{@{}>{\hspre}l<{\hspost}@{}}%
\>[9]{}\beta_\Traces (\mathcal{S}_{\mathbf{need}}\denot{\conid{ConApp}\;\varid{k}\;\varid{xs}}_{\varcolor{\rho}}(\varcolor{\mu})){}\<[E]%
\\
\>[5]{}\mathrel{=}{}\<[5E]%
\>[9]{}\mbox{\commentbegin  Unfold \ensuremath{\mathcal{S}_{\mathbf{need}}\denot{\wild}_{\wild}}, \ensuremath{\beta_\Traces}  \commentend}{}\<[E]%
\\
\>[9]{}\varid{con}\;\varid{k}\;(\varid{map}\;((\beta_\Domain (\varcolor{\mu})\mathbin{\lhd}\varcolor{\rho})\mathop{!})\;\varid{xs}){}\<[E]%
\\
\>[5]{}\mathrel{=}{}\<[5E]%
\>[9]{}\mbox{\commentbegin  Refold \ensuremath{\mathcal{S}_{\widehat{\conid{D}}}\denot{\wild}_{\wild}}  \commentend}{}\<[E]%
\\
\>[9]{}\mathcal{S}_{\widehat{\conid{D}}}\denot{\conid{Lam}\;\varid{x}\;\varid{body}}_{\beta_\Domain (\varcolor{\mu})\mathbin{\lhd}\varcolor{\rho}}{}\<[E]%
\ColumnHook
\end{hscode}\resethooks

  \item \textbf{Case} \ensuremath{\conid{App}\;\varid{e}\;\varid{x}}:
    Very similar to the \ensuremath{\varid{apply}} case in \Cref{thm:abstract-by-name}.
    There is one exception: We must apply \Cref{thm:heap-progress-mono}
    to argument denotations.

    The \ensuremath{\varid{stuck}} case is simple.
    Otherwise, we have
    \begin{hscode}\SaveRestoreHook
\column{B}{@{}>{\hspre}l<{\hspost}@{}}%
\column{5}{@{}>{\hspre}c<{\hspost}@{}}%
\column{5E}{@{}l@{}}%
\column{7}{@{}>{\hspre}l<{\hspost}@{}}%
\column{9}{@{}>{\hspre}l<{\hspost}@{}}%
\column{E}{@{}>{\hspre}l<{\hspost}@{}}%
\>[7]{}\beta_\Traces (\mathcal{S}_{\mathbf{need}}\denot{\conid{App}\;\varid{e}\;\varid{x}}_{\varcolor{\rho}}(\varcolor{\mu})){}\<[E]%
\\
\>[5]{}\mathrel{=}{}\<[5E]%
\>[9]{}\mbox{\commentbegin  Unfold \ensuremath{\mathcal{S}_{\mathbf{need}}\denot{\wild}_{\wild}}, \ensuremath{\beta_\Traces}, \ensuremath{\varid{apply}}  \commentend}{}\<[E]%
\\
\>[5]{}\hsindent{2}{}\<[7]%
\>[7]{}\varid{step}\;\conid{App}_{1}\;((\mathcal{S}_{\mathbf{need}}\denot{\varid{e}}_{\varcolor{\rho}}(\wild)\bind \lambda \varid{v}\to \keyword{case}\;\varid{v}\;\keyword{of}\;\conid{Fun}\;\varid{f}\to \varid{f}\;(\varcolor{\rho}\mathop{!}\varid{x});\anonymous \to \varid{stuck})\;\varcolor{\mu}){}\<[E]%
\\
\>[5]{}\mathbin{⊑}{}\<[5E]%
\>[9]{}\mbox{\commentbegin  Apply \Cref{thm:by-need-bind}; see below  \commentend}{}\<[E]%
\\
\>[5]{}\hsindent{2}{}\<[7]%
\>[7]{}\varid{step}\;\conid{App}_{1}\;(\widehat{\varid{apply}}\;(\mathcal{S}_{\widehat{\conid{D}}}\denot{\varid{e}}_{\beta_\Domain (\varcolor{\mu})\mathbin{\lhd}\varcolor{\rho}})\;(\beta_\Domain (\varcolor{\mu}) (\varcolor{\rho}\mathop{!}\varid{x}))){}\<[E]%
\\
\>[5]{}\mathrel{=}{}\<[5E]%
\>[9]{}\mbox{\commentbegin  Refold \ensuremath{\mathcal{S}_{\widehat{\conid{D}}}\denot{\wild}_{\wild}}  \commentend}{}\<[E]%
\\
\>[5]{}\hsindent{2}{}\<[7]%
\>[7]{}\mathcal{S}_{\widehat{\conid{D}}}\denot{\varid{e}}_{\beta_\Domain (\varcolor{\mu})\mathbin{\lhd}\varcolor{\rho}}{}\<[E]%
\ColumnHook
\end{hscode}\resethooks

    In the $⊑$ step, we apply \Cref{thm:by-need-bind} under \ensuremath{\varid{step}\;\conid{App}_{1}}, which
    yields three subgoals (under $\later$):
    \begin{itemize}
      \item \ensuremath{\beta_\Traces (\mathcal{S}_{\mathbf{need}}\denot{\varid{e}}_{\varcolor{\rho}}(\varcolor{\mu}))\mathbin{⊑}\mathcal{S}_{\widehat{\conid{D}}}\denot{\varid{e}}_{\beta_\Domain (\varcolor{\mu})\mathbin{\lhd}\varcolor{\rho}}}:
        By induction hypothesis.
      \item \ensuremath{\keyword{\forall}\!\! \hsforall \;\varid{ev}\;\widehat{\varid{d'}}\hsdot{\circ }{.\ }\widehat{\varid{step}}\;\varid{ev}\;(\widehat{\varid{apply}}\;\widehat{\varid{d'}}\;(\beta_\Domain (\varcolor{\mu}) (\varcolor{\rho}\mathop{!}\varid{x})))\mathbin{⊑}\widehat{\varid{apply}}\;(\widehat{\varid{step}}\;\varid{ev}\;\widehat{\varid{d'}})\;(\beta_\Domain (\varcolor{\mu}) (\varcolor{\rho}\mathop{!}\varid{x}))}:
        By assumption \textsc{Step-App}.
      \item \ensuremath{\keyword{\forall}\!\! \hsforall \;\varid{v}\;\varcolor{\mu}_{2}\hsdot{\circ }{.\ }\varcolor{\mu}\progressto\varcolor{\mu}_{2}\implies\beta_\Traces ((\keyword{case}\;\varid{v}\;\keyword{of}\;\conid{Fun}\;\varid{g}\to \varid{g}\;(\varcolor{\rho}\mathop{!}\varid{x});\anonymous \to \varid{stuck})\;\varcolor{\mu}_{2})\mathbin{⊑}\widehat{\varid{apply}}\;(\beta_\Traces (\conid{Ret}\;(\varid{v},\varcolor{\mu}_{2})))\;(\beta_\Domain (\varcolor{\mu}) (\varcolor{\rho}\mathop{!}\varid{x}))}:
        By cases over \ensuremath{\varid{v}}.
        \begin{itemize}
          \item \textbf{Case \ensuremath{\varid{v}\mathrel{=}\conid{Stuck}}}:
            Then \ensuremath{\beta_\Traces (\varid{stuck}\;\varcolor{\mu}_{2})\mathrel{=}\widehat{\varid{stuck}}\mathbin{⊑}\widehat{\varid{apply}}\;\widehat{\varid{stuck}}\;\widehat{\varid{a}}} by assumption \textsc{Stuck-App}.
          \item \textbf{Case \ensuremath{\varid{v}\mathrel{=}\conid{Con}\;\varid{k}\;\varid{ds}}}:
            Then \ensuremath{\beta_\Traces (\varid{stuck}\;\varcolor{\mu}_{2})\mathrel{=}\widehat{\varid{stuck}}\mathbin{⊑}\widehat{\varid{apply}}\;(\widehat{\varid{con}}\;\varid{k}\;\widehat{\varid{ds}})\;\widehat{\varid{a}}} by assumption \textsc{Stuck-App}, for the suitable \ensuremath{\widehat{\varid{ds}}}.
          \item \textbf{Case \ensuremath{\varid{v}\mathrel{=}\conid{Fun}\;\varid{g}}}:
            Note that \ensuremath{\varid{g}} has a parametric definition, of the form \ensuremath{(\lambda \varid{d}\to \varid{step}\;\conid{App}_{2}\;(\mathcal{S}\denot{\varid{e}_{1}}_{\varcolor{\rho}[\varid{x}\mapsto\varid{d}]}))}.
            This is important to apply \textsc{Beta-App} below.
            \begin{hscode}\SaveRestoreHook
\column{B}{@{}>{\hspre}l<{\hspost}@{}}%
\column{15}{@{}>{\hspre}c<{\hspost}@{}}%
\column{15E}{@{}l@{}}%
\column{17}{@{}>{\hspre}l<{\hspost}@{}}%
\column{18}{@{}>{\hspre}l<{\hspost}@{}}%
\column{E}{@{}>{\hspre}l<{\hspost}@{}}%
\>[17]{}\beta_\Traces (\varid{g}\;(\varcolor{\rho}\mathop{!}\varid{x})\;\varcolor{\mu}_{2}){}\<[E]%
\\
\>[15]{}\mathbin{⊑}{}\<[15E]%
\>[18]{}\mbox{\commentbegin  \ensuremath{\varid{id}\mathbin{⊑}\gamma_\Domain (\varcolor{\mu}_{2})\hsdot{\circ }{.\ }\alpha_\Domain (\varcolor{\mu}_{2})}, rearrange  \commentend}{}\<[E]%
\\
\>[15]{}\hsindent{2}{}\<[17]%
\>[17]{}(\alpha_\Domain (\varcolor{\mu}_{2})\hsdot{\circ }{.\ }{\varid{g}}^{*}\hsdot{\circ }{.\ }\gamma_\Domain (\varcolor{\mu}_{2}))\;(\beta_\Domain (\varcolor{\mu}_{2}) (\varcolor{\rho}\mathop{!}\varid{x})){}\<[E]%
\\
\>[15]{}\mathbin{⊑}{}\<[15E]%
\>[18]{}\mbox{\commentbegin  \ensuremath{\beta_\Domain (\varcolor{\mu}_{2}) (\varcolor{\rho}\mathop{!}\varid{x})\mathbin{⊑}\beta_\Domain (\varcolor{\mu}) (\varcolor{\rho}\mathop{!}\varid{x})} by {thm:heap-progress-mono}  \commentend}{}\<[E]%
\\
\>[15]{}\hsindent{2}{}\<[17]%
\>[17]{}(\alpha_\Domain (\varcolor{\mu}_{2})\hsdot{\circ }{.\ }{\varid{g}}^{*}\hsdot{\circ }{.\ }\gamma_\Domain (\varcolor{\mu}_{2}))\;(\beta_\Domain (\varcolor{\mu}) (\varcolor{\rho}\mathop{!}\varid{x})){}\<[E]%
\\
\>[15]{}\mathbin{⊑}{}\<[15E]%
\>[18]{}\mbox{\commentbegin  Assumption \textsc{Beta-App}  \commentend}{}\<[E]%
\\
\>[15]{}\hsindent{2}{}\<[17]%
\>[17]{}\widehat{\varid{apply}}\;(\widehat{\varid{fun}}\;(\alpha_\Domain (\varcolor{\mu}_{2})\hsdot{\circ }{.\ }{\varid{g}}^{*}\hsdot{\circ }{.\ }\gamma_\Domain (\varcolor{\mu}_{2})))\;(\beta_\Domain (\varcolor{\mu}) (\varcolor{\rho}\mathop{!}\varid{x})){}\<[E]%
\\
\>[15]{}\mathrel{=}{}\<[15E]%
\>[18]{}\mbox{\commentbegin  Definition of \ensuremath{\beta_\Traces}, \ensuremath{\varid{v}}  \commentend}{}\<[E]%
\\
\>[15]{}\hsindent{2}{}\<[17]%
\>[17]{}\widehat{\varid{apply}}\;(\beta_\Traces (\conid{Ret}\;\varid{v},\varcolor{\mu}_{2}))\;(\beta_\Domain (\varcolor{\mu}) (\varcolor{\rho}\mathop{!}\varid{x})){}\<[E]%
\ColumnHook
\end{hscode}\resethooks
        \end{itemize}
    \end{itemize}

  \item \textbf{Case} \ensuremath{\conid{Case}\;\varid{e}\;\varid{alts}}:
    Very similar to the \ensuremath{\varid{select}} case in \Cref{thm:abstract-by-name}.

    The cases where the interpreter returns \ensuremath{\varid{stuck}} follow by parametricity.
    Otherwise, we have
    (for the suitable definition of \ensuremath{\widehat{\varid{alts}}}, which satisfies
    \ensuremath{\alpha_\Domain (\varcolor{\mu}_{2})\hsdot{\circ }{.\ }{(\varid{alts}\mathop{!}\varid{k})}^{*}\hsdot{\circ }{.\ }\varid{map}\;(\gamma_\Domain (\varcolor{\mu}_{2}))\mathbin{⊑}\widehat{\varid{alts}}\mathop{!}\varid{k}} by induction)
    \begin{hscode}\SaveRestoreHook
\column{B}{@{}>{\hspre}l<{\hspost}@{}}%
\column{5}{@{}>{\hspre}c<{\hspost}@{}}%
\column{5E}{@{}l@{}}%
\column{7}{@{}>{\hspre}l<{\hspost}@{}}%
\column{9}{@{}>{\hspre}l<{\hspost}@{}}%
\column{E}{@{}>{\hspre}l<{\hspost}@{}}%
\>[7]{}\beta_\Traces (\mathcal{S}_{\mathbf{need}}\denot{\conid{Case}\;\varid{e}\;\varid{alts}}_{\varcolor{\rho}}(\varcolor{\mu})){}\<[E]%
\\
\>[5]{}\mathrel{=}{}\<[5E]%
\>[9]{}\mbox{\commentbegin  Unfold \ensuremath{\mathcal{S}_{\mathbf{need}}\denot{\wild}_{\wild}}, \ensuremath{\beta_\Traces}, \ensuremath{\varid{apply}}  \commentend}{}\<[E]%
\\
\>[5]{}\hsindent{2}{}\<[7]%
\>[7]{}\varid{step}\;\conid{Case}_{1}\;((\mathcal{S}_{\mathbf{need}}\denot{\varid{e}}_{\varcolor{\rho}}(\wild)\bind \lambda \varid{v}\to \keyword{case}\;\varid{v}\;\keyword{of}\;\conid{Con}\;\varid{k}\;\varid{ds}\mid \varid{k}\in \varid{dom}\;\varid{alts}\to (\varid{alts}\mathop{!}\varid{k})\;\varid{ds};\anonymous \to \varid{stuck})\;\varcolor{\mu}){}\<[E]%
\\
\>[5]{}\mathbin{⊑}{}\<[5E]%
\>[9]{}\mbox{\commentbegin  Apply \Cref{thm:by-need-bind}; see below  \commentend}{}\<[E]%
\\
\>[5]{}\hsindent{2}{}\<[7]%
\>[7]{}\varid{step}\;\conid{Case}_{1}\;(\widehat{\varid{select}}\;(\mathcal{S}_{\widehat{\conid{D}}}\denot{\varid{e}}_{\beta_\Domain (\varcolor{\mu})\mathbin{\lhd}\varcolor{\rho}})\;\widehat{\varid{alts}}){}\<[E]%
\\
\>[5]{}\mathrel{=}{}\<[5E]%
\>[9]{}\mbox{\commentbegin  Refold \ensuremath{\mathcal{S}_{\widehat{\conid{D}}}\denot{\wild}_{\wild}}  \commentend}{}\<[E]%
\\
\>[5]{}\hsindent{2}{}\<[7]%
\>[7]{}\mathcal{S}_{\widehat{\conid{D}}}\denot{\varid{e}}_{\beta_\Domain (\varcolor{\mu})\mathbin{\lhd}\varcolor{\rho}}{}\<[E]%
\ColumnHook
\end{hscode}\resethooks

    In the $⊑$ step, we apply \Cref{thm:by-need-bind} under \ensuremath{\varid{step}\;\conid{Case}_{1}}, which
    yields three subgoals (under $\later$):
    \begin{itemize}
      \item \ensuremath{\beta_\Traces (\mathcal{S}_{\mathbf{need}}\denot{\varid{e}}_{\varcolor{\rho}}(\varcolor{\mu}))\mathbin{⊑}\mathcal{S}_{\widehat{\conid{D}}}\denot{\varid{e}}_{\beta_\Domain (\varcolor{\mu})\mathbin{\lhd}\varcolor{\rho}}}:
        By induction hypothesis.
      \item \ensuremath{\keyword{\forall}\!\! \hsforall \;\varid{ev}\;\widehat{\varid{d'}}\hsdot{\circ }{.\ }\widehat{\varid{step}}\;\varid{ev}\;(\widehat{\varid{select}}\;\widehat{\varid{d'}}\;\widehat{\varid{alts}})\mathbin{⊑}\widehat{\varid{select}}\;(\widehat{\varid{step}}\;\varid{ev}\;\widehat{\varid{d'}})\;\widehat{\varid{alts}}}:
        By assumption \textsc{Step-Select}.
      \item \ensuremath{\keyword{\forall}\!\! \hsforall \;\varid{v}\;\varcolor{\mu}_{2}\hsdot{\circ }{.\ }\varcolor{\mu}\progressto\varcolor{\mu}_{2}\implies\beta_\Traces ((\keyword{case}\;\varid{v}\;\keyword{of}\;\conid{Con}\;\varid{k}\;\varid{ds}\mid \varid{k}\in \varid{dom}\;\varid{alts}\to (\varid{alts}\mathop{!}\varid{k})\;\varid{ds};\anonymous \to \varid{stuck})\;\varcolor{\mu}_{2})\mathbin{⊑}\widehat{\varid{select}}\;(\beta_\Traces (\conid{Ret}\;(\varid{v},\varcolor{\mu}_{2})))\;\widehat{\varid{alts}}}:
        By cases over \ensuremath{\varid{v}}.
        \begin{itemize}
          \item \textbf{Case \ensuremath{\varid{v}\mathrel{=}\conid{Stuck}}}:
            Then \ensuremath{\beta_\Traces (\varid{stuck}\;\varcolor{\mu}_{2})\mathrel{=}\widehat{\varid{stuck}}\mathbin{⊑}\widehat{\varid{select}}\;\widehat{\varid{stuck}}\;\widehat{\varid{alts}}} by assumption \textsc{Stuck-Sel}.
          \item \textbf{Case \ensuremath{\varid{v}\mathrel{=}\conid{Fun}\;\varid{f}}}:
            Then \ensuremath{\beta_\Traces (\varid{stuck}\;\varcolor{\mu}_{2})\mathrel{=}\widehat{\varid{stuck}}\mathbin{⊑}\widehat{\varid{select}}\;(\widehat{\varid{fun}}\;\widehat{\varid{f}})\;\widehat{\varid{alts}}} by assumption \textsc{Stuck-Sel}, for the suitable \ensuremath{\widehat{\varid{f}}}.
          \item \textbf{Case \ensuremath{\varid{v}\mathrel{=}\conid{Con}\;\varid{k}\;\varid{ds}}, $\ensuremath{\varid{k}} \not∈ \ensuremath{\varid{dom}\;\varid{alts}}$}:
            Then \ensuremath{\beta_\Traces (\varid{stuck}\;\varcolor{\mu}_{2})\mathrel{=}\widehat{\varid{stuck}}\mathbin{⊑}\widehat{\varid{select}}\;(\widehat{\varid{con}}\;\varid{k}\;\widehat{\varid{ds}})\;\widehat{\varid{alts}}} by assumption \textsc{Stuck-Sel}, for the suitable \ensuremath{\widehat{\varid{ds}}}.
          \item \textbf{Case \ensuremath{\varid{v}\mathrel{=}\conid{Con}\;\varid{k}\;\varid{ds}}, $\ensuremath{\varid{k}} ∈ \ensuremath{\varid{dom}\;\varid{alts}}$}:
            Note that \ensuremath{\varid{alts}} has a parametric definition.
            This is important to apply \textsc{Beta-Sel} below.
            \begin{hscode}\SaveRestoreHook
\column{B}{@{}>{\hspre}l<{\hspost}@{}}%
\column{15}{@{}>{\hspre}c<{\hspost}@{}}%
\column{15E}{@{}l@{}}%
\column{17}{@{}>{\hspre}l<{\hspost}@{}}%
\column{18}{@{}>{\hspre}l<{\hspost}@{}}%
\column{E}{@{}>{\hspre}l<{\hspost}@{}}%
\>[17]{}\beta_\Traces ((\varid{alts}\mathop{!}\varid{k})\;\varid{ds}\;\varcolor{\mu}_{2}){}\<[E]%
\\
\>[15]{}\mathbin{⊑}{}\<[15E]%
\>[18]{}\mbox{\commentbegin  \ensuremath{\varid{id}\mathbin{⊑}\gamma_\Domain (\varcolor{\mu}_{2})\hsdot{\circ }{.\ }\alpha_\Domain (\varcolor{\mu}_{2})}, rearrange  \commentend}{}\<[E]%
\\
\>[15]{}\hsindent{2}{}\<[17]%
\>[17]{}(\alpha_\Domain (\varcolor{\mu}_{2})\hsdot{\circ }{.\ }{(\varid{alts}\mathop{!}\varid{k})}^{*}\hsdot{\circ }{.\ }\varid{map}\;(\gamma_\Domain (\varcolor{\mu}_{2})))\;(\varid{map}\;(\alpha_\Domain (\varcolor{\mu}_{2})\hsdot{\circ }{.\ }\{\wild\})\;\varid{ds}){}\<[E]%
\\
\>[15]{}\mathbin{⊑}{}\<[15E]%
\>[18]{}\mbox{\commentbegin  Abstraction property of \ensuremath{\widehat{\varid{alts}}}  \commentend}{}\<[E]%
\\
\>[15]{}\hsindent{2}{}\<[17]%
\>[17]{}(\widehat{\varid{alts}}\mathop{!}\varid{k})\;(\varid{map}\;(\alpha_\Domain (\varcolor{\mu}_{2})\hsdot{\circ }{.\ }\{\wild\})\;\varid{ds}){}\<[E]%
\\
\>[15]{}\mathbin{⊑}{}\<[15E]%
\>[18]{}\mbox{\commentbegin  Assumption \textsc{Beta-Sel}  \commentend}{}\<[E]%
\\
\>[15]{}\hsindent{2}{}\<[17]%
\>[17]{}\widehat{\varid{select}}\;(\widehat{\varid{con}}\;\varid{k}\;(\varid{map}\;(\alpha_\Domain (\varcolor{\mu}_{2})\hsdot{\circ }{.\ }\{\wild\})\;\varid{ds}))\;\widehat{\varid{alts}}{}\<[E]%
\\
\>[15]{}\mathrel{=}{}\<[15E]%
\>[18]{}\mbox{\commentbegin  Definition of \ensuremath{\beta_\Traces}, \ensuremath{\varid{v}}  \commentend}{}\<[E]%
\\
\>[15]{}\hsindent{2}{}\<[17]%
\>[17]{}\widehat{\varid{select}}\;(\beta_\Traces (\conid{Ret}\;\varid{v}))\;\widehat{\varid{alts}}{}\<[E]%
\ColumnHook
\end{hscode}\resethooks
        \end{itemize}
    \end{itemize}

  \item \textbf{Case} \ensuremath{\conid{Let}\;\varid{x}\;\varid{e}_{1}\;\varid{e}_{2}}:
    We can make one step to see
    \begin{hscode}\SaveRestoreHook
\column{B}{@{}>{\hspre}l<{\hspost}@{}}%
\column{7}{@{}>{\hspre}l<{\hspost}@{}}%
\column{E}{@{}>{\hspre}l<{\hspost}@{}}%
\>[7]{}\mathcal{S}_{\mathbf{need}}\denot{\conid{Let}\;\varid{x}\;\varid{e}_{1}\;\varid{e}_{2}}_{\varcolor{\rho}}(\varcolor{\mu})\mathrel{=}\conid{Step}\;\conid{Let}_{1}\;(\mathcal{S}_{\mathbf{need}}\denot{\varid{e}_{2}}_{\varcolor{\rho}_{1}}(\varcolor{\mu}_{1})),{}\<[E]%
\ColumnHook
\end{hscode}\resethooks
    where \ensuremath{\varcolor{\rho}_{1}\triangleq\varcolor{\rho}[\varid{x}\mapsto\varid{step}\;(\conid{Look}\;\varid{x})\;(\varid{fetch}\;\varid{a})]},
    \ensuremath{\varid{a}\triangleq\varid{nextFree}\;\varcolor{\mu}},
    \ensuremath{\varcolor{\mu}_{1}\triangleq\varcolor{\mu}[\varid{a}\mapsto\varid{memo}\;\varid{a}\;(\mathcal{S}_{\mathbf{need}}\denot{\varid{e}_{1}}_{\varcolor{\rho}_{1}})]}.

    Then \ensuremath{(\beta_\Domain (\varcolor{\mu}_{1})\mathbin{\lhd}\varcolor{\rho}_{1})\mathop{!}\varid{y}\mathbin{⊑}(\beta_\Domain (\varcolor{\mu})\mathbin{\lhd}\varcolor{\rho})\mathop{!}\varid{y}} whenever $\ensuremath{\varid{x}} \not= \ensuremath{\varid{y}}$
    by \Cref{thm:heap-progress-mono},
    and \ensuremath{(\beta_\Domain (\varcolor{\mu}_{1})\mathbin{\lhd}\varcolor{\rho}_{1})\mathop{!}\varid{x}\mathrel{=}\varid{step}\;(\conid{Look}\;\varid{x})\;(\beta_\Domain (\varcolor{\mu}_{1}) (\mathcal{S}_{\mathbf{need}}\denot{\varid{e}_{1}}_{\varcolor{\rho}_{1}}(\wild)))}.
    \begin{hscode}\SaveRestoreHook
\column{B}{@{}>{\hspre}l<{\hspost}@{}}%
\column{5}{@{}>{\hspre}c<{\hspost}@{}}%
\column{5E}{@{}l@{}}%
\column{9}{@{}>{\hspre}l<{\hspost}@{}}%
\column{15}{@{}>{\hspre}l<{\hspost}@{}}%
\column{19}{@{}>{\hspre}l<{\hspost}@{}}%
\column{E}{@{}>{\hspre}l<{\hspost}@{}}%
\>[9]{}\beta_\Traces (\mathcal{S}_{\mathbf{need}}\denot{\conid{Let}\;\varid{x}\;\varid{e}_{1}\;\varid{e}_{2}}_{\varcolor{\rho}}(\varcolor{\mu})){}\<[E]%
\\
\>[5]{}\mathrel{=}{}\<[5E]%
\>[9]{}\mbox{\commentbegin  Unfold \ensuremath{\mathcal{S}_{\mathbf{need}}\denot{\wild}_{\wild}}  \commentend}{}\<[E]%
\\
\>[9]{}\beta_\Traces (\varid{bind}\;{}\<[19]%
\>[19]{}(\lambda \varid{d}_{1}\to \mathcal{S}_{\mathbf{need}}\denot{\varid{e}_{1}}_{\varcolor{\rho}_{1}})\;(\lambda \varid{d}_{1}\to \conid{Step}\;\conid{Let}_{1}\;(\mathcal{S}_{\mathbf{need}}\denot{\varid{e}_{2}}_{\varcolor{\rho}_{1}}))\;\varcolor{\mu}){}\<[E]%
\\
\>[5]{}\mathrel{=}{}\<[5E]%
\>[9]{}\mbox{\commentbegin  Unfold \ensuremath{\varid{bind}}, $\ensuremath{\varid{a}} \not\in \ensuremath{\varid{dom}\;\varcolor{\mu}}$, unfold \ensuremath{\beta_\Traces}  \commentend}{}\<[E]%
\\
\>[9]{}\varid{step}\;\conid{Let}_{1}\;(\beta_\Traces (\mathcal{S}_{\mathbf{need}}\denot{\varid{e}_{2}}_{\varcolor{\rho}_{1}}(\varcolor{\mu}_{1}))){}\<[E]%
\\
\>[5]{}\mathbin{⊑}{}\<[5E]%
\>[9]{}\mbox{\commentbegin  Induction hypothesis  \commentend}{}\<[E]%
\\
\>[9]{}\varid{step}\;\conid{Let}_{1}\;(\mathcal{S}\denot{\varid{e}_{2}}_{\beta_\Domain (\varcolor{\mu}_{1})\mathbin{\lhd}\varcolor{\rho}_{1}}){}\<[E]%
\\
\>[5]{}\mathbin{⊑}{}\<[5E]%
\>[9]{}\mbox{\commentbegin  By \Cref{thm:by-need-env-unroll}, unfolding \ensuremath{\varcolor{\rho}_{1}}   \commentend}{}\<[E]%
\\
\>[9]{}\varid{step}\;\conid{Let}_{1}\;(\mathcal{S}\denot{\varid{e}_{2}}_{(\beta_\Domain (\varcolor{\mu}_{1})\mathbin{\lhd}\varcolor{\rho})[\varid{x}\mapsto\varid{step}\;(\conid{Look}\;\varid{x})\;(\mathcal{S}\denot{\varid{e}_{1}}_{(\beta_\Domain (\varcolor{\mu}_{1})\mathbin{\lhd}\varcolor{\rho})[\varid{x}\mapsto\beta_\Domain (\varcolor{\mu}_{1}) (\varcolor{\rho}_{1}\mathop{!}\varid{x})]})]}){}\<[E]%
\\
\>[5]{}\mathbin{⊑}{}\<[5E]%
\>[9]{}\mbox{\commentbegin  Least fixpoint  \commentend}{}\<[E]%
\\
\>[9]{}\varid{step}\;\conid{Let}_{1}\;(\mathcal{S}\denot{\varid{e}_{2}}_{(\beta_\Domain (\varcolor{\mu}_{1})\mathbin{\lhd}\varcolor{\rho})[\varid{x}\mapsto\varid{lfp}\;(\lambda \widehat{\varid{d}_{1}}\to \varid{step}\;(\conid{Look}\;\varid{x})\;(\mathcal{S}\denot{\varid{e}_{1}}_{(\beta_\Domain (\varcolor{\mu}_{1})\mathbin{\lhd}\varcolor{\rho})[\varid{x}\mapsto\widehat{\varid{d}_{1}}]}))]}){}\<[E]%
\\
\>[5]{}\mathbin{⊑}{}\<[5E]%
\>[9]{}\mbox{\commentbegin  \ensuremath{\beta_\Domain (\varcolor{\mu}_{1}) (\varcolor{\rho}\mathop{!}\varid{x})\mathbin{⊑}\beta_\Domain (\varcolor{\mu}) (\varcolor{\rho}\mathop{!}\varid{x})} by \Cref{thm:heap-progress-mono}  \commentend}{}\<[E]%
\\
\>[9]{}\varid{step}\;\conid{Let}_{1}\;(\mathcal{S}\denot{\varid{e}_{2}}_{(\beta_\Domain (\varcolor{\mu})\mathbin{\lhd}\varcolor{\rho})[\varid{x}\mapsto\varid{lfp}\;(\lambda \widehat{\varid{d}_{1}}\to \varid{step}\;(\conid{Look}\;\varid{x})\;(\mathcal{S}\denot{\varid{e}_{1}}_{(\beta_\Domain (\varcolor{\mu})\mathbin{\lhd}\varcolor{\rho})[\varid{x}\mapsto\widehat{\varid{d}_{1}}]}))]}){}\<[E]%
\\
\>[5]{}\mathrel{=}{}\<[5E]%
\>[9]{}\mbox{\commentbegin  Partially unroll fixpoint  \commentend}{}\<[E]%
\\
\>[9]{}\varid{step}\;\conid{Let}_{1}\;(\mathcal{S}\denot{\varid{e}_{2}}_{(\beta_\Domain (\varcolor{\mu})\mathbin{\lhd}\varcolor{\rho})[\varid{x}\mapsto\varid{step}\;(\conid{Look}\;\varid{x})\;(\varid{lfp}\;(\lambda \widehat{\varid{d}_{1}}\to \mathcal{S}\denot{\varid{e}_{1}}_{(\beta_\Domain (\varcolor{\mu})\mathbin{\lhd}\varcolor{\rho})[\varid{x}\mapsto\varid{step}\;(\conid{Look}\;\varid{x})\;\widehat{\varid{d}_{1}}]}))]}){}\<[E]%
\\
\>[5]{}\mathbin{⊑}{}\<[5E]%
\>[9]{}\mbox{\commentbegin  Assumption \textsc{ByName-Bind}, with \ensuremath{\widehat{\varcolor{\rho}}\mathrel{=}\beta_\Domain (\varcolor{\mu})\mathbin{\lhd}\varcolor{\rho}}  \commentend}{}\<[E]%
\\
\>[9]{}\varid{bind}\;{}\<[15]%
\>[15]{}(\lambda \varid{d}_{1}\to \mathcal{S}\denot{\varid{e}_{1}}_{(\beta_\Domain (\varcolor{\mu})\mathbin{\lhd}\varcolor{\rho})[\varid{x}\mapsto\varid{step}\;(\conid{Look}\;\varid{x})\;\varid{d}_{1}]})\;{}\<[E]%
\\
\>[15]{}(\lambda \varid{d}_{1}\to \varid{step}\;\conid{Let}_{1}\;(\mathcal{S}\denot{\varid{e}_{2}}_{(\beta_\Domain (\varcolor{\mu})\mathbin{\lhd}\varcolor{\rho})[\varid{x}\mapsto\varid{step}\;(\conid{Look}\;\varid{x})\;\varid{d}_{1}]})){}\<[E]%
\\
\>[5]{}\mathrel{=}{}\<[5E]%
\>[9]{}\mbox{\commentbegin  Refold \ensuremath{\mathcal{S}\denot{\conid{Let}\;\varid{x}\;\varid{e}_{1}\;\varid{e}_{2}}_{\beta_\Domain (\varcolor{\mu})\mathbin{\lhd}\varcolor{\rho}}}  \commentend}{}\<[E]%
\\
\>[9]{}\mathcal{S}\denot{\conid{Let}\;\varid{x}\;\varid{e}_{1}\;\varid{e}_{2}}_{\beta_\Domain (\varcolor{\mu})\mathbin{\lhd}\varcolor{\rho}}{}\<[E]%
\ColumnHook
\end{hscode}\resethooks
\end{itemize}
\end{proof}

Let us unpack law $\textsc{Beta-App}$ to see how the abstraction laws in
\Cref{fig:abstraction-laws} are to be understood.
To prove $\textsc{Beta-App}$, one has to show that
\ensuremath{\keyword{\forall}\!\! \hsforall \;\varid{f}\;\varid{a}\;\varid{x}\hsdot{\circ }{.\ }\varid{f}\;\varid{a}\mathbin{⊑}\varid{apply}\;(\varid{fun}\;\varid{x}\;\varid{f})\;\varid{a}} in the abstract domain \ensuremath{\widehat{\conid{D}}}.
This states that summarising \ensuremath{\varid{f}} through \ensuremath{\varid{fun}}, then \ensuremath{\varid{apply}}ing the summary to
\ensuremath{\varid{a}} must approximate a direct call to \ensuremath{\varid{f}};
it amounts to proving correct the summary mechanism.
In \Cref{sec:problem}, we have proved a substitution \Cref{thm:absence-subst},
which is a syntactic form of this statement.
The ``$\ensuremath{\varid{f}}\text{ polymorphic}$'' premise asserts that \ensuremath{\varid{f}} is definable at
polymorphic type \ensuremath{\keyword{\forall}\!\! \hsforall \;\varid{d}\hsdot{\circ }{.\ }(\conid{Trace}\;\varid{d},\conid{Domain}\;\varid{d},\conid{HasBind}\;\varid{d})\Rightarrow \varid{d}\to \varid{d}}, which is
important to prove \textsc{Beta-App} (in \Cref{sec:mod-subst}).

Law \textsc{Beta-Sel} states a similar property for data constructor redexes.
Law \textsc{ByName-Bind} expresses that the abstract \ensuremath{\varid{bind}} implementation must
be sound for by-name evaluation, that is, it must approximate passing the least
fixpoint \ensuremath{\varid{lfp}} of the \ensuremath{\varid{rhs}} functional to \ensuremath{\varid{body}}.
The remaining laws are congruence rules involving \ensuremath{\varid{step}} and \ensuremath{\varid{stuck}} as well as
a monotonicity requirement for all involved operations.
These laws follow the mantra ``evaluation improves approximation''; for
example, law \textsc{Stuck-App} expresses that applying a stuck term
or constructor evaluates to (and thus approximates) a stuck term, and
\textsc{Stuck-Sel} expresses the same for \ensuremath{\varid{select}} stack frames.
In the Appendix, we show a result similar to \Cref{thm:abstract-by-need}
for by-name evaluation which does not require the by-need specific laws
\textsc{Step-Inc} and \textsc{Update}.

Note that none of the laws mention the concrete semantics or the abstraction
function $α_{\mathcal{S}}$.
This is how fixing the concrete semantics and $α_{\mathcal{S}}$ pays off; the usual
abstraction laws such as
\ensuremath{\alpha_\Domain (\varcolor{\mu}) (\varid{apply}\;\varid{d}\;\varid{a})\mathbin{⊑}\widehat{\varid{apply}}\;(\alpha_\Domain (\varcolor{\mu}) (\varid{d}))\;(\alpha_\Domain (\varcolor{\mu})\;(\varid{a}))} further
decompose into \textsc{Beta-App}.
We think this is a nice advantage to our approach, because the author of
the analysis does not need to reason about by-need heaps in order to soundly
approximate a semantic trace property expressed via \ensuremath{\conid{Trace}} instance!

\begin{toappendix}
\subsection{Parametricity and Relationship to Kripke-style Logical Relations}

We remarked right at the begin of the previous subsection that the Galois
connection in \Cref{fig:abstract-name-need} is really a family of definitions
indexed by a heap \ensuremath{\varcolor{\mu}}.
It is not possible to regard the ``abstraction of a \ensuremath{\varid{d}}'' in isolation;
rather, \Cref{thm:heap-progress-mono} expresses that once an ``abstraction of a \ensuremath{\varid{d}}''
holds for a particular heap \ensuremath{\varcolor{\mu}_{1}}, this abstraction will hold for any heap \ensuremath{\varcolor{\mu}_{2}}
that the semantics may progress to.

Unfortunately, this indexing also means that we cannot apply parametricity
to prove the sound abstraction \Cref{thm:abstract-by-need}, as we did for
by-name abstraction.
Such a proof would be bound to fail whenever the heap is extended (in \ensuremath{\varid{bind}}),
because then the index of the soundness relation must change as well.
Concretely, we would need roughly the following free theorem
\[
  \ensuremath{(\varid{bind},\varid{bind})\in \later\!\;(\later\!\;({R_{\varcolor{\mu}[\varid{a}\mapsto\varid{d}]}})\to {R_{\varcolor{\mu}[\varid{a}\mapsto\varid{d}]}})\to (\later\!\;({R_{\varcolor{\mu}[\varid{a}\mapsto\varid{d}]}})\to {R_{\varcolor{\mu}[\varid{a}\mapsto\varid{d}]}})\to {R_{\varcolor{\mu}}}}
\]
for the soundness relation of \Cref{thm:abstract-by-need}
\[
  R_\ensuremath{\varcolor{\mu}}(\ensuremath{\varid{d}}, \ensuremath{\widehat{\varid{d}}}) \triangleq \ensuremath{\beta_\Domain (\varcolor{\mu}) (\varid{d})\mathbin{⊑}\widehat{\varid{d}}}.
\]
However, parametricity only yields
\[
  \ensuremath{(\varid{bind},\varid{bind})\in ({R_{\varcolor{\mu}}}\to {R_{\varcolor{\mu}}})\to ({R_{\varcolor{\mu}}}\to {R_{\varcolor{\mu}}})\to {R_{\varcolor{\mu}}}}
\]
We think that a modular proof is still conceivable by defining a custom proof
tactic that would be \emph{inspired} by parametricity, given a means for
annotating how the heap index changes in \ensuremath{\varid{bind}}.

Although we do not formally acknowledge this, the soundness relation \ensuremath{{R_{\varcolor{\mu}}}}
of \Cref{thm:abstract-by-need} is reminiscent of a \emph{Kripke logical
relation}~\citep{Ahmed:04}.
In this analogy, definable heaps correspond to the \emph{possible worlds} of
\citet{Kripke:63} with heap progression \ensuremath{(\progressto)} as the \emph{accessibility
relation}.
\Cref{thm:heap-progress-mono} states that the relation $R_\ensuremath{\varcolor{\mu}}$ is monotonic
\wrt \ensuremath{(\progressto)}, so we consider it possible to define a Kripke-style logical
relation over System $F$ types.

Kripke-style logical relations are well-understood in the literature, hence it
is conceivable that a modular proof technique just as for parametricity exists.
We have not investigated this avenue so far.
A modular proof would help our proof framework to scale up to a by-need
semantics of Haskell, for example, so this avenue bears great potential.
\end{toappendix}

\subsection{A Modular Proof for \textsc{Beta-App}: A Simpler Substitution Lemma}
\label{sec:mod-subst}

\begin{toappendix}
\subsection{Usage Analysis Proofs}

Here we give the usage analysis proofs for the main body, often deferring to
\Cref{sec:by-need-soundness}.
\end{toappendix}

In order to instantiate \Cref{thm:abstract-by-need} for usage analysis in
\Cref{sec:usage-sound}, we need to prove in particular that \ensuremath{\concolor{\mathsf{D_U}}} satisfies the
abstraction law \textsc{Beta-App} in \Cref{fig:abstraction-laws}.
\textsc{Beta-App} corresponds to the syntactic substitution
\Cref{thm:absence-subst} of \Cref{sec:problem}, and this subsection presents its
proof.

Before we discuss this proof, note that the proof for
\Cref{thm:absence-subst} has a serious drawback: It relies on knowing the
complete definition of $\semabs{\wild}$ and thus is \emph{non-modular}.
As a result, the proof complexity scales in the size of the interpreter, and
whenever the definition of $\semabs{\wild}$ changes, \Cref{thm:absence-subst}
must be updated.
The complexity of such non-modular proofs would become unmanageable for large
denotational interpreters such as for WebAssembly~\citep{Brandl:23}.

For \textsc{Beta-App}, dubbed \emph{semantic substitution}, the proof fares much
better:
\begin{toappendix}
\begin{abbreviation}[Field access]
  \ensuremath{\langle \varid{φ'}, \varid{v'} \rangle.\varcolor{\varphi}\triangleq\varid{φ'}}, \ensuremath{\langle \varid{φ'}, \varid{v'} \rangle.\varid{v}\mathrel{=}\varid{v'}}.
\end{abbreviation}
\end{toappendix}

\begin{lemmarep}[\textsc{Beta-App}, Semantic substitution]
\label{thm:usage-subst-sem}
Let \ensuremath{\varid{f}\mathbin{::}(\conid{Trace}\;\varid{d},\conid{Domain}\;\varid{d},\conid{HasBind}\;\varid{d})\Rightarrow \varid{d}\to \varid{d}}, \ensuremath{\varid{x}\mathbin{::}\conid{Name}} fresh and \ensuremath{\varid{a}\mathbin{::}\concolor{\mathsf{D_U}}}.
Then \ensuremath{\varid{f}\;\varid{a}\mathbin{⊑}\varid{apply}\;(\varid{fun}\;\varid{x}\;\varid{f})\;\varid{a}} in \ensuremath{\concolor{\mathsf{D_U}}}.
\end{lemmarep}
\begin{proof}
We instantiate the free theorem for \ensuremath{\varid{f}}
\[
  \forall A, B.\
  \forall R ⊆ A \times B.\
  \forall (\mathit{inst_1}, \mathit{inst_2}) ∈ \mathsf{Dict}(R).\
  \forall (d_1,d_2) ∈ R.\
  (f_A(\mathit{inst_1})(d_1), f_B(\mathit{inst_2})(d_2)) ∈ R
\]
as follows
\[\begin{array}{c}
  A \triangleq B \triangleq \ensuremath{\concolor{\mathsf{D_U}}}, \qquad \mathit{inst_1} \triangleq \mathit{inst_2} \triangleq \mathit{inst}, \qquad d_1 \triangleq a, \qquad d_2 \triangleq \mathit{pre}(x) \\
  R_{x,a}(d_1,d_2) \triangleq \forall g.\ d_1 = g(a) \land d_2 = g(\mathit{pre}(x)) \implies g(a) ⊑ \mathit{apply}(\mathit{fun}(x,g),a)  \\
\end{array}\]
and get (translated back into Haskell)
\[
  \inferrule
    { (\ensuremath{\varid{a}},\ensuremath{\varid{pre}\;\varid{x}}) ∈ R_{\ensuremath{\varid{x}},\ensuremath{\varid{a}}} \\ (\mathit{inst}, \mathit{inst}) ∈ \mathsf{Dict}(R_{\ensuremath{\varid{x}},\ensuremath{\varid{a}}}) }
    { (\ensuremath{\varid{f}\;\varid{a}}, \ensuremath{\varid{f}\;(\varid{pre}\;\varid{x})}) ∈ R_{\ensuremath{\varid{x}},\ensuremath{\varid{a}}} }
\]
where \ensuremath{\varid{pre}\;\varid{x}\triangleq\langle [\varid{x}\mapsto\concolor{\mathsf{U_1}}], \conid{Rep}\;\concolor{\mathsf{U_\omega}} \rangle} defines the proxy for \ensuremath{\varid{x}},
exactly as in the implementation of \ensuremath{\varid{fun}\;\varid{x}}, and $\mathit{inst}$ is the canonical
instance dictionary for \ensuremath{\concolor{\mathsf{D_U}}}.

We will first apply this inference rule and then show that the goal follows from
$(\ensuremath{\varid{f}\;\varid{a}}, \ensuremath{\varid{f}\;(\varid{pre}\;\varid{x})}) ∈ R_{\ensuremath{\varid{x}},\ensuremath{\varid{a}}}$.

To apply the inference rule, we must prove its premises.
Before we do so, it is very helpful to eliminate the quantification over
arbitrary \ensuremath{\varid{g}} in the relation $R_{x,a}(d_1,d_2)$.
To that end, we first need to factor \ensuremath{\varid{fun}\;\varid{x}\;\varid{g}\mathrel{=}\varid{abs}\;\varid{x}\;(\varid{g}\;(\varid{pre}\;\varid{x}))}, where \ensuremath{\varid{abs}}
is defined as follows:
\begin{hscode}\SaveRestoreHook
\column{B}{@{}>{\hspre}l<{\hspost}@{}}%
\column{3}{@{}>{\hspre}l<{\hspost}@{}}%
\column{E}{@{}>{\hspre}l<{\hspost}@{}}%
\>[3]{}\varid{abs}\;\varid{x}\;\langle \varcolor{\varphi}, \varid{v} \rangle\mathrel{=}\langle \varcolor{\varphi}[\varid{x}\mapsto\concolor{\mathsf{U_0}}], \varcolor{\varphi}\mathbin{!?}\varid{x} \argcons \varid{v} \rangle{}\<[E]%
\ColumnHook
\end{hscode}\resethooks
And we simplify $R_{\ensuremath{\varid{x}},\ensuremath{\varid{a}}}(d_1,d_2)$, thus
\begin{hscode}\SaveRestoreHook
\column{B}{@{}>{\hspre}l<{\hspost}@{}}%
\column{3}{@{}>{\hspre}l<{\hspost}@{}}%
\column{E}{@{}>{\hspre}l<{\hspost}@{}}%
\>[3]{}\keyword{\forall}\!\! \hsforall \;\varid{g}\hsdot{\circ }{.\ }\varid{d}_{1}\mathrel{=}\varid{g}\;\varid{a}\land\varid{d}_{2}\mathrel{=}\varid{g}\;(\varid{pre}\;\varid{x})\implies\varid{g}\;\varid{a}\mathbin{⊑}\varid{apply}\;(\varid{fun}\;\varid{x}\;\varid{g})\;\varid{a}{}\<[E]%
\\
\>[B]{}\Longleftrightarrow\mbox{\commentbegin  \ensuremath{\varid{fun}\;\varid{x}\;\varid{g}\mathrel{=}\varid{abs}\;\varid{x}\;(\varid{g}\;(\varid{pre}\;\varid{x}))}  \commentend}{}\<[E]%
\\
\>[B]{}\hsindent{3}{}\<[3]%
\>[3]{}\keyword{\forall}\!\! \hsforall \;\varid{g}\hsdot{\circ }{.\ }\varid{d}_{1}\mathrel{=}\varid{g}\;\varid{a}\land\varid{d}_{2}\mathrel{=}\varid{g}\;(\varid{pre}\;\varid{x})\implies\varid{g}\;\varid{a}\mathbin{⊑}\varid{apply}\;(\varid{abs}\;\varid{x}\;(\varid{g}\;(\varid{pre}\;\varid{x})))\;\varid{a}{}\<[E]%
\\
\>[B]{}\Longleftrightarrow\mbox{\commentbegin  Use \ensuremath{\varid{d}_{1}\mathrel{=}\varid{g}\;\varid{a}} and \ensuremath{\varid{d}_{2}\mathrel{=}\varid{g}\;(\varid{pre}\;\varid{x})}  \commentend}{}\<[E]%
\\
\>[B]{}\hsindent{3}{}\<[3]%
\>[3]{}\keyword{\forall}\!\! \hsforall \;\varid{g}\hsdot{\circ }{.\ }\varid{d}_{1}\mathrel{=}\varid{g}\;\varid{a}\land\varid{d}_{2}\mathrel{=}\varid{g}\;(\varid{pre}\;\varid{x})\implies\varid{d}_{1}\mathbin{⊑}\varid{apply}\;(\varid{abs}\;\varid{x}\;\varid{d}_{2})\;\varid{a}{}\<[E]%
\\
\>[B]{}\Longleftrightarrow\mbox{\commentbegin  There exists a \ensuremath{\varid{g}} satisfying \ensuremath{\varid{d}_{1}\mathrel{=}\varid{g}\;\varid{a}} and \ensuremath{\varid{d}_{2}\mathrel{=}\varid{g}\;(\varid{pre}\;\varid{x})}  \commentend}{}\<[E]%
\\
\>[B]{}\hsindent{3}{}\<[3]%
\>[3]{}\varid{d}_{1}\mathbin{⊑}\varid{apply}\;(\varid{abs}\;\varid{x}\;\varid{d}_{2})\;\varid{a}{}\<[E]%
\\
\>[B]{}\Longleftrightarrow\mbox{\commentbegin  Inline \ensuremath{\varid{apply}}, \ensuremath{\varid{abs}}, simplify  \commentend}{}\<[E]%
\\
\>[B]{}\hsindent{3}{}\<[3]%
\>[3]{}\varid{d}_{1}\mathbin{⊑}\keyword{let}\;\langle \varcolor{\varphi}, \varid{v} \rangle\mathrel{=}\varid{d}_{2}\;\keyword{in}\;\langle \varcolor{\varphi}[\varid{x}\mapsto\concolor{\mathsf{U_0}}]\mathbin{+}(\varcolor{\varphi}\mathbin{!?}\varid{x})\mathbin{*}\varid{a}.\varcolor{\varphi}, \varid{v} \rangle{}\<[E]%
\ColumnHook
\end{hscode}\resethooks

Note that this implies \ensuremath{\varid{d}_{1}.\varcolor{\varphi}\mathbin{!?}\varid{x}\mathrel{=}\concolor{\mathsf{U_0}}}, because \ensuremath{\varcolor{\varphi}[\varid{x}\mapsto\concolor{\mathsf{U_0}}]\mathbin{!?}\varid{x}\mathrel{=}\concolor{\mathsf{U_0}}}
and \ensuremath{\varid{a}.\varcolor{\varphi}\mathbin{!?}\varid{x}\mathrel{=}\concolor{\mathsf{U_0}}} by the scoping discipline.

It turns out that $R_{\ensuremath{\varid{x}},\ensuremath{\varid{a}}}$ is reflexive on all \ensuremath{\varid{d}} for which \ensuremath{\varid{d}.\varcolor{\varphi}\mathbin{?!}\varid{x}\mathrel{=}\concolor{\mathsf{U_0}}}; indeed, then the inequality becomes an equality.
(This corresponds to summarising a function that does not use its
argument.)
That is a fact that we need in the \ensuremath{\varid{stuck}}, \ensuremath{\varid{fun}}, \ensuremath{\varid{con}} and \ensuremath{\varid{select}} cases
below, so we prove it here:
\begin{hscode}\SaveRestoreHook
\column{B}{@{}>{\hspre}l<{\hspost}@{}}%
\column{3}{@{}>{\hspre}l<{\hspost}@{}}%
\column{E}{@{}>{\hspre}l<{\hspost}@{}}%
\>[3]{}\keyword{\forall}\!\! \hsforall \;\varid{d}\hsdot{\circ }{.\ }\varid{d}\mathbin{⊑}\langle (\varid{d}.\varcolor{\varphi})[\varid{x}\mapsto\concolor{\mathsf{U_0}}]\mathbin{+}(\varid{d}.\varcolor{\varphi}\mathbin{!?}\varid{x})\mathbin{*}\varid{a}.\varcolor{\varphi}, \varid{d}.\varid{v} \rangle{}\<[E]%
\\
\>[B]{}\Longleftrightarrow\mbox{\commentbegin  Use \ensuremath{(\varid{d}.\varcolor{\varphi}\mathbin{?!}\varid{x})\mathrel{=}\concolor{\mathsf{U_0}}}  \commentend}{}\<[E]%
\\
\>[B]{}\hsindent{3}{}\<[3]%
\>[3]{}\keyword{\forall}\!\! \hsforall \;\varid{d}\hsdot{\circ }{.\ }\varid{d}\mathbin{⊑}\langle \varid{d}.\varcolor{\varphi}, \varid{d}.\varid{v} \rangle\mathrel{=}\varid{d}{}\<[E]%
\ColumnHook
\end{hscode}\resethooks
The last proposition is reflexivity on $⊑$.

Now we prove the premises of the abstraction theorem:
\begin{itemize}
  \item $(\ensuremath{\varid{a}},\ensuremath{\varid{pre}\;\varid{x}}) ∈ R_{\ensuremath{\varid{x}},\ensuremath{\varid{a}}}$:
    The proposition unfolds to
    \begin{hscode}\SaveRestoreHook
\column{B}{@{}>{\hspre}l<{\hspost}@{}}%
\column{5}{@{}>{\hspre}l<{\hspost}@{}}%
\column{7}{@{}>{\hspre}l<{\hspost}@{}}%
\column{E}{@{}>{\hspre}l<{\hspost}@{}}%
\>[7]{}\varid{a}\mathbin{⊑}\keyword{let}\;\langle \varcolor{\varphi}, \varid{v} \rangle\mathrel{=}\varid{pre}\;\varid{x}\;\keyword{in}\;\langle \varcolor{\varphi}[\varid{x}\mapsto\concolor{\mathsf{U_0}}]\mathbin{+}(\varcolor{\varphi}\mathbin{!?}\varid{x})\mathbin{*}\varid{a}.\varcolor{\varphi}, \varid{v} \rangle{}\<[E]%
\\
\>[5]{}\Longleftrightarrow\mbox{\commentbegin  Unfold \ensuremath{\varid{pre}}, simplify  \commentend}{}\<[E]%
\\
\>[5]{}\hsindent{2}{}\<[7]%
\>[7]{}\varid{a}\mathbin{⊑}\langle \varid{a}.\varcolor{\varphi}, \conid{Rep}\;\concolor{\mathsf{U_\omega}} \rangle{}\<[E]%
\ColumnHook
\end{hscode}\resethooks
    The latter follows from \ensuremath{\varid{a}.\varid{v}\mathbin{⊑}\conid{Rep}\;\concolor{\mathsf{U_\omega}}} because \ensuremath{\conid{Rep}\;\concolor{\mathsf{U_\omega}}} is the Top element.

  \item $(\mathit{inst}, \mathit{inst}) ∈ \mathsf{Dict}(R_{\ensuremath{\varid{x}},\ensuremath{\varid{a}}})$:
    By the relational interpretation of products, we get one subgoal per instance method.
    \begin{itemize}
      \item \textbf{Case \ensuremath{\varid{step}}}.
        Goal: $\inferrule{(\ensuremath{\varid{d}_{1}},\ensuremath{\varid{d}_{2}}) ∈ R_{\ensuremath{\varid{x}},\ensuremath{\varid{a}}}}{(\ensuremath{\varid{step}\;\varid{ev}\;\varid{d}_{1}}, \ensuremath{\varid{step}\;\varid{ev}\;\varid{d}_{2}}) ∈ R_{\ensuremath{\varid{x}},\ensuremath{\varid{a}}}}$. \\
        Assume the premise $(\ensuremath{\varid{d}_{1}},\ensuremath{\varid{d}_{2}}) ∈ R_{\ensuremath{\varid{x}},\ensuremath{\varid{a}}}$, show the goal.
        All cases other than \ensuremath{\varid{ev}\mathrel{=}\conid{Look}\;\varid{y}} are trivial, because then \ensuremath{\varid{step}\;\varid{ev}\;\varid{d}\mathrel{=}\varid{d}} and the goal follows by the premise.
        So let \ensuremath{\varid{ev}\mathrel{=}\conid{Look}\;\varid{y}}. The goal is to show
        \begin{hscode}\SaveRestoreHook
\column{B}{@{}>{\hspre}l<{\hspost}@{}}%
\column{11}{@{}>{\hspre}l<{\hspost}@{}}%
\column{E}{@{}>{\hspre}l<{\hspost}@{}}%
\>[11]{}\varid{step}\;(\conid{Look}\;\varid{y})\;\varid{d}_{1}\mathbin{⊑}\keyword{let}\;\langle \varcolor{\varphi}, \varid{v} \rangle\mathrel{=}\varid{step}\;(\conid{Look}\;\varid{y})\;\varid{d}_{2}\;\keyword{in}\;\langle \varcolor{\varphi}[\varid{x}\mapsto\concolor{\mathsf{U_0}}]\mathbin{+}(\varcolor{\varphi}\mathbin{!?}\varid{x})\mathbin{*}\varid{a}.\varcolor{\varphi}, \varid{v} \rangle{}\<[E]%
\ColumnHook
\end{hscode}\resethooks
        We begin by unpacking the assumption $(\ensuremath{\varid{d}_{1}},\ensuremath{\varid{d}_{2}}) ∈ R_{\ensuremath{\varid{x}},\ensuremath{\varid{a}}}$ to show it:
        \begin{hscode}\SaveRestoreHook
\column{B}{@{}>{\hspre}l<{\hspost}@{}}%
\column{9}{@{}>{\hspre}l<{\hspost}@{}}%
\column{11}{@{}>{\hspre}l<{\hspost}@{}}%
\column{15}{@{}>{\hspre}l<{\hspost}@{}}%
\column{E}{@{}>{\hspre}l<{\hspost}@{}}%
\>[11]{}\varid{d}_{1}\mathbin{⊑}\keyword{let}\;\langle \varcolor{\varphi}, \varid{v} \rangle\mathrel{=}\varid{d}_{2}\;\keyword{in}\;\langle \varcolor{\varphi}[\varid{x}\mapsto\concolor{\mathsf{U_0}}]\mathbin{+}(\varcolor{\varphi}\mathbin{!?}\varid{x})\mathbin{*}\varid{a}.\varcolor{\varphi}, \varid{v} \rangle{}\<[E]%
\\
\>[9]{}\implies{}\<[15]%
\>[15]{}\mbox{\commentbegin  \ensuremath{\varid{step}\;(\conid{Look}\;\varid{y})} is monotonic  \commentend}{}\<[E]%
\\
\>[9]{}\hsindent{2}{}\<[11]%
\>[11]{}\varid{step}\;(\conid{Look}\;\varid{y})\;\varid{d}_{1}\mathbin{⊑}\varid{step}\;(\conid{Look}\;\varid{y})\;\langle (\varid{d}_{2}.\varcolor{\varphi})[\varid{x}\mapsto\concolor{\mathsf{U_0}}]\mathbin{+}(\varid{d}_{2}.\varcolor{\varphi}\mathbin{!?}\varid{x})\mathbin{*}\varid{a}.\varcolor{\varphi}, \varid{d}_{2}.\varid{v} \rangle{}\<[E]%
\\
\>[9]{}\Longleftrightarrow\mbox{\commentbegin  Refold \ensuremath{\varid{step}\;(\conid{Look}\;\varid{y})}  \commentend}{}\<[E]%
\\
\>[9]{}\hsindent{2}{}\<[11]%
\>[11]{}\varid{step}\;(\conid{Look}\;\varid{y})\;\varid{d}_{1}\mathbin{⊑}\langle (\varid{d}_{2}.\varcolor{\varphi})[\varid{x}\mapsto\concolor{\mathsf{U_0}}]\mathbin{+}[\varid{y}\mapsto\concolor{\mathsf{U_1}}]\mathbin{+}(\varid{d}_{2}.\varcolor{\varphi}\mathbin{!?}\varid{x})\mathbin{*}\varid{a}.\varcolor{\varphi}, \varid{d}_{2}.\varid{v} \rangle{}\<[E]%
\\
\>[9]{}\Longleftrightarrow{}\<[15]%
\>[15]{}\mbox{\commentbegin  \ensuremath{\varid{step}\;(\conid{Look}\;\varid{y})} preserves value and $\ensuremath{\varid{x}} \not= \ensuremath{\varid{y}}$ because \ensuremath{\varid{y}} is let-bound  \commentend}{}\<[E]%
\\
\>[9]{}\hsindent{2}{}\<[11]%
\>[11]{}\varid{step}\;(\conid{Look}\;\varid{y})\;\varid{d}_{1}\mathbin{⊑}\keyword{let}\;\langle \varcolor{\varphi}, \varid{v} \rangle\mathrel{=}\varid{step}\;(\conid{Look}\;\varid{y})\;\varid{d}_{2}\;\keyword{in}\;\langle \varcolor{\varphi}[\varid{x}\mapsto\concolor{\mathsf{U_0}}]\mathbin{+}(\varcolor{\varphi}\mathbin{!?}\varid{x})\mathbin{*}\varid{a}.\varcolor{\varphi}, \varid{v} \rangle{}\<[E]%
\ColumnHook
\end{hscode}\resethooks

      \item \textbf{Case \ensuremath{\varid{stuck}}}.
        Goal: $(\ensuremath{\varid{stuck}}, \ensuremath{\varid{stuck}}) ∈ R_{\ensuremath{\varid{x}},\ensuremath{\varid{a}}}$ \\
        Follows from reflexivity, because \ensuremath{\varid{stuck}\mathrel{=}\bot}, and \ensuremath{\bot.\varcolor{\varphi}\mathbin{!?}\varid{x}\mathrel{=}\concolor{\mathsf{U_0}}}.

      \item \textbf{Case \ensuremath{\varid{fun}}}.
        Goal: $\inferrule{\forall (\ensuremath{\varid{d}_{1}},\ensuremath{\varid{d}_{2}}) ∈ R_{\ensuremath{\varid{x}},\ensuremath{\varid{a}}} \implies (\ensuremath{\varid{f}_{1}\;\varid{d}_{1}}, \ensuremath{\varid{f}_{2}\;\varid{d}_{2}}) ∈ R_{\ensuremath{\varid{x}},\ensuremath{\varid{a}}}}{(\ensuremath{\varid{fun}\;\varid{y}\;\varid{f}_{1}}, \ensuremath{\varid{fun}\;\varid{y}\;\varid{f}_{2}}) ∈ R_{\ensuremath{\varid{x}},\ensuremath{\varid{a}}}}$. \\
        Additionally, we may assume $\ensuremath{\varid{x}} \not= \ensuremath{\varid{y}}$ by lexical scoping.

        Now assume the premise. The goal is to show
        \begin{hscode}\SaveRestoreHook
\column{B}{@{}>{\hspre}l<{\hspost}@{}}%
\column{11}{@{}>{\hspre}l<{\hspost}@{}}%
\column{E}{@{}>{\hspre}l<{\hspost}@{}}%
\>[11]{}\varid{fun}\;\varid{y}\;\varid{f}_{1}\mathbin{⊑}\keyword{let}\;\langle \varcolor{\varphi}, \varid{v} \rangle\mathrel{=}\varid{fun}\;\varid{y}\;\varid{f}_{2}\;\keyword{in}\;\langle \varcolor{\varphi}[\varid{x}\mapsto\concolor{\mathsf{U_0}}]\mathbin{+}(\varcolor{\varphi}\mathbin{!?}\varid{x})\mathbin{*}\varid{a}.\varcolor{\varphi}, \varid{v} \rangle{}\<[E]%
\ColumnHook
\end{hscode}\resethooks

        Recall that \ensuremath{\varid{fun}\;\varid{y}\;\varid{f}\mathrel{=}\varid{abs}\;\varid{y}\;(\varid{f}\;(\varid{pre}\;\varid{y}))} and that \ensuremath{\varid{abs}\;\varid{y}} is monotonic.

        Note that we have $(\ensuremath{\varid{pre}\;\varid{y}}, \ensuremath{\varid{pre}\;\varid{y}}) ∈ R_{\ensuremath{\varid{x}},\ensuremath{\varid{a}}}$ because of $\ensuremath{\varid{x}} \not= \ensuremath{\varid{y}}$ and reflexivity.
        That in turn yields $(\ensuremath{\varid{f}_{1}\;(\varid{pre}\;\varid{y}),\varid{f}_{2}\;(\varid{pre}\;\varid{y})}) ∈ R_{\ensuremath{\varid{x}},\ensuremath{\varid{a}}}$ by assumption.
        This is useful to kick-start the following proof, showing the goal:
        \begin{hscode}\SaveRestoreHook
\column{B}{@{}>{\hspre}l<{\hspost}@{}}%
\column{9}{@{}>{\hspre}c<{\hspost}@{}}%
\column{9E}{@{}l@{}}%
\column{11}{@{}>{\hspre}l<{\hspost}@{}}%
\column{15}{@{}>{\hspre}l<{\hspost}@{}}%
\column{E}{@{}>{\hspre}l<{\hspost}@{}}%
\>[11]{}\varid{f}_{1}\;(\varid{pre}\;\varid{y})\mathbin{⊑}\keyword{let}\;\langle \varcolor{\varphi}, \varid{v} \rangle\mathrel{=}\varid{f}_{2}\;(\varid{pre}\;\varid{y})\;\keyword{in}\;\langle \varcolor{\varphi}[\varid{x}\mapsto\concolor{\mathsf{U_0}}]\mathbin{+}(\varcolor{\varphi}\mathbin{!?}\varid{x})\mathbin{*}\varid{a}.\varcolor{\varphi}, \varid{v} \rangle{}\<[E]%
\\
\>[9]{}\implies{}\<[9E]%
\>[15]{}\mbox{\commentbegin  Monotonicity of \ensuremath{\varid{abs}\;\varid{y}}  \commentend}{}\<[E]%
\\
\>[9]{}\hsindent{2}{}\<[11]%
\>[11]{}\varid{abs}\;\varid{y}\;(\varid{f}_{1}\;(\varid{pre}\;\varid{y}))\mathbin{⊑}\varid{abs}\;\varid{y}\;(\keyword{let}\;\langle \varcolor{\varphi}, \varid{v} \rangle\mathrel{=}\varid{f}_{2}\;(\varid{pre}\;\varid{y})\;\keyword{in}\;\langle \varcolor{\varphi}[\varid{x}\mapsto\concolor{\mathsf{U_0}}]\mathbin{+}(\varcolor{\varphi}\mathbin{!?}\varid{x})\mathbin{*}\varid{a}.\varcolor{\varphi}, \varid{v} \rangle){}\<[E]%
\\
\>[9]{}\Longleftrightarrow{}\<[9E]%
\>[15]{}\mbox{\commentbegin  $\ensuremath{\varid{x}} \not= \ensuremath{\varid{y}}$ and \ensuremath{\varid{a}.\varcolor{\varphi}\mathbin{!?}\varid{y}\mathrel{=}\concolor{\mathsf{U_0}}} due to scoping, \ensuremath{\varcolor{\varphi}\mathbin{!?}\varid{x}} unaffected by floating \ensuremath{\varid{abs}}  \commentend}{}\<[E]%
\\
\>[9]{}\hsindent{2}{}\<[11]%
\>[11]{}\varid{abs}\;\varid{y}\;(\varid{f}_{1}\;(\varid{pre}\;\varid{y}))\mathbin{⊑}\keyword{let}\;\langle \varcolor{\varphi}, \varid{v} \rangle\mathrel{=}\varid{abs}\;\varid{y}\;(\varid{f}_{2}\;(\varid{pre}\;\varid{y}))\;\keyword{in}\;\langle \varcolor{\varphi}[\varid{x}\mapsto\concolor{\mathsf{U_0}}]\mathbin{+}(\varcolor{\varphi}\mathbin{!?}\varid{x})\mathbin{*}\varid{a}.\varcolor{\varphi}, \varid{v} \rangle{}\<[E]%
\\
\>[9]{}\Longleftrightarrow{}\<[9E]%
\>[15]{}\mbox{\commentbegin  Rewrite \ensuremath{\varid{abs}\;\varid{y}\;(\varid{f}\;(\varid{pre}\;\varid{y}))\mathrel{=}\varid{fun}\;\varid{y}\;\varid{f}}  \commentend}{}\<[E]%
\\
\>[9]{}\hsindent{2}{}\<[11]%
\>[11]{}\varid{fun}\;\varid{y}\;\varid{f}_{1}\mathbin{⊑}\keyword{let}\;\langle \varcolor{\varphi}, \varid{v} \rangle\mathrel{=}\varid{fun}\;\varid{y}\;\varid{f}_{2}\;\keyword{in}\;\langle \varcolor{\varphi}[\varid{x}\mapsto\concolor{\mathsf{U_0}}]\mathbin{+}(\varcolor{\varphi}\mathbin{!?}\varid{x})\mathbin{*}\varid{a}.\varcolor{\varphi}, \varid{v} \rangle{}\<[E]%
\ColumnHook
\end{hscode}\resethooks

      \item \textbf{Case \ensuremath{\varid{apply}}}.
        Goal: $\inferrule{(\ensuremath{\varid{l}_{1}},\ensuremath{\varid{l}_{2}}) ∈ R_{\ensuremath{\varid{x}},\ensuremath{\varid{a}}} \\ (\ensuremath{\varid{r}_{1}},\ensuremath{\varid{r}_{2}}) ∈ R_{\ensuremath{\varid{x}},\ensuremath{\varid{a}}}}{(\ensuremath{\varid{apply}\;\varid{l}_{1}\;\varid{r}_{1}}, \ensuremath{\varid{apply}\;\varid{l}_{2}\;\varid{r}_{2}}) ∈ R_{\ensuremath{\varid{x}},\ensuremath{\varid{a}}}}$. \\
        Assume the premises. The goal is to show
        \begin{hscode}\SaveRestoreHook
\column{B}{@{}>{\hspre}l<{\hspost}@{}}%
\column{11}{@{}>{\hspre}l<{\hspost}@{}}%
\column{E}{@{}>{\hspre}l<{\hspost}@{}}%
\>[11]{}\varid{apply}\;\varid{l}_{1}\;\varid{r}_{1}\mathbin{⊑}\keyword{let}\;\langle \varcolor{\varphi}, \varid{v} \rangle\mathrel{=}\varid{apply}\;\varid{l}_{2}\;\varid{r}_{2}\;\keyword{in}\;\langle \varcolor{\varphi}[\varid{x}\mapsto\concolor{\mathsf{U_0}}]\mathbin{+}(\varcolor{\varphi}\mathbin{!?}\varid{x})\mathbin{*}\varid{a}.\varcolor{\varphi}, \varid{v} \rangle{}\<[E]%
\ColumnHook
\end{hscode}\resethooks

        \begin{hscode}\SaveRestoreHook
\column{B}{@{}>{\hspre}l<{\hspost}@{}}%
\column{9}{@{}>{\hspre}c<{\hspost}@{}}%
\column{9E}{@{}l@{}}%
\column{11}{@{}>{\hspre}l<{\hspost}@{}}%
\column{12}{@{}>{\hspre}l<{\hspost}@{}}%
\column{18}{@{}>{\hspre}l<{\hspost}@{}}%
\column{E}{@{}>{\hspre}l<{\hspost}@{}}%
\>[11]{}\varid{apply}\;\varid{l}_{1}\;\varid{r}_{1}{}\<[E]%
\\
\>[9]{}\mathbin{⊑}{}\<[9E]%
\>[12]{}\mbox{\commentbegin  \ensuremath{\varid{l}_{1}\mathbin{⊑}\varid{apply}\;(\varid{abs}\;\varid{x}\;\varid{l}_{2})}, \ensuremath{\varid{r}_{2}\mathbin{⊑}\varid{apply}\;(\varid{abs}\;\varid{x}\;\varid{r}_{2})}, monotonicity  \commentend}{}\<[E]%
\\
\>[9]{}\hsindent{2}{}\<[11]%
\>[11]{}\varid{apply}\;{}\<[18]%
\>[18]{}(\keyword{let}\;\langle \varcolor{\varphi}, \varid{v} \rangle\mathrel{=}\varid{l}_{2}\;\keyword{in}\;\langle \varcolor{\varphi}[\varid{x}\mapsto\concolor{\mathsf{U_0}}]\mathbin{+}(\varcolor{\varphi}\mathbin{!?}\varid{x})\mathbin{*}\varid{a}.\varcolor{\varphi}, \varid{v} \rangle)\;{}\<[E]%
\\
\>[18]{}(\keyword{let}\;\langle \varcolor{\varphi}, \varid{v} \rangle\mathrel{=}\varid{r}_{2}\;\keyword{in}\;\langle \varcolor{\varphi}[\varid{x}\mapsto\concolor{\mathsf{U_0}}]\mathbin{+}(\varcolor{\varphi}\mathbin{!?}\varid{x})\mathbin{*}\varid{a}.\varcolor{\varphi}, \varid{v} \rangle){}\<[E]%
\\
\>[9]{}\mathbin{⊑}{}\<[9E]%
\>[12]{}\mbox{\commentbegin  Componentwise, see below  \commentend}{}\<[E]%
\\
\>[9]{}\hsindent{2}{}\<[11]%
\>[11]{}\keyword{let}\;\langle \varcolor{\varphi}, \varid{v} \rangle\mathrel{=}\varid{apply}\;\varid{l}_{2}\;\varid{r}_{2}\;\keyword{in}\;\langle \varcolor{\varphi}[\varid{x}\mapsto\concolor{\mathsf{U_0}}]\mathbin{+}(\varcolor{\varphi}\mathbin{!?}\varid{x})\mathbin{*}\varid{a}.\varcolor{\varphi}, \varid{v} \rangle{}\<[E]%
\ColumnHook
\end{hscode}\resethooks

        For the last step, we show the inequality for \ensuremath{\varcolor{\varphi}} and \ensuremath{\varid{v}} independently.
        For values, it is easy to see by calculation that the value is
        \ensuremath{\varid{v}\triangleq\varid{snd}\;(\varid{peel}\;\varid{l}_{2}.\varid{v})} in both cases.
        The proof for the \ensuremath{\conid{Uses}} component is quite algebraic;
        we will abbreviate \ensuremath{\varid{u}\triangleq\varid{fst}\;(\varid{peel}\;\varid{l}_{2}.\varid{v})}:
        \begin{hscode}\SaveRestoreHook
\column{B}{@{}>{\hspre}l<{\hspost}@{}}%
\column{9}{@{}>{\hspre}c<{\hspost}@{}}%
\column{9E}{@{}l@{}}%
\column{11}{@{}>{\hspre}l<{\hspost}@{}}%
\column{12}{@{}>{\hspre}l<{\hspost}@{}}%
\column{19}{@{}>{\hspre}l<{\hspost}@{}}%
\column{E}{@{}>{\hspre}l<{\hspost}@{}}%
\>[11]{}(\varid{apply}\;{}\<[19]%
\>[19]{}(\keyword{let}\;\langle \varcolor{\varphi}, \varid{v} \rangle\mathrel{=}\varid{l}_{2}\;\keyword{in}\;\langle \varcolor{\varphi}[\varid{x}\mapsto\concolor{\mathsf{U_0}}]\mathbin{+}(\varcolor{\varphi}\mathbin{!?}\varid{x})\mathbin{*}\varid{a}.\varcolor{\varphi}, \varid{v} \rangle)\;{}\<[E]%
\\
\>[19]{}(\keyword{let}\;\langle \varcolor{\varphi}, \varid{v} \rangle\mathrel{=}\varid{r}_{2}\;\keyword{in}\;\langle \varcolor{\varphi}[\varid{x}\mapsto\concolor{\mathsf{U_0}}]\mathbin{+}(\varcolor{\varphi}\mathbin{!?}\varid{x})\mathbin{*}\varid{a}.\varcolor{\varphi}, \varid{v} \rangle)).\varcolor{\varphi}{}\<[E]%
\\
\>[9]{}\mathrel{=}{}\<[9E]%
\>[12]{}\mbox{\commentbegin  Unfold \ensuremath{\varid{apply}}  \commentend}{}\<[E]%
\\
\>[9]{}\hsindent{2}{}\<[11]%
\>[11]{}(\varid{l}_{2}.\varcolor{\varphi})[\varid{x}\mapsto\concolor{\mathsf{U_0}}]\mathbin{+}(\varid{l}_{2}.\varcolor{\varphi}\mathbin{!?}\varid{x})\mathbin{*}\varid{a}.\varcolor{\varphi}\mathbin{+}\varid{u}\mathbin{*}((\varid{r}_{2}.\varcolor{\varphi})[\varid{x}\mapsto\concolor{\mathsf{U_0}}]\mathbin{+}(\varid{r}_{2}.\varcolor{\varphi}\mathbin{!?}\varid{x})\mathbin{*}\varid{a}.\varcolor{\varphi}){}\<[E]%
\\
\>[9]{}\mathrel{=}{}\<[9E]%
\>[12]{}\mbox{\commentbegin  Distribute \ensuremath{\varid{u}\mathbin{*}(\varcolor{\varphi}_{1}\mathbin{+}\varcolor{\varphi}_{2})\mathrel{=}\varid{u}\mathbin{*}\varcolor{\varphi}_{1}\mathbin{+}\varid{u}\mathbin{*}\varcolor{\varphi}_{2}}  \commentend}{}\<[E]%
\\
\>[9]{}\hsindent{2}{}\<[11]%
\>[11]{}(\varid{l}_{2}.\varcolor{\varphi})[\varid{x}\mapsto\concolor{\mathsf{U_0}}]\mathbin{+}(\varid{l}_{2}.\varcolor{\varphi}\mathbin{!?}\varid{x})\mathbin{*}\varid{a}.\varcolor{\varphi}\mathbin{+}\varid{u}\mathbin{*}(\varid{r}_{2}.\varcolor{\varphi})[\varid{x}\mapsto\concolor{\mathsf{U_0}}]\mathbin{+}\varid{u}\mathbin{*}(\varid{r}_{2}.\varcolor{\varphi}\mathbin{!?}\varid{x})\mathbin{*}\varid{a}.\varcolor{\varphi}{}\<[E]%
\\
\>[9]{}\mathrel{=}{}\<[9E]%
\>[12]{}\mbox{\commentbegin  Commute  \commentend}{}\<[E]%
\\
\>[9]{}\hsindent{2}{}\<[11]%
\>[11]{}(\varid{l}_{2}.\varcolor{\varphi})[\varid{x}\mapsto\concolor{\mathsf{U_0}}]\mathbin{+}\varid{u}\mathbin{*}(\varid{r}_{2}.\varcolor{\varphi})[\varid{x}\mapsto\concolor{\mathsf{U_0}}]\mathbin{+}(\varid{l}_{2}.\varcolor{\varphi}\mathbin{!?}\varid{x})\mathbin{*}\varid{a}.\varcolor{\varphi}\mathbin{+}\varid{u}\mathbin{*}(\varid{r}_{2}.\varcolor{\varphi}\mathbin{!?}\varid{x})\mathbin{*}\varid{a}.\varcolor{\varphi}{}\<[E]%
\\
\>[9]{}\mathrel{=}{}\<[9E]%
\>[12]{}\mbox{\commentbegin  \ensuremath{\varcolor{\varphi}_{1}[\varid{x}\mapsto\concolor{\mathsf{U_0}}]\mathbin{+}\varcolor{\varphi}_{2}[\varid{x}\mapsto\concolor{\mathsf{U_0}}]\mathrel{=}(\varcolor{\varphi}_{1}\mathbin{+}\varcolor{\varphi}_{2})[\varid{x}\mapsto\concolor{\mathsf{U_0}}]}, \ensuremath{\varid{u}\mathbin{*}\varcolor{\varphi}_{1}\mathbin{+}\varid{u}\mathbin{*}\varcolor{\varphi}_{2}\mathrel{=}\varid{u}\mathbin{*}(\varcolor{\varphi}_{1}\mathbin{+}\varcolor{\varphi}_{2})}  \commentend}{}\<[E]%
\\
\>[9]{}\hsindent{2}{}\<[11]%
\>[11]{}(\varid{l}_{2}.\varcolor{\varphi}\mathbin{+}\varid{u}\mathbin{*}\varid{r}_{2}.\varcolor{\varphi})[\varid{x}\mapsto\concolor{\mathsf{U_0}}]\mathbin{+}((\varid{l}_{2}.\varcolor{\varphi}\mathbin{+}\varid{u}\mathbin{*}\varid{r}_{2}.\varcolor{\varphi})\mathbin{!?}\varid{x})\mathbin{*}\varid{a}.\varcolor{\varphi}{}\<[E]%
\\
\>[9]{}\mathrel{=}{}\<[9E]%
\>[12]{}\mbox{\commentbegin  Refold \ensuremath{\varid{apply}}  \commentend}{}\<[E]%
\\
\>[9]{}\hsindent{2}{}\<[11]%
\>[11]{}\keyword{let}\;\langle \varcolor{\varphi}, \anonymous  \rangle\mathrel{=}\varid{apply}\;\varid{l}_{2}\;\varid{r}_{2}\;\keyword{in}\;\varcolor{\varphi}[\varid{x}\mapsto\concolor{\mathsf{U_0}}]\mathbin{+}(\varcolor{\varphi}\mathbin{!?}\varid{x})\mathbin{*}\varid{a}.\varcolor{\varphi}{}\<[E]%
\ColumnHook
\end{hscode}\resethooks

      \item \textbf{Case \ensuremath{\varid{con}}}.
        Goal: $\inferrule{\many{(\ensuremath{\varid{d}_{1}},\ensuremath{\varid{d}_{2}}) ∈ R_{\ensuremath{\varid{x}},\ensuremath{\varid{a}}}}}{(\ensuremath{\varid{con}\;\varid{k}\;(\many{\varid{d}_{1}})}, \ensuremath{\varid{con}\;\varid{k}\;(\many{\varid{d}_{2}})}) ∈ R_{\ensuremath{\varid{x}},\ensuremath{\varid{a}}}}$. \\
        We have shown that \ensuremath{\varid{apply}} is compatible with $R_{\ensuremath{\varid{x}},\ensuremath{\varid{a}}}$, and \ensuremath{\varid{foldl}}
        is so as well by parametricity.
        The field denotations \ensuremath{\many{\varid{d}_{1}}} and \ensuremath{\many{\varid{d}_{2}}} satisfy $R_{\ensuremath{\varid{x}},\ensuremath{\varid{a}}}$ by
        assumption; hence to show the goal it is sufficient to show that
        $(\ensuremath{\langle \varcolor{\varepsilon}, \conid{Rep}\;\concolor{\mathsf{U_\omega}} \rangle}, \ensuremath{\langle \varcolor{\varepsilon}, \conid{Rep}\;\concolor{\mathsf{U_\omega}} \rangle}) ∈ R_{\ensuremath{\varid{x}},\ensuremath{\varid{a}}}$.
        And that follows by reflexivity since \ensuremath{\varcolor{\varepsilon}\mathbin{?!}\varid{x}\mathrel{=}\concolor{\mathsf{U_0}}}.

      \item \textbf{Case \ensuremath{\varid{select}}}.
        Goal: $\inferrule{(\ensuremath{\varid{d}_{1}},\ensuremath{\varid{d}_{2}}) ∈ R_{\ensuremath{\varid{x}},\ensuremath{\varid{a}}} \\ (\ensuremath{\varid{fs}_{1}},\ensuremath{\varid{fs}_{2}}) ∈ \ensuremath{\conid{Tag}\mathbin{:\rightharpoonup}([\mskip1.5mu {R_{\varcolor{x},\varcolor{a}}}\mskip1.5mu]\to {R_{\varcolor{x},\varcolor{a}}})}}{(\ensuremath{\varid{select}\;\varid{d}_{1}\;\varid{fs}_{1}}, \ensuremath{\varid{select}\;\varid{d}_{2}\;\varid{fs}_{2}}) ∈ R_{\ensuremath{\varid{x}},\ensuremath{\varid{a}}}}$. \\
        Similar to the \ensuremath{\varid{con}} case, large parts of the implementation are
        compatible with \ensuremath{{R_{\varcolor{x},\varcolor{a}}}} already.
        With $(\ensuremath{\langle \varcolor{\varepsilon}, \conid{Rep}\;\concolor{\mathsf{U_\omega}} \rangle}, \ensuremath{\langle \varcolor{\varepsilon}, \conid{Rep}\;\concolor{\mathsf{U_\omega}} \rangle}) ∈ R_{\ensuremath{\varid{x}},\ensuremath{\varid{a}}}$ proved
        in the \ensuremath{\varid{con}} case, it remains to be shown that \ensuremath{\varid{lub}\mathbin{::}[\mskip1.5mu \concolor{\mathsf{D_U}}\mskip1.5mu]\to \concolor{\mathsf{D_U}}} and
        \ensuremath{(\sequ )\mathbin{::}\concolor{\mathsf{D_U}}\to \concolor{\mathsf{D_U}}\to \concolor{\mathsf{D_U}}} preserve \ensuremath{{R_{\varcolor{x},\varcolor{a}}}}.
        The proof for \ensuremath{(\sequ )} is very similar to but simpler than the \ensuremath{\varid{apply}}
        case, where a subexpression similar to \ensuremath{\langle \varcolor{\varphi}_{1}\mathbin{+}\varcolor{\varphi}_{2}, \varid{b} \rangle} occurs.
        The proof for \ensuremath{\varid{lub}} follows from the proof for the least upper bound
        operator \ensuremath{\mathbin{⊔}}.

        So let \ensuremath{(\varid{l}_{1},\varid{l}_{2}),(\varid{r}_{1},\varid{r}_{2})\in {R_{\varcolor{x},\varcolor{a}}}} and show that \ensuremath{(\varid{l}_{1}\mathbin{⊔}\varid{r}_{1},\varid{l}_{2}\mathbin{⊔}\varid{r}_{2})\in {R_{\varcolor{x},\varcolor{a}}}}.
        The assumptions imply that \ensuremath{\varid{l}_{1}.\varid{v}\mathbin{⊑}\varid{l}_{2}.\varid{v}} and \ensuremath{\varid{r}_{1}.\varid{v}\mathbin{⊑}\varid{r}_{2}.\varid{v}}, so
        \ensuremath{(\varid{l}_{1}\mathbin{⊔}\varid{r}_{1}).\varid{v}\mathbin{⊑}(\varid{l}_{2}\mathbin{⊔}\varid{r}_{2}).\varid{v}} follows by properties of least upper bound operators.

        Let us now consider the \ensuremath{\conid{Uses}} component.
        The goal is to show
        \begin{hscode}\SaveRestoreHook
\column{B}{@{}>{\hspre}l<{\hspost}@{}}%
\column{11}{@{}>{\hspre}l<{\hspost}@{}}%
\column{E}{@{}>{\hspre}l<{\hspost}@{}}%
\>[11]{}(\varid{l}_{1}\mathbin{⊔}\varid{r}_{1}).\varcolor{\varphi}\mathbin{⊑}(\keyword{let}\;\langle \varcolor{\varphi}, \varid{v} \rangle\mathrel{=}\varid{l}_{2}\mathbin{⊔}\varid{r}_{2}\;\keyword{in}\;\langle \varcolor{\varphi}[\varid{x}\mapsto\concolor{\mathsf{U_0}}]\mathbin{+}(\varcolor{\varphi}\mathbin{!?}\varid{x})\mathbin{*}\varid{a}.\varcolor{\varphi}, \varid{v} \rangle).\varcolor{\varphi}{}\<[E]%
\ColumnHook
\end{hscode}\resethooks

        For the proof, we need the algebraic identity \ensuremath{\keyword{\forall}\!\! \hsforall \;\varid{a}\;\varid{b}\;\varid{c}\;\varid{d}\hsdot{\circ }{.\ }\varid{a}\mathbin{+}\varid{c}\mathbin{⊔}\varid{b}\mathbin{+}\varid{d}\mathbin{⊑}\varid{a}\mathbin{⊔}\varid{b}\mathbin{+}\varid{c}\mathbin{⊔}\varid{d}} in \ensuremath{\conid{U}}.
        This can be proved by exhaustive enumeration of all 81 cases; the
        inequality is proper when \ensuremath{\varid{a}\mathrel{=}\varid{d}\mathrel{=}\concolor{\mathsf{U_1}}} and \ensuremath{\varid{b}\mathrel{=}\varid{c}\mathrel{=}\concolor{\mathsf{U_0}}} (or vice versa).
        Thus we conclude the proof:
        \begin{hscode}\SaveRestoreHook
\column{B}{@{}>{\hspre}l<{\hspost}@{}}%
\column{9}{@{}>{\hspre}c<{\hspost}@{}}%
\column{9E}{@{}l@{}}%
\column{11}{@{}>{\hspre}l<{\hspost}@{}}%
\column{12}{@{}>{\hspre}l<{\hspost}@{}}%
\column{E}{@{}>{\hspre}l<{\hspost}@{}}%
\>[11]{}(\varid{l}_{1}\mathbin{⊔}\varid{r}_{1}).\varcolor{\varphi}\mathrel{=}\varid{l}_{1}.\varcolor{\varphi}\mathbin{⊔}\varid{r}_{1}.\varcolor{\varphi}{}\<[E]%
\\
\>[9]{}\mathrel{=}{}\<[9E]%
\>[12]{}\mbox{\commentbegin  By assumption, \ensuremath{\varid{l}_{1}\mathbin{⊑}\varid{apply}\;(\varid{abs}\;\varid{x}\;\varid{l}_{2})} and \ensuremath{\varid{r}_{1}\mathbin{⊑}\varid{apply}\;(\varid{abs}\;\varid{x}\;\varid{r}_{2})}; monotonicity  \commentend}{}\<[E]%
\\
\>[9]{}\hsindent{2}{}\<[11]%
\>[11]{}((\varid{l}_{2}.\varcolor{\varphi})[\varid{x}\mapsto\concolor{\mathsf{U_0}}]\mathbin{+}(\varid{l}_{2}.\varcolor{\varphi}\mathbin{!?}\varid{x})\mathbin{*}\varid{a}.\varcolor{\varphi})\mathbin{⊔}((\varid{r}_{2}.\varcolor{\varphi})[\varid{x}\mapsto\concolor{\mathsf{U_0}}]\mathbin{+}(\varid{r}_{2}.\varcolor{\varphi}\mathbin{!?}\varid{x})\mathbin{*}\varid{a}.\varcolor{\varphi}){}\<[E]%
\\
\>[9]{}\mathbin{⊑}{}\<[9E]%
\>[12]{}\mbox{\commentbegin  Follows from \ensuremath{\keyword{\forall}\!\! \hsforall \;\varid{a}\;\varid{b}\;\varid{c}\;\varid{d}\hsdot{\circ }{.\ }\varid{a}\mathbin{+}\varid{c}\mathbin{⊔}\varid{b}\mathbin{+}\varid{d}\mathbin{⊑}\varid{a}\mathbin{⊔}\varid{b}\mathbin{+}\varid{c}\mathbin{⊔}\varid{d}} in \ensuremath{\conid{U}}  \commentend}{}\<[E]%
\\
\>[9]{}\hsindent{2}{}\<[11]%
\>[11]{}((\varid{l}_{2}.\varcolor{\varphi})[\varid{x}\mapsto\concolor{\mathsf{U_0}}]\mathbin{⊔}(\varid{r}_{2}.\varcolor{\varphi})[\varid{x}\mapsto\concolor{\mathsf{U_0}}])\mathbin{+}((\varid{l}_{2}.\varcolor{\varphi}\mathbin{!?}\varid{x})\mathbin{*}\varid{a}.\varcolor{\varphi}\mathbin{⊔}(\varid{r}_{2}.\varcolor{\varphi}\mathbin{!?}\varid{x})\mathbin{*}\varid{a}.\varcolor{\varphi}){}\<[E]%
\\
\>[9]{}\mathrel{=}{}\<[9E]%
\>[12]{}\mbox{\commentbegin  \ensuremath{\varcolor{\varphi}_{1}[\varid{x}\mapsto\concolor{\mathsf{U_0}}]\mathbin{⊔}\varcolor{\varphi}_{2}[\varid{x}\mapsto\concolor{\mathsf{U_0}}]\mathrel{=}(\varcolor{\varphi}_{1}\mathbin{⊔}\varcolor{\varphi}_{2})[\varid{x}\mapsto\concolor{\mathsf{U_0}}]}  \commentend}{}\<[E]%
\\
\>[9]{}\hsindent{2}{}\<[11]%
\>[11]{}((\varid{l}_{2}\mathbin{⊔}\varid{r}_{2}).\varcolor{\varphi})[\varid{x}\mapsto\concolor{\mathsf{U_0}}]\mathbin{+}((\varid{l}_{2}\mathbin{⊔}\varid{r}_{2}).\varcolor{\varphi}\mathbin{!?}\varid{x})\mathbin{*}\varid{a}.\varcolor{\varphi}{}\<[E]%
\\
\>[9]{}\mathrel{=}{}\<[9E]%
\>[12]{}\mbox{\commentbegin  Refold \ensuremath{\langle \varcolor{\varphi}, \varid{v} \rangle}  \commentend}{}\<[E]%
\\
\>[9]{}\hsindent{2}{}\<[11]%
\>[11]{}(\keyword{let}\;\langle \varcolor{\varphi}, \varid{v} \rangle\mathrel{=}\varid{l}_{2}\mathbin{⊔}\varid{r}_{2}\;\keyword{in}\;\langle \varcolor{\varphi}[\varid{x}\mapsto\concolor{\mathsf{U_0}}]\mathbin{+}(\varcolor{\varphi}\mathbin{!?}\varid{x})\mathbin{*}\varid{a}.\varcolor{\varphi}, \varid{v} \rangle).\varcolor{\varphi}{}\<[E]%
\ColumnHook
\end{hscode}\resethooks

      \item \textbf{Case \ensuremath{\varid{bind}}}.
        Goal: $\inferrule{\forall (\ensuremath{\varid{d}_{1}},\ensuremath{\varid{d}_{2}}) ∈ R_{\ensuremath{\varid{x}},\ensuremath{\varid{a}}} \implies (\ensuremath{\varid{f}_{1}\;\varid{d}_{1}}, \ensuremath{\varid{f}_{2}\;\varid{d}_{2}}), (\ensuremath{\varid{g}_{1}\;\varid{d}_{1}}, \ensuremath{\varid{g}_{2}\;\varid{d}_{2}}) ∈ R_{\ensuremath{\varid{x}},\ensuremath{\varid{a}}}}{(\ensuremath{\varid{bind}\;\varid{f}_{1}\;\varid{g}_{1}}, \ensuremath{\varid{bind}\;\varid{f}_{2}\;\varid{g}_{2}}) ∈ R_{\ensuremath{\varid{x}},\ensuremath{\varid{a}}}}$. \\
        By the assumptions, the definition \ensuremath{\varid{bind}\;\varid{f}\;\varid{g}\mathrel{=}\varid{g}\;(\varid{kleeneFix}\;\varid{f})}
        preserves \ensuremath{{R_{\varcolor{x},\varcolor{a}}}} if \ensuremath{\varid{kleeneFix}} does.
        Since \ensuremath{\varid{kleeneFix}\mathbin{::}\conid{Lat}\;\varid{a}\Rightarrow (\varid{a}\to \varid{a})\to \varid{a}} is parametric, it suffices
        to show that the instance of \ensuremath{\conid{Lat}} preserves \ensuremath{{R_{\varcolor{x},\varcolor{a}}}}.
        We have already shown that \ensuremath{\mathbin{⊔}} preserves \ensuremath{{R_{\varcolor{x},\varcolor{a}}}}, and we have also shown
        that \ensuremath{\varid{stuck}\mathrel{=}\bot} preserves \ensuremath{{R_{\varcolor{x},\varcolor{a}}}}.
        Hence we have shown the goal.

        In \Cref{sec:usage-analysis}, we introduced a widening operator
        \ensuremath{\varid{widen}\mathbin{::}\concolor{\mathsf{D_U}}\to \concolor{\mathsf{D_U}}} to the definition of \ensuremath{\varid{bind}}, that is, we defined
        \ensuremath{\varid{bind}\;\varid{rhs}\;\varid{body}\mathrel{=}\varid{body}\;(\varid{kleeneFix}\;(\varid{widen}\hsdot{\circ }{.\ }\varid{rhs}))}.
        For such an operator, we would additionally need to show that \ensuremath{\varid{widen}}
        preserves \ensuremath{{R_{\varcolor{x},\varcolor{a}}}}.
        Since the proposed cutoff operator in \Cref{sec:usage-analysis} only
        affects the \ensuremath{\concolor{\mathsf{Value_U}}} component, the only proof obligation is to show
        monotonicity:
        \ensuremath{\keyword{\forall}\!\! \hsforall \;\varid{d}_{1}\;\varid{d}_{2}\hsdot{\circ }{.\ }\varid{d}_{1}.\varid{v}\mathbin{⊑}\varid{d}_{2}.\varid{v}\implies(\varid{widen}\;\varid{d}_{1}).\varid{v}\mathbin{⊑}(\varid{widen}\;\varid{d}_{2}).\varid{v}}.
        This is a requirement that our our widening operator must satisfy anyway.
    \end{itemize}
\end{itemize}

This concludes the proof that $(\ensuremath{\varid{f}\;\varid{a}}, \ensuremath{\varid{f}\;(\varid{pre}\;\varid{x})}) ∈ R_{\ensuremath{\varid{x}},\ensuremath{\varid{a}}}$.
What remains to be shown is that this implies the overall goal
\ensuremath{\varid{f}\;\varid{a}\mathbin{⊑}\varid{apply}\;(\varid{fun}\;\varid{x}\;\varid{f})\;\varid{a}}:
\begin{hscode}\SaveRestoreHook
\column{B}{@{}>{\hspre}c<{\hspost}@{}}%
\column{BE}{@{}l@{}}%
\column{3}{@{}>{\hspre}l<{\hspost}@{}}%
\column{7}{@{}>{\hspre}l<{\hspost}@{}}%
\column{E}{@{}>{\hspre}l<{\hspost}@{}}%
\>[3]{}(\varid{f}\;\varid{a},\varid{f}\;(\varid{pre}\;\varid{x}))\in {R_{\varcolor{x},\varcolor{a}}}{}\<[E]%
\\
\>[B]{}\Longleftrightarrow{}\<[BE]%
\>[7]{}\mbox{\commentbegin  Definition of \ensuremath{{R_{\varcolor{x},\varcolor{a}}}}  \commentend}{}\<[E]%
\\
\>[B]{}\hsindent{3}{}\<[3]%
\>[3]{}\varid{f}\;\varid{a}\mathbin{⊑}\keyword{let}\;\langle \varcolor{\varphi}, \varid{v} \rangle\mathrel{=}\varid{f}\;(\varid{pre}\;\varid{x})\;\keyword{in}\;\langle \varcolor{\varphi}[\varid{x}\mapsto\concolor{\mathsf{U_0}}]\mathbin{+}(\varcolor{\varphi}\mathbin{!?}\varid{x})\mathbin{*}\varid{a}.\varcolor{\varphi}, \varid{v} \rangle{}\<[E]%
\\
\>[B]{}\Longleftrightarrow{}\<[BE]%
\>[7]{}\mbox{\commentbegin  refold \ensuremath{\varid{apply}}, \ensuremath{\varid{fun}}  \commentend}{}\<[E]%
\\
\>[B]{}\hsindent{3}{}\<[3]%
\>[3]{}\varid{f}\;\varid{a}\mathbin{⊑}\varid{apply}\;(\varid{fun}\;\varid{x}\;\varid{f})\;\varid{a}{}\<[E]%
\ColumnHook
\end{hscode}\resethooks
\end{proof}

As can be seen, its statement does not refer to the interpreter definition
\ensuremath{\mathcal{S}\denot{\wild}_{\wild}} \emph{at all}.
Instead, the complexity of its proof scales with the number of \emph{abstract
operations} supported in the semantic domain of the interpreter for a much more
\emph{modular} proof.
This modular proof appeals to parametricity~\citep{Reynolds:83} of \ensuremath{\varid{f}}'s
polymorphic type \ensuremath{\keyword{\forall}\!\! \hsforall \;\varid{d}\hsdot{\circ }{.\ }(\conid{Trace}\;\varid{d},\conid{Domain}\;\varid{d},\conid{HasBind}\;\varid{d})\Rightarrow \varid{d}\to \varid{d}}.
Of course, any function defined by the generic interpreter satisfies this
requirement.
Without the premise of \textsc{Beta-App}, the law cannot be proved
for usage analysis; we give a counterexample in the Appendix
(\Cref*{ex:syntactic-beta-app}).

\begin{toappendix}
The following example shows why we need the ``polymorphic'' premises in
\Cref{fig:abstraction-laws}.
It defines a monotone, but non-polymorphic \ensuremath{\varid{f}\mathbin{::}\concolor{\mathsf{D_U}}\to \concolor{\mathsf{D_U}}} for which
\ensuremath{\varid{f}\;\varid{a}\; \not⊑ \varid{apply}\;(\varid{fun}\;\varid{x}\;\varid{f})\;\varid{a}}.
So if we did not have the premises, we would not be able to prove usage analysis
correct.
\begin{example}
\label{ex:syntactic-beta-app}
Let \ensuremath{\varid{z}\not=\varid{x}\not=\varid{y}}.
The monotone function \ensuremath{\varid{f}} defined as follows
\begin{center}
\begin{hscode}\SaveRestoreHook
\column{B}{@{}>{\hspre}l<{\hspost}@{}}%
\column{3}{@{}>{\hspre}l<{\hspost}@{}}%
\column{E}{@{}>{\hspre}l<{\hspost}@{}}%
\>[3]{}\varid{f}\mathbin{::}\concolor{\mathsf{D_U}}\to \concolor{\mathsf{D_U}}{}\<[E]%
\\
\>[3]{}\varid{f}\;\langle \varcolor{\varphi}, \anonymous  \rangle\mathrel{=}\keyword{if}\;\varcolor{\varphi}\mathbin{!?}\varid{y}\mathbin{⊑}\concolor{\mathsf{U_0}}\;\keyword{then}\;\langle \varcolor{\varepsilon}, \conid{Rep}\;\concolor{\mathsf{U_\omega}} \rangle\;\keyword{else}\;\langle [\varid{z}\mapsto\concolor{\mathsf{U_1}}], \conid{Rep}\;\concolor{\mathsf{U_\omega}} \rangle{}\<[E]%
\ColumnHook
\end{hscode}\resethooks
\end{center}
violates \ensuremath{\varid{f}\;\varid{a}\mathbin{⊑}\varid{apply}\;(\varid{fun}\;\varid{x}\;\varid{f})\;\varid{a}}.
To see that, let \ensuremath{\varid{a}\triangleq\langle [\varid{y}\mapsto\concolor{\mathsf{U_1}}], \conid{Rep}\;\concolor{\mathsf{U_\omega}} \rangle\mathbin{::}\concolor{\mathsf{D_U}}} and consider
\[
  \ensuremath{\varid{f}\;\varid{a}\mathrel{=}\langle [\varid{z}\mapsto\concolor{\mathsf{U_1}}], \conid{Rep}\;\concolor{\mathsf{U_\omega}} \rangle\; \not⊑ \langle \varcolor{\varepsilon}, \conid{Rep}\;\concolor{\mathsf{U_\omega}} \rangle\mathrel{=}\varid{apply}\;(\varid{fun}\;\varid{x}\;\varid{f})\;\varid{a}}.
\]
\end{example}
\end{toappendix}

To prove \Cref{thm:usage-subst-sem}, we encode \ensuremath{\varid{f}}'s type in System $F$
as $f : \forall X.\ \mathsf{Dict}(X) \to X \to X$ (where $\mathsf{Dict}(\ensuremath{\varid{d}})$
encodes the type class dictionaries of \ensuremath{(\conid{Trace}\;\varid{d},\conid{Domain}\;\varid{d},\conid{HasBind}\;\varid{d})}) and
derive the following free theorem:
\[
  \forall A, B.\
  \forall R ⊆ A \times B.\
  \forall (\mathit{inst_1}, \mathit{inst_2}) ∈ \mathsf{Dict}(R).\
  \forall (d_1,d_2) ∈ R.\
  (f_A(\mathit{inst_1})(d_1), f_B(\mathit{inst_2})(d_2)) ∈ R
\]
The key to making use of parametricity is to find a useful instantiation of this
theorem, of relation $R$ in particular.
We successfully proved \textsc{Beta-App} with the following instantiation:
\[\begin{array}{c}
  A \triangleq B \triangleq \ensuremath{\concolor{\mathsf{D_U}}}, \qquad \mathit{inst_1} \triangleq \mathit{inst_2} \triangleq \mathit{inst}, \qquad d_1 \triangleq a, \qquad d_2 \triangleq \mathit{pre}(x) \\
  R_{x,a}(d_1,d_2) \triangleq \forall g.\ d_1 = g(a) \land d_2 = g(\mathit{pre}(x)) \implies g(a) ⊑ \mathit{apply}(\mathit{fun}(x,g),a)  \\
\end{array}\]
where $\mathit{pre}(x) \triangleq \ensuremath{\langle [\varid{x}\mapsto\concolor{\mathsf{U_1}}], \conid{Rep}\;\concolor{\mathsf{U_\omega}} \rangle}$ is the
argument that the implementation of \ensuremath{\varid{fun}\;\varid{x}\;\varid{f}} passes to \ensuremath{\varid{f}} and $\mathit{inst}$ is
the canonical instance dictionary at \ensuremath{\concolor{\mathsf{D_U}}}.
This yields the following inference rule:
\[
\inferrule[]
  { a ⊑ \mathit{apply}(\mathit{fun}(x,\mathit{id}),a)
  \\ (\mathit{inst},\mathit{inst}) ∈ \mathsf{Dict}(R_{x,a})}
  { f_\ensuremath{\concolor{\mathsf{D_U}}}(\mathit{inst})(a) ⊑ \mathit{apply}(\mathit{fun}(x,f_\ensuremath{\concolor{\mathsf{D_U}}}(\mathit{inst})),a) }
\]
where $(\mathit{inst},\mathit{inst}) ∈ \mathsf{Dict}(R_{x,a})$ entails showing
one lemma per type class method, such as
\[
  \forall f_1,f_2.\ (\forall d_1,d_2.\ R_{x,a}(d_1,d_2) \implies R_{x,a}(f_1(d_1),f_2(d_2))) \implies R_{x,a}(\mathit{fun}(f_1),\mathit{fun}(f_2)).
\]
Discharging each of these 7+1 subgoals concludes the proof of \Cref{thm:usage-subst-sem}.
Next, we will use \Cref{thm:usage-subst-sem} to instantiate
\Cref{thm:abstract-by-need} for usage analysis.

\subsection{A Simpler Proof That Usage Analysis Infers Absence}
\label{sec:usage-sound}

Equipped with the generic abstract interpretation \Cref{thm:abstract-by-need},
we will prove in this subsection that usage analysis from \Cref{sec:abstraction}
infers absence in the same sense as absence analysis from \Cref{sec:problem}.
The reason we do so is to evaluate the proof complexity of our approach against
the preservation-style proof framework in \Cref{sec:problem}.

Specifically, \Cref{thm:abstract-by-need} makes it very simple to relate
by-need semantics with usage analysis, taking the place of the
absence-analysis-specific preservation lemma:

\begin{corollaryrep}[\ensuremath{\mathcal{S}_{\mathbf{usage}}\denot{\wild}} abstracts \ensuremath{\mathcal{S}_{\mathbf{need}}\denot{\wild}}]
\label{thm:usage-abstracts-need}
Let \ensuremath{\varid{e}} be an expression and $α_{\mathcal{S}}$ the abstraction function from
\Cref{fig:abstract-name-need}.
Then $α_{\mathcal{S}}(\ensuremath{\mathcal{S}_{\mathbf{need}}\denot{\varid{e}}}) ⊑ \ensuremath{\mathcal{S}_{\mathbf{usage}}\denot{\varid{e}}}$.
\end{corollaryrep}
\begin{proof}
By \Cref{thm:abstract-by-need}, it suffices to show the abstraction laws
in \Cref{fig:abstraction-laws}.
\begin{itemize}
  \item \textsc{Mono}:
    Always immediate, since \ensuremath{\mathbin{⊔}} and \ensuremath{\mathbin{+}} are the only functions matching on \ensuremath{\conid{U}},
    and these are monotonic.
  \item \textsc{Stuck-App}, \textsc{Stuck-Sel}:
    Trivial, since \ensuremath{\varid{stuck}\mathrel{=}\bot}.
  \item \textsc{Step-App}, \textsc{Step-Sel}, \textsc{Step-Inc}, \textsc{Update}:
    Follows by unfolding \ensuremath{\varid{step}}, \ensuremath{\varid{apply}}, \ensuremath{\varid{select}} and associativity of \ensuremath{\mathbin{+}}.
  \item \textsc{Beta-App}:
    Follows from \Cref{thm:usage-subst-sem}.
  \item \textsc{Beta-Sel}:
    Follows by unfolding \ensuremath{\varid{select}} and \ensuremath{\varid{con}} and applying a lemma very similar to
    \Cref{thm:usage-subst-sem} multiple times.
  \item \textsc{ByName-Bind}:
    \ensuremath{\varid{kleeneFix}} approximates the least fixpoint \ensuremath{\varid{lfp}} since the iteratee \ensuremath{\varid{rhs}}
    is monotone.
    We have said elsewhere that we omit a widening operator for \ensuremath{\varid{rhs}} that
    guarantees that \ensuremath{\varid{kleeneFix}} terminates.
\end{itemize}
\end{proof}

The next step is to leave behind the definition of absence in terms of the LK
machine in favor of one using \ensuremath{\mathcal{S}_{\mathbf{need}}\denot{\wild}_{\wild}}.
That is a welcome simplification because it leaves us with a single semantic
artefact --- the denotational interpreter --- instead of an operational
semantics and a separate static analysis as in \Cref{sec:problem}.
Thanks to adequacy (\Cref{thm:need-adequate-strong}), this new notion is not a
redefinition but provably equivalent to \Cref{defn:absence}:
\begin{lemmarep}[Denotational absence]
  \label{thm:absence-denotational}
  Variable \ensuremath{\varid{x}} is used in \ensuremath{\varid{e}} if and only if there exists a by-need evaluation context
  \ensuremath{\varid{E}} and expression \ensuremath{\varid{e}'} such that the trace
  \ensuremath{\mathcal{S}_{\mathbf{need}}\denot{\varid{E}[\conid{Let}\;\varid{x}\;\varid{e}'\;\varid{e}]}_{\varcolor{\varepsilon}}(\varcolor{\varepsilon})} contains a \ensuremath{\conid{Look}\;\varid{x}} event.
  Otherwise, \ensuremath{\varid{x}} is absent in \ensuremath{\varid{e}}.
\end{lemmarep}
\begin{proof}
Since \ensuremath{\varid{x}} is used in \ensuremath{\varid{e}}, there exists a trace
\[
  (\Let{\px}{\pe'}{\pe},ρ,μ,κ) \smallstep^* ... \smallstep[\LookupT(\px)] ...
\]

We proceed as follows:
\begin{DispWithArrows}[fleqn,mathindent=0em]
                          & (\Let{\px}{\pe'}{\pe},ρ,μ,κ) \smallstep^* ... \smallstep[\LookupT(\px)] ...
                          \label{arrow:usg-context}
                          \Arrow{$\pE \triangleq \mathit{trans}(\hole,ρ,μ,κ)$} \\
  {}\Longleftrightarrow{} & \init(\pE[\Let{\px}{\pe'}{\pe}]) \smallstep^* ... \smallstep[\LookupT(\px)] ...
                          \Arrow{Apply $α_{\STraces}$ (\Cref{fig:eval-correctness})} \\
  {}\Longleftrightarrow{} & α_{\STraces}(\init(\pE[\Let{\px}{\pe'}{\pe}]) \smallstep^*, []) = \ensuremath{\mathbin{...}\conid{Step}\;(\conid{Look}\;\varid{x})\mathbin{...}}
                          \Arrow{\Cref{thm:need-adequate-strong}} \\
  {}\Longleftrightarrow{} & \ensuremath{\mathcal{S}_{\mathbf{need}}\denot{\varid{E}[\conid{Let}\;\varid{x}\;\varid{e}'\;\varid{e}]}_{\varcolor{\varepsilon}}(\varcolor{\varepsilon})} = \ensuremath{\mathbin{...}\conid{Step}\;(\conid{Look}\;\varid{x})\mathbin{...}}
\end{DispWithArrows}
Note that the trace we start with is not necessarily an maximal trace,
so step \labelcref{arrow:usg-context} finds a prefix that makes the trace maximal.
We do so by reconstructing the syntactic \emph{evaluation context} $\pE$
with $\mathit{trans}$ (\cf \Cref{thm:translation}) such that
\[
  \init(\pE[\Let{\px}{\pe'}{\pe}]) \smallstep^* (\Let{\px}{\pe'}{\pe},ρ,μ,κ)
\]
Then the trace above is contained in the maximal trace starting in
$\init(\pE[\Let{\px}{\pe'}{\pe}])$ and it contains at least one $\LookupT(\px)$
transition.

The next two steps apply adequacy of \ensuremath{\mathcal{S}_{\mathbf{need}}\denot{\wild}_{\wild}(\wild)} to the trace, making the shift
from LK trace to denotational interpreter.
\end{proof}

We define the by-need evaluation contexts for our language in the Appendix.
Thus insulated from the LK machine, we may restate and prove
\Cref{thm:absence-correct} for usage analysis.

\begin{theoremrep}[\ensuremath{\mathcal{S}_{\mathbf{usage}}\denot{\wild}_{\wild}} infers absence]
  \label{thm:usage-absence}
  Let \ensuremath{\varcolor{\rho}_e\triangleq[\many{\varid{y}\mapsto\langle [\varid{y}\mapsto\concolor{\mathsf{U_1}}], \conid{Rep}\;\concolor{\mathsf{U_\omega}} \rangle}]} be the initial
  environment with an entry for every free variable \ensuremath{\varid{y}} of an expression \ensuremath{\varid{e}}.
  If \ensuremath{\mathcal{S}_{\mathbf{usage}}\denot{\varid{e}}_{\varcolor{\rho}_e}\mathrel{=}\langle \varcolor{\varphi}, \varid{v} \rangle} and \ensuremath{\varcolor{\varphi}\mathbin{!?}\varid{x}\mathrel{=}\concolor{\mathsf{U_0}}},
  then \ensuremath{\varid{x}} is absent in \ensuremath{\varid{e}}.
\end{theoremrep}
\begin{proofsketch}
If \ensuremath{\varid{x}} is used in \ensuremath{\varid{e}}, there is a trace \ensuremath{\mathcal{S}_{\mathbf{need}}\denot{\varid{E}[\conid{Let}\;\varid{x}\;\varid{e}'\;\varid{e}]}_{\varcolor{\varepsilon}}(\varcolor{\varepsilon})} containing a \ensuremath{\conid{Look}\;\varid{x}} event.
The abstraction function $α_{\mathcal{S}}$ induced by \ensuremath{\concolor{\mathsf{D_U}}} aggregates lookups in the
trace into a \ensuremath{\varid{φ'}\mathbin{::}\conid{Uses}}, \eg
  $β_\Traces(\LookupT(i) \smallstep \LookupT(x) \smallstep \LookupT(i) \smallstep \langle ... \rangle)
    = \ensuremath{\langle [\mskip1.5mu \varid{i}\; ↦ \concolor{\mathsf{U_\omega}},\varid{x}\; ↦ \concolor{\mathsf{U_1}}\mskip1.5mu], \mathbin{...} \rangle}$.
Clearly, it is \ensuremath{\varid{φ'}\mathbin{!?}\varid{x}\mathbin{⊒}\concolor{\mathsf{U_1}}}, because there is at least one \ensuremath{\conid{Look}\;\varid{x}}.
\Cref{thm:usage-abstracts-need} and a context invariance
\Cref*{thm:usage-bound-vars-context} prove that the computed \ensuremath{\varcolor{\varphi}}
approximates \ensuremath{\varid{φ'}}, so \ensuremath{\varcolor{\varphi}\mathbin{!?}\varid{x}\mathbin{⊒}\varid{φ'}\mathbin{!?}\varid{x}\mathbin{⊒}\concolor{\mathsf{U_1}}\not=\concolor{\mathsf{U_0}}}.
\end{proofsketch}
\begin{proof}
We show the contraposition, that is,
if \ensuremath{\varid{x}} is used in \ensuremath{\varid{e}}, then \ensuremath{\varcolor{\varphi}\mathbin{!?}\varid{x}\not=\concolor{\mathsf{U_0}}}.

By \Cref{thm:absence-denotational}, there exists \ensuremath{\varid{E}}, \ensuremath{\varid{e}'} such that
\[
  \ensuremath{\mathcal{S}_{\mathbf{need}}\denot{\varid{E}[\conid{Let}\;\varid{x}\;\varid{e}'\;\varid{e}]}_{\varcolor{\varepsilon}}(\varcolor{\varepsilon})\mathrel{=}\mathbin{...}\;\conid{Step}\;(\conid{Look}\;\varid{x})\;\mathbin{...}} .
\]

This is the big picture of how we prove \ensuremath{\varcolor{\varphi}\mathbin{!?}\varid{x}\not=\concolor{\mathsf{U_0}}} from this fact:
\begin{DispWithArrows}[fleqn,mathindent=0em]
                      & \ensuremath{\mathcal{S}_{\mathbf{need}}\denot{\varid{E}[\conid{Let}\;\varid{x}\;\varid{e}'\;\varid{e}]}_{\varcolor{\varepsilon}}(\varcolor{\varepsilon})} = \ensuremath{\mathbin{...}\conid{Step}\;(\conid{Look}\;\varid{x})\mathbin{...}}
                      \label{arrow:usg-instr}
                      \Arrow{Usage instrumentation} \\
  {}\Longrightarrow{} & \ensuremath{(\alpha\;\{\mathcal{S}_{\mathbf{need}}\denot{\varid{E}[\conid{Let}\;\varid{x}\;\varid{e}'\;\varid{e}]}_{\varcolor{\varepsilon}}(\varcolor{\varepsilon})\}).\varcolor{\varphi}} ⊒ [\ensuremath{\varid{x}} ↦ \ensuremath{\concolor{\mathsf{U_1}}}]
                      \label{arrow:usg-abs}
                      \Arrow{\Cref{thm:usage-abstracts-need}} \\
  {}\Longrightarrow{} & \ensuremath{(\mathcal{S}_{\mathbf{usage}}\denot{\varid{E}[\conid{Let}\;\varid{x}\;\varid{e}'\;\varid{e}]}_{\varcolor{\varepsilon}}).\varcolor{\varphi}} ⊒ [\ensuremath{\varid{x}} ↦ \ensuremath{\concolor{\mathsf{U_1}}}]
                      \label{arrow:usg-anal-context}
                      \Arrow{\Cref{thm:usage-bound-vars-context}} \\
  {}\Longrightarrow{} & \ensuremath{\concolor{\mathsf{U_\omega}}\mathbin{*}(\mathcal{S}_{\mathbf{usage}}\denot{\varid{e}}_{\varcolor{\rho}_e}).\varcolor{\varphi}} = \ensuremath{\concolor{\mathsf{U_\omega}}\mathbin{*}\varcolor{\varphi}} ⊒ [\ensuremath{\varid{x}} ↦ \ensuremath{\concolor{\mathsf{U_1}}}]
                      \Arrow{\ensuremath{\concolor{\mathsf{U_\omega}}\mathbin{*}\concolor{\mathsf{U_0}}\mathrel{=}\concolor{\mathsf{U_0}}\mathbin{⊏}\concolor{\mathsf{U_1}}}} \\
  {}\Longrightarrow{} & \ensuremath{\varcolor{\varphi}\mathbin{!?}\varid{x}\not=\concolor{\mathsf{U_0}}}
\end{DispWithArrows}

Step \labelcref{arrow:usg-instr} instruments the trace by applying the usage
abstraction function \ensuremath{\alpha\rightleftarrows\anonymous \triangleq\varid{nameNeed}}.
This function will replace every \ensuremath{\conid{Step}} constructor
with the \ensuremath{\varid{step}} implementation of \ensuremath{\concolor{\mathsf{T_U}}};
The \ensuremath{\conid{Look}\;\varid{x}} event on the right-hand side implies that its image under \ensuremath{\alpha} is
at least $[\ensuremath{\varid{x}} ↦ \ensuremath{\concolor{\mathsf{U_1}}}]$.

Step \labelcref{arrow:usg-abs} applies the central abstract interpretation
\Cref{thm:usage-abstracts-need} that is the main topic of this section,
abstracting the dynamic trace property in terms of the static semantics.

Finally, step \labelcref{arrow:usg-anal-context} applies
\Cref{thm:usage-bound-vars-context}, which proves that absence information
doesn't change when an expression is put in an arbitrary evaluation context.
The final step is just algebra.
\end{proof}

\subsection{Comparison to Ad-hoc Preservation Proof}

Let us compare to the preservation-style proof framework in \Cref{sec:problem}.
\begin{itemize}
  \item
    Where there were multiple separate \emph{semantic artefacts} in
    \Cref{sec:problem}, such as a small-step semantics and an extension
    of the absence analysis to machine configurations $σ$ in order to
    state preservation (\Cref*{thm:preserve-absent}), our proof only has a
    single semantic artefact that needs to be defined and understood: the
    denotational interpreter, albeit with different instantiations.
  \item
    What is more important is that a simple proof for
    \Cref{thm:usage-abstracts-need} in half a page (we encourage the
    reader to take a look) replaces a tedious, error-prone and incomplete
    \emph{proof for the preservation lemma} of \Cref{sec:problem}
    (\Cref*{thm:preserve-absent}).
    Of course, in this section we lean on \Cref{thm:abstract-by-need} to prove what
    amounts to a preservation lemma; the difference is that our proof properly
    accounts for heap update and can be shared with other analyses that are
    sound \wrt by-name and by-need.
    Thus, we achieve our goal of disentangling semantic details from the proof.
  \item
    Furthermore, the proof for \Cref{thm:usage-abstracts-need} by parametricity
    in this section is \emph{modular}, in contrast to \Cref{thm:absence-subst}
    which is proven by cases over the interpreter definition.
    More work needs to be done to achieve a modular proof of
    the underlying \Cref{thm:abstract-by-need}, however.
    The (omitted) proof for abstract by-\textbf{name} interpretation in the
    Appendix (\Cref*{thm:abstract-by-name}) is already modular.
\end{itemize}

\begin{toappendix}
In the proof for \Cref{thm:usage-absence} we exploit that usage analysis is
somewhat invariant under wrapping of \emph{by-need evaluation contexts}, roughly
\ensuremath{\concolor{\mathsf{U_\omega}}\mathbin{*}\mathcal{S}_{\mathbf{usage}}\denot{\varid{e}}_{\varcolor{\rho}_e}\mathrel{=}\mathcal{S}_{\mathbf{usage}}\denot{\varid{E}[\varid{e}]}_{\varcolor{\varepsilon}}}. To prove that, we first
need to define what the by-need evaluation contexts of our language are.

\citet[Lemma 4.1]{MoranSands:99} describe a principled way to derive the
call-by-need evaluation contexts $\pE$ from machine contexts $(\hole,μ,κ)$ of
the Sestoft Mark I machine; a variant of \Cref{fig:lk-semantics} that uses
syntactic substitution of variables instead of delayed substitution and
addresses, so $μ ∈ \Var \pfun \Exp$ and no closures are needed.

We follow their approach, but inline applicative contexts,%
\footnote{The result is that of \citet[Figure 3]{Ariola:95} in A-normal form and
extended with data types.}
thus defining the by-need evaluation contexts with hole $\hole$ for our language as
\[\begin{array}{lcl}
  \pE ∈ \EContexts & ::= & \hole \mid \pE~\px \mid \Case{\pE}{\Sel} \mid \Let{\px}{\pe}{\pE} \mid \Let{\px}{\pE}{\pE[\px]} \\
\end{array}\]
The correspondence to Mark I machine contexts $(\hole,μ,κ)$ is encoded by the
following translation function $\mathit{trans}$ that translates from mark I
machine contexts  $(\hole,μ,κ)$ to evaluation contexts $\pE$.
\[\begin{array}{lcl}
  \mathit{trans} & : & \EContexts \times \Heaps \times \Continuations \to \EContexts \\
  \mathit{trans}(\pE,[\many{\px ↦ \pe}],κ) & = & \Letmany{\px}{\pe}{\mathit{trans}(\pE,[],κ)} \\
  \mathit{trans}(\pE,[],\ApplyF(\px) \pushF κ) & = & \mathit{trans}(\pE~\px,[],κ) \\
  \mathit{trans}(\pE,[],\SelF(\Sel) \pushF κ) & = & \mathit{trans}(\Case{\pE}{\Sel},[],κ) \\
  \mathit{trans}(\pE,[],\UpdateF(\px) \pushF κ) & = & \Let{\px}{\pE}{\mathit{trans}(\hole, [], κ)[\px]} \\
  \mathit{trans}(\pE,[],\StopF) & = & \pE \\
\end{array}\]
Certainly the most interesting case is that of $\UpdateF$ frames, encoding
by-need memoisation.
This translation function has the following property:
\begin{lemma}[Translation, without proof]
  \label{thm:translation}
  $\init(\mathit{trans}(\hole,μ,κ)[\pe]) \smallstep^* (\pe,μ,κ)$,
  and all transitions in this trace are search transitions ($\AppIT$, $\CaseIT$,
  $\LetIT$, $\LookupT$).
\end{lemma}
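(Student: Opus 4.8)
### Proof plan for \Cref{thm:translation}

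The plan is to prove the statement by structural induction on the machine context, specifically on the pair $(\mu,\kappa)$ appearing in $\mathit{trans}(\hole,\mu,\kappa)$, following the case analysis baked into the definition of $\mathit{trans}$. The goal $\init(\mathit{trans}(\hole,\mu,\kappa)[\pe]) \smallstep^* (\pe,\mu,\kappa)$ asks us to build up the heap $\mu$ and the continuation stack $\kappa$ by a sequence of search transitions, starting from the initial configuration with empty heap and $\StopF$ continuation. Since $\mathit{trans}$ recurses first over the heap bindings (turning each $\px \mapsto \pe$ into a surrounding $\mathbf{let}$), then over the continuation frames from the bottom of the stack upward, the induction will naturally split into these two phases.

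First I would handle the heap phase: when $\mu = [\many{\px \mapsto \pe}]$, we have $\mathit{trans}(\pE,\mu,\kappa) = \Letmany{\px}{\pe}{\mathit{trans}(\pE,[],\kappa)}$, and I would show that each surrounding $\Let{\px}{\pe}{-}$ is discharged by a single $\LetIT$ transition, which allocates the heap entry $\px \mapsto \pe$ and is a search transition. Iterating over all heap bindings (using the well-addressedness / shadowing assumptions on syntax so that the allocations are independent) reduces the goal to $\init$ of $\mathit{trans}(\hole,[],\kappa)[\pe]$ stepping to $(\pe,\mu,\kappa)$ with the heap already in place. Then I would do the continuation phase by induction on $\kappa$: the $\StopF$ base case is immediate (the configuration is exactly $(\pe,[],\StopF)$, or $(\pe,\mu,\StopF)$ after the heap phase); for $\ApplyF(\px) \pushF \kappa$, the context becomes $\mathit{trans}(\pE~\px,[],\kappa)[\pe] = \mathit{trans}(\pE,[],\kappa)[\pe~\px]$, and by the IH this steps to $(\pe~\px,\mu,\kappa)$, from which one $\AppIT$ transition (a search transition) yields $(\pe,\mu,\ApplyF(\px)\pushF\kappa)$; the $\SelF$ case is analogous using $\CaseIT$; and the $\UpdateF(\px)\pushF\kappa$ case uses $\mathit{trans}(\pE,[],\UpdateF(\px)\pushF\kappa) = \Let{\px}{\pE}{\mathit{trans}(\hole,[],\kappa)[\px]}$, where the IH gives a trace to $(\px,\mu',\kappa)$ with $\px \mapsto \pE[\pe]$ in the heap, and one $\LookupT(\px)$ transition pushes the update frame and focuses on $\pE[\pe]$, completing the trace. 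In each inductive step I must also track that the hole context $\pE$ threaded through $\mathit{trans}$ correctly reassembles the control expression, which is routine bookkeeping.

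The main obstacle I anticipate is the interaction between the two phases and the bookkeeping of how the accumulated evaluation context $\pE$ inside $\mathit{trans}$ lines up with the control expression of the target configuration — in particular, getting the $\UpdateF$ case exactly right, since there the heap is extended \emph{during} the trace (by the $\LetIT$ that allocates $\px$) rather than up front, so the simple "heap phase then continuation phase" factoring needs care. Concretely, one must be careful that the heap $\mu$ in the target $(\pe,\mu,\kappa)$ is assembled from both the outer $\mathbf{let}$s (heap bindings) and the $\mathbf{let}$s introduced by $\UpdateF$ frames, and that these do not capture or shadow each other — which is exactly where the paper's global assumption that all bound variables are distinct is used. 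Beyond that, since the statement is offered "without proof" in the excerpt and every transition invoked is manifestly one of the four search rules of \Cref{fig:lk-semantics}, the second conclusion (all transitions are search transitions) falls out for free from the construction, so I would not belabor it.
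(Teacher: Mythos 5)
The paper states this lemma explicitly \emph{without} proof, so there is no reference proof to compare against; what follows is an assessment of your plan on its own terms. Your overall strategy --- induction following the defining equations of $\mathit{trans}$, discharging one clause per search transition, exactly as in \citet[Lemma 4.1]{MoranSands:99} --- is the right one, and your trick of normalising the accumulated context back to $\hole$ by pushing it into the filling expression does let the induction proceed with the hole context fixed at $\hole$. Two caveats on that trick, though. First, the identity you state, $\mathit{trans}(\pE~\px,[],κ)[\pe] = \mathit{trans}(\pE,[],κ)[\pe~\px]$, is false for general $\pE$: take $\pE = \hole~\py$ and $κ = \StopF$, then the left side is $(\pe~\py)~\px$ while the right side is $(\pe~\px)~\py$. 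It holds only at $\pE=\hole$, which happens to be the only instance you use, but you should either restrict it accordingly or prove once the composition law $\mathit{trans}(\pE,μ,κ)[\pe] = \mathit{trans}(\hole,μ,κ)[\pE[\pe]]$ and work with that.

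Second, and more substantively: the induction hypothesis cannot literally be the statement of the lemma, because that statement is phrased in terms of $\init(\cdot)$, i.e.\ a start configuration with \emph{empty} heap and stack $\StopF$. In your $\UpdateF(\px)\pushF κ$ case the initial $\LetIT$ step leaves you in $(\mathit{trans}(\hole,[],κ)[\px],\ [\px↦\pe],\ \StopF)$, and your heap phase likewise hands the continuation phase a configuration whose heap is already $μ$ --- neither is of the form $\init(\dots)$, so the IH as stated does not apply to them. You need to strengthen the IH to quantify over an ambient heap, backed by a frame/weakening property of the machine (search transitions are unaffected by extra bindings for variables not occurring in the focus), which is where the global distinct-binders assumption actually does its work; merely remarking that the factoring ``needs care'' leaves this unresolved. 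Relatedly, for the target heap to come out as exactly $μ$ you must use that the Mark~I $\LookupT$ rule \emph{removes} the looked-up binding when it pushes $\UpdateF(\px)$ (as in Sestoft's Mark~I, unlike the Mark~II rule of the paper's Figure); otherwise every $\UpdateF$ frame in $κ$ leaves a spurious binding behind and the trace ends in a strictly larger heap than $μ$. Neither point is fatal, but both must appear in a complete proof.
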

In other words: every machine configuration $σ$ corresponds to an evaluation
context $\pE$ and a focus expression $\pe$ such that there exists a trace
$\init(\pE[\pe]) \smallstep^* σ$ consisting purely of search transitions,
which is equivalent to all states in the trace except possibly the last being
evaluation states.

We encode evaluation contexts in Haskell as follows, overloading hole filling notation \ensuremath{\wild[\wild]}:
\begin{hscode}\SaveRestoreHook
\column{B}{@{}>{\hspre}l<{\hspost}@{}}%
\column{13}{@{}>{\hspre}c<{\hspost}@{}}%
\column{13E}{@{}l@{}}%
\column{16}{@{}>{\hspre}l<{\hspost}@{}}%
\column{35}{@{}>{\hspre}l<{\hspost}@{}}%
\column{E}{@{}>{\hspre}l<{\hspost}@{}}%
\>[B]{}\keyword{data}\;\conid{ECtxt}{}\<[13]%
\>[13]{}\mathrel{=}{}\<[13E]%
\>[16]{}\conid{Hole}\mid \conid{Apply}\;\conid{ECtxt}\;\conid{Name}\mid \conid{Select}\;\conid{ECtxt}\;\conid{Alts}{}\<[E]%
\\
\>[13]{}\mid {}\<[13E]%
\>[16]{}\conid{ExtendHeap}\;\conid{Name}\;\conid{Expr}\;\conid{ECtxt}\mid \conid{UpdateHeap}\;\conid{Name}\;\conid{ECtxt}\;\conid{Expr}{}\<[E]%
\\
\>[B]{}\wild[\wild]\mathbin{::}\conid{ECtxt}\to \conid{Expr}\to \conid{Expr}{}\<[E]%
\\
\>[B]{}\conid{Hole}[\varid{e}]{}\<[35]%
\>[35]{}\mathrel{=}\varid{e}{}\<[E]%
\\
\>[B]{}(\conid{Apply}\;\varid{E}\;\varid{x})[\varid{e}]{}\<[35]%
\>[35]{}\mathrel{=}\conid{App}\;\varid{E}[\varid{e}]\;\varid{x}{}\<[E]%
\\
\>[B]{}(\conid{Select}\;\varid{E}\;\varid{alts})[\varid{e}]{}\<[35]%
\>[35]{}\mathrel{=}\conid{Case}\;\varid{E}[\varid{e}]\;\varid{alts}{}\<[E]%
\\
\>[B]{}(\conid{ExtendHeap}\;\varid{x}\;\varid{e}_{1}\;\varid{E})[\varid{e}_{2}]{}\<[35]%
\>[35]{}\mathrel{=}\conid{Let}\;\varid{x}\;\varid{e}_{1}\;\varid{E}[\varid{e}_{2}]{}\<[E]%
\\
\>[B]{}(\conid{UpdateHeap}\;\varid{x}\;\varid{E}\;\varid{e}_{1})[\varid{e}_{2}]{}\<[35]%
\>[35]{}\mathrel{=}\conid{Let}\;\varid{x}\;\varid{E}[\varid{e}_{1}]\;\varid{e}_{2}{}\<[E]%
\ColumnHook
\end{hscode}\resethooks

\begin{lemma}[Used variables are free]
  \label{thm:used-free}
  If \ensuremath{\varid{x}} does not occur in \ensuremath{\varid{e}} and in \ensuremath{\varcolor{\rho}} (that is, \ensuremath{\keyword{\forall}\!\! \hsforall \;\varid{y}\hsdot{\circ }{.\ }(\varcolor{\rho}\mathop{!}\varid{y}).\varcolor{\varphi}\mathbin{!?}\varid{x}\mathrel{=}\concolor{\mathsf{U_0}}}), then \ensuremath{(\mathcal{S}_{\mathbf{usage}}\denot{\varid{e}}_{\varcolor{\rho}}).\varcolor{\varphi}\mathbin{!?}\varid{x}\mathrel{=}\concolor{\mathsf{U_0}}}.
\end{lemma}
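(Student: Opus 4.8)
\textbf{Proof plan for \Cref{thm:used-free}.}

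The statement says that if \ensuremath{\varid{x}} occurs neither syntactically in \ensuremath{\varid{e}} nor in any denotation stored in \ensuremath{\varcolor{\rho}}, then the usage result \ensuremath{(\mathcal{S}_{\mathbf{usage}}\denot{\varid{e}}_{\varcolor{\rho}}).\varcolor{\varphi}\mathbin{!?}\varid{x}\mathrel{=}\concolor{\mathsf{U_0}}}. The plan is to prove this by structural induction on \ensuremath{\varid{e}}, unfolding the definition of \ensuremath{\mathcal{S}\denot{\wild}_{\wild}} at \ensuremath{\concolor{\mathsf{D_U}}} in each case and tracking how the \ensuremath{\conid{Uses}} component of the result is built up. The crucial invariant that the induction must maintain is slightly stronger than the bare statement: in the \ensuremath{\conid{Lam}}, \ensuremath{\conid{Let}}, and \ensuremath{\conid{Case}} cases we extend the environment with fresh proxies or recursively-defined denotations, and we must argue that these extensions preserve the hypothesis ``\ensuremath{\varid{x}} does not occur in \ensuremath{\varcolor{\rho}}''. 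Since the paper assumes all bound variables are distinct (and in particular distinct from the free variable \ensuremath{\varid{x}} under consideration, because \ensuremath{\varid{x}} does not occur in \ensuremath{\varid{e}}), the proxy \ensuremath{\langle [\varid{y}\mapsto\concolor{\mathsf{U_1}}], \conid{Rep}\;\concolor{\mathsf{U_\omega}} \rangle} introduced by \ensuremath{\varid{fun}\;\varid{y}} for a bound \ensuremath{\varid{y}\not=\varid{x}} satisfies \ensuremath{(\mathord{\ldots}).\varcolor{\varphi}\mathbin{!?}\varid{x}\mathrel{=}\concolor{\mathsf{U_0}}}, so the extended environment still meets the side condition.

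I would carry out the cases as follows. For \ensuremath{\conid{Var}\;\varid{y}}: since \ensuremath{\varid{x}} does not occur in \ensuremath{\conid{Var}\;\varid{y}} we have \ensuremath{\varid{y}\not=\varid{x}}, and \ensuremath{\mathcal{S}_{\mathbf{usage}}\denot{\conid{Var}\;\varid{y}}_{\varcolor{\rho}}\mathrel{=}\varcolor{\rho}\mathop{!}\varid{y}} (or \ensuremath{\bot} if \ensuremath{\varid{y}\notin\varid{dom}\;\varcolor{\rho}}), and by hypothesis \ensuremath{(\varcolor{\rho}\mathop{!}\varid{y}).\varcolor{\varphi}\mathbin{!?}\varid{x}\mathrel{=}\concolor{\mathsf{U_0}}}; the \ensuremath{\bot} case is immediate since \ensuremath{\bot.\varcolor{\varphi}\mathbin{!?}\varid{x}\mathrel{=}\concolor{\mathsf{U_0}}}. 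For \ensuremath{\conid{Lam}\;\varid{y}\;\varid{body}}: unfolding \ensuremath{\varid{fun}\;\varid{y}} applies \ensuremath{\varid{body}}'s denotation to the proxy \ensuremath{\langle [\varid{y}\mapsto\concolor{\mathsf{U_1}}], \conid{Rep}\;\concolor{\mathsf{U_\omega}} \rangle}, then sets the \ensuremath{\varid{y}}-entry to \ensuremath{\concolor{\mathsf{U_0}}}; since \ensuremath{\varid{x}\not=\varid{y}} and the proxy has \ensuremath{\concolor{\mathsf{U_0}}} at \ensuremath{\varid{x}}, the extended environment still satisfies the side condition, and the induction hypothesis on \ensuremath{\varid{body}} gives \ensuremath{\varcolor{\varphi}\mathbin{!?}\varid{x}\mathrel{=}\concolor{\mathsf{U_0}}}; the prepend operation and the \ensuremath{\conid{App}_{2}} step do not touch the \ensuremath{\varid{x}}-entry. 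For \ensuremath{\conid{App}\;\varid{e}_1\;\varid{y}}: \ensuremath{\varid{apply}} computes \ensuremath{\langle \varcolor{\varphi}_1 \mathbin{+} \varid{u} \mathbin{*} \varcolor{\varphi}_2, \mathord{\ldots} \rangle} where \ensuremath{\varcolor{\varphi}_1} comes from \ensuremath{\varid{e}_1} (IH: \ensuremath{\concolor{\mathsf{U_0}}} at \ensuremath{\varid{x}}) and \ensuremath{\varcolor{\varphi}_2\mathrel{=}(\varcolor{\rho}\mathop{!}\varid{y}).\varcolor{\varphi}} (hypothesis: \ensuremath{\concolor{\mathsf{U_0}}} at \ensuremath{\varid{x}}, and if \ensuremath{\varid{y}\notin\varid{dom}\;\varcolor{\rho}} the term is stuck); since \ensuremath{(\varcolor{\varphi}_1\mathbin{+}\varid{u}\mathbin{*}\varcolor{\varphi}_2)\mathbin{!?}\varid{x}\mathrel{=}\varcolor{\varphi}_1\mathbin{!?}\varid{x}\mathbin{+}\varid{u}\mathbin{*}(\varcolor{\varphi}_2\mathbin{!?}\varid{x})\mathrel{=}\concolor{\mathsf{U_0}}}, done. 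The \ensuremath{\conid{ConApp}} case is analogous via \ensuremath{\varid{foldl}\;\varid{apply}}. For \ensuremath{\conid{Case}\;\varid{e}_1\;\varid{alts}}: \ensuremath{\varid{select}} returns \ensuremath{\varid{d}\sequ\varid{lub}\;[\mathord{\ldots}]}; the scrutinee contributes via IH (\ensuremath{\concolor{\mathsf{U_0}}} at \ensuremath{\varid{x}}), each alternative body is run with field proxies \ensuremath{\langle \varcolor{\varepsilon}, \conid{Rep}\;\concolor{\mathsf{U_\omega}} \rangle} whose \ensuremath{\varcolor{\varphi}} has \ensuremath{\concolor{\mathsf{U_0}}} at \ensuremath{\varid{x}}, so the extended environment still meets the side condition and IH applies; \ensuremath{\varid{lub}} and \ensuremath{\sequ} (which is additive on \ensuremath{\conid{Uses}}) preserve the \ensuremath{\concolor{\mathsf{U_0}}}-at-\ensuremath{\varid{x}} property pointwise. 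For \ensuremath{\conid{Let}\;\varid{y}\;\varid{e}_1\;\varid{e}_2} (recursive): \ensuremath{\varid{bind}\;\varid{rhs}\;\varid{body}\mathrel{=}\varid{body}\;(\varid{kleeneFix}\;\varid{rhs})}; I would show by a subsidiary induction on Kleene iteration that every iterate \ensuremath{\varid{d}_i} satisfies \ensuremath{\varid{d}_i.\varcolor{\varphi}\mathbin{!?}\varid{x}\mathrel{=}\concolor{\mathsf{U_0}}} — the base case \ensuremath{\bot} is immediate, and the step follows from the IH on \ensuremath{\varid{e}_1} applied with the environment extended by \ensuremath{\varid{step}\;(\conid{Look}\;\varid{y})\;\varid{d}_i}, which again has \ensuremath{\concolor{\mathsf{U_0}}} at \ensuremath{\varid{x}} since \ensuremath{\varid{y}\not=\varid{x}} and \ensuremath{\varid{step}\;(\conid{Look}\;\varid{y})} only modifies the \ensuremath{\varid{y}}-entry — and then the IH on \ensuremath{\varid{e}_2} with this extended environment finishes the case.

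The main obstacle is the recursive \ensuremath{\conid{Let}} case: one has to be careful that the fixed point \ensuremath{\varid{kleeneFix}\;\varid{rhs}} is itself a denotation whose \ensuremath{\conid{Uses}} omits \ensuremath{\varid{x}}, which requires the auxiliary induction over the iteration chain (or equivalently, an appeal to the least fixpoint lying below any fixpoint of the restricted functional, using monotonicity and admissibility of the predicate ``\ensuremath{\varcolor{\varphi}\mathbin{!?}\varid{x}\mathrel{=}\concolor{\mathsf{U_0}}}'', which is \ensuremath{\concolor{\mathsf{U_0}}}-downward-closed hence Scott-closed). A secondary subtlety is that the ``\ensuremath{\varid{x}} does not occur in \ensuremath{\varid{e}}'' hypothesis must be threaded through the recursive calls on subterms \ensuremath{\varid{e}_1}, \ensuremath{\varid{e}_2}, \ensuremath{\varid{body}}, \ensuremath{\varid{e}_s}, etc., which is routine but relies on the syntactic observation that a variable not occurring in a compound expression does not occur in its subexpressions, and that the distinct-bound-variable convention keeps \ensuremath{\varid{x}} distinct from all \ensuremath{\varid{y}} bound within \ensuremath{\varid{e}}. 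Everything else is mechanical unfolding of the \ensuremath{\conid{Domain}\;\concolor{\mathsf{D_U}}} and \ensuremath{\conid{HasBind}\;\concolor{\mathsf{D_U}}} instances together with the algebraic facts \ensuremath{\concolor{\mathsf{U_0}}\mathbin{+}\varid{u}\mathrel{=}\varid{u}}, \ensuremath{\varid{u}\mathbin{*}\concolor{\mathsf{U_0}}\mathrel{=}\concolor{\mathsf{U_0}}}, \ensuremath{\concolor{\mathsf{U_0}}\mathbin{⊔}\varid{u}\mathrel{=}\varid{u}}, and the pointwise lifting of \ensuremath{\mathbin{+}}, \ensuremath{\mathbin{*}}, \ensuremath{\mathbin{⊔}} to \ensuremath{\conid{Uses}}.
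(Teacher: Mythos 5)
Your proof is correct and takes exactly the approach the paper does: the paper's own proof is just the one-liner ``By induction on \ensuremath{\varid{e}}'', and your case analysis (including the strengthened environment invariant threaded through \ensuremath{\varid{fun}}/\ensuremath{\varid{select}} proxies and the subsidiary induction over the Kleene iterates in the \ensuremath{\conid{Let}} case) is a faithful elaboration of that induction.
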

\begin{proof}
  By induction on \ensuremath{\varid{e}}.
\end{proof}

For concise notation, we define the following abstract substitution operation:

\begin{definition}[Abstract substitution]
  \label{defn:abs-subst-usage}
  We call \ensuremath{\varcolor{\varphi}[\varid{x}\Mapsto\varid{φ'}]\triangleq\varcolor{\varphi}[\varid{x}\mapsto\concolor{\mathsf{U_0}}]\mathbin{+}(\varcolor{\varphi}\mathbin{!?}\varid{x})\mathbin{*}\varid{φ'}} the
  \emph{abstract substitution} operation on \ensuremath{\conid{Uses}}
  and overload this notation for \ensuremath{\concolor{\mathsf{T_U}}}, so that
  \ensuremath{\langle \varcolor{\varphi}, \varid{v} \rangle[\varid{x}\Mapsto\varid{φ'}]\triangleq\langle \varcolor{\varphi}[\varid{x}\Mapsto\varid{φ'}], \varid{v} \rangle}.
\end{definition}

From \Cref{thm:usage-subst-sem}, we can derive the following auxiliary lemma:
\begin{lemma}
  \label{thm:usage-subst-abs}
  If \ensuremath{\varid{x}} does not occur in \ensuremath{\varcolor{\rho}}, then
  \ensuremath{\mathcal{S}_{\mathbf{usage}}\denot{\varid{e}}_{\varcolor{\rho}[\varid{x}\mapsto\varid{d}]}\mathbin{⊑}(\mathcal{S}_{\mathbf{usage}}\denot{\varid{e}}_{\varcolor{\rho}[\varid{x}\mapsto\langle [\varid{x}\mapsto\concolor{\mathsf{U_1}}], \conid{Rep}\;\concolor{\mathsf{U_\omega}} \rangle]})[\varid{x}\Mapsto\varid{d}.\varcolor{\varphi}]}.
\end{lemma}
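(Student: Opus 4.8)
The plan is to reduce the statement to \Cref{thm:usage-subst-sem} (\textsc{Beta-App}), reusing the relation $R_{x,a}$ and essentially all of that proof. First I would record the purely definitional identity that, for every $f$ and every $d :: \mathsf{D_U}$,
\[
  \mathit{apply}\ (\mathit{fun}\ x\ f)\ d \;=\; (f\ (\mathit{pre}\ x))[x \Mapsto d.\varphi],
\]
where $\mathit{pre}(x) \triangleq \langle [x\mapsto\mathsf{U_1}], \mathsf{Rep}\ \mathsf{U_\omega}\rangle$ is the proxy that $\mathit{fun}\ x$ feeds to $f$: one unfolds $\mathit{fun}\ x\ f = \mathit{abs}\ x\ (f\ (\mathit{pre}\ x))$, then $\mathit{apply}$ and $\mathit{peel}$, exactly as in the calculation following \Cref{thm:usage-subst-sem}. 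Instantiating $f \triangleq \lambda d'.\ \mathcal{S}_{\mathbf{usage}}\denot{e}_{\rho[x\mapsto d']}$ (which coincides with $\mathcal{S}\denot{e}_{\rho[x\mapsto d']}$ at type $\mathsf{D_U}$), the right-hand side of the goal becomes literally $\mathit{apply}\ (\mathit{fun}\ x\ f)\ d$, so it remains to establish $f\ d \sqsubseteq \mathit{apply}\ (\mathit{fun}\ x\ f)\ d$, i.e. the \textsc{Beta-App} inequality for this particular $f$.

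Next I would run the parametricity argument of \Cref{thm:usage-subst-sem}, but applied to the interpreter $\mathcal{S}\denot{e}_{\wild} : \forall X.\ \mathsf{Dict}(X) \to (\mathsf{Name}\mathbin{:\rightharpoonup}X) \to X$ rather than to a bare $X \to X$, with the same relation $R_{x,d}$ and its simplified form $R_{x,d}(d_1,d_2) \iff d_1 \sqsubseteq d_2[x\Mapsto d.\varphi]$. The closure fact $(\mathit{inst},\mathit{inst}) \in \mathsf{Dict}(R_{x,d})$ — where all the case analysis on $\mathit{step}$, $\mathit{fun}$, $\mathit{apply}$, $\mathit{con}$, $\mathit{select}$, $\mathit{bind}$ lives — has already been proved for \Cref{thm:usage-subst-sem} and can be quoted verbatim. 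The only genuinely new ingredient is the relational action on the \emph{environment} argument, which is pointwise, so I must show $(\rho[x\mapsto d],\ \rho[x\mapsto\mathit{pre}(x)]) \in (\mathsf{Name}\mathbin{:\rightharpoonup}R_{x,d})$ by a split on the key: at $x$ this is $(d,\mathit{pre}(x)) \in R_{x,d}$, already shown in the proof of \Cref{thm:usage-subst-sem}; at every $y\neq x$ it is $(\rho\,!\,y,\ \rho\,!\,y) \in R_{x,d}$, which holds because $R_{x,d}$ is reflexive on any element whose $\varphi$-component maps $x$ to $\mathsf{U_0}$ — and that is precisely what the hypothesis ``$x$ does not occur in $\rho$'' buys us. Feeding these two facts through the fundamental property of $\mathcal{S}\denot{e}_{\wild}$ yields $(\mathcal{S}_{\mathbf{usage}}\denot{e}_{\rho[x\mapsto d]},\ \mathcal{S}_{\mathbf{usage}}\denot{e}_{\rho[x\mapsto\mathit{pre}(x)]}) \in R_{x,d}$, and unfolding $R_{x,d}$ (with $g \triangleq \mathit{id}$) is exactly the claimed inequality.

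I expect the main obstacle to be bookkeeping rather than mathematics: making precise that the free theorem / fundamental property instantiates at the single type $X\triangleq\mathsf{D_U}$ with the environment taken pointwise, and that ``$x$ does not occur in $\rho$'' (together with the paper's standing well-scopedness conventions, which also ensure $d$ does not mention the bound variable $x$, just as for the fresh $a$ in \Cref{thm:usage-subst-sem}) is exactly the side condition that lands the $\rho$-entries in the reflexive fragment of $R_{x,d}$, so that parametricity applies to the whole environment and not merely to a single hole. Everything else — the $\mathit{apply}/\mathit{fun}/\mathit{peel}$ unfolding and the $\mathsf{Dict}(R_{x,d})$ closure — is either a one-line calculation or already discharged in the proof of \Cref{thm:usage-subst-sem}. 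A shorter but less self-contained alternative is to cite \Cref{thm:usage-subst-sem} directly with $f \triangleq \lambda d'.\ \mathcal{S}_{\mathbf{usage}}\denot{e}_{\rho[x\mapsto d']}$, justifying its ``polymorphic'' premise by noting that $f$ is built from the generic interpreter and that $x\notin\rho$ makes the remaining environment entries behave as constants under the relation; I would mention this framing but carry out the $R_{x,d}$ version for rigour.
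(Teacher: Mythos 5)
Your proposal is correct, and the paper's actual proof is precisely the ``shorter but less self-contained alternative'' you sketch in your last sentence: it defines $f\,\hat{d} \triangleq \mathcal{S}_{\mathbf{usage}}\denot{e}_{\rho[x\mapsto\hat{d}]}$, observes that $f$ is definable polymorphically because $x$ does not occur in the range of $\rho$ (``$x$ could well be lambda-bound \ldots and that is really what we need''), invokes \Cref{thm:usage-subst-sem}, and then performs exactly your one-line $\mathit{apply}/\mathit{fun}/\mathit{peel}$ unfolding to recognise the result as abstract substitution $[x\Mapsto d.\varphi]$. Your primary route --- re-running the free theorem at the type of the whole interpreter with the environment related pointwise, and placing the $\rho$-entries in the reflexive fragment of $R_{x,d}$ --- is a more explicit rendering of the same argument: it makes rigorous the paper's informal appeal to ``suitably polymorphic $\rho$'', at the cost of restating the $\mathsf{Dict}(R_{x,d})$ closure rather than quoting \Cref{thm:usage-subst-sem} as a black box. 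Both versions hinge on the identical two facts you identify: the definitional identity $\mathit{apply}\,(\mathit{fun}\,x\,f)\,d = (f\,(\mathit{pre}\,x))[x\Mapsto d.\varphi]$, and the observation that the hypothesis $x\notin\rho$ (plus the scoping convention that $d.\varphi\mathbin{!?}x = \mathsf{U_0}$) is exactly what legitimises the \textsc{Beta-App} premise. Either write-up would be acceptable; if you carry out the environment-level parametricity version, be explicit that reflexivity of $R_{x,d}$ on elements with $\varphi\mathbin{!?}x=\mathsf{U_0}$ is the lemma doing the work for the $y\neq x$ entries, since that is the only place the side condition is consumed.
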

\begin{proof}
  Define \ensuremath{\varid{f}\;\widehat{\varid{d}}\triangleq\mathcal{S}_{\mathbf{usage}}\denot{\varid{e}}_{\varcolor{\rho}[\varid{x}\mapsto\widehat{\varid{d}}]}} and \ensuremath{\varid{a}\triangleq\varid{d}}.
  Note that \ensuremath{\varid{f}} could be defined polymorphically as
  \ensuremath{\varid{f}\;\varid{d}\mathrel{=}\mathcal{S}\denot{\varid{e}}_{\varcolor{\rho}[\varid{x}\mapsto\varid{d}]}}, for suitably polymorphic \ensuremath{\varcolor{\rho}}.
  Furthermore, \ensuremath{\varid{x}} could well be lambda-bound, since it does not occur in the
  range of \ensuremath{\varcolor{\rho}} (and that is really what we need).
  Hence we may apply \Cref{thm:usage-subst-sem} to get
  \begin{hscode}\SaveRestoreHook
\column{B}{@{}>{\hspre}l<{\hspost}@{}}%
\column{3}{@{}>{\hspre}c<{\hspost}@{}}%
\column{3E}{@{}l@{}}%
\column{5}{@{}>{\hspre}l<{\hspost}@{}}%
\column{7}{@{}>{\hspre}l<{\hspost}@{}}%
\column{E}{@{}>{\hspre}l<{\hspost}@{}}%
\>[5]{}\mathcal{S}_{\mathbf{usage}}\denot{\varid{e}}_{\varcolor{\rho}[\varid{x}\mapsto\varid{d}]}{}\<[E]%
\\
\>[3]{}\mathbin{⊑}{}\<[3E]%
\>[7]{}\mbox{\commentbegin  \Cref{thm:usage-subst-sem}  \commentend}{}\<[E]%
\\
\>[3]{}\hsindent{2}{}\<[5]%
\>[5]{}\varid{apply}\;(\varid{fun}\;\varid{x}\;(\lambda \widehat{\varid{d}}\to \mathcal{S}_{\mathbf{usage}}\denot{\varid{e}}_{\varcolor{\rho}[\varid{x}\mapsto\widehat{\varid{d}}]}))\;\varid{d}{}\<[E]%
\\
\>[3]{}\mathrel{=}{}\<[3E]%
\>[7]{}\mbox{\commentbegin  Inline \ensuremath{\varid{apply}}, \ensuremath{\varid{fun}}  \commentend}{}\<[E]%
\\
\>[3]{}\hsindent{2}{}\<[5]%
\>[5]{}\keyword{let}\;\langle \varcolor{\varphi}, \varid{v} \rangle\mathrel{=}\mathcal{S}_{\mathbf{usage}}\denot{\varid{e}}_{\varcolor{\rho}[\varid{x}\mapsto\langle [\varid{x}\mapsto\concolor{\mathsf{U_1}}], \conid{Rep}\;\concolor{\mathsf{U_\omega}} \rangle]}\;\keyword{in}\;\langle \varcolor{\varphi}[\varid{x}\mapsto\concolor{\mathsf{U_0}}]\mathbin{+}(\varcolor{\varphi}\mathbin{!?}\varid{x})\mathbin{*}\varid{d}.\varcolor{\varphi}, \varid{v} \rangle{}\<[E]%
\\
\>[3]{}\mathrel{=}{}\<[3E]%
\>[7]{}\mbox{\commentbegin  Refold \ensuremath{\wild[\wild\Mapsto\wild]}  \commentend}{}\<[E]%
\\
\>[3]{}\hsindent{2}{}\<[5]%
\>[5]{}(\mathcal{S}_{\mathbf{usage}}\denot{\varid{e}}_{\varcolor{\rho}[\varid{x}\mapsto\langle [\varid{x}\mapsto\concolor{\mathsf{U_1}}], \conid{Rep}\;\concolor{\mathsf{U_\omega}} \rangle]})[\varid{x}\Mapsto\varid{d}].\varcolor{\varphi}{}\<[E]%
\ColumnHook
\end{hscode}\resethooks
\end{proof}

\begin{lemma}[Context closure]
\label{thm:usage-bound-vars-context}
Let \ensuremath{\varid{e}} be an expression and \ensuremath{\varid{E}} be a by-need evaluation context in which
\ensuremath{\varid{x}} does not occur.
Then \ensuremath{(\mathcal{S}_{\mathbf{usage}}\denot{\varid{E}[\varid{e}]}_{\varcolor{\rho}_{E}}).\varcolor{\varphi}\mathbin{?!}\varid{x}\mathbin{⊑}\concolor{\mathsf{U_\omega}}\mathbin{*}((\mathcal{S}_{\mathbf{usage}}\denot{\varid{e}}_{\varcolor{\rho}_e}).\varcolor{\varphi}\mathbin{!?}\varid{x})},
where \ensuremath{\varcolor{\rho}_{E}} and \ensuremath{\varcolor{\rho}_e} are the initial environments that map free variables \ensuremath{\varid{z}}
to their proxy \ensuremath{\langle [\varid{z}\mapsto\concolor{\mathsf{U_1}}], \conid{Rep}\;\concolor{\mathsf{U_\omega}} \rangle}.
\end{lemma}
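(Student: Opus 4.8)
The plan is to prove the statement by structural induction on the by-need evaluation context $\varid{E}$, unfolding $\mathcal{S}_{\mathbf{usage}}\denot{\wild}_{\wild}$ compositionally at each context former and discharging the resulting subgoals with machinery already established: the semantic substitution \Cref{thm:usage-subst-sem} in its derived form \Cref{thm:usage-subst-abs}, and the ``used variables are free'' \Cref{thm:used-free}. Two algebraic facts about $\conid{U}$ do the heavy lifting at the leaves, namely $u \mathbin{⊑} \concolor{\mathsf{U_\omega}} \mathbin{*} u$ for every $u$, and idempotence $\concolor{\mathsf{U_\omega}} \mathbin{*} \concolor{\mathsf{U_\omega}} = \concolor{\mathsf{U_\omega}}$; together with monotonicity of $\mathbin{+}$ and $\mathbin{*}$ on $\conid{Uses}$ these absorb any finite amount of usage contributed along the spine of $\varid{E}$.

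\textbf{Base case and non-binding frames.} For $\varid{E} = \hole$ we have $\varid{E}[\varid{e}] = \varid{e}$ and $\varcolor{\rho}_{E} = \varcolor{\rho}_e$, so the goal collapses to $u \mathbin{⊑} \concolor{\mathsf{U_\omega}} \mathbin{*} u$ with $u = (\mathcal{S}_{\mathbf{usage}}\denot{\varid{e}}_{\varcolor{\rho}_e}).\varphi \mathbin{!?} \varid{x}$. For the application and case-scrutinee frames $\varid{E} = \varid{E}'~\py$ and $\varid{E} = \Case{\varid{E}'}{\Sel}$ I would unfold $\mathcal{S}_{\mathbf{usage}}\denot{\wild}_{\wild}$ one step: the outer $\varid{step}$ (with $\conid{App}_1$ resp.\ $\conid{Case}_1$) does not touch $\conid{Uses}$, and the surrounding $\varid{apply}$ (with argument proxy $\varcolor{\rho}_{E} \mathop{!} \py$) resp.\ $\varid{select}$ (with the alternatives) only adds $\conid{Uses}$ from subterms of $\varid{E}$ which, since $\varid{x}$ does not occur in $\varid{E}$, have $\varid{x}$-component $\concolor{\mathsf{U_0}}$ by \Cref{thm:used-free}; the constructor multiplier $\varid{u}$ in $\varid{apply}$ is harmless because $\concolor{\mathsf{U_\omega}} \mathbin{*} \concolor{\mathsf{U_0}} = \concolor{\mathsf{U_0}}$. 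What remains is the $\varid{x}$-component of $\mathcal{S}_{\mathbf{usage}}\denot{\varid{E}'[\varid{e}]}$ under $\varcolor{\rho}_{E}$ (extra entries for variables not free in $\varid{E}'[\varid{e}]$ being irrelevant), so the induction hypothesis on $\varid{E}'$ applies directly. The heap-extension frame $\varid{E} = \Let{\py}{\pe_1}{\varid{E}'}$ crosses a binder: after one unfolding I would use \Cref{thm:usage-subst-abs} to push the let-bound denotation out as an abstract substitution $[\py \Mapsto d.\varphi]$, and since $\varid{x} \neq \py$ and $\varid{x}$ does not occur in $\pe_1$ (so $d.\varphi \mathbin{!?} \varid{x} = \concolor{\mathsf{U_0}}$ by \Cref{thm:used-free}) this substitution leaves the $\varid{x}$-component untouched, reducing to $\mathcal{S}_{\mathbf{usage}}\denot{\varid{E}'[\varid{e}]}$ under the proxy environment, where the hypothesis finishes the case.

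\textbf{The hard case and the main obstacle.} The genuinely interesting case is the update frame $\varid{E} = \Let{\py}{\varid{E}'}{\varid{E}''[\py]}$, where the hole — and hence $\varid{e}$ — lies inside the demanded right-hand side, so the fixpoint denotation $d_1 = \varid{kleeneFix}(\lambda d.\ \mathcal{S}_{\mathbf{usage}}\denot{\varid{E}'[\varid{e}]}_{\varcolor{\rho}_{E}[\py \mapsto \varid{step}\ (\conid{Look}\ \py)\ d]})$ genuinely mentions $\varid{x}$. Unfolding the body and applying \Cref{thm:usage-subst-abs} to $\varid{E}''[\py]$ splits its $\varid{x}$-component into a proxy-environment part (which is $\concolor{\mathsf{U_0}}$ by \Cref{thm:used-free}, as $\varid{x}$ does not occur in $\varid{E}''$) plus $(d_2.\varphi \mathbin{!?} \py) \mathbin{*} (d_1.\varphi \mathbin{!?} \varid{x})$ with multiplier bounded by $\concolor{\mathsf{U_\omega}}$, so it suffices to bound $d_1.\varphi \mathbin{!?} \varid{x}$. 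I would do that by a nested induction along the Kleene chain $\varid{rhs}^n(\bot)$: the $n=0$ iterate has $\varid{x}$-component $\concolor{\mathsf{U_0}}$; in the step, unfold one $\varid{rhs}$, push $\py$ out with \Cref{thm:usage-subst-abs} again, bound the proxy part by the \emph{structural} hypothesis on the subcontext $\varid{E}'$ (giving $\mathbin{⊑} \concolor{\mathsf{U_\omega}} \mathbin{*} (\mathcal{S}_{\mathbf{usage}}\denot{\varid{e}}_{\varcolor{\rho}_e}).\varphi \mathbin{!?} \varid{x}$) and the self-referential part by the \emph{inner} hypothesis on $n$, then collapse the sum with $\concolor{\mathsf{U_\omega}} \mathbin{*} \concolor{\mathsf{U_\omega}} = \concolor{\mathsf{U_\omega}}$; finiteness of $\conid{U}$ passes the bound to the limit $d_1$. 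The main obstacle is precisely this case — organising the simultaneous structural and fixpoint inductions so the two hypotheses do not interfere, and tracking carefully that the only way $\varid{x}$ can enter a fixpoint iterate is through $\varid{e}$ in the hole (every other occurrence being excluded by \Cref{thm:used-free}), so that the $\concolor{\mathsf{U_\omega}}$ multiplier absorbs the recursion. A few minor bookkeeping steps remain — the $\varid{step}\ (\conid{Look}\ \py)$ wrapper on let-bound denotations, and matching $\varcolor{\rho}_{E}[\py \mapsto \langle [\py \mapsto \concolor{\mathsf{U_1}}], \conid{Rep}\ \concolor{\mathsf{U_\omega}} \rangle]$ against $\varcolor{\rho}_{E'}$ on the free variables that matter — but these are routine.
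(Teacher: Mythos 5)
Your proposal is correct and matches the paper's proof essentially step for step: induction on the structure of the context, with \Cref{thm:used-free} killing the $\varid{x}$-contributions of the context's own subterms in the application, select and let frames, and \Cref{thm:usage-subst-abs} pushing let-bound denotations out as abstract substitutions, the update frame $\Let{\py}{\pE}{\pE'[\py]}$ being the only delicate case. The one divergence is that where you bound the fixpoint's $\varid{x}$-component by an explicit induction along the Kleene chain, the paper argues more tersely that under the $\concolor{\mathsf{U_\omega}}$ multiplier the \ensuremath{\varid{kleeneFix}} can be replaced by a single application of the iteratee at the proxy (absence being preserved by \Cref{thm:used-free}, and ``otherwise we go to $\concolor{\mathsf{U_\omega}}$ anyway''); your version is a more careful rendering of that same step, not a different proof.
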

\begin{proof}
We will sometimes need that if \ensuremath{\varid{y}} does not occur free in \ensuremath{\varid{e}_{1}}, we have
By induction on the size of \ensuremath{\varid{E}} and cases on \ensuremath{\varid{E}}:
\begin{itemize}
  \item \textbf{Case }\ensuremath{\conid{Hole}}:
    \begin{hscode}\SaveRestoreHook
\column{B}{@{}>{\hspre}l<{\hspost}@{}}%
\column{5}{@{}>{\hspre}c<{\hspost}@{}}%
\column{5E}{@{}l@{}}%
\column{9}{@{}>{\hspre}l<{\hspost}@{}}%
\column{E}{@{}>{\hspre}l<{\hspost}@{}}%
\>[9]{}(\mathcal{S}_{\mathbf{usage}}\denot{\conid{Hole}[\varid{e}]}_{\varcolor{\rho}_{E}}).\varcolor{\varphi}\mathbin{!?}\varid{x}{}\<[E]%
\\
\>[5]{}\mathrel{=}{}\<[5E]%
\>[9]{}\mbox{\commentbegin  Definition of \ensuremath{\wild[\wild]}  \commentend}{}\<[E]%
\\
\>[9]{}(\mathcal{S}_{\mathbf{usage}}\denot{\varid{e}}_{\varcolor{\rho}_{E}}).\varcolor{\varphi}\mathbin{!?}\varid{x}{}\<[E]%
\\
\>[5]{}\mathbin{⊑}{}\<[5E]%
\>[9]{}\mbox{\commentbegin  \ensuremath{\varcolor{\rho}_e\mathrel{=}\varcolor{\rho}_{E}}  \commentend}{}\<[E]%
\\
\>[9]{}\concolor{\mathsf{U_\omega}}\mathbin{*}(\mathcal{S}_{\mathbf{usage}}\denot{\varid{e}}_{\varcolor{\rho}_{E}}).\varcolor{\varphi}\mathbin{!?}\varid{x}{}\<[E]%
\ColumnHook
\end{hscode}\resethooks
    By reflexivity.
  \item \textbf{Case }\ensuremath{\conid{Apply}\;\varid{E}\;\varid{y}}:
    Since \ensuremath{\varid{y}} occurs in \ensuremath{\varid{E}}, it must be different to \ensuremath{\varid{x}}.
    \begin{hscode}\SaveRestoreHook
\column{B}{@{}>{\hspre}l<{\hspost}@{}}%
\column{5}{@{}>{\hspre}c<{\hspost}@{}}%
\column{5E}{@{}l@{}}%
\column{9}{@{}>{\hspre}l<{\hspost}@{}}%
\column{E}{@{}>{\hspre}l<{\hspost}@{}}%
\>[9]{}(\mathcal{S}_{\mathbf{usage}}\denot{(\conid{Apply}\;\varid{E}\;\varid{y})[\varid{e}]}_{\varcolor{\rho}_{E}}).\varcolor{\varphi}\mathbin{!?}\varid{x}{}\<[E]%
\\
\>[5]{}\mathrel{=}{}\<[5E]%
\>[9]{}\mbox{\commentbegin  Definition of \ensuremath{\wild[\wild]}  \commentend}{}\<[E]%
\\
\>[9]{}(\mathcal{S}_{\mathbf{usage}}\denot{\conid{App}\;\varid{E}[\varid{e}]\;\varid{y}}_{\varcolor{\rho}_{E}}).\varcolor{\varphi}\mathbin{!?}\varid{x}{}\<[E]%
\\
\>[5]{}\mathrel{=}{}\<[5E]%
\>[9]{}\mbox{\commentbegin  Definition of \ensuremath{\mathcal{S}_{\mathbf{usage}}\denot{\wild}_{\wild}}  \commentend}{}\<[E]%
\\
\>[9]{}(\varid{apply}\;(\mathcal{S}_{\mathbf{usage}}\denot{\varid{E}[\varid{e}]}_{\varcolor{\rho}_{E}})\;(\varcolor{\rho}_{E}\mathbin{!?}\varid{y})).\varcolor{\varphi}\mathbin{!?}\varid{x}{}\<[E]%
\\
\>[5]{}\mathrel{=}{}\<[5E]%
\>[9]{}\mbox{\commentbegin  Definition of \ensuremath{\varid{apply}}  \commentend}{}\<[E]%
\\
\>[9]{}\keyword{let}\;\langle \varcolor{\varphi}, \varid{v} \rangle\mathrel{=}\mathcal{S}_{\mathbf{usage}}\denot{\varid{E}[\varid{e}]}_{\varcolor{\rho}_{E}}\;\keyword{in}{}\<[E]%
\\
\>[9]{}\keyword{case}\;\varid{peel}\;\varid{v}\;\keyword{of}\;(\varid{u},\varid{v}_{2})\to (\langle \varcolor{\varphi}\mathbin{+}\varid{u}\mathbin{*}((\varcolor{\rho}_{E}\mathbin{!?}\varid{y}).\varcolor{\varphi}), \varid{v}_{2} \rangle.\varcolor{\varphi}\mathbin{!?}\varid{x}){}\<[E]%
\\
\>[5]{}\mathrel{=}{}\<[5E]%
\>[9]{}\mbox{\commentbegin  Unfold \ensuremath{\langle \varcolor{\varphi}, \varid{v} \rangle.\varcolor{\varphi}\mathrel{=}\varcolor{\varphi}}, \ensuremath{\varid{x}} absent in \ensuremath{\varcolor{\rho}_{E}\mathbin{!?}\varid{y}}  \commentend}{}\<[E]%
\\
\>[9]{}\keyword{let}\;\langle \varcolor{\varphi}, \varid{v} \rangle\mathrel{=}\mathcal{S}_{\mathbf{usage}}\denot{\varid{E}[\varid{e}]}_{\varcolor{\rho}_{E}}\;\keyword{in}{}\<[E]%
\\
\>[9]{}\keyword{case}\;\varid{peel}\;\varid{v}\;\keyword{of}\;(\varid{u},\varid{v}_{2})\to \varcolor{\varphi}\mathbin{!?}\varid{x}{}\<[E]%
\\
\>[5]{}\mathrel{=}{}\<[5E]%
\>[9]{}\mbox{\commentbegin  Refold \ensuremath{\langle \varcolor{\varphi}, \varid{v} \rangle.\varcolor{\varphi}\mathrel{=}\varcolor{\varphi}}  \commentend}{}\<[E]%
\\
\>[9]{}(\mathcal{S}_{\mathbf{usage}}\denot{\varid{E}[\varid{e}]}_{\varcolor{\rho}_{E}}).\varcolor{\varphi}\mathbin{!?}\varid{x}{}\<[E]%
\\
\>[5]{}\mathbin{⊑}{}\<[5E]%
\>[9]{}\mbox{\commentbegin  Induction hypothesis  \commentend}{}\<[E]%
\\
\>[9]{}\concolor{\mathsf{U_\omega}}\mathbin{*}(\mathcal{S}_{\mathbf{usage}}\denot{\varid{e}}_{\varcolor{\rho}_e}).\varcolor{\varphi}\mathbin{!?}\varid{x}{}\<[E]%
\ColumnHook
\end{hscode}\resethooks
  \item \textbf{Case }\ensuremath{\conid{Select}\;\varid{E}\;\varid{alts}}:
    Since \ensuremath{\varid{x}} does not occur in \ensuremath{\varid{alts}}, it is absent in \ensuremath{\varid{alts}} as well
    by \Cref{thm:used-free}.
    (Recall that \ensuremath{\varid{select}} analyses \ensuremath{\varid{alts}} with \ensuremath{\langle \varcolor{\varepsilon}, \conid{Rep}\;\concolor{\mathsf{U_\omega}} \rangle} as
    field proxies.)
    \begin{hscode}\SaveRestoreHook
\column{B}{@{}>{\hspre}l<{\hspost}@{}}%
\column{5}{@{}>{\hspre}c<{\hspost}@{}}%
\column{5E}{@{}l@{}}%
\column{9}{@{}>{\hspre}l<{\hspost}@{}}%
\column{E}{@{}>{\hspre}l<{\hspost}@{}}%
\>[9]{}(\mathcal{S}_{\mathbf{usage}}\denot{(\conid{Select}\;\varid{E}\;\varid{alts})[\varid{e}]}_{\varcolor{\rho}_{E}}).\varcolor{\varphi}\mathbin{!?}\varid{x}{}\<[E]%
\\
\>[5]{}\mathrel{=}{}\<[5E]%
\>[9]{}\mbox{\commentbegin  Definition of \ensuremath{\wild[\wild]}  \commentend}{}\<[E]%
\\
\>[9]{}(\mathcal{S}_{\mathbf{usage}}\denot{\conid{Case}\;\varid{E}[\varid{e}]\;\varid{alts}}_{\varcolor{\rho}_{E}}).\varcolor{\varphi}\mathbin{!?}\varid{x}{}\<[E]%
\\
\>[5]{}\mathrel{=}{}\<[5E]%
\>[9]{}\mbox{\commentbegin  Definition of \ensuremath{\mathcal{S}_{\mathbf{usage}}\denot{\wild}_{\wild}}  \commentend}{}\<[E]%
\\
\>[9]{}(\varid{select}\;(\mathcal{S}_{\mathbf{usage}}\denot{\varid{E}[\varid{e}]}_{\varcolor{\rho}_{E}})\;(\varid{cont}\mathbin{\lhd}\varid{alts})).\varcolor{\varphi}\mathbin{!?}\varid{x}{}\<[E]%
\\
\>[5]{}\mathrel{=}{}\<[5E]%
\>[9]{}\mbox{\commentbegin  Definition of \ensuremath{\varid{select}}  \commentend}{}\<[E]%
\\
\>[9]{}(\mathcal{S}_{\mathbf{usage}}\denot{\varid{E}[\varid{e}]}_{\varcolor{\rho}_{E}}\sequ \varid{lub}\;(\mathbin{...}\varid{alts}\mathbin{...})).\varcolor{\varphi}\mathbin{!?}\varid{x}{}\<[E]%
\\
\>[5]{}\mathrel{=}{}\<[5E]%
\>[9]{}\mbox{\commentbegin  \ensuremath{\varid{x}} absent in \ensuremath{\varid{lub}\;(\mathbin{...}\varid{alts}\mathbin{...})}  \commentend}{}\<[E]%
\\
\>[9]{}(\mathcal{S}_{\mathbf{usage}}\denot{\varid{E}[\varid{e}]}_{\varcolor{\rho}_{E}}).\varcolor{\varphi}\mathbin{!?}\varid{x}{}\<[E]%
\\
\>[5]{}\mathbin{⊑}{}\<[5E]%
\>[9]{}\mbox{\commentbegin  Induction hypothesis  \commentend}{}\<[E]%
\\
\>[9]{}\concolor{\mathsf{U_\omega}}\mathbin{*}(\mathcal{S}_{\mathbf{usage}}\denot{\varid{e}}_{\varcolor{\rho}_e}).\varcolor{\varphi}\mathbin{!?}\varid{x}{}\<[E]%
\ColumnHook
\end{hscode}\resethooks
  \item \textbf{Case }\ensuremath{\conid{ExtendHeap}\;\varid{y}\;\varid{e}_{1}\;\varid{E}}:
    Since \ensuremath{\varid{x}} does not occur in \ensuremath{\varid{e}_{1}}, and the initial environment
    is absent in \ensuremath{\varid{x}} as well, we have \ensuremath{(\mathcal{S}_{\mathbf{usage}}\denot{\varid{e}_{1}}_{\varcolor{\rho}_{E}}).\varcolor{\varphi}\mathbin{!?}\varid{x}\mathrel{=}\concolor{\mathsf{U_0}}} by
    \Cref{thm:used-free}.
    \begin{hscode}\SaveRestoreHook
\column{B}{@{}>{\hspre}l<{\hspost}@{}}%
\column{5}{@{}>{\hspre}c<{\hspost}@{}}%
\column{5E}{@{}l@{}}%
\column{9}{@{}>{\hspre}l<{\hspost}@{}}%
\column{15}{@{}>{\hspre}l<{\hspost}@{}}%
\column{21}{@{}>{\hspre}l<{\hspost}@{}}%
\column{E}{@{}>{\hspre}l<{\hspost}@{}}%
\>[9]{}(\mathcal{S}_{\mathbf{usage}}\denot{(\conid{ExtendHeap}\;\varid{y}\;\varid{e}_{1}\;\varid{E})[\varid{e}]}_{\varcolor{\rho}_{E}}).\varcolor{\varphi}\mathbin{!?}\varid{x}{}\<[E]%
\\
\>[5]{}\mathrel{=}{}\<[5E]%
\>[9]{}\mbox{\commentbegin  Definition of \ensuremath{\wild[\wild]}  \commentend}{}\<[E]%
\\
\>[9]{}(\mathcal{S}_{\mathbf{usage}}\denot{\conid{Let}\;\varid{y}\;\varid{e}_{1}\;\varid{E}[\varid{e}]}_{\varcolor{\rho}_{E}}).\varcolor{\varphi}\mathbin{!?}\varid{x}{}\<[E]%
\\
\>[5]{}\mathrel{=}{}\<[5E]%
\>[9]{}\mbox{\commentbegin  Definition of \ensuremath{\mathcal{S}_{\mathbf{usage}}\denot{\wild}_{\wild}}  \commentend}{}\<[E]%
\\
\>[9]{}(\mathcal{S}_{\mathbf{usage}}\denot{\varid{E}[\varid{e}]}_{\varcolor{\rho}_{E}[\varid{y}\mapsto\varid{step}\;(\conid{Look}\;\varid{y})\;(\varid{kleeneFix}\;(\lambda \varid{d}\to \mathcal{S}_{\mathbf{usage}}\denot{\varid{e}_{1}}_{\varcolor{\rho}_{E}[\varid{y}\mapsto\varid{step}\;(\conid{Look}\;\varid{y})\;\varid{d}]}))]}).\varcolor{\varphi}\mathbin{!?}\varid{x}{}\<[E]%
\\
\>[5]{}\mathbin{⊑}{}\<[5E]%
\>[9]{}\mbox{\commentbegin  Abstract substitution; \Cref{thm:usage-subst-abs}  \commentend}{}\<[E]%
\\
\>[9]{}(\mathcal{S}_{\mathbf{usage}}\denot{\varid{E}[\varid{e}]}_{\varcolor{\rho}_{E}[\varid{y}\mapsto\langle [\varid{y}\mapsto\concolor{\mathsf{U_1}}], \conid{Rep}\;\concolor{\mathsf{U_\omega}} \rangle]})[\varid{y}\Mapsto\varid{step}\;{}\<[E]%
\\
\>[9]{}\hsindent{6}{}\<[15]%
\>[15]{}(\conid{Look}\;\varid{y})\;(\varid{kleeneFix}\;(\lambda \varid{d}\to \mathcal{S}_{\mathbf{usage}}\denot{\varid{e}_{1}}_{\varcolor{\rho}_{E}[\varid{y}\mapsto\varid{step}\;(\conid{Look}\;\varid{y})\;\varid{d}]}))].\varcolor{\varphi}\mathbin{!?}\varid{x}{}\<[E]%
\\
\>[5]{}\mathrel{=}{}\<[5E]%
\>[9]{}\mbox{\commentbegin  Unfold \ensuremath{\wild[\wild\Mapsto\wild]}, \ensuremath{\langle \varcolor{\varphi}, \varid{v} \rangle.\varcolor{\varphi}\mathrel{=}\varcolor{\varphi}}  \commentend}{}\<[E]%
\\
\>[9]{}\keyword{let}\;\langle \varcolor{\varphi}, {}\<[21]%
\>[21]{}\anonymous  \rangle\mathrel{=}\mathcal{S}_{\mathbf{usage}}\denot{\varid{E}[\varid{e}]}_{\varcolor{\rho}_{E}[\varid{y}\mapsto\langle [\varid{y}\mapsto\concolor{\mathsf{U_1}}], \conid{Rep}\;\concolor{\mathsf{U_\omega}} \rangle]}\;\keyword{in}{}\<[E]%
\\
\>[9]{}\keyword{let}\;\langle \varcolor{\varphi}_{2}, \anonymous  \rangle\mathrel{=}\varid{step}\;(\conid{Look}\;\varid{y})\;(\varid{kleeneFix}\;(\lambda \varid{d}\to \mathcal{S}_{\mathbf{usage}}\denot{\varid{e}_{1}}_{\varcolor{\rho}_{E}[\varid{y}\mapsto\varid{step}\;(\conid{Look}\;\varid{y})\;\varid{d}]}))\;\keyword{in}{}\<[E]%
\\
\>[9]{}(\varcolor{\varphi}[\varid{y}\mapsto\concolor{\mathsf{U_0}}]\mathbin{+}(\varcolor{\varphi}\mathbin{!?}\varid{y})\mathbin{*}\varcolor{\varphi}_{2})\mathbin{!?}\varid{x}{}\<[E]%
\\
\>[5]{}\mathrel{=}{}\<[5E]%
\>[9]{}\mbox{\commentbegin  \ensuremath{\varid{x}} absent in \ensuremath{\varcolor{\varphi}_{2}}, see above  \commentend}{}\<[E]%
\\
\>[9]{}\keyword{let}\;\langle \varcolor{\varphi}, {}\<[21]%
\>[21]{}\anonymous  \rangle\mathrel{=}\mathcal{S}_{\mathbf{usage}}\denot{\varid{E}[\varid{e}]}_{\varcolor{\rho}_{E}[\varid{y}\mapsto\langle [\varid{y}\mapsto\concolor{\mathsf{U_1}}], \conid{Rep}\;\concolor{\mathsf{U_\omega}} \rangle]}\;\keyword{in}{}\<[E]%
\\
\>[9]{}\varcolor{\varphi}\mathbin{!?}\varid{x}{}\<[E]%
\\
\>[5]{}\mathbin{⊑}{}\<[5E]%
\>[9]{}\mbox{\commentbegin  Induction hypothesis  \commentend}{}\<[E]%
\\
\>[9]{}\concolor{\mathsf{U_\omega}}\mathbin{*}(\mathcal{S}_{\mathbf{usage}}\denot{\varid{e}}_{\varcolor{\rho}_e}).\varcolor{\varphi}\mathbin{!?}\varid{x}{}\<[E]%
\ColumnHook
\end{hscode}\resethooks
  \item \textbf{Case }\ensuremath{\conid{UpdateHeap}\;\varid{y}\;\varid{E}\;\varid{e}_{1}}:
    Since \ensuremath{\varid{x}} does not occur in \ensuremath{\varid{e}_{1}}, and the initial environment
    is absent in \ensuremath{\varid{x}} as well, we have
    \ensuremath{(\mathcal{S}_{\mathbf{usage}}\denot{\varid{e}_{1}}_{\varcolor{\rho}_{E}[\varid{y}\mapsto\langle [\varid{y}\mapsto\concolor{\mathsf{U_1}}], \conid{Rep}\;\concolor{\mathsf{U_\omega}} \rangle]}).\varcolor{\varphi}\mathbin{!?}\varid{x}\mathrel{=}\concolor{\mathsf{U_0}}} by
    \Cref{thm:used-free}.
    \begin{hscode}\SaveRestoreHook
\column{B}{@{}>{\hspre}l<{\hspost}@{}}%
\column{5}{@{}>{\hspre}c<{\hspost}@{}}%
\column{5E}{@{}l@{}}%
\column{9}{@{}>{\hspre}l<{\hspost}@{}}%
\column{15}{@{}>{\hspre}l<{\hspost}@{}}%
\column{21}{@{}>{\hspre}l<{\hspost}@{}}%
\column{E}{@{}>{\hspre}l<{\hspost}@{}}%
\>[9]{}(\mathcal{S}_{\mathbf{usage}}\denot{(\conid{UpdateHeap}\;\varid{y}\;\varid{E}\;\varid{e}_{1})[\varid{e}]}_{\varcolor{\rho}_{E}}).\varcolor{\varphi}\mathbin{!?}\varid{x}{}\<[E]%
\\
\>[5]{}\mathrel{=}{}\<[5E]%
\>[9]{}\mbox{\commentbegin  Definition of \ensuremath{\wild[\wild]}  \commentend}{}\<[E]%
\\
\>[9]{}(\mathcal{S}_{\mathbf{usage}}\denot{\conid{Let}\;\varid{y}\;\varid{E}[\varid{e}]\;\varid{e}_{1}}_{\varcolor{\rho}_{E}}).\varcolor{\varphi}\mathbin{!?}\varid{x}{}\<[E]%
\\
\>[5]{}\mathrel{=}{}\<[5E]%
\>[9]{}\mbox{\commentbegin  Definition of \ensuremath{\mathcal{S}_{\mathbf{usage}}\denot{\wild}_{\wild}}  \commentend}{}\<[E]%
\\
\>[9]{}(\mathcal{S}_{\mathbf{usage}}\denot{\varid{e}_{1}}_{\varcolor{\rho}_{E}[\varid{y}\mapsto\varid{step}\;(\conid{Look}\;\varid{y})\;(\varid{kleeneFix}\;(\lambda \varid{d}\to \mathcal{S}_{\mathbf{usage}}\denot{\varid{E}[\varid{e}]}_{\varcolor{\rho}_{E}[\varid{y}\mapsto\varid{step}\;(\conid{Look}\;\varid{y})\;\varid{d}]}))]}).\varcolor{\varphi}\mathbin{!?}\varid{x}{}\<[E]%
\\
\>[5]{}\mathbin{⊑}{}\<[5E]%
\>[9]{}\mbox{\commentbegin  Abstract substitution; \Cref{thm:usage-subst-abs}  \commentend}{}\<[E]%
\\
\>[9]{}(\mathcal{S}_{\mathbf{usage}}\denot{\varid{e}_{1}}_{\varcolor{\rho}_{E}[\varid{y}\mapsto\langle [\varid{y}\mapsto\concolor{\mathsf{U_1}}], \conid{Rep}\;\concolor{\mathsf{U_\omega}} \rangle]})[\varid{y}\Mapsto\varid{step}\;{}\<[E]%
\\
\>[9]{}\hsindent{6}{}\<[15]%
\>[15]{}(\conid{Look}\;\varid{y})\;(\varid{kleeneFix}\;(\lambda \varid{d}\to \mathcal{S}_{\mathbf{usage}}\denot{\varid{E}[\varid{e}]}_{\varcolor{\rho}_{E}[\varid{y}\mapsto\varid{step}\;(\conid{Look}\;\varid{y})\;\varid{d}]}))].\varcolor{\varphi}\mathbin{!?}\varid{x}{}\<[E]%
\\
\>[5]{}\mathrel{=}{}\<[5E]%
\>[9]{}\mbox{\commentbegin  Unfold \ensuremath{\wild[\wild\Mapsto\wild]}, \ensuremath{\langle \varcolor{\varphi}, \varid{v} \rangle.\varcolor{\varphi}\mathrel{=}\varcolor{\varphi}}  \commentend}{}\<[E]%
\\
\>[9]{}\keyword{let}\;\langle \varcolor{\varphi}, {}\<[21]%
\>[21]{}\anonymous  \rangle\mathrel{=}\mathcal{S}_{\mathbf{usage}}\denot{\varid{e}_{1}}_{\varcolor{\rho}_{E}[\varid{y}\mapsto\langle [\varid{y}\mapsto\concolor{\mathsf{U_1}}], \conid{Rep}\;\concolor{\mathsf{U_\omega}} \rangle]}\;\keyword{in}{}\<[E]%
\\
\>[9]{}\keyword{let}\;\langle \varcolor{\varphi}_{2}, \anonymous  \rangle\mathrel{=}\varid{step}\;(\conid{Look}\;\varid{y})\;(\varid{kleeneFix}\;(\lambda \varid{d}\to \mathcal{S}_{\mathbf{usage}}\denot{\varid{E}[\varid{e}]}_{\varcolor{\rho}_{E}[\varid{y}\mapsto\varid{step}\;(\conid{Look}\;\varid{y})\;\varid{d}]}))\;\keyword{in}{}\<[E]%
\\
\>[9]{}(\varcolor{\varphi}[\varid{y}\mapsto\concolor{\mathsf{U_0}}]\mathbin{+}(\varcolor{\varphi}\mathbin{!?}\varid{y})\mathbin{*}\varcolor{\varphi}_{2})\mathbin{!?}\varid{x}{}\<[E]%
\\
\>[5]{}\mathrel{=}{}\<[5E]%
\>[9]{}\mbox{\commentbegin  \ensuremath{\varcolor{\varphi}\mathbin{!?}\varid{y}\mathbin{⊑}\concolor{\mathsf{U_\omega}}}, \ensuremath{\varid{x}} absent in \ensuremath{\varcolor{\varphi}}, see above  \commentend}{}\<[E]%
\\
\>[9]{}\keyword{let}\;\langle \varcolor{\varphi}_{2}, \anonymous  \rangle\mathrel{=}\varid{step}\;(\conid{Look}\;\varid{y})\;(\varid{kleeneFix}\;(\lambda \varid{d}\to \mathcal{S}_{\mathbf{usage}}\denot{\varid{E}[\varid{e}]}_{\varcolor{\rho}_{E}[\varid{y}\mapsto\varid{step}\;(\conid{Look}\;\varid{y})\;\varid{d}]}))\;\keyword{in}{}\<[E]%
\\
\>[9]{}\concolor{\mathsf{U_\omega}}\mathbin{*}\varcolor{\varphi}_{2}\mathbin{!?}\varid{x}{}\<[E]%
\\
\>[5]{}\mathrel{=}{}\<[5E]%
\>[9]{}\mbox{\commentbegin  Refold \ensuremath{\langle \varcolor{\varphi}, \varid{v} \rangle.\varcolor{\varphi}}  \commentend}{}\<[E]%
\\
\>[9]{}\concolor{\mathsf{U_\omega}}\mathbin{*}(\varid{step}\;(\conid{Look}\;\varid{y})\;(\varid{kleeneFix}\;(\lambda \varid{d}\to \mathcal{S}_{\mathbf{usage}}\denot{\varid{E}[\varid{e}]}_{\varcolor{\rho}_{E}[\varid{y}\mapsto\varid{step}\;(\conid{Look}\;\varid{y})\;\varid{d}]}))).\varcolor{\varphi}\mathbin{!?}\varid{x}{}\<[E]%
\\
\>[5]{}\mathrel{=}{}\<[5E]%
\>[9]{}\mbox{\commentbegin  \ensuremath{\varid{x}\not=\varid{y}}  \commentend}{}\<[E]%
\\
\>[9]{}\concolor{\mathsf{U_\omega}}\mathbin{*}(\varid{kleeneFix}\;(\lambda \varid{d}\to \mathcal{S}_{\mathbf{usage}}\denot{\varid{E}[\varid{e}]}_{\varcolor{\rho}_{E}[\varid{y}\mapsto\varid{d}]})).\varcolor{\varphi}\mathbin{!?}\varid{x}{}\<[E]%
\\
\>[5]{}\mathrel{=}{}\<[5E]%
\>[9]{}\mbox{\commentbegin  Argument below  \commentend}{}\<[E]%
\\
\>[9]{}\concolor{\mathsf{U_\omega}}\mathbin{*}(\mathcal{S}_{\mathbf{usage}}\denot{\varid{E}[\varid{e}]}_{\varcolor{\rho}_{E}[\varid{y}\mapsto\langle [\varid{y}\mapsto\concolor{\mathsf{U_1}}], \conid{Rep}\;\concolor{\mathsf{U_\omega}} \rangle]}).\varcolor{\varphi}\mathbin{!?}\varid{x}{}\<[E]%
\\
\>[5]{}\mathbin{⊑}{}\<[5E]%
\>[9]{}\mbox{\commentbegin  Induction hypothesis, \ensuremath{\concolor{\mathsf{U_\omega}}\mathbin{*}\concolor{\mathsf{U_\omega}}\mathrel{=}\concolor{\mathsf{U_\omega}}}  \commentend}{}\<[E]%
\\
\>[9]{}\concolor{\mathsf{U_\omega}}\mathbin{*}(\mathcal{S}_{\mathbf{usage}}\denot{\varid{e}}_{\varcolor{\rho}_e}).\varcolor{\varphi}\mathbin{!?}\varid{x}{}\<[E]%
\ColumnHook
\end{hscode}\resethooks
    The rationale for removing the \ensuremath{\varid{kleeneFix}} is that under the assumption that
    \ensuremath{\varid{x}} is absent in \ensuremath{\varid{d}} (such as is the case for \ensuremath{\varid{d}\triangleq\langle [\varid{y}\mapsto\concolor{\mathsf{U_1}}], \conid{Rep}\;\concolor{\mathsf{U_\omega}} \rangle}), then it is also absent in \ensuremath{\varid{E}[\varid{e}]\;\varcolor{\rho}_{E}[\varid{y}\mapsto\varid{d}]} per
    \Cref{thm:used-free}.
    Otherwise, we go to \ensuremath{\concolor{\mathsf{U_\omega}}} anyway.

    \ensuremath{\conid{UpdateHeap}} is why it is necessary to multiply with \ensuremath{\concolor{\mathsf{U_\omega}}} above;
    in the context $\Let{x}{\hole}{x~x}$, a variable $y$ put in the hole
    would really be evaluated twice under call-by-name (where
    $\Let{x}{\hole}{x~x}$ is \emph{not} an evaluation context).

    This unfortunately means that the used-once results do not generalise
    to arbitrary by-need evaluation contexts and it would be unsound
    to elide update frames for $y$ based on the inferred use of $y$ in
    $\Let{y}{...}{\pe}$; for $\pe \triangleq y$ we would infer that $y$
    is used at most once, but that is wrong in context $\Let{x}{\hole}{x~x}$.
\end{itemize}
\end{proof}
\end{toappendix}

\section{Related Work}
\label{sec:related-work}


\subsubsection*{Call-by-need, Semantics}
Arguably, \citet{Josephs:89} described the first denotational by-need semantics,
predating the work of \citet{Launchbury:93} and \citet{Sestoft:97}, but not
the more machine-centric (rather than transition system centric) work on the
G-machine~\citep{Johnsson:84}.
We improve on \citeauthor{Josephs:89}'s work in that our encoding is
simpler, rigorously defined (\Cref{sec:totality}) and proven adequate \wrt
\citeauthor{Sestoft:97}'s by-need semantics (\Cref{sec:adequacy}).
\citet{HackettHutton:19} define a denotational cost semantics for call-by-need,
but unfortunately we fail to see how their approach can be extended to
totally generate detailed by-need small-step traces, \cf \Cref*{sec:clair}.

\citet{Sestoft:97} related the derivations of
\citeauthor{Launchbury:93}'s big-step natural semantics for our language to
the subset of \emph{balanced} small-step LK traces.
Balanced traces are a proper subset of our maximal LK traces that --- by nature
of big-step semantics --- excludes stuck and diverging traces.

Our denotational interpreter bears strong resemblance to
a denotational semantics~\citep{ScottStrachey:71},
or to a definitional interpreter~\citep{Reynolds:72}
featuring a finally encoded domain~\citep{Carette:07}
using higher-order abstract syntax~\citep{Pfenning:88}.
The key distinction to these approaches is that we generate small-step traces,
totally and adequately, observable by abstract interpreters.
%
\citet{AgerDanvyMidtgaard:04} successively transform a partial denotational
interpreter into a variant of the LK machine, going the reverse route of
\Cref{sec:adequacy}.

\subsubsection*{Coinduction and Fuel}
\citet{LeroyGrall:09} show that a coinductive encoding of big-step semantics
is able to encode diverging traces by proving it equivalent to a small-step
semantics, much like we did for a denotational semantics.
The work of \citet{Atkey:13,tctt} had big influence on our use of the later
modality and Löb induction.

Our \ensuremath{\conid{Trace}} type class is appropriate for tracking ``pure'' transition events,
but it is not up to the task of modelling user input, for example.
A redesign of \ensuremath{\conid{Trace}} inspired (and instantiated) by guarded interaction
trees~\citep{interaction-trees,gitrees} would help with that.



\subsubsection*{Abstract Interpretation and Relational Analysis}
\citet{Cousot:21} recently condensed his seminal work rooted in \citet{Cousot:77}.
The book advocates a compositional, trace-generating semantics and then derives
compositional analyses by calculational design, and inspired us to attempt the same.
However, while \citet{Cousot:94,Cousot:02} work with denotational semantics
for a higher-order language, it was unclear to us how to derive a compositional,
\emph{trace-generating} semantics for a higher-order language.
The required changes to the domain definitions seemed daunting, to say the
least.
Our solution delegates this complexity to the underlying theory of guarded
recursive type theory~\citep{tctt}.


\subsubsection*{Abstractions of Reachable States}
CFA~\citep{Shivers:91} computes a useful control-flow graph abstraction for
higher-order programs, thus lifting classic intraprocedural analyses such as
constant propagation to the interprocedural setting.
\citet{MontaguJensen:21} derive CFA from small-step traces.
We think that a variant of our denotational interpreter would be a good fit for
their collecting semantics.
Specifically, the semantic inclusions of Lemma 2.10 that govern the transition
to a big-step style interpreter follow simply by adequacy of our interpreter,
\Cref{thm:need-adequate-strong}.

Abstracting Abstract Machines~\citep{aam} derives
a computable \emph{reachable states semantics}~\citep{Cousot:21} from any
small-step semantics, by bounding the size of the heap.
Many analyses such as control-flow analysis arise as abstractions of reachable
states.
\citet{adi} and others apply the AAM recipe to big-step interpreters in the style
of \citeauthor{Reynolds:72}.

Whenever AAM is involved, abstraction follows some monadic structure inherent to
dynamic semantics~\citep{Sergey:13,adi}.
In our work, this is apparent in the \ensuremath{\conid{Domain}\;(\conid{D}\;\varcolor{\tau})} instance depending on
\ensuremath{\conid{Monad}\;\varcolor{\tau}}.
Decomposing such structure into a layer of reusable monad transformers has been
the subject of \citet{Darais:15} and \citet{Keidel:19}.
The trace transformers of \Cref{sec:interp} enable reuse along a different dimension.

A big advantage of the big-step framework of \citet{Keidel:18} is that
soundness proofs are modular in the sense of \Cref{sec:mod-subst}.
In the future, we hope to modularise the proof for
\Cref{thm:abstract-by-need}.

\subsubsection*{Summaries of Functionals \vs Call Strings}
\citet{Lomet:77} used procedure summaries to capture aliasing effects,
crediting the approach to untraceable reports by \citet{Allen:74} and
\citet{Rosen:75}.
\citet{SharirPnueli:78} were aware of both \cite{Cousot:77} and \cite{Allen:74},
and generalised aliasing summaries into the ``functional approach'' to
interprocedural data flow analysis, distinguishing it from the ``call strings
approach'' (\ie $k$-CFA).

That is not to say that the approaches cannot be combined;
inter-modular analysis led \citet[Section 3.8.2]{Shivers:91} to implement
the $\mathit{xproc}$ summary mechanism.
He also acknowledged the need for accurate intra-modular summary mechanisms for
scalability reasons in Section 11.3.2.
We are however doubtful that the powerset-centric AAM approach could integrate
summary mechanisms; the whole recipe rests on the fact that the set of
expressions and thus evaluation contexts is finite.

\citet{Mangal:14} have shown that a summary-based analysis can be equivalent
to $\infty$-CFA for arbitrary complete lattices and outperform 2-CFA in both
precision and speed.
%
%


\subsubsection*{Cardinality Analysis} More interesting cardinality
analyses involve the inference of summaries called \emph{demand
transformers}~\citep{Sergey:14}, such as implemented in the Demand Analysis of
the Glasgow Haskell Compiler.
We intend to use our framework to describe improvements to Demand Analysis in
the future.
A soundness proof would require a slightly different Galois connection than
\Cref{fig:abstract-name-need}, because Demand Analysis is not sound \wrt by-name
evaluation; a testament to its precision.

\clearpage
\bibliography{references}

\end{document}